\documentclass[10pt,a4paper,notitlepage]{article}
\usepackage[latin1]{inputenc}
\usepackage{graphicx,psfrag}
\usepackage{amsmath,amsfonts,amssymb,amsthm,verbatim}
\usepackage[mathscr]{euscript}
\usepackage{rotating}

\date{\scriptsize{Department of Mathematics, KTH}\\ \scriptsize{SE-100 44 \ Stockholm, Sweden}}
\author{Douglas Lundholm and Lars Svensson}
\title{Clifford algebra, geometric algebra,\\and applications}

\numberwithin{equation}{section}
\numberwithin{table}{section}
\numberwithin{figure}{section}

\theoremstyle{plain}
\newtheorem{thm}{Theorem}[section]
\newtheorem{lem}[thm]{Lemma}
\newtheorem{prop}[thm]{Proposition}
\newtheorem*{cor}{Corollary}

\theoremstyle{definition}
\newtheorem{defn}{Definition}[section]

\newtheorem{exmp}{Example}[section]
\newtheorem{exc}{Exercise}[section]

\theoremstyle{remark}
\newtheorem*{rem}{Remark}

\newcommand{\cl}{\mathcal{C}l}
\newcommand{\liprod}{\ \raisebox{.2ex}{$\llcorner$}\ }
\newcommand{\riprod}{\ \raisebox{.2ex}{$\lrcorner$}\ }
\newcommand{\iprod}{\ \raisebox{.3ex}{$\scriptscriptstyle \bullet$}\ }
\newcommand{\cliffconj}{{\scriptscriptstyle \square}}
\newcommand{\symdiff}{\!\bigtriangleup\!}
\newcommand{\dual}{^\mathbf{c}}
\DeclareMathOperator{\ad}{\textup{ad}}
\DeclareMathOperator{\adj}{\textup{adj}}
\DeclareMathOperator{\Ad}{\textup{Ad}}
\DeclareMathOperator{\tAd}{\widetilde{\Ad}}
\DeclareMathOperator{\charop}{\textup{char}}
\DeclareMathOperator{\Cont}{\textup{Cont}}

\DeclareMathOperator{\End}{\textup{End}}
\DeclareMathOperator{\GL}{\textup{GL}}
\DeclareMathOperator{\Hom}{\textup{Hom}}
\DeclareMathOperator{\id}{\textup{id}}
\DeclareMathOperator{\im}{\textup{im}}
\DeclareMathOperator{\Ind}{\textup{Ind}}
\DeclareMathOperator{\List}{\textup{List}}
\DeclareMathOperator{\Ogrp}{\textup{O}}
\DeclareMathOperator{\Pin}{\textup{Pin}}
\DeclareMathOperator{\Pol}{\textup{Pol}}

\DeclareMathOperator{\sgn}{\textup{sgn}}
\DeclareMathOperator{\Spin}{\textup{Spin}}
\DeclareMathOperator{\SO}{\textup{SO}}
\DeclareMathOperator{\so}{\mathfrak{so}}
\DeclareMathOperator{\SU}{\textup{SU}}
\DeclareMathOperator{\Span}{\textup{Span}}
\DeclareMathOperator{\supp}{\textup{supp}}
\DeclareMathOperator{\tr}{\textup{tr}}
\DeclareMathOperator{\transp}{\textup{T}}
\DeclareMathOperator{\vecmeet}{{^\wedge}}
\DeclareMathOperator{\vecjoin}{{^\vee}}

\begin{document}

\pagestyle{empty}

\maketitle
\thispagestyle{empty}

\begin{abstract}
	These are
	lecture notes for a course on the theory of Clifford algebras, 
	with special emphasis on their wide range of applications
	in mathematics and physics.
	Clifford algebra is introduced both 
	through a conventional tensor algebra construction
	(then called geometric algebra)
	with geometric applications in mind,
	as well as in an algebraically more general form 
	which is well suited for combinatorics, 
	and for defining and understanding the numerous 
	products and operations of the algebra.
	The various applications presented include 
	vector space and projective geometry,
	orthogonal maps and spinors, 
	normed division algebras,
	as well as 
	simplicial complexes and graph theory.
\end{abstract}

\newpage

\section*{Preface}

	These lecture notes were prepared for a course 
	held at the Department of Mathematics, KTH, during the spring of 2009,
	and intended for advanced undergraduates and Ph.D. students
	in mathematics and mathematical physics.
	They are primarily based on the lecture notes of L.S. from a similar 
	course in 2000 \cite{svensson}, and the M. Sc. Thesis of D.L. in 2006 \cite{lundholm_thesis}.
	Additional material 
	has 
	been adopted from 
	\cite{lawson_michelsohn,lounesto,doran_lasenby,hestenes_sobczyk}
	as well as from numerous other sources that can be found in the references.
	
	When preparing these notes, we have tried to aim for
\begin{itemize}
	\item[--] efficiency in the presentation, made possible by the introduction,
	already at an early stage,
	of a combinatorial approach to 
	Clifford algebra along with its standard operations.

	\item[--] completeness, in the sense that 
	we have tried to include
	most of the standard theory of Clifford algebras,
	along with proofs of all stated theorems, or (when in risk
	of a long digression) reference to where a proof can be found.
	
	\item[--] diversity in the range of applications presented 
	-- from pure algebra and combinatorics to geometry and physics.
\end{itemize}

	In order to make 
	the presentation
	rather self-contained and accessible also to undergraduate students,
	we have included an appendix with some basic notions in algebra, as well as
	some additional material which may be unfamiliar also to graduate students.
	Statements requiring a wider mathematical knowledge are placed as
	remarks in the text, and are not required for an understanding of the main material.
	There are also certain sections and proofs, marked with an asterisk,
	that go slightly outside the scope of the
	course and consequently could be 
	omitted by the student
	(but which could be of interest for a researcher in the field).
	
	We are grateful to our students for valuable input, as well as to
	John Baez for 
	correspondence.
	D.L. would also like to thank the Swedish Research Council and
	the Knut and Alice Wallenberg Foundation (grant KAW 2005.0098) for financial support.

\newpage

\tableofcontents
\newpage

\pagestyle{plain}
\setcounter{page}{1}

\section{Introduction}

	Let us start with an introducton, in terms of modern language,
	to the ideas of Hermann G\"unther Grassmann (1809-1877)
	and William Kingdon Clifford (1845-1879).
	For a detailed account of the history of the subject, see e.g. \cite{lounesto}.

\subsection{Grassmann's idea of extensive quantities}

	Let us consider our ordinary physical space which
	we represent as $\mathbb{R}^3$.
	An ordered pair $(P,Q)$ in $\mathbb{R}^3 \times \mathbb{R}^3$
	is called a \emph{line segment}, and we think of it as directed from 
	the point $P$ to the point $Q$.
	We call two line segments equivalent if one
	can be transformed into the other by a translation.
	This leads to the original picture of a vector as an equivalence
	class of directed line segments. 
	Note that such a vector can be characterized by
	a \emph{direction}, in the form of a line through the origin and
	parallel to the line segment, an \emph{orientation}, 
	i.e. the direction along the line in which it points,
	and a \emph{magnitude}, or length of the segment, 
	represented by a positive number.
	We are very familiar with how to add, subtract and rescale vectors.
	But can they be multiplied? We of course know about the
	scalar product and the cross product, but we shall introduce
	a new product called the \emph{outer product}.

	\begin{figure}[ht]
		\centering
		\psfrag{T_P}{$P$}
		\psfrag{T_Q}{$Q$}
		\psfrag{T_Pp}{$P'$}
		\psfrag{T_A}{$A$}
		\psfrag{T_Ap}{$A'$}
		\includegraphics{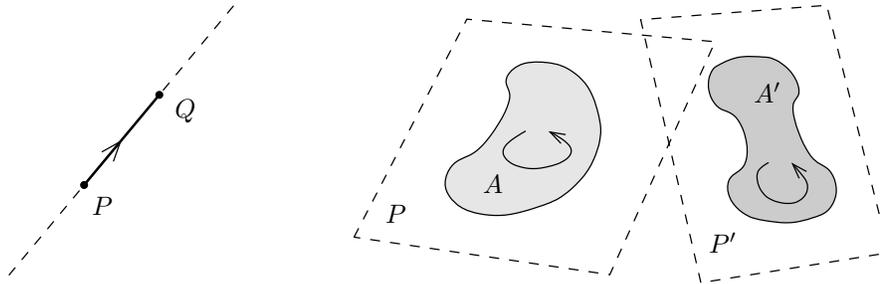}
		\caption{A directed line segment $(P,Q)$ and 
		a pair of plane surface areas $A$ and $A'$ in the planes 
		$P$ and $P'$, respectively.}
		\label{fig_segment_areas}
	\end{figure}

	Let $A$ denote a plane surface area, in a plane $P$, endowed with
	an ``orientation" or ``signature" represented by a rotation arrow
	as in Figure \ref{fig_segment_areas}.
	If $A'$ denotes another plane surface area in a plane $P'$,
	also endowed with a rotation arrow, then we shall say that
	$A$ is equivalent to $A'$ if and only if $P$ and $P'$ are parallel,
	the areas of $A$ and $A'$ are the same, and if the rotation
	arrow is directed in the same way after translating $A'$ to $A$,
	i.e. $P'$ to $P$.
	Equivalence classes of such directed surface areas are to be called 
	\emph{2-blades} (or \emph{2-vectors}).

	\begin{figure}[t]
		\centering
		\psfrag{T_P}{$P$}
		\psfrag{T_A}{$A$}
		\psfrag{T_a}{$a$}
		\psfrag{T_b}{$b$}
		\psfrag{T_awb}{$a \wedge b$}
		\includegraphics{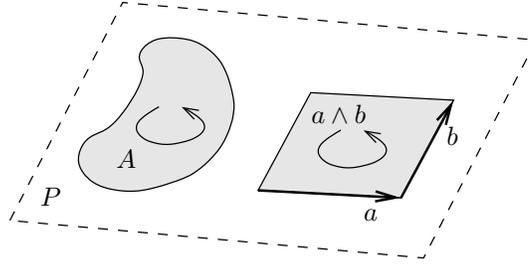}
		\caption{A 2-blade $A$ represented by an outer product $a \wedge b$.}
		\label{fig_area_blade}
	\end{figure}
	
	We observe that a 2-blade $A$ can be represented as a
	parallelogram spanned by two vectors $a$ and $b$ in 
	the same plane as $A$, according to Figure \ref{fig_area_blade},
	and call it the outer product of $a$ and $b$ and 
	denote it by $a \wedge b$.
	If the area of $A$ is zero then we write $A = 0$.
	Hence, $a \wedge a = 0$.
	By $-A$ we denote the equivalence class of plane surface areas
	with the same area and in the same plane as $A$,
	but with an oppositely directed rotation arrow.
	Thus, e.g. $-(a \wedge b) = b \wedge a$.

	\begin{figure}[ht]
		\centering
		\psfrag{T_a}{$a$}
		\psfrag{T_b}{$b$}
		\psfrag{T_awb}{$a \wedge b$}
		\psfrag{T_bwa}{$b \wedge a$}
		\includegraphics{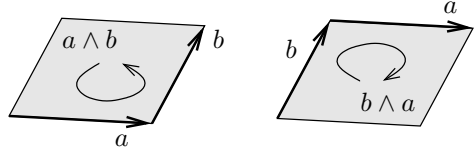}
		\caption{Geometric interpretation of $b \wedge a = -a \wedge b$.}
		\label{fig_minus_blade}
	\end{figure}

	We would now like to define the sum of two 2-blades	$A$ and $B$.
	If $A$ and $B$ are represented by plane surface areas in the
	planes $P$ and $Q$, respectively, then we can translate
	$Q$ such that $P \cap Q$ contains a line $L$.
	We then realize that we can choose a vector $c$,
	parallel to $L$, such that $A = a \wedge c$ and $B = b \wedge c$
	for some vectors $a$ in $P$, and $b$ in $Q$.
	Now, we define
	$$
		A + B = a \wedge c + b \wedge c := (a + b) \wedge c,
	$$
	which has a nice geometric representation according to Figure \ref{fig_blade_addition}.

	\begin{figure}[t]
		\centering
		\psfrag{T_P}{$P$}
		\psfrag{T_Q}{$Q$}
		\psfrag{T_L}{$L$}
		\psfrag{T_a}{$a$}
		\psfrag{T_b}{$b$}
		\psfrag{T_c}{$c$}
		\psfrag{T_apb}{$a + b$}
		\psfrag{T_A}{$A$}
		\psfrag{T_B}{$B$}
		\psfrag{T_ApB}{$A + B$}
		\includegraphics{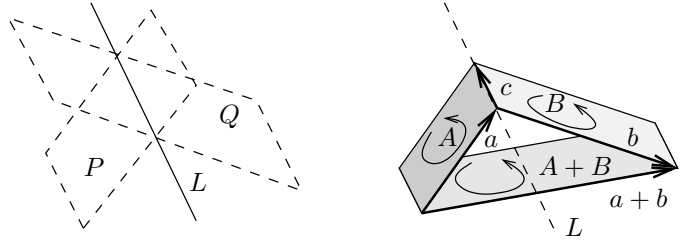}
		\caption{Geometric interpretation of the sum $A + B$ of a pair of 2-blades.}
		\label{fig_blade_addition}
	\end{figure}
	
	Analogously,
	outer products of three vectors, $a \wedge b \wedge c$, 
	or \emph{3-blades}, are defined as equivalence
	classes of ``directed" volumes with the same volume as
	the parallel epiped generated by $a$, $b$ and $c$, 
	and with the orientation given by the handedness of
	the frame $\{a,b,c\}$ (see Figure \ref{fig_abc_frame}).
	Furthermore, although we have chosen to work with 
	ordinary three-dimensional space for simplicity, 
	the notion of a blade generalizes to arbitrary dimensions,
	where a $k$-blade can be thought of as representing a $k$-dimensional 
	linear subspace equipped with an orientation and a magnitude.

	\begin{figure}[t]
		\centering
		\psfrag{T_a}{$a$}
		\psfrag{T_b}{$b$}
		\psfrag{T_c}{$c$}
		\includegraphics{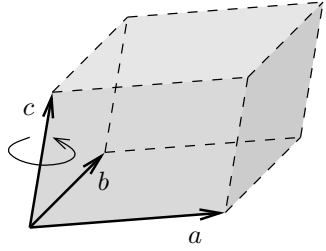}
		\caption{A 3-blade $a \wedge b \wedge c$.}
		\label{fig_abc_frame}
	\end{figure}

\subsection{Clifford's application of the extensive algebra}

	Say that we are given three symbols $\{e_1,e_2,e_3\}$ and that
	we construct a game with these symbols by
	introducing the following set of rules:
\begin{trivlist}
	\item[\textperiodcentered]
	We may form words of the symbols by writing them
	next to each other (the empty word is denoted 1).
	\item[\textperiodcentered]
	We may also form linear combinations of such words, i.e.
	multiply/scale each word by a number, and add scaled words together
	-- writing a plus sign between different words,
	and summing up the numbers in front of equal words.
	\item[\textperiodcentered]
	Furthermore, if two different symbols appear next to each other in a word
	then we may swap places and scale the word by $-1$, 
	i.e. $e_ie_j = -e_je_i$, $i\neq j$.
	\item[\textperiodcentered]
	Lastly, we impose a rule that if
	a symbol appears next to the same symbol
	then we may strike out them both, i.e. $e_1e_1 = e_2e_2 = e_3e_3 = 1$.
\end{trivlist}
	For example,
	$$
		e_1e_2e_1e_3 = -e_1e_1e_2e_3 = -e_2e_3.
	$$
	Similarly, we see that the only words that can appear
	after applying the reduction rules are
	$$
		1,\ e_1,\ e_2,\ e_3,\ e_1e_2,\ e_1e_3,\ e_2e_3,\ \ \textrm{and}\ \ e_1e_2e_3,
	$$
	and that what we have obtained is just a game of adding and multiplying
	linear combinations of such words.
	
	As an example, consider a linear combination of the original symbols,
	$$
		a = a_1e_1 + a_2e_2 + a_3e_3,
	$$
	where $a_1,a_2,a_3$ are ordinary (say, real) numbers. 
	Then the square of this linear combination is
	\begin{eqnarray*}
		a^2 &:=& aa = (a_1e_1 + a_2e_2 + a_3e_3)(a_1e_1 + a_2e_2 + a_3e_3) \\
		&=& a_1^2 e_1e_1 + a_2^2 e_2e_2 + a_3^2 e_3e_3 \\
		&& +\ (a_1a_2 - a_2a_1)e_1e_2 + (a_1a_3 - a_3a_1)e_1e_3 + (a_2a_3 - a_3a_2)e_2e_3 \\
		&=& (a_1^2 + a_2^2 + a_3^2)1.
	\end{eqnarray*}
	Similarly, if $b = b_1e_1 + b_2e_2 + b_3e_3$ is another
	linear combination, then
	\begin{eqnarray}
		ab &=& (a_1b_1 + a_2b_2 + a_3b_3)1 \label{ab_expansion} \\ 
		&& +\ (a_1b_2 - a_2b_1)e_1e_2 + (a_1b_3 - a_3b_1)e_1e_3 + (a_2b_3 - a_3b_2)e_2e_3, \nonumber 
	\end{eqnarray}
	where we recognize the number in front of the empty word 
	(the \emph{scalar part} of $ab$)
	as the ordinary scalar product $a \iprod b$ of spatial vectors
	$a = (a_1,a_2,a_3)$, $b = (b_1,b_2,b_3)$
	in a cartesian coordinate representation, 
	while the remaining part of the expression resembles the cross
	product $a \times b$, but in terms of the words $e_1e_2$ instead of $e_3$ etc.

	\begin{figure}[bt]
		\centering
		\psfrag{T_e1}{$e_1$}
		\psfrag{T_e2}{$e_2$}
		\psfrag{T_e3}{$e_3$}
		\psfrag{T_e1e2}{$e_1e_2$}
		\psfrag{T_e2e3}{$e_2e_3$}
		\psfrag{T_e3e1}{$e_3e_1$}
		\includegraphics{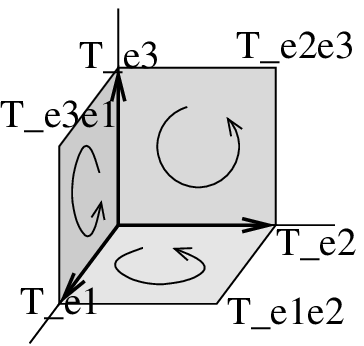}
		\caption{The standard unit blades of $\mathbb{R}^3$.}
		\label{fig_std_blades}
	\end{figure}
	
	The geometric interpretation of the above construction
	in terms of our ordinary space $\mathbb{R}^3$
	is that $e_1,e_2,e_3$ represent
	orthonormal basis vectors, while $e_1e_2, e_2e_3, e_3e_1$
	represent orthonormal basis blades corresponding to the coordinate
	planes, with a right-handed orientation and unit area (see Figure \ref{fig_std_blades}).
	The full word $e_1e_2e_3$ represents a unit 3-blade, i.e. a unit volume
	with right-handed oriententation,
	while the empty word could be thought of as representing
	an empty blade, corresponding to the trivial zero-dimensional subspace.
	From this interpretation follows that the non-scalar part of \eqref{ab_expansion}
	is a sum of coordinate 2-blades, with each component given
	by the corresponding component of the cross product.
	Recalling that the direction, orientation and magnitude of the vector
	$a \times b$ is determined by the direction and orientation of the vectors $a,b$ 
	and the area of the parallelogram spanned by $a$ and $b$
	-- similarly to how we defined a 2-blade above --
	it follows that the sum of these blades actually is equal to the
	blade $a \wedge b$, and the expression \eqref{ab_expansion} 
	becomes simply
	$$
		ab = a \iprod b + a \wedge b.
	$$

	Let us see an example of why this representation 
	of objects in space could be useful. 
	Consider the $e_1e_2$-plane spanned by the 
	orthogonal unit vectors $e_1$ and $e_2$.
	A rotation in this plane by an angle $\theta$ 
	can of course be understood as a linear map $\textsf{R}$ determined by
	\begin{eqnarray} 
		e_1 &\mapsto& \textsf{R}(e_1) = \phantom{-}\cos \theta \ e_1 + \sin \theta \ e_2, \label{plane_rotation}\\
		e_2 &\mapsto& \textsf{R}(e_2) = 		  -\sin \theta \ e_1 + \cos \theta \ e_2. \nonumber
	\end{eqnarray}
	Now, note that
	$$
		(e_1 e_2)^2 = e_1 e_2 e_1 e_2 = - e_1 e_1 e_2 e_2 = -1,
	$$
	which means that the unit blade $e_1e_2$ 
	behaves like a complex imaginary unit in this regard,
	and we can rewrite \eqref{plane_rotation} as\footnote{Just like for complex
	numbers, we could define the exponential expression in terms of sine
	and cosine, or alternatively, by the usual power series expansion.}
	$$
		\textsf{R}(e_1) = \cos \theta \ e_1 + \sin \theta \ e_1 e_1 e_2 
		= e_1 (\cos \theta + \sin \theta \ e_1 e_2) = e_1 e^{\theta e_1 e_2},
	$$
	and
	$$
		\textsf{R}(e_2) = -\sin \theta \ e_2^2 e_1 + \cos \theta \ e_2
		= e_2 (\sin \theta \ e_1 e_2 + \cos \theta) = e_2 e^{\theta e_1 e_2}.
	$$
	This is of course not much more insightful than the 
	representation of a rotation in the plane in terms of complex numbers.
	However, note that we can also write
	\begin{eqnarray*}
		e_1 e^{\theta e_1 e_2} &=& e_1 e^{\frac{\theta}{2} e_1 e_2} e^{\frac{\theta}{2} e_1 e_2} = 
			e_1 \left( \cos \mbox{$\frac{\theta}{2}$} + \sin \mbox{$\frac{\theta}{2}$} \ e_1 e_2 \right) e^{\frac{\theta}{2} e_1 e_2} \\
		&=& \left( \cos \mbox{$\frac{\theta}{2}$} - \sin \mbox{$\frac{\theta}{2}$} \ e_1 e_2 \right) e_1 e^{\frac{\theta}{2} e_1 e_2} =
			e^{- \frac{\theta}{2} e_1 e_2} e_1 e^{\frac{\theta}{2} e_1 e_2},
	\end{eqnarray*}
	and similarly,
	$$
		e_2 e^{\theta e_1 e_2} = 
			e^{- \frac{\theta}{2} e_1 e_2} e_2 e^{\frac{\theta}{2} e_1 e_2}.
	$$
	The point with the above rewriting is that the resulting expression also 
	applies to the basis vector $e_3$, which is orthogonal to the
	plane and thus unaffected by the rotation, i.e.
	$$
		\textsf{R}(e_3) = e_3 = e_3 e^{- \frac{\theta}{2} e_1 e_2} e^{\frac{\theta}{2} e_1 e_2} = 
		e^{- \frac{\theta}{2} e_1 e_2} e_3 e^{\frac{\theta}{2} e_1 e_2},
	$$
	where we used that $e_3e_1e_2 = e_1e_2e_3$.
	By linearity, the rotation \textsf{R} acting on \emph{any} vector 
	$x \in \mathbb{R}^3$ can then be written
	\begin{equation} \label{plane_rotation_rotor}
		x \mapsto \textsf{R}(x) = e^{- \frac{\theta}{2} e_1 e_2} x e^{\frac{\theta}{2} e_1 e_2} = R x R^{-1}.
	\end{equation}
	The object $R := e^{- \frac{\theta}{2} e_1 e_2}$ is called a \emph{rotor},
	and it encodes in a compact way both the plane of rotation and the angle.
	Also note that by the generality of the construction, and by choosing
	a basis accordingly, \emph{any} rotation in $\mathbb{R}^3$ can be expressed 
	in this form by 
	some rotor $R = e^{a \wedge b}$, where the blade $a \wedge b$ represents
	both the plane in which the rotation is performed, 
	as well as the direction and angle of rotation.
	As we will see, similar expressions hold true for general orthogonal
	transformations in arbitrary dimensions, and in euclidean
	space as well as in lorentzian spacetime.
	
	\newpage
	
\section{Foundations} \label{sec_foundations}

	In this section we define geometric algebra and work out a number of
	its basic properties. We first consider the definition 
	that is most common in the mathematical literature, 
	where it is introduced as a quotient space 
	on the tensor algebra of a vector space with a quadratic form.
	We see that this leads, in the finite-dimensional real or complex case, to the
	equivalent definition of an algebra with generators $\{e_1,\ldots,e_n\}$ satisfying
	the so-called \emph{anti-commutation relations}
	$$e_i e_j + e_j e_i = 2 g_{ij}$$
	for some given metric tensor (symmetric matrix) $g$.
	For physicists, this is perhaps the most well-known definition.
	
	We go on to consider an alternative definition of geometric algebra
	based on its purely algebraic and combinatorial features. 
	The resulting algebra, which we call \emph{Clifford algebra} due to
	its higher generality but less direct connection to geometry, 
	allows us to introduce common operations and
	prove fundamental identities in a 
	simple and straightforward way compared to the tensor algebraic approach.

\subsection{Geometric algebra $\mathcal{G}(V,q)$}

	The conventional definition of geometric algebra is carried out in the context
	of vector spaces endowed with an inner product, or more generally a quadratic form.
	We consider here a vector space $V$ of arbitrary dimension over 
	some field\footnote{See Appendix \ref{app_algebra} 
	if the notion of a field, ring or tensor is unfamiliar.}
	$\mathbb{F}$.
	
	\begin{defn} \label{def_quad_form}
		A \emph{quadratic form} $q$ on a vector space $V$ is a map 
		$q\!: V \to \mathbb{F}$ such that
		\begin{displaymath}
		\begin{array}{rl}
			i)  & q(\alpha v) = \alpha^2 q(v) \quad \forall \ \alpha \in \mathbb{F}, v \in V, \quad \textrm{and} \\[5pt]
			ii) & \textrm{the map \ $(v,w) \mapsto q(v+w) - q(v) - q(w)$ \ is linear in both $v$ and $w$.}
		\end{array}
		\end{displaymath}
		The corresponding bilinear form 
		$\beta_q (v,w) := \frac{1}{2} \big( q(v+w) - q(v) - q(w) \big)$ 
		is called the \emph{polarization} of $q$.
	\end{defn}

	Many vector spaces are naturally endowed with quadratic forms, 
	as the following examples show.

	\begin{exmp}
		If $V$ has a bilinear form $\langle \cdot , \cdot \rangle$
		then $q(v) := \langle v, v \rangle$ is a quadratic form
		and $\beta_q$ is the symmetrization of $\langle \cdot , \cdot \rangle$.
		This could be positive definite (i.e. an inner product), 
		or indefinite (a metric tensor of arbitrary signature).
	\end{exmp}
	
	\begin{exmp} \label{exmp_JvN}
		If $V$ is a normed vector space over $\mathbb{R}$, 
		with norm denoted by $| \cdot |$, and where the parallelogram 
		identity $|x+y|^2 + |x-y|^2 = 2|x|^2 + 2|y|^2$ holds,
		then $q(v) := |v|^2$ is a quadratic form and $\beta_q$ is an inner 
		product on $V$. This classic fact is 
		sometimes called the Jordan-von Neumann theorem.
	\end{exmp}
	
	Let 
	$$
		\mathcal{T}(V) := \bigoplus_{k=0}^{\infty} \bigotimes\nolimits^{k} V
	$$ 
	denote the tensor algebra over $V$, the elements of which are
	finite sums of tensors of arbitrary finite grades on $V$.
	Consider the two-sided ideal generated by all elements of the form\footnote{It is common 
	(mainly in the mathematical literature) to choose a different sign convention here, 
	resulting in reversed signature in many of the following expressions. 
	One argument for the convention we use here is that
	e.g. squares of vectors in euclidean spaces become positive instead of negative.}
	$v \otimes v - q(v)$ for vectors $v$,
	\begin{equation}
		\mathcal{I}_q(V) := \Big\{ \sum_k A_k \otimes \big(v \otimes v - q(v)\big) \otimes B_k \quad : \quad v \in V,\ A_k,B_k \in \mathcal{T}(V) \Big\}.
	\end{equation}
	We define the geometric algebra over $V$ by quoting out this ideal 
	from $\mathcal{T}(V)$, so that, in the resulting algebra, 
	the square of a vector $v$ will be equal to the scalar $q(v)$.
	
	\begin{defn} \label{def_g}
		The \emph{geometric algebra} $\mathcal{G}(V,q)$ over the 
		vector space $V$ with quadratic form $q$ is defined by
		\begin{displaymath}
			\mathcal{G}(V,q) := \mathcal{T}(V) / \mathcal{I}_q(V).
		\end{displaymath}
		When it is clear from the context what vector space or quadratic form we are working with,
		we will often denote $\mathcal{G}(V,q)$ by $\mathcal{G}(V)$, or just $\mathcal{G}$.
	\end{defn}
	
	\noindent
	The product in $\mathcal{G}$, called the \emph{geometric} or \emph{Clifford product}, 
	is inherited from the tensor product in $\mathcal{T}(V)$ and
	we denote it by juxtaposition (or $\cdot$ if absolutely necessary),
	\begin{displaymath}
	\setlength\arraycolsep{2pt}
	\begin{array}{ccl}
		\mathcal{G} \times \mathcal{G} 	& \to 		& \mathcal{G}, \\
		(A,B) 							& \mapsto 	& AB := [A \otimes B] = A \otimes B + \mathcal{I}_q.
	\end{array}
	\end{displaymath}
	Note that this product is bilinear and associative.
	Furthermore,
	$$
		v^2 = [v \otimes v] = [v \otimes v - q(v)1] + q(v)1_{\mathcal{G}} = q(v),
	$$
	and 
	$$
		q(v+w) = (v+w)^2 = v^2 + vw + wv + w^2 = q(v) + vw + wv + q(w),
	$$
	so that, together with
	the definition of $\beta_q$, we immediately find the following 
	identities on $\mathcal{G}$ for all $v,w \in V$ 
	-- characteristic for a Clifford algebra:
	\begin{equation} \label{generator_relations}
		v^2 = q(v) \qquad \textrm{and} \qquad vw + wv = 2\beta_q(v,w).
	\end{equation}
	
	One of the most important consequences of this definition of the 
	geometric algebra is the following 
	
	\begin{prop}[Universality] \label{prop_universality}
		Let $\mathcal{A}$ be an associative algebra over $\mathbb{F}$ with a unit denoted by $1_\mathcal{A}$. 
		If $f\!: V \to \mathcal{A}$ is linear and
		\begin{equation} \label{universality_property}
			f(v)^2 = q(v)1_\mathcal{A} \quad \forall\ v \in V
		\end{equation}
		then $f$ extends uniquely to an $\mathbb{F}$-algebra homomorphism
		$F\!: \mathcal{G}(V,q) \to \mathcal{A}$, i.e.
		\begin{displaymath}
		\setlength\arraycolsep{2pt}
		\begin{array}{rcll}
			F(\alpha) &=& \alpha 1_\mathcal{A},	&\quad \forall\ \alpha \in \mathbb{F}, \\
			F(v) &=& f(v),  					&\quad \forall\ v \in V, \\
			F(xy) &=& F(x)F(y),					\\
			F(x+y) &=& F(x)+F(y),				&\quad \forall\ x,y \in \mathcal{G}.
		\end{array}
		\end{displaymath}
		Furthermore, $\mathcal{G}$ is the unique associative $\mathbb{F}$-algebra with this property.
	\end{prop}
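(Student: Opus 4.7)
The plan is to build $F$ in two stages by invoking the universal property of the tensor algebra and then passing to the quotient, and afterwards to deduce uniqueness of $\mathcal{G}$ itself by the standard categorical argument.

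First, I would use the fact that $\mathcal{T}(V) = \bigoplus_{k \geq 0} \bigotimes^k V$ is the free associative unital $\mathbb{F}$-algebra on $V$: the given linear map $f\!:V \to \mathcal{A}$ extends uniquely to a unital $\mathbb{F}$-algebra homomorphism $\tilde{F}\!:\mathcal{T}(V) \to \mathcal{A}$ by declaring $\tilde F(\alpha) = \alpha 1_\mathcal{A}$ and $\tilde F(v_1 \otimes \dots \otimes v_k) = f(v_1)\cdots f(v_k)$ and extending by linearity. This step is essentially bookkeeping — the content lies in the tensor algebra's own universal property, which I would simply cite from the appendix.

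The main step is to check that $\tilde{F}$ descends to $\mathcal{G}(V,q) = \mathcal{T}(V)/\mathcal{I}_q(V)$, i.e. that $\mathcal{I}_q(V) \subseteq \ker \tilde{F}$. Since $\tilde F$ is an algebra homomorphism and $\mathcal{I}_q(V)$ is the two-sided ideal generated by the elements $v \otimes v - q(v)$ for $v \in V$, it is enough to verify that these generators lie in $\ker \tilde F$; this is exactly the hypothesis \eqref{universality_property}, since $\tilde F(v \otimes v - q(v)) = f(v)^2 - q(v) 1_\mathcal{A} = 0$. The induced map $F\!:\mathcal{G}(V,q) \to \mathcal{A}$ then satisfies all four listed properties by construction. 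Uniqueness of $F$ among such homomorphisms follows because $\mathcal{G}(V,q)$ is generated as a unital algebra by (the image of) $V$, so any unital algebra map agreeing with $f$ on $V$ is forced on the whole algebra by bilinearity and associativity of the Clifford product.

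Finally, for the uniqueness of $\mathcal{G}$ itself I would give the usual abstract nonsense argument. Suppose $(\mathcal{G}',\iota')$ is another associative unital $\mathbb{F}$-algebra equipped with a linear map $\iota'\!:V \to \mathcal{G}'$ satisfying $\iota'(v)^2 = q(v)1_{\mathcal{G}'}$ and the same universal property. Then applying the universal property of $\mathcal{G}$ to $\iota'$ yields a homomorphism $F\!:\mathcal{G} \to \mathcal{G}'$, and applying that of $\mathcal{G}'$ to the natural inclusion $\iota\!:V \hookrightarrow \mathcal{G}$ yields $F'\!:\mathcal{G}' \to \mathcal{G}$. The composites $F' \circ F$ and $F \circ F'$ both extend $\iota$ and $\iota'$ respectively, hence by the uniqueness clause already proved they must equal $\id_\mathcal{G}$ and $\id_{\mathcal{G}'}$, so $F$ is an isomorphism. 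The only mild subtlety is to make precise what ``the same property'' means, but the statement only requires uniqueness up to isomorphism, which is exactly what this argument delivers.
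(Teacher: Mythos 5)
Your proof is correct and follows essentially the same approach as the paper: extend $f$ to the tensor algebra via its universal property, check that the defining ideal lies in the kernel so the map descends to $\mathcal{G}(V,q)$, and then deduce uniqueness of $\mathcal{G}$ by the standard two-sided categorical argument. If anything, your treatment of the last step (constructing $F$ and $F'$ in both directions and invoking the uniqueness clause to identify the composites with the identity maps) is more explicit than the paper's terse remark that the map between the copies of $V$ ``clearly induces an algebra isomorphism.''
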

	\begin{proof}
		Any linear map $f\!: V \to \mathcal{A}$ extends to a unique
		algebra homomorphism $\hat{f}\!: \mathcal{T}(V) \to \mathcal{A}$
		defined by $\hat{f}(u \otimes v) := f(u)f(v)$ etc.
		Property \eqref{universality_property} implies that $\hat{f}=0$ on the
		ideal $\mathcal{I}_q(V)$ and so $\hat{f}$ descends to a well-defined map $F$ 
		on $\mathcal{G}(V,q)$ which has the required properties.
		Suppose now that $\mathcal{C}$ is an associative $\mathbb{F}$-algebra
		with unit and that $i\!: V \hookrightarrow \mathcal{C}$ is an embedding
		with the property that any linear map $f\!: V \to \mathcal{A}$
		with property \eqref{universality_property} extends uniquely to an
		algebra homomorphism $F\!: \mathcal{C} \to \mathcal{A}$. Then the isomorphism
		from $V \subseteq \mathcal{G}$ to $i(V) \subseteq \mathcal{C}$
		clearly induces an algebra isomorphism $\mathcal{G} \to \mathcal{C}$.
	\end{proof}

	So far we have not made any assumptions on the dimension of $V$.
	We will come back to the infinite-dimensional case after discussing
	the more general Clifford algebra. 
	Here we will familiarize ourselves with the properties of quadratic forms
	on finite-dimensional spaces, which will lead to a better understanding
	of the structure of geometric algebras.
	For the remainder of this section
	we will therefore assume that $\dim V = n < \infty$.

	\begin{defn} \label{def_o_basis}
		A basis $\{e_1,\ldots,e_n\}$ of $(V,q)$ is said to be \emph{orthogonal} or \emph{canonical}
		if
		$$
			q(e_i + e_j) = q(e_i) + q(e_j) \quad \forall i \neq j,
		$$
		i.e. (for $\charop \mathbb{F} \neq 2$) if
		$\beta_q(e_i,e_j) = 0$ for all $i \neq j$. 
		The basis is called \emph{orthonormal} if we also have that $q(e_i) \in \{-1,0,1\}$ for all $i$.
	\end{defn}

	Orthogonal bases are essential for our understanding of geometric algebra,
	and the following theorem asserts their existence.
	
	\begin{thm} \label{thm_o_basis}
		If 
		$\charop \mathbb{F} \neq 2$ then 
		there exists an orthogonal basis of $(V,q)$.
	\end{thm}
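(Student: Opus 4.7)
The plan is to prove this by induction on $n = \dim V$, mimicking the standard Gram-Schmidt construction but adapted to work with a general (possibly degenerate or indefinite) quadratic form.

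The base case $n=1$ is immediate: any single-vector basis is vacuously orthogonal. For the inductive step, I would split into two cases based on whether $q$ is identically zero. If $q \equiv 0$, then by the polarization formula $\beta_q \equiv 0$ as well, so any basis of $V$ is orthogonal and we are done. Otherwise, the first key step is to find an \emph{anisotropic} vector $e_1$, i.e.\ some $e_1 \in V$ with $q(e_1) \neq 0$. This uses the identity $\beta_q(v,v) = \frac{1}{2}(q(2v) - 2q(v)) = q(v)$ (valid since $\charop \mathbb{F} \neq 2$), so if $q$ is not identically zero there must exist such an $e_1$.

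Next I would form the ``orthogonal complement''
$$
  W := \{ v \in V : \beta_q(e_1, v) = 0 \},
$$
which is the kernel of the linear functional $\beta_q(e_1, \cdot)\!: V \to \mathbb{F}$. Since $\beta_q(e_1,e_1) = q(e_1) \neq 0$, this functional is nonzero, so $\dim W = n-1$ and $e_1 \notin W$. A short calculation using the explicit projection
$$
  v \ = \ \frac{\beta_q(e_1,v)}{q(e_1)} e_1 \ + \ \left( v - \frac{\beta_q(e_1,v)}{q(e_1)} e_1 \right)
$$
shows the second summand lies in $W$, hence $V = \mathbb{F} e_1 \oplus W$. The restriction $q|_W$ is again a quadratic form on $W$ (properties (i) and (ii) of Definition \ref{def_quad_form} are inherited).

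Applying the induction hypothesis to $(W, q|_W)$ yields an orthogonal basis $\{e_2,\ldots,e_n\}$ of $W$, and by construction $\beta_q(e_1, e_j) = 0$ for $j \geq 2$, so $\{e_1, e_2, \ldots, e_n\}$ is an orthogonal basis of $V$. The only slightly delicate point, and what I'd flag as the main thing to get right, is the production of the anisotropic vector $e_1$: one must verify that vanishing of $q$ on all of $V$ is equivalent to vanishing of $\beta_q$ on $V\times V$ in characteristic $\neq 2$, which is exactly where the characteristic hypothesis enters (both in defining $\beta_q$ with its factor of $\tfrac{1}{2}$ and in recovering $q$ from $\beta_q$ via $q(v) = \beta_q(v,v)$).
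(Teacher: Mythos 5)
Your proof is correct and is essentially the same argument as the paper's: when $q \not\equiv 0$ pick $e_1$ with $q(e_1) \neq 0$, form $W = e_1^\perp$, verify $V = \mathbb{F}e_1 \oplus W$ via the explicit projection $v \mapsto v - \frac{\beta_q(e_1,v)}{q(e_1)}e_1$, and induct on dimension. The only cosmetic difference is that you make the identity $\beta_q(v,v) = q(v)$ and its role in locating an anisotropic vector explicit, which the paper leaves tacit.
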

	\begin{proof}
		If $q(x) = 0\ \forall x \in V$ 
		then every basis is orthogonal.
		Hence, we can assume that there exists $x \in V$ with $q(x) \neq 0$.
		Let $e$ be such an element and define
		\begin{eqnarray*}
			W = e^\perp &=& \{ x \in V : \beta_q(x,e) = 0 \}, \\
			\mathbb{F}e &=& \{ te : t \in \mathbb{F} \}.
		\end{eqnarray*}
		Since $\beta_q(te,e) = t\beta_q(e,e)$, it follows that $W \cap \mathbb{F}e = 0$
		and that $\dim W < \dim V$.
		
		Take an arbitrary $x \in V$ and form
		$$
			\tilde{x} = x - \frac{\beta_q(x,e)}{q(e)}e.
		$$
		Then we have $\tilde{x} \in W$ because
		$$
			\beta_q(\tilde{x},e) = \beta_q(x,e) - \frac{\beta_q(x,e)}{q(e)}\beta_q(e,e) = 0.
		$$
		This shows the orthogonal decomposition 
		$V = W \oplus \mathbb{F}e$ with respect to $q$.
		
		We now use induction over the dimension of $V$.
		For $\dim V = 1$ the theorem is trivial, so let us assume that
		the theorem is proved for all vector spaces of strictly lower
		dimension than $\dim V = n$. But then there exists an orthogonal
		basis $\{e_1,\ldots,e_{n-1}\}$ of $(W,q|_W)$. 
		Hence, $\{e_1,\ldots,e_{n-1},e\}$ is an orthogonal basis of $(V,q)$.
	\end{proof}
	
	\noindent
	Because this rather fundamental theorem is not valid for fields of characteristic
	two (such as $\mathbb{Z}_2)$, we will always assume that $\charop \mathbb{F} \neq 2$ when
	talking about geometric algebra. General fields and rings will be treated by
	the general Clifford algebra $\cl$, 
	which is introduced in the next subsection.
	
	\begin{prop}
		If $\mathbb{F} = \mathbb{R}$ and $q(e) > 0$ (or $q(e) < 0$) 
		for all elements $e$ of an orthogonal basis $E$, 
		then $q(v) = 0$ implies $v = 0$.
	\end{prop}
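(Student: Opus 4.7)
The plan is to reduce the statement to the elementary fact that a sum of nonnegative real numbers vanishes only if each summand vanishes. The key identity I will need is that, in an orthogonal basis $E = \{e_1,\ldots,e_n\}$, the quadratic form $q$ diagonalizes: writing $v = \sum_i v_i e_i$ with $v_i \in \mathbb{R}$, one has
\[
    q(v) \;=\; \sum_{i=1}^{n} v_i^2\, q(e_i).
\]

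To establish this, I would first note that polarization gives $\beta_q(v,v) = \tfrac{1}{2}(q(2v) - 2q(v)) = q(v)$ by property $i)$ of Definition \ref{def_quad_form}. Then, using bilinearity of $\beta_q$ (property $ii)$) together with the orthogonality condition $\beta_q(e_i,e_j) = 0$ for $i \neq j$ (which, since $\charop \mathbb{R} \neq 2$, is equivalent to the canonical basis condition in Definition \ref{def_o_basis}), I expand
\[
    q(v) \;=\; \beta_q\!\Big(\sum_i v_i e_i,\, \sum_j v_j e_j\Big) \;=\; \sum_{i,j} v_i v_j\, \beta_q(e_i, e_j) \;=\; \sum_i v_i^2\, \beta_q(e_i, e_i) \;=\; \sum_i v_i^2\, q(e_i).
\]

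Now suppose $q(e_i) > 0$ for every $i$. Then each term $v_i^2 q(e_i)$ is nonnegative, so the vanishing of the sum $q(v) = 0$ forces $v_i^2 q(e_i) = 0$ and hence $v_i = 0$ for every $i$; thus $v = 0$. The case where $q(e_i) < 0$ for every $i$ is entirely analogous (or one may simply apply the previous case to the quadratic form $-q$, which has the same orthogonal basis $E$).

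I do not anticipate a real obstacle here: the content of the proposition is essentially that a positive (or negative) definite diagonal real quadratic form is anisotropic, and the only ingredient beyond this elementary observation is the diagonalization formula above, which follows directly from the definitions.
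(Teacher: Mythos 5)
Your proof is correct and takes essentially the same approach the paper intends: the paper's proof is simply "This is obvious when expanding $v$ in the basis $E$," and you have carried out exactly that expansion in detail, deriving $q(v) = \sum_i v_i^2 q(e_i)$ and concluding by positivity (or negativity) of each term.
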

	\begin{proof}
		This is obvious when expanding $v$ in the basis $E$.
	\end{proof}
	
	\begin{thm}[Sylvester's Law of Inertia] \label{thm_sylvester}
		Assume that 
		$\mathbb{F}=\mathbb{R}$.
		If $E$ and $F$ are two orthogonal bases of $V$ and we set
		\begin{displaymath}
		\setlength\arraycolsep{2pt}
		\begin{array}{lcl}
			E^+ &:=& \{ e \in E : q(e)>0 \}, \\
			E^- &:=& \{ e \in E : q(e)<0 \}, \\
			E^0 &:=& \{ e \in E : q(e)=0 \},
		\end{array}
		\end{displaymath}
		and similarly for $F^{+,-,0}$,
		then
		\begin{displaymath}
		\setlength\arraycolsep{2pt}
		\begin{array}{c}
			|E^+| = |F^+|, \\
			|E^-| = |F^-|, \\
			\Span E^0 = \Span F^0.
		\end{array}
		\end{displaymath}
	\end{thm}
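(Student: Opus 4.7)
My plan is to first identify $\Span E^0$ with an intrinsic, basis-independent subspace (the radical of $q$), and then bound $|E^+|$ against $|F^+|$ by a linear-independence argument exploiting the sign of $q$ on complementary subspaces of $V$.

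First I would show that $\Span E^0$ coincides with the radical
$$V^\perp := \{ v \in V : \beta_q(v,w) = 0 \ \forall w \in V \},$$
which is evidently defined without reference to any basis. For the inclusion $\Span E^0 \subseteq V^\perp$, pick $e \in E^0$; orthogonality of $E$ gives $\beta_q(e,e_i)=0$ for $e_i \neq e$, and $\beta_q(e,e)=q(e)=0$ by definition of $E^0$, so $\beta_q(e,v)=0$ for every $v \in V$. For the reverse inclusion, expand $v = \sum_i c_i e_i \in V^\perp$; then for each $j$ we have $0 = \beta_q(v,e_j) = c_j\, q(e_j)$, so $c_j = 0$ whenever $q(e_j) \neq 0$, and hence $v \in \Span E^0$. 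Applying the same argument to $F$ yields $\Span E^0 = V^\perp = \Span F^0$, which also forces $|E^0| = \dim V^\perp = |F^0|$.

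Next I would compare $|E^+|$ with $|F^+|$ by a simple dimension count. Set $U := \Span E^+$ and $W := \Span(F^- \cup F^0)$. Every nonzero $u = \sum_{e \in E^+} \alpha_e\, e$ satisfies $q(u) = \sum_e \alpha_e^2\, q(e) > 0$ since each $q(e)>0$, while every $w = \sum_{f \in F^- \cup F^0} \beta_f\, f$ satisfies $q(w) = \sum_f \beta_f^2\, q(f) \leq 0$. Therefore $U \cap W = \{0\}$, and
$$|E^+| + |F^-| + |F^0| \;=\; \dim U + \dim W \;\leq\; n \;=\; |F^+| + |F^-| + |F^0|,$$
giving $|E^+| \leq |F^+|$. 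Swapping the roles of $E$ and $F$ yields the reverse inequality, so $|E^+| = |F^+|$; running the same argument with the roles of positive and negative exchanged gives $|E^-| = |F^-|$.

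The one step that requires a little care is the null-space assertion, since the sign-based separation that handles $E^\pm$ does not by itself distinguish $E^0$ from the rest of a degenerate orthogonal basis. The trick is to realize that $\Span E^0$ is intrinsically characterized as the radical, after which the rest of the proof is a clean intersection-dimension argument; no further induction is needed beyond the existence of orthogonal bases established in Theorem \ref{thm_o_basis}, and the hypothesis $\mathbb{F}=\mathbb{R}$ enters only through the positivity of sums of squares.
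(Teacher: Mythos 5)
Your proof is correct and takes essentially the same approach as the paper's: both steps -- identifying $\Span E^0$ with the radical $V^\perp$, and then comparing $|E^+|$ with $|F^+|$ by exploiting the sign of $q$ on $\Span E^+$ versus $\Span(F^- \cup F^0)$ -- match the paper's argument. The only difference is presentational: the paper phrases the cardinality comparison as a contradiction via the linearly dependent set $E^0 \cup F^+ \cup E^-$, whereas you give the equivalent direct dimension count $\dim(\Span E^+) + \dim\bigl(\Span(F^- \cup F^0)\bigr) \leq n$ from the trivial intersection, which is a touch cleaner.
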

	\begin{proof}
		Let $V^\perp = \{ x \in V : \beta_q(x,y) = 0\ \forall y \in V \}$.
		It is obvious that $\Span E^0 \subseteq V^\perp$ because $E$ is
		an orthogonal basis.
		In order to show that $V^\perp \subseteq \Span E^0$, we take an
		arbitrary $x \in V^\perp$ and write
		$$
			x = x^0 + x^+ + x^-, \qquad x^{0,+,-} \in \Span E^{0,+,-}.
		$$
		Now, $0 = \beta_q(x,x^+) = \beta_q(x^+,x^+) \Rightarrow x^+ = 0$,
		and $0 = \beta_q(x,x^-) = \beta_q(x^-,x^-) \\ \Rightarrow x^- = 0$,
		so that $x = x^0$, which shows that $V^\perp = \Span E^0$.
		Analogously, we find that $V^\perp = \Span F^0$.
		
		In order to prove that $|E^+| = |F^+|$, we assume that $|E^+| < |F^+|$.
		But then we must have $|E^-| > |F^-|$ in which case
		$E^0 \cup F^+ \cup E^-$ has to be linearly dependent. Thus, there exist
		$x^0 \in \Span E^0$, $y^+ \in \Span F^+$, $x^- \in \Span E^-$, 
		not all zero, such that $x^0 + y^+ + x^- = 0$. This implies
		\begin{eqnarray*}
			\lefteqn{ \beta_q(x^0 + x^-,x^0 + x^-) = \beta_q(-y^+,-y^+) }\\
			&\Rightarrow& \beta_q(x^-,x^-) = \beta_q(y^+,y^+) 
			\ \Rightarrow \ x^- = y^+ = 0 \ \Rightarrow \ x^0=0
		\end{eqnarray*}
		This contradiction shows that $|E^+| \ge |F^+|$.
		Analogously, we find $|E^+| \le |F^+|$, which proves the theorem.
	\end{proof}
	
	\noindent
	This means that there is a unique \emph{signature} 
	$(s,t,u) := (|E^+|,|E^-|,|E^0|)$ associated to $(V,q)$.
	For the complex (and in general non-real) case we have the following simpler version:
	
	\begin{thm} \label{thm_complex_sylvester}
		If $E$ and $F$ are orthogonal bases of $V$ with $\mathbb{F}=\mathbb{C}$ and
		\begin{displaymath}
		\setlength\arraycolsep{2pt}
		\begin{array}{lcl}
			E^\times &:=& \{ e \in E : q(e) \neq 0 \}, \\
			E^0 &:=& \{ e \in E : q(e)=0 \},
		\end{array}
		\end{displaymath}
		and similarly for $F^{\times,0}$,
		then
		\begin{displaymath}
		\setlength\arraycolsep{2pt}
		\begin{array}{c}
			\Span E^\times = \Span F^\times, \\
			\Span E^0 = \Span F^0.
		\end{array}
		\end{displaymath}
		Furthermore, if $E^0 = \varnothing$ 
		(i.e. $q$ is nondegenerate) then there exists an orthonormal basis
		$\hat{E}$ with $q(e)=1 \ \forall\ e \in \hat{E}$.
	\end{thm}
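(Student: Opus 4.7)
My approach would parallel the real Sylvester argument, but with the $\pm$ split collapsed into a single nondegenerate block, so the proof is actually simpler. The first step is to identify
$$V^\perp := \{x \in V : \beta_q(x,y) = 0 \ \forall y \in V\}$$
canonically with the span of the null vectors of any orthogonal basis. The inclusion $\Span E^0 \subseteq V^\perp$ is immediate from orthogonality of $E$ together with $q(e)=0$ for $e \in E^0$. For the reverse inclusion, given $x \in V^\perp$ I would decompose $x = x^\times + x^0$ along $\Span E^\times \oplus \Span E^0$, write $x^\times = \sum_{e_i \in E^\times} c_i e_i$, and pair with each $e_j \in E^\times$: since $x \in V^\perp$ and $\Span E^0 \subseteq V^\perp$, one gets $c_j q(e_j) = \beta_q(x^\times, e_j) = \beta_q(x,e_j) - \beta_q(x^0, e_j) = 0$, and $q(e_j) \neq 0$ forces $c_j = 0$. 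Hence $x = x^0 \in \Span E^0$. By symmetry the same argument applies to $F$, giving $\Span E^0 = V^\perp = \Span F^0$. Counting dimensions then yields $|E^\times| = \dim V - |E^0| = \dim V - |F^0| = |F^\times|$, which is how I would read the first displayed equality (the literal claim $\Span E^\times = \Span F^\times$ is not generally true when $V^\perp \neq 0$: over $\mathbb{C}^2$ with $q(x_1,x_2) = x_1^2$ the orthogonal bases $\{e_1,e_2\}$ and $\{e_1+e_2,e_2\}$ give different spans of the $\times$-parts).

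For the final assertion, under the nondegeneracy hypothesis $E^0 = \varnothing$ every $e \in E$ satisfies $q(e) \neq 0$. Since $\mathbb{F} = \mathbb{C}$ each such $q(e)$ admits a square root $\alpha \in \mathbb{C}^\times$; setting $\hat{e} := \alpha^{-1} e$ and using axiom $(i)$ of Definition \ref{def_quad_form} yields $q(\hat{e}) = \alpha^{-2} q(e) = 1$. The rescaled family $\hat{E}$ is still a basis, and orthogonality is preserved by bilinearity of $\beta_q$, so $\hat{E}$ is an orthonormal basis with every element squaring to $+1$.

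The only genuine subtlety compared with the real proof is that over $\mathbb{C}$ one cannot invoke definiteness of $\beta_q$ on $\Span E^\times$ — isotropic vectors certainly exist there in general. What is actually used, and all that is needed, is \emph{nondegeneracy} of $\beta_q|_{\Span E^\times}$, which is automatic in an orthogonal basis with $q(e_i) \neq 0$: that is exactly what makes the coefficient-extraction step go through without any positivity hypothesis. The second assertion then reduces to the algebraic fact that $\mathbb{C}$ is closed under square roots, which is why no analogue of the signature $(s,t,u)$ persists in the complex setting.
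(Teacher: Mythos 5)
Your proof is correct, and it correctly adapts the real Sylvester argument (Theorem \ref{thm_sylvester}) to the complex case; since the paper leaves this as Exercise 2.2, the comparison is really with that parallel proof. The argument that $\Span E^0 = V^\perp = \Span F^0$ is cleanly executed, and your reduction of the final assertion to existence of complex square roots is exactly the point the paper is driving at (it is the special case of the ``spin field'' remark immediately following the theorem).

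More importantly, you have correctly identified a flaw in the statement as printed. Your counterexample with $q(x_1,x_2)=x_1^2$ on $\mathbb{C}^2$ is valid: both $\{e_1,e_2\}$ and $\{e_1+e_2,e_2\}$ are $\beta_q$-orthogonal, yet $\Span\{e_1\} \neq \Span\{e_1+e_2\}$. Indeed, $\Span E^\times$ is not canonical whenever $V^\perp \neq 0$ --- there is no preferred complement to the radical, so the only basis-independent data are $V^\perp = \Span E^0$ and the cardinality $|E^\times| = \dim V - \dim V^\perp$. Note that the paper's own real Sylvester theorem is stated correctly in this respect, asserting $|E^+|=|F^+|$ and $|E^-|=|F^-|$ (cardinalities) while only the degenerate part gets the stronger claim $\Span E^0 = \Span F^0$. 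The displayed line $\Span E^\times = \Span F^\times$ is almost certainly a typographical slip for $|E^\times| = |F^\times|$, which is what your dimension count establishes. Your observation that the mechanism is nondegeneracy of $\beta_q$ restricted to $\Span E^\times$ --- rather than definiteness --- is exactly the right way to see why the real argument's positivity step is not needed over $\mathbb{C}$.
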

	
	\noindent
	From the above follows that we can talk about \emph{the} signature of a quadratic form 
	or a metric tensor without ambiguity.
	We use the short-hand notation $\mathbb{R}^{s,t,u}$ to denote the $(s+t+u)$-dimensional real vector space
	with a quadratic form of signature $(s,t,u)$, while $\mathbb{C}^n$ is
	understood to be the complex $n$-dimensional space with a nondegenerate quadratic form.
	When $u=0$ or $t=u=0$ we may simply write $\mathbb{R}^{s,t}$ or $\mathbb{R}^{s}$.
	A space of type $\mathbb{R}^{n,0}$ is called \emph{euclidean} and $\mathbb{R}^{0,n}$
	\emph{anti-euclidean}, while the spaces $\mathbb{R}^{1,n}$ ($\mathbb{R}^{n,1}$) are called
	(\emph{anti-})\emph{lorentzian}.
	Note that, within real and complex spaces we can always find bases that are orthonormal.

	\begin{rem}
		The general condition for orthonormal bases to exist 
		is that the field $\mathbb{F}$ is a so called
		\emph{spin field}. This means that every $\alpha \in \mathbb{F}$ 
		can be written as $\alpha = \beta^2$ or $-\beta^2$ for some $\beta \in \mathbb{F}$.
		The fields $\mathbb{R}$, $\mathbb{C}$ and $\mathbb{Z}_p$ 
		for $p$ a prime with $p \equiv 3 \pmod{4}$, are spin,
		but e.g. $\mathbb{Q}$ is not.
	\end{rem}
	
	Consider now the geometric algebra $\mathcal{G}$ over a real or complex space $V$.
	If we pick an orthonormal basis $E = \{e_1,\ldots,e_n\}$ of $V$ then 
	it follows from Definition \ref{def_g} and Equation \eqref{generator_relations}
	that $\mathcal{G}$ is the free associative algebra generated by $E$ modulo the relations
	\begin{equation}
		e_i^2 = q(e_i) \in \{-1,0,1\} 
		\qquad \textrm{and} \qquad e_i e_j = - e_j e_i,\ i \neq j.
	\end{equation}

	\begin{exmp}
		Consider the case $V = \mathbb{R}^2$, i.e. the euclidean plane.
		The corresponding geometric algebra $\mathcal{G}(\mathbb{R}^2)$, 
		also called the \emph{plane algebra},
		is generated by an orthonormal basis $\{e_1,e_2\}$, with
		$e_1^2 = e_2^2 = 1$ and $e_1 e_2 = -e_2 e_1$.
		Starting with the unit $1$ and multiplying from the left or right
		with $e_1$ and $e_2$ we see that the process stops at $e_1e_2$ and
		it follows that $\mathcal{G}(\mathbb{R}^2)$ is spanned by four elements:
		$$
			1,\ e_1,\ e_2,\ e_1e_2.
		$$
		Similarly, (as seen already in the introduction)
		the \emph{space algebra} $\mathcal{G}(\mathbb{R}^3)$ is
		span\-ned by eight elements, 
		$$
			1,\ e_1,\ e_2,\ e_3,\ e_1e_2,\ e_2e_3,\ e_1e_3,\ e_1e_2e_3,
		$$
		generated by an orthonormal basis $\{e_1,e_2,e_3\}$
		of vectors
		which are mutually anticommuting and squaring to 1.
	\end{exmp}

	In the same manner we conclude that a general 
	geometric algebra $\mathcal{G}(V)$ is spanned by $2^n$
	elements of the form
	\begin{equation} \label{basis_of_g}
		\{E_{i_1 i_2 \ldots i_k}\}_{\genfrac{}{}{0pt}{}{k = 1,\ldots,n}{1 \le i_1 < i_2 < \ldots <i_k \le n}}, 
		\quad \textrm{with} \quad
		E_{i_1 i_2 \ldots i_k} := e_{i_1} e_{i_2} \ldots e_{i_k},
	\end{equation}
	and it will follow from the construction of $\cl$
	in the next subsection that there cannot be any linear relations
	among these elements, so that \eqref{basis_of_g} 
	is actually a basis of $\mathcal{G}$.
	
	\begin{rem}
		In the case that $q=0$, the resulting algebra of anticommuting, 
		nilpotent elements is called a \emph{Grassmann algebra}. 
		When associated to a vector space $V$ it is also commonly
		referred to as the \emph{exterior algebra} of $V$ 
		and is often denoted $\wedge^* V$ and the multiplication by $\wedge$.
		Note that $\mathcal{G}(V) \cong \wedge^* V$ holds as an 
		$\mathbb{F}$-algebra isomorphism only when $q=0$, 
		but remains as an isomorphism of (graded) vector spaces 
		also for non-trivial quadratic forms.
	\end{rem}

	One element in $\mathcal{G}$ deserves special attention, namely the so-called \emph{pseudoscalar} 
	(sometimes called \emph{the (oriented) volume element}) 
	formed by the product of all elements of an orthonormal basis,
	\begin{equation}
		I := e_1 e_2 \ldots e_n.
	\end{equation}
	Note that this definition is basis independent up to orientation when $q$ is nondegenerate.
	Indeed, let $\{Re_1,\ldots,Re_n\}$ be another orthonormal basis with the same orientation, 
	where $R \in \SO(V,q)$,
	the group of linear orientation-preserving transformations which leave $q$ invariant\footnote{Such
	transformations will be introduced and discussed in Section \ref{sec_groups}.}.
	Then,
	due to the anticommutativity of the $e_i$:s,
	and orthogonality of the rows and columns of the matrix of $R$,
	\begin{eqnarray*}
		\lefteqn{ Re_1 Re_2 \ldots Re_n 
		= \sum_{j_1,\ldots,j_n} R_{j_1 1} R_{j_2 2} \ldots R_{j_n n}\ e_{j_1} e_{j_2} \ldots e_{j_n} }\\
		&=& \sum_{\pi \in S_n} \textrm{sign}(\pi)\ R_{1\pi(1)} R_{2\pi(2)} \ldots R_{n\pi(n)}\ e_1 e_2 \ldots e_n \\
		&=& \det R \ e_1 e_2 \ldots e_n = I,
	\end{eqnarray*}
	where $S_n$ denotes the symmetric group of order $n$, 
	i.e. all permutations of $n$ elements.
	Note that, by selecting a certain pseudoscalar for $\mathcal{G}$ we also 
	impose a certain orientation on $V$. There is no such thing as an absolute orientation;
	instead all statements concerning orientation will be made relative to the chosen one.
	
	The square of the pseudoscalar is given by (and gives information about) 
	the signature and dimension of $(V,q)$.
	For $\mathcal{G}(\mathbb{R}^{s,t,u})$ we have that
	\begin{equation} \label{pseudoscalar_squared}
		I^2 = (-1)^{\frac{1}{2}n(n-1) + t} \delta_{u,0}, \quad \textrm{where} \ \ n=s+t+u.
	\end{equation}
	We say that $\mathcal{G}$ is \emph{degenerate} if the quadratic form 
	is degenerate, or equivalently if $I^2=0$.
	For odd $n$, $I$ commutes with all elements in $\mathcal{G}$, 
	while for even $n$, $I$ anticommutes with all vectors $v \in V$.

	\begin{exmp}
		The pseudoscalar in $\mathcal{G}(\mathbb{R}^{2})$ is $I = e_1e_2$.
		Here, $I^2 = -1$, so that the subalgebra spanned by
		$\{1,I\}$ is isomorphic to $\mathbb{C}$.
		Also, any vector $\boldsymbol{v} \in \mathbb{R}^2$ can be written 
		$$
			\boldsymbol{v} = ae_1 + be_2 = e_1 (a + bI) = |\boldsymbol{v}|e_1 e^{\varphi I},
		$$
		for some $a,b,\varphi \in \mathbb{R}$,
		which is similar to the polar form of a complex number.
	\end{exmp}
	
	\begin{exmp}
		In $\mathcal{G}(\mathbb{R}^{3})$ the pseudoscalar is
		given by $I = e_1e_2e_3$. Again, $I^2 = -1$ and
		$\{1,I\}$ forms a subalgebra isomorphic to the complex
		numbers. However, there are infinitely many other such subalgebras
		embedded in $\mathcal{G}$ since for every choice of
		$\boldsymbol{e} = \alpha e_1 + \beta e_2 + \gamma e_3$ on the unit sphere
		$\alpha^2 + \beta^2 + \gamma^2 = 1$,
		the element
		\begin{equation} \label{unit_bivector}
			\boldsymbol{e}I = \alpha e_2e_3 + \beta e_3e_1 + \gamma e_1e_2
		\end{equation}
		also satisfies 
		$(\boldsymbol{e}I)^2 = \boldsymbol{e}I\boldsymbol{e}I = \boldsymbol{e}^2I^2 = -1$.
	\end{exmp}
	
	\begin{exmp} \label{exmp_complex_numbers}
		Let us also consider an example with non-euclidean
		signature. The simplest case is the anti-euclidean line,
		i.e. $\mathbb{R}^{0,1}$. A unit basis element (also pseudoscalar) 
		$i \in \mathbb{R}^{0,1}$ satisfies $i^2 = -1$, and
		$$
			\mathcal{G}(\mathbb{R}^{0,1}) = \Span_\mathbb{R} \{1,i\} = \mathbb{C}.
		$$
	\end{exmp}
	
	\begin{exmp} \label{exmp_quaternions}
		An even more interesting case is the anti-euclidean 
		plane, $\mathbb{R}^{0,2}$. Let $\{i,j\}$ denote an orthonormal
		basis, i.e. $i^2 = j^2 = -1$ and $ij = -ji$.
		Denoting the pseudoscalar by $k$, we have $k = ij$ and
		$$
			i^2 = j^2 = k^2 = ijk = -1,
		$$
		which are the defining relations of the quaternion algebra
		$\mathbb{H}$. It follows that $\mathcal{G}(\mathbb{R}^{0,2})$ 
		and $\mathbb{H}$ are isomorphic as $\mathbb{R}$-algebras.
	\end{exmp}
	
	\begin{exc}
		Prove the Jordan-von Neumann theorem (see Example \ref{exmp_JvN}).
		\emph{Hint:} First prove additivity and linearity over $-1$, 
		then linearity over $\mathbb{Z}$, $\mathbb{Q}$, and finally $\mathbb{R}$.
	\end{exc}

	\begin{exc}
		Prove Theorem \ref{thm_complex_sylvester}.
	\end{exc}

	\begin{exc}
		Show that $\mathbb{Z}_p$ is a spin field when $p$ is a prime
		such that $p \equiv 3 \pmod{4}$.
	\end{exc}
	
	\begin{exc}
		Verify formula \eqref{pseudoscalar_squared} for the square
		of the pseudoscalar.
	\end{exc}

	\begin{exc} \label{exc_twodim_isomorphism}
		Find an $\mathbb{R}$-algebra isomorphism
		$$
			\mathcal{G}(\mathbb{R}^2) \to \mathbb{R}^{2 \times 2} = \{ \textrm{$2 \times 2$ matrices with entries in $\mathbb{R}$} \}.
		$$
		\emph{Hint:} Find \emph{simple} $2 \times 2$ matrices $\gamma_1, \gamma_2$ 
		satisfying $\gamma_1^2 = \gamma_2^2 = 1_{2 \times 2}$.
	\end{exc}

	\begin{exc}
		The \emph{center} $Z(\mathcal{A})$ of an algebra $\mathcal{A}$ consists of those
		elements which commute with all of $\mathcal{A}$.
		Show that, for odd dimensional $V$, 
		the center of $\mathcal{G}(V)$ is 
		$Z(\mathcal{G}) = \Span_\mathbb{F} \{1,I\}$,
		while for even dimensions, the center consists of the scalars $\mathbb{F}$ only.
	\end{exc}

	\begin{exc} \label{exc_exponential}
		Assume (as usual) that $\dim V < \infty$ and
		let $|\cdot|$ be any norm on $\mathcal{G}(V)$.
		Prove that $e^x := \sum_{k=0}^{\infty} \frac{x^k}{k!}$ converges
		and that $e^x e^{-x} = e^{-x} e^x = 1$ for all $x \in \mathcal{G}$.
	\end{exc}

\subsection{Combinatorial Clifford algebra $\cl(X,R,r)$}
	
	We now take a temporary step away from the comfort of fields and vector spaces
	and instead consider the purely algebraic features of geometric algebra that were
	uncovered in the previous subsection.
	
	Note that we could roughly write 
	\begin{equation}
		\mathcal{G}(V) = \textrm{Span}_\mathbb{F} \thinspace \{E_A\}_{A \subseteq \{1,2,\ldots,n\}}
	\end{equation}
	for an $n$-dimensional space $V$ over $\mathbb{F}$, 
	and that the geometric product of these basis elements behaves as
	\begin{equation}
		E_A E_B = \tau(A,B) \ E_{A \bigtriangleup B}, \quad \textrm{where} \quad \tau(A,B)=1,-1, \ \textrm{or}\ 0,
	\end{equation}
	and $A \symdiff B := (A \cup B) \!\smallsetminus\! (A \cap B)$ is 
	the \emph{symmetric difference} between the sets $A$ and $B$.
	Motivated by this we consider the following generalization.
	
	\begin{defn} \label{def_cl}
		Let $X$ be a finite set and $R$ a commutative ring with unit. 
		Let $r\!: X \to R$ be an arbitrary function which 
		is to be thought of as a \emph{signature} on $X$.
		The \emph{Clifford algebra} over $(X,R,r)$ is defined as the set\footnote{Again,
		see Appendix \ref{app_algebra} if the notation is unfamiliar.}
		\begin{displaymath}
			\cl(X,R,r) := \bigoplus_{\mathscr{P}(X)} R,
		\end{displaymath}
		i.e. the free $R$-module generated by $\mathscr{P}(X)$, the set of all subsets of $X$.
		We may use the shorter notation $\cl(X)$, or just $\cl$, when 
		the current choice of $X$, $R$ and $r$ is clear from the context.
		Furthermore, we call $R$ the \emph{scalars} of $\cl$.
	\end{defn}
	
	\begin{exmp}
		A typical element of $\cl(\{x,y,z\},\mathbb{Z},r)$, i.e.
		the Clifford algebra over a set of three elements with integer scalars,
		could for example look like
		\begin{equation}
			5 \varnothing + 3 \{x\} + 2 \{y\} - \{x,y\} + 12 \{x,y,z\}.
		\end{equation}
	\end{exmp}
	
	We have not yet defined a product on $\cl$.
	In addition to being $R$-bilinear and associative, we would like the product to satisfy
	$\{x\}^2 = r(x) \varnothing$ for $x \in X$, 
	$\{x\}\{y\} = -\{y\}\{x\}$ for $x \neq y \in X$ and 
	$\varnothing A = A \varnothing = A$	for all subsets $A \in \mathscr{P}(X)$.
	In order to arrive at such a product we make use of the following

	\begin{lem} \label{lem_finite_tau}
		There exists a map $\tau\!: \mathscr{P}(X) \times \mathscr{P}(X) \to R$ such that
		\begin{displaymath}
		\begin{array}{rl}
			i)  & \tau(\{x\},\{x\}) = r(x) \quad \forall\ x \in X, \\[5pt]
			ii) & \tau(\{x\},\{y\}) = -\tau(\{y\},\{x\}) \quad \forall\ x,y \in X : x \neq y, \\[5pt]
			iii)& \tau(\varnothing,A) = \tau(A,\varnothing) = 1 \quad \forall\ A \in \mathscr{P}(X), \\[5pt]
			iv) & \tau(A,B) \tau(A \symdiff B,C) = \tau(A,B \symdiff C) \tau(B,C) \quad \forall\ A,B,C \in \mathscr{P}(X), \\[5pt]
			v)  & \tau(A,B) \in \{-1,1\} \quad \textrm{if} \quad A \cap B = \varnothing.
		\end{array}
		\end{displaymath}
	\end{lem}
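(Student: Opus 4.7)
The plan is to construct $\tau$ explicitly by first fixing a total order on the finite set $X$, say $X = \{x_1, \ldots, x_n\}$, and then setting
$$\tau(A, B) := \epsilon(A, B) \prod_{x \in A \cap B} r(x),$$
where $\epsilon(A, B) := (-1)^{N(A, B)}$ and $N(A, B) := |\{(a, b) \in A \times B : a > b\}|$. This is motivated by the heuristic of writing a product $e_A e_B$ of ordered generators, reducing it via the anticommutation $e_i e_j = -e_j e_i$ for $i > j$ together with $e_i^2 = r(x_i)$, and counting signs from the required transpositions.

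Properties (i), (ii), (iii), and (v) should fall out directly from the definition. For (i), $N(\{x\},\{x\}) = 0$ and $\{x\} \cap \{x\} = \{x\}$ give $\tau(\{x\},\{x\}) = r(x)$. For (ii), with $x \neq y$ exactly one of $N(\{x\},\{y\})$ and $N(\{y\},\{x\})$ equals $1$, producing opposite signs. For (iii), $A \times \varnothing$ and $A \cap \varnothing$ are empty. For (v), disjointness makes the $r$-product empty, leaving $\tau(A,B) = \pm 1$.

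The real work is in the cocycle identity (iv), and I would split it into a scalar part and a sign part. For the scalar part, I would fix $x \in X$ and run through the eight possible membership patterns in $A, B, C$ to verify that $r(x)$ appears exactly once on each side of (iv) precisely when $x$ lies in at least two of the three sets. For the sign part, the task reduces to showing
$$N(A, B) + N(A \symdiff B, C) \;\equiv\; N(A, B \symdiff C) + N(B, C) \pmod 2.$$
I would fix an ordered pair $(x,y) \in X \times X$ with $x > y$ and, writing $a, b, c, a', b', c' \in \mathbb{F}_2$ for the indicator values of $A, B, C$ at $x$ and $y$, use $1_{A \symdiff B} \equiv a + b \pmod 2$. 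The contribution of $(x,y)$ to the left side of the congruence is then $ab' + (a+b)c'$ and to the right side $a(b' + c') + bc'$; both collapse to $ab' + ac' + bc'$ in $\mathbb{F}_2$. Summing over all such pairs yields (iv). The sign identity is the main technical obstacle, but the pointwise $\mathbb{F}_2$ reduction makes it essentially a one-line check.
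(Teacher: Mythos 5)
Your proof is correct, but it takes a genuinely different route from the paper's. The paper proceeds by induction on $|X|$: given a $\tau'$ on $\mathscr{P}(Y)\times\mathscr{P}(Y)$ with $Y = X \smallsetminus \{z\}$, it extends to $\mathscr{P}(X)\times\mathscr{P}(X)$ via the explicit transition rules $\tau(A,B):=\tau'(A,B)$, $\tau(A\cup\{z\},B):=(-1)^{|B|}\tau'(A,B)$, $\tau(A,B\cup\{z\}):=\tau'(A,B)$, and $\tau(A\cup\{z\},B\cup\{z\}):=r(z)(-1)^{|B|}\tau'(A,B)$, then leaves the verification of (\emph{i})--(\emph{v}) to the reader. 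You instead give a closed formula $\tau(A,B)=(-1)^{N(A,B)}\prod_{x\in A\cap B}r(x)$ in terms of a fixed total order, with $N(A,B)$ counting inversions between $A$ and $B$; this is in fact the same $\tau$ one gets by unwinding the paper's recursion when elements are added in increasing order. Where the arguments differ substantively is on the cocycle identity (\emph{iv}): the paper hides it inside the inductive step, while you localize it, splitting into a scalar part verified pointwise on each $x\in X$ (the exponent of $r(x)$ equals $ab+(a\oplus b)c = a(b\oplus c)+bc$, the indicator that $x$ lies in at least two of $A,B,C$) and a sign part verified pointwise on each ordered pair $x>y$ over $\mathbb{F}_2$, where both sides collapse to $ab'+ac'+bc'$. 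Your version makes the ``product of ordered generators'' heuristic completely explicit and turns (\emph{iv}) into two short local checks, which is a real gain in transparency. The paper's inductive scaffolding has the compensating advantage that it transfers verbatim, via the Hausdorff maximality theorem, to the construction of a grassmannian map on $\mathscr{F}(X)$ for infinite $X$ later in the text, whereas your closed formula would there need a well-ordering and a finiteness check on $N(A,B)$.
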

	\begin{proof}
		We proceed by induction on the cardinality $|X|$ of $X$.
		For $X=\varnothing$ the lemma is trivial, so let $z \in X$
		and assume the lemma holds for $Y := X \!\smallsetminus\! \{z\}$.
		Hence, there is a $\tau'\!: \mathscr{P}(Y) \times \mathscr{P}(Y) \to R$ 
		which has the properties (\emph{i})-(\emph{v}) above. If $A \subseteq Y$ we
		write $A' = A \cup \{z\}$ and, for $A,B$ in $\mathscr{P}(Y)$ we extend
		$\tau'$ to $\tau\!: \mathscr{P}(X) \times \mathscr{P}(X) \to R$ in the
		following way:
		\begin{displaymath}
		\setlength\arraycolsep{2pt}
		\begin{array}{rcl}
			\tau(A,B)	&:=& \tau'(A,B) \\[3pt]
			\tau(A',B)	&:=& (-1)^{|B|} \tau'(A,B) \\[3pt]
			\tau(A,B')	&:=& \tau'(A,B) \\[3pt]
			\tau(A',B') &:=& r(z) (-1)^{|B|} \tau'(A,B)
		\end{array}
		\end{displaymath}
		Now it is straightforward to verify that (\emph{i})-(\emph{v}) holds for
		$\tau$, which completes the proof.
	\end{proof}

	\begin{defn} \label{def_cl_prod}
		Define the \emph{Clifford product}
		\begin{displaymath}
		\setlength\arraycolsep{2pt}
		\begin{array}{ccc}
			\cl(X) \times \cl(X) & \to & \cl(X) \\
			(A,B) & \mapsto & AB
		\end{array}
		\end{displaymath}
		by taking $AB := \tau(A,B) A \symdiff B$ for $A,B \in \mathscr{P}(X)$ and extending linearly.
		We choose to use the $\tau$ which is constructed as in the proof of Lemma \ref{lem_finite_tau} 
		by consecutively adding elements from the set $X$.
		A unique such $\tau$ may only be selected after imposing a certain order (orientation) on the set $X$.
	\end{defn}
	
	Using Lemma \ref{lem_finite_tau} one easily verifies that this product has all
	the properties that we asked for above. For example, in order to verify associativity
	we note that 
	\begin{equation}
		A(BC) = A\big(\tau(B,C) B \symdiff C\big) = \tau(A, B \symdiff C) \tau(B,C) A \symdiff (B \symdiff C),
	\end{equation}
	while 
	\begin{equation}
		(AB)C = \tau(A,B) (A \symdiff B)C = \tau(A,B) \tau(A \symdiff B,C) (A \symdiff B) \symdiff C.
	\end{equation}
	Associativity now follows from (\emph{iv}) and the associativity of the symmetric difference.
	As is expected from the analogy with $\mathcal{G}$, we also have the property that
	different basis elements of $\cl$ commute up to a sign.

	\begin{prop} \label{prop_reverse}
		If $A,B \in \mathscr{P}(X)$ then
		\begin{displaymath}
			AB = (-1)^{\frac{1}{2}|A|(|A|-1) \ + \ \frac{1}{2}|B|(|B|-1) \ + \ \frac{1}{2}|A \bigtriangleup B|(|A \bigtriangleup B|-1)} BA.
		\end{displaymath}
	\end{prop}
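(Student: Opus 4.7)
My plan is to reduce the claim to a swap-counting argument at the level of singleton products. By the recursive construction of $\tau$ in Lemma \ref{lem_finite_tau} — where elements of $X$ are added one at a time in some fixed order — the canonical ordered product
$$\{x_{i_1}\}\{x_{i_2}\}\cdots\{x_{i_p}\}, \qquad i_1 < i_2 < \cdots < i_p,$$
equals the basis element $A = \{x_{i_1},\ldots,x_{i_p}\}$ with sign $+1$, and similarly for $B$. Thus, expanded as singleton products, $AB$ and $BA$ consist of the same $|A|+|B|$ factors, with only the $A$-block and the $B$-block appearing in reversed order.

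Next, I transform the string for $AB$ into the string for $BA$ using $|A|\cdot|B|$ adjacent transpositions, one for each ordered pair in $A\times B$. When the two singletons being exchanged are distinct, the relation $\{x\}\{y\}=-\{y\}\{x\}$ (which follows from parts (\emph{i})--(\emph{v}) of Lemma \ref{lem_finite_tau}) contributes a factor of $-1$. When they coincide, the transposition leaves the string unchanged and contributes $+1$. Coincidences occur in exactly $|A \cap B|$ of the $|A|\cdot|B|$ pairs, one per element shared by $A$ and $B$. Hence
$$AB = (-1)^{|A||B|-|A\cap B|}\,BA.$$

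Finally, I verify that this exponent agrees mod $2$ with the one stated in the proposition. Setting $a=|A|$, $b=|B|$, $c=|A\cap B|$, so that $|A \symdiff B| = a+b-2c$, the desired congruence
$$ab - c \equiv \binom{a}{2} + \binom{b}{2} + \binom{a+b-2c}{2} \pmod{2}$$
follows by a direct expansion, using $\binom{m+n}{2} = \binom{m}{2}+\binom{n}{2}+mn$ together with the observation that $2c(a+b)$ and $2c^{2}$ are even. The only subtle point of the whole argument is correctly accounting for shared elements in the swap count: one must notice that a swap of two identical singletons is a no-op (and hence contributes $+1$, not $-1$) while still being counted among the $|A|\cdot|B|$ pair exchanges needed to separate the two blocks.
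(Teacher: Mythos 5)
Your proof is correct and follows essentially the same route as the paper: reduce to ordered singleton products via the canonical $\tau$ construction, count sign flips from anticommutation while noting that passing an element past an identical copy is a no-op (so the exponent is $|A||B|-|A\cap B|$, which the paper writes equivalently as $(k-m)l+m(l-1)=kl-m$), and finish by checking the mod-$2$ congruence with the binomial-coefficient form using $\binom{m+n}{2}=\binom{m}{2}+\binom{n}{2}+mn$. The only cosmetic difference is that you establish the sign of the canonical ordered singleton product is exactly $+1$, whereas the paper merely invokes property (\emph{v}) to get $\pm 1$, which already suffices since that sign cancels in the ratio $AB/BA$.
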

	\begin{proof}
		By the property (\emph{v}) in Lemma \ref{lem_finite_tau} it is sufficient
		to prove this for $A = \{a_1\} \{a_2\} \ldots \{a_k\}$, $B = \{b_1\} \{b_2\} \ldots \{b_l\}$,
		where $a_i$ are disjoint elements in $X$ and likewise for $b_i$.
		If $A$ and $B$ have $m$ elements in common then 
		$$
			AB = (-1)^{(k-m)l\ +\ m(l-1)}BA = (-1)^{kl-m}BA
		$$
		by property (\emph{ii}).
		But then we are done, since 
		\begin{equation*}
			\frac{1}{2} \big(-k(k-1) - l(l-1) + (k+l-2m)(k+l-2m-1) \big) \equiv kl + m \pmod{2}. 
		\end{equation*}
	\end{proof}
	
	We are now ready to make the formal connection between $\mathcal{G}$ and $\cl$.
	Let $(V,q)$ be a finite-dimensional vector space over $\mathbb{F}$ with a quadratic form.
	Pick an orthogonal basis $E = \{e_1,\ldots,e_n\}$ of $V$ 
	and consider the Clifford algebra $Cl(E,\mathbb{F},q|_E)$.
	Define $f\!: V \to \cl$ by $f(e_i) := \{e_i\}$ for $i=1,\ldots,n$ and extend linearly.
	We then have 
	$$
	\setlength\arraycolsep{2pt}
	\begin{array}{rl}
		f(v)^2 	&= f(\sum_i v_i e_i) f(\sum_j v_j e_j) = \sum_{i,j} v_i v_j f(e_i)f(e_j) \\[10pt]
				&= \sum_{i,j} v_i v_j \{e_i\}\{e_j\} = \sum_i v_i^2 \{e_i\}^2 \\[10pt]
				&= \sum_i v_i^2 q|_E(e_i) \varnothing = \sum_i v_i^2 q(e_i) \varnothing = q(\sum_i v_i e_i) \varnothing = q(v) \varnothing.
	\end{array}
	$$
	By Proposition \ref{prop_universality}, $f$ extends uniquely to a homomorphism 
	$F\!: \mathcal{G} \to \cl$
	which is easily seen to be surjective from the property (\emph{v}).
	Moreover, since $V$ is vector space isomorphic to the linear span of
	the singleton sets $\{e_i\}$, and because $\cl$ solves the same
	universal problem (Proposition \ref{prop_universality}) as $\mathcal{G}(V,q)$,
	we arrive at an $\mathbb{F}$-algebra isomorphism
	\begin{equation} \label{g_isomorphic_to_cl}
		\mathcal{G}(V,q) \cong \cl(E,\mathbb{F},q|_E).
	\end{equation}
	We make this equivalence between $\mathcal{G}$ and $\cl$ even more transparent by 
	suppressing the unit $\varnothing$ in expressions and writing simply $e$ instead
	of $\{e\}$ for singletons $e \in E$. For example, with an orthonormal
	basis $\{e_1,e_2,e_3\}$ in $\mathbb{R}^3$, both $\mathcal{G}$ and $\cl$ are then spanned by
	\begin{equation}
		\{ 1, \ e_1, e_2, e_3, \ e_1 e_2, e_1 e_3, e_2 e_3, \ e_1 e_2 e_3 \}.
	\end{equation}
	
	There is a natural grade structure on $Cl$ given by the cardinality of the subsets of $X$.
	Consider the following

	\begin{defn} \label{def_grade_space}
		The \emph{subspace of $k$-vectors} in $\cl$, or the \emph{grade-$k$ part} of $\cl$, is defined by
		\begin{displaymath}
			\cl^k(X,R,r) := \bigoplus_{A \in \mathscr{P}(X)\ :\ |A| = k} R.
		\end{displaymath}
		Of special importance are the \emph{even} and \emph{odd subspaces},
		\begin{displaymath}
			\cl^{\pm}(X,R,r) := \bigoplus_{k \ \text{is}\ \substack{\text{\tiny even}\\ \text{\tiny odd}}} \cl^k(X,R,r).
		\end{displaymath}
		This notation carries over to the corresponding subspaces of $\mathcal{G}$ 
		and we write $\mathcal{G}^k$, $\mathcal{G}^\pm$ etc. where for example 
		$\mathcal{G}^0 = \mathbb{F}$ and $\mathcal{G}^1 = V$.
		The elements of $\mathcal{G}^2$ are also called \emph{bivectors}, while
		arbitrary elements of $\mathcal{G}$ are conventionally called \emph{multivectors}.
	\end{defn}
	
	\noindent
	We then have a split of $\cl$ into graded subspaces as
	\begin{equation} \label{cl_grades}
	\setlength\arraycolsep{2pt}
	\begin{array}{rcl}
		\cl(X)
			&=& \cl^{+} \!\oplus \cl^{-} \\[5pt]
			&=& \cl^0 \oplus \cl^1 \oplus \cl^2 \oplus \ldots \oplus \cl^{|X|}.
	\end{array}
	\end{equation}
	Note that, under the Clifford product, $\cl^{\pm} \cdot \cl^{\pm} \subseteq \cl^{+}$ 
	and $\cl^{\pm} \cdot \cl^{\mp} \subseteq \cl^{-}$
	(by the properties of the symmetric difference).
	Hence, the even-grade elements $\cl^{+}$ form a subalgebra of $\cl$.

	\begin{rem}	
		Strictly speaking, although the algebra $\cl$ is $\mathbb{Z}_2$-graded
		in this way, it is not $\mathbb{Z}$-graded
		but \emph{filtered} in the sense that
		$$
			\left( \bigoplus_{i \le k} \cl^i \right) \cdot \left( \bigoplus_{j \le l} \cl^j \right) 
			\subseteq \bigoplus_{m \le k+l} \cl^m.
		$$
	\end{rem}
	
	In $\cl(X,R,r)$ we have the possibility of defining a unique pseudoscalar
	independently of the signature $r$, namely the set $X$ itself.
	Note, however, that it can only be normalized if $X^2 = \tau(X,X) \in R$ is invertible,
	which requires that $r$ is nondegenerate
	(i.e. that its values in $R$ are invertible).
	We will almost always talk about pseudoscalars in the setting 
	of nondegenerate vector spaces, so this will not be a problem.

	\begin{exmp} \label{exmp_space_algebra_decomp}
		The space algebra
		$$
			\mathcal{G}(\mathbb{R}^3) = \mathcal{G}^0 \oplus \mathcal{G}^1 \oplus \mathcal{G}^2 \oplus \mathcal{G}^3
		$$
		can be decomposed into scalars, vectors, bivectors and trivectors
		(multiples of the pseudoscalar). Using the observation \eqref{unit_bivector}
		for the relation between a vector and a bivector in the space algebra,
		we write an arbitrary multivector
		$x \in \mathcal{G}(\mathbb{R}^3)$ as
		$$
			x = \alpha + \boldsymbol{a} + \boldsymbol{b}I + \beta I,
		$$
		with $\alpha,\beta \in \mathbb{R}$, $\boldsymbol{a},\boldsymbol{b} \in \mathbb{R}^3$.
		The subalgebra $\mathcal{G}^+(\mathbb{R}^3)$ 
		of scalars and bivectors is actually isomorphic 
		to the quaternion algebra, as is easily verified by taking e.g.
		$i := -e_1I, j := -e_2I, k := -e_3I$.
	\end{exmp}
	
	\begin{exc}
		Prove that the symmetric difference $\ \symdiff\ $ is associative
		(for arbitrary sets -- finite or infinite).
	\end{exc}

	\begin{exc}
		Verify that conditions (\emph{i})-(\emph{v}) hold for $\tau$ 
		constructed in the proof of Lemma \ref{lem_finite_tau}.
	\end{exc}

	\begin{exc}
		Let $\mathcal{G} = \mathcal{G}(\mathbb{R}^3)$,
		$C_0 = \mathcal{G}^0 \oplus \mathcal{G}^3$, and
		$C_1 = \mathcal{G}^1 \oplus \mathcal{G}^2$.
		The complex numbers are as usual denoted by $\mathbb{C}$.
		Find (natural) maps
		\begin{eqnarray*}
			&\alpha: &C_0 \to \mathbb{C} \\
			&\beta:  &C_1 \to \mathbb{C}^3 \\
			&\gamma: &C_0 \times C_1 \to C_1
		\end{eqnarray*}
		such that $\alpha$ is an algebra isomorphism,
		$\beta$ is a bijection, and such that the diagram below commutes
		$$
			\begin{array}{rclcl}
			C_0 & \times & C_1 & \stackrel{\gamma}{\to} & C_1 \\
			\alpha \downarrow & & \downarrow \beta & & \downarrow \beta \\
			\mathbb{C} & \times & \mathbb{C}^3 & \stackrel{\delta}{\to} & \mathbb{C}^3
			\end{array}
		$$
		where $\delta$ is the usual scalar multiplication on $\mathbb{C}^3$, 
		i.e. $$\delta(z,(z_1,z_2,z_3)) = (zz_1,zz_2,zz_3).$$
	\end{exc}

	\begin{rem}
		Hence, we can think of $\mathbb{C}^3$ (as complex vector space)
		as $C_1$ interpreted as a vector space over the field $C_0$.
	\end{rem}

	\begin{exc}
		Verify that there is an $\mathbb{R}$-algebra isomorphism
		$\varphi: \mathcal{G}(\mathbb{R}^4) \to \mathbb{H}^{2 \times 2}$ such that
		$$
			\varphi(e_1) = \left[ \begin{smallmatrix}
			0 & -i  \\[3pt] i & 0
			\end{smallmatrix} \right], \ 
			\varphi(e_2) = \left[ \begin{smallmatrix}
			0 & -j  \\[3pt] j & 0
			\end{smallmatrix} \right], \ 
			\varphi(e_3) = \left[ \begin{smallmatrix}
			0 & -k  \\[3pt] k & 0
			\end{smallmatrix} \right], \ 
			\varphi(e_4) = \left[ \begin{smallmatrix}
			1 & 0  \\[3pt] 0 & -1
			\end{smallmatrix} \right].
		$$
	\end{exc}

\subsection{Standard operations}
	
	A key feature of Clifford algebras is that they contain a surprisingly large
	amount of structure. In order to conveniently access the power of this structure
	we need to introduce powerful notation. 
	Most of the following definitions will be made on $\cl$ for simplicity,
	but because of the equivalence between $\mathcal{G}$ and $\cl$ they
	carry over to $\mathcal{G}$ in a straightforward manner.
	The amount of new notation might seem heavy, and we therefore
	recommend the reader not to get stuck trying to learn everything at once,
	but to come back to this subsection later as a reference when the notation is needed.
	
	We will find it convenient to introduce the notation that for any proposition $P$, 
	$(P)$ will denote the number $1$ if $P$ is true and $0$ if $P$ is false.
	
	\begin{defn} \label{def_operations}
		For $A,B \in \mathscr{P}(X)$ we define
		\begin{displaymath}
		\setlength\arraycolsep{2pt}
		\begin{array}{ccll}
			A \wedge B &:=& (A \cap B = \varnothing) \thinspace AB	& \quad\textrm{\emph{outer product}} \\[5pt]
			A \liprod B &:=& (A \subseteq B) \thinspace AB			& \quad\textrm{\emph{left inner product}} \\[5pt]
			A \riprod B &:=& (A \supseteq B) \thinspace AB			& \quad\textrm{\emph{right inner product}} \\[5pt]
			A * B &:=& (A = B) \thinspace AB						& \quad\textrm{\emph{scalar product}} \\[5pt]
			\langle A \rangle_k &:=& (|A|=k) \thinspace A			& \quad\textrm{\emph{projection on grade $k$}} \\[5pt]
			A^\star &:=& (-1)^{|A|} \thinspace A					& \quad\textrm{\emph{grade involution}} \\[3pt]
			A^\dagger &:=& (-1)^{\binom{|A|}{2}} \thinspace A		& \quad\textrm{\emph{reversion}}
		\end{array}
		\end{displaymath}
		and extend linearly to $\cl(X,R,r)$.
	\end{defn}
	
	\noindent
	The grade involution is also called the (\emph{first}) \emph{main involution}. It has the property
	\begin{equation} \label{grade_inv_property}
		(xy)^\star = x^\star y^\star, \quad v^\star = -v
	\end{equation}
	for all $x,y \in Cl(X)$ and $v \in Cl^1(X)$, as is easily verified 
	by expanding in linear combinations of elements in $\mathscr{P}(X)$
	and using that $|A \symdiff B| \equiv |A| + |B|\ (\textrm{mod}\ 2)$.
	The reversion earns its name from the property
	\begin{equation} \label{reversion_property}
		(xy)^\dagger = y^\dagger x^\dagger, \quad v^\dagger = v,
	\end{equation}
	and it is sometimes called the \emph{second main involution} or the \emph{principal antiautomorphism}.
	This reversing behaviour follows directly from Proposition \ref{prop_reverse}.
	We will find it convenient to have a name for the composition of these two involutions.
	Hence, we define the \emph{Clifford conjugate} $x^\cliffconj$ of $x \in Cl(X)$ by $x^\cliffconj := x^{\star\dagger}$
	and observe the property
	\begin{equation} \label{cliffconj_property}
		(xy)^\cliffconj = y^\cliffconj x^\cliffconj, \quad v^\cliffconj = -v.
	\end{equation}
	Note that all the above involutions act by changing sign on some of the graded subspaces.
	We summarize the action of these involutions in Table \ref{table_involutions}.
	Note the periodicity.
	
	\begin{table}[ht]
		\begin{displaymath}
		\begin{array}{c|cccccccc}
				& \cl^0 & \cl^1 & \cl^2 & \cl^3 & \cl^4 & \cl^5 & \cl^6 & \cl^7 \\
			\hline
			\\[-1.5ex]
			\star				& + & - & + & - & + & - & + & - \\
			\dagger				& + & + & - & - & + & + & - & - \\
			\cliffconj			& + & - & - & + & + & - & - & + \\
		\end{array}
		\end{displaymath}
		\caption{The action of involutions on graded subspaces of $\cl$. \label{table_involutions}}
	\end{table}

	\begin{exmp}
		If $x \in \mathcal{G}(\mathbb{R}^2)$ and $x^\star = -x$,
		then $x$ has to be a vector since those are the only odd 
		elements in this case.
		Similarly, we see that if $x$ is a multivector in 
		$\mathcal{G}(\mathbb{R}^3)$ then the expression $x^\dagger x$ 
		cannot contain any bivector or trivector parts since
		the expression is self-reversing, i.e.
		$(x^\dagger x)^\dagger = x^\dagger x^{\dagger\dagger} = x^\dagger x$.
	\end{exmp}
	
	\begin{exmp}
		In the case $\mathcal{G}(\mathbb{R}^{0,1}) \cong \mathbb{C}$ we see that
		the grade involution (or Clifford conjugate) corresponds to the complex conjugate,
		while in the quaternion algebra $\mathcal{G}(\mathbb{R}^{0,2}) \cong \mathbb{H}$
		(see Example \ref{exmp_quaternions}), 
		the Clifford conjugate corresponds to the quaternion conjugate
		since it changes sign on all non-scalar (i.e. imaginary) grades,
		$$
			(\alpha + \beta i + \gamma j + \delta k)^\cliffconj = \alpha - \beta i - \gamma j - \delta k.
		$$
	\end{exmp}
	
	The scalar product has the symmetric property $x * y = y * x$ for all $x,y \in \cl$.
	Therefore, it forms a symmetric bilinear map $\cl \times \cl \to R$
	which is seen to be degenerate if and only if $\cl$ 
	(i.e. the signature $r$) is degenerate.
	Note that this map coincides with the bilinear form $\beta_q$ 
	when restricted to $V = \mathcal{G}^1(V,q)$,
	and that subspaces of different grade are orthogonal with respect to the scalar product.
	
	The following identities relating the inner, outer, and scalar products\footnote{Another
	product that is often seen in the context of geometric algebra is
	the \emph{inner product}, defined by 
	$$
		A \iprod B := (A \subseteq B \ \textrm{or} \ A \supseteq B)\thinspace AB = A \liprod B + A \riprod B - A * B.
	$$
	We will stick to the left and right inner products, however, 
	because they admit a simpler handling of grades. 
	For example, the corresponding identities in Proposition \ref{prop_trix} 
	would need to be supplied with grade restrictions.
	Also beware that the meaning of the symbols $\liprod$ and $\riprod$
	is sometimes reversed in the literature.}
	will turn out to be extremely useful:

	\begin{prop} \label{prop_trix}
		For all $x,y,z \in \cl(X)$ we have
		$$
			\setlength\arraycolsep{2pt}
			\begin{array}{ccc}
				x \wedge (y \wedge z) &=& (x \wedge y) \wedge z, \\[2pt]
				x \liprod (y \riprod z) &=& (x \liprod y) \riprod z, \\[2pt]
				x \liprod (y \liprod z) &=& (x \wedge y) \liprod z, \\[2pt]
				x * (y \liprod z) &=& (x \wedge y) * z,
			\end{array}
		$$
		and
		$$
			1 \wedge x = x \wedge 1 = 1 \liprod x = x \riprod 1 = x.
		$$
	\end{prop}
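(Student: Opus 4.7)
The plan is to reduce all five identities to statements about basis elements and then grind out the combinatorics. By the $R$-linearity built into Definition \ref{def_operations}, it suffices to verify each identity when $x = A$, $y = B$, $z = C$ for subsets $A, B, C \in \mathscr{P}(X)$. For such basis elements, each product unfolds via $AB = \tau(A,B)\, A \symdiff B$ into a Boolean indicator, a $\tau$-factor, and the triple symmetric difference $A \symdiff B \symdiff C$, which is unambiguous by associativity of $\symdiff$. The proof of each identity then splits cleanly into (a) matching the Boolean indicator functions and (b) matching the products of $\tau$-values on the two sides.

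The $\tau$-coefficients are handled uniformly by property (\emph{iv}) of Lemma \ref{lem_finite_tau}, which is tailor-made for this purpose: the identity $\tau(A,B)\tau(A \symdiff B, C) = \tau(A, B \symdiff C)\tau(B, C)$ immediately equates the coefficients arising on the two sides of each of the three ``associativity'' formulas $\wedge{-}\wedge$, $\liprod{-}\riprod$, and $\liprod{-}\liprod$. For the scalar-product identity $x * (y \liprod z) = (x \wedge y) * z$ a small extra step is needed: the factor $(A = A')\tau(A,A)$ coming from $*$ forces $A = B \symdiff C$, equivalently $C = A \symdiff B$, and plugging this constraint into (\emph{iv}) yields $\tau(A, B)\tau(C, C) = \tau(A, A)\tau(B, C)$, which is exactly the required coefficient match.

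The Boolean conditions match by elementary set-theoretic manipulations. For $A \wedge (B \wedge C) = (A \wedge B) \wedge C$, both sides are nonzero iff $A, B, C$ are pairwise disjoint. For $A \liprod (B \riprod C) = (A \liprod B) \riprod C$, both sides reduce to the condition that $A$ and $C$ are disjoint subsets of $B$ (on the left: $C \subseteq B$ and $A \subseteq B \symdiff C = B \smallsetminus C$; on the right: $A \subseteq B$ and $C \subseteq A \symdiff B = B \smallsetminus A$). For $A \liprod (B \liprod C) = (A \wedge B) \liprod C$, both reduce to $A$ and $B$ being disjoint subsets of $C$. For the scalar-product identity, the pair of conditions ($B \subseteq C$ together with $A = B \symdiff C$) is equivalent to the pair ($A \cap B = \varnothing$ together with $C = A \symdiff B$). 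The unit formulas $1 \wedge x = x \wedge 1 = 1 \liprod x = x \riprod 1 = x$ follow directly from properties (\emph{iii}) and (\emph{v}) of Lemma \ref{lem_finite_tau}, using that $\varnothing \cap A = \varnothing$ and $\varnothing \subseteq A$ are always true.

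The only real obstacle in this plan is notational: keeping the three ingredients (the Boolean indicator, the $\tau$-factor, and the symmetric-difference set) properly bookkept on both sides of each equation. The scalar-product identity is marginally the trickiest because it requires substituting the equality $A = B \symdiff C$ back into property (\emph{iv}) of Lemma \ref{lem_finite_tau} before the coefficients visibly agree, while the other three non-trivial identities are direct applications of (\emph{iv}) read off line by line.
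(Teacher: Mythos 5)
Your proposal is correct and follows essentially the same route as the paper: reduce by linearity to basis elements $A,B,C \in \mathscr{P}(X)$ and then check the combinatorics directly. The one stylistic difference is that you unfold each product into a Boolean indicator, a $\tau$-factor, and a symmetric-difference set, and then verify the two kinds of data separately, invoking property~(\emph{iv}) of Lemma~\ref{lem_finite_tau} to equate the $\tau$-coefficients on the two sides. The paper sidesteps this second check entirely by writing both sides as $(\text{indicator})\cdot ABC$, where $ABC$ is the Clifford product of the three sets; since associativity of the Clifford product has already been established (precisely via property~(\emph{iv})), the string $ABC$ is unambiguous and all the $\tau$-bookkeeping is already absorbed into it, leaving only the matching of indicator functions. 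Your version makes the role of~(\emph{iv}) more visible, at the cost of a bit more writing; the paper's version is shorter because it leverages the associativity already proved. Both are sound, and your set-theoretic verifications of the Boolean conditions — in particular the observation that $B\subseteq C$ and $A = B\symdiff C$ is equivalent to $A\cap B=\varnothing$ and $C = A\symdiff B$ for the scalar-product identity — are exactly the ``basic set logic'' the paper alludes to.
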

	\begin{proof}
		This follows directly from Definition \ref{def_operations} and basic set logic.
		For example, taking $A,B,C \in \mathscr{P}(X)$ we have
		(consider drawing a Venn diagram)
		$$
		\setlength\arraycolsep{2pt}
		\begin{array}{rl}
			A \liprod (B \liprod C)
				&= (B \subseteq C)(A \subseteq B \symdiff C)ABC \\[5pt]
				&= (B \subseteq C \ \textrm{and} \ A \subseteq C \!\smallsetminus\! B)ABC \\[5pt]
				&= (A \cap B = \varnothing \ \textrm{and} \ A \cup B \subseteq C)ABC \\[5pt]
				&= (A \cap B = \varnothing)(A \symdiff B \subseteq C)ABC \\[5pt]
				&= (A \wedge B) \liprod C.
		\end{array}
		$$
		The other identities are proven in an equally simple way.
	\end{proof}
	
	\noindent
	Note that the first identity in the above proposition states that
	the wedge product $\wedge$ is associative.
	It is not difficult to see that the algebra $(\mathcal{G},\wedge)$ 
	with this product is isomorphic (as graded algebras) 
	to the exterior algebra $\wedge^* V$.
	
	In order to be able to work 
	efficiently with Clifford algebras it is crucial to understand
	how vectors behave under these operations.
	
	\begin{prop} \label{prop_trix2}
		For all $x,y \in \cl(X)$ and $v \in \cl^1(X)$ we have
		\begin{displaymath}
		\setlength\arraycolsep{2pt}
		\begin{array}{rcl}
			vx				&=& v \liprod x + v \wedge x, \\[3pt]
			v \liprod x		&=& \frac{1}{2}(vx - x^\star v) = - x^\star \!\riprod v, \\[3pt]
			v \wedge x		&=& \frac{1}{2}(vx + x^\star v) = \phantom{-} x^\star \!\wedge v, \\[3pt]
			v \liprod (xy)	&=& (v \liprod x)y + x^\star (v \liprod y).
		\end{array}
		\end{displaymath}
	\end{prop}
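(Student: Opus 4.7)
The plan is to reduce everything to the basis case and check by a short case analysis, then derive the remaining identities algebraically. By $R$-bilinearity of all operations involved, it suffices to verify each identity when $v = \{e\}$ for some $e \in X$ and when $x, y$ are single basis blades $A, B \in \mathscr{P}(X)$.

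First I would dispatch $vx = v \liprod x + v \wedge x$ by a direct case split using Definition \ref{def_operations}: when $e \in A$ one has $\{e\} \liprod A = \{e\}A$ and $\{e\} \wedge A = 0$, whereas when $e \notin A$ one has $\{e\} \liprod A = 0$ and $\{e\} \wedge A = \{e\}A$. In both cases the right-hand side collapses to $\{e\}A = vx$.

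The central computation is for the second and third identities, where I need to commute $v$ past $A$. Applying Proposition \ref{prop_reverse} with $B = \{e\}$ and simplifying the exponent modulo $2$ yields $A\{e\} = (-1)^{|A|-1}\{e\}A$ when $e \in A$ and $A\{e\} = (-1)^{|A|}\{e\}A$ when $e \notin A$. Combined with $A^\star = (-1)^{|A|}A$, this gives $x^\star v = -vx$ in the first case and $x^\star v = +vx$ in the second, so $\frac{1}{2}(vx - x^\star v)$ equals $vx$ exactly when $e \in A$, matching $v \liprod x$; the companion formula $v \wedge x = \frac{1}{2}(vx + x^\star v)$ is proved symmetrically. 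The alternative expressions $v \liprod x = -x^\star \riprod v$ and $v \wedge x = x^\star \wedge v$ follow from the same case analysis applied to the definitions of $\riprod$ and $\wedge$, using the very same commutation relation.

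Finally, the Leibniz-type rule $v \liprod (xy) = (v \liprod x)y + x^\star(v \liprod y)$ is a purely algebraic consequence of the second identity together with the multiplicativity $(xy)^\star = x^\star y^\star$ from \eqref{grade_inv_property}: substituting $2(v \liprod w) = vw - w^\star v$ into the right-hand side produces the telescoping sum
\begin{displaymath}
\tfrac{1}{2}(vxy - x^\star vy) + \tfrac{1}{2}(x^\star vy - x^\star y^\star v) = \tfrac{1}{2}\bigl(vxy - (xy)^\star v\bigr) = v \liprod (xy).
\end{displaymath}
The only real obstacle is the careful bookkeeping of signs in the case analysis for the second identity; once that is pinned down, the other statements fall out mechanically.
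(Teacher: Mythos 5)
Your proof is correct and follows the same route the paper indicates: the paper merely states that the first three identities follow by linearity and set relations while the fourth follows from the second, and your reduction to basis blades via Proposition \ref{prop_reverse} together with the telescoping computation is precisely the intended fleshing-out. The sign bookkeeping in the two cases $e \in A$ and $e \notin A$ checks out, and the use of $(xy)^\star = x^\star y^\star$ in the Leibniz rule is exactly the right ingredient.
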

	
	\noindent
	The first three identities are shown simply by using linearity and set relations,
	while the fourth follows immediately from the second.
	
	\begin{exmp}
		For 1-vectors $u,v \in \cl^1$ we have the basic relations
		\begin{equation}
			uv = \langle uv \rangle_0 + \langle uv \rangle_2 = u * v + u \wedge v,
		\end{equation}
		\begin{equation}
			u \liprod v = v \liprod u = u \riprod v = u * v = \frac{1}{2}(uv + vu),
		\end{equation}
		and
		\begin{equation} \label{vector_wedge}
			u \wedge v = - v \wedge u  = \frac{1}{2}(uv - vu),
		\end{equation}
		while for bivectors $B \in \cl^2$ and $v \in \cl^1$ we have
		\begin{equation}
			vB = \langle vB \rangle_1 + \langle vB \rangle_3 = v \liprod B + v \wedge B,
		\end{equation}
		\begin{equation}
			v \liprod B = \frac{1}{2}(vB - Bv) = -B \riprod v,
		\end{equation}
		\begin{equation}
			v \wedge B = \frac{1}{2}(vB + Bv) = B \wedge v,
		\end{equation}
		and $v \riprod B = B \liprod v = v * B = 0$. 
		In the special case that $B = u \wedge w$ 
		is an outer product of two vectors $u,w$ (i.e. a 2-blade),
		we find from taking the vector part of the the last identity in
		Proposition \eqref{prop_trix2} the useful expansion
		\begin{equation} \label{vec_2-blade_expansion}
			v \liprod (u \wedge w) = (v * u)w - (v*w)u.
		\end{equation}
		This expansion is generalized in Exercise \ref{exc_anti_derivation}.
	\end{exmp}

	It is sometimes useful to expand the various products and involutions 
	in terms of the grades involved.
	The following identities are left as exercises.
	
	\begin{prop} \label{prop_grade_def}
		For all $x,y \in Cl(X)$ we have
		\begin{displaymath}
		\setlength\arraycolsep{2pt}
		\begin{array}{ccl}
			x \wedge y		&=& \sum_{n,m \geq 0} \big\langle \langle x \rangle_n \langle y \rangle_m \big\rangle_{n+m}, \\[5pt]
			x \liprod y		&=& \sum_{0 \leq n \leq m} \big\langle \langle x \rangle_n \langle y \rangle_m \big\rangle_{m-n}, \\[5pt]
			x \riprod y		&=& \sum_{n \geq m \geq 0} \big\langle \langle x \rangle_n \langle y \rangle_m \big\rangle_{n-m}, \\[5pt]
			x \iprod y		&=& \sum_{n,m \geq 0} \big\langle \langle x \rangle_n \langle y \rangle_m \big\rangle_{|n-m|}, \\[5pt]
			x * y			&=& \langle xy \rangle_0, \\[5pt]
			x^\star			&=& \sum_{n \geq 0} (-1)^n \langle x \rangle_n, \\[3pt]
			x^\dagger		&=& \sum_{n \geq 0} (-1)^{\binom{n}{2}} \langle x \rangle_n.
		\end{array}
		\end{displaymath}
	\end{prop}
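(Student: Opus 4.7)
The plan is to reduce every identity to a verification on basis elements $A, B \in \mathscr{P}(X)$ by bilinearity (for the product identities) and linearity (for the involutions). Since $\langle A \rangle_n = A$ when $n = |A|$ and $0$ otherwise, each of the double sums on the right-hand sides collapses to at most a single term, corresponding to $n = |A|$ and $m = |B|$. The proposition then reduces to comparing the grade of $\tau(A,B)\,A \symdiff B$ against the grade projection imposed on the right-hand side in each case.

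The key observation is that for basis elements, the Clifford product $AB = \tau(A,B)\,A \symdiff B$ is, up to sign, a single basis element of grade exactly $|A \symdiff B| = |A| + |B| - 2|A \cap B|$. In particular, this grade equals $|A| + |B|$ precisely when $A \cap B = \varnothing$; it equals $|B| - |A|$ precisely when $A \subseteq B$; it equals $|A| - |B|$ precisely when $A \supseteq B$; and it equals $0$ precisely when $A = B$. These four conditions are exactly the indicator functions appearing in Definition \ref{def_operations} for $\wedge$, $\liprod$, $\riprod$, and $*$ respectively, so each of the first three product formulas together with the scalar-product formula follows by matching the unique potentially nonzero grade component of $AB$ against the corresponding indicator condition.

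The inner-product identity for $\iprod$ is then immediate from the footnote definition $A \iprod B = A \liprod B + A \riprod B - A * B$ by combining the three cases already handled, noting that when $A$ and $B$ are nested the grade of $AB$ is $||A|-|B||$. Finally, the identities for $x^\star$ and $x^\dagger$ follow directly from their definitions $A^\star := (-1)^{|A|} A$ and $A^\dagger := (-1)^{\binom{|A|}{2}} A$ together with the canonical expansion $x = \sum_{n \geq 0} \langle x \rangle_n$. No step is a real obstacle; the entire argument is a bookkeeping exercise in set logic and grade counting, the only delicate point being to verify that in each case the indicator in Definition \ref{def_operations} coincides exactly with the grade condition $|A \symdiff B| = k$ selected by $\langle \cdot \rangle_k$.
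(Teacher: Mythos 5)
Your proof is correct, and it takes the approach the paper clearly intends: the paper leaves this proposition as an exercise and proves the closely related Proposition~\ref{prop_trix} by exactly the same device of reducing to basis elements $A,B \in \mathscr{P}(X)$ and comparing indicator functions via set logic. The pivotal arithmetic --- that $|A \symdiff B| = |A| + |B| - 2|A\cap B|$, and hence that this grade equals $|A|+|B|$, $|B|-|A|$, $|A|-|B|$, and $0$ exactly under the respective conditions $A\cap B=\varnothing$, $A\subseteq B$, $A\supseteq B$, and $A=B$ --- is stated and used correctly, and matches the indicator functions in Definition~\ref{def_operations}. You have also implicitly handled the case $\tau(A,B)=0$ (degenerate signature) correctly, since both sides of each identity are multiples of $AB$ and therefore agree trivially when $AB=0$.
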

	
	In the general setting of a Clifford algebra with scalars in a ring $R$, we need
	to be careful about the notion of linear (in-)dependence. A subset
	$\{x_1,x_2,\ldots,x_m\}$ of $\cl$ is called \emph{linearly dependent} if 
	there exist $r_1,\ldots,r_m \in R$, not all zero, such that
	\begin{equation}
		r_1 x_1 + r_2 x_2 + \ldots + r_m x_m = 0.
	\end{equation}
	Note that a single nonzero 1-vector could be linearly dependent in this context
	(e.g. with $R = \mathbb{Z}_4$, $x_1 = 2e_1$ and $r_1 = 2$).
	We will prove an important theorem concerning linear dependence where we use the following
	
	\begin{lem} \label{lem_blade_det}
		If $u_1,u_2,\ldots,u_k$ and $v_1,v_2,\ldots,v_k$ are 1-vectors then
		\begin{displaymath}
			(u_1 \wedge u_2 \wedge \cdots \wedge u_k) * (v_k \wedge v_{k-1} \wedge \cdots \wedge v_1)
				= \det\thinspace \left[ \begin{array}{ccc} 
					u_1*v_1 & \cdots & u_1*v_k \\ 
					\vdots & & \vdots \\
					u_k*v_1 & \cdots & u_k*v_k
				\end{array} \right].
		\end{displaymath}
	\end{lem}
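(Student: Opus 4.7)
The plan is to prove this by induction on $k$, using two key pieces of machinery from Proposition \ref{prop_trix} and Proposition \ref{prop_trix2}: the adjunction identity $x * (y \liprod z) = (x \wedge y) * z$, and an iterated version of the anti-derivation rule $v \liprod (xy) = (v \liprod x)y + x^\star(v \liprod y)$ specialized to outer products (this is the expansion alluded to in Exercise \ref{exc_anti_derivation}). The base case $k=1$ is just the definition of the scalar product on vectors.

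For the inductive step, write $U = u_1 \wedge \cdots \wedge u_k$ and $V = v_k \wedge \cdots \wedge v_1$. First I would peel off $u_k$ via the adjunction identity,
\begin{equation*}
U * V = \bigl( (u_1 \wedge \cdots \wedge u_{k-1}) \wedge u_k \bigr) * V = (u_1 \wedge \cdots \wedge u_{k-1}) * (u_k \liprod V).
\end{equation*}
Then I would expand $u_k \liprod V$ by iterating the anti-derivation rule on the wedge product $v_k \wedge v_{k-1} \wedge \cdots \wedge v_1$, picking up one term for each factor that $u_k$ contracts against and an alternating sign as it passes the preceding vectors, obtaining
\begin{equation*}
u_k \liprod V = \sum_{j=1}^{k} (-1)^{k-j} (u_k * v_j)\, v_k \wedge \cdots \wedge \widehat{v_j} \wedge \cdots \wedge v_1,
\end{equation*}
where $\widehat{v_j}$ denotes omission.

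Substituting and applying the inductive hypothesis to each of the resulting scalar products of a wedge of $k-1$ $u$'s with a reversed wedge of $k-1$ $v$'s (one with $v_j$ omitted), each summand becomes $(u_k * v_j)$ times the determinant of the $(k,j)$ minor of the matrix $M = [u_i * v_j]$. Since $(-1)^{k-j} = (-1)^{k+j}$, the result is precisely the Laplace cofactor expansion of $\det M$ along the last row, which completes the induction.

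The routine part is the adjunction step; the main obstacle is bookkeeping the signs in the expansion of $u_k \liprod V$. One must carefully track that contracting $u_k$ past the $k-j$ factors $v_k, v_{k-1}, \ldots, v_{j+1}$ (each a vector, for which $x^\star = -x$) produces exactly the sign $(-1)^{k-j}$ needed to match the cofactor sign $(-1)^{k+j}$. Once that combinatorial accounting is done, everything else is formal.
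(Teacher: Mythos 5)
Your proof is correct, but it takes a genuinely different route from the paper's. The paper's argument is a one-liner once the right observation is made: both sides of the identity are multilinear and alternating in the $u_i$'s and in the $v_j$'s separately, so it suffices to evaluate them on ordered, pairwise-distinct orthogonal basis vectors. In that case the left side reduces by direct computation to $r(e_{i_1}) \cdots r(e_{i_k})$, and the right side to the determinant of the diagonal matrix $[\thinspace r(e_{i_p})\delta_{p,q}\thinspace]$, which visibly agree. Your induction on $k$, peeling $u_k$ off with the adjunction $x * (y \liprod z) = (x \wedge y) * z$, expanding $u_k \liprod V$ via the anti-derivation formula \eqref{vec_blade_expansion}, and recognizing the Laplace expansion of $\det M$ along the bottom row, is also sound; the sign bookkeeping checks out, since $(-1)^{k-j}$ matches the cofactor sign $(-1)^{k+j}$ exactly as you say, and the omitted-$v_j$ wedge is already in the reversed order the inductive hypothesis requires. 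The trade-off is that the paper's method front-loads the work into the observation that both sides are alternating multilinear forms (after which the verification is mechanical), while yours front-loads it into the combinatorics of the contraction formula and the cofactor expansion of the determinant; in effect your argument reproves, term by term, the alternating multilinearity that the paper exploits wholesale. Both are legitimate; the paper's is shorter, but yours makes the connection to the Laplace expansion explicit, which has some pedagogical value.
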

	\begin{proof}
		Since both sides of the expression are multilinear and alternating in both
		the $u_i$:s and the $v_i$:s, we need only consider ordered disjoint elements 
		$\{e_i\}$ in the basis of singleton sets in $X$.
		Both sides are zero, except in the case
		$$
		\setlength\arraycolsep{2pt}
		\begin{array}{l}
			(e_{i_1} e_{i_2} \ldots e_{i_k}) * (e_{i_k} e_{i_{k-1}} \ldots e_{i_1}) = \\[5pt]
			\qquad = r(e_{i_1}) r(e_{i_2}) \ldots r(e_{i_k}) = \det\thinspace \big[ r(e_{i_p}) \delta_{p,q} \big]_{1 \leq p,q \leq k} \\[5pt]
			\qquad = \det\thinspace \big[ e_{i_p}*e_{i_q} \big]_{1 \leq p,q \leq k},
		\end{array}
		$$
		so we are done.
	\end{proof}
	
	\begin{thm} \label{thm_linear_indep}
		The 1-vectors $\{x_1,x_2,\ldots,x_m\}$ are linearly independent if and only if
		the m-vector $\{x_1 \wedge x_2 \wedge \cdots \wedge x_m\}$ is linearly independent.
	\end{thm}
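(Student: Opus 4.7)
The plan is to prove the two implications of the biconditional separately. The implication ``$\{x_i\}$ linearly dependent $\Rightarrow$ $x_1 \wedge \cdots \wedge x_m$ linearly dependent'' is a one-line consequence of the alternating property of $\wedge$; its converse is harder and reduces to a Cramer-style claim about minors.

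\emph{The easy implication.} Suppose $r_1, \ldots, r_m \in R$ are not all zero with $\sum_i r_i x_i = 0$, and fix $i_0$ so that $r_{i_0} \neq 0$. Substitute this relation into the $i_0$-th slot of the wedge; by multilinearity,
\begin{displaymath}
0 = \sum_{i=1}^m r_i\, \bigl(x_1 \wedge \cdots \wedge x_{i_0 - 1} \wedge x_i \wedge x_{i_0 + 1} \wedge \cdots \wedge x_m\bigr).
\end{displaymath}
For every $i \neq i_0$ the vector $x_i$ appears twice and the summand vanishes (since $v \wedge v = 0$, which is immediate from Definition~\ref{def_operations}). The surviving term is $r_{i_0}(x_1 \wedge \cdots \wedge x_m)$, so the wedge is linearly dependent.

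\emph{The converse implication.} Expand each $x_i = \sum_{e \in X} a_{i,e} \{e\}$ in the singleton basis of $\cl^1$ and write $\hat{e}_S$ for the ordered wedge of the singletons in a subset $S \subseteq X$. Then
\begin{displaymath}
x_1 \wedge \cdots \wedge x_m = \sum_{S \subseteq X,\ |S| = m} M_S\, \hat{e}_S,
\end{displaymath}
where $M_S$ is the $m \times m$ minor of the coefficient matrix $A := [a_{i,e}]$ on columns $S$. Since $\{\hat{e}_S\}$ is a free $R$-generating set of $\cl^m$, linear dependence of the wedge is equivalent to the existence of a nonzero $r \in R$ annihilating every $M_S$. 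The theorem thus reduces to the McCoy-type claim: \emph{any nonzero common annihilator of the $m \times m$ minors of an $m \times |X|$ matrix $A$ forces the rows of $A$ to be $R$-linearly dependent.}

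I would prove this claim by induction on $m$. For $m = 1$, $r a_{1,e} = 0$ for every $e$ is exactly the relation $r x_1 = 0$. For $m \geq 2$, split on whether $r$ also annihilates every $(m-1) \times (m-1)$ minor of $A$: if it does, the induction hypothesis applied to the submatrix on the first $m - 1$ rows yields a dependence among $x_1, \ldots, x_{m-1}$, hence a dependence among all the $x_i$; otherwise, pick $T \subset X$ of size $m - 1$ and $i_0 \in \{1, \ldots, m\}$ so that the signed $(m - 1)$-minor $c_{i_0}$ (obtained by deleting row $i_0$ and keeping columns $T$) satisfies $r c_{i_0} \neq 0$. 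Defining $c_i$ analogously for each $i$, Laplace expansion along an adjoined column gives $\sum_i c_i a_{i,e} = M_{T \cup \{e\}}$ for $e \notin T$ and $\sum_i c_i a_{i,e} = 0$ for $e \in T$; multiplying through by $r$ produces the nontrivial dependence $\sum_i (r c_i) x_i = 0$. The main obstacle is precisely this inductive McCoy/Cramer step --- keeping the cofactor signs straight and verifying that the two subcases close up cleanly over a general commutative ring --- whereas the other direction takes a single line.
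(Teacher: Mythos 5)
Your proof is correct. The easy direction is exactly the paper's, and for the converse you and the paper share the same key reduction: expand the wedge in the basis of singleton blades and observe that $r(x_1 \wedge \cdots \wedge x_m) = 0$ is equivalent to $r$ annihilating every $m \times m$ minor of the coefficient matrix. Where you diverge is in how this minor condition is then converted into an $R$-linear relation among the $x_i$. The paper absorbs $r$ into the first column of the matrix so that all $m\times m$ minors literally vanish, concludes the matrix has rank $<m$ in the ring-theoretic sense, and invokes the Basis Minor theorem (Theorem \ref{thm_basis_minor}) from the appendix to produce the dependence $r_1 r x_1 + r_2 x_2 + \cdots + r_k x_k + d\, x_m = 0$. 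You instead keep $r$ outside the matrix and prove the needed McCoy-style lemma from scratch by induction on $m$: if $r$ also kills all $(m-1)\times(m-1)$ minors, recurse on the submatrix of the first $m-1$ rows; otherwise pick a cofactor $c_{i_0}$ on columns $T$ with $r\, c_{i_0} \neq 0$ and use the Laplace/alien-cofactor identity, $\sum_i c_i a_{i,e} = \pm M_{T \cup \{e\}}$ for $e \notin T$ and $\sum_i c_i a_{i,e} = 0$ for $e \in T$, to obtain the nontrivial relation $\sum_i (r c_i) x_i = 0$. The two are essentially the same idea with different bookkeeping: your induction re-derives the basis-minor mechanism inline, making the proof self-contained, while the paper's column-scaling trick lets it dispatch the converse by citing a general-purpose appendix lemma whose proof (a single Laplace expansion) is shorter than your induction but relies on the rank notion being set up in advance.
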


	\begin{proof}[*Proof]
		Assume that $r_1 x_1 + \ldots + r_m x_m = 0$, where, say, $r_1 \neq 0$.
		Then
		$$
		\setlength\arraycolsep{2pt}
		\begin{array}{l}
			r_1 (x_1 \wedge \cdots \wedge x_m) = (r_1 x_1) \wedge x_2 \wedge \cdots \wedge x_m \\[5pt]
			\qquad	= (r_1 x_1 + \ldots + r_m x_m) \wedge x_2 \wedge \cdots \wedge x_m = 0,
		\end{array}
		$$
		since $x_i \wedge x_i = 0$.
		
		Conversely, assume that $rX = 0$ for
		$r \neq 0$ in $R$ and $X = x_1 \wedge \cdots \wedge x_m$.
		We will use the basis minor theorem for arbitrary rings which
		can be found in the appendix (see Appendix \ref{app_matrix_theorems}).
		Assume that $x_j = x_{1j} e_1 + \ldots + x_{nj} e_n$, where $x_{ij} \in R$
		and $e_i \in X$ are basis elements such that $e_i^2 = 1$. This assumption on
		the signature is no loss in generality, since this theorem only concerns
		the exterior algebra associated to the outer product. It will only serve to
		simplify our reasoning below. Collect the coordinates in a matrix
		$$
			A := \left[
			\begin{array}{cccc}
				rx_{11} & x_{12} & \cdots & x_{1m} \\
				rx_{21} & x_{22} & \cdots & x_{2m} \\
				\vdots  & \vdots & 		  & \vdots \\
				rx_{n1} & x_{n2} & \cdots & x_{nm} \\
			\end{array}
			\right] \in R^{n \times m},\ m \leq n
		$$
		and note that we can expand $rX$ in a grade-$m$ basis as
		$$
			rX = \sum_{E \subseteq X : |E| = m} (rX * E^\dagger) E = \sum_{E \subseteq X : |E| = m} (\det A_{E,\{1,\ldots,m\}}) E,
		$$
		where we used Lemma \ref{lem_blade_det}.
		We find that the determinant of each $m \times m$ minor of $A$ is zero.
		
		Now, let $k$ be the rank of $A$. Then we must have $k<m$, and if $k=0$ then
		$rx_{i1}=0$ and $x_{ij}=0$ for all $i=1,\ldots,n$, $j>1$.
		But that would mean that $\{x_1,\ldots,x_m\}$ are linearly dependent.
		Therefore we assume that $k>0$ and, without loss of generality, that
		$$
			d := \det \left[
			\begin{array}{cccc}
				rx_{11} & x_{12} & \cdots & x_{1k} \\
				\vdots  & \vdots & 		  & \vdots \\
				rx_{k1} & x_{k2} & \cdots & x_{kk} \\
			\end{array}
			\right] \neq 0.
		$$
		By Theorem \ref{thm_basis_minor} (Basis minor) there exist $r_1,\ldots,r_k \in R$
		such that
		$$
			r_1 rx_1 + r_2 x_2 + \ldots + r_k x_k + dx_m = 0.
		$$
		Hence, $\{x_1,\ldots,x_m\}$ are linearly dependent.
	\end{proof}
	
	For our final set of operations,
	we will consider a nondegenerate geometric algebra $\mathcal{G}$ with
	pseudoscalar $I$. The nondegeneracy implies that there exists a natural
	duality between the inner and outer products.
	
	\begin{defn} \label{def_dual}
		We define the \emph{dual} of $x \in \mathcal{G}$ by $x\dual := xI^{-1}$.
		Furthermore, the \emph{dual outer product} or \emph{meet}, denoted $\vee$,
		is defined such that the diagram
		\begin{displaymath}
		\setlength\arraycolsep{2pt}
		\begin{array}{rcccccl}
							& \mathcal{G} 	& \times 	& \mathcal{G} 	& \xrightarrow{\vee} 	& \mathcal{G} \\
			(\cdot)\dual 	& \downarrow 	& 			& \downarrow	&						& \downarrow	& (\cdot)\dual \\
							& \mathcal{G} 	& \times	& \mathcal{G}	& \xrightarrow{\wedge}	& \mathcal{G}
		\end{array}
		\end{displaymath}
		commutes, i.e. $(x \vee y)\dual := x\dual \wedge y\dual$,
		implying $x \vee y = ((xI^{-1}) \wedge (yI^{-1}))I$.
	\end{defn}
	
	\begin{rem}
		In $\cl(X)$, the corresponding dual of $A \in \mathscr{P}(X)$ is 
		$$
			A\dual = AX^{-1} = \tau(X,X)^{-1} \tau(A,X) A \symdiff X \ \propto \ A^c,
		$$
		the complement of the set $A$.
		Hence, we actually find that the dual is a linearized version of a 
		sign (or orientation) -respecting complement. 
		This motivates our choice of notation.
	\end{rem}
	
	\begin{prop} \label{prop_dual}
		For all $x,y \in \mathcal{G}$ we have
		\begin{equation} \label{dual_products}
		\setlength\arraycolsep{2pt}
		\begin{array}{ccc}
			x \liprod y\dual &=& (x \wedge y)\dual, \\
			x \wedge y\dual &=& (x \liprod y)\dual, \\
		\end{array}
		\end{equation}
		and if $x \in \mathcal{G}^k$ then $x\dual \in \mathcal{G}^{\dim V - k}$.
	\end{prop}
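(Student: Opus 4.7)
My overall strategy is to dispatch the three assertions using the associativity identities already proved in Proposition~\ref{prop_trix} together with the fact that nondegeneracy renders $I^{2}$ an invertible scalar, so that $I^{-1}$ is merely a nonzero scalar multiple of $I$. I would begin with the grade statement by passing to the combinatorial picture: a basis element $A \subseteq X$ with $|A|=k$ has dual $AX^{-1}$, which by the remark preceding the proposition is a nonzero scalar multiple of $A \symdiff X$, a subset of cardinality $n-k$; linearity then promotes this to $x\dual \in \mathcal{G}^{n-k}$ for every $x \in \mathcal{G}^k$.

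For the first identity $x \liprod y\dual = (x \wedge y)\dual$, I plan to invoke the associativity relation $x \liprod (y \liprod z) = (x \wedge y) \liprod z$ from Proposition~\ref{prop_trix} with the choice $z := I^{-1}$. The linchpin is then the modest lemma that $w \liprod I^{-1} = w I^{-1}$ for every $w \in \mathcal{G}$. Since $I^{-1}$ is a scalar multiple of $I$, and $I$ corresponds to the set $X$ itself, for any basis element $e_A$ the containment $A \subseteq X$ is automatic, so $e_A \liprod I = e_A I$ straight from the definition of $\liprod$; linearity extends this to all $w$. Substituting turns the displayed associativity identity into exactly $x \liprod y\dual = (x \wedge y)\dual$.

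For the second identity $x \wedge y\dual = (x \liprod y)\dual$, I would apply a double-dual trick: the first identity with $y$ replaced by $y\dual$ gives $x \liprod (y\dual)\dual = (x \wedge y\dual)\dual$. The double dual satisfies $(y\dual)\dual = y I^{-2} = \lambda y$, where $\lambda := I^{-2}$ is a nonzero scalar by nondegeneracy; dualising once more and cancelling $\lambda$ then delivers the claim. The principal subtlety, and the step I would scrutinise most carefully, is precisely the small combinatorial lemma $w \liprod I^{-1} = wI^{-1}$: it is shallow but it is the pivot on which the whole argument hinges, and it must be justified directly from the definition of the left inner product rather than from any of the algebraic identities already at our disposal.
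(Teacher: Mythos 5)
Your proposal is correct and follows essentially the same route as the paper's own proof: both rest on the observation that $w\liprod I^{-1}=wI^{-1}$, apply the associativity identity $x\liprod(y\liprod z)=(x\wedge y)\liprod z$ from Proposition~\ref{prop_trix} with $z=I^{-1}$, and derive the second identity by substituting $y\dual$ for $y$ in the first and using $y^{\mathbf{cc}}=I^{-2}y$ with $I^{-2}$ an invertible scalar. The grade statement is likewise read off in both treatments from the combinatorial description of the dual as a signed complement.
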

	\begin{proof}
		Using Proposition \ref{prop_trix} and the fact that 
		$xI = x \liprod I \ \forall x$, we obtain
		$$
			x \liprod (yI^{-1}) = x \liprod (y \liprod I^{-1}) = (x \wedge y) \liprod I^{-1} = (x \wedge y)I^{-1},
		$$
		and from this follows also the second identity
		$$
			(x \wedge y\dual)I^{-1}I = (x \liprod y^\mathbf{cc})I = (x \liprod y)I^{-2}I.
		$$
		Lastly, the grade statement is obvious from the above definition and remark.
	\end{proof}

	\begin{exmp}
		The cross product in $\mathbb{R}^3$ can be defined as the dual
		of the outer product,
		\begin{equation} \label{cross_product}
			\boldsymbol{a} \times \boldsymbol{b} := (\boldsymbol{a} \wedge \boldsymbol{b})\dual,
		\end{equation}
		for $\boldsymbol{a},\boldsymbol{b} \in \mathbb{R}^3$. 
		From Proposition \ref{prop_dual} and e.g. \eqref{vec_2-blade_expansion} 
		we easily obtain the familiar relations
		$$
			\boldsymbol{a} \times (\boldsymbol{b} \times \boldsymbol{c}) = \big( \boldsymbol{a} \wedge (\boldsymbol{b} \wedge \boldsymbol{c})\dual \big)\dual 
			= \big( \boldsymbol{a} \liprod (\boldsymbol{b} \wedge \boldsymbol{c}) \big)^\mathbf{cc} = -(\boldsymbol{a}*\boldsymbol{b})\boldsymbol{c} + (\boldsymbol{a}*\boldsymbol{c})\boldsymbol{b},
		$$
		and
		$$
			\boldsymbol{a} * (\boldsymbol{b} \times \boldsymbol{c}) = \boldsymbol{a} \liprod (\boldsymbol{b} \wedge \boldsymbol{c})\dual = (\boldsymbol{a} \wedge \boldsymbol{b} \wedge \boldsymbol{c})*I^\dagger,
		$$
		which by Lemma \ref{lem_blade_det} and a choice of basis becomes the
		usual determinant expression.
	\end{exmp}
	
	\begin{rem}
		It is instructive to compare the exterior algebra $(\mathcal{G},\wedge)$ 
		together with this duality operation to the
		language of differential
		forms and the Hodge $*$ duality operation, which are completely equivalent
		(through the isomorphism $(\mathcal{G},\wedge) \cong \wedge^* V$). 
		In that setting one	often starts with the outer product 
		and then uses a given metric tensor to define a dual.
		The inner product is then defined from the outer product and dual according to \eqref{dual_products}.
		The exact definition of the dual varies, but a common choice in the literature for differential forms
		is $*x := (I^{-1}x)^\dagger$, so that $**x = IxI^\dagger$
		(cp. e.g. \cite{nakahara}).
	\end{rem}

	\begin{exc}
		Show that $|A \symdiff B| \equiv |A| + |B| \pmod{2}$ and
		verify Eqs. \eqref{grade_inv_property}-\eqref{cliffconj_property} 
		from the definitions of these involutions.
	\end{exc}

	\begin{exc}
		Prove the remaining identities in Proposition \ref{prop_trix}.
	\end{exc}
	
	\begin{exc}
		Prove Proposition \ref{prop_trix2}.
	\end{exc}
	
	\begin{exc}
		Prove Proposition \ref{prop_grade_def}.
	\end{exc}
	
	\begin{exc} \label{exc_meet_algebra}
		Show that the algebra $(\mathcal{G},\vee)$ with the meet product
		is associative with unit $I$.
		(The combined algebra $(\mathcal{G},\wedge,\vee)$ 
		is sometimes called \emph{the double Cayley algebra}.)
	\end{exc}
	
	\begin{exc}
		Show that the cross product as defined in \eqref{cross_product}
		gives rise to the familiar cartesian coordinate expression 
		(and has the correct orientation)
		when expanding in an orthonormal basis.
	\end{exc}

	\begin{exc} \label{exc_anti_derivation}
		For $v \in \cl^1$ define the map
		$$
			\setlength\arraycolsep{2pt}
			\begin{array}{rccl}
				\partial_v\!: 	& \cl 	& \to 		& \cl, \\
								& x 	& \mapsto 	& \partial_v(x) = v \liprod x.
			\end{array}
		$$		
		Verify that $\partial_v \circ \partial_v = 0$ and that
		$$\partial_v(xy) = \partial_v(x)y + x^\star \partial_v(y)$$
		as well as
		$$\partial_v(x \wedge y) = \partial_v(x) \wedge y + x^\star \wedge \partial_v(y)$$
		for all $x,y \in \cl$
		(such $\partial_v$ is called an \emph{anti-derivation}).
		Furthermore, note that this leads to the following useful formula:
		\begin{equation} \label{vec_blade_expansion}
			v \liprod (a_1 \wedge a_2 \wedge \cdots \wedge a_m) 
			= \sum_{k=1}^m (-1)^{k-1} (v * a_k) a_1 \wedge \cdots \wedge \check{a}_{k} \wedge \cdots \wedge a_m,
		\end{equation}
		(where $\check{\phantom{a}}$ denotes deletion) for any $v,a_1,\ldots,a_m \in \cl^1$.
		A generalized expansion for higher grades can be found in Appendix \ref{app_inner_prod_expansion}.
	\end{exc}

	\begin{exc}
		Let $\boldsymbol{v} \in \mathbb{R}^3$ and $B \in \mathcal{G}^2(\mathbb{R}^3)$.
		In each of the following cases, find $w \in \mathcal{G}(\mathbb{R}^3)$
		that satisfies
		\item{a)} $w \wedge w = 1 + \boldsymbol{v} + B$
		\item{b)} $w \wedge (1 + \boldsymbol{v} + B) = 1$
	\end{exc}

	\begin{exc} \label{exc_bivectors_closed}
		Show that $\cl^2$ is closed under the commutator bracket, i.e.
		$[A,B] := AB-BA \in \cl^2$
		for all $A,B \in \cl^2$.
	\end{exc}

	\begin{exc} \label{exc_wedge_commutativity}
		Show that $x_p \wedge x_r = (-1)^{pr} x_r \wedge x_p$
		for all $x_k \in \cl^k$.
	\end{exc}

	\newpage

\section{Vector space geometry} \label{sec_vector_geom}

	We will now leave the general setting of combinatorial 
	Clifford algebra for a moment and instead
	focus on the geometric properties of $\mathcal{G}$ and its newly
	defined operations in the context of vector spaces and linear transformations.
	
\subsection{Blades and subspaces}

	The concept of blades is central for understanding the 
	geometry encoded in a geometric algebra $\mathcal{G}(V)$.
	
	\begin{defn} \label{def_blades}
		A \emph{blade}, or \emph{simple multivector}, 
		is an outer product of 1-vectors. We define the following:
		\begin{displaymath}
		\setlength\arraycolsep{2pt}
		\begin{array}{lcll}
			\mathcal{B}_k &:=& \{ v_1 \wedge v_2 \wedge \cdots \wedge v_k \in \mathcal{G} : v_i \in V \}
				& \quad\textrm{\emph{the set of $k$-blades}} \\[5pt]
			\mathcal{B} &:=& \bigcup_{k=0}^{\infty} \mathcal{B}_k
				& \quad\textrm{\emph{the set of all blades}} \\[5pt]
			\mathcal{B}^* &:=& \mathcal{B} \!\smallsetminus\! \{0\}
				& \quad\textrm{\emph{the nonzero blades}} \\[5pt]
			\mathcal{B}^\times &:=& \{ B \in \mathcal{B} : B^2 \neq 0\}
				& \quad\textrm{\emph{the invertible blades}}
		\end{array}
		\end{displaymath}
		The \emph{orthogonal basis blades} associated to an orthogonal basis 
		$E = \{e_i\}_{i=1}^{\dim V}$ consist of 
		the basis of $\mathcal{G}$ generated by $E$, i.e.
		\begin{displaymath}
			\mathcal{B}_E := \{ e_{i_1} \wedge e_{i_2} \wedge \cdots \wedge e_{i_k} \in \mathcal{G} : i_1 < i_2 < \ldots < i_k \} 
		\end{displaymath}
		(corresponding to $\mathscr{P}(E)$ in $\cl$).
		We also include the unit 1 among the blades and call it the \emph{$0$-blade}.
	\end{defn}
	
	\noindent
	Note that $\mathcal{B}_k \subseteq \mathcal{G}^k$ and that 
	(e.g. by expanding in an orthogonal basis; see Exercise \ref{exc_blade_expansion})
	we can expand a blade as a sum of geometric products,
	\begin{equation} \label{blade_expansion}
		a_1 \wedge a_2 \wedge \cdots \wedge a_k 
			= \frac{1}{k!} \sum_{\pi \in S_k} \textrm{sign}(\pi)\ a_{\pi(1)} a_{\pi(2)} \ldots a_{\pi(k)}.
	\end{equation}
	This expression is clearly similar to a determinant, except that this is a product of \emph{vectors}
	instead of scalars.
	
	The key property of blades is that they represent linear subspaces of $V$.
	This is made precise by the following

	\begin{prop} \label{prop_blade_subspace}
		If $A = a_1 \wedge a_2 \wedge \cdots \wedge a_k \neq 0$ is a 
		nonzero $k$-blade and $a \in V$ then
		\begin{displaymath}
			a \wedge A = 0 \quad \Leftrightarrow \quad a \in \textrm{\emph{Span}} \{a_1,a_2,\ldots,a_k\}.
		\end{displaymath}
	\end{prop}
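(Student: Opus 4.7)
The plan is to handle the two directions separately, with the main work packaged inside Theorem \ref{thm_linear_indep} which already connects linear (in)dependence of vectors with (non)vanishing of their outer product.

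For the easy direction ($\Leftarrow$), I would simply expand $a = \sum_{i=1}^{k} c_i a_i$ for some scalars $c_i \in \mathbb{F}$, so that by linearity and associativity of $\wedge$,
\begin{equation*}
a \wedge A = \sum_{i=1}^{k} c_i\, a_i \wedge a_1 \wedge a_2 \wedge \cdots \wedge a_k.
\end{equation*}
Each term contains $a_i$ twice in the outer product, and since $a_i \wedge a_i = 0$ together with the anticommutativity for 1-vectors (Exercise \ref{exc_wedge_commutativity}) implies every such term vanishes, we get $a \wedge A = 0$.

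For the nontrivial direction ($\Rightarrow$), I would argue by contrapositive: assume $a \notin \mathrm{Span}\{a_1,\ldots,a_k\}$ and show $a \wedge A \neq 0$. Since $A \neq 0$ is a nonzero $k$-blade, Theorem \ref{thm_linear_indep} (applied to $a_1 \wedge \cdots \wedge a_k$) gives that $\{a_1,\ldots,a_k\}$ is linearly independent. The assumption $a \notin \mathrm{Span}\{a_1,\ldots,a_k\}$ then upgrades this to linear independence of the enlarged set $\{a, a_1,\ldots,a_k\}$. Applying Theorem \ref{thm_linear_indep} in the opposite direction to this $(k+1)$-tuple yields that the $(k+1)$-vector $a \wedge a_1 \wedge \cdots \wedge a_k = a \wedge A$ is itself linearly independent, and in particular nonzero.

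The only subtlety worth flagging is that we are working over a field $\mathbb{F}$ here (since we are inside $\mathcal{G}(V,q)$ with $V$ a vector space), so ``linearly independent'' reduces to ``nonzero'' for a single element; this is what lets us conclude $a \wedge A \neq 0$ from linear independence of the singleton $\{a \wedge A\}$. Thus the main obstacle has already been dispatched by Theorem \ref{thm_linear_indep}, and this proof is essentially a clean two-line application of that result in both directions.
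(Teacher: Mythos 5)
Your proposal is correct and takes essentially the same approach as the paper: both reduce the statement to Theorem \ref{thm_linear_indep}, observing that $A \neq 0$ gives linear independence of $\{a_1,\ldots,a_k\}$, so that $a \wedge A = 0$ is equivalent to linear dependence of $\{a,a_1,\ldots,a_k\}$, which is in turn equivalent to $a$ lying in their span. You handle the ($\Leftarrow$) direction by a direct expansion rather than invoking the theorem, and you make explicit the point about working over a field, but these are presentational choices rather than a different route.
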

	\begin{proof}
		This follows directly from Theorem \ref{thm_linear_indep} since 
		$\{a_1,\ldots,a_k\}$ are linearly independent, and $a \wedge A=0$ if and only if
		$\{a,a_1,\ldots,a_k\}$ are linearly dependent.
	\end{proof}
	
	\noindent
	Hence, to every nonzero $k$-blade $A = a_1 \wedge a_2 \wedge \cdots \wedge a_k$ 
	there corresponds a unique $k$-dimensional subspace
	\begin{equation} \label{blade_subspace}
		\bar{A} := \{ a \in V : a \wedge A = 0 \} = \textrm{Span} \{a_1,a_2,\ldots,a_k\}.
	\end{equation}
	Conversely, if $\bar{A} \subseteq V$ is a $k$-dimensional subspace of $V$, 
	then we can find a nonzero $k$-blade $A$ representing $\bar{A}$ by simply taking a basis
	$\{a_i\}_{i=1}^k$ of $\bar{A}$ and forming
	\begin{equation} \label{subspace_blade}
		A := a_1 \wedge a_2 \wedge \cdots \wedge a_k.
	\end{equation}
	We thus have the geometric interpretation of blades as subspaces with an associated
	orientation (sign) and magnitude.
	Since every element in $\mathcal{G}$ is a linear combination of orthogonal basis blades,
	we can think of every element as representing a linear combination of orthogonal subspaces.
	In the case of a nondegenerate algebra these basis subspaces are 
	nondegenerate as well.
	On the other hand, any blade which represents a nondegenerate subspace can also be 
	treated as a basis blade associated to an orthogonal basis.
	This will follow in the discussion below.
	
	\begin{prop} \label{prop_blade_product}
		Every $k$-blade can be written as a geometric product of $k$ vectors.
	\end{prop}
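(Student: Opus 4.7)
The plan is to reduce every nonzero $k$-blade, via an orthogonalization argument, to a product of mutually anticommuting vectors, for which the outer product coincides with the geometric product.

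First I would dispose of the trivial case $A = 0$, where the assertion is immediate. So assume $A = a_1 \wedge a_2 \wedge \cdots \wedge a_k \neq 0$. By Theorem \ref{thm_linear_indep} the vectors $a_1,\ldots,a_k$ are linearly independent, so they span a $k$-dimensional subspace $\bar{A} \subseteq V$. I would then apply Theorem \ref{thm_o_basis} to the restricted quadratic space $(\bar{A}, q|_{\bar{A}})$ (note that Theorem \ref{thm_o_basis} needs only $\charop \mathbb{F} \neq 2$; it does \emph{not} require $q|_{\bar{A}}$ to be nondegenerate), obtaining an orthogonal basis $\{e_1,\ldots,e_k\}$ of $\bar{A}$.

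Next, I would translate orthogonality into anticommutation: since $\beta_q(e_i,e_j) = 0$ for $i \neq j$, the defining relation \eqref{generator_relations} gives $e_i e_j + e_j e_i = 0$, hence by \eqref{vector_wedge} we have $e_i \wedge e_j = e_i e_j$ for $i \neq j$. Using the antisymmetrization formula \eqref{blade_expansion} (or a quick induction using Proposition \ref{prop_trix2}), mutual anticommutation of $e_1,\ldots,e_k$ implies
\begin{equation*}
    e_1 \wedge e_2 \wedge \cdots \wedge e_k \;=\; e_1 e_2 \cdots e_k.
\end{equation*}

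Finally, to relate $A$ to this product, I would expand each $a_j = \sum_i T_{ij} e_i$ in the orthogonal basis. Multilinearity and antisymmetry of the outer product give
\begin{equation*}
    A \;=\; a_1 \wedge \cdots \wedge a_k \;=\; (\det T)\, e_1 \wedge \cdots \wedge e_k \;=\; (\det T)\, e_1 e_2 \cdots e_k,
\end{equation*}
and absorbing the scalar $\det T$ into the first factor yields the desired expression $A = \bigl((\det T)\, e_1\bigr)\, e_2 \cdots e_k$ as a geometric product of $k$ vectors. There is no real obstacle here beyond correctly invoking Theorem \ref{thm_o_basis} in the possibly degenerate case and noting that orthogonality, not unit-normalization, is all that is needed to convert $\wedge$ into juxtaposition.
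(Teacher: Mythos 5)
Your proposal is correct and follows essentially the same route as the paper: pick an orthogonal basis $\{e_i\}$ of $\bar{A}$ via Theorem \ref{thm_o_basis}, expand the $a_i$ in that basis, and use \eqref{blade_expansion} to identify $A$ with $\det[\beta_{ij}]\,e_1 e_2 \cdots e_k$. You spell out a few steps the paper leaves implicit (the $A=0$ case, and the passage from anticommutation to $e_1 \wedge \cdots \wedge e_k = e_1 \cdots e_k$), but the core argument is identical.
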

	\begin{proof}
		Take a nonzero $A = a_1 \wedge \cdots \wedge a_k \in \mathcal{B}^*$. 
		Pick an orthogonal basis $\{e_i\}_{i=1}^k$ of the subspace $(\bar{A},q|_{\bar{A}})$.
		Then we can write $a_i = \sum_j \beta_{ij} e_j$ for some $\beta_{ij} \in \mathbb{F}$,
		and $A = \det\ [\beta_{ij}]\ e_1 e_2 \ldots e_k$ by \eqref{blade_expansion}.
	\end{proof}
	
	\noindent
	There are a number of useful consequences of this result.
	
	\begin{cor}
		For any blade $A \in \mathcal{B}$, $A^2$ is a scalar.
	\end{cor}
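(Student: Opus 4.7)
The plan is to leverage Proposition \ref{prop_blade_product} together with the anticommutation of orthogonal vectors given by the defining relations \eqref{generator_relations}. By Proposition \ref{prop_blade_product} (or rather the construction in its proof), any blade $A \in \mathcal{B}$ can be written as $A = \alpha\, e_1 e_2 \cdots e_k$ where $\alpha \in \mathbb{F}$ and $\{e_1,\ldots,e_k\}$ is an orthogonal basis of the associated subspace $\bar{A}$ (with the convention that $A = \alpha$ when $k=0$, or $A=0$ when $A$ vanishes, both trivially scalar).

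The crucial property of an orthogonal set is that $\beta_q(e_i,e_j)=0$ for $i\neq j$, so by \eqref{generator_relations} we have $e_i e_j = -e_j e_i$ for $i \neq j$, while $e_i^2 = q(e_i) \in \mathbb{F}$. I would then compute $A^2 = \alpha^2 (e_1 e_2 \cdots e_k)(e_1 e_2 \cdots e_k)$ by commuting the leftmost factor $e_1$ of the second group past $e_2,\ldots,e_k$, picking up a sign $(-1)^{k-1}$ and collapsing the resulting $e_1^2 = q(e_1)$ to a scalar; then repeating the argument for $e_2$ past $e_3,\ldots,e_k$, and so on inductively. This yields
\begin{equation*}
    A^2 = \alpha^2 (-1)^{(k-1)+(k-2)+\cdots+1} \, q(e_1) q(e_2) \cdots q(e_k) = \alpha^2 (-1)^{k(k-1)/2} \prod_{i=1}^{k} q(e_i),
\end{equation*}
which is manifestly an element of $\mathbb{F}$.

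There is really no serious obstacle here; the only care required is the sign bookkeeping in the inductive reduction, which is straightforward once one observes that moving $e_i$ through the $k-i$ remaining orthogonal vectors contributes the sign $(-1)^{k-i}$. Alternatively, one could phrase the whole argument in one shot using Exercise \ref{exc_wedge_commutativity} or the reversion identity $(e_1 e_2 \cdots e_k)^\dagger = e_k \cdots e_2 e_1 = (-1)^{k(k-1)/2} e_1 e_2 \cdots e_k$, combined with Lemma \ref{lem_blade_det} applied to the orthogonal frame, but the direct anticommutation argument is the most elementary and self-contained.
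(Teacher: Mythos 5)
Your proposal is correct and follows essentially the same route as the paper: write $A = \alpha\, e_1 e_2 \cdots e_k$ with $\{e_i\}$ an orthogonal basis of $\bar{A}$ (exactly the decomposition from the proof of Proposition \ref{prop_blade_product}) and reduce $A^2$ by anticommutation to obtain $\alpha^2 (-1)^{k(k-1)/2} q(e_1) \cdots q(e_k) \in \mathbb{F}$, which matches the paper's formula \eqref{blade_squared}. The paper merely states the resulting expression; you have supplied the straightforward sign-bookkeeping that produces it.
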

	\begin{proof}
		Use the expansion of $A$ above to obtain 
		\begin{equation} \label{blade_squared}
			A^2 = (\det\ [\beta_{ij}])^2\ (-1)^{\frac{1}{2}k(k-1)} q(e_1) q(e_2) \ldots q(e_k) \in \mathbb{F}. \qedhere
		\end{equation}
	\end{proof}
	
	\begin{cor}
		If $A \in \mathcal{B}^\times$ then $A$ has an inverse $A^{-1} = \frac{1}{A^2}A$.
	\end{cor}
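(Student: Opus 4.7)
The plan is to read off the result essentially for free from the previous corollary. Since $A \in \mathcal{B}^\times \subseteq \mathcal{B}$, the preceding corollary guarantees $A^2 \in \mathbb{F}$, and the very definition of $\mathcal{B}^\times$ ensures $A^2 \neq 0$. So $\tfrac{1}{A^2}$ exists as an element of the field $\mathbb{F}$, and the candidate inverse $\tfrac{1}{A^2}A$ is a well-defined element of $\mathcal{G}$.

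To verify it actually is the inverse, I would simply compute
$$
A \cdot \left( \tfrac{1}{A^2} A \right) = \tfrac{1}{A^2} (A \cdot A) = \tfrac{1}{A^2} \cdot A^2 = 1,
$$
where I use associativity of the geometric product (from the definition of $\mathcal{G}$) and the fact that scalars in $\mathbb{F} = \mathcal{G}^0$ commute with everything (which is immediate since $\mathcal{I}_q(V)$ contains no scalar relations, or equivalently since the Clifford product of $\varnothing$ with any $A \in \mathscr{P}(X)$ returns $A$). The computation on the other side is identical.

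There is no real obstacle here; the work was already done in proving that $A^2$ is a scalar via the orthogonal-factorization formula $A = \det[\beta_{ij}] \, e_1 e_2 \ldots e_k$. The only thing worth remarking in the write-up is why we may treat $\tfrac{1}{A^2}$ as a central element, but this is automatic from $A^2 \in \mathbb{F}$ and the ambient $\mathbb{F}$-algebra structure.
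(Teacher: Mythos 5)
Your proof is correct and is precisely the argument the paper has in mind: since the preceding corollary gives $A^2 \in \mathbb{F}$ and the definition of $\mathcal{B}^\times$ gives $A^2 \neq 0$, the scalar $\frac{1}{A^2}$ exists and commutes with everything, so $A\cdot\frac{1}{A^2}A = \frac{1}{A^2}A^2 = 1$ and likewise on the other side. The paper omits the proof entirely as immediate, and your write-up fills it in accurately.
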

	
	\begin{exmp}
		In $\mathbb{R}^3$, every bivector is also a 2-blade.
		This follows by duality, since if $B$ is a bivector then
		$B\dual = BI^{-1} =: b$ is a vector. 
		Choosing an orthonormal basis $\{e_1,e_2,e_3\}$ such that,
		say, $b = \beta e_3$, we have $B = bI = \beta e_1 \wedge e_2$.
		The same situation is not true in $\mathbb{R}^4$, however,
		where we e.g. have a bivector 
		$$
			B := e_1 \wedge e_2 + e_3 \wedge e_4 = e_1e_2(1 - I)
		$$
		such that $B^2 = -(1 - I)^2 = -2(1 - I)$,
		which is not a scalar.
		(Actually, one can show in general (see Exercise \ref{exc_bivector_blade})
		that $B^2 \in \mathbb{F}$ is both a necessary and sufficient condition
		for a bivector $B$ to be a blade.)
	\end{exmp}
	
	Another useful consequence of Proposition 
	\ref{prop_blade_product} is the following
	
	\begin{cor}
		If $A \in \mathcal{B}^\times$ then $q$ is nondegenerate on $\bar{A}$
		and there exists an
		orthogonal basis $E$ of $V$ such that $A \in \mathcal{B}_E$.
	\end{cor}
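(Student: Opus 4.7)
My plan is to leverage the constructive proof of Proposition \ref{prop_blade_product} to expose an orthogonal basis of $\bar{A}$ directly, then use the invertibility hypothesis to show nondegeneracy, and finally extend to an orthogonal basis of $V$ by an orthogonal-complement argument modelled on the proof of Theorem \ref{thm_o_basis}.

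First I would apply Proposition \ref{prop_blade_product}: starting from $A = a_1 \wedge \cdots \wedge a_k$, pick an orthogonal basis $\{e_1,\ldots,e_k\}$ of $(\bar{A}, q|_{\bar{A}})$ (which exists by Theorem \ref{thm_o_basis} since $\charop \mathbb{F} \neq 2$), and write $A = \delta\, e_1 e_2 \cdots e_k$ with $\delta := \det[\beta_{ij}] \neq 0$, where the $e_i$ are orthogonal so the geometric product coincides with the wedge product. Then formula \eqref{blade_squared} gives $A^2 = \delta^2 (-1)^{k(k-1)/2} q(e_1) q(e_2) \cdots q(e_k)$. Since $A \in \mathcal{B}^\times$, this is nonzero, forcing $q(e_i) \neq 0$ for every $i$. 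Because the Gram matrix of $q|_{\bar{A}}$ in the basis $\{e_i\}$ is diagonal with nonzero entries, $q|_{\bar{A}}$ is nondegenerate.

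Next I would extend $\{e_1,\ldots,e_k\}$ to an orthogonal basis of $V$. Mimicking the decomposition step in the proof of Theorem \ref{thm_o_basis}, for any $x \in V$ set
\begin{displaymath}
\tilde{x} := x - \sum_{i=1}^{k} \frac{\beta_q(x, e_i)}{q(e_i)}\, e_i,
\end{displaymath}
which lies in $\bar{A}^\perp := \{y \in V : \beta_q(y, e_i) = 0 \text{ for all } i\}$. This yields $V = \bar{A} + \bar{A}^\perp$, and nondegeneracy of $q|_{\bar{A}}$ gives $\bar{A} \cap \bar{A}^\perp = 0$, hence $V = \bar{A} \oplus \bar{A}^\perp$. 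Applying Theorem \ref{thm_o_basis} to $\bar{A}^\perp$ yields an orthogonal basis $\{f_1,\ldots,f_{n-k}\}$ of it, and then $\{e_1,\ldots,e_k, f_1,\ldots,f_{n-k}\}$ is an orthogonal basis of $V$.

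Finally, to land exactly inside $\mathcal{B}_E$ as defined (an ordered wedge of basis vectors, with no scalar prefactor), I would absorb the nonzero scalar $\delta$ into one basis vector, e.g.\ replace $e_1$ by $e_1' := \delta\, e_1$. The resulting set $E := \{e_1', e_2, \ldots, e_k, f_1, \ldots, f_{n-k}\}$ is still orthogonal, and $A = e_1' \wedge e_2 \wedge \cdots \wedge e_k \in \mathcal{B}_E$. The only subtle point I anticipate is the orthogonal decomposition step, which technically needs $q|_{\bar{A}}$ nondegenerate rather than $q$ itself nondegenerate on all of $V$; but this is precisely what the previous step established, so no genuine obstacle arises.
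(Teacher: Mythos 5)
Your proposal is correct and follows essentially the same route as the paper: nondegeneracy of $q|_{\bar{A}}$ is read off from formula \eqref{blade_squared}, the projection $x \mapsto x - \sum_i \beta_q(x,e_i)q(e_i)^{-1}e_i$ gives $V = \bar{A} \oplus \bar{A}^\perp$, and the final rescaling (which the paper dispatches with the phrase ``by rescaling this basis'') is exactly your step of absorbing $\delta$ into $e_1$. The only presentational difference is that you make explicit the appeal to Theorem~\ref{thm_o_basis} for an orthogonal basis of $\bar{A}^\perp$ and spell out why the diagonal Gram matrix with nonzero entries gives nondegeneracy, which the paper leaves implicit.
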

	\begin{proof}
		The first statement follows directly from \eqref{blade_squared}.
		For the second statement note that, since $q$ is nondegenerate on $\bar{A}$,
		we have $\bar{A} \cap \bar{A}^\perp = 0$. 
		Take an orthogonal basis $\{e_i\}_{i=1}^k$ of $\bar{A}$.
		For any $v \in V$ we have that 
		$$
			v - \sum_i \beta_q(v,e_i) q(e_i)^{-1} e_i \in \bar{A}^\perp.
		$$
		Thus, $V = \bar{A} \oplus \bar{A}^\perp$ and
		we can extend $\{e_i\}_{i=1}^k$ to an orthogonal basis of $V$ consisting of 
		one part in $\bar{A}$ and one part in $\bar{A}^\perp$.
		By rescaling this basis we have $A = e_1 \wedge \cdots \wedge e_k$.
	\end{proof}
	
	\begin{rem}
		Note that if we have an orthogonal basis of a subspace of $V$ where $q$ is degenerate, 
		then it may not be possible to extend this basis to an orthogonal basis for all of $V$.
		Another way to state this is that one or more vectors in the 
		corresponding product expansion may fail to be invertible.
		If the space is euclidean or anti-euclidean, though, orthogonal bases can always be extended
		(e.g. using the Gram-Schmidt algorithm; see Exercise \ref{exc_gram_schmidt}).
	\end{rem}
	
	\begin{exmp} \label{exmp_twodim_lorentz}
		The simplest lorentzian space is $\mathbb{R}^{1,1}$,
		with an orthonormal basis basis conventionally denoted $\{e_0,e_1\}$, $e_0^2 = 1$, $e_1^2 = -1$.
		It has precisely two one-dimensional degenerate subspaces $\mathbb{R}n_\pm$, where
		$$
			n_\pm := e_0 \pm e_1, \quad n_\pm^2 = 0,
		$$
		but these cannot be orthogonal. In fact,
		$$
			n_+ * n_- = e_0^2 - e_1^2 = 2,
		$$
		but they of course span the space since they are linearly independent, 
		and $n_+ \wedge n_- = 2e_1e_0$. 
		In general, the degenerate part of a $n$-dimensional lorentzian space
		will be a $(n-1)$-dimensional cone, called the \emph{null-} or \emph{light-cone}.
	\end{exmp}
	
	It is useful to be able to work efficiently also with general bases of $V$ 
	and $\mathcal{G}$ which	need not be orthogonal.
	Let $\{e_1,\ldots,e_n\}$ be \emph{any} basis of $V$.
	Then the set set of blades formed out of this basis, $\{e_\mathbf{i}\}_\mathbf{i}$,
	is a basis of $\mathcal{G}(V)$, where we
	use a multi-index notation
	\begin{equation} \label{genbase_index}
		\mathbf{i} = (i_1,i_2,\ldots,i_k), \quad i_1 < i_2 < \ldots < i_k, \quad 0 \leq k \leq n
	\end{equation}
	and
	\begin{equation} \label{genbase_blade}
		e_{()} := 1, \quad e_{(i_1,i_2,\ldots,i_k)} := e_{i_1} \wedge e_{i_2} \wedge \cdots \wedge e_{i_k}.
	\end{equation}
	Sums over $\mathbf{i}$ are understood to be performed over all allowed such indices.
	If $\mathcal{G}$ is nondegenerate then the scalar product $(A,B) \mapsto A * B$
	is also nondegenerate and we can find a so-called \emph{reciprocal basis} $\{e^1,\ldots,e^n\}$
	of $V$ such that
	\begin{equation} \label{reciprocal_def}
		e^i * e_j = \delta^i_j.
	\end{equation}
	The reciprocal basis is easily verified (Exercise \ref{exc_reciprocal}) to be given by
	\begin{equation} \label{reciprocal_exp}
		e^i = (-1)^{i-1} (e_1 \wedge \cdots \wedge \check{e}_i \wedge \cdots \wedge e_n) e_{(1,\ldots,n)}^{-1},
	\end{equation}
	where $\check{}$ denotes a deletion.
	Furthermore, we have that $\{e^\mathbf{i}\}_\mathbf{i}$ is a reciprocal basis of $\mathcal{G}$,
	where $e^{(i_1,\ldots,i_k)} := e^{i_k} \wedge \cdots \wedge e^{i_1}$
	(note the order).
	This follows since by Lemma \ref{lem_blade_det} and \eqref{reciprocal_def},
	\begin{equation} \label{reciprocal_g}
		e^\mathbf{i} * e_\mathbf{j} 
			= (e^{i_k} \wedge \cdots \wedge e^{i_1}) * (e_{j_1} \wedge \cdots \wedge e_{j_l})
			= \delta^k_l \det\ \big[ e^{i_p} * e_{j_q} \big]_{1 \le p,q \le k} = \delta^\mathbf{i}_\mathbf{j}.
	\end{equation}
	We now have the coordinate expansions
	\begin{equation} \label{coordinate_expansions}
	\setlength\arraycolsep{2pt}
	\begin{array}{rcll}
		v &=& \sum_i (v * e^i) e_i = \sum_i (v * e_i) e^i & \quad \forall\ v \in V, \\[5pt]
		x &=& \sum_\mathbf{i} (x * e^\mathbf{i}) e_\mathbf{i} = \sum_\mathbf{i} (x * e_\mathbf{i}) e^\mathbf{i} & \quad \forall\ x \in \mathcal{G}(V),
	\end{array}
	\end{equation}
	which also gives a
	geometric understanding of an arbitrary
	multivector in $\mathcal{G}(V)$ as a linear combination of general 
	(not necessarily orthogonal) subspaces of $V$.

	\begin{exmp}
		In the euclidean case an orthonormal basis $\{e_i\}_i$,
		satisfying $e_i^2 = 1$ for all $i$, is its own reciprocal basis, i.e. $e^i = e_i$.
		In the lorentzian case $\mathbb{R}^{1,n}$ 
		the standard orthonormal basis $\{e_0,e_1,\ldots,e_n\}$, 
		with $e_0^2 = 1$, has as reciprocal basis
		$\{e^0,e^1,\ldots,e^n\}$ where $e^0 = e_0$ and $e^j = -e_j$, $j=1,2,\ldots,n$.
		The non-orthogonal basis $\{n_+,n_-\}$ in Example \ref{exmp_twodim_lorentz}
		has the reciprocal basis $\{\frac{1}{2}n_-,\frac{1}{2}n_+\}$.
	\end{exmp}
	
	In addition to being useful in coordinate expansions, the general and reciprocal bases
	also provide a geometric understanding of the dual operation because of the following
	
	\begin{thm} \label{thm_reciprocal_blade}
		Assume that $\mathcal{G}$ is nondegenerate.
		If $A = a_1 \wedge \cdots \wedge a_k \in \mathcal{B}^*$ and we extend $\{a_i\}_{i=1}^k$
		to a basis $\{a_i\}_{i=1}^n$ of $V$ then
		\begin{displaymath}
			A\dual \propto a^{k+1} \wedge a^{k+2} \wedge \cdots \wedge a^n,
		\end{displaymath}
		where $\{a^i\}_i$ is the reciprocal basis of $\{a_i\}_i$.
	\end{thm}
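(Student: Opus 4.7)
The plan is to characterize the subspace $\overline{A\dual}$ associated to the blade $A\dual$ and then identify it with the span of $\{a^{k+1},\ldots,a^n\}$, which will force proportionality of the two blades. Throughout I use that $\mathcal{G}$ is nondegenerate, so $I$ is invertible, taking duals is a bijection, and the scalar product on $\mathcal{G}$ is nondegenerate (so reciprocal bases exist).

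First, I would translate the geometric condition $a \wedge A\dual = 0$ into an inner product condition. By Proposition~\ref{prop_dual}, $a \wedge A\dual = (a \liprod A)\dual$, so since the dual is a bijection,
$$
a \wedge A\dual = 0 \quad \Longleftrightarrow \quad a \liprod A = 0.
$$
Next, expand the right-hand side using the anti-derivation formula \eqref{vec_blade_expansion}:
$$
a \liprod A = \sum_{m=1}^{k} (-1)^{m-1} (a * a_m)\, a_1 \wedge \cdots \wedge \check a_m \wedge \cdots \wedge a_k.
$$
Since $\{a_1,\ldots,a_k\}$ are linearly independent (as $A \neq 0$, by Theorem~\ref{thm_linear_indep}), the $(k-1)$-blades obtained by deleting one factor are themselves linearly independent — this can be seen by wedging each with the corresponding missing $a_m$ and applying Theorem~\ref{thm_linear_indep} again. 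Hence $a \liprod A = 0$ iff $a * a_m = 0$ for every $m = 1,\ldots,k$.

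Combining the two equivalences and invoking Proposition~\ref{prop_blade_subspace} applied to the nonzero $(n-k)$-blade $A\dual$ (nonzero by the grade statement of Proposition~\ref{prop_dual} and since $A \neq 0$ and $I$ is invertible), I obtain
$$
\overline{A\dual} = \{\,a \in V : a * a_m = 0 \text{ for } m = 1,\ldots,k\,\}.
$$
By the defining property \eqref{reciprocal_def} of the reciprocal basis, each $a^i$ with $i>k$ satisfies $a^i * a_m = \delta^i_m = 0$ for $m \le k$, so $a^{k+1},\ldots,a^n$ all lie in $\overline{A\dual}$. They are linearly independent (being part of a basis), and the subspace has dimension exactly $n-k$, so they form a basis of $\overline{A\dual}$.

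Finally, I would argue that two nonzero blades representing the same subspace are proportional: writing $A\dual$ as a wedge of vectors in $\overline{A\dual}$ and expressing each such vector in the basis $\{a^{k+1},\ldots,a^n\}$, the multilinearity and alternation of $\wedge$ (together with the determinant-like expansion \eqref{blade_expansion}) give $A\dual = \lambda\, a^{k+1} \wedge \cdots \wedge a^n$ for some $\lambda \in \mathbb{F}$, as claimed. The main technical step is the linear-independence of the deleted wedge products in the expansion of $a \liprod A$, needed to isolate the scalar-product conditions; everything else is a routine combination of Proposition~\ref{prop_dual}, Proposition~\ref{prop_blade_subspace}, and the definition of the reciprocal basis.
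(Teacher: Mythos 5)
Your strategy is genuinely different from the paper's (which computes $(a_k \wedge \cdots \wedge a_1) \liprod (a^1 \wedge \cdots \wedge a^n)$ directly by induction and then observes $I^{-1} \propto a^1 \wedge \cdots \wedge a^n$), and most of it is sound: the equivalence $a \wedge A\dual = 0 \Leftrightarrow a \liprod A = 0$ via Proposition~\ref{prop_dual}, the anti-derivation expansion, and the linear-independence of the deleted wedges all hold up. But the last step has a genuine gap. You twice treat $A\dual$ as a blade — once when you call it ``the nonzero $(n-k)$-blade $A\dual$'' so as to invoke Proposition~\ref{prop_blade_subspace}, and again when you propose ``writing $A\dual$ as a wedge of vectors in $\overline{A\dual}$''. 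Nothing you have established shows that $A\dual$ is a blade; the grade statement of Proposition~\ref{prop_dual} and invertibility of $I$ give only that $A\dual$ is a nonzero element of $\mathcal{G}^{n-k}$. The assertion that $A\dual$ is a blade is precisely what the paper records as a \emph{corollary} of Theorem~\ref{thm_reciprocal_blade}, so assuming it here is circular. The distinction matters: for a non-blade $m$-vector $x$ the annihilator $\{a \in V : a \wedge x = 0\}$ can have dimension far below $m$ (e.g.\ $x = e_1e_2 + e_3e_4 \in \mathcal{G}^2(\mathbb{R}^4)$ has trivial annihilator), so the blade--subspace dictionary you lean on is not available until blade-ness is known.

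The fix stays entirely inside your framework and actually shortens the argument. Expand $A\dual = \sum_{|\mathbf{i}| = n-k} c_\mathbf{i}\, a^\mathbf{i}$ in the basis $\{a^\mathbf{i}\}_\mathbf{i}$ of $\mathcal{G}^{n-k}$. For each $j \in \{k+1,\ldots,n\}$ you already have $a^j \wedge A\dual = (a^j \liprod A)\dual = 0$, since $a^j * a_m = \delta^j_m = 0$ for all $m \le k$ by \eqref{reciprocal_def} and \eqref{vec_blade_expansion} (this is the only direction of your annihilator characterization that is needed; the converse direction, which required the linear-independence lemma for deleted wedges, can be dropped). Because the $(n-k+1)$-blades $a^j \wedge a^\mathbf{i}$ with $j \notin \mathbf{i}$ are distinct basis blades and hence linearly independent, $a^j \wedge A\dual = 0$ forces $c_\mathbf{i} = 0$ whenever $j \notin \mathbf{i}$; running over all $j > k$ leaves only $\mathbf{i} = (k+1,\ldots,n)$, giving $A\dual \propto a^{k+1} \wedge \cdots \wedge a^n$ and, as a byproduct, the blade-ness of $A\dual$. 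With that substitution your argument is complete and is a clean alternative to the paper's inductive computation.
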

	\begin{proof}
		We obtain by induction on $k$ that
		\begin{eqnarray*}
			\lefteqn{ (a_k \wedge a_{k-1} \wedge \cdots \wedge a_1) \liprod (a^1 \wedge \cdots \wedge a^k \wedge a^{k+1} \wedge \cdots \wedge a^n) }\\
			&=& a_k \liprod \left( (a_{k-1} \wedge \cdots \wedge a_1) \liprod (a^1 \wedge \cdots \wedge a^{k-1} \wedge a^k \wedge \cdots \wedge a^n) \right) \\
			&=& \{\textrm{induction assumption}\} = a_k \liprod (a^k \wedge \cdots \wedge a^n) \\
			&=& a^{k+1} \wedge \cdots \wedge a^n,
		\end{eqnarray*}
		where in the last step we used the expansion formula \eqref{vec_blade_expansion},
		plus orthogonality \eqref{reciprocal_def}. 
		It follows that
		$$
			A\dual = A \liprod I^{-1} 
			\propto (a_k \wedge \cdots \wedge a_1) \liprod (a^1 \wedge \cdots \wedge a^k \wedge a^{k+1} \wedge \cdots \wedge a^n)
			= a^{k+1} \wedge \cdots \wedge a^n.
		$$
		This can also be proved using an expansion of the 
		inner product into sub-blades; see Appendix \ref{app_inner_prod_expansion}.
	\end{proof}
	
	\begin{cor}
		If $A$ and $B$ are blades then $A\dual$, $A \wedge B$, $A \vee B$ and $A \liprod B$ are blades as well.
	\end{cor}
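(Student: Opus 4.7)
The plan is to handle the four claims in the order stated. The claims for $A\dual$ and $A \wedge B$ follow almost immediately from Theorem \ref{thm_reciprocal_blade} and the definition of a blade, respectively; the remaining two reduce to these by the duality identities in Proposition \ref{prop_dual}.

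First, for $A \wedge B$: writing $A = a_1 \wedge \cdots \wedge a_k$ and $B = b_1 \wedge \cdots \wedge b_l$, associativity of $\wedge$ (from Proposition \ref{prop_trix}) gives $A \wedge B = a_1 \wedge \cdots \wedge a_k \wedge b_1 \wedge \cdots \wedge b_l$, a $(k+l)$-blade. For $A\dual$: if $A = 0$ there is nothing to prove; otherwise extend $\{a_i\}_{i=1}^k$ to a basis of $V$ and invoke Theorem \ref{thm_reciprocal_blade} to exhibit $A\dual$ as a scalar multiple of the blade $a^{k+1} \wedge \cdots \wedge a^n$.

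For the meet and the left inner product, the key observation is that $\dual$ is, up to a nonzero scalar, an involution: since $x\dual\dual = x I^{-2}$ and $I^2 \in \mathbb{F}^\times$ by nondegeneracy (which is tacit because $\dual$ appears in the statement), every blade $B$ equals $\lambda^{-1} B\dual\dual$ with $\lambda := I^{-2}$. Combined with the identity $x \liprod y\dual = (x \wedge y)\dual$ from Proposition \ref{prop_dual}, substituting $y = \lambda^{-1} B\dual$ yields
\[
A \liprod B \;=\; \lambda^{-1} (A \wedge B\dual)\dual,
\]
which is a scalar multiple of a blade by the two cases already settled, hence a blade. For the meet, the dual-outer identity $(A \vee B)\dual = A\dual \wedge B\dual$ (unwinding Definition \ref{def_dual}, or equivalently via Proposition \ref{prop_dual}) gives $A \vee B = \lambda^{-1}(A\dual \wedge B\dual)\dual$, again a scalar times a blade.

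There is no substantial obstacle. Everything rests on Theorem \ref{thm_reciprocal_blade} together with the algebraic identities of Proposition \ref{prop_dual}; the only subtlety worth stating is the implicit nondegeneracy assumption, which is precisely what lets us divide by $I^2$ and convert the claims involving $\liprod$ and $\vee$ into the (already established) claims involving $\wedge$ and $\dual$.
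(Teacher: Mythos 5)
Your proof is correct, and it takes the expected route: the paper leaves this as an exercise and clearly intends it to be proved exactly as you have, by combining Theorem \ref{thm_reciprocal_blade} with the duality identities of Proposition \ref{prop_dual} (and the elementary case of $\wedge$). The reduction of $\liprod$ and $\vee$ to $\wedge$ and $\dual$ via $x^{\mathbf{cc}} = I^{-2}x$ is exactly the intended mechanism, and you have correctly flagged the tacit nondegeneracy that makes $I$ invertible.
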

	
	\noindent
	The blade-subspace correspondence then gives us a geometric interpretation of
	these operations.
	
	\begin{prop} \label{prop_subspace_ops}
		If $A,B \in \mathcal{B}^*$ are nonzero blades then $\overline{A\dual} = \bar{A}^\perp$ and
		\begin{displaymath}
		\setlength\arraycolsep{2pt}
		\begin{array}{rcl}
			A \wedge B \neq 0 
				&\Rightarrow& 
				\overline{A \wedge B} = \bar{A} + \bar{B} \ \textrm{and}\ \bar{A} \cap \bar{B} = 0, \\[5pt]
			\bar{A} + \bar{B} = V
				 &\Rightarrow& 
				\overline{A \vee B} = \bar{A} \cap \bar{B}, \\[5pt]
			A \liprod B \neq 0 
				&\Rightarrow& 
				\overline{A \liprod B} = \bar{A}^\perp \cap \bar{B}, \\[5pt]
			\bar{A} \subseteq \bar{B} 
				&\Rightarrow& 
				A \liprod B = AB, \\[5pt]
			\bar{A} \cap \bar{B}^\perp \neq 0 
				&\Rightarrow& 
				A \liprod B = 0.
		\end{array}
		\end{displaymath}
	\end{prop}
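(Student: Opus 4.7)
My plan is to handle the six assertions in a layered order that exploits earlier machinery. Claims (1) and (2) are direct consequences of Theorem~\ref{thm_reciprocal_blade} and Theorem~\ref{thm_linear_indep} respectively; claims (3) and (4) then follow almost formally from (1) and (2) by applying the duality identities of Proposition~\ref{prop_dual}; and claims (5) and (6) are separated out as ``computational'' statements that require the vector--multivector identities of Propositions \ref{prop_trix}--\ref{prop_trix2} together with the expansion formula \eqref{vec_blade_expansion} from Exercise~\ref{exc_anti_derivation}.

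For (1), I would extend $\{a_1,\ldots,a_k\}$ to a basis $\{a_1,\ldots,a_n\}$ of $V$. By Theorem~\ref{thm_reciprocal_blade}, $A\dual \propto a^{k+1}\wedge \cdots \wedge a^n$, so $\overline{A\dual} = \mathrm{Span}\{a^{k+1},\ldots,a^n\}$. The reciprocal condition $a^j * a_i = \delta^j_i$ forces each $a^j$ with $j>k$ to lie in $\bar{A}^\perp$; since the ambient form is nondegenerate we have $\dim \bar{A}^\perp = n-k$, so these $n-k$ linearly independent vectors span $\bar{A}^\perp$. For (2), nonvanishing of $A\wedge B=a_1\wedge\cdots\wedge a_p\wedge b_1\wedge\cdots\wedge b_q$ means by Theorem~\ref{thm_linear_indep} that all $p+q$ vectors are linearly independent, which simultaneously yields $\bar{A}\cap\bar{B}=0$ and $\overline{A\wedge B}=\bar{A}+\bar{B}$. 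For (3), the definition of the meet gives $(A\vee B)\dual = A\dual \wedge B\dual$; the hypothesis $\bar{A}+\bar{B}=V$ combined with (1) forces $\bar{A}^\perp \cap \bar{B}^\perp = 0$, so (2) applies to $A\dual \wedge B\dual$, producing $\overline{(A\vee B)\dual} = \bar{A}^\perp+\bar{B}^\perp$, and a final application of (1) plus nondegeneracy yields $\overline{A\vee B} = \bar{A}\cap\bar{B}$. Claim (4) is entirely analogous: using $(A\liprod B)\dual = A\wedge B\dual$ from Proposition~\ref{prop_dual}, nonvanishing of $A\liprod B$ transfers to nonvanishing of $A\wedge B\dual$, and (2) together with (1) gives $\overline{A\liprod B} = (\bar{A}+\bar{B}^\perp)^\perp = \bar{A}^\perp\cap\bar{B}$.

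The remaining two claims are more hands-on. For (6), I would pick a basis $\{a_1,\ldots,a_p\}$ of $\bar{A}$ with $a_1\in\bar{A}\cap\bar{B}^\perp$, so that $A=a_1\wedge\cdots\wedge a_p$. Using $(x\wedge y)\liprod z = x\liprod(y\liprod z)$ from Proposition~\ref{prop_trix}, I can peel off $a_1$ as $A\liprod B = a_1 \liprod\bigl((a_2\wedge\cdots\wedge a_p)\liprod B\bigr)$; iterating the expansion formula \eqref{vec_blade_expansion} shows that $(a_2\wedge\cdots\wedge a_p)\liprod B$ is a multivector in $\bigwedge^*\bar{B}$, and then $a_1\liprod(\,\cdot\,)=0$ on all of $\bigwedge^*\bar{B}$ since $a_1\perp\bar{B}$. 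For (5), I would proceed by induction on $p=|A|$. The base $p=1$ is immediate: $a\in\bar{A}\subseteq\bar{B}$ gives $a\wedge B=0$ by Proposition~\ref{prop_blade_subspace}, hence $aB = a\liprod B + a\wedge B = a\liprod B$. For the inductive step, writing $A=a\wedge A'$ with $a\in\bar{B}$ and $\bar{A'}\subseteq \bar{B}$, I expand $a\wedge A' = \tfrac{1}{2}(aA' + (A')^\star a)$ via Proposition~\ref{prop_trix2}, apply the inductive hypothesis to $A'B$ and the base case to $aB$, and then match the result to $(a\wedge A')\liprod B = a\liprod(A'\liprod B)$ using $a(A'\liprod B) = a\liprod(A'\liprod B) + a\wedge(A'\liprod B)$. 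The main obstacle I anticipate is precisely this inductive bookkeeping in (5): one must keep track of grade involutions and the term $a\wedge(A'\liprod B)$, which requires observing that $A'\liprod B$ lives in $\bigwedge^*\bar{B}$ and that $a\in\bar{B}$, forcing the wedge contribution into the appropriate cancellation, so that only the pure inner-product grade $|B|-|A|$ survives.
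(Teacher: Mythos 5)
Your treatment of the dual claim and the first three implications (via Theorems \ref{thm_reciprocal_blade}, \ref{thm_linear_indep}, and the duality identities of Proposition \ref{prop_dual}) is correct, and dualization is the natural route for (3) and (4); the paper leaves this proposition as an exercise, referring to \cite{svensson,hestenes_sobczyk}, so there is no paper proof to compare against. Two minor remarks: to apply (2) to $A\dual\wedge B\dual$ in your (3) you silently need the converse implication $\bar C\cap\bar D=0\Rightarrow C\wedge D\neq 0$, true by Theorem \ref{thm_linear_indep} but worth a word; and your (6) also admits a one-line dual proof, since $\bar A\cap\bar B^\perp\neq 0$ forces $A\wedge B\dual=0$ by Theorem \ref{thm_linear_indep}, and $A\wedge B\dual=(A\liprod B)\dual$.

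The one place your plan has a genuine gap is the inductive step of (5). The ingredients you cite ($A'\liprod B$ lies in the exterior algebra of $\bar B$, and $a\in\bar B$) do not by themselves produce the cancellation: $a\wedge(A'\liprod B)$ does \emph{not} vanish in general (take $B=n_+\wedge n_-$ in $\mathbb{R}^{1,1}$, $a=n_-$, $A'=n_+$). What actually closes the induction is the identity $aB=-B^\star a$, which is nothing but $a\wedge B=0$ rewritten via Proposition \ref{prop_trix2}, together with the fact that $\star$ is an algebra automorphism:
\begin{displaymath}
	(a\wedge A')B = \tfrac{1}{2}\big(aA'B+(A')^\star aB\big) = \tfrac{1}{2}\big(a(A'\liprod B)-(A'\liprod B)^\star a\big) = a\liprod(A'\liprod B),
\end{displaymath}
where the middle equality uses the inductive hypothesis $A'B=A'\liprod B$ together with $(A')^\star aB=-(A')^\star B^\star a=-(A'B)^\star a$, the outer two are Proposition \ref{prop_trix2}, and the right-hand side equals $(a\wedge A')\liprod B$ by Proposition \ref{prop_trix}. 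Alternatively you can avoid the induction entirely: choose an orthogonal basis $\{f_1,\ldots,f_q\}$ of $(\bar B,q|_{\bar B})$ via Theorem \ref{thm_o_basis}, so that $B\propto f_1\cdots f_q$ and $A$ expands as a linear combination of products $f_{\lambda_1}\cdots f_{\lambda_p}$; each such term times $B$ has grade $q-p$ by mutual orthogonality of the $f_i$, hence $AB=\langle AB\rangle_{q-p}=A\liprod B$.
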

	
	\noindent
	The proofs of the statements in the above corollary and proposition are left as exercises. 
	Some of them can be found in \cite{svensson} and \cite{hestenes_sobczyk}.
	
	\begin{exc} \label{exc_blade_expansion}
		Prove Eqn. \eqref{blade_expansion}, e.g. by noting that both
		sides are multilinear and alternating.
	\end{exc}

	\begin{exc} \label{exc_reciprocal}
		Verify that $\{e^i\}_i$ defined in \eqref{reciprocal_exp} is a 
		reciprocal basis w.r.t. $\{e_i\}_i$.
	\end{exc}

	\begin{exc}
		Prove the corollary to Theorem \ref{thm_reciprocal_blade}.
	\end{exc}

	\begin{exc}
		Prove Proposition \ref{prop_subspace_ops}.
	\end{exc}

	\begin{exc}
		Show that the square of an arbitrary blade 
		$A = a_1 \wedge \ldots \wedge a_k \in \mathcal{B}_k$ is
		$A^2 = (-1)^{\frac{1}{2}k(k-1)} \det [a_i * a_j]_{1\le i,j \le k}$.
		Also, show in two different ways that
		\begin{equation} \label{blade_square}
			(a \wedge b)^2 = (a * b)^2 - a^2b^2
		\end{equation}
		for $a,b \in V$.
	\end{exc}

	\begin{exc} \label{exc_basis_identities}
		Let $\{e_i\}_{i=1}^n$ be an arbitrary basis of a nondegenerate space.
		Show that $\sum_i e_i e^i = n$ and
		that for an arbitrary grade-$r$ multivector $A_r \in \cl^r$
		$$
			\sum_i e_i (e^i \liprod A_r) = rA_r,
		$$
		$$
			\sum_i e_i (e^i \wedge A_r) = (n-r)A_r,
		$$
		and
		$$
			\sum_i e_i A_r e^i = (-1)^r (n-2r) A_r.
		$$
	\end{exc}

	\begin{exc}
		Let $v \in V^\times = \{u \in V : u^2 \neq 0\}$. Show that the map
		$$
			V \ni x \ \mapsto \ \frac{1}{2}(x + v^\star x v^{-1}) \in V
		$$
		is an orthogonal projection on $\bar{v}^\perp = \{u \in V : u*v=0\}$.
	\end{exc}

	\begin{exc} \label{exc_gram_schmidt}
		Consider the real euclidean geometric algebra $\mathcal{G}(\mathbb{R}^n)$.
		Let $a_1, a_2,\\ \ldots$ be a sequence of 1-vectors in $\mathcal{G}$,
		and form the blades $A_0 = 1$, $A_k = a_1 \wedge a_2 \wedge \cdots \wedge a_k$.
		Now, let $b_k = A_{k-1}^\dagger A_k$.
		Show that $b_1,b_2,\ldots$ are the vectors obtained using the Gram-Schmidt
		orthogonalization procedure from $a_1,a_2,\ldots$.
	\end{exc}

\subsection{Linear functions}
	
	Since $\mathcal{G}$ is itself a vector space which embeds $V$, it is natural to consider the
	properties of linear functions on $\mathcal{G}$. There is a special class of such
	functions, called outermorphisms, which can be said to respect the graded/exterior structure
	of $\mathcal{G}$ in a natural way. We will see that, just as the geometric algebra $\mathcal{G}(V,q)$
	is completely determined by the underlying vector space $(V,q)$, an outermorphism
	is completely determined by its behaviour on $V$.
	
	\begin{defn} \label{def_outermorphism}
		A linear map $F\!: \mathcal{G} \to \mathcal{G}'$ 
		is called an \emph{outermorphism} or \emph{$\wedge$-morphism} if 
		\begin{displaymath}
		\setlength\arraycolsep{2pt}
		\begin{array}{rl}
			i)   & F(1) = 1, \\[5pt]
			ii)  & F(\mathcal{G}^m) \subseteq \mathcal{G}'^m \quad \forall\ m \geq 0, \quad \textrm{(grade preserving)} \\[5pt]
			iii) & F(x \wedge y) = F(x) \wedge F(y) \quad \forall\ x,y \in \mathcal{G}. \phantom{lagg pa samma tab}
		\end{array}
		\end{displaymath}
		A linear transformation $F\!: \mathcal{G} \to \mathcal{G}$ 
		is called a \emph{dual outermorphism} or \emph{$\vee$-morphism} if 
		\begin{displaymath}
		\setlength\arraycolsep{2pt}
		\begin{array}{rl}
			i)   & F(I) = I, \\[5pt]
			ii)  & F(\mathcal{G}^m) \subseteq \mathcal{G}^m \quad \forall\ m \geq 0, \\[5pt]
			iii) & F(x \vee y) = F(x) \vee F(y) \quad \forall\ x,y \in \mathcal{G}. \phantom{lagg pa samma tab}
		\end{array}
		\end{displaymath}
	\end{defn}
	
	\begin{thm} \label{thm_outermorphism}
		For every linear map $f\!: V \to W$ there exists a unique outermorphism
		$f_\wedge\!: \mathcal{G}(V) \to \mathcal{G}(W)$ such that $f_\wedge(v) = f(v) \ \forall\ v \in V$.
	\end{thm}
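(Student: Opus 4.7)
The plan is to build $f_\wedge$ on an orthogonal basis of $V$ and then check that the construction has the required properties. First I would fix an orthogonal basis $\{e_1,\ldots,e_n\}$ of $V$, so that by Theorem \ref{thm_linear_indep} (or its consequence \eqref{basis_of_g}) the corresponding basis blades $\{e_\mathbf{i}\}_\mathbf{i}$ form a basis of $\mathcal{G}(V)$. Define
\begin{equation*}
	f_\wedge(e_\mathbf{i}) := f(e_{i_1}) \wedge f(e_{i_2}) \wedge \cdots \wedge f(e_{i_k}), \qquad f_\wedge(1) := 1,
\end{equation*}
and extend by linearity to all of $\mathcal{G}(V)$. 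This is well-defined by construction, grade preserving (property $ii$), and it restricts to $f$ on $V$.

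Next I would establish property $iii$, the $\wedge$-multiplicativity. By linearity it suffices to check $f_\wedge(e_\mathbf{i} \wedge e_\mathbf{j}) = f_\wedge(e_\mathbf{i}) \wedge f_\wedge(e_\mathbf{j})$ for multi-indices $\mathbf{i},\mathbf{j}$. If $\mathbf{i}$ and $\mathbf{j}$ are disjoint, then $e_\mathbf{i} \wedge e_\mathbf{j} = \pm e_{\mathbf{i}\cup \mathbf{j}}$ with a sign determined by the permutation needed to reorder the factors; by associativity of $\wedge$ and Exercise \ref{exc_wedge_commutativity} (graded commutativity of $\wedge$ on homogeneous elements, which applies equally in $\mathcal{G}(W)$), the same sign appears on the right-hand side, so the two agree. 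If $\mathbf{i}$ and $\mathbf{j}$ share an index, then $e_\mathbf{i} \wedge e_\mathbf{j} = 0$; on the other side one gets a wedge containing a repeated vector $f(e_i) \wedge f(e_i)$, which vanishes in $\mathcal{G}(W)$. Thus $f_\wedge$ is an outermorphism. A consequence worth noting is that for \emph{any} vectors $v_1,\ldots,v_k \in V$,
\begin{equation*}
	f_\wedge(v_1 \wedge v_2 \wedge \cdots \wedge v_k) = f(v_1) \wedge f(v_2) \wedge \cdots \wedge f(v_k),
\end{equation*}
which follows from multilinearity of both sides in the $v_i$.

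For uniqueness, suppose $F\!: \mathcal{G}(V) \to \mathcal{G}(W)$ is any outermorphism with $F|_V = f$. Then properties $i$ and $iii$ of Definition \ref{def_outermorphism} force
\begin{equation*}
	F(e_{i_1} \wedge \cdots \wedge e_{i_k}) = F(e_{i_1}) \wedge \cdots \wedge F(e_{i_k}) = f(e_{i_1}) \wedge \cdots \wedge f(e_{i_k}) = f_\wedge(e_{i_1} \wedge \cdots \wedge e_{i_k}),
\end{equation*}
so $F$ and $f_\wedge$ agree on a basis of $\mathcal{G}(V)$ and hence on all of $\mathcal{G}(V)$ by linearity.

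The main obstacle is really just the bookkeeping for property $iii$ on disjoint basis blades, i.e.\ showing that the sign produced by reordering the $e_{i_p}$'s in $\mathcal{G}(V)$ matches the sign produced by reordering the $f(e_{i_p})$'s in $\mathcal{G}(W)$. This is clean because both wedges are associative and alternating on 1-vectors, so both signs are determined by the same permutation. (Equivalently, one can invoke the universal property of the exterior algebra: the multilinear alternating map $(v_1,\ldots,v_k) \mapsto f(v_1) \wedge \cdots \wedge f(v_k)$ factors uniquely through $\wedge^k V \cong \mathcal{G}^k(V)$, and assembling these maps over all $k$ gives $f_\wedge$ directly.)
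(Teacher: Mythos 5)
Your proof is correct and takes essentially the same route as the paper's: define $f_\wedge$ on the basis blades built from a basis of $V$, extend linearly, and verify the outermorphism axioms, with uniqueness forced by property (\emph{iii}). The only cosmetic differences are that you fix an orthogonal basis where the paper uses a general one (either works, since only the exterior-product structure of $\mathcal{G}(V)$ is involved), and you spell out the sign bookkeeping in the verification of (\emph{iii}), which the paper leaves as an exercise of expanding in the induced basis.
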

	\begin{proof}
		Take a general basis $\{e_1,\ldots,e_n\}$ of $V$ and define,
		for $1 \leq i_1 < i_2 < \ldots < i_m \leq n$,
		\begin{equation}
			f_\wedge(e_{i_1} \wedge \cdots \wedge e_{i_m}) := f(e_{i_1}) \wedge \cdots \wedge f(e_{i_m}),
		\end{equation}
		and extend $f_\wedge$ to the whole of $\mathcal{G}(V)$ by linearity.
		We also define $f_\wedge(\alpha) := \alpha$ for $\alpha \in \mathbb{F}$.
		Hence, (\emph{i}) and (\emph{ii}) are satisfied. (\emph{iii}) is easily
		verified by expanding in the induced basis $\{e_\mathbf{i}\}$ of $\mathcal{G}(V)$.
		Uniqueness is obvious since our definition was necessary.
	\end{proof}
	
	\noindent
	Uniqueness immediately implies the following.
	
	\begin{cor}
		If $f\!: V \to V'$ and $g\!: V' \to V''$
		are linear then $(g \circ f)_\wedge = g_\wedge \circ f_\wedge$.
	\end{cor}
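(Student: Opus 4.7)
The plan is to invoke the uniqueness clause of Theorem \ref{thm_outermorphism} directly: I will exhibit $g_\wedge \circ f_\wedge$ as an outermorphism extending the linear map $g \circ f\!: V \to V''$, and then conclude it must coincide with the unique such extension $(g \circ f)_\wedge$.

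First, I verify that the composition $H := g_\wedge \circ f_\wedge\!: \mathcal{G}(V) \to \mathcal{G}(V'')$ satisfies the three outermorphism axioms. Property (\emph{i}) is immediate: $H(1) = g_\wedge(f_\wedge(1)) = g_\wedge(1) = 1$. Property (\emph{ii}) follows because each factor is grade preserving, so $H(\mathcal{G}^m(V)) \subseteq g_\wedge(\mathcal{G}^m(V')) \subseteq \mathcal{G}^m(V'')$. Property (\emph{iii}) uses the wedge-preservation of each factor in turn:
\begin{equation*}
H(x \wedge y) = g_\wedge\bigl(f_\wedge(x) \wedge f_\wedge(y)\bigr) = g_\wedge(f_\wedge(x)) \wedge g_\wedge(f_\wedge(y)) = H(x) \wedge H(y).
\end{equation*}

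Next, I check that $H$ restricts on $V = \mathcal{G}^1(V)$ to the linear map $g \circ f$. For any $v \in V$, the defining property of $f_\wedge$ gives $f_\wedge(v) = f(v) \in V'$, and then the defining property of $g_\wedge$ gives $g_\wedge(f(v)) = g(f(v))$. Hence $H(v) = (g \circ f)(v)$ for all $v \in V$.

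Finally, Theorem \ref{thm_outermorphism} guarantees that there is exactly one outermorphism $\mathcal{G}(V) \to \mathcal{G}(V'')$ whose restriction to $V$ equals the linear map $g \circ f$, and by definition that outermorphism is $(g \circ f)_\wedge$. Since $H$ is an outermorphism with the same restriction to $V$, uniqueness yields $H = (g \circ f)_\wedge$, i.e.\ $g_\wedge \circ f_\wedge = (g \circ f)_\wedge$. There is no genuine obstacle here; the only thing one must be careful about is not to circularly reprove uniqueness, so the argument is cleanest when phrased entirely as an appeal to the theorem rather than by reinducting on the basis expansion.
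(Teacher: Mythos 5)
Your proof is correct and follows essentially the same route as the paper, which derives the corollary directly from the uniqueness clause of Theorem \ref{thm_outermorphism}; you merely spell out the straightforward verification that $g_\wedge \circ f_\wedge$ is an outermorphism restricting to $g \circ f$ on $V$.
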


	\begin{cor}
		If $F\!: \mathcal{G}(V) \to \mathcal{G}(W)$ is an outermorphism
		then $F = (F|_V)_\wedge$.
	\end{cor}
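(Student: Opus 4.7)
The plan is to apply the uniqueness clause of Theorem \ref{thm_outermorphism} to the map $F|_V$. First I would verify that $F|_V$ is a well-defined linear map $V \to W$: since $F$ is linear and grade-preserving with $V = \mathcal{G}^1(V)$ and $W = \mathcal{G}'^1$, condition (\emph{ii}) of Definition \ref{def_outermorphism} gives $F(V) \subseteq W$, so $F|_V$ makes sense as a linear map between the underlying vector spaces.

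Next I would invoke Theorem \ref{thm_outermorphism} applied to $f := F|_V$, which produces a canonical outermorphism $(F|_V)_\wedge\!: \mathcal{G}(V) \to \mathcal{G}(W)$ extending $F|_V$. The key observation is that $F$ itself also qualifies as an outermorphism $\mathcal{G}(V) \to \mathcal{G}(W)$ extending the same map $F|_V$ on $V$: by hypothesis $F$ satisfies (\emph{i})--(\emph{iii}) of the outermorphism definition, and trivially $F(v) = F|_V(v)$ for $v \in V$. Since Theorem \ref{thm_outermorphism} asserts that such an extending outermorphism is unique, we conclude $F = (F|_V)_\wedge$.

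There is really no obstacle here; the only thing worth being slightly careful about is that the uniqueness statement in Theorem \ref{thm_outermorphism} is exactly of the form needed, namely uniqueness among all outermorphisms (not merely among the outermorphisms built by the explicit basis formula in the proof). Reading the theorem, uniqueness is asserted for outermorphisms extending a given linear map on $V$, which is precisely what we apply.
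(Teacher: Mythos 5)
Your proof is correct and matches the paper's intent: the paper simply notes that the corollary "immediately" follows from the uniqueness clause of Theorem \ref{thm_outermorphism}, which is exactly the argument you spell out. The only thing you add is the (correct and worth noting) observation that grade preservation guarantees $F|_V$ lands in $W$, so the theorem is indeed applicable.
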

	
	\begin{rem}
		In the setting of $\cl$ this means that an outermorphism $F\!: \cl(X,R,r) \to \cl(X',R,r')$ 
		is completely determined by its values on the elements of $X$.
		Also note that, if $E,F$ are two orthonormal bases of $V$ 
		and $F = \{f_j = g(e_j)\}$ then 
		$$
			\mathcal{P}(E) \ni A_E \stackrel{g_\wedge}{\mapsto} A_F \in \mathcal{P}(F)
		$$
		so that $g_\wedge: \cl(E) \to \cl(F)$ is an isomorphism of $R$-algebras.
	\end{rem}
	
	We have seen that a nondegenerate $\mathcal{G}$ results in a nondegenerate bilinear form $x*y$.
	This gives a canonical isomorphism 
	$$
		\theta\!: \mathcal{G} \to \mathcal{G}^* = \textrm{Lin}(\mathcal{G},\mathbb{F})
	$$
	between the elements of $\mathcal{G}$ and the linear functionals on	$\mathcal{G}$ as follows.
	For every $x \in \mathcal{G}$ we define a linear functional $\theta(x)$ by $\theta(x)(y) := x*y$.
	Taking a general basis $\{e_\mathbf{i}\}_\mathbf{i}$ of $\mathcal{G}$ and using \eqref{reciprocal_g} 
	we obtain a dual basis $\{\theta(e^\mathbf{i})\}_\mathbf{i}$ such that 
	$\theta(e^\mathbf{i})(e_\mathbf{j}) = \delta_\mathbf{j}^\mathbf{i}$.
	This shows that $\theta$ is an isomorphism.

	Now that we have a canonical way of moving between $\mathcal{G}$ and its dual space $\mathcal{G}^*$, 
	we can for every linear map $F\!: \mathcal{G} \to \mathcal{G}$ define an \emph{adjoint map}
	$F^*\!: \mathcal{G} \to \mathcal{G}$ by
	\begin{equation} \label{adjoint_def}
		F^*(x) := \theta^{-1}\big( \theta(x) \circ F \big).
	\end{equation}
	By definition, this has the expected and unique property
	\begin{equation} \label{adjoint_property}
		F^*(x) * y = x * F(y)
	\end{equation}
	for all $x,y \in \mathcal{G}$.
	Note that if we restrict our attention to 
	$V = \mathcal{G}^1$ then this construction results 
	in the usual adjoint (transpose) $f^*$ of a linear map $f\!: V \to V$.
	
	\begin{thm}[Hestenes' Theorem] \label{thm_hestenes}
		Assume that $\mathcal{G}$ is nondegenerate and let $F\!: \mathcal{G} \to \mathcal{G}$ be an outermorphism. 
		Then the adjoint $F^*$ is also an outermorphism and
		\begin{displaymath}
		\setlength\arraycolsep{2pt}
		\begin{array}{ccc}
			x \liprod F(y) 	&=& F\big( F^*(x) \liprod y \big), \\[5pt]
			F(x) \riprod y 	&=& F\big( x \riprod F^*(y) \big),
		\end{array}
		\end{displaymath}
		for all $x,y \in \mathcal{G}$.
	\end{thm}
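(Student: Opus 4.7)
The plan is to establish four pieces in sequence: that $F^*$ fixes scalars and is grade-preserving, the ``vector case'' of the first identity, its extension to blades (yielding both the outermorphism property of $F^*$ and the full first identity), and finally the right-inner identity via a reversion argument. Grade preservation and $F^*(1)=1$ fall out of \eqref{adjoint_property} immediately: for $x \in \mathcal{G}^k$ and $y \in \mathcal{G}^m$ with $m \neq k$, $F^*(x) * y = x * F(y) = 0$ since $F$ is grade-preserving and distinct grades are $*$-orthogonal, so nondegeneracy forces $F^*(x) \in \mathcal{G}^k$; and $F^*(1) * y = 1 * F(y) = \langle F(y)\rangle_0 = \langle y\rangle_0 = 1 * y$ for every $y$ gives $F^*(1) = 1$.

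Next I would prove the vector case $a \liprod F(y) = F\bigl(F^*(a) \liprod y\bigr)$ for $a \in V$ by induction on the grade of $y$. The scalar case is trivial and the $1$-vector case reads $a \liprod F(v) = a * F(v) = F^*(a) * v = F\bigl(F^*(a) \liprod v\bigr)$. For the inductive step, write $y = u \wedge y'$ with $u \in V$ and $y'$ of lower grade, and apply the anti-derivation formula of Exercise \ref{exc_anti_derivation} to each side, using the outermorphism property $F(u \wedge y') = F(u) \wedge F(y')$ on the left. The scalar pieces $a \liprod F(u) = F^*(a) * u$ match, the inductive hypothesis handles $a \liprod F(y')$, and the remaining wedge terms line up because $F$ respects $\wedge$ and is grade-preserving (so it commutes with the grade involution $\star$ on vectors). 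I expect this to be the main obstacle: the anti-derivation formula must be applied in parallel on both sides so that all terms match after invoking the outermorphism property.

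Armed with the vector case, I iterate Proposition \ref{prop_trix} to reduce a $k$-blade inner product to nested vector inner products,
\begin{equation*}
	(b_1 \wedge \cdots \wedge b_k) \liprod F(y) = b_1 \liprod \bigl( b_2 \liprod \cdots (b_k \liprod F(y)) \bigr),
\end{equation*}
apply the vector case $k$ times from the inside out, and reassemble to get $B \liprod F(y) = F\bigl( (F^*(b_1) \wedge \cdots \wedge F^*(b_k)) \liprod y \bigr)$ where $B = b_1 \wedge \cdots \wedge b_k$. To identify the bracketed expression with $F^*(B) \liprod y$, pair both sides with an arbitrary $w \in \mathcal{G}$ using $u * (v \liprod z) = (u \wedge v) * z$ (Proposition \ref{prop_trix}) and \eqref{adjoint_property}: the left yields $F^*(w \wedge B) * y$ and the right yields $(F^*(w) \wedge F^*(b_1) \wedge \cdots \wedge F^*(b_k)) * y$. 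Nondegeneracy of $*$ in $y$ gives $F^*(w \wedge B) = F^*(w) \wedge F^*(b_1) \wedge \cdots \wedge F^*(b_k)$, and specialising $w = 1$ yields $F^*(B) = F^*(b_1) \wedge \cdots \wedge F^*(b_k)$. This is axiom \emph{iii} of Definition \ref{def_outermorphism} for $F^*$, and substituting back shows the first identity for all blades $x$, hence for all $x \in \mathcal{G}$ by linearity.

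For the right-inner identity I would pass through reversion. A direct check on basis sets shows $(u \liprod v)^\dagger = v^\dagger \riprod u^\dagger$ for all $u,v \in \mathcal{G}$; the outermorphism $F$ commutes with reversion since $F(v_k \wedge \cdots \wedge v_1) = F(v_k) \wedge \cdots \wedge F(v_1)$; and reversion commutes with the adjoint via the symmetry $\langle uv \rangle_0 = \langle vu \rangle_0$, which gives $a^\dagger * b = a * b^\dagger$ and hence $F^*(x^\dagger) * y = x^\dagger * F(y) = x * F(y)^\dagger = F^*(x)^\dagger * y$ for all $y$. Applying reversion to the first identity and renaming $u = y^\dagger$, $v = x^\dagger$ then produces $F(u) \riprod v = F\bigl( u \riprod F^*(v) \bigr)$, as required.
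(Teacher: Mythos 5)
Your proof is correct, but it takes a different path from the paper's. The paper first establishes that $F^*$ is an outermorphism directly, by expanding $F^*(x_1 \wedge \cdots \wedge x_m) * (y_m \wedge \cdots \wedge y_1)$ as a Gram determinant via Lemma~\ref{lem_blade_det}, moving each $F$ across the determinant entries using the adjoint property, and reassembling; both inner-product identities then fall out of a single pairing argument $z * \bigl(x \liprod F(y)\bigr) = (z \wedge x) * F(y) = F^*(z \wedge x) * y = \cdots = z * F\bigl(F^*(x) \liprod y\bigr)$. You instead prove the vector case $a \liprod F(y) = F\bigl(F^*(a) \liprod y\bigr)$ first, by induction on the grade of $y$ using the anti-derivation formula, iterate to blades with Proposition~\ref{prop_trix}, and only then extract the outermorphism property of $F^*$ as a by-product of the pairing argument. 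This reverses the logical order relative to the paper (identity first, outermorphism second) and avoids Lemma~\ref{lem_blade_det} entirely, at the cost of a longer induction; it has the virtue of making explicit how the identity and the $\wedge$-morphism property of $F^*$ are entangled. For the right-inner identity the paper says to repeat the pairing argument with $(x \riprod y)*z = x*(y \wedge z)$, whereas you derive it by conjugating the left-inner identity with reversion, using that reversion swaps $\liprod$ and $\riprod$, commutes with $F$ (grade-preservation plus $F(v_k \wedge \cdots \wedge v_1) = F(v_k) \wedge \cdots \wedge F(v_1)$), and commutes with $F^*$. That reversion route is slick and genuinely different from what the paper sketches.
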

	\begin{proof}
		We first prove that $F^*$ is an outermorphism. 
		The fact that $F^*$ is grade preserving follows from \eqref{adjoint_property} and
		the grade preserving property of $F$. Now take basis blades
		$x = x_1 \wedge \cdots \wedge x_m$ and $y = y_m \wedge \cdots \wedge y_1$ 
		with $x_i,y_j \in V$. Then
		\begin{displaymath}
		\setlength\arraycolsep{2pt}
		\begin{array}{rcl}
			F^*(x_1 \wedge \cdots \wedge x_m) * y
				&=& (x_1 \wedge \cdots \wedge x_m) * F(y_m \wedge \cdots \wedge y_1) \\[5pt]
				&=& (x_1 \wedge \cdots \wedge x_m) * \big(F(y_m) \wedge \cdots \wedge F(y_1)\big) \\[5pt]
				&=& \det\thinspace \big[ x_i * F(y_j) \big]_{i,j} = \det\thinspace \big[ F^*(x_i) * y_j \big]_{i,j} \\[5pt]
				&=& \big(F^*(x_1) \wedge \cdots \wedge F^*(x_m)\big) * (y_m \wedge \cdots \wedge y_1) \\[5pt]
				&=& \big(F^*(x_1) \wedge \cdots \wedge F^*(x_m)\big) * y,
		\end{array}
		\end{displaymath}
		where we have used Lemma \ref{lem_blade_det}.
		By linearity and nondegeneracy it follows that $F^*$ is an outermorphism.
		The first identity stated in the therorem now follows quite easily from 
		Proposition \ref{prop_trix}. For any $z \in \mathcal{G}$ we have
		\begin{displaymath}
		\setlength\arraycolsep{2pt}
		\begin{array}{rcl}
			z * \big(x \liprod F(y)\big)
				&=& (z \wedge x) * F(y) = F^*(z \wedge x) * y \\[5pt]
				&=& \big(F^*(z) \wedge F^*(x)\big) * y = F^*(z) * \big(F^*(x) \liprod y\big) \\[5pt]
				&=& z * F\big(F^*(x) \liprod y\big).
		\end{array}
		\end{displaymath}
		The nondegeneracy of the scalar product then gives the first identity.
		The second identity is proven similarly, using that $(x \riprod y) * z = x * (y \wedge z)$.
	\end{proof}
	
	\noindent
	From uniqueness of outermorphisms we also obtain the following
	
	\begin{cor}
		If $f\!: V \to V$ is a linear transformation then
		$(f^*)_\wedge = (f_\wedge)^*$.
	\end{cor}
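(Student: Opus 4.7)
The plan is to invoke uniqueness of outermorphisms (the first corollary to Theorem~\ref{thm_outermorphism}), which reduces the claim to checking that $(f^*)_\wedge$ and $(f_\wedge)^*$ are both outermorphisms on $\mathcal{G}(V)$ that agree on the generating subspace $V$.

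First, $(f^*)_\wedge$ is an outermorphism by construction (Theorem~\ref{thm_outermorphism}), and it restricts to $f^*$ on $V$. On the other hand, $f_\wedge\!: \mathcal{G}(V) \to \mathcal{G}(V)$ is an outermorphism, so by Hestenes' Theorem \ref{thm_hestenes} its adjoint $(f_\wedge)^*$ is also an outermorphism; in particular it is grade preserving, so $(f_\wedge)^*(V) \subseteq V$.

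Next I would compare the two maps on $V$. For any $v,w \in V$, using the defining property \eqref{adjoint_property} and the fact that $f_\wedge|_V = f$,
\begin{displaymath}
    (f_\wedge)^*(v) * w \;=\; v * f_\wedge(w) \;=\; v * f(w) \;=\; f^*(v) * w.
\end{displaymath}
Since $\mathcal{G}$ is nondegenerate and the scalar product is zero between elements of different grades, its restriction to $V = \mathcal{G}^1$ is nondegenerate, so this forces $(f_\wedge)^*(v) = f^*(v) = (f^*)_\wedge(v)$ for all $v \in V$.

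Having two outermorphisms $\mathcal{G}(V) \to \mathcal{G}(V)$ that coincide on $V$, the uniqueness part of Theorem~\ref{thm_outermorphism} (equivalently, its first corollary applied to the restrictions to $V$) gives $(f_\wedge)^* = (f^*)_\wedge$ on all of $\mathcal{G}(V)$. The only point that requires any care is verifying that $(f_\wedge)^*$ really is an outermorphism (to place it in the scope of the uniqueness statement); this is precisely what Hestenes' theorem supplies, and it is the small observation that does the essential work.
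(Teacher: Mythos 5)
Your proof is correct and follows exactly the route the paper has in mind: the corollary is stated immediately after Hestenes' Theorem precisely because that theorem supplies the fact that $(f_\wedge)^*$ is an outermorphism, after which uniqueness of outermorphisms (together with the observation that the adjoint restricts to $f^*$ on $V$) finishes the argument.
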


	\noindent
	This means that we can simply write $f^*_\wedge$ for the adjoint outermorphism of $f$.
	
	Another powerful concept in geometric algebra (or exterior algebra) is the generalization of eigenvectors
	to so called \emph{eigenblades}. For a function $f\!: V \to V$, a $k$-eigenblade
	with eigenvalue $\lambda \in \mathbb{F}$ is a blade $A \in \mathcal{B}_k$ such that
	\begin{equation} \label{eigenblade_def}
		f_\wedge(A) = \lambda A.
	\end{equation}
	Just as eigenvectors can be said to represent 
	invariant 1-dimensional subspaces of a function, a $k$-blade with nonzero
	eigenvalue represents an invariant $k$-dimensional subspace.
	One important example of an eigenblade is the pseudoscalar $I$, which represents the whole 
	invariant vector space $V$.
	Since $f_\wedge$ is
	grade preserving, we must have $f_\wedge(I) = \lambda I$ for some $\lambda \in \mathbb{F}$
	which we call the \emph{determinant} of $f$, i.e.
	\begin{equation} \label{determinant_def}
		f_\wedge(I) = (\det f) I.
	\end{equation}
	Expanding $\det f = f_\wedge(I) * I^{-1}$ in a basis using Lemma \ref{lem_blade_det}, one finds
	that this agrees with the usual definition of the determinant of a linear function
	(see Exercise \ref{exc_determinant}).

	\begin{exmp}
		Since
		$$
			(\det f) I^2 = f_\wedge(I)*I = I*f_\wedge^*(I) = (\det f^*) I^2,
		$$
		we immediately find that $\det f = \det f^*$.
		Also, by the corollary to Theorem \ref{thm_outermorphism}
		we have
		$$
			(f \circ g)_\wedge(I) = f_\wedge\big( g_\wedge(I) \big) = f_\wedge\big( (\det g)I \big) = (\det g)(\det f)I,
		$$
		so that $\det(fg) = \det f \cdot \det g$.
	\end{exmp}

	\begin{defn} \label{def_dual_map}
		For linear $F\!: \mathcal{G} \to \mathcal{G}$ we define the \emph{dual map} 
		$F\dual\!: \mathcal{G} \to \mathcal{G}$ by
		$F\dual(x) := F(xI)I^{-1}$, so that the following diagram commutes:
		\begin{displaymath}
		\setlength\arraycolsep{2pt}
		\begin{array}{rcccl}
							& \mathcal{G} 	& \xrightarrow{F} 		& \mathcal{G} \\
			(\cdot)\dual 	& \downarrow	&						& \downarrow	& (\cdot)\dual \\
							& \mathcal{G}	& \xrightarrow{F\dual}	& \mathcal{G}
		\end{array}
		\end{displaymath}
	\end{defn}
	
	\begin{prop} \label{prop_dual_map}
		We have the following properties of the dual map:
		\begin{displaymath}
		\setlength\arraycolsep{2pt}
		\begin{array}{rrcl}
			i) 	& F^\mathbf{cc} &=& F, \\[5pt]
			ii) & (F \circ G)\dual &=& F\dual \circ G\dual, \\[5pt]
			iii)& \id\dual &=& \id, \\[5pt]
			iv) & F(\mathcal{G}^s) \subseteq \mathcal{G}^t &\Rightarrow& F\dual(\mathcal{G}^{\dim V -s}) \subseteq \mathcal{G}^{\dim V -t}, \\[5pt]
			v) 	& F \ \textrm{$\wedge$-morphism} &\Rightarrow& F\dual \ \textrm{$\vee$-morphism}, \\[5pt]
			vi) & (F^*)\dual &=& (F\dual)^* \ \textrm{if $F$ is grade preserving},
		\end{array}
		\end{displaymath}
		for all linear $F,G\!: \mathcal{G} \to \mathcal{G}$.
	\end{prop}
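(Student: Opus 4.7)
The plan is to establish the six properties in turn, leaning on two structural facts throughout: first, nondegeneracy forces $I^2 \in \mathbb{F}^{\times}$, so $I^{-1} = I/I^2$ differs from $I$ by a central scalar; second, Proposition \ref{prop_dual} gives the grade shift $x \in \mathcal{G}^k \Rightarrow xI^{\pm 1} \in \mathcal{G}^{\dim V - k}$, and an easy induction on the number of vector factors yields the commutation rule $Iy = (-1)^{r(n-1)}yI$ for $y \in \mathcal{G}^r$, with $n = \dim V$. With these in hand, properties (i)--(iv) collapse to short computations. For (i), $F^{\mathbf{cc}}(x) = F(xI \cdot I)I^{-1}I^{-1} = I^2 F(x) I^{-2} = F(x)$ since $I^2$ is a central scalar. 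For (ii), both $(F \circ G)\dual(x)$ and $F\dual(G\dual(x))$ reduce to $F(G(xI))I^{-1}$ after the interior $I^{-1}I$ cancels. For (iii), $\id\dual(x) = xII^{-1}=x$. For (iv), $x \in \mathcal{G}^{\dim V - s}$ gives $xI \in \mathcal{G}^s$, hence $F(xI) \in \mathcal{G}^t$, and multiplying by $I^{-1}$ lands in $\mathcal{G}^{\dim V - t}$.

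For (v), grade-preservation of $F\dual$ is just (iv) specialised to $s = t$, while $F\dual(I) = F(I^2)I^{-1} = I^2 F(1)I^{-1} = I$ thanks to $F(1)=1$. The $\vee$-compatibility is then forced: expanding $x \vee y = (xI^{-1} \wedge yI^{-1})I$ from Definition \ref{def_dual} and using $xI^{-1} = xI/I^2$, one finds $F\dual(x \vee y) = F(xI \wedge yI)/I^3$, and an identical expansion of $F\dual(x) \vee F\dual(y)$ gives the same expression, with the $\wedge$-morphism property of $F$ supplying the key step $F(xI) \wedge F(yI) = F(xI \wedge yI)$.

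The main obstacle is (vi), which genuinely requires the grade-preserving hypothesis. Starting from the defining identity of the adjoint and applying cyclicity of the scalar part $\langle \cdot \rangle_0$ repeatedly,
\begin{equation*}
x * F\dual(y) \;=\; \langle x F(yI) I^{-1}\rangle_0 \;=\; (I^{-1}x) * F(yI) \;=\; F^*(I^{-1}x)*(yI) \;=\; \bigl(IF^*(I^{-1}x)\bigr) * y,
\end{equation*}
so that $(F\dual)^*(x) = I F^*(I^{-1}x)$, and this must be shown equal to $(F^*)\dual(x) = F^*(xI)I^{-1}$. For $x \in \mathcal{G}^k$, the commutation rule gives $I^{-1}x = (-1)^{k(n-1)} x I^{-1}$, allowing $I^{-1}$ to be pulled out through the linear map $F^*$; grade-preservation of $F^*$ (which follows from that of $F$) puts $F^*(xI)$ in $\mathcal{G}^{n-k}$, and a second application of the commutation rule yields $I F^*(xI) = (-1)^{(n-k)(n-1)} F^*(xI) I$. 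The two sign factors multiply to $(-1)^{n(n-1)} = 1$, delivering the identity. The step that would fail without grade-preservation is precisely this sign-tracking: for a grade-mixing $F$ one cannot assign a single parity to $F^*(xI)$, and the two sign corrections no longer cancel.
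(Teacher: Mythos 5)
The paper leaves this proposition as an exercise and provides no reference proof, so there is no argument to compare against. Your proof is correct in all six parts. Items (i)--(iv) are indeed immediate once one notes that $I^2$ is a nonzero central scalar, so $I^{-1}$ differs from $I$ only by a scalar factor and can be pushed through the linear map $F$. In (v) the bookkeeping of powers of $I$ works out: both $F\dual(x\vee y)$ and $F\dual(x)\vee F\dual(y)$ reduce to $F(xI\wedge yI)I^{-3}$ after using the $\wedge$-morphism property, and you also correctly verify $F\dual(I)=I$ and grade preservation from (iv).

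For (vi), the adjoint chase $x * F\dual(y) = (I F^*(I^{-1}x))*y$ via cyclicity of $\langle\cdot\rangle_0$ is clean, and the sign calculus via $Iy = (-1)^{r(n-1)}yI$ for $y\in\mathcal{G}^r$ (valid because $\mathcal{G}^r$ is spanned by $r$-blades, each a geometric product of $r$ vectors by Proposition \ref{prop_blade_product}) is the right device. You correctly use that $F$ grade preserving forces $F^*$ grade preserving (by nondegeneracy of $*$ and orthogonality of distinct grades), so $F^*(xI)\in\mathcal{G}^{n-k}$, and the two parities $k(n-1)$ and $(n-k)(n-1)$ sum to $n(n-1)$, which is even. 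Your closing remark about why grade-preservation is essential — without a single parity for $F^*(xI)$ the two sign corrections need not cancel — is exactly the right observation.
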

	
	\noindent
	The proofs are straightforward and left as exercises to the reader.
	As a special case of Theorem \ref{thm_hestenes} we obtain, with $y=I$
	and a linear map $f\!: V \to V$,
	\begin{equation}
		(\det f) xI = f_\wedge \big( f_\wedge^*(x)I \big),
	\end{equation}
	so that
	\begin{equation}
		(\det f) \id = f_\wedge\dual \circ f_\wedge^* = f_\wedge \circ f_\wedge^{*\mathbf{c}}.
	\end{equation}
	If $\det f \neq 0$ we then have a simple expression for the inverse;
	\begin{equation} \label{inverse_outermorphism}
		f_\wedge^{-1} = (\det f)^{-1} f_\wedge^{*\mathbf{c}},
	\end{equation}
	which is essentially the dual of the adjoint
	($F^{*\mathbf{c}}$ is sometimes called the \emph{adjugate} $F^{\adj}$ of $F$).
	$f^{-1}$ is obtained by simply restricting to $V$.

	\begin{exmp}
		An orthogonal transformation $f \in O(V,q)$ 
		satisfies $f^{-1} = f^*$ and $\det f = \pm 1$, 
		so in this case \eqref{inverse_outermorphism} gives
		$f_\wedge\dual = \pm f_\wedge$.
	\end{exmp}

	\begin{exc}
		Prove the second identity in Theorem \ref{thm_hestenes}.
	\end{exc}

	\begin{exc}
		Prove Proposition \ref{prop_dual_map}.
	\end{exc}

	\begin{exc} \label{exc_bivec_adj}
		Let $B \in \mathcal{G}^2$ be a bivector, and define
		the map $\ad_B\!: \mathcal{G} \to \mathcal{G}$ by
		$\ad_B(x) := [B,x] = Bx - xB$.
		Show that $\ad_B(\mathcal{G}^m) \subseteq \mathcal{G}^m$
		and that 
		$$
			\ad_B(xy) = \ad_B(x)y + x\ad_B(y),
		$$
		i.e. that $\ad_B$ is a grade preserving derivation. 
		Is $\ad_B$ an outermorphism?
	\end{exc}

	\begin{exc} \label{exc_func_bivec_corresp}
		Let $f\!: V \to V$ be linear
		and antisymmetric, i.e.
		$$
			f(u) * v = - u * f(v) \quad \forall u,v \in V.
		$$
		Show that there is a unique bivector $B \in \mathcal{G}^2(V)$
		such that $f(v) = \ad_B(v) \ \forall v \in V$.
	\end{exc}

	\begin{exc} \label{exc_determinant}
		Let $\{a_1,\ldots,a_n\}$ be an arbitrary basis of $V$
		and use the reciprocal basis to verify that $\det f = f(I)*I^{-1}$
		is the usual expression for the determinant.
		Use \eqref{blade_expansion} to verify that the definition makes sense also
		for degenerate spaces.
	\end{exc}

\subsection{Projections and rejections} \label{subsec_proj_rej}

	Let $A$ be an invertible blade in a geometric algebra $\mathcal{G}$,
	and define a linear map $P_A: \mathcal{G} \to \mathcal{G}$ through
	$$
		P_A(x) := (x \liprod A)A^{-1}.
	$$
	
	\begin{prop}
		$P_A$ is an outermorphism.
	\end{prop}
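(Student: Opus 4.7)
The plan is to reduce everything to a computation on basis blades adapted to $\bar{A}$, then invoke the subspace characterization of $\liprod$ from Proposition~\ref{prop_subspace_ops} to evaluate $P_A$ explicitly on each basis blade. Since $A$ is invertible, the corollary to Proposition~\ref{prop_blade_product} tells us that $q$ is nondegenerate on $\bar{A}$, and by the same corollary we can choose an orthogonal basis $\{e_1,\dots,e_n\}$ of $V$ with $\{e_1,\dots,e_k\}$ a basis of $\bar{A}$ and $A = \alpha\, e_1 \wedge \cdots \wedge e_k$ for some nonzero $\alpha \in \mathbb{F}$. All subsequent work is then done in the associated basis $\{e_S\}_{S\subseteq [n]}$ of $\mathcal{G}(V)$.

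The key computation is the formula
$$
P_A(e_S) = \begin{cases} e_S, & S \subseteq [k], \\ 0, & S \not\subseteq [k], \end{cases}
$$
for each basis blade $e_S$. If $S \subseteq [k]$, then $\overline{e_S} \subseteq \bar{A}$, so Proposition~\ref{prop_subspace_ops} gives $e_S \liprod A = e_S A$, and therefore $P_A(e_S) = e_S A A^{-1} = e_S$. If $S \not\subseteq [k]$, pick $i \in S$ with $i > k$; then $e_i \in \overline{e_S} \cap \bar{A}^\perp$, so $\overline{e_S} \cap \bar{A}^\perp \neq 0$, and Proposition~\ref{prop_subspace_ops} forces $e_S \liprod A = 0$, hence $P_A(e_S) = 0$.

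Properties (i) and (ii) of Definition~\ref{def_outermorphism} are then immediate: $\emptyset \subseteq [k]$ yields $P_A(1) = 1$, and each basis $m$-blade is sent either to itself or to $0$, both of which lie in $\mathcal{G}^m$, so $P_A$ is grade preserving by linearity. For (iii), bilinearity of $\wedge$ reduces $P_A(x \wedge y) = P_A(x) \wedge P_A(y)$ to basis blades $x = e_S$, $y = e_T$: if $S \cap T \neq \varnothing$ both sides vanish, and if $S \cap T = \varnothing$ then $e_S \wedge e_T = \pm e_{S \cup T}$, in which case the case distinction above makes both sides equal, since $S \cup T \subseteq [k]$ if and only if both $S \subseteq [k]$ and $T \subseteq [k]$. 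There is no real obstacle beyond choosing the right basis; once Proposition~\ref{prop_subspace_ops} is in hand, the verification is essentially a bookkeeping exercise on subsets of $[n]$.
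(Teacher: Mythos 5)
Your proof is correct and follows essentially the same strategy as the paper's: reduce to orthogonal basis blades by (bi)linearity, invoke the corollary to Proposition~\ref{prop_blade_product} to assume $A$ is itself a basis blade, compute $P_A$ on basis blades, and then verify the three outermorphism axioms by a case analysis on index sets. The one cosmetic difference is that where the paper evaluates $e_S \liprod A$ directly from the combinatorial formulas $x \liprod A = (x \subseteq A)xA$ and $x \wedge y = (x\cap y = \varnothing)xy$ of Definition~\ref{def_operations}, you instead route through the blade-subspace correspondences of Proposition~\ref{prop_subspace_ops}; this is equally valid and arguably more geometric, though it leans on a proposition whose proof the paper defers to an exercise, so the paper's version is marginally more self-contained.
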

	\begin{proof}
		We need to prove that $P_A(x \wedge y) = P_A(x) \wedge P_A(y)$ 
		$\forall x,y \in \mathcal{G}$. 
		We observe that both sides are linear in both $x$ and $y$.
		Hence, we can assume that $x$ and $y$ are basis blades. 
		Also, since $A$ was assumed to be invertible, 
		we can by the corollary to Proposition \ref{prop_blade_product},
		without loss of generality, assume 
		that $A$ is also among the orthogonal basis blades.
		We obtain
		$$
			P_A(x \wedge y) = ((x \wedge y) \liprod A)A^{-1} = (x \cap y = \varnothing)(x \cup y \subseteq A)xyAA^{-1}
		$$
		and
		\begin{eqnarray*}
			P_A(x) \wedge P_A(y) 
				&=& \left( (x \liprod A)A^{-1} \right) \wedge \left( (y \liprod A)A^{-1} \right) \\
				&=& \left( (x \subseteq A)xAA^{-1} \right) \wedge \left( (y \subseteq A)yAA^{-1} \right) \\
				&=& (x \subseteq A)(y \subseteq A) x \wedge y \\
				&=& (x \subseteq A)(y \subseteq A)(x \cap y = \varnothing)xy.
		\end{eqnarray*}
		Furthermore, we obviously have $P_A(1) = 1$, 
		and (e.g. by again expanding in a basis) we also observe that $P_A(V) \subseteq V$,
		so $P_A$ preserves grades.
	\end{proof}
	
	We shall now show that $P_A$ is a projection. 
	First, note the following
	
	\begin{prop}
		$P_A \circ P_A = P_A$
	\end{prop}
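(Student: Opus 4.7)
The plan is to exploit the fact that outermorphisms are completely determined by their action on $V$. Since $P_A$ has just been shown to be an outermorphism, its composition $P_A \circ P_A$ is also an outermorphism (by straightforward checking of the three defining properties, or by invoking the corresponding corollary to Theorem \ref{thm_outermorphism}). Hence $P_A \circ P_A$ and $P_A$ are both outermorphisms, and by uniqueness it suffices to verify that they agree on 1-vectors $v \in V$.

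On $V$, I would argue that $P_A$ is exactly the orthogonal projection onto the subspace $\bar{A}$. Since $A \in \mathcal{B}^\times$ is invertible, the quadratic form $q$ is nondegenerate on $\bar{A}$, and so we have the orthogonal decomposition $V = \bar{A} \oplus \bar{A}^\perp$ (as established in the discussion following Proposition \ref{prop_blade_product}). Split an arbitrary $v \in V$ as $v = v_\parallel + v_\perp$ accordingly and handle the two cases separately.

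For $v_\parallel \in \bar{A}$: by Proposition \ref{prop_blade_subspace} we have $v_\parallel \wedge A = 0$, so by Proposition \ref{prop_trix2}, $v_\parallel A = v_\parallel \liprod A + v_\parallel \wedge A = v_\parallel \liprod A$, and thus $P_A(v_\parallel) = (v_\parallel \liprod A)A^{-1} = v_\parallel A A^{-1} = v_\parallel$. For $v_\perp \in \bar{A}^\perp$: the expansion formula \eqref{vec_blade_expansion} gives $v_\perp \liprod A = 0$, since every $v_\perp * a_k = 0$ in any factorization $A = a_1 \wedge \cdots \wedge a_k$ with $a_i \in \bar{A}$ (which exists by Proposition \ref{prop_blade_product}). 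Hence $P_A(v_\perp) = 0$. Combining the two cases yields $P_A(v) = v_\parallel \in \bar{A}$, and applying $P_A$ once more to the element $v_\parallel \in \bar{A}$ returns $v_\parallel$ by the first case. Thus $P_A(P_A(v)) = v_\parallel = P_A(v)$.

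The only non-routine step is verifying that $P_A$ restricted to $V$ really is the orthogonal projection onto $\bar{A}$; once this is in hand, idempotence is immediate. The cleanest way to justify it is via the orthogonal splitting above, which relies on the invertibility (equivalently, nondegeneracy on $\bar{A}$) of the blade $A$. If one preferred a basis-level argument, one could alternatively mimic the proof of the previous proposition, pick an orthogonal basis with $A \in \mathcal{B}_E$, and observe from the formula $P_A(e_i) = (e_i \subseteq A)\, e_i$ that $P_A$ is diagonal on $V$ with eigenvalues $0$ and $1$, from which $P_A^2 = P_A$ follows trivially.
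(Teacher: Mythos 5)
Your proof is correct, but it follows a genuinely different route from the paper's. The paper argues directly on all of $\mathcal{G}$: by linearity it reduces to basis blades $x$ in an orthogonal basis with $A \in \mathcal{B}_E$, then computes $(x \liprod A)A^{-1} = (x \subseteq A)x$ and applies the operator once more to get $(x \subseteq A)(x \subseteq A)x = (x \subseteq A)x$, exploiting the idempotency of the indicator $(x \subseteq A)$. You instead invoke the outermorphism machinery: since $P_A$ is an outermorphism (from the preceding proposition), so is $P_A \circ P_A$ (by the corollary to Theorem \ref{thm_outermorphism}), and uniqueness of the extension reduces the problem to checking agreement on $V$. There you establish the orthogonal-projection picture directly via the decomposition $V = \bar{A} \oplus \bar{A}^\perp$, which is valid because $A \in \mathcal{B}^\times$ forces $q$ nondegenerate on $\bar{A}$. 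Your approach front-loads the geometric content — it proves the projection interpretation of $P_A|_V$ as part of the idempotency argument — whereas the paper derives that interpretation afterwards as a consequence, relying here on a more combinatorial basis-level calculation. Both are clean; yours trades the explicit blade computation for a reduction step, and your closing remark about "mimicking the proof of the previous proposition and checking the diagonal formula $P_A(e_i) = (e_i \subseteq A)e_i$" is, restricted to $V$, essentially what the paper actually writes for all of $\mathcal{G}$.
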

	\begin{proof}
		We would like to prove
		$$
			\left( ((x \liprod A)A^{-1}) \liprod A \right)A^{-1} 
			= (x \liprod A)A^{-1}  \quad \forall x \in \mathcal{G}.
		$$
		As both sides are linear in $x$, we assume that $x$ 
		is a basis blade and obtain
		$$
			(x \liprod A)A^{-1} = (x \subseteq A)xAA^{-1} = (x \subseteq A)x
		$$
		and
		\begin{eqnarray*}
			\lefteqn{ \left( ((x \liprod A)A^{-1}) \liprod A \right)A^{-1} = \left( ((x \subseteq A)x) \liprod A \right)A^{-1} }\\
				&&= (x \subseteq A)(x \liprod A)A^{-1} = (x \subseteq A)(x \subseteq A)x 
				= (x \subseteq A)x. 
		\end{eqnarray*}
	\end{proof}
	
	\noindent
	Also note that, for all $v \in V$, we have by Proposition \ref{prop_trix2}
	\begin{equation} \label{proj_rej_expansion}
		v = vAA^{-1} = (vA)A^{-1} = (v \liprod A + v \wedge A)A^{-1}.
	\end{equation}
	Now, define the \emph{rejection} $R_A$ of $A$ through
	$$
		R_A(x) := (x \wedge A)A^{-1}, \quad \textrm{for}\ x \in \mathcal{G}.
	$$
	Then \eqref{proj_rej_expansion} becomes
	$$
		v = P_A(v) + R_A(v) \qquad \forall v \in V.
	$$
	If $v \in \bar{A} = \{u \in V : u \wedge A = 0\}$,
	we obviously obtain 
	$$
		P_A(v) = v \qquad \textrm{and} \qquad R_A(v) = 0,
	$$
	and if $v \in \bar{A}^\perp = \{u \in V : u \liprod A = 0\}$, we find
	$$
		P_A(v) = 0 \qquad \textrm{and} \qquad R_A(v) = v.
	$$
	We therefore see that 
	$P_A$ is the outermorphism of an orthogonal projection on $\bar{A}$,
	while $R_A$ corresponds to an orthogonal projection on $\bar{A}^\perp$.
	
	\begin{exc}
		Show that $R_A$ is an outermorphism and that $P_A$ and $R_A$ are self-adjoint,
		i.e. $P_A^* = P_A$ and $R_A^* = R_A$.
	\end{exc}

\subsection{Some projective geometry}

	We finish this section by considering some applications 
	in classical projective geometry.
	
	Given a vector space $V$, the \emph{projective geometry $P(V)$ on $V$}
	is defined to be the set of all subspaces of $V$.
	The set of $k$-dimensional subspaces is denoted $P_k(V)$.
	Traditionally, $P_1(V)$ is called the \emph{projective space}
	and is usually denoted $\mathbb{P}^{n-1}(\mathbb{F})$ if $V$ is an
	$n$-dimensional vector space over the field $\mathbb{F}$.
	
	$P(V)$ forms a natural so-called \emph{lattice} with a \emph{meet} $\vecmeet$
	and \emph{join} $\vecjoin$ operation defined as follows:
	Let $U$ and $W$ be elements in $P(V)$, i.e. subspaces of $V$.
	The largest subspace of $V$ which is contained in both
	$U$ and $W$ is denoted $U \vecmeet W$,
	and the smallest subspace of $V$ which contains both 
	$U$ and $W$ is denoted $U \vecjoin W$.
	Hence, $U \vecmeet W = U \cap W$ and $U \vecjoin W = U + W$.

	We saw earlier that when $V$ is considered as embedded in
	a geometric algebra $\mathcal{G}(V)$, then to each nonzero
	$k$-blade $A \in \mathcal{G}$ there is associated a unique 
	$k$-dimensional subspace $\bar{A} \in P_k(V)$.
	Furthermore, we found that the outer and meet products
	of blades corresponded to the subspace operations\footnote{In
	this context the choice of notation is slightly unfortunate.
	However, both the notation for meet and join of subspaces
	(which matches the logical operations) and the outer and meet products of
	multivectors are strictly conventional.}
	$$
		\overline{A \wedge B} = \bar{A} \vecjoin \bar{B} \quad \textrm{if} \quad \bar{A} \vecmeet \bar{B} = 0
	$$
	and
	$$
		\overline{A \vee B} = \bar{A} \vecmeet \bar{B} \quad \textrm{if} \quad \bar{A} \vecjoin \bar{B} = V.
	$$

	Let us now get acquainted with some concrete classical 
	projective geometry in $\mathbb{R}^n$.
	
\subsubsection{The cross ratio in $\mathbb{P}^1(\mathbb{R})$}
	
	Let $V = \mathbb{R}^2$ and choose $a,b,c,d \in V$
	such that each pair of these vectors is linearly independent.
	Define the so-called \emph{cross ratio of four points}
	$$
		D(a,b,c,d) = \frac{(a \wedge b)(c \wedge d)}{(a \wedge c)(b \wedge d)}.
	$$
	Note that each of the factors above is a nonzero multiple of
	the pseudoscalar, so that the above expression makes sense.
	The expression is a \emph{projective invariant} in the following sense:
	
	\begin{enumerate}
		\item[\emph{i})]
		$D(a,b,c,d)$ only depends on $\bar{a},\bar{b},\bar{c},\bar{d}$,
		i.e. on the four one-dimensional subspaces in $\mathbb{P}^1(\mathbb{R})$
		which correspond to the respective vectors.
		
		\item[\emph{ii})]
		$D(Ta,Tb,Tc,Td) = D(a,b,c,d)$ for every linear bijection 
		$T: \mathbb{R}^2 \to \mathbb{R}^2$.
	\end{enumerate}
	
	We can reinterpret the projectively invariant quantity $D(a,b,c,d)$ 
	as a universal property of four points in $\mathbb{R}^1$.
	Namely, choose a line $L$ away from the origin in $V$. 
	By applying a projective transformation, i.e. a linear bijection,
	we can without loss of generality assume that the line or basis
	is such that $L = \mathbb{R}e_1 + e_2$. Then,
	representing a point $A \in \mathbb{R}$ as $a = Ae_1 + e_2 \in L$ etc.,
	we find $a \wedge b = (A-B)I$ and that the cross ratio is
	\begin{equation} \label{cross_ratio_rep}
		D(a,b,c,d) = \frac{(A - B)(C - D)}{(A - C)(B - D)}.
	\end{equation}
	Projective invariance, i.e. the possibility of mapping $L$
	to any other line away from the origin, then means that this 
	quantity is invariant under 
	both translations, rescalings and inversions in $\mathbb{R}^1$.

	\begin{exc}
		Verify properties (\emph{i}) and (\emph{ii}) above.
		Derive the general form of the transformations of $\mathbb{R}^1$
		which leave the cross ratio invariant,
		and verify explicitly that the expression \eqref{cross_ratio_rep}
		is invariant under such a transformation.
	\end{exc}

	\begin{exc} \label{exc_projective_cross}
		Let us consider $\mathcal{G}(\mathbb{R}^3 = \Span\{e_1,e_2,e_3\})$ 
		as a part of $\mathcal{G}(\mathbb{R}^4 = \Span\{e_1,e_2,e_3,e_4\})$ 
		via the map
		$$
		\begin{array}{rcl}
			\mathbb{R}^3 & \to & \mathbb{R}^4 \\
			\boldsymbol{x} & \mapsto & X := \boldsymbol{x} + e_4
		\end{array}
		$$
		which is represented in Figure \ref{fig_proj_R4}.
		We denote the respective pseudoscalars by
		$I = e_1e_2e_3$ and $J = e_1e_2e_3e_4$.
		Let $\times$ denote the usual cross product in 
		$\mathbb{R}^3$ 
		(as defined in \eqref{cross_product}) and verify that
		\item{a)} $\boldsymbol{x} \times \boldsymbol{y} = (e_4 \wedge X \wedge Y) J$
		\item{b)} $\boldsymbol{y} - \boldsymbol{x} = e_4 \liprod (X \wedge Y)$
		
		\begin{figure}[t]
			\centering
			\psfrag{T_e1}{$e_1$}
			\psfrag{T_e3}{$e_3$}
			\psfrag{T_e4}{$e_4$}
			\psfrag{T_x}{$\boldsymbol{x} = x_1e_1 + x_2e_2 + x_3e_3$}
			\psfrag{T_X}{$X = \boldsymbol{x} + e_4$}
			\psfrag{T_copy}{Copy of $\mathbb{R}^3$}
			\includegraphics{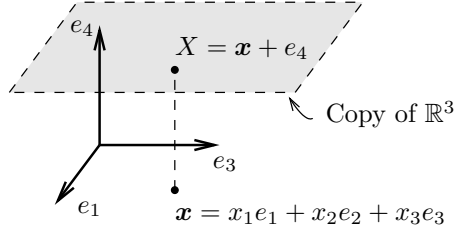}
			\caption{Mapping a point $\boldsymbol{x} \in \mathbb{R}^3$ to its
			projective representative $X \in \mathbb{R}^4$.}
			\label{fig_proj_R4}
		\end{figure}
	\end{exc}

	\begin{exc} \label{exc_projective_intersection}
		Let $\boldsymbol{a},\boldsymbol{b},\boldsymbol{c}$ and $\boldsymbol{d}$ 
		be distinct vectors in $\mathbb{R}^2$.
		We are interested in determining the intersection of the lines
		through $\boldsymbol{a},\boldsymbol{b}$ and $\boldsymbol{c},\boldsymbol{d}$ respectively.
		As in the previous exercise, we consider a map (with a basis $\{e_1,e_2,e_3\}$)
		$$
		\begin{array}{rcl}
			\mathbb{R}^2 & \to & \mathbb{R}^3 \\
			\boldsymbol{x} & \mapsto & X := \boldsymbol{x} + e_3
		\end{array}
		$$
		Introduce $L = A \wedge B$ and $M = C \wedge D$ and show that
		$$
			L \vee M = [A,B,C]D - [A,B,D]C,
		$$
		where $[X,Y,Z] := (X \wedge Y \wedge Z)*(e_3 \wedge e_2 \wedge e_1)$
		for $X,Y,Z \in \mathbb{R}^3$.
		Interpret this result geometrically.
	\end{exc}

\subsubsection{Cross ratios and quadrics in $\mathbb{P}^2(\mathbb{R})$}

	Now, let $V = \mathbb{R}^3$ (we could also choose $\mathbb{R}^{s,t}$
	with $s+t=3$). Choose $a,b,c,d,e \in V$, pairwise linearly
	independent, such that $a,b,c,d$ all lie in a two-dimensional
	subspace not containing $e$.
	In the projective jargon one then says that the projective
	points $\bar{a},\bar{b},\bar{c},\bar{d}$ lie on a line which
	does not contain $\bar{e}$.
	Let us think of the plane of this page as an affine\footnote{i.e. constant plus linear} 
	subspace of $V$ which does not contain the origin, 
	and which intersects the lines $\bar{a},\bar{b},\bar{c},$ and $\bar{d}$
	at points which we represent by the names of those lines.
	Let us imagine that we have one eye in the origin and that
	the line through the eye and the point in the paper denoted
	$\bar{a}$ is the line $\bar{a}$.
	We then have a picture like in Figure \ref{fig_proj_line}.

	\begin{figure}[ht]
		\centering
		\psfrag{T_a}{$\bar{a}$}
		\psfrag{T_b}{$\bar{b}$}
		\psfrag{T_c}{$\bar{c}$}
		\psfrag{T_d}{$\bar{d}$}
		\psfrag{T_e}{$\bar{e}$}
		\includegraphics{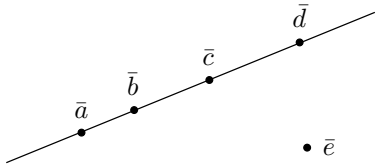}
		\caption{Projective points $\bar{a},\bar{b},\bar{c},$ and $\bar{d}$
		on a common projective line which does not contain $\bar{e}$.}
		\label{fig_proj_line}
	\end{figure}

	Define
	$$
		F(a,b,c,d,e) := \frac{(a \wedge b \wedge e)(c \wedge d \wedge e)}{(a \wedge c \wedge e)(b \wedge d \wedge e)}.
	$$
	We immediately see that for all $\alpha,\beta,\gamma,\delta,\epsilon \in \mathbb{R}^\times$,
	$$
		F(\alpha a, \beta b, \gamma c, \delta d, \epsilon e) = F(a,b,c,d,e),
	$$
	which shows that $F$ only depends on $\bar{a},\bar{b},\bar{c},\bar{d}$ 
	and $\bar{e}$ in $\mathbb{P}^2(\mathbb{R})$.
	Furthermore, if $T: \mathbb{R}^3 \to \mathbb{R}^3$ is a linear bijection then
	$$
		F(Ta,Tb,Tc,Td,Te) = \frac{(\det T)^2}{(\det T)^2} F(a,b,c,d,e),
	$$
	and $F(a,b,c,d,e)$ is a so-called projective invariant of five points.
	
	So far we have not used that the projective points 
	$\bar{a},\bar{b},\bar{c},$ and $\bar{d}$ lie on a common line. 
	We shall now show, with our requirements on $a,b,c,d,e$,
	that $F$ does not depend on $e$.
	Assume that $e' = e + \alpha a + \beta b$. 
	Then $a \wedge b \wedge e' = a \wedge b \wedge e$ and
	$$
		c \wedge d \wedge e' = c \wedge d \wedge e + \alpha c \wedge d \wedge a + \beta c \wedge d \wedge b = c \wedge d \wedge e,
	$$
	since $c \wedge d \wedge a = c \wedge d \wedge b = 0$ 
	(they all lie in the same plane in $\mathbb{R}^3$).
	Similarly, we find that we could have added linear combinations of $c$ and $d$ as well,
	and that the factors in the denominator also do not depend on $e$.
	Hence, we can define the cross ratio
	$$
		D(a,b,c,d) := F(a,b,c,d,e),
	$$
	which then becomes a projective invariant of $\bar{a},\bar{b},\bar{c},\bar{d}$
	with a similar geometric interpretation as before 
	whenever these points lie on a common line.
	
	Let us now have a look at quadrics in $\mathbb{P}^2(\mathbb{R})$.
	Let $P(x_1,x_2,x_3)$ denote a homogeneous quadratic polynomial\footnote{I.e.
	a quadratic form, but here we think of it as a polynomial rather than
	an extra structure of the vector space.}
	in the (ordinary, commutative) polynomial ring $\mathbb{R}[x_1,x_2,x_3]$.
	From the homogeneity follows that if $(\alpha,\beta,\gamma)$
	is a zero of the polynomial $P$, then also $t(\alpha,\beta,\gamma)$ 
	is a zero, for all $t \in \mathbb{R}$.
	Hence, if we (as usual) interpret the triple $(x_1,x_2,x_3)$
	as the point $x = x_1e_1 + x_2e_2 + x_3e_3 \in \mathbb{R}^3$ then we see
	that zeros of $P$ can be interpreted as points in $\mathbb{P}^2(\mathbb{R})$.
	The zero set $Z(P)$ of a homogeneous quadratic polynomial $P$
	is called a \emph{quadric}.

	\begin{exmp}
		Consider the homogeneous quadratic polynomial in three variables
		$$
			P(x,y,z) = x^2 + y^2 - z^2
		$$
		Note that the intersection of the quadric $Z(P)$ with the
		plane $\{(x,y,z) \in \mathbb{R}^3 : z=1\}$ is a unit circle,
		but by choosing some other plane we may get an ellipse, a hyperbola,
		or even a parabola or a line.
	\end{exmp}
	
	It is easy to see that five distinct points in $\mathbb{P}^2(\mathbb{R})$
	determine a homogeneous quadratic polynomial in three variables,
	up to a multiple, if we demand that the five points should be zeros
	of the polynomial.
	Assume that we are given five points $a,b,c,d,e$ in $\mathbb{R}^3$,
	and form
	$$
		P(x) = (a \wedge b \wedge e)(c \wedge d \wedge e)(a \wedge c \wedge x)(b \wedge d \wedge x)
			 - (a \wedge b \wedge x)(c \wedge d \wedge x)(a \wedge c \wedge e)(b \wedge d \wedge e)
	$$
	$P$ is then a homogeneous quadratic polynomial in $(x_1,x_2,x_3)$,
	where $x = x_1e_1 + x_2e_2 + x_3e_3$.
	One verifies by direct substitution that
	$$
		P(a) = P(b) = P(c) = P(d) = P(e) = 0,
	$$
	and hence that $Z(P)$ is a quadric containing 
	$\bar{a},\bar{b},\bar{c},\bar{d},$ and $\bar{e}$.

\subsubsection{Pascal's theorem}

	There is an alternative way of finding a quadric through
	five points, which is equivalent to a classical theorem in
	plane projective geometry due to Pascal.
	
	\begin{thm}[Pascal] \label{thm_pascal}
		Let $a,b,c,d,e,f$ be points on a quadric in $\mathbb{P}^2(\mathbb{R})$.
		Then
		\begin{equation} \label{pascals_eqn}
			\big((a \wedge b) \vee (d \wedge e)\big) \wedge
			\big((b \wedge c) \vee (e \wedge f)\big) \wedge
			\big((c \wedge d) \vee (f \wedge a)\big) = 0
		\end{equation}
	\end{thm}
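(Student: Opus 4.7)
The plan is to compute $p \wedge q \wedge r$ explicitly in bracket notation and show that it vanishes as a consequence of the hypothesis that $f$ lies on the quadric defined by the polynomial $P$ exhibited just before the theorem. Write $[xyz] \in \mathbb{R}$ for the scalar such that $x \wedge y \wedge z = [xyz]\, e_1 \wedge e_2 \wedge e_3$. Then the meet formula of Exercise \ref{exc_projective_intersection} gives
$$
p = (a \wedge b) \vee (d \wedge e) = [abd]e - [abe]d,
$$
and, by the same recipe, $q = [bce]f - [bcf]e$ and $r = [cdf]a - [cda]f$.

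Multilinear expansion of $p \wedge q \wedge r$ produces eight terms; four of these vanish because they contain $e \wedge e$ or $f \wedge f$, leaving
$$
p \wedge q \wedge r = \bigl( [abd][bce][cdf][efa] - [abe][bce][cdf][dfa] + [abe][bcf][cdf][dea] - [abe][bcf][cda][def] \bigr)\, I.
$$
Grouping the first two and the last two terms in the parenthesis, each group is of the form (product of two brackets) $\cdot$ (difference of two brackets), and to each inner difference the Grassmann-Pl\"ucker syzygy $[uvw][xyz] - [uvx][wyz] + [uvy][wxz] - [uvz][wxy] = 0$ (a standard identity for any six vectors in a three-dimensional space) applies. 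After these two substitutions and the usual cyclic reorderings of indices, the entire parenthesis collapses to a constant multiple of
$$
[abe][cde][acf][bdf] - [abf][cdf][ace][bde],
$$
which is precisely the scalar content of $P(f)$. Since $f$ lies on the quadric, $P(f) = 0$, and hence $p \wedge q \wedge r = 0$ as required.

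The principal obstacle is the sign-and-permutation bookkeeping for the brackets $[xyz]$ when applying the Pl\"ucker syzygies, together with the final matching to the precise form of $P(f)$. If that proves painful, a more elegant alternative is to use projective invariance to reduce to a nondegenerate conic in canonical form, which (being rational) can be parameterized as $t \mapsto (t^2, t, 1)$. Under this parameterization each $[xyz]$ becomes a Vandermonde product $(y-x)(z-x)(z-y)$ of three parameter differences, and the vanishing of the four-term bracket expression above reduces to the straightforward observation that, after a single application of the syzygy, both sides of the resulting identity involve the same twelve parameter-difference factors with matching signs.
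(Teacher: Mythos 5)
Your approach is sound but genuinely different from the paper's, and considerably heavier. The paper's proof is an interpolation argument: replace $f$ by the variable $x$ in the left-hand side of \eqref{pascals_eqn} to obtain a homogeneous quadratic $P(x)$, and then check directly that $P(a) = P(b) = \cdots = P(e) = 0$. This verification is almost free of computation: for $x = a$ or $x = e$ one of the three meets involves a repeated point ($a \wedge a$ or $e \wedge e$) and vanishes identically, and for $x = b, c, d$ the expansion from Exercise \ref{exc_projective_intersection} produces a repeated wedge factor and the surviving bracket products cancel in pairs by antisymmetry. Since five points in general position determine a conic up to scalar (as noted just before the theorem statement), $P$ must be proportional to the defining polynomial of the given quadric, and hence $P(f) = 0$ because $f$ lies on it.

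Your proof, by contrast, tries to evaluate the bracket polynomial head-on and reduce it by Grassmann--Pl\"ucker syzygies to the explicit five-point quadratic $P$ from the preceding paragraph. That is a legitimate strategy, and your proposed fallback --- reducing to a rational normal parameterization $t \mapsto (t^2, t, 1)$ so every bracket becomes a Vandermonde product --- is a clean and standard way to close such bracket identities. But as written you have not actually carried out the Pl\"ucker reductions; you acknowledge the ``sign-and-permutation bookkeeping'' as the principal obstacle, and that bookkeeping is in fact nontrivial (there are two distinct three-term syzygies to apply and then a reorganization to match the target). So your proof is a plausible outline rather than a complete argument. The paper's route avoids all of this: by treating $f$ as a variable it needs no Pl\"ucker relations at all, only the easy observation that the expression vanishes when two of the six points coincide. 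If you want to salvage your version, the parameterization route is the efficient way to do it; but it is worth noticing that the interpolation trick replaces a page of bracket algebra with a one-line substitution check.
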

	\begin{proof}
		Form the homogeneous quadratic polynomial $P(x)$
		which is obtained from the expression \eqref{pascals_eqn}
		when e.g. $f$ is replaced with $x$.
		One easily verifies (e.g. using Exercise \ref{exc_projective_intersection})
		that $P(a) = P(b) = P(c) = P(d) = P(e) = 0$,
		from which the theorem follows.
	\end{proof}

	\noindent
	The geometric interpretation of the expression \eqref{pascals_eqn}
	is that the three intersection points of the line
	$a \wedge b$ with the line $d \wedge e$,
	the line $b \wedge c$ with $e \wedge f$, 
	resp. $c \wedge d$ with $f \wedge a$,
	are all situated on one and the same line.
	See Figure \ref{fig_pascal}.

	\begin{figure}[ht]
		\centering
		\psfrag{T_a}{$a$}
		\psfrag{T_b}{$b$}
		\psfrag{T_c}{$c$}
		\psfrag{T_d}{$d$}
		\psfrag{T_e}{$e$}
		\psfrag{T_f}{$f$}
		\includegraphics{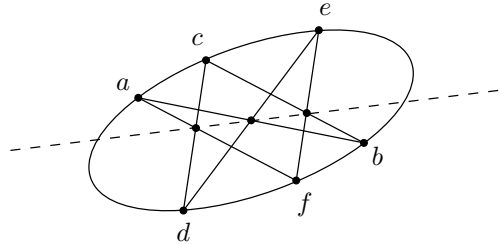}
		\caption{A geometric interpretation of Pascal's theorem.}
		\label{fig_pascal}
	\end{figure}
	
\subsubsection{Polarizations in quadrics}
	
	Let $Q$ be a quadric in $\mathbb{P}^2$ and choose a point 
	$p \in \mathbb{P}^2$ according to Figure \ref{fig_polar_line}.
	Draw the tangent lines $L_1$ and $L_2$ to $Q$ through $p$
	and form the line $L$ through the tangent points $t_1$ and $t_2$.
	Then $L$ is called the \emph{polar line} to $p$ with respect to $Q$.
	In classical projective geometry there are a number of
	theorems related to this notion.
	We will now show how geometric algebra can be used to obtain
	simple proofs and more algebraic formulations of some
	of these theorems.

	\begin{figure}[ht]
		\centering
		\psfrag{T_p}{$p$}
		\psfrag{T_t1}{$t_1$}
		\psfrag{T_t2}{$t_2$}
		\psfrag{T_L1}{$L_1$}
		\psfrag{T_L2}{$L_2$}
		\psfrag{T_L}{$L$}
		\psfrag{T_Q}{$Q$}
		\includegraphics{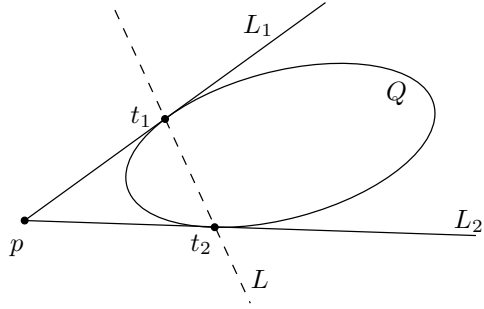}
		\caption{The polar line $L$ to the point $p$ with respect to the quadric $Q$.}
		\label{fig_polar_line}
	\end{figure}
	
	Let $\mathcal{G} = \mathcal{G}(\mathbb{R}^{s,t})$, 
	where $n = s+t$, and let $T$ denote a self-adjoint outermorphism from 
	$\mathcal{G}$ to $\mathcal{G}$, i.e. $T = T^*$.
	Consider the map $f: \mathcal{G} \to \mathbb{R}$,
	$f(x) := x * T(x)$ and form\footnote{Note that $f$ is actually
	a homogeneous quadratic polynomial in $2^n$ variables.}
	$$
		Q := \{x \in \mathcal{G} : f(x) = 0\}, \quad \textrm{and} \quad Q^m := \mathcal{G}^m \cap Q.
	$$
	We then have $f(x + h) = f(x) + 2T(x)*h + h*T(h)$ so that $f'(x)$, 
	the derivative of $f$ at the point $x \in \mathcal{G}$, 
	is given by the linear map
	$$
		\begin{array}{rcl}
			\mathcal{G} & \to & \mathbb{R} \\
			h & \mapsto & 2T(x)*h
		\end{array}
	$$
	The \emph{tangent space} $Q_x$ to the surface $Q$ at the point $x$
	is then defined as the linear space
	$$
		Q_x := \ker f'(x) \subseteq \mathcal{G}.
	$$
	We also define $Q^m_x := \mathcal{G}^m \cap Q_x$, which is the
	tangent space of $Q^m$ at $x$.
	
	\begin{defn}
		The \emph{polar} to $x$ with respect to $Q$ 
		is defined by
		$$
			\Pol_Q(x) := T(x)\dual = T(x)I^{-1}.
		$$
		We note in particular that $\overline{\Pol_Q(x)} = \left( \overline{T(x)} \right)^\perp$
		if $x \in \mathcal{B}^*$ is a nonzero blade.
	\end{defn}
	
	We have a number of results regarding the polar.
	
	\begin{thm}
		If $x \in Q^1$ and $h_1,\ldots,h_k \in Q_x^1$ then
		$x \wedge h_1 \wedge \ldots \wedge h_k \in Q^{k+1}$, i.e.
		$$
			T(x \wedge h_1 \wedge \ldots \wedge h_k) * (x \wedge h_1 \wedge \ldots \wedge h_k) = 0.
		$$
	\end{thm}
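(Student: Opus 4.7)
The plan is to unwind everything using that $T$ is an outermorphism and then reduce the whole claim to a determinant identity with a vanishing row.

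First I would set $X := x \wedge h_1 \wedge \cdots \wedge h_k$, a $(k+1)$-blade, and use the outermorphism property to write
$$T(X) = T(x) \wedge T(h_1) \wedge \cdots \wedge T(h_k).$$
Both $T(X)$ and $X$ are then expressed as outer products of $k+1$ vectors, which puts us in the exact situation covered by Lemma~\ref{lem_blade_det}. I would apply that lemma with $u_1 := T(x)$, $u_{i+1} := T(h_i)$ for $i=1,\dots,k$, and with $v_{k+1} := x$, $v_{k+1-i} := h_i$ for $i=1,\dots,k$, so that $v_{k+1} \wedge v_k \wedge \cdots \wedge v_1 = X$. This yields
$$T(X) * X \;=\; \det\bigl[\,u_i * v_j\,\bigr]_{1 \le i,j \le k+1}.$$

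Now I would read off the first row: its entries are $T(x) * v_j$ for $j = 1, \dots, k+1$. The last of these is $T(x) * x = x * T(x) = f(x) = 0$ by the hypothesis $x \in Q^1$, and the remaining ones are $T(x) * h_i$, each of which vanishes by the hypothesis $h_i \in Q_x^1 = \ker f'(x)$ (with a factor of $2$ absorbed in the definition). Thus the first row of the matrix is identically zero, the determinant vanishes, and we conclude $T(X) * X = 0$, which is exactly the claim.

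The main obstacle is really just bookkeeping: making sure the reversal convention in Lemma~\ref{lem_blade_det} is respected so that the two hypotheses line up as a single row (or column) of the determinant. Once the blade $X$ and its image under the outermorphism $T$ are matched against each other in the correct order, the conditions $x \in Q^1$ and $h_i \in Q_x^1$ are precisely tuned to kill that row, and no further use of self-adjointness of $T$ is needed (self-adjointness has already been built into the symmetric gradient formula $f'(x)(h) = 2T(x)*h$).
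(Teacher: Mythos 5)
Your proof is correct and follows essentially the same route as the paper's: expand $T$ as an outermorphism, apply Lemma~\ref{lem_blade_det} to the pair $T(X)$, $X$, and observe that the hypotheses $f(x)=0$ and $f'(x)(h_j)=0$ kill one line of the resulting Gram-type determinant. The only difference is cosmetic: by arranging the $v$'s so that $v_{k+1}\wedge\cdots\wedge v_1 = X$ exactly, you make the \emph{row} $T(x)*v_j$ vanish directly from the hypotheses, whereas the paper pairs $T(X)$ against the reversed blade $h_k\wedge\cdots\wedge h_1\wedge x$ and reads off a vanishing first \emph{column}, which at face value has entries $T(h_j)*x$ and so silently uses the self-adjointness $T=T^*$ once more to rewrite them as $T(x)*h_j$; your ordering avoids that extra (harmless) invocation, as you correctly note.
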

	\begin{proof}
		We have
		$$
			\big(T(x) \wedge T(h_1) \wedge \ldots \wedge T(h_k)\big) * (h_k \wedge \ldots \wedge h_1 \wedge x)
			= \det \left[
			\begin{array}{cc}
				T(x)*x & \ldots \\
				T(x)*h_1 & \ldots \\
				\vdots \\
				T(x)*h_k & \ldots \\
			\end{array}
			\right]
		$$
		which is zero since $f(x) = 0$ and
		$2T(x)*h_j = f'(x)(h_j) = 0$ for $j=1,\ldots,k$.
	\end{proof}
	
	If $x \in Q^m$ then its polar $\Pol_Q(x)$ 
	is an element of $Q^{n-m}$ because of the following
	
	\begin{thm}
		We have
		$$
			T(x)\dual * T\big( T(x)\dual \big) = \tau \det T \ x*T(x),
		$$
		where $\tau = (-1)^{mn-m}I^2$ and $x \in \mathcal{G}^m$.
	\end{thm}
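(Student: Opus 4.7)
The plan is to reduce the left-hand side to a computation of $T(x)\dual \ast x\dual$ times $\det T$, and then to relate that scalar product of duals to $x \ast T(x)$ by commuting factors of $I^{-1}$ past a grade-$m$ element. The key input is the consequence of Hestenes' theorem that for a self-adjoint outermorphism $T$, one has $T \circ T\dual = (\det T)\,\id$. This is just \eqref{inverse_outermorphism} with $f = f^*$, and it holds even when $\det T = 0$ (both sides of the final identity vanish in that case, so nothing is lost).

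First I would apply that to $y = xI^{-1}$. Writing out $T\dual(y) = T(yI)I^{-1}$ and setting $y = xI^{-1}$ gives $T\dual(xI^{-1}) = T(x)I^{-1} = T(x)\dual$, so
\[
  T\bigl(T(x)\dual\bigr) = (T \circ T\dual)(xI^{-1}) = (\det T)\,xI^{-1} = (\det T)\,x\dual.
\]
Consequently
\[
  T(x)\dual \ast T\bigl(T(x)\dual\bigr) = (\det T)\,\bigl(T(x)\dual \ast x\dual\bigr),
\]
and the whole task reduces to evaluating $T(x)\dual \ast x\dual$ in terms of $T(x)\ast x = x\ast T(x)$.

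For that I would use the basic commutation rule $vI = (-1)^{n-1}Iv$ for $v \in V$, which extends by induction on the grade to $I^{-1}b = (-1)^{m(n-1)} b I^{-1}$ for every $b \in \mathcal{G}^m$ (both $T(x)$ and $x$ have grade $m$, hence so do the things we are dualizing). Then for $a,b \in \mathcal{G}^m$,
\[
  aI^{-1} \cdot bI^{-1} = (-1)^{m(n-1)} I^{-2}\, ab,
\]
so taking the scalar part yields $a\dual \ast b\dual = (-1)^{m(n-1)} I^{-2}\,(a\ast b)$. With $a = T(x)$ and $b = x$ this gives
\[
  T(x)\dual \ast x\dual = (-1)^{m(n-1)} I^{-2}\, T(x)\ast x.
\]
Combining with the previous step and using the symmetry of the scalar product delivers the claim, provided one identifies $I^{-2}$ with $I^2$. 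This last identification is automatic because $\mathcal{G} = \mathcal{G}(\mathbb{R}^{s,t})$ is nondegenerate, so by \eqref{pseudoscalar_squared} $I^2 = \pm 1$ and hence $I^{-2} = I^2$. Also $mn-m = m(n-1)$, matching the sign in the stated $\tau$.

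The only real point to be careful about is the commutation identity $I^{-1}b = (-1)^{m(n-1)} b I^{-1}$: it is genuinely routine once one verifies it on a single orthogonal basis vector and extends by induction and linearity, but keeping track of the exponent is where a sign slip is most likely to occur. Everything else is bookkeeping: Hestenes gives the factor $\det T$, and a single commutation through the pseudoscalar gives the factor $(-1)^{m(n-1)}I^2$.
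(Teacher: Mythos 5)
Your proof is correct and follows essentially the same route as the paper's: apply $T\circ T\dual=(\det T)\,\id$ (the Hestenes-theorem consequence from \eqref{inverse_outermorphism}) together with $T\dual(xI^{-1})=T(x)\dual$, then commute $I^{-1}$ past a grade-$m$ element and use $I^{-2}=I^2$. (Your sign exponent $(-1)^{m(n-1)}$ agrees with the paper's $(-1)^{m(n-m)}$, as they differ by the even quantity $m(m-1)$.)
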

	\begin{proof}
		Assume that $x \in \mathcal{G}^m$ and recall that
		$$
			T^* \circ T\dual = (\det T) \id
		$$
		holds for all outermorphisms $T$. When $T^*=T$ we then obtain
		\begin{eqnarray*}
			\lefteqn{ T(x)\dual * T(T(x)\dual) = (T(x)I^{-1}) * T^*(T\dual(xI^{-1})) }\\
			&=& (T(x)I^{-1}) * ((\det T) xI^{-1}) = \det T \thinspace \langle T(x)I^{-1}xI^{-1} \rangle_0 \\
			&=& (-1)^{m(n-m)} I^{-2} \det T \thinspace \langle T(x)x \rangle_0,
		\end{eqnarray*}
		which proves the theorem.
	\end{proof}
	
	\noindent
	In other words, $f(\Pol_Q(x)) = \tau \det T \thinspace f(x)$.
	We also have a reciprocality theorem:
	
	\begin{thm}
		If $x,y \in \mathcal{G}^m$ then
		$$
			x \wedge T(y)\dual = y \wedge T(x)\dual.
		$$
	\end{thm}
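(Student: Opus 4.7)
The plan is to reduce the identity, via the duality between inner and outer products, to a simple statement about the scalar product that follows from $T$ being self-adjoint.

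First I would note that since $T$ is grade-preserving and $z \mapsto z\dual = zI^{-1}$ sends $\mathcal{G}^k$ to $\mathcal{G}^{n-k}$, both $x \wedge T(y)\dual$ and $y \wedge T(x)\dual$ lie in the top grade $\mathcal{G}^n$. Next, I would apply Proposition \ref{prop_dual}, which gives the identity $u \wedge v\dual = (u \liprod v)\dual$ for all $u,v \in \mathcal{G}$. Thus
\begin{displaymath}
    x \wedge T(y)\dual = \bigl(x \liprod T(y)\bigr)\dual, \qquad
    y \wedge T(x)\dual = \bigl(y \liprod T(x)\bigr)\dual.
\end{displaymath}
Since the dual is a bijection (the algebra is nondegenerate, so $I$ is invertible), the claim reduces to showing
\begin{displaymath}
    x \liprod T(y) \;=\; y \liprod T(x).
\end{displaymath}

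Now the key observation: because $x$ and $T(y)$ both have grade $m$, the grade expansion in Proposition \ref{prop_grade_def} collapses to the single term of grade $m-m=0$, namely $x \liprod T(y) = \langle x\, T(y)\rangle_0 = x * T(y)$, and similarly $y \liprod T(x) = y * T(x)$. So it suffices to prove $x * T(y) = y * T(x)$. This follows immediately from the defining property \eqref{adjoint_property} of the adjoint together with self-adjointness $T = T^*$ and the symmetry of the scalar product:
\begin{displaymath}
    x * T(y) \;=\; T^*(x) * y \;=\; T(x) * y \;=\; y * T(x).
\end{displaymath}

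There is no real obstacle here beyond correctly unwinding the definitions; the content of the theorem is essentially just self-adjointness of $T$ translated through the $\wedge$–$\liprod$ duality. If one wanted to avoid invoking Proposition \ref{prop_dual} at the outset, an alternative would be to pair both sides against $I^{-1}$ using the scalar product and apply $u * v\dual = (u \wedge v)*I^{-1}$-type manipulations together with Lemma \ref{lem_blade_det} to reduce directly to a determinant identity, but the route above is shorter and uses only previously established machinery.
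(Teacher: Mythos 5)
Your proof is correct and follows essentially the same route as the paper: applying Proposition \ref{prop_dual} to convert the outer product with a dual into the dual of a left inner product, noting that for equal grades the inner product collapses to the scalar product, and then invoking self-adjointness of $T$ together with symmetry of $*$. The paper writes this as a single chain of equalities rather than first reducing via bijectivity of the dual, but the underlying argument is identical.
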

	\begin{proof}
		From Proposition \ref{prop_dual} we have
		$$
			x \wedge T(y)\dual = (x \liprod T(y))\dual = (x * T(y))\dual
			= (y * T(x))\dual = (y \liprod T(x))\dual = y \wedge T(x)\dual,
		$$
		where we used that $T(x),T(y) \in \mathcal{G}^m$.
	\end{proof}
	
	Finally, we have the following duality theorem:
	
	\begin{thm}
		$\big( T(x \wedge y) \big)\dual = I^2 T(x)\dual \vee T(y)\dual$.
	\end{thm}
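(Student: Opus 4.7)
The plan is to unfold both sides using the definitions of $(\cdot)\dual$ and $\vee$, apply the outermorphism property $T(x \wedge y) = T(x) \wedge T(y)$, and then collect the powers of the pseudoscalar, using that $I^2 \in \mathbb{F}$ is a (central) scalar thanks to nondegeneracy.

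First I would expand the right-hand side. By the definition of the meet in Definition \ref{def_dual},
$$
A \vee B = \bigl((AI^{-1}) \wedge (BI^{-1})\bigr)\,I
$$
for any $A,B \in \mathcal{G}$. Applying this with $A = T(x)\dual = T(x)I^{-1}$ and $B = T(y)\dual = T(y)I^{-1}$ gives
$$
T(x)\dual \vee T(y)\dual = \bigl(T(x)I^{-2} \wedge T(y)I^{-2}\bigr)\,I.
$$

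Second, since $I^2 \in \mathbb{F}$ (by nondegeneracy and \eqref{pseudoscalar_squared}), the factor $I^{-2}$ is a scalar and therefore commutes with everything and pulls out of both slots of the wedge product, producing $I^{-4}\bigl(T(x) \wedge T(y)\bigr) I$. Now the outermorphism hypothesis on $T$ lets me replace $T(x) \wedge T(y)$ by $T(x \wedge y)$, so
$$
T(x)\dual \vee T(y)\dual = I^{-4}\,T(x\wedge y)\,I.
$$
Multiplying by $I^2$ and again using centrality of the scalar $I^{-2}$ to move it past $T(x\wedge y)$, the right-hand side of the statement becomes
$$
I^2\bigl(T(x)\dual \vee T(y)\dual\bigr) = I^{-2}\,T(x\wedge y)\,I = T(x\wedge y)\,I^{-2}\,I = T(x\wedge y)\,I^{-1},
$$
which is precisely $(T(x\wedge y))\dual$.

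There is essentially no hard step here — the whole argument is bookkeeping of $I$-powers and one invocation of $T(x\wedge y) = T(x)\wedge T(y)$. The only place where care is required is remembering that $I$ itself need not commute with $T(x\wedge y)$ (for even-dimensional $V$ it anticommutes with odd-grade elements), so the scalar $I^{\pm 2}$ must be moved through $T(x\wedge y)$ rather than $I$ itself; this is exactly what makes the prefactor $I^2$ on the right-hand side come out correctly.
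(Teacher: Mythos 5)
Your proof is correct and follows exactly the approach the paper has in mind — unwind the definition $x \vee y = ((xI^{-1}) \wedge (yI^{-1}))I$, use bilinearity of $\wedge$ to pull the scalar $I^{-2}$ out, apply $T(x)\wedge T(y) = T(x\wedge y)$, and recombine the powers of $I$. The paper compresses this into a one-line remark ("follows immediately from the definition of the meet product and the fact that $T$ is an outermorphism"), so your write-up is simply the detailed version of the same argument; your closing caveat about not commuting $I$ itself through $T(x\wedge y)$ is exactly the right thing to keep in mind.
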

	\begin{proof}
		This follows immediately from the definition of the
		meet product and the fact that $T$ is an outermorphism.
	\end{proof}
	
	\begin{exc}
		Interpret the above theorems geometrically in the case $\mathbb{P}^2$.
	\end{exc}

	\newpage

\section{Discrete geometry} \label{sec_discrete}

In this section we consider applications of Clifford algebras
in discrete geometry and combinatorics.
We also extend the definition of Clifford algebra and
geometric algebra to infinite sets and infinite-dimensional
vector spaces.

\subsection{Simplicial complexes}

A \emph{simplicial complex} $K$ on a finite set $V$ is
a subset of $\mathscr{P}(V)$ which is closed under taking subsets,
i.e. such that for all $A,B \in \mathscr{P}(V)$
$$
	A \subseteq B \in K \quad \Rightarrow \quad A \in K.
$$

In the following,
let $\cl(V)$ denote the Clifford algebra over the set $V$,
where $v^2 = 1$, $\forall v \in V$, and where we use $R = \mathbb{Z}$ as scalars.
Define the \emph{chain group} 
$C(K) := \bigoplus_K \mathbb{Z}$ and note that
$C(K)$ is an abelian subgroup of $\cl(V)$ (w.r.t addition).
The sets $A \in K$ with signs are called \emph{(oriented) simplices} 
and $|A|-1$ is the \emph{dimension} of the simplex $A$.
Hence, we write
$$
	C(K) = \bigoplus_{d \ge -1} C_d(K),
	\quad \textrm{where} \quad
	C_d(K) := \bigoplus_{\genfrac{}{}{0pt}{}{A \in K}{|A|-1 = d}} \mathbb{Z}.
$$
An element of $C_d(K) \subseteq \cl^{d+1}(V)$ is called a \emph{$d$-chain}.
The Clifford algebra structure of $\cl(V)$ handles the
orientation of the simplices, so that e.g. 
a 1-simplex $v_0v_1$, thought of as a line segment directed from the point $v_0$ to $v_1$,
expresses its orientation as $v_0v_1 = -v_1v_0$.

Let $s_V := \sum\limits_{v \in V} v \in \cl^1(V)$ and define
the \emph{boundary map} $\partial_V$ by
$$
	\setlength\arraycolsep{2pt}
	\begin{array}{rccl}
		\partial_V\!: 	& \cl(V) 	& \to 		& \cl(V), \\
						& x 		& \mapsto 	& \partial_V(x) := s_V \liprod x.
	\end{array}
$$
Then $\partial_V^2 = 0$, since
\begin{equation} \label{boundary_squared}
	\partial_V \circ \partial_V (x) = s_V \liprod (s_V \liprod x) = (s_V \wedge s_V) \liprod x = 0,
\end{equation}
and the action of $\partial_V$ 
on a $k$-simplex $v_0v_1 \ldots v_k$ is given by
$$
	\partial_V(v_0 v_1 \ldots v_k) 
		= \sum_{i=0}^k v_i \liprod (v_0 v_1 \ldots v_k)
		= \sum_{i=0}^k (-1)^i v_0 v_1 \ldots \check{v}_i \ldots v_k.
$$
Note that $C(K)$ is invariant under the action of $\partial_V$
by the definition of a simplicial complex.
We also have, for $x,y \in \cl$, (see Exercise \ref{exc_anti_derivation})
\begin{eqnarray*}
	\partial_V(xy) &=& \partial_V(x)y + x^\star \partial_V(y), \\
	\partial_V(x \wedge y) &=& \partial_V(x) \wedge y + x^\star \wedge \partial_V(y), \\
	\partial_V: \cl^m(V) &\to& \cl^{m-1}(V), \\
	\partial_V: C_d(K) &\to& C_{d-1}(K).
\end{eqnarray*}

\begin{exmp}
	Consider a set of three vertices, $V = \{v_1,v_2,v_3\}$,
	and a simplicial complex $K = \{\varnothing, v_1,v_2,v_3, v_1v_2, v_2v_3\}$.
	An example of a 1-chain is $x = v_1v_2 + v_2v_3$, with boundary
	$\partial_V x = v_2 - v_1 + v_3 - v_2 = v_3 - v_1$.
	Applying $\partial_V$ again, we explicitly obtain $\partial_V^2 x = 1-1 = 0$.
	See Figure \ref{fig_complex_ex}.

	\begin{figure}[ht]
		\centering
		\psfrag{T_v1}{$v_1$}
		\psfrag{T_v2}{$v_2$}
		\psfrag{T_v3}{$v_3$}
		\psfrag{T_v1v2}{$v_1 v_2$}
		\psfrag{T_v2v3}{$v_2 v_3$}
		\includegraphics{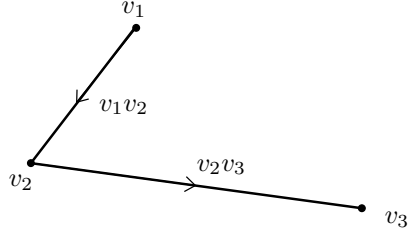}
		\caption{A 1-chain with nontrivial boundary in a simplicial complex $K$.}
		\label{fig_complex_ex}
	\end{figure}
\end{exmp}

\subsubsection{Chain maps and complex morphisms}

Let $K$ and $L$ be simplicial complexes on the finite sets $V$ and $W$, respectively.
A grade preserving $\mathbb{Z}$-linear map 
$F: C(K) \to C(L)$ is called a \emph{chain map}
if the following diagram commutes
$$
	\setlength\arraycolsep{2pt}
	\begin{array}{ccccc}
		C(K)					& \xrightarrow{F} 	& C(L) \\
		\partial_V \downarrow	&					& \downarrow \partial_W \\
		C(K) 					& \xrightarrow{F}	& C(L)
	\end{array}
$$
i.e. if $\partial_W \circ F = F \circ \partial_V$.

A natural way to obtain a chain map is through the following construction.
Let $f: V \to W$ be a map which takes simplices of $K$ to simplices of $L$,
i.e. such that
$$
	\{ f(a) \in W : a \in A \} =: f(A) \in L
$$
whenever $A \in K$.
Then $f$ is called a \emph{complex morphism}.

We can extend a complex morphism $f$ to an 
outermorphism $f_\wedge: \cl(V) \to \cl(W)$
such that $f_\wedge: C(K) \to C(L)$.

\begin{lem}
	$f_\wedge^*(s_W) = s_V$
\end{lem}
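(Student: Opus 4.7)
The plan is to use the defining property of the adjoint, namely $f_\wedge^*(x) * y = x * f_\wedge(y)$ for all $x,y$, together with the observation that the adjoint of a grade preserving map is grade preserving (this is the first thing established in the proof of Theorem \ref{thm_hestenes}, Hestenes' Theorem). Since $f_\wedge$ is an outermorphism and $s_W \in \cl^1(W)$, we conclude $f_\wedge^*(s_W) \in \cl^1(V)$, so we can write $f_\wedge^*(s_W) = \sum_{v \in V} \alpha_v v$ for some integer coefficients $\alpha_v$.

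First I would extract those coefficients via the scalar product. Because $V$ is an orthonormal family of generators with $v^2 = 1$, the singletons satisfy $v * v' = \delta_{v,v'}$, and hence $\alpha_v = f_\wedge^*(s_W) * v$. Applying the defining property of the adjoint on the right-hand side gives
$$
\alpha_v = f_\wedge^*(s_W) * v = s_W * f_\wedge(v) = s_W * f(v),
$$
where the last equality uses that $f_\wedge$ extends $f$ on $V$.

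Next I would evaluate this using orthonormality of the singletons in $W$ (the same signature convention $w^2 = 1$ applies there):
$$
s_W * f(v) = \sum_{w \in W} w * f(v) = \sum_{w \in W} \delta_{w,f(v)} = 1,
$$
since $f(v) \in W$. Therefore $\alpha_v = 1$ for every $v \in V$, which gives $f_\wedge^*(s_W) = \sum_{v \in V} v = s_V$, as claimed.

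There is no real obstacle here; the only subtle point is justifying that $f_\wedge^*$ preserves grades, which follows exactly as in the proof of Theorem \ref{thm_hestenes} from grade preservation of $f_\wedge$ together with the fact that the scalar product pairs $\cl^k$ only with $\cl^k$. Once that is in hand, the identity is forced by testing against the basis $\{v\}_{v \in V}$ of $\cl^1(V)$.
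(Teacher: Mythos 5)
Your proposal is correct and follows essentially the same route as the paper's proof: both compute $v * f_\wedge^*(s_W) = f_\wedge(v) * s_W = 1$ for each $v \in V$ via the adjoint identity, and both invoke grade preservation of the adjoint of an outermorphism to rule out components outside $\cl^1(V)$, whence $f_\wedge^*(s_W) = s_V$. The paper just states this more tersely, without explicitly naming the coefficients $\alpha_v$.
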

\begin{proof}
	For any $v \in V$ we have
	$$
		v * f_\wedge^*(s_W) = f_\wedge(v) * s_W = 1,
	$$
	while $x * f_\wedge^*(s_W) = 0$ for any $x \in \cl^{d \neq 1}(V)$
	because outermorphisms preserve grades.
	From this we conclude that $f_\wedge^*(s_W) = s_V$.
\end{proof}

\begin{cor}
	$f_\wedge$ is a chain map.
\end{cor}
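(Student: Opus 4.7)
The plan is to unravel what must be shown and then reduce it immediately to Hestenes' theorem together with the preceding lemma. By definition of the boundary maps, the statement $\partial_W \circ f_\wedge = f_\wedge \circ \partial_V$ on $C(K)$ amounts to the identity
$$
    s_W \liprod f_\wedge(x) \;=\; f_\wedge\bigl(s_V \liprod x\bigr) \qquad \forall\, x \in C(K).
$$
The right-hand side does land in $C(L)$ because $f_\wedge$ takes $C(K)$ into $C(L)$ by the complex-morphism hypothesis, and $\partial_V$ preserves $C(K)$, so there is no issue with the domain/codomain.

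First I would apply Hestenes' theorem (Theorem \ref{thm_hestenes}) to the outermorphism $F = f_\wedge$ with the choice $x \leftarrow s_W$ and $y \leftarrow x$. The theorem gives
$$
    s_W \liprod f_\wedge(x) \;=\; f_\wedge\bigl(f_\wedge^*(s_W) \liprod x\bigr).
$$
Then I would invoke the preceding lemma, $f_\wedge^*(s_W) = s_V$, to rewrite the right-hand side as $f_\wedge(s_V \liprod x)$, which is exactly what is required. This establishes the chain map property for arbitrary $x \in \cl(V)$, and in particular on $C(K)$.

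The only subtlety worth flagging is that Hestenes' theorem was stated under the assumption that $\mathcal{G}$ is nondegenerate; here we are working in $\cl(V)$ with signature $r(v) = 1$ for every $v \in V$, which is nondegenerate, and the scalar ring $\mathbb{Z}$ still supports the adjoint construction used in the lemma (the identity $v * f_\wedge^*(s_W) = f_\wedge(v) * s_W = 1$ only needs integer arithmetic). Hence the tools apply verbatim and the corollary follows in one line from the lemma and Hestenes. The main conceptual point -- and the reason the proof is so short -- is that the boundary operator has been encoded as left contraction with a distinguished vector, so compatibility with outermorphisms is precisely the content of Hestenes' identity.
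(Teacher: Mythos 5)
Your proof is correct and matches the paper's argument exactly: both apply Hestenes' theorem to rewrite $s_W \liprod f_\wedge(x)$ as $f_\wedge(f_\wedge^*(s_W) \liprod x)$ and then use the preceding lemma $f_\wedge^*(s_W) = s_V$ to finish. The remark about the nondegeneracy of the signature and the sufficiency of integer arithmetic is a sensible sanity check, though the paper leaves it implicit.
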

\begin{proof}
	Consider the diagram
	$$
		\setlength\arraycolsep{2pt}
		\begin{array}{ccc}
			C(K)					& \xrightarrow{f_\wedge} 	& C(L) \\
			\partial_V\downarrow	&							& \downarrow \partial_W \\
			C(K) 					& \xrightarrow{f_\wedge}	& C(L)
		\end{array}
	$$
	By Theorem \ref{thm_hestenes}, we have
	$$
		\partial_W f_\wedge x = s_W \liprod (f_\wedge x) = f_\wedge\big( (f_\wedge^* s_W) \liprod x \big)
		= f_\wedge(s_V \liprod x) = f_\wedge \partial_V x
	$$
	for all $x \in C(K)$.
\end{proof}

\subsubsection{An index theorem and Sperner's lemma}

Consider an arbitrary linear map $F: \cl(V) \to \cl(W)$ 
such that $F\partial_V = \partial_W F$
(typically the chain map $f_\wedge$ induced by some $f$)
and define two linear mappings,
the \emph{content} and the \emph{index}, by
$$
	\setlength\arraycolsep{2pt}
	\begin{array}{rccll}
		\Cont: 	& \cl(V) 	& \to 		& \mathbb{Z} 	\\
				& x 		& \mapsto 	& W * (Fx),  & \textrm{and,} \\[5pt]
		\Ind: 	& \cl(V) 	& \to 		& \mathbb{Z} 	\\
				& x 		& \mapsto 	& (Wa) * (F\partial_V x),
	\end{array}
$$
where $a$ is a fixed element in $W$.

\begin{thm}
	$\Cont = \Ind$.
\end{thm}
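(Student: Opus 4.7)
The plan is to unwind the definitions and turn the scalar product with $\partial_W F x$ into a scalar product with $F x$ against a modified multivector, then show that this modified multivector is just $W$ itself (the pseudoscalar of $\cl(W)$), so that we recover $\Cont(x)$.

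First I would use the hypothesis $F \partial_V = \partial_W F$ together with the definition $\partial_W y = s_W \liprod y$ to rewrite
$$\Ind(x) = (Wa) * (F\partial_V x) = (Wa) * (s_W \liprod Fx).$$
Next, the adjoint identity $u * (v \liprod w) = (u \wedge v) * w$ from Proposition \ref{prop_trix} converts this to
$$\Ind(x) = \bigl((Wa) \wedge s_W\bigr) * Fx.$$
So the whole theorem reduces to the purely combinatorial identity
$$(Wa) \wedge s_W = W \qquad \text{for every } a \in W.$$

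To establish this identity I would expand $s_W = \sum_{w \in W} w$ and observe that, for $w \neq a$, the basis blade $Wa$ already contains $w$ as a factor, so $(Wa) \wedge w = 0$ by property (\emph{v}) of Lemma \ref{lem_finite_tau}; only the term $w = a$ survives. Writing $W = w_1 \cdots w_n$ with $a = w_i$ and commuting $w_i$ back into its slot (using $w_i^2 = 1$), a count of transpositions shows that $Wa = (-1)^{n-i}\hat{w}_i$ and hence $(Wa)\wedge a = (-1)^{n-i}\hat{w}_i\, a = W$. This is essentially a restatement of the general formula for duality with respect to the pseudoscalar, and follows cleanly from the basic Clifford relations, so there is no real obstacle here — the only thing to be a little careful about is keeping the sign conventions consistent in the Clifford product. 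Putting the two steps together gives $\Ind(x) = W * Fx = \Cont(x)$ for all $x \in \cl(V)$, as required.
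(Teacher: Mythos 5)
Your proof is correct and follows essentially the same route as the paper: use the chain-map property to replace $F\partial_V x$ by $\partial_W Fx = s_W \liprod Fx$, apply the adjoint identity $x * (y \liprod z) = (x \wedge y) * z$ from Proposition \ref{prop_trix}, and observe that $Wa \wedge s_W = W$. The paper leaves that last identity unverified; you spell it out, though the sign bookkeeping can be skipped since $(Wa)\wedge a = (Wa)a = Wa^2 = W$ directly once all other terms of $s_W$ are killed.
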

\begin{proof}
	Since $F$ is a chain map, we have for all $x \in \cl(V)$
	\begin{eqnarray*}
		Wa * F \partial_V x &=& Wa * \partial_W F x = Wa * (s_W \liprod Fx) \\
		&=& (Wa \wedge s_W) * Fx = W * Fx,
	\end{eqnarray*}
	which proves the theorem.
\end{proof}

As an application we will prove Sperner's lemma (in the plane for simplicity).

\begin{figure}[t]
	\centering
	\psfrag{T_A}{$A$}
	\psfrag{T_B}{$B$}
	\psfrag{T_C}{$C$}
	\psfrag{T_a}{$a$}
	\psfrag{T_b}{$b$}
	\psfrag{T_c}{$c$}
	\psfrag{T_v}{$v$}
	\psfrag{T_vp}{$v'$}
	\psfrag{T_vpp}{$v''$}
	\includegraphics{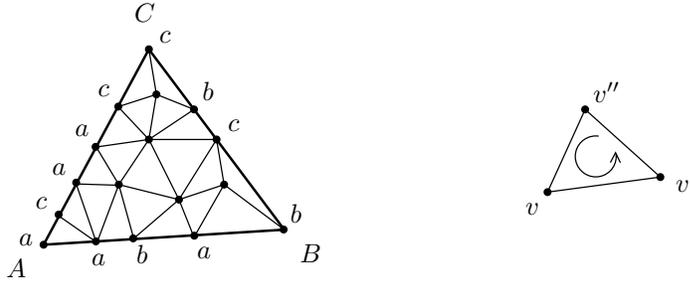}
	\caption{A triangulation of the triangle $ABC$, 
	with a labeling of the nodes of its sides.
	Positive orientation means with respect to the plane of the page, 
	as in the triangle on the right.}
	\label{fig_triangulation}
\end{figure}

\begin{thm}[Sperner's lemma in the plane]
	Consider a triangulation of a triangle $ABC$ and label
	every node in the triangulation 
	with one of the symbols $a,b$ or $c$ in such a manner that
	$A,B,C$ is labeled $a,b,c$ respectively and so that every
	node on the side $AB$ is labeled with $a$ or $b$,
	every node on side $BC$ is labeled with $b$ or $c$,
	and those on $AC$ with $a$ or $c$.
	A triangle in the triangulation is called \emph{complete} if its
	nodes nodes have distinct labels (i.e. $a$, $b$ and $c$).
	Sperner's lemma states that there exists an odd number
	of complete triangles in the triangulation.
\end{thm}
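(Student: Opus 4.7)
The plan is to derive Sperner's lemma as a direct application of the content-index theorem just established. First I would set up the combinatorial Clifford data: let $V$ denote the vertex set of the triangulation and $W = \{a,b,c\}$, both equipped with the all-$+$ signature, let $K \subseteq \mathscr{P}(V)$ be the simplicial complex of the triangulation, and let $L = \mathscr{P}(W)$. The Sperner labeling $f\colon V \to W$ sends each simplex of $K$ to a subset of $W$, hence is a complex morphism, so its outermorphism extension $f_\wedge\colon C(K) \to C(L)$ is a chain map. Let $T \in C_2(K)$ be the \emph{fundamental $2$-chain} obtained by summing all triangles of the triangulation with a consistent orientation (say positively with respect to the plane of the page). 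The idea is then to apply the content-index theorem to $F = f_\wedge$, $x = T$, with the distinguished element of $W$ taken to be $a$.

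For the content I would compute $\Cont(T) = abc * f_\wedge(T) = \sum_\Delta abc * f_\wedge(\Delta)$. For each $2$-simplex $\Delta = v_0 v_1 v_2$, the factor $f_\wedge(\Delta) = f(v_0) \wedge f(v_1) \wedge f(v_2)$ vanishes whenever two of the labels coincide, so only complete triangles contribute; for each complete $\Delta$ this factor equals $\varepsilon_\Delta\, abc$ with $\varepsilon_\Delta = \pm 1$ determined by the orientation of the complete triangle relative to the fixed order $a,b,c$. Since $(abc)^2 = -1$, this gives $\Cont(T) = -N$, where $N := \sum_{\Delta \text{ complete}} \varepsilon_\Delta$ is the signed count of complete triangles.

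For the index I use $Wa = abc\cdot a = bc$, so $\Ind(T) = bc * f_\wedge(\partial_V T)$. Because the triangles in $T$ are oriented consistently, every interior edge appears in exactly two of them with opposite signs and cancels, leaving $\partial_V T$ equal to the cyclic boundary chain traversing $A \to B \to C \to A$ along the sides of $ABC$. The scalar product with $bc$ then selects only those edges whose two endpoint labels form $\{b, c\}$, and by the Sperner boundary conditions such edges occur exclusively along side $BC$ (on $AB$ only labels $a,b$ appear, and on $CA$ only $a,c$). Along $BC$, reading the labels as a sequence $\sigma_0 = b, \sigma_1, \ldots, \sigma_n = c$ in $\{b,c\}$, each consecutive pair contributes $+bc$, $-bc$, or $0$ according to whether $(\sigma_i, \sigma_{i+1})$ is $(b,c)$, $(c,b)$, or constant; this telescopes to exactly one copy of $bc$, so $\Ind(T) = bc * bc = -1$.

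Combining $\Cont(T) = \Ind(T)$ then yields $N = 1$; in particular $N$ is odd, and since the unsigned total $n_+ + n_-$ of complete triangles agrees modulo $2$ with $N = n_+ - n_-$, the total count is odd, as desired. The main obstacle I anticipate is orientation bookkeeping: one must verify that the sign $\varepsilon_\Delta$ in $f_\wedge(\Delta) = \varepsilon_\Delta \, abc$ is consistent with the sign picked up by $\partial_V = s_V \liprod (\cdot)$ when the interior edges cancel, and that the telescoping sum on side $BC$ produces $+bc$ (rather than $-bc$) once the induced boundary orientation is tracked through the Clifford product. Once these sign conventions are aligned, the rest of the computation is routine.
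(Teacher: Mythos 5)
Your proposal is correct and follows essentially the same approach as the paper: both set up the chain map $f_\wedge$ from $\cl(V)$ to $\cl(W)$, apply the $\Cont = \Ind$ theorem to the fundamental $2$-chain, reduce the index to the boundary of $ABC$, observe that the Sperner boundary conditions confine the relevant $bc$-edges to side $BC$, and compute $\Ind = bc * bc = -1$. You spell out two steps the paper leaves implicit (why $\Cont(T)$ equals $-1$ times the \emph{signed} count of complete triangles, and why the signed count agrees with the unsigned count modulo $2$), but these are elaborations rather than a different route.
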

\begin{proof}
	To prove this, let $V$ denote the set of nodes in the triangulation.
	To each triangle in the triangulation we associate the simplex
	$v \wedge v' \wedge v''$ in $\cl(V)$, where the order is
	chosen so that $v,v',v''$ is positively oriented as in Figure \ref{fig_triangulation}.
	Let $x$ denote the sum of all these simplices.
	
	Now, we let $f: V \to W:=\{a,b,c\}$ be a map defined by
	the labeling, and extend it to an outermorphism (chain map)
	$f_\wedge=F: \cl(V) \to \cl(W)$.
	We now see that the triangulation has a complete triangle if
	$\Cont x \neq 0$, i.e. if $W * Fx \neq 0$.
	But by the index theorem, $\Cont = \Ind$,
	so we only have to prove that 
	$\Ind x = Wa * F(\partial_V x) \neq 0$,
	or equivalently that 
	$abca * \partial_W Fx = bc * \partial_W Fx \neq 0$.
	Note that $\partial_V x$ is just the boundary chain of the triangulation.
	Hence, the only terms in $\partial_W Fx$ we need to
	consider are those corresponding to the nodes on the segment $BC$.
	But since $b \wedge b = c \wedge c = 0$ we conclude that
	(see Figure \ref{fig_triangle-seg})
	$$
		\Ind x = bc * (b \wedge c + c \wedge b + b \wedge c + c \wedge b + \ldots + b \wedge c)
		= bc * bc = -1,
	$$
	so that $W * Fx = -1$.
	This proves that there is an odd number of complete triangles.
\end{proof}

\begin{figure}[t]
	\centering
	\psfrag{T_A}{$A$}
	\psfrag{T_B}{$B$}
	\psfrag{T_C}{$C$}
	\psfrag{T_a}{$a$}
	\psfrag{T_b}{$b$}
	\psfrag{T_c}{$c$}
	\includegraphics{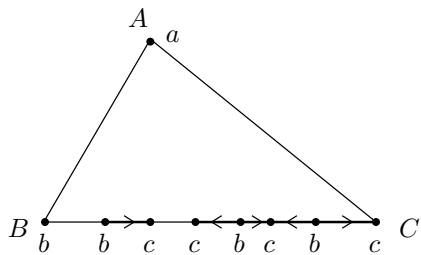}
	\caption{The nodes on the segment $BC$ of a triangulation of $ABC$.}
	\label{fig_triangle-seg}
\end{figure}

\noindent
By using induction on the dimension it is now easy to prove
Sperner's lemma in higher dimensions.

\begin{exc}
	Formulate and prove Sperner's lemma in general.
\end{exc}

\subsubsection{Homology}

Another example where chain maps are useful is in homology theory,
which is an invaluable tool in e.g. topology.

Again, let $K$ be a complex on $V$,
and denote the boundary map on $V$ by $\partial_V$. 
We define the \emph{cycles} $Z(K) = \ker \partial_V$
and the \emph{boundaries} $B(K) = \im \partial_V$.

Since $\partial_V^2 = 0$ we have $B(K) \subseteq Z(K)$.
Hence, we can introduce the \emph{homology}
$$
	H(K) = Z(K)/B(K).
$$
We also define $d$-cycles, $d$-boundaries, and the
corresponding $d$th homology group,
\begin{eqnarray*}
	Z_d(K) &=& Z(K) \cap C_d(K), \\
	B_d(K) &=& B(K) \cap C_d(K), \\
	H_d(K) &=& Z_d(K)/B_d(K).
\end{eqnarray*}

The fundamental theorem of finitely generated abelian groups
(see e.g. \cite{lang}) states that
every finitely generated abelian group is a direct sum of cyclic subgroups.
Thus,
$$
	H_d(K) = F_d \oplus T_d,
$$
where $F_d \cong \mathbb{Z}^r$ for some $r \in \mathbb{N}$ called
the \emph{rank} of $H_d(K)$ or the \emph{Betti-number} denoted $\beta_d$,
and $T_d$ is the \emph{torsion part}:
for all $t \in T_d$ there is a positive integer $n$ such that $nt = 0$.

\begin{thm}
	Every chain map $F: C(K) \to C(L)$ induces a homomorphism
	$F_*: H(K) \to H(L)$ on the homology level.
	Moreover, $F_*: H_d(K) \to H_d(L)$ for all $d \ge 0$.
\end{thm}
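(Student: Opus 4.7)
The plan is the standard descent-to-quotient argument from homological algebra: the entire proof is an exercise in unpacking the chain-map identity $\partial_W \circ F = F \circ \partial_V$ together with the fact that, by definition, $F$ is grade preserving. There is no genuine obstacle here; the only thing one must check carefully is well-definedness on cosets.

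First I would verify the two inclusions $F(Z(K)) \subseteq Z(L)$ and $F(B(K)) \subseteq B(L)$. If $z \in Z(K)$, i.e.\ $\partial_V z = 0$, then $\partial_W(Fz) = F(\partial_V z) = 0$, so $Fz \in Z(L)$. If $b \in B(K)$, write $b = \partial_V c$ for some $c \in C(K)$; then $Fb = F(\partial_V c) = \partial_W(Fc) \in B(L)$. Thus $F$ restricts to $\mathbb{Z}$-linear maps $Z(K) \to Z(L)$ and $B(K) \to B(L)$.

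Given these restrictions I would define $F_*\!: H(K) \to H(L)$ by $F_*([z]) := [Fz]$, where $[\cdot]$ denotes the class in the quotient $Z/B$. Well-definedness follows from the second inclusion above: if $[z]=[z']$ then $z-z' \in B(K)$, so $Fz - Fz' = F(z-z') \in B(L)$, and the class $[Fz] = [Fz']$ does not depend on the chosen representative. Additivity (and $\mathbb{Z}$-linearity) of $F_*$ is inherited at once from that of $F$, so $F_*$ is a homomorphism of abelian groups.

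For the grade-$d$ refinement, recall that by the definition preceding the theorem a chain map is assumed grade preserving, hence $F(C_d(K)) \subseteq C_d(L)$. Intersecting with kernels and images of the boundary maps, the arguments above restrict without change to give $F\!: Z_d(K) \to Z_d(L)$ and $F\!: B_d(K) \to B_d(L)$, and consequently a well-defined $F_*\!: H_d(K) \to H_d(L)$ for every $d \geq 0$.
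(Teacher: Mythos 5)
Your proof is correct and follows essentially the same route as the paper's: verify $F(Z(K)) \subseteq Z(L)$ and $F(B(K)) \subseteq B(L)$ from the chain-map identity, pass to the quotient, and invoke grade preservation for the refinement to each $H_d$. You spell out the well-definedness check a bit more explicitly than the paper, but the argument is the same.
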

\begin{proof}
	Consider the commutative diagram
	$$
		\setlength\arraycolsep{2pt}
		\begin{array}{ccccc}
			C(K)		& \xrightarrow{\partial_V} 	& C(K) \\
			F\downarrow &							& \downarrow F \\
			C(L) 		& \xrightarrow{\partial_W}	& C(L)
		\end{array}
	$$
	We have for any $b \in B(K)$ that $b = \partial_V x$,
	for some $x \in V$, so that
	$$
		Fb = F\partial_V x = \partial_W Fx \in B(L).
	$$
	Also, $z \in Z(K)$ implies
	$
		\partial_W Fz = F \partial_V z = F0 = 0,
	$
	and $Fz \in Z(L)$.
	Hence, the map
	$$
		\setlength\arraycolsep{2pt}
		\begin{array}{ccc}
			H(K) &\to& H(L) \\
			z+B(K) &\mapsto& Fz + B(L)
		\end{array}
	$$
	is well defined.
	Lastly, $F_*: H_d(K) \to H_d(L)$ follows since $F$ is grade preserving.
\end{proof}

\begin{rem}
	From Exercise \ref{exc_wedge_commutativity} 
	and \eqref{boundary_squared} it follows
	that a more general boundary map $\partial_p := s_p \liprod$, 
	with $s_p := \sum_{A \in K, |A| = p} A$ and $p$ odd,
	would also give rise to a homology on $K$.
\end{rem}

\subsection{The number of spanning trees in a graph}

We consider a finite connected simple graph 
(i.e. no loops or multi-edges) $G$ with a set of edges $E$ and vertices $V$.
A subset $T$ of $E$ is called a \emph{spanning tree} if it is connected,
has no cycles, and contains every vertex in $V$.
This is equivalent with having no cycles and having $|V|-1$ edges.
In this subsection we will use Clifford algebra to prove
a well known result about the number of spanning trees in $G$
(see e.g. \cite{petersson} and references therein for recent generalizations).

Let us first choose a total ordering on $V$, denoted $\prec$.
We introduce the Clifford algebras $\cl(V)$ and $\cl(E)$ with
signatures $1$. If $e = \{v',v\} \in E$ with $v' \prec v$, then
we define $\delta(e) := v - v'$, and extend $\delta$ to an
outermorphism
$$
	\delta: \cl(E) \to \cl(V).
$$
We will need two simple lemmas.

\begin{lem}
	If the subgraph of $G$ induced by $F \subseteq E$
	contains a cycle \\ $v_0v_1 \ldots v_kv_0$, then $\delta F = 0$.
\end{lem}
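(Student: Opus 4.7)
The plan is to exploit the fact that $\delta$ is an outermorphism together with Theorem \ref{thm_linear_indep}: it suffices to exhibit, among the edges of $F$, a subset whose $\delta$-images are linearly dependent in $\cl^1(V)$, for then their wedge vanishes and the full wedge $\delta F$ vanishes as well.

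Concretely, I would proceed as follows. Let $c_i := \{v_{i-1},v_i\} \in F$ for $i=1,\ldots,k$ and $c_{k+1} := \{v_k,v_0\} \in F$ denote the edges of the given cycle, so these are $k+1$ distinct elements of $F$. Writing $F$ as a basis element of $\cl(E)$, i.e.\ as the Clifford product of its singleton edges in some fixed order, we can (up to an overall sign, which is harmless) reorder so that
\begin{equation*}
    F \ =\ \pm\, c_1 \wedge c_2 \wedge \cdots \wedge c_{k+1} \wedge F',
\end{equation*}
where $F'$ is the wedge of the remaining edges. Applying the outermorphism $\delta$,
\begin{equation*}
    \delta F \ =\ \pm\, \delta(c_1) \wedge \delta(c_2) \wedge \cdots \wedge \delta(c_{k+1}) \wedge \delta(F').
\end{equation*}

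The heart of the argument is then the linear dependence of the vectors $\delta(c_i) \in \cl^1(V)$. By definition of $\delta$, there are signs $\epsilon_i \in \{\pm 1\}$ (coming from whether the endpoints of $c_i$ appear in $\prec$-increasing or decreasing order) such that $\delta(c_i) = \epsilon_i(v_i - v_{i-1})$ for $i=1,\ldots,k$ and $\delta(c_{k+1}) = \epsilon_{k+1}(v_0 - v_k)$. Multiplying the $i$th relation by $\epsilon_i$ and summing telescopes around the cycle, yielding the nontrivial linear relation
\begin{equation*}
    \sum_{i=1}^{k+1} \epsilon_i\, \delta(c_i) \ =\ \sum_{i=1}^{k} (v_i - v_{i-1}) + (v_0 - v_k) \ =\ 0
\end{equation*}
among the $\delta(c_i)$. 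Hence $\{\delta(c_1),\ldots,\delta(c_{k+1})\}$ is linearly dependent in $\cl^1(V)$, and by Theorem \ref{thm_linear_indep} we conclude $\delta(c_1) \wedge \cdots \wedge \delta(c_{k+1}) = 0$. Substituting back gives $\delta F = 0$, as required.

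The only mild obstacle is bookkeeping the signs $\epsilon_i$ introduced by the ordering $\prec$ on $V$ and the chosen ordering of edges inside $F$, but as the conclusion asserts merely that $\delta F$ vanishes, all such signs are absorbed into an overall $\pm 1$ and play no role.
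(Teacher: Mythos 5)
Your proof is correct and takes essentially the same route as the paper: the paper also observes the telescoping identity $(v_0-v_1)+(v_1-v_2)+\cdots+(v_k-v_0)=0$, concludes that the wedge of these differences vanishes, and then that $\delta F=0$ since $\delta$ is an outermorphism and the cycle's edges all lie in $F$. One small remark: citing Theorem~\ref{thm_linear_indep} literally only gives that the blade $\delta(c_1)\wedge\cdots\wedge\delta(c_{k+1})$ is linearly dependent over $\mathbb{Z}$, not that it is zero; here it does vanish outright because your dependence relation has a unit coefficient $\epsilon_i=\pm1$, so you can substitute one $\delta(c_i)$ as a $\mathbb{Z}$-combination of the others and kill the wedge directly, which is effectively what the paper does.
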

\begin{proof}
	The identity
	$$
		(v_0 - v_1) + (v_1 - v_2) + \ldots + (v_{k-1} - v_k) + (v_k - v_0) = 0
	$$
	implies $(v_0 - v_1) \wedge (v_1 - v_2) \wedge \ldots \wedge (v_k - v_0) = 0$,
	and $\delta F = 0$.
\end{proof}

\begin{lem}
	If $F \subseteq E$, $\delta F \neq 0$ and $|F| = |V|-1$,
	then there exists an $\epsilon = \pm 1$, such that
	$v \wedge \delta F = \epsilon V$ for every $v$ in $V$.
\end{lem}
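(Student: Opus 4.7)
The plan is to exploit the tree structure forced by the hypotheses. First, by the preceding lemma together with the edge count, $F$ must be a spanning tree of $G$: acyclicity comes from $\delta F \neq 0$, and a forest on $|V|$ vertices with $k$ components has $|V| - k$ edges, so $|F| = |V|-1$ forces $k = 1$. This structural observation is the only input needed from graph theory; the rest is an algebraic computation in $\cl(V)$.

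Since $\delta F$ is a $(|V|{-}1)$-vector, each $v \wedge \delta F$ lies in the one-dimensional $\mathbb{Z}$-module $\cl^{|V|}(V) = \mathbb{Z}\cdot V$, so I can write $v \wedge \delta F = \epsilon_v V$ for some $\epsilon_v \in \mathbb{Z}$. To see that $\epsilon_v$ does not depend on $v$, observe that for every edge $\{v',v''\} \in F$ the vector $\delta(\{v',v''\}) = \pm(v'' - v')$ is literally one of the factors making up $\delta F$, so after an antisymmetric reshuffle that factor would appear twice, giving $(v''-v') \wedge \delta F = 0$ and hence $\epsilon_{v'} = \epsilon_{v''}$. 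Since $F$ is connected, propagating this equality along paths in the tree yields a single common value $\epsilon := \epsilon_v$ for all $v \in V$.

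To see that $|\epsilon| = 1$, I would induct on $|V|$. The base $|V|=1$ is immediate: $F = \varnothing$, $\delta F = 1$, and $v \wedge 1 = v = V$. For the inductive step, pick a leaf $v_0$ of the tree $F$ with unique incident edge $e = \{v_0, u\}$, and set $F' := F \setminus \{e\}$; then $F'$ is a spanning tree on $V' := V \setminus \{v_0\}$ and no edge of $F'$ touches $v_0$. Reordering $F$ so that $e$ is last gives $\delta F = \pm\, \delta F' \wedge \delta(e)$, and
$$
v_0 \wedge \delta F \;=\; \pm\, v_0 \wedge \delta F' \wedge (v_0 - u) \;=\; \mp\, v_0 \wedge \delta F' \wedge u \;=\; \pm\, v_0 \wedge (u \wedge \delta F'),
$$
using $v_0 \wedge v_0 = 0$ and Exercise \ref{exc_wedge_commutativity}. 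By the inductive hypothesis applied to $F'$ on $V'$, $u \wedge \delta F' = \pm V'$ (with $V'$ now denoting the ordered product of the vertices of $V \setminus \{v_0\}$), and a final wedge $v_0 \wedge V' = \pm V$ completes the step.

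The main obstacle is purely bookkeeping of signs: the chosen total order on $E$ (which determines the sign in $\delta F = \pm \delta F' \wedge \delta(e)$), the parity of commuting $v_0$ past the grade-$(|V|{-}2)$ element $\delta F'$, and the position of $v_0$ in the order on $V$ each contribute a sign. Fortunately the lemma only asserts $\epsilon \in \{-1,+1\}$ without specifying which, so these signs compose freely to another element of $\{\pm 1\}$; the induction only needs to preserve this property, not track its value.
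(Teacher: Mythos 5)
Your proof is correct, and its first half coincides with the paper's: for each edge $\{v',v''\} \in F$ the factor $\delta(\{v',v''\}) = \pm(v''-v')$ already appears in $\delta F$, so $(v''-v') \wedge \delta F = 0$ forces $\epsilon_{v'} = \epsilon_{v''}$, and connectedness of the tree (guaranteed by the edge count and the preceding lemma) propagates this to a single value $\epsilon$. For the second half you diverge slightly in how the tree recursion is organized: the paper fixes a root $v$, absorbs its full neighbourhood at once via the identity $v \wedge (v_1{-}v) \wedge \cdots \wedge (v_s{-}v) = v \wedge v_1 \wedge \cdots \wedge v_s$, and then iterates outward from each $v_i$ through the remaining forest $F'$; you instead peel off a leaf $v_0$, pass to the spanning tree $F' = F \setminus \{e\}$ on $V' = V \setminus \{v_0\}$ with $\delta F = \pm\,\delta F' \wedge \delta(e)$, and induct on $|V|$. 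These are two faces of the same tree recursion (outward expansion versus leaf removal), both freely discarding sign bookkeeping since only $\epsilon \in \{\pm 1\}$ is claimed. If anything your leaf induction is the more tightly organized of the two, as the paper's ``repeat with $v_1$ in place of $v$, and so on'' is left informal; but the underlying idea is essentially the same.
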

\begin{proof}
	Note that, if $e = \{v,v'\}$ is an edge in $F$, then we have
	$(v-v') \wedge \delta F = \delta(e \wedge F) = 0$.
	Then, since the conditions imply that the subgraph induced
	by $F$ is connected and contains every vertex of $V$,
	we conclude that $v \wedge \delta F$ does not depend on $v \in V$.
	
	Fix a vertex $v$ and 
	let $v_1,v_2,\ldots,v_s$ be the neighbourhood of $v$ in the subgraph induced by $F$, i.e.
	all the vertices in $V$
	such that $e_1 := \{v,v_1\}, e_2:= \{v,v_2\}, \ldots, e_s := \{v,v_s\}$
	are edges in $F$. Denote the set of remaining edges in $F$ by $F'$.
	Then
	$$
		v \wedge (v_1 - v) \wedge (v_2 - v) \wedge \ldots \wedge (v_s-v) = v \wedge v_1 \wedge v_2 \wedge \ldots \wedge v_s,
	$$
	so that, up to some signs $\epsilon_k$,
	$$
		v \wedge \delta F = \epsilon_1 v \wedge \delta(e_1 e_2 \ldots e_s) \wedge \delta F' = \epsilon_1 \epsilon_2 v \wedge v_2 \wedge \ldots \wedge v_s \wedge v_1 \wedge \delta F'.
	$$
	By repeating this argument with $v_1$ in place of $v$, and so on,
	we obviously obtain $v \wedge \delta F = \epsilon V$, 
	for some $\epsilon \in \{-1,1\}$ independent of $v$.
	This proves the lemma.
\end{proof}

Before we state the theorem we define the outermorphism
``Laplace of $G$'' by
$$
	\Delta = \delta \circ \delta^*: \cl(V) \to \cl(V),
$$
where $\delta^*: \cl(V) \to \cl(E)$ is defined 
by the property $\delta^* x * y = x * \delta y$ for all $x \in \cl(V)$, $y \in \cl(E)$.
The adjugate $\Delta^{\adj}$ of $\Delta$ was
earlier defined by $\Delta^{\adj} = \Delta^{*\mathbf{c}}$,
where $\Delta^* = (\delta \circ \delta^*)^* = \delta \circ \delta^* = \Delta$
and $\Delta\dual(x) = \Delta(xV)V^{-1}$.
Hence,
$$
	\Delta^{\textup{adj}}(x) = \Delta(xV)V^\dagger \quad \forall x \in \cl(V).
$$

\begin{thm}[Kirchoff's matrix tree theorem]
	For all $u,v \in V$ we have
	$$
		u * \Delta^{\textup{adj}} v = N,
	$$
	where $N$ denotes the number of spanning trees in $G$.
\end{thm}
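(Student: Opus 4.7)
My plan is to reduce the claim to showing that $u \wedge \Delta(vV) = N\cdot V$ in the top grade of $\cl(V)$, and then to identify each spanning tree's contribution as $+1$ via the two lemmas above. Since every $v \in V$ satisfies $v^2 = 1$, the pseudoscalar $V$ is invertible with $V^{-1}=V^\dagger$, and by construction $\Delta^{\adj}(v) = \Delta(vV)V^\dagger = \Delta(vV)V^{-1}$. Because the support of $\Delta(vV)$ lies inside $V$, we have $\Delta(vV)V^{-1} = \Delta(vV) \liprod V^{-1}$, so the identity $x * (y \liprod z) = (x \wedge y) * z$ from Proposition \ref{prop_trix} gives
$$
u * \Delta^{\adj}(v) \;=\; \bigl(u \wedge \Delta(vV)\bigr) * V^{-1}.
$$
Since $u \wedge \Delta(vV) \in \cl^n(V)$ is a multiple $\alpha V$ of the pseudoscalar, the right-hand side equals $\alpha$, and the task reduces to proving $\alpha = N$.

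Next, I would expand $\delta^*(vV) = \sum_{F \subseteq E,\,|F|=n-1} c_F \, F$ in the basis of $(n-1)$-element subsets of $E$ inside $\cl(E)$. Applying the outermorphism $\delta$ and wedging with $u$ yields
$$
\alpha V \;=\; u \wedge \Delta(vV) \;=\; \sum_F c_F\,(u \wedge \delta F).
$$
The first lemma kills every $F$ that contains a cycle, so the surviving $F$ are acyclic of size $n-1 = |V|-1$ and therefore are precisely the spanning trees $T$. The second lemma then gives $u \wedge \delta T = \epsilon_T V$ for some $\epsilon_T \in \{\pm 1\}$ depending on $T$ but not on $u$. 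Consequently $\alpha = \sum_T c_T \epsilon_T$, summed over all spanning trees, and it remains to show that each product $c_T \epsilon_T$ equals $+1$.

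Finally, I would pin down $c_T$. Because every $e \in E$ squares to $+1$, the basis subsets of $E$ are orthonormal under $x \mapsto x * x^\dagger$ in $\cl(E)$, so $c_T = \delta^*(vV) * T^\dagger = vV * \delta(T^\dagger) = vV * (\delta T)^\dagger$, using the defining adjointness of $\delta^*$ together with the fact that a grade-preserving map commutes with reversion. Now $v \wedge \delta T = \epsilon_T V = \epsilon_T(v \wedge vV)$ gives $v \wedge (\delta T - \epsilon_T vV) = 0$, forcing a splitting $\delta T = \epsilon_T\, vV + v \wedge Y$ for some $Y \in \cl^{n-2}(V)$. The blade $vV$ is (up to a sign) the basis subset $V \smallsetminus \{v\}$, while every basis subset occurring in $(v \wedge Y)^\dagger$ contains $v$; orthogonality of distinct basis subsets then gives $vV * (v \wedge Y)^\dagger = 0$, and Lemma \ref{lem_blade_det} applied to the orthonormal vertex basis yields $vV * (vV)^\dagger = 1$. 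Hence $c_T = \epsilon_T$, so $c_T \epsilon_T = 1$, and summing over $T$ gives $\alpha = N$.

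The main obstacle will be the sign-bookkeeping in this last step: one must check that the splitting $\delta T = \epsilon_T\,vV + v \wedge Y$ is unambiguous (via the easy fact that $v \wedge Z = 0$ on $\cl(V)$ implies $Z \in v \wedge \cl(V)$, obtained by sorting basis subsets according to whether they contain $v$) and that the orthogonality of basis subsets yields the claimed cancellation irrespective of the orientation chosen on $V$ when defining the pseudoscalar and $vV$. Once those little checks are dispatched, the theorem falls out directly from the two lemmas.
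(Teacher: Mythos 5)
Your proof is correct, and it rests on the same two lemmas and the same basic strategy as the paper (expand in the grade-$(n-1)$ basis of $\cl(E)$, kill cyclic $F$ by the first lemma, use the second lemma to identify the surviving contributions). The packaging differs in one structural respect. The paper symmetrizes first: using $x*(yz)=(xy)*z$ and reversion it rewrites $u*\Delta^{\adj}v$ as $\delta^*(V^\dagger u) * \delta^*(vV)$, expands \emph{both} factors in the basis, and then each factor collapses separately to $\langle V^\dagger(u\wedge\delta F)\rangle_0 = \epsilon_F$ and $\langle V^\dagger(v\wedge\delta F)\rangle_0 = \epsilon_F$, so each spanning tree contributes $\epsilon_F^2 = 1$. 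You instead reduce to $(u\wedge\Delta(vV))*V^{-1}$, expand only $\delta^*(vV)$, read off $u\wedge\delta T = \epsilon_T V$ directly from Lemma 2 for the $u$-factor, and then pin down the remaining coefficient $c_T = (vV)*(\delta T)^\dagger$ by splitting $\delta T = \epsilon_T\,vV + v\wedge Y$ and using orthogonality of basis subsets of $\cl(V)$. Your route is a little more concrete and avoids one application of the cyclic scalar trick, at the cost of the explicit splitting argument; the paper's symmetric pairing makes the $\epsilon_T^2=1$ cancellation completely automatic and bypasses the decomposition $\delta T=\epsilon_T vV+v\wedge Y$. Both proofs implicitly use, as the paper does, that a spanning tree has $\delta T\neq 0$ so that Lemma 2 applies; you might add a sentence acknowledging this (it follows from the surjectivity argument inside Lemma 2's proof, or from the standard rank count for the incidence matrix of a tree).
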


\begin{proof}
	In the proof we will use the following simple rules for $v \in V$,
	$x,y,z \in \cl(V)$:
	\begin{enumerate}
		\item[\textperiodcentered] 
			$\delta(F^\dagger) = (\delta F)^\dagger$
		\item[\textperiodcentered] 
			$x * y = \langle xy \rangle_0 = \langle (xy)^\dagger \rangle_0 
			= \langle y^\dagger x^\dagger \rangle_0 = y^\dagger * x^\dagger = y * x$
		\item[\textperiodcentered] 
			$x * (yz) = \langle xyz \rangle_0 = \langle zxy \rangle_0 = (xy)*z$
		\item[\textperiodcentered] 
			$vx = v \liprod x + v \wedge x$
	\end{enumerate}
	We have
	$$
		u * \Delta^{\textup{adj}} v = u \thinspace * \thinspace \delta \circ \delta^*(vV)V^\dagger 
		= V^\dagger u \thinspace * \thinspace \delta \circ \delta^* (vV) = \delta^*(V^\dagger u) * \delta^*(vV).
	$$
	Denote $n = |V|$.
	Now, since $\delta^*$ is grade preserving, 
	and $V^\dagger u$ and $vV$ are of grade $n-1$, we conclude that
	$\delta^*(V^\dagger u)$ and $\delta^*(vV)$ also have grade $n-1$.
	By expanding these in a basis of subsets of $E$ we obtain
	\begin{eqnarray*}
		u * \Delta^{\textup{adj}} v 
			&=& \left( \sum_{\genfrac{}{}{0pt}{}{F \subseteq E}{|F| = n-1}} \left( \delta^*(V^\dagger u) * F \right) F^\dagger \right)
			  * \left( \sum_{\genfrac{}{}{0pt}{}{F \subseteq E}{|F| = n-1}} \left( \delta^*(vV) * F^\dagger \right) F \right) \\
			&=& \sum_{\genfrac{}{}{0pt}{}{F \subseteq E}{|F| = n-1}} \left( (V^\dagger u) * \delta F \right)   \left( (vV) * \delta F^\dagger \right) \\
			&=&	\sum_{\genfrac{}{}{0pt}{}{F \subseteq E}{|F| = n-1}} \left( V^\dagger * (u\delta F) \right)   \left( V^\dagger * (v\delta F) \right) \\
			&=&	\sum_{\genfrac{}{}{0pt}{}{F \subseteq E}{|F| = n-1}} \left\langle V^\dagger (u \wedge \delta F) \right\rangle_0   \left\langle V^\dagger (v \wedge \delta F) \right\rangle_0 \\
			&=&	\sum_{\textrm{$F$ spanning tree}} \left\langle V^\dagger \epsilon_F V \right\rangle_0   \left\langle V^\dagger \epsilon_F V \right\rangle_0
			\ = \ N.
	\end{eqnarray*}
\end{proof}

\begin{rem}
	Note that the boundary map $\delta$ on edges is induced by
	the boundary map $\partial_V$ defined previously.
	In general, we can consider the induced structure
	$$
		\setlength\arraycolsep{2pt}
		\begin{array}{ccccc}
			\cl(\binom{V}{k})	&\hookleftarrow	& \cl^1(\binom{V}{k})	& \leftrightarrow	& \cl^k(V) \\[5pt]
			\downarrow \delta 	&				& \delta \downarrow		&					& \downarrow \partial_V \\[5pt]
			\cl(\binom{V}{k-1})	&\hookleftarrow	& \cl^1(\binom{V}{k-1})	& \leftrightarrow	& \cl^{k-1}(V)
		\end{array}
	$$
	(or similarly for $\partial_p$),
	where $\binom{V}{k}$ denotes the set of $k$-subsets of $V$.
\end{rem}

\subsection{Fermionic operators}

	Here we consider a finite set $E = \{e_1,\ldots,e_n\}$ 
	and its corresponding Clifford algebra with signature 1
	and scalars $\mathbb{F}$, which we denote
	by $\mathcal{F} = \cl(E,\mathbb{F},1)$.
	Define the \emph{fermion creation and annihilation operators}
	acting on $\psi \in \mathcal{F}$
	$$
		c_i^\dagger(\psi) := e_i \wedge \psi, \qquad
		c_i(\psi) := e_i \liprod \psi.
	$$
	Again, by the properties of the inner and outer products, 
	we find, for any $\psi \in \mathcal{F}$,
	\begin{eqnarray*}
	&	c_i c_j(\psi) = (e_i \wedge e_j) \liprod \psi = -c_j c_i(\psi), \\
	&	c_i^\dagger c_j^\dagger(\psi) = e_i \wedge e_j \wedge x = -c_j^\dagger c_i^\dagger(\psi), \\
	&	c_i c_j^\dagger(\psi) = e_i \liprod (e_j \wedge \psi) = \delta_{ij}\psi - e_j \wedge (e_i \liprod \psi) 
		= \delta_{ij}\psi - c_j^\dagger c_i(\psi).
	\end{eqnarray*}
	We summarize these properties by the \emph{anticommutation relations}
	\begin{equation} \label{fermion_anticommutators}
		\{c_i,c_j\} = 0, \quad
		\{c_i,c_j^\dagger\} = \delta_{ij}, \quad
		\{c_i^\dagger,c_j^\dagger\} = 0,
	\end{equation}
	where $\{A,B\} := AB + BA$ is called the \emph{anticommutator}.

	With $\mathbb{F} = \mathbb{C}$,
	the inner product $\langle \phi, \psi \rangle_\mathcal{F} := \bar{\phi}^\dagger * \psi$ 
	($\bar{\psi}$ denotes the complex conjugate
	acting on the scalar coefficients, and $\psi^\dagger$ is the reverse)
	makes $\mathcal{F}$ into a complex $2^n$-dimensional Hilbert space.
	We also find that
	$$
		\langle \phi, c_i \psi \rangle_\mathcal{F} 
		= \bar{\phi}^\dagger * (e_i \liprod \psi) = \overline{\phi^\dagger \wedge e_i} * \psi
		= \langle c_i^\dagger \phi, \psi \rangle_\mathcal{F},
	$$
	so that $c_i^\dagger$ is the adjoint of $c_i$ 
	with respect to this inner product.
	In quantum mechanics, this space $\mathcal{F}$ is usually called the
	\emph{fermionic Fock space of $n$ particles}
	and the physical interpretation is that each of the operators
	$c_i^\dagger$ and $c_i$ creates resp. removes a particle of type/location $i$.
	Because $(c_i^\dagger)^2 = 0$, there cannot be more than one 
	particle of the same type at the same location\footnote{This 
	characterizes \emph{fermionic} particles in contrast to \emph{bosonic} particles.}.
	Any state $\psi \in \mathcal{F}$ can be written as a linear combination
	$$
		\psi = \left( \alpha_0 + \sum_i \alpha_i c_i^\dagger + \sum_{i<j} \alpha_{ij} c_i^\dagger c_j^\dagger
		+ \ldots + \alpha_{1 \ldots n} c_1^\dagger \ldots c_n^\dagger \right)\psi_0
	$$
	of particle creations acting on an empty, or zero-particle, state $\psi_0 = 1$.
	The scalar coefficients $\alpha_\mathbf{i}$ determine the probabilities 
	$p_\mathbf{i} = |\alpha_\mathbf{i}|^2 / \|\psi \|_\mathcal{F}^2$ 
	of measuring the physical state $\psi$ in a definite particle configuration 
	$e_\mathbf{i} = c_{i_1}^\dagger c_{i_2}^\dagger \ldots c_{i_k}^\dagger \psi_0$.
	There is also a \emph{number operator},
	$n_F := \sum_i c_i^\dagger c_i$,
	which simply counts the number of particles present in a state.

	\begin{exc}
		Show that $n_F\psi = k\psi$ for 
		$\psi \in \mathcal{F}_k := \cl^k(E,\mathbb{C},1)$,
		both by using the anticommutation relations \eqref{fermion_anticommutators},
		and in an alternative way by using the properties of the
		inner and outer products.
	\end{exc}

\subsection{Infinite-dimensional Clifford algebra}

	This far we have only defined the Clifford algebra 
	$\cl(X,R,r)$ of a \emph{finite} set $X$, resulting
	in the \emph{finite-dimen\-sional} geometric algebra 
	$\mathcal{G}(V)$ whenever $R$ is a field.
	In order for this combinatorial construction to 
	qualify as a complete generalization of $\mathcal{G}$,
	we would like to be able to define the corresponding
	Clifford algebra of an infinite-dimensional vector space, something which
	was possible for $\mathcal{G}$ in Definition \ref{def_g}.
	
	The treatment of $\cl$ in the previous subsections 
	is in a form which easily generalizes to an infinite $X$.
	Reconsidering Definition \ref{def_cl}, we now have two possibilites: 
	either we consider the set $\mathscr{P}(X)$ of all subsets of $X$,
	or just the set $\mathscr{F}(X)$ of all \emph{finite} subsets.
	With the tensor-algebraic definition in mind,
	we choose to first consider only $\mathscr{F}(X)$,
	and then the generalized case $\mathscr{P}(X)$ in the next subsection.
	We therefore define, for an arbitrary set $X$, ring $R$, and signature $r \!: X \to R$,
	\begin{equation}
		\cl_\mathscr{F}(X,R,r) := \bigoplus_{\mathscr{F}(X)} R.
	\end{equation}
	Elements in $\cl_\mathscr{F}$ are then finite linear combinations
	of finite subsets of $X$.
	
	Our problem now is to define a Clifford product for $\cl_\mathscr{F}$.
	This can be achieved just as in the finite case if only we can find 
	a map $\tau \!: \mathscr{F}(X) \times \mathscr{F}(X) \to R$
	satisfying the conditions in Lemma \ref{lem_finite_tau}.
	We show below that this is possible.
	
	We call a map $\tau \!: \mathscr{F}(X) \times \mathscr{F}(X) \to R$ \emph{grassmannian on} $X$ if it
	satisfies (\emph{i})-(\emph{v}) in Lemma \ref{lem_finite_tau}, with 
	$\mathscr{P}(X)$ replaced by $\mathscr{F}(X)$.
	
	\begin{thm}
		For any given $X,R,r$ there exists a grassmannian map on $\mathscr{F}(X)$.
	\end{thm}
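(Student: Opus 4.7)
The plan is to reduce to the finite case (Lemma \ref{lem_finite_tau}) by a direct-limit-style construction. Using the axiom of choice, fix a total ordering $\prec$ on $X$. For each finite $Y \subseteq X$, apply the inductive construction from the proof of Lemma \ref{lem_finite_tau}, adding the elements of $Y$ one at a time in the order given by $\prec|_Y$, to produce a grassmannian map $\tau_Y \!: \mathscr{P}(Y) \times \mathscr{P}(Y) \to R$. Lemma \ref{lem_finite_tau} guarantees that each $\tau_Y$ satisfies conditions (\emph{i})--(\emph{v}) relative to $Y$. The goal is to glue these into a single map on $\mathscr{F}(X) \times \mathscr{F}(X)$.

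The technical heart of the argument is a compatibility claim: whenever $Y \subseteq Y'$ are finite subsets of $X$, one has $\tau_{Y'}|_{\mathscr{P}(Y) \times \mathscr{P}(Y)} = \tau_Y$. I would prove this by induction on $|Y'|$. Let $z$ be the $\prec$-largest element of $Y'$. If $z \notin Y$, then for any $A,B \subseteq Y$ we have $z \notin A \cup B$, so the first clause of the extension formula in Lemma \ref{lem_finite_tau} gives $\tau_{Y'}(A,B) = \tau_{Y'\smallsetminus \{z\}}(A,B)$, which agrees with $\tau_Y(A,B)$ by the inductive hypothesis applied to $Y \subseteq Y' \smallsetminus \{z\}$. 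If instead $z \in Y$, then $z$ is also the $\prec$-largest element of $Y$, so both $\tau_{Y'}$ and $\tau_Y$ are obtained from $\tau_{Y' \smallsetminus \{z\}}$ and $\tau_{Y \smallsetminus \{z\}}$ respectively by applying exactly the same extension formulas (driven by $r(z)$ and $|B|$); since these underlying maps already agree on $\mathscr{P}(Y \smallsetminus \{z\}) \times \mathscr{P}(Y \smallsetminus \{z\})$ by induction, the extensions agree on $\mathscr{P}(Y) \times \mathscr{P}(Y)$. Granted compatibility, define
\begin{displaymath}
    \tau(A,B) := \tau_Y(A,B) \quad \textrm{for any finite } Y \supseteq A \cup B,
\end{displaymath}
which is well-defined because any two candidate $Y$'s are both contained in their union.

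It then remains to verify properties (\emph{i})--(\emph{v}) for this $\tau$. Each of them involves only finitely many elements of $X$ (at most $A,B,C$ and their elements), so choosing $Y$ large enough to contain all of them (e.g.\ $Y := A \cup B \cup C$ for property (\emph{iv})) immediately reduces the statement to the corresponding property of $\tau_Y$, which holds by Lemma \ref{lem_finite_tau}. The one subtle point -- and really the only potential obstacle -- is the compatibility step: one might worry that inserting an element of $X$ in the ``middle'' of the $\prec$-ordering could silently disturb previously assigned values. The observation that makes everything work is that the extension formula only modifies $\tau$ on pairs where the newly added element appears as a member; pairs disjoint from $\{z\}$ are transported unchanged by the clause $\tau(A,B) = \tau'(A,B)$. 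This is exactly what allows the finite constructions to be threaded together coherently into a grassmannian map on all of $\mathscr{F}(X)$.
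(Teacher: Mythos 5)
Your proof is correct, and it takes a genuinely different route from the paper's. The paper applies the Hausdorff maximality theorem directly to the partially ordered set $\mathcal{H}$ of pairs $(Y,\tau)$ where $Y \subseteq X$ (possibly infinite) and $\tau$ is grassmannian on $\mathscr{F}(Y)$, takes a maximal chain, glues over the chain, and argues by maximality that the union of the chain's domains must be all of $X$. You instead fix a total ordering on $X$, use it to produce a \emph{canonical} grassmannian map $\tau_Y$ on each finite $Y$ by running the inductive construction of Lemma \ref{lem_finite_tau} in $\prec$-increasing order, and then glue over the directed system of finite subsets. The technical burden is shifted: the paper gets well-definedness of the glued map essentially for free from the total ordering of the chain, but must invoke a Zorn-type maximality argument to see that the chain exhausts $X$; you avoid any maximality argument entirely, at the cost of having to prove the compatibility lemma $\tau_{Y'}|_{\mathscr{P}(Y)\times\mathscr{P}(Y)} = \tau_Y$. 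Your induction-on-$|Y'|$ proof of that lemma is sound: splitting on whether the $\prec$-largest element $z$ of $Y'$ lies in $Y$ correctly handles ``middle insertions,'' because in the recursion any such inserted element eventually becomes the top element of some pair $(Y\smallsetminus\{z\},\ Y'\smallsetminus\{z\})$ and is then disposed of by the first clause of the extension formula, which leaves pairs avoiding $z$ untouched. One minor remark: a total ordering of an arbitrary set requires only the Ordering Principle, which is strictly weaker than the full Axiom of Choice that Hausdorff maximality is equivalent to, so your argument is in principle proved from slightly weaker set-theoretic hypotheses, though this is not a distinction the paper cares about.
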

	\begin{proof}[*Proof]
		We know that there exists such a map for any finite $X$.
		Let $Y \subseteq X$ and assume $\tau' \!: \mathscr{F}(Y) \times \mathscr{F}(Y) \to R$
		is grassmannian on $Y$.
		If there exists $z \in X \smallsetminus Y$ we can, by proceeding as in the
		proof of Lemma \ref{lem_finite_tau}, extend $\tau'$ to a grassmannian map 
		$\tau \!: \mathscr{F}(Y\cup\{z\}) \times \mathscr{F}(Y\cup\{z\}) \to R$
		on $Y\cup\{z\}$ such that $\tau|_{\mathscr{F}(Y) \times \mathscr{F}(Y)} = \tau'$.
		
		We will now use \emph{transfinite induction}, or the \emph{Hausdorff maximality theorem}\footnote{This 
		theorem should actually be regarded as an axiom of set theory since it is equivalent to the Axiom of Choice.},
		to prove that $\tau$ can be extended to all of $\mathscr{F}(X) \subseteq \mathscr{P}(X)$.
		Note that if $\tau$ is grassmannian on $Y \subseteq X$ then $\tau$
		is also a relation $\tau \subseteq \mathscr{P}(X) \times \mathscr{P}(X) \times R$.
		Let
		\begin{equation}
			\mathcal{H} := \Big\{ (Y,\tau) \in \mathscr{P}(X) \times \mathscr{P}\big( \mathscr{P}(X) \times \mathscr{P}(X) \times R \big) 
				: \textrm{$\tau$ is grassmannian on $Y$} \Big\}.
		\end{equation}
		Then $\mathcal{H}$ is partially ordered by
		\begin{equation}
			(Y,\tau) \leq (Y',\tau') \qquad \textrm{iff} \qquad 
				Y \subseteq Y' \quad \textrm{and} \quad \tau'|_{\mathscr{F}(Y) \times \mathscr{F}(Y)} = \tau.
		\end{equation}
		By the Hausdorff maximality theorem, there exists a maximal totally 
		ordered ``chain" $\mathcal{K} \subseteq \mathcal{H}$.
		Put $Y^* := \bigcup_{(Y,\tau) \in \mathcal{K}} Y$. We want to define a 
		grassmannian map $\tau^*$ on $Y^*$, for if we succeed in that, we find 
		$(Y^*,\tau^*) \in \mathcal{H} \cap \mathcal{K}$ and can conclude that $Y^* = X$
		by maximality and the former result.
		
		Take finite subsets $A$ and $B$ of $Y^*$. Each of the finite elements
		in $A \cup B$ lies in some $Y$ such that $(Y,\tau) \in \mathcal{K}$.
		Therefore, by the total ordering of $\mathcal{K}$, there exists one
		such $Y$ containing $A \cup B$.
		Put $\tau^* (A,B) := \tau(A,B)$, where $(Y,\tau)$ is this chosen element in $\mathcal{K}$.
		$\tau^*$ is well-defined since if $A \cup B \subseteq Y$ and $A \cup B \subseteq Y'$
		where $(Y,\tau), (Y',\tau') \in \mathcal{K}$ then $Y \subseteq Y'$ or $Y' \subseteq Y$
		and $\tau,\tau'$ agree on $(A,B)$.
		It is easy to verify that this $\tau^*$ is grassmannian on $Y^*$,
		since for each $A,B,C \in \mathscr{F}(X)$ there exists $(Y,\tau) \in \mathcal{K}$ 
		such that $A \cup B \cup C \subseteq Y$.
	\end{proof}
	
	We have shown that there exists a map $\tau \!: \mathscr{F}(X) \times \mathscr{F}(X) \to R$
	with the properties in Lemma \ref{lem_finite_tau}.
	We can then define the Clifford product on $\cl_\mathscr{F}(X)$ as usual by
	$AB := \tau(A,B) A \symdiff B$ for $A,B \in \mathscr{F}(X)$ and linear extension.
	Since only finite subsets are included, most of the previous
	constructions for finite-dimensional $\cl$ carry over to $\cl_\mathscr{F}$.
	For example, the decomposition into graded subspaces remains but now extends towards infinity,
	\begin{equation} \label{infinite_grade_decomposition}
		\cl_\mathscr{F} = \bigoplus_{k=0}^{\infty} \cl_\mathscr{F}^k.
	\end{equation}
	Furthermore, Proposition \ref{prop_reverse} still holds, so the reverse and
	all other involutions behave as expected.
	On the other hand, note that we cannot talk about a pseudoscalar in this context.

	Let us now see how $\cl_\mathscr{F}$ can be applied to the 
	setting of an in\-finite-dimen\-sion\-al vector space $V$ 
	over a field $\mathbb{F}$ and with a quadratic form $q$.
	By the Hausdorff maximality theorem one can actually find a (necessarily infinite) orthogonal
	basis $E$ for this space in the sense that any vector in $V$
	can be written as a finite linear combination of elements in $E$ and
	that $\beta_q(e,e') = 0$ for any pair of disjoint elements $e,e' \in E$.
	We then have
	\begin{equation}
		\mathcal{G}(V,q) \cong \cl_\mathscr{F}(E,\mathbb{F},q|_E),
	\end{equation}
	which is proved just like in the finite-dimensional case, using
	Proposition \ref{prop_universality}.

	Let us consider an application of the infinite-dimensional
	Clifford algebra $\cl_\mathscr{F}$. Given a vector space $V$,
	define the \emph{full simplicial complex algebra}
	\begin{equation}
		\mathscr{C}(V) := \cl_\mathscr{F}(V,R,1),
	\end{equation}
	where we forget about the vector space structure of $V$ and treat
	individual points $\dot{v} \in V$ as orthogonal basis 1-vectors 
	in $\cl_\mathscr{F}^1$ with $\dot{v}^2 = 1$.
	The dot indicates that we think of $v$ as a point rather than a vector.
	A basis $(k+1)$-blade in $\mathscr{C}(V)$ consists of a product
	$\dot{v}_0 \dot{v}_1 \ldots \dot{v}_k$ of individual points and represents
	a (possibly degenerate) oriented $k$-simplex embedded in $V$. 
	This simplex is given by the convex hull
	\begin{equation}
		\textrm{Conv}\{v_0,v_1,\ldots,v_k\} := \left\{ \sum_{i=0}^k \alpha_i v_i \in V : \alpha_i \geq 0, \sum_{i=0}^k \alpha_i = 1 \right\}.
	\end{equation}
	Hence, an arbitrary element in $\mathscr{C}(V)$ is a linear
	combination of simplices and can therefore represent a simplicial complex embedded in $V$.
	Restricting to a chosen embedded simplex $K \subseteq \mathscr{F}(V)$,
	we return to the context of the chain group $C(K)$ introduced previously.
	However, the difference here is that the chain group of \emph{any} 
	simplicial complex properly embedded in $V$ can be found as a
	subgroup of $\mathscr{C}(V)$ -- even infinite such complexes 
	(composed of finite-dimensional simplices).
	
	We can define a boundary map for the full simplicial complex algebra,
	\begin{displaymath}
	\setlength\arraycolsep{2pt}
	\begin{array}{l}
		\partial_V \!: \mathscr{C}(V) \to \mathscr{C}(V), \\[5pt]
		\displaystyle \quad \partial_V(x) := \sum_{\dot{v} \in V} \dot{v} \liprod x.
	\end{array}
	\end{displaymath}
	Note that this is well-defined since only a finite number of points $\dot{v}$ 
	can be present in any fixed $x$.
	Also, $\partial_V^2 = 0$ follows by analogy with $s_V$, or simply
	\begin{equation}
		\partial_V^2(x) 
			= \sum_{\dot{u} \in V} \dot{u} \liprod \left( \sum_{\dot{v} \in V} \dot{v} \liprod x \right)
			= \sum_{\dot{u}, \dot{v} \in V} \dot{u} \liprod (\dot{v} \liprod x)
			= \sum_{\dot{u}, \dot{v} \in V} (\dot{u} \wedge \dot{v}) \liprod x = 0.
	\end{equation}
	
	We can also assign a \emph{geometric measure} $\sigma$ to embedded simplices,
	by mapping a $k$-simplex to a corresponding $k$-blade in $\mathcal{G}(V)$
	representing the directed volume of the simplex.
	Define $\sigma \!: \mathscr{C}(V) \to \mathcal{G}(V)$ by
	\begin{displaymath}
	\setlength\arraycolsep{2pt}
	\begin{array}{rcl}
		\sigma(1) &:=& 0, \\[5pt]
		\sigma(\dot{v}) &:=& 1, \\[5pt]
		\sigma(\dot{v}_0 \dot{v}_1 \ldots \dot{v}_k) &:=& \frac{1}{k!} (v_1-v_0) \wedge (v_2-v_0) \wedge \cdots \wedge (v_k-v_0),
	\end{array}
	\end{displaymath}
	and extending linearly. One can verify that this is well-defined and that the
	geometric measure of a boundary is zero, i.e. $\sigma \circ \partial_V = 0$.
	One can take this construction even further and arrive at
	``discrete" equivalents of differentials, integrals and Stokes' theorem.
	See \cite{svensson} or \cite{naeve_svensson} for more on this.

\subsection{*Generalized infinite-dimensional Clifford algebra}

	We define, for an arbitrary set $X$, ring $R$, and signature $r \!: X \to R$,
	\begin{equation}
		\cl(X,R,r) := \bigoplus_{\mathscr{P}(X)} R.
	\end{equation}
	Hence, elements of $\cl(X)$ are finite linear combinations of 
	(possibly infinite) subsets of $X$.
	The following theorem asserts that it is possible to extend $\tau$ all the
	way to $\mathscr{P}(X)$ even in the infinite case. 
	We therefore have a Clifford product also on $\cl(X)$.
	
	\begin{thm} \label{thm_infinite_tau}
		For any set $X$ there exists a map $|\cdot|_2 \!: \mathscr{P}\big(\mathscr{P}(X)\big) \to \mathbb{Z}_2$
		such that
		\begin{displaymath}
		\setlength\arraycolsep{2pt}
		\begin{array}{rl}
			i)   & |\mathcal{A}|_2 \equiv |\mathcal{A}| \pmod{2} 
				\qquad \textrm{for finite $\mathcal{A} \subseteq \mathscr{P}(X)$}, \\[5pt]
			ii)  & |\mathcal{A} \symdiff \mathcal{B}|_2 = |\mathcal{A}|_2 + |\mathcal{B}|_2 \pmod{2}
		\end{array}
		\end{displaymath}
		Furthermore, for any commutative ring $R$ with unit, and signature $r\!: X \to R$
		such that $r(X)$ is contained in a finite and multiplicatively closed subset of $R$,
		there exists a map  $\tau \!: \mathscr{P}(X) \times \mathscr{P}(X) \to R$ such that
		properties \emph{(\emph{i})-(\emph{v})} in Lemma \ref{lem_finite_tau} hold, plus
		\begin{displaymath}
			vi) \ \tau(A,B) = (-1)^{\left|\binom{A}{2}\right|_2 + \left|\binom{B}{2}\right|_2 + \left|\binom{A \bigtriangleup B}{2}\right|_2}\ \tau(B,A)
				\quad \forall\ A,B \in \mathscr{P}(X).
		\end{displaymath}
	\end{thm}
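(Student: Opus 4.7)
The plan is to handle the two existence statements separately. For the first, I would regard $(\mathscr{P}(Y), \symdiff)$, with $Y := \mathscr{P}(X)$, as a $\mathbb{Z}_2$-vector space, in which the finite subsets $\mathscr{F}(Y)$ form a linear subspace. On this subspace, the map $\mathcal{A} \mapsto |\mathcal{A}| \bmod 2$ is $\mathbb{Z}_2$-linear since $|\mathcal{A} \symdiff \mathcal{B}| = |\mathcal{A}| + |\mathcal{B}| - 2|\mathcal{A} \cap \mathcal{B}|$. Using the axiom of choice (equivalently, Hausdorff maximality applied to the poset of linearly independent families), I would pick a $\mathbb{Z}_2$-linear complement $W$ of $\mathscr{F}(Y)$ in $\mathscr{P}(Y)$ and define $|\cdot|_2$ to coincide with $|\cdot| \bmod 2$ on $\mathscr{F}(Y)$, to vanish on $W$, and to be extended by $\mathbb{Z}_2$-linearity. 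Properties (\emph{i}) and (\emph{ii}) then follow by construction.

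For the second part, I would mimic the proof of Lemma \ref{lem_finite_tau} and the Hausdorff maximality argument from the theorem on $\cl_\mathscr{F}$, now conducted by transfinite induction along a well-ordering of $X$. At a successor stage, when adjoining a new element $z$ to a previously treated $Y \subseteq X$, I would extend $\tau'\!: \mathscr{P}(Y) \times \mathscr{P}(Y) \to R$ to $\tau$ on $\mathscr{P}(Y \cup \{z\}) \times \mathscr{P}(Y \cup \{z\})$ by the same four-case recipe as in Lemma \ref{lem_finite_tau}, but with the finite sign $(-1)^{|B|}$ replaced by $(-1)^{\epsilon(B)}$, where $\epsilon(B) := |\{\{x\} : x \in B\}|_2$ makes sense for infinite $B$ via the parity map from the first part. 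Limit stages are handled by taking unions, which are consistent because at each stage the formula only used restrictions to smaller subsets. The maximality argument then yields $\tau$ on all of $\mathscr{P}(X) \times \mathscr{P}(X)$, and property (\emph{vi}) will appear as the generalization of Proposition \ref{prop_reverse} once the sign factors $(-1)^{\binom{|A|}{2}}$ are consistently re-interpreted as $(-1)^{|\binom{A}{2}|_2}$.

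The main obstacle will be verifying that the cocycle identity (\emph{iv}) and the reversibility (\emph{vi}) are preserved when the sets involved are infinite. For finite arguments everything reduces to the content of Lemma \ref{lem_finite_tau} and Proposition \ref{prop_reverse}, and the $\mathbb{Z}_2$-linearity of $|\cdot|_2$ ensures the sign bookkeeping extends smoothly. For infinite $A$, however, quantities like $\tau(A,A)$ would naively involve an infinite product of $r$-values; this is exactly where the hypothesis that $r(X)$ is contained in a finite multiplicatively closed $S \subseteq R$ becomes essential. One groups $A = \bigsqcup_{s \in S} A_s$ with $A_s := \{x \in A : r(x) = s\}$ and uses the fact that every cyclic submonoid of the finite monoid $S$ contains a unique idempotent, which provides a canonical value for $s^{|A_s|}$ when $|A_s|$ is infinite. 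Checking that this assignment is compatible with the four-case extension recipe, and that (\emph{i})--(\emph{vi}) survive at every stage of the transfinite construction, is the technical core of the argument.
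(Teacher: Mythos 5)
Your proof of the first existence statement is correct and is essentially the paper's own argument: Appendix~\ref{app_extension} proves, via Hausdorff maximality, that a homomorphism out of a subgroup $H$ of an abelian group $A$ of exponent two extends to $A$, and this is applied with $A = (\mathscr{P}(\mathscr{P}(X)),\symdiff)$, $H = \mathscr{F}(\mathscr{P}(X))$, and the parity homomorphism. Choosing a $\mathbb{Z}_2$-linear complement of $\mathscr{F}(\mathscr{P}(X))$ and extending by zero, as you do, is the same argument stated in linear-algebra language.

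For the second existence statement the paper offers no proof, but your sketch contains two genuine gaps. First, the limit step in the transfinite construction (equivalently, the union step in the Hausdorff maximality argument) does not go through: an infinite subset $A$ of $Y^{*} = \bigcup_{(Y,\tau)\in\mathcal{K}} Y$ need not be contained in any single $Y$ of the chain, so ``taking unions'' of the partial $\tau$'s determines $\tau$ only on pairs $(A,B)$ with both arguments inside some chain member, which is a proper subset of $\mathscr{P}(Y^{*})\times\mathscr{P}(Y^{*})$. This is precisely the obstruction that is absent in the $\cl_{\mathscr{F}}$ case, where finite $A,B$ are always absorbed by a chain member, and it is the main technical difficulty of the theorem; asserting consistency by restriction does not resolve it. Second, the idempotent prescription $s^{\kappa} := e_s$ for infinite $\kappa$ is incompatible with the multiplicativity required by (\emph{iv}). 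One needs $s^{\kappa+1} = s^{\kappa}\cdot s$, i.e.\ $e_s = e_s\,s$, which forces the cyclic part of $\langle s\rangle$ to have period one. Already for $S = \{\pm 1\}$, $s = -1$, the idempotent is $1$, and assigning $1$ to every infinite set with all $r$-values equal to $-1$ fails additivity over a disjoint union with one more point; the correct assignment in that case is $(-1)^{|C|_2}$ using the parity map from the first part, not the idempotent. For a general finite multiplicatively closed $S$ there is no reason a nontrivial extension of the $r$-product to infinite sets exists; what is always available is the ``trivial solution'' the paper's remark alludes to, namely $\tau(A,B)=0$ whenever $A\cap B$ is infinite, where one must check by Venn-region bookkeeping that both sides of (\emph{iv}) vanish on the same quadruples. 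Finally, (\emph{vi}) is not automatic from Proposition~\ref{prop_reverse}: that proof reduces to disjoint singletons and counts transpositions, while the infinite case requires relating the chosen parity extension on $I(A,B)$ and $I(B,A)$ (inside $X\times X$) to the one on $\binom{A}{2}$, $\binom{B}{2}$, $\binom{A\symdiff B}{2}$ (inside $\mathscr{P}(X)$), and the compatibility of these separately constructed extensions has to be arranged explicitly, not simply re-interpreted.
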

	
	\noindent
	Here, $\binom{A}{n}$ denotes the set of all subsets of $A$ with $n$ elements.
	Note that for a finite set $A$, $\big|\binom{A}{n}\big| = \binom{|A|}{n}$ so that for example
	$\big|\binom{A}{1}\big| = |A|$ (in general, $\textrm{card}\ \binom{A}{1} = \textrm{card}\ A$)
	and $\big|\binom{A}{2}\big| = \frac{1}{2}|A|(|A|-1)$.
	This enables us to extend the basic involutions $\star$, $\dagger$ and $\cliffconj$
	to infinite sets as
	\begin{displaymath}
	\setlength\arraycolsep{2pt}
	\begin{array}{rcl}
		A^\star   &:=& (-1)^{\left|\binom{A}{1}\right|_2}\ A, \\[5pt]
		A^\dagger &:=& (-1)^{\left|\binom{A}{2}\right|_2}\ A,
	\end{array}
	\end{displaymath}
	and because $\big|\binom{A \bigtriangleup B}{1}\big|_2 = \big|\binom{A}{1}\big|_2 + \big|\binom{B}{1}\big|_2 \pmod{2}$
	still holds, we find that they satisfy the fundamental properties
	\eqref{grade_inv_property}-\eqref{cliffconj_property} for all elements of $\cl(X)$.
	The extra requirement (\emph{vi}) on $\tau$ was necessary here since we 
	cannot use Proposition \ref{prop_reverse} for infinite sets.
	Moreover, we can no longer write the decomposition \eqref{infinite_grade_decomposition} 
	since it goes beyond finite grades.
	We do have even and odd subspaces, though, defined by
	\begin{equation} \label{infinite_evenodd_subspaces}
		\cl^{\pm} := \{ x \in \cl : x^\star = \pm x \},
	\end{equation}
	and $\cl^+$ and $\cl_\mathscr{F}$ (with this $\tau$) are both subalgebras of $\cl$.
	
	It should be emphasized that $\tau$ needs not be zero on intersecting infinite sets
	(a rather trivial solution), but if e.g. $r \!: X \to \{\pm 1\}$ we can
	also demand that $\tau \!: \mathscr{P}(X) \times \mathscr{P}(X) \to \{\pm 1\}$.
	A proof of the first part of
	Theorem \ref{thm_infinite_tau} is given in Appendix \ref{app_extension}.

	Although this construction provides a way to handle combinatorics
	of infinite sets, we should emphasize that its applicability in
	the context of infinite dimensional vector spaces is limited.
	Let us sketch an intuitive picture of why this is so.
	
	With a countable basis $E = \{e_i\}_{i=1}^\infty$, an
	infinite basis blade in $\cl$ could be thought of as an infinite product
	$A = e_{i_1} e_{i_2} e_{i_3} \ldots = \prod_{k=1}^\infty e_{i_k}$.
	A change of basis to $E'$ would turn each $e \in E$ into a finite linear combination
	of elements in $E'$, e.g. $e_j = \sum_k \beta_{jk} e_k'$. However,
	this would require $A$ to be an \emph{infinite sum} of basis blades in $E'$, which is not allowed.
	Note that this is no problem in $\cl_\mathscr{F}$ since a
	basis blade $A = \prod_{k=1}^N e_{i_k}$ is a finite product
	and the change of basis therefore results in a finite sum.

	This completes our excursion to infinite-dimensional Clifford algebras
	(we refer to \cite{de_la_Harpe,Plymen,Wene} and references therein for more on this topic).
	In the following sections we will always assume that $X$ is finite
	and $V$ finite-dimensional.

	\newpage

\section{Classification of real and complex geometric algebras} \label{sec_isomorphisms}

	In this section we establish an extensive set of relations and isomorphisms between real and complex
	geometric algebras of varying signature. This eventually leads to an identification
	of these algebras as matrix algebras over $\mathbb{R}$, $\mathbb{C}$, or the
	quaternions $\mathbb{H}$. The complete listing of such identifications is usually 
	called the \emph{classification} of geometric algebras.

\subsection{Matrix algebra classification}
	
	We have seen that the even subspace $\mathcal{G}^{+}$ of $\mathcal{G}$ constitutes
	a subalgebra. The following proposition shows that this subalgebra actually
	is the geometric algebra of a space of one dimension lower.
	
	\begin{prop} \label{prop_iso_even}
		We have the algebra isomorphisms
		\begin{displaymath}
		\setlength\arraycolsep{2pt}
		\begin{array}{c}
			\mathcal{G}^{+}(\mathbb{R}^{s,t}) \cong \mathcal{G}(\mathbb{R}^{s,t-1}), \\[5pt]
			\mathcal{G}^{+}(\mathbb{R}^{s,t}) \cong \mathcal{G}(\mathbb{R}^{t,s-1}),
		\end{array}
		\end{displaymath}
		for all $s,t$ for which the expressions make sense.
	\end{prop}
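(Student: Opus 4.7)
The plan is to construct explicit algebra isomorphisms via universality (Proposition \ref{prop_universality}) by singling out one ``pivot'' vector in $\mathbb{R}^{s,t}$ and multiplying all remaining basis vectors by it. Fix an orthonormal basis $\{e_1,\ldots,e_s,\epsilon_1,\ldots,\epsilon_t\}$ of $\mathbb{R}^{s,t}$ with $e_i^2 = 1$, $\epsilon_j^2 = -1$. For the first isomorphism I take the pivot to be $\epsilon_t$; for the second, $e_1$. The cases are analogous.

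For the first, let $\{f_1,\ldots,f_s,\phi_1,\ldots,\phi_{t-1}\}$ be an orthonormal basis of $\mathbb{R}^{s,t-1}$, and define a linear map $f\!: \mathbb{R}^{s,t-1} \to \mathcal{G}^+(\mathbb{R}^{s,t})$ by $f(f_i) := e_i\epsilon_t$ and $f(\phi_j) := \epsilon_j\epsilon_t$. The crucial computation is that since distinct basis vectors anticommute,
$$ (e_i\epsilon_t)^2 = -e_i^2\epsilon_t^2 = -(1)(-1) = 1 = q(f_i), \qquad (\epsilon_j\epsilon_t)^2 = -(-1)(-1) = -1 = q(\phi_j). $$
So $f(v)^2 = q(v)1$ on basis elements and hence, by bilinearity of the square through the polarization identity, on all of $\mathbb{R}^{s,t-1}$. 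By Proposition \ref{prop_universality}, $f$ extends uniquely to an $\mathbb{R}$-algebra homomorphism $F\!: \mathcal{G}(\mathbb{R}^{s,t-1}) \to \mathcal{G}^+(\mathbb{R}^{s,t})$.

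Next I observe that both algebras have dimension $2^{s+t-1}$, so it suffices to prove $F$ surjective. The image of $F$ is generated as an algebra by the grade-$2$ elements $e_i\epsilon_t$ and $\epsilon_j\epsilon_t$. The key identity is that for distinct basis vectors $E_a, E_b \neq \epsilon_t$,
$$ (E_a\epsilon_t)(E_b\epsilon_t) = -E_aE_b\,\epsilon_t^2 = E_aE_b, $$
using anticommutativity and $\epsilon_t^2 = -1$. Iterating, a product of $2k$ generators collapses (up to signs handled by associativity) to an arbitrary even-grade basis blade of $\mathcal{G}(\mathbb{R}^{s,t})$ not containing $\epsilon_t$, while a product of $2k+1$ generators yields an arbitrary even-grade basis blade containing $\epsilon_t$. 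Together these exhaust a basis of $\mathcal{G}^+(\mathbb{R}^{s,t})$, proving surjectivity. For the second isomorphism I repeat the argument with pivot $e_1$, mapping the $t$ positive-square generators of $\mathbb{R}^{t,s-1}$ to $\epsilon_j e_1$ (squares $-(-1)(1) = +1$) and the $s-1$ negative-square generators to $e_i e_1$ for $i \geq 2$ (squares $-(1)(1) = -1$).

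The main obstacle is the surjectivity bookkeeping: one must track signs when reducing arbitrary products of generators to canonical ordered basis blades, especially when the pivot appears in the target blade. A clean way to handle this is to induct on the number of generator factors, reducing each pair via the boxed identity above and commuting the isolated trailing $\epsilon_t$ (when present) into standard position, picking up a sign that only affects the scalar coefficient and not the span of the image.
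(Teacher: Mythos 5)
Your proof is the same as the paper's: fix an orthonormal basis, pick a pivot vector ($\epsilon_t$ for the first isomorphism, a positive-square vector for the second), map each remaining basis vector $v$ to the bivector $v \cdot(\text{pivot})$, verify the Clifford relations, invoke universality (Proposition \ref{prop_universality}), and conclude by a dimension count once surjectivity is established. One step needs tightening: to invoke universality you must verify $f(v)^2 = q(v)$ for \emph{all} $v$, and knowing the squares of basis images alone does not give this — the cross-terms $f(f_i)f(f_j) + f(f_j)f(f_i)$ must also vanish for $i \neq j$, which is an independent check, not a consequence of ``bilinearity of the square through polarization.'' It does hold here, e.g.\ $(e_i\epsilon_t)(e_j\epsilon_t) = e_ie_j = -(e_j\epsilon_t)(e_i\epsilon_t)$, and the paper's proof writes out these anticommutation relations explicitly. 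Your surjectivity argument, on the other hand, is spelled out in more detail than the paper's (which merely asserts surjectivity is ``easily seen''). The choice of $e_1$ versus $e_s$ as pivot for the second isomorphism is immaterial.
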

	\begin{proof}
		Take an orthonormal basis $\{e_1,\ldots,e_s,\epsilon_1,\ldots,\epsilon_t\}$ of $\mathbb{R}^{s,t}$ 
		such that $e_i^2 = 1$, $\epsilon_i^2 = -1$, and
		a corresponding basis $\{\underline{e}_1,\ldots,\underline{e}_s,\underline{\epsilon}_1,\ldots,\underline{\epsilon}_{t-1}\}$ of $\mathbb{R}^{s,t-1}$.
		Define $f\!: \mathbb{R}^{s,t-1} \to \mathcal{G}^{+}(\mathbb{R}^{s,t})$ by mapping
		\begin{displaymath}
		\setlength\arraycolsep{2pt}
		\begin{array}{ccl}
			\underline{e}_i &\mapsto& e_i \epsilon_t, \quad i=1,\ldots,s, \\[3pt]
			\underline{\epsilon}_i &\mapsto& \epsilon_i \epsilon_t, \quad i=1,\ldots,t-1,
		\end{array}
		\end{displaymath}
		and extending linearly. We then have
		\begin{displaymath}
		\setlength\arraycolsep{2pt}
		\begin{array}{rcl}
			f(\underline{e}_i) f(\underline{e}_j) &=& -f(\underline{e}_j) f(\underline{e}_i), \\[3pt]
			f(\underline{\epsilon}_i) f(\underline{\epsilon}_j) &=& -f(\underline{\epsilon}_j) f(\underline{\epsilon}_i)
		\end{array}
		\end{displaymath}
		for $i \neq j$, and
		\begin{displaymath}
		\setlength\arraycolsep{2pt}
		\begin{array}{c}
			f(\underline{e}_i) f(\underline{\epsilon}_j) = -f(\underline{\epsilon}_j) f(\underline{e}_i), \\[3pt]
			f(\underline{e}_i)^2 = 1, \quad f(\underline{\epsilon}_i)^2 = -1 
		\end{array}
		\end{displaymath}
		for all reasonable $i,j$.
		By Proposition \ref{prop_universality} (universality) we can extend $f$ to a homomorphism $F\!: \mathcal{G}(\mathbb{R}^{s,t-1}) \to \mathcal{G}^{+}(\mathbb{R}^{s,t})$.
		Since $\dim \mathcal{G}(\mathbb{R}^{s,t-1}) = 2^{s+t-1} = 2^{s+t}/2 = \dim \mathcal{G}^{+}(\mathbb{R}^{s,t})$ and
		$F$ is easily seen to be surjective, we have that $F$ is an isomorphism.
		
		For the second statement, we take a corresponding basis 
		$\{\underline{e}_1,\ldots,\underline{e}_t,\underline{\epsilon}_1,\ldots \\ \ldots,\underline{\epsilon}_{s-1}\}$ of $\mathbb{R}^{t,s-1}$
		and define $f\!: \mathbb{R}^{t,s-1} \to \mathcal{G}^{+}(\mathbb{R}^{s,t})$ by
		\begin{displaymath}
		\setlength\arraycolsep{2pt}
		\begin{array}{ccl}
			\underline{e}_i &\mapsto& \epsilon_i e_s, \quad i=1,\ldots,t, \\[3pt]
			\underline{\epsilon}_i &\mapsto& e_i e_s, \quad i=1,\ldots,s-1.
		\end{array}
		\end{displaymath}
		Proceeding as above, we obtain the isomorphism.
	\end{proof}
	
	\begin{cor}
		It follows immediately that
		\begin{displaymath}
		\setlength\arraycolsep{2pt}
		\begin{array}{c}
			\mathcal{G}(\mathbb{R}^{s,t}) \cong \mathcal{G}(\mathbb{R}^{t+1,s-1}), \\[5pt]
			\mathcal{G}^{+}(\mathbb{R}^{s,t}) \cong \mathcal{G}^{+}(\mathbb{R}^{t,s}).
		\end{array}
		\end{displaymath}
	\end{cor}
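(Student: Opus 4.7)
The plan is to derive both isomorphisms as direct consequences of Proposition \ref{prop_iso_even}, simply by reindexing and composing the two isomorphisms stated there. No new construction is needed; the work is purely bookkeeping of the signatures.

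For the first isomorphism, my strategy is to apply Proposition \ref{prop_iso_even} in the signature $(s, t+1)$ instead of $(s,t)$. The first half of the proposition then yields
\begin{equation*}
\mathcal{G}^{+}(\mathbb{R}^{s,t+1}) \cong \mathcal{G}(\mathbb{R}^{s,t}),
\end{equation*}
while the second half (with the same substitution) gives
\begin{equation*}
\mathcal{G}^{+}(\mathbb{R}^{s,t+1}) \cong \mathcal{G}(\mathbb{R}^{t+1,s-1}).
\end{equation*}
Composing one isomorphism with the inverse of the other delivers $\mathcal{G}(\mathbb{R}^{s,t}) \cong \mathcal{G}(\mathbb{R}^{t+1,s-1})$.

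For the second isomorphism, I would apply the two halves of Proposition \ref{prop_iso_even} with swapped signatures. Applying the second half in the form already stated gives
\begin{equation*}
\mathcal{G}^{+}(\mathbb{R}^{s,t}) \cong \mathcal{G}(\mathbb{R}^{t,s-1}),
\end{equation*}
and applying the first half with $(s,t)$ replaced by $(t,s)$ gives
\begin{equation*}
\mathcal{G}^{+}(\mathbb{R}^{t,s}) \cong \mathcal{G}(\mathbb{R}^{t,s-1}).
\end{equation*}
Chaining these two yields $\mathcal{G}^{+}(\mathbb{R}^{s,t}) \cong \mathcal{G}^{+}(\mathbb{R}^{t,s})$.

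There is essentially no obstacle: the only thing to watch is that all signatures appearing in the intermediate steps remain non-negative, which means one should note that the identities are asserted only for those $s,t$ for which the expressions make sense (e.g.\ $s \geq 1$ in the first isomorphism). The corollary is therefore little more than a relabeling exercise on top of Proposition \ref{prop_iso_even}.
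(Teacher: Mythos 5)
Your proof is correct and is precisely the bookkeeping argument the paper intends by ``it follows immediately'': both isomorphisms fall out of composing the two halves of Proposition \ref{prop_iso_even} with suitably shifted signatures, as you do. The caveat you note about $s\geq 1$ (so that $s-1\geq 0$) is exactly the sense of ``for which the expressions make sense.''
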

	
	\noindent
	In the above and further on we use the notation 
	$\mathcal{G}(\mathbb{F}^{0,0}) := \cl(\varnothing,\mathbb{F},\varnothing) = \mathbb{F}$
	for completeness.

	\begin{exmp}
		We have already seen explicitly that 
		$\mathcal{G}^+(\mathbb{R}^2) \cong \mathbb{C} \cong \mathcal{G}(\mathbb{R}^{0,1})$
		and that the even subalgebra of the space algebra is
		$\mathcal{G}^+(\mathbb{R}^3) \cong \mathcal{G}(\mathbb{R}^{0,2})$ 
		i.e. the quaternion algebra (see Example \ref{exmp_space_algebra_decomp}).
		Moreover, since the pseudoscalar $k$ in $\mathbb{H}$ is an imaginary unit,
		we also see that $\mathcal{G}^+(\mathbb{R}^{0,2}) \cong \mathcal{G}^+(\mathbb{R}^{2,0})$.
	\end{exmp}
	
	The property of geometric algebras that leads to their eventual classification
	as matrix algebras is that they can be split up into tensor products of
	geometric algebras of lower dimension.
	
	\begin{prop} \label{prop_iso_tensor}
		We have the algebra isomorphisms
		\begin{displaymath}
		\setlength\arraycolsep{2pt}
		\begin{array}{c}
			\mathcal{G}(\mathbb{R}^{n+2,0}) \cong \mathcal{G}(\mathbb{R}^{0,n}) \otimes \mathcal{G}(\mathbb{R}^{2,0}), \\[5pt]
			\mathcal{G}(\mathbb{R}^{0,n+2}) \cong \mathcal{G}(\mathbb{R}^{n,0}) \otimes \mathcal{G}(\mathbb{R}^{0,2}), \\[5pt]
			\mathcal{G}(\mathbb{R}^{s+1,t+1}) \cong \mathcal{G}(\mathbb{R}^{s,t}) \otimes \mathcal{G}(\mathbb{R}^{1,1}),
		\end{array}
		\end{displaymath}
		for all $n$, $s$ and $t$ for which the expressions make sense.
	\end{prop}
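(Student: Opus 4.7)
The plan is to prove all three isomorphisms by the same template used in the proof of Proposition \ref{prop_iso_even}: construct a linear map $f$ from the underlying vector space into the tensor product, verify the Clifford relation $f(v)^2 = q(v)1$, extend to an algebra homomorphism $F$ via Proposition \ref{prop_universality}, and conclude by dimension count and surjectivity.

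For the first isomorphism, I fix an orthonormal basis $\{e_1,\ldots,e_{n+2}\}$ of $\mathbb{R}^{n+2,0}$, an orthonormal basis $\{\epsilon_1,\ldots,\epsilon_n\}$ of $\mathbb{R}^{0,n}$, and an orthonormal basis $\{u_1,u_2\}$ of $\mathbb{R}^{2,0}$. The key observation is that the pseudoscalar $J := u_1 u_2 \in \mathcal{G}(\mathbb{R}^{2,0})$ satisfies $J^2 = -1$ and anticommutes with $u_1$ and $u_2$ (since $\dim = 2$ is even). Motivated by this, I define
\begin{displaymath}
f(e_i) := \epsilon_i \otimes J \ (1 \le i \le n), \qquad f(e_{n+1}) := 1 \otimes u_1, \qquad f(e_{n+2}) := 1 \otimes u_2.
\end{displaymath}
Then $f(e_i)^2 = \epsilon_i^2 \otimes J^2 = (-1)(-1) = 1$ for $i \le n$, and $f(e_{n+1})^2 = f(e_{n+2})^2 = 1$, so the quadratic form is preserved on basis elements. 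Anticommutativity $f(e_i)f(e_j) = -f(e_j)f(e_i)$ for $i \ne j$ follows in three cases: for $i,j \le n$ from $\epsilon_i \epsilon_j = -\epsilon_j\epsilon_i$; for $i \le n$, $j \in \{n+1,n+2\}$ from the fact that $J$ anticommutes with $u_1$ and $u_2$; and for $\{i,j\}=\{n+1,n+2\}$ directly. By universality, $f$ extends to an algebra homomorphism $F$. Since $\dim \mathcal{G}(\mathbb{R}^{n+2,0}) = 2^{n+2} = \dim \mathcal{G}(\mathbb{R}^{0,n}) \cdot \dim \mathcal{G}(\mathbb{R}^{2,0})$, it suffices to show surjectivity, which follows from the observation that $1 \otimes u_1, 1 \otimes u_2 \in \operatorname{im} F$ and $F(e_i e_{n+1} e_{n+2}) = \epsilon_i \otimes J u_1 u_2 = \epsilon_i \otimes J^2 = -\epsilon_i \otimes 1$, so all generators of the tensor product lie in the image.

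The second and third isomorphisms proceed by the same recipe, with only the signs and squared pseudoscalars adjusting. For $\mathcal{G}(\mathbb{R}^{0,n+2}) \cong \mathcal{G}(\mathbb{R}^{n,0}) \otimes \mathcal{G}(\mathbb{R}^{0,2})$, I take the pseudoscalar $J' = u_1 u_2$ in $\mathcal{G}(\mathbb{R}^{0,2})$ (again $J'^2 = -1$) and set $f(e_i) = \epsilon_i \otimes J'$ for $i \le n$, $f(e_{n+1}) = 1 \otimes u_1$, $f(e_{n+2}) = 1 \otimes u_2$; now $\epsilon_i^2 = 1$ and $J'^2 = -1$ give $f(e_i)^2 = -1$, matching $e_i^2 = -1$. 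For $\mathcal{G}(\mathbb{R}^{s+1,t+1}) \cong \mathcal{G}(\mathbb{R}^{s,t}) \otimes \mathcal{G}(\mathbb{R}^{1,1})$, I use bases $\{\underline{e}_i,\underline{\epsilon}_j\}$ of $\mathbb{R}^{s,t}$ and $\{u,v\}$ of $\mathbb{R}^{1,1}$ (with $u^2=1$, $v^2=-1$), and note that $K := uv$ satisfies $K^2 = -u^2 v^2 = +1$. I then set $f(e_i) = \underline{e}_i \otimes K$, $f(\epsilon_j) = \underline{\epsilon}_j \otimes K$, $f(e_{s+1}) = 1 \otimes u$, $f(\epsilon_{t+1}) = 1 \otimes v$; since $K^2 = 1$ now, the squares $f(e_i)^2 = \underline{e}_i^2$ and $f(\epsilon_j)^2 = \underline{\epsilon}_j^2$ come out correctly, and anticommutativity again follows because $K$ anticommutes with $u$ and $v$.

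The main obstacle, and the unifying idea, is choosing the right pseudoscalar factor to tensor with the lower-dimensional generators. The pseudoscalar of the $2$-dimensional factor must (i) anticommute with both generators of that factor, so that $f(e_i)$ (with $i \le n$ or $i \le s+t$) anticommutes with $f(e_{n+1}), f(e_{n+2})$ (respectively $f(e_{s+1}), f(\epsilon_{t+1})$), and (ii) square to $+1$ or $-1$ precisely so that the signature on the $n$ (respectively $s+t$) remaining generators is inverted or preserved to match the left-hand side. Property (i) holds automatically because the $2$-dimensional pseudoscalar is an even-grade element in even dimension and hence anticommutes with vectors; property (ii) is the content of equation \eqref{pseudoscalar_squared} applied to $\mathbb{R}^{2,0}$, $\mathbb{R}^{0,2}$, and $\mathbb{R}^{1,1}$, giving $-1$, $-1$, and $+1$ respectively -- exactly matching the three cases of the proposition.
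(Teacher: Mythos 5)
Your proof is correct and takes essentially the same approach as the paper: map each lower-dimensional generator to a tensor product with the two-dimensional pseudoscalar (which the paper writes explicitly as $\overline{e}_1\overline{e}_2$ or $\overline{e}\overline{\epsilon}$ rather than naming it $J$ or $K$), verify the Clifford relations, extend by universality, and conclude by dimension count and surjectivity. Your explicit surjectivity check via $F(e_i e_{n+1} e_{n+2}) = -\epsilon_i \otimes 1$ and your closing remark unifying the three cases through \eqref{pseudoscalar_squared} are nice additions of detail, but the underlying argument is identical.
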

	\begin{proof}
		For the first expression, take orthonormal bases 
		$\{e_i\}$ of $\mathbb{R}^{n+2}$, $\{\underline{\epsilon}_i\}$ of $\mathbb{R}^{0,n}$ 
		and $\{\overline{e}_i\}$ of $\mathbb{R}^2$.
		Define a mapping $f\!: \mathbb{R}^{n+2} \to \mathcal{G}(\mathbb{R}^{0,n}) \otimes \mathcal{G}(\mathbb{R}^2)$ by
		\begin{displaymath}
		\setlength\arraycolsep{1pt}
		\begin{array}{ccrcll}
			e_j &\ \mapsto\ & \underline{\epsilon}_j &\otimes& \overline{e}_1 \overline{e}_2, &\quad j=1,\ldots,n, \\[3pt]
			e_j &\ \mapsto\ & 1 &\otimes& \overline{e}_{j-n}, &\quad j=n+1,n+2,
		\end{array}
		\end{displaymath}
		and extend to an algebra homomorphism $F$ using the universal property.
		Since $F$ maps onto a set of generators for $\mathcal{G}(\mathbb{R}^{0,n}) \otimes \mathcal{G}(\mathbb{R}^2)$
		it is clearly surjective. Furthermore, $\dim \mathcal{G}(\mathbb{R}^{n+2}) = 2^{n+2} = \dim \mathcal{G}(\mathbb{R}^{0,n}) \otimes \mathcal{G}(\mathbb{R}^2)$,
		so $F$ is an isomorphism.
		
		The second expression is proved similarly. For the third expression, take orthonormal bases 
		$\{e_1,\ldots,e_{s+1},\epsilon_1,\ldots,\epsilon_{t+1}\}$ of $\mathbb{R}^{s+1,t+1}$,
		$\{\underline{e}_1,\ldots,\underline{e}_s,\underline{\epsilon}_1,\ldots,\underline{\epsilon}_t\}$ of $\mathbb{R}^{s,t}$
		and $\{\overline{e},\overline{\epsilon}\}$ of $\mathbb{R}^{1,1}$, where
		$e_i^2 = 1, \epsilon_i^2 = -1$ etc. Define $f\!: \mathbb{R}^{s+1,t+1} \to \mathcal{G}(\mathbb{R}^{s,t}) \otimes \mathcal{G}(\mathbb{R}^{1,1})$ by
		\begin{displaymath}
		\setlength\arraycolsep{1pt}
		\begin{array}{ccrcll}
			e_j &\ \mapsto\ & \underline{e}_j &\otimes& \overline{e} \overline{\epsilon}, &\quad j=1,\ldots,s, \\[3pt]
			\epsilon_j &\ \mapsto\ & \underline{\epsilon}_j &\otimes& \overline{e} \overline{\epsilon}, &\quad j=1,\ldots,t, \\[3pt]
			e_{s+1} &\ \mapsto\ & 1 &\otimes& \overline{e}, \\[3pt]
			\epsilon_{t+1} &\ \mapsto\ & 1 &\otimes& \overline{\epsilon}.
		\end{array}
		\end{displaymath}
		Proceeding as above, we can extend $f$ to an algebra isomorphism.
	\end{proof}
	
	We can also relate certain real geometric algebras to complex equivalents.
	
	\begin{prop} \label{prop_iso_real_complex}
		If $s+t$ is odd and $I^2 = -1$ then
		\begin{displaymath}
			\mathcal{G}(\mathbb{R}^{s,t}) 
				\cong \mathcal{G}^{+}(\mathbb{R}^{s,t}) \otimes \mathbb{C} 
				\cong \mathcal{G}(\mathbb{C}^{s+t-1}).
		\end{displaymath}
	\end{prop}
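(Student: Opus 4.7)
The plan is to exploit the fact that when $n = s+t$ is odd the pseudoscalar $I$ is central (as noted after \eqref{pseudoscalar_squared}), so the assumption $I^2 = -1$ makes the subalgebra $\mathbb{R}\langle 1, I\rangle$ a central copy of $\mathbb{C}$ inside $\mathcal{G}(\mathbb{R}^{s,t})$. This will give the first isomorphism almost for free, and for the second I will combine Proposition \ref{prop_iso_even} with the observation that complexification erases signature.

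\textbf{First isomorphism.} I would define a map
\begin{equation*}
	\varphi\!: \mathcal{G}^+(\mathbb{R}^{s,t}) \otimes_{\mathbb{R}} \mathbb{C} \to \mathcal{G}(\mathbb{R}^{s,t}), \qquad
	a \otimes (\alpha + \beta i) \mapsto \alpha a + \beta a I,
\end{equation*}
and check it is an $\mathbb{R}$-algebra homomorphism. Bilinearity is clear; the multiplicative property uses that $I$ commutes with every element of $\mathcal{G}^+$ (true for any $I$ with $a \in \mathcal{G}^+$ since $I$ is central in odd dimension) and that $I^2 = -1$ plays the role of $i^2 = -1$. For surjectivity, note that because $n$ is odd, multiplication by $I$ exchanges $\mathcal{G}^+$ and $\mathcal{G}^-$ (grade $k \mapsto n-k$ flips parity), so $\mathcal{G}^- = \mathcal{G}^+ I$ and hence every $x \in \mathcal{G}$ decomposes uniquely as $x = a + bI$ with $a,b \in \mathcal{G}^+$. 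Since $\dim_{\mathbb{R}} \mathcal{G}^+ \otimes \mathbb{C} = 2 \cdot 2^{n-1} = 2^n = \dim_{\mathbb{R}} \mathcal{G}$, surjectivity gives bijectivity.

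\textbf{Second isomorphism.} Here I would first invoke Proposition \ref{prop_iso_even} to rewrite
\begin{equation*}
	\mathcal{G}^+(\mathbb{R}^{s,t}) \otimes \mathbb{C} \cong \mathcal{G}(\mathbb{R}^{s,t-1}) \otimes \mathbb{C},
\end{equation*}
and then show that the right-hand side is isomorphic to $\mathcal{G}(\mathbb{C}^{s+t-1})$. For the latter, pick an orthonormal basis $\{\underline{e}_1, \dots, \underline{e}_s, \underline{\epsilon}_1, \dots, \underline{\epsilon}_{t-1}\}$ of $\mathbb{R}^{s,t-1}$ and send
\begin{equation*}
	\underline{e}_j \otimes 1 \mapsto f_j, \qquad \underline{\epsilon}_k \otimes 1 \mapsto i f_{s+k},
\end{equation*}
where $\{f_1, \dots, f_{s+t-1}\}$ is an orthonormal basis of $\mathbb{C}^{s+t-1}$ with $f_\ell^2 = 1$. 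A quick check shows the images satisfy the defining anticommutation relations of $\mathcal{G}(\mathbb{C}^{s+t-1})$, so universality (Proposition \ref{prop_universality}) extends this to a $\mathbb{C}$-algebra homomorphism; matching complex dimensions $2^{s+t-1}$ on both sides plus surjectivity (the images generate) yields the isomorphism.

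\textbf{Main obstacle.} The only place where care is genuinely needed is verifying that $\varphi$ is multiplicative: one must write a generic element on the left as $a \otimes z$, a generic element on the right as $a'\otimes z'$, expand
$\varphi((a \otimes z)(a' \otimes z')) = \varphi(aa' \otimes zz')$,
and compare with $\varphi(a \otimes z)\varphi(a' \otimes z')$, using both centrality of $I$ and $I^2 = -1$. The hypothesis $I^2 = -1$ is essential here --- if $I^2 = +1$ the same construction would instead identify $\mathcal{G}(\mathbb{R}^{s,t})$ with $\mathcal{G}^+(\mathbb{R}^{s,t}) \otimes (\mathbb{R} \oplus \mathbb{R})$, not a complex algebra. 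Everything else is a routine dimension count plus a universality argument.
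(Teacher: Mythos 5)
Your proposal is correct and takes essentially the same approach as the paper. The map $a \otimes (\alpha + \beta i) \mapsto \alpha a + \beta a I$ is precisely the paper's $F$ (stated on basis blades there), and your second isomorphism is just a fleshed-out, explicit version of the paper's one-line appeal to the fact that all nondegenerate complex quadratic forms are equivalent (which the paper leaves as an exercise). One nit: in the surjectivity argument for $\varphi$ you should be careful to keep the two justifications separate — the unique decomposition $x = a + bI$ already gives surjectivity and injectivity directly, while the dimension count gives bijectivity once you know injectivity (or surjectivity); you appear to mix these, but either route alone suffices, so nothing is actually wrong.
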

	\begin{proof}
		Since $s+t$ is odd, the pseudoscalar $I$ commutes with all elements.
		This, together with the property $I^2 = -1$, makes it a good candidate for a scalar imaginary.
		Define $F\!: \mathcal{G}^{+}(\mathbb{R}^{s,t}) \otimes \mathbb{C} \to \mathcal{G}(\mathbb{R}^{s,t})$ 
		by linear extension of
		\begin{displaymath}
		\setlength\arraycolsep{1pt}
		\begin{array}{ccccll}
			E &\otimes& 1 &\ \mapsto\ & E & \quad \in \mathcal{G}^+, \\[3pt]
			E &\otimes& i &\ \mapsto\ & EI & \quad \in \mathcal{G}^-,
		\end{array}
		\end{displaymath}
		for even basis blades $E$. $F$ is easily seen to be an injective algebra homomorphism. 
		Using that the dimensions of these algebras are equal, we have an isomorphism.
		
		For the second isomorphism, note that Proposition \ref{prop_iso_even} gives
		$\mathcal{G}^{+}(\mathbb{R}^{s,t}) \otimes \mathbb{C} \cong \mathcal{G}(\mathbb{R}^{s,t-1}) \otimes \mathbb{C}$.
		Finally, the order of complexification is unimportant since all nondegenerate
		complex quadratic forms are equivalent.
	\end{proof}
	
	\begin{cor}
		It follows immediately that, for these conditions,
		\begin{displaymath}
			\mathcal{G}(\mathbb{R}^{s,t}) \cong \mathcal{G}(\mathbb{R}^{p,q-1}) \otimes \mathbb{C}
		\end{displaymath}
		for any $p \geq 0$, $q \geq 1$ such that $p+q=s+t$.
	\end{cor}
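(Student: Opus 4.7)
The plan is to chain together the two isomorphisms already established in Proposition \ref{prop_iso_real_complex} with one new ingredient: that any real geometric algebra becomes, upon complexification, the standard complex Clifford algebra of the same dimension. Writing $n = s+t = p+q$ (odd), we target the chain
$$
\mathcal{G}(\mathbb{R}^{s,t}) \ \cong\ \mathcal{G}(\mathbb{C}^{n-1}) \ \cong\ \mathcal{G}(\mathbb{R}^{p,q-1}) \otimes \mathbb{C},
$$
where the first isomorphism is the second half of the proposition. Only the second step is new; note that we cannot simply reapply the proposition to the signature $(p,q-1)$, since $p + (q-1) = n-1$ is even, so the pseudoscalar trick used there is unavailable.

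First I would establish the key lemma that $\mathcal{G}(\mathbb{R}^{p,q-1}) \otimes_\mathbb{R} \mathbb{C} \cong \mathcal{G}(\mathbb{C}^{n-1})$. Pick an orthonormal basis $\{e_1,\ldots,e_p,\epsilon_1,\ldots,\epsilon_{q-1}\}$ of $\mathbb{R}^{p,q-1}$ with $e_i^2 = 1$, $\epsilon_j^2 = -1$, and a basis $\{f_1,\ldots,f_{n-1}\}$ of $\mathbb{C}^{n-1}$ (made orthonormal with $f_k^2 = 1$ by Theorem \ref{thm_complex_sylvester}). Define a $\mathbb{C}$-linear map $f\!: \mathbb{C}^{n-1} \to \mathcal{G}(\mathbb{R}^{p,q-1}) \otimes \mathbb{C}$ by
$$
f(f_k) := \begin{cases} e_k \otimes 1, & 1 \leq k \leq p, \\ \epsilon_{k-p} \otimes i, & p < k \leq n-1. \end{cases}
$$
A direct computation shows $f(f_k)^2 = 1 \otimes 1$ in both cases (since $i^2 = -1 = \epsilon_j^2$), and the images pairwise anticommute because the generators of $\mathcal{G}(\mathbb{R}^{p,q-1})$ do. Universality (Proposition \ref{prop_universality}, applied over $\mathbb{C}$) then extends $f$ to a $\mathbb{C}$-algebra homomorphism $F\!: \mathcal{G}(\mathbb{C}^{n-1}) \to \mathcal{G}(\mathbb{R}^{p,q-1}) \otimes \mathbb{C}$. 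Surjectivity follows because the image contains a generating set for the codomain (the $e_k \otimes 1$ and $\epsilon_j \otimes i$, together with $1 \otimes 1$ and $1 \otimes i = (\epsilon_j \otimes i)(\epsilon_j \otimes 1)\epsilon_j^{-2}$, generate everything). A dimension count over $\mathbb{C}$ ($2^{n-1}$ on both sides) upgrades this to an isomorphism.

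Combining this with the proposition gives the result: $\mathcal{G}(\mathbb{R}^{s,t}) \cong \mathcal{G}(\mathbb{C}^{n-1}) \cong \mathcal{G}(\mathbb{R}^{p,q-1}) \otimes \mathbb{C}$. The main (mild) obstacle is really just bookkeeping — verifying that the rescaling $\epsilon_j \mapsto \epsilon_j \otimes i$ simultaneously flips the sign of each $\epsilon_j^2$ and preserves anticommutativity, and confirming that universality can be invoked over the complex base ring. Everything else is a routine consequence of the tensor-product structure and the fact that over $\mathbb{C}$ signature is invisible.
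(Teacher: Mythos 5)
Your proof is correct and follows the same route the paper has in mind: go through $\mathcal{G}(\mathbb{C}^{n-1})$ via the second isomorphism of Proposition \ref{prop_iso_real_complex}, then use the fact that complexification erases the signature so that $\mathcal{G}(\mathbb{R}^{p,q-1}) \otimes \mathbb{C} \cong \mathcal{G}(\mathbb{C}^{n-1})$; the paper glosses over this last step with the remark that ``all nondegenerate complex quadratic forms are equivalent,'' and you have simply spelled out that remark in full.
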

	
	One important consequence of the tensor algebra isomorphisms in Proposition \ref{prop_iso_tensor}
	is that geometric algebras experience a kind of periodicity over 8 real 
	dimensions in the underlying vector space.
	
	\begin{prop} \label{prop_iso_periodicity}
		For all $n \geq 0$, there are periodicity isomorphisms
		\begin{displaymath}
		\setlength\arraycolsep{2pt}
		\begin{array}{c}
			\mathcal{G}(\mathbb{R}^{n+8,0}) \cong \mathcal{G}(\mathbb{R}^{n,0}) \otimes \mathcal{G}(\mathbb{R}^{8,0}), \\[5pt]
			\mathcal{G}(\mathbb{R}^{0,n+8}) \cong \mathcal{G}(\mathbb{R}^{0,n}) \otimes \mathcal{G}(\mathbb{R}^{0,8}), \\[5pt]
			\mathcal{G}(\mathbb{C}^{n+2}) \cong \mathcal{G}(\mathbb{C}^{n}) \otimes_\mathbb{C} \mathcal{G}(\mathbb{C}^{2}).
		\end{array}
		\end{displaymath}
	\end{prop}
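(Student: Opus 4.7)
The plan is to derive all three periodicities as direct consequences of Proposition~\ref{prop_iso_tensor}, by iterating the first two relations in the real cases and by imitating their universality proof over $\mathbb{C}$ in the complex case.

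For the first isomorphism I peel off four Euclidean dimensions at a time. Applying $\mathcal{G}(\mathbb{R}^{n+2,0}) \cong \mathcal{G}(\mathbb{R}^{0,n}) \otimes \mathcal{G}(\mathbb{R}^{2,0})$ with $n = m+2$ and then $\mathcal{G}(\mathbb{R}^{0,n+2}) \cong \mathcal{G}(\mathbb{R}^{n,0}) \otimes \mathcal{G}(\mathbb{R}^{0,2})$ with $n = m$ gives
\[
\mathcal{G}(\mathbb{R}^{m+4,0}) \;\cong\; \mathcal{G}(\mathbb{R}^{m,0}) \otimes \mathcal{G}(\mathbb{R}^{0,2}) \otimes \mathcal{G}(\mathbb{R}^{2,0}).
\]
Applying this identity twice yields
\[
\mathcal{G}(\mathbb{R}^{n+8,0}) \;\cong\; \mathcal{G}(\mathbb{R}^{n,0}) \otimes \bigl[\mathcal{G}(\mathbb{R}^{0,2}) \otimes \mathcal{G}(\mathbb{R}^{2,0})\bigr]^{\otimes 2}.
\]
Specializing $n=0$ shows that the bracketed tensor tail is itself a model for $\mathcal{G}(\mathbb{R}^{8,0})$, and associativity of $\otimes$ then produces the stated isomorphism. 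The second isomorphism is completely parallel: peeling off four anti-Euclidean dimensions by applying the same two base identities in the opposite order gives
\[
\mathcal{G}(\mathbb{R}^{0,m+4}) \;\cong\; \mathcal{G}(\mathbb{R}^{0,m}) \otimes \mathcal{G}(\mathbb{R}^{2,0}) \otimes \mathcal{G}(\mathbb{R}^{0,2}),
\]
iterating twice and again setting $n = 0$ to identify the tail as $\mathcal{G}(\mathbb{R}^{0,8})$.

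For the complex isomorphism I imitate the argument of Proposition~\ref{prop_iso_tensor} directly over $\mathbb{C}$. By Theorem~\ref{thm_complex_sylvester} I may pick an orthonormal basis $\{e_j\}_{j=1}^{n+2}$ of $\mathbb{C}^{n+2}$ with $e_j^2 = 1$, together with bases $\{\underline e_j\}_{j=1}^{n}$ of $\mathbb{C}^n$ and $\{\overline e_k\}_{k=1}^{2}$ of $\mathbb{C}^2$. I define $f \colon \mathbb{C}^{n+2} \to \mathcal{G}(\mathbb{C}^n) \otimes_{\mathbb{C}} \mathcal{G}(\mathbb{C}^2)$ by
\[
f(e_j) := i\, \underline e_j \otimes \overline e_1 \overline e_2 \quad (1 \le j \le n), \qquad f(e_{n+k}) := 1 \otimes \overline e_k \quad (k = 1,2),
\]
with $i = \sqrt{-1}$. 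A quick check shows $f(e_\ell)^2 = 1$ for all $\ell$ (the factor $i^2 = -1$ cancels $(\overline e_1 \overline e_2)^2 = -1$) and that any two of the images anticommute. Universality (Proposition~\ref{prop_universality}) extends $f$ to an algebra homomorphism $F$, which is surjective onto a generating set; since source and target both have complex dimension $2^{n+2}$, $F$ is an isomorphism.

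Nothing deep happens here: the main obstacle is only bookkeeping, keeping track of which of the two base identities from Proposition~\ref{prop_iso_tensor} is being applied at each stage, and verifying that the $n=0$ specializations genuinely recover $\mathcal{G}(\mathbb{R}^{8,0})$ and $\mathcal{G}(\mathbb{R}^{0,8})$ as tensor factors. The one genuinely new ingredient is the scalar factor $i$ in the complex construction, without which the generators would square to $-1$ instead of $+1$; this is exactly where the argument uses that the ground field is $\mathbb{C}$ rather than $\mathbb{R}$, and is why the complex period is $2$ rather than $8$.
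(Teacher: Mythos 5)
Your proof is correct and follows essentially the same approach as the paper: iterating the two real tensor-product identities of Proposition~\ref{prop_iso_tensor} to reach a five-fold tensor factorization (and identifying the tail with $\mathcal{G}(\mathbb{R}^{8,0})$ or $\mathcal{G}(\mathbb{R}^{0,8})$ via the $n=0$ case), and for the complex periodicity defining the exact same map $e_j \mapsto i\,\underline{e}_j \otimes \overline{e}_1\overline{e}_2$ and invoking universality and dimension counting. The only difference is cosmetic: you package the real iteration as a ``peel off four dimensions'' step applied twice, whereas the paper writes out the full chain of applications of the two base identities in a single display.
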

	\begin{proof}
		Using Proposition \ref{prop_iso_tensor} 
		repeatedly, we obtain
		\begin{displaymath}
		\setlength\arraycolsep{2pt}
		\begin{array}{rcl}
			\mathcal{G}(\mathbb{R}^{n+8,0})
				&\cong& \mathcal{G}(\mathbb{R}^{0,n+6}) \otimes \mathcal{G}(\mathbb{R}^{2,0}) \\[5pt]
				&\cong& \mathcal{G}(\mathbb{R}^{n,0}) \otimes \mathcal{G}(\mathbb{R}^{0,2}) \otimes \mathcal{G}(\mathbb{R}^{2,0}) \otimes \mathcal{G}(\mathbb{R}^{0,2}) \otimes \mathcal{G}(\mathbb{R}^{2,0}) \\[5pt]
				&\cong& \mathcal{G}(\mathbb{R}^{n,0}) \otimes \mathcal{G}(\mathbb{R}^{8,0}),
		\end{array}
		\end{displaymath}
		and analogously for the second statement.
		
		For the last statement we take orthonormal bases $\{e_i\}$ of $\mathbb{C}^{n+2}$, $\{\underline{e}_i\}$ of $\mathbb{C}^n$ and $\{\overline{e}_i\}$ of $\mathbb{C}^2$.
		Define a mapping $f\!: \mathbb{C}^{n+2} \to \mathcal{G}(\mathbb{C}^n) \otimes_\mathbb{C} \mathcal{G}(\mathbb{C}^2)$ by
		\begin{displaymath}
		\setlength\arraycolsep{1pt}
		\begin{array}{ccrcll}
			e_j &\ \mapsto\ & i\ \underline{e}_j &\otimes_\mathbb{C}& \overline{e}_1 \overline{e}_2, &\quad j=1,\ldots,n, \\[3pt]
			e_j &\ \mapsto\ & 1 &\otimes_\mathbb{C}& \overline{e}_{j-n}, &\quad j=n+1,n+2,
		\end{array}
		\end{displaymath}
		and extend to an algebra isomorphism as usual.
	\end{proof}
	
	\begin{thm} \label{thm_classification_real}
		We obtain the classification of real geometric algebras as matrix algebras,
		given by Table \ref{table_classification_real} together with the periodicity
		\begin{displaymath}
			\mathcal{G}(\mathbb{R}^{s+8,t}) \cong \mathcal{G}(\mathbb{R}^{s,t+8}) \cong \mathcal{G}(\mathbb{R}^{s,t}) \otimes \mathbb{R}^{16 \times 16}.
		\end{displaymath}
	\end{thm}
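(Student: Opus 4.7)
The theorem contains two distinct claims: a periodicity law relating $\mathcal{G}(\mathbb{R}^{s+8,t})$, $\mathcal{G}(\mathbb{R}^{s,t+8})$ and $\mathcal{G}(\mathbb{R}^{s,t})\otimes\mathbb{R}^{16\times 16}$, and the explicit low-dimensional matrix-algebra identifications collected in the table. My plan is to first strengthen the periodicity isomorphism of Proposition~\ref{prop_iso_periodicity} to allow arbitrary signature shifts, thereby reducing the periodicity to computing $\mathcal{G}(\mathbb{R}^{8,0})$ and $\mathcal{G}(\mathbb{R}^{0,8})$, and then fill in the table by bootstrapping a handful of explicit base algebras via the tensor decompositions already established.

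For the periodicity, I would prove the more general statement $\mathcal{G}(\mathbb{R}^{s+8,t}) \cong \mathcal{G}(\mathbb{R}^{s,t}) \otimes \mathcal{G}(\mathbb{R}^{8,0})$ in exactly the spirit of Proposition~\ref{prop_iso_periodicity}: pick orthonormal bases, send the eight new Euclidean generators $e_i$ to $1\otimes\overline{e}_i$, and send each remaining generator $v$ of $\mathbb{R}^{s,t}$ to $\underline{v}\otimes\overline{I}$, where $\overline{I}$ is the pseudoscalar of $\mathcal{G}(\mathbb{R}^{8,0})$. Since $\dim\mathbb{R}^{8,0}=8$ is even, $\overline{I}$ anticommutes with each $\overline{e}_i$; from \eqref{pseudoscalar_squared} one computes $\overline{I}^2=+1$. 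The required anticommutation and square relations therefore hold, Proposition~\ref{prop_universality} extends the map to a homomorphism, and surjectivity plus a dimension count give an isomorphism. The same construction with the pseudoscalar of $\mathcal{G}(\mathbb{R}^{0,8})$ (which also squares to $+1$) handles $\mathcal{G}(\mathbb{R}^{s,t+8})$. It then suffices to iterate Proposition~\ref{prop_iso_periodicity} to reduce both $\mathcal{G}(\mathbb{R}^{8,0})$ and $\mathcal{G}(\mathbb{R}^{0,8})$ to tensor products of copies of $\mathcal{G}(\mathbb{R}^{2,0})\cong\mathbb{R}^{2\times 2}$ and $\mathcal{G}(\mathbb{R}^{0,2})\cong\mathbb{H}$, which collapse via $\mathbb{H}\otimes_{\mathbb{R}}\mathbb{H}\cong\mathbb{R}^{4\times 4}$ to $\mathbb{R}^{16\times 16}$.

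For the table itself, I would start from six base entries that can be computed by hand: $\mathcal{G}(\mathbb{R}^{0,0})=\mathbb{R}$; $\mathcal{G}(\mathbb{R}^{1,0})\cong\mathbb{R}\oplus\mathbb{R}$ via the central idempotents $\tfrac{1}{2}(1\pm e_1)$; $\mathcal{G}(\mathbb{R}^{0,1})\cong\mathbb{C}$ (Example~\ref{exmp_complex_numbers}); $\mathcal{G}(\mathbb{R}^{2,0})\cong\mathbb{R}^{2\times 2}$ (Exercise~\ref{exc_twodim_isomorphism}); $\mathcal{G}(\mathbb{R}^{1,1})\cong\mathbb{R}^{2\times 2}$ by an analogous explicit construction; and $\mathcal{G}(\mathbb{R}^{0,2})\cong\mathbb{H}$ (Example~\ref{exmp_quaternions}). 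Every remaining entry with $s+t\le 8$ is then reached by iterating the three tensor decompositions of Proposition~\ref{prop_iso_tensor}, supplemented by Proposition~\ref{prop_iso_real_complex} in odd dimension when $I^2=-1$ and by the central-idempotent splitting $\mathcal{G}\cong\mathcal{G}^+\oplus\mathcal{G}^+$ furnished by $\tfrac{1}{2}(1\pm I)$ in odd dimension when $I^2=+1$.

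The one genuinely non-routine obstacle is the bookkeeping of real tensor products of matrix algebras over $\mathbb{R}$, $\mathbb{C}$ and $\mathbb{H}$. Specifically one invokes $\mathbb{F}^{m\times m}\otimes\mathbb{F}^{n\times n}\cong\mathbb{F}^{mn\times mn}$, $\mathbb{C}\otimes_{\mathbb{R}}\mathbb{C}\cong\mathbb{C}\oplus\mathbb{C}$, $\mathbb{C}\otimes_{\mathbb{R}}\mathbb{H}\cong\mathbb{C}^{2\times 2}$ and $\mathbb{H}\otimes_{\mathbb{R}}\mathbb{H}\cong\mathbb{R}^{4\times 4}$. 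Keeping track of which scalars one is tensoring over is the main source of care, and the last of these isomorphisms is in effect the reason the mod-$8$ period appears at all, both in the table and in the identification $\mathcal{G}(\mathbb{R}^{8,0})\cong\mathcal{G}(\mathbb{R}^{0,8})\cong\mathbb{R}^{16\times 16}$ that underlies the periodicity statement.
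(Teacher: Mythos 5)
Your proof is correct, and the table part of your argument is essentially the paper's: compute the base entries $\mathcal{G}(\mathbb{R}^{0,0})=\mathbb{R}$, $\mathcal{G}(\mathbb{R}^{1,0})\cong\mathbb{R}\oplus\mathbb{R}$, $\mathcal{G}(\mathbb{R}^{0,1})\cong\mathbb{C}$, $\mathcal{G}(\mathbb{R}^{2,0})\cong\mathbb{R}^{2\times 2}$, $\mathcal{G}(\mathbb{R}^{0,2})\cong\mathbb{H}$ explicitly, then propagate via Proposition \ref{prop_iso_tensor} and the three $\mathbb{R}$-algebra tensor identities. (The paper obtains $\mathcal{G}(\mathbb{R}^{1,1})\cong\mathcal{G}(\mathbb{R}^{2,0})$ from the corollary to Proposition \ref{prop_iso_even} rather than by a separate matrix construction, and the invocations of Proposition \ref{prop_iso_real_complex} and the $\tfrac{1}{2}(1\pm I)$ idempotent splitting you mention are harmless shortcuts but are not needed --- the criss-cross with Proposition \ref{prop_iso_tensor} alone already fills the table.) The genuinely different step is your periodicity argument. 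Proposition \ref{prop_iso_periodicity} only gives periodicity in the definite cases $(n+8,0)$ and $(0,n+8)$, and the paper's proof reaches arbitrary $(s,t)$ implicitly by combining this with repeated $\mathcal{G}(\mathbb{R}^{1,1})$-tensoring from Proposition \ref{prop_iso_tensor}. You instead prove the stronger statement $\mathcal{G}(\mathbb{R}^{s+8,t})\cong\mathcal{G}(\mathbb{R}^{s,t})\otimes\mathcal{G}(\mathbb{R}^{8,0})$ directly via the universal property, sending the eight new generators to $1\otimes\overline{e}_i$ and each old generator $v$ to $\underline{v}\otimes\overline{I}$; this works precisely because $\overline{I}^2=+1$ by Eq.\ \eqref{pseudoscalar_squared} and $\overline{I}$ anticommutes with each $\overline{e}_i$ as $\dim\mathbb{R}^{8,0}$ is even. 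That is a cleaner route: it handles the mixed-signature shift in one stroke rather than through a chain of reductions, and it makes transparent why the period is $8$ and not, say, $4$ --- the pseudoscalar of $\mathcal{G}(\mathbb{R}^{4,0})$ squares to $+1$ but does not anticommute with the generators in the right way to make this construction degenerate earlier; the real content, as you note, is $\mathbb{H}\otimes_{\mathbb{R}}\mathbb{H}\cong\mathbb{R}^{4\times 4}$, which gives $\mathcal{G}(\mathbb{R}^{8,0})\cong\mathbb{R}^{16\times 16}$.
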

	
	\begin{sidewaystable}\centering
		\begin{displaymath}
		\setlength\arraycolsep{2pt}
		\begin{array}{|c||c|c|c|c|c|c|c|c|c|}
			\hline &&&&&&&&& \\[-10pt]
			8	& \mathbb{R}[16]						& \mathbb{R}[16] \oplus \mathbb{R}[16]	& \mathbb{R}[32]						& \mathbb{C}[32]						& \mathbb{H}[32]						& \mathbb{H}[32] \oplus \mathbb{H}[32]	& \mathbb{H}[64]						& \mathbb{C}[128]						& \mathbb{R}[256] \\
			\hline &&&&&&&&& \\[-10pt]
			7	& \mathbb{R}[8] \oplus \mathbb{R}[8]	& \mathbb{R}[16]						& \mathbb{C}[16]						& \mathbb{H}[16]						& \mathbb{H}[16] \oplus \mathbb{H}[16]	& \mathbb{H}[32]						& \mathbb{C}[64]						& \mathbb{R}[128]						& \mathbb{R}[128] \oplus \mathbb{R}[128] \\
			\hline &&&&&&&&& \\[-10pt]
			6	& \mathbb{R}[8]							& \mathbb{C}[8]							& \mathbb{H}[8]							& \mathbb{H}[8] \oplus \mathbb{H}[8]	& \mathbb{H}[16]						& \mathbb{C}[32]						& \mathbb{R}[64]						& \mathbb{R}[64] \oplus \mathbb{R}[64]	& \mathbb{R}[128] \\
			\hline &&&&&&&&& \\[-10pt]
			5	& \mathbb{C}[4]							& \mathbb{H}[4]							& \mathbb{H}[4] \oplus \mathbb{H}[4]	& \mathbb{H}[8]							& \mathbb{C}[16]						& \mathbb{R}[32]						& \mathbb{R}[32] \oplus \mathbb{R}[32]	& \mathbb{R}[64]						& \mathbb{C}[64] \\
			\hline &&&&&&&&& \\[-10pt]
			4	& \mathbb{H}[2]							& \mathbb{H}[2] \oplus \mathbb{H}[2]	& \mathbb{H}[4]							& \mathbb{C}[8]							& \mathbb{R}[16]						& \mathbb{R}[16] \oplus \mathbb{R}[16]	& \mathbb{R}[32]						& \mathbb{C}[32]						& \mathbb{H}[32] \\
			\hline &&&&&&&&& \\[-10pt]
			3	& \mathbb{H} \oplus \mathbb{H}			& \mathbb{H}[2]							& \mathbb{C}[4]							& \mathbb{R}[8]							& \mathbb{R}[8] \oplus \mathbb{R}[8]	& \mathbb{R}[16]						& \mathbb{C}[16]						& \mathbb{H}[16]						& \mathbb{H}[16] \oplus \mathbb{H}[16] \\
			\hline &&&&&&&&& \\[-10pt]
			2	& \mathbb{H}							& \mathbb{C}[2]							& \mathbb{R}[4]							& \mathbb{R}[4] \oplus \mathbb{R}[4]	& \mathbb{R}[8]							& \mathbb{C}[8]							& \mathbb{H}[8]							& \mathbb{H}[8] \oplus \mathbb{H}[8]	& \mathbb{H}[16] \\
			\hline &&&&&&&&& \\[-10pt]
			1	& \mathbb{C}							& \mathbb{R}[2]							& \mathbb{R}[2] \oplus \mathbb{R}[2]	& \mathbb{R}[4]							& \mathbb{C}[4]							& \mathbb{H}[4]							& \mathbb{H}[4] \oplus \mathbb{H}[4]	& \mathbb{H}[8]							& \mathbb{C}[16] \\
			\hline &&&&&&&&& \\[-10pt]
			0	& \mathbb{R}							& \mathbb{R} \oplus \mathbb{R}			& \mathbb{R}[2]							& \mathbb{C}[2]							& \mathbb{H}[2]							& \mathbb{H}[2] \oplus \mathbb{H}[2]	& \mathbb{H}[4]							& \mathbb{C}[8]							& \mathbb{R}[16] \\
			\hline
			\hline
				& 0 & 1 & 2 & 3 & 4 & 5 & 6 & 7 & 8 \\
			\hline
		\end{array}
		\end{displaymath}
		\caption{The algebra $\mathcal{G}(\mathbb{R}^{s,t})$ in the box (s,t), where $\mathbb{F}[N] = \mathbb{F}^{N \times N}$. \label{table_classification_real}}
	\end{sidewaystable}
	
	\begin{proof}
		Start with the following easily verified isomorphisms 
		(see Examples \ref{exmp_complex_numbers}, \ref{exmp_quaternions}
		and Exercise \ref{exc_twodim_isomorphism}):
		\begin{displaymath}
		\setlength\arraycolsep{1pt}
		\begin{array}{l}
			\mathcal{G}(\mathbb{R}^{1,0}) \cong \mathbb{R} \oplus \mathbb{R}, \\[3pt]
			\mathcal{G}(\mathbb{R}^{0,1}) \cong \mathbb{C}, \\[3pt]
			\mathcal{G}(\mathbb{R}^{2,0}) \cong \mathbb{R}^{2 \times 2}, \\[3pt]
			\mathcal{G}(\mathbb{R}^{0,2}) \cong \mathbb{H}.
		\end{array}
		\end{displaymath}
		We can now work out the cases $(n,0)$ and $(0,n)$ for $n=0,1,\ldots,7$
		in a criss-cross fashion using Proposition \ref{prop_iso_tensor} and the
		tensor isomorphisms
		(see Exercises \ref{exc_iso1}-\ref{exc_iso3})
		\begin{displaymath}
		\setlength\arraycolsep{1pt}
		\begin{array}{l}
			\mathbb{C} \otimes_\mathbb{R} \mathbb{C} \cong \mathbb{C} \oplus \mathbb{C}, \\[3pt]
			\mathbb{C} \otimes_\mathbb{R} \mathbb{H} \cong \mathbb{C}^{2 \times 2}, \\[3pt]
			\mathbb{H} \otimes_\mathbb{R} \mathbb{H} \cong \mathbb{R}^{4 \times 4}.
		\end{array}
		\end{displaymath}
		With $\mathcal{G}(\mathbb{R}^{1,1}) \cong \mathcal{G}(\mathbb{R}^{2,0})$ 
		and Proposition \ref{prop_iso_tensor} we can then work our way
		through the whole table diagonally. The periodicity follows from Proposition \ref{prop_iso_periodicity}
		and 
		\begin{equation*}
			\mathcal{G}(\mathbb{R}^{8,0}) 
			\cong \mathbb{H} \otimes \mathbb{R}^{2 \times 2} \otimes \mathbb{H} \otimes \mathbb{R}^{2 \times 2} 
			\cong \mathbb{R}^{16 \times 16}. \qedhere
		\end{equation*}
	\end{proof}
	
	Because all nondegenerate complex quadratic forms on $\mathbb{C}^n$ are equivalent, 
	the complex version of the above theorem turns out to be much simpler.
	
	\begin{thm} \label{thm_classification_complex}
		We obtain the classification of complex geometric algebras as matrix algebras,
		given by
		\begin{displaymath}
		\setlength\arraycolsep{2pt}
		\begin{array}{rcl}
			\mathcal{G}(\mathbb{C}^{0}) &\cong& \mathbb{C}, \\[5pt]
			\mathcal{G}(\mathbb{C}^{1}) &\cong& \mathbb{C} \oplus \mathbb{C},
		\end{array}
		\end{displaymath}
		together with the periodicity
		\begin{displaymath}
			\mathcal{G}(\mathbb{C}^{n+2}) \cong \mathcal{G}(\mathbb{C}^{n}) \otimes_\mathbb{C} \mathbb{C}^{2 \times 2}.
		\end{displaymath}
	\end{thm}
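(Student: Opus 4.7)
The plan is that almost all the work is already done: the periodicity isomorphism $\mathcal{G}(\mathbb{C}^{n+2}) \cong \mathcal{G}(\mathbb{C}^n) \otimes_\mathbb{C} \mathcal{G}(\mathbb{C}^2)$ is the third statement of Proposition \ref{prop_iso_periodicity}, and combined with $\mathcal{G}(\mathbb{C}^2) \cong \mathbb{C}^{2 \times 2}$ (which I will establish) this immediately gives the stated periodicity in the desired form. So the real content reduces to verifying the two base cases and identifying $\mathcal{G}(\mathbb{C}^2)$.

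The base case $n = 0$ is immediate: by the convention $\mathcal{G}(\mathbb{C}^0) = \cl(\varnothing, \mathbb{C}, \varnothing) = \mathbb{C}$ used throughout the section. For $n = 1$, since all nondegenerate complex quadratic forms are equivalent (Theorem \ref{thm_complex_sylvester}), we may pick a generator $e \in \mathbb{C}^1$ with $e^2 = 1$, so $\mathcal{G}(\mathbb{C}^1) = \Span_\mathbb{C}\{1, e\}$. I would then introduce the idempotents $f_\pm := \frac{1}{2}(1 \pm e)$ and verify directly $f_\pm^2 = f_\pm$, $f_+ f_- = 0$, $f_+ + f_- = 1$. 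The map $\alpha + \beta e \mapsto ((\alpha + \beta), (\alpha - \beta))$ (equivalently $f_+ \mapsto (1,0)$, $f_- \mapsto (0,1)$) is then easily checked to be a $\mathbb{C}$-algebra isomorphism onto $\mathbb{C} \oplus \mathbb{C}$.

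To apply periodicity inductively I still need the seed $\mathcal{G}(\mathbb{C}^2) \cong \mathbb{C}^{2 \times 2}$. The cleanest route is to invoke the real classification: by Proposition \ref{prop_iso_real_complex} (applied with $s+t$ odd, e.g.\ to $\mathbb{R}^{2,1}$) one has $\mathcal{G}(\mathbb{C}^2) \cong \mathcal{G}^+(\mathbb{R}^{2,1}) \otimes \mathbb{C} \cong \mathcal{G}(\mathbb{R}^{2,0}) \otimes \mathbb{C} \cong \mathbb{R}^{2 \times 2} \otimes_\mathbb{R} \mathbb{C} \cong \mathbb{C}^{2 \times 2}$, using Exercise \ref{exc_twodim_isomorphism}. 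Alternatively, one can construct the isomorphism explicitly via Pauli-type matrices: send an orthonormal basis $\{e_1, e_2\}$ with $e_i^2 = 1$ to $\sigma_1 = \bigl[\begin{smallmatrix}0 & 1 \\ 1 & 0\end{smallmatrix}\bigr]$ and $\sigma_3 = \bigl[\begin{smallmatrix}1 & 0 \\ 0 & -1\end{smallmatrix}\bigr]$, check the anticommutation/square relations, extend via Proposition \ref{prop_universality} to an algebra homomorphism $\mathcal{G}(\mathbb{C}^2) \to \mathbb{C}^{2 \times 2}$, and conclude it is an isomorphism by a dimension count ($\dim_\mathbb{C} = 4$ on both sides) plus surjectivity on generators.

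No step is genuinely hard here, so the ``main obstacle'' is really just the bookkeeping of applying Proposition \ref{prop_iso_periodicity} inductively: once the base cases and $\mathcal{G}(\mathbb{C}^2) \cong \mathbb{C}^{2 \times 2}$ are in hand, induction on $n$ gives $\mathcal{G}(\mathbb{C}^{2k}) \cong \mathbb{C}^{2^k \times 2^k}$ and $\mathcal{G}(\mathbb{C}^{2k+1}) \cong \mathbb{C}^{2^k \times 2^k} \oplus \mathbb{C}^{2^k \times 2^k}$, and the periodicity in the stated form is precisely the recursive step. The only subtlety worth flagging in the write-up is that all tensor products in the complex classification must be taken over $\mathbb{C}$ (not $\mathbb{R}$), which is why the cleaner formulation separates it from Theorem \ref{thm_classification_real}.
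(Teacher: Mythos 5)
Your overall structure matches the paper's: establish the base cases $\mathcal{G}(\mathbb{C}^0) \cong \mathbb{C}$, $\mathcal{G}(\mathbb{C}^1) \cong \mathbb{C} \oplus \mathbb{C}$, identify $\mathcal{G}(\mathbb{C}^2) \cong \mathbb{C}^{2\times 2}$, and then invoke Proposition \ref{prop_iso_periodicity} to convert $\mathcal{G}(\mathbb{C}^{n+2}) \cong \mathcal{G}(\mathbb{C}^n) \otimes_\mathbb{C} \mathcal{G}(\mathbb{C}^2)$ into the stated form. The paper handles all three base cases uniformly via the single complexification isomorphism $\mathcal{G}(\mathbb{C}^n) \cong \mathcal{G}(\mathbb{R}^n) \otimes \mathbb{C}$ and the already-established real classification ($\mathcal{G}(\mathbb{R}^1) \cong \mathbb{R}\oplus\mathbb{R}$, $\mathcal{G}(\mathbb{R}^2) \cong \mathbb{R}^{2\times 2}$), which is the tidier route; your idempotent construction for $n=1$ and Pauli-matrix construction for $n=2$ are correct variants that prove the same facts from scratch, at the cost of not reusing the real table.

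There is one genuine slip in your first route to $\mathcal{G}(\mathbb{C}^2) \cong \mathbb{C}^{2\times 2}$: you apply Proposition \ref{prop_iso_real_complex} to $\mathbb{R}^{2,1}$, but by formula \eqref{pseudoscalar_squared} the pseudoscalar there satisfies $I^2 = (-1)^{\frac{1}{2}\cdot 3\cdot 2 + 1} = (-1)^4 = +1$, so the hypothesis $I^2 = -1$ fails and the proposition does not apply as stated. You should instead take a signature with $s+t=3$ and $I^2=-1$, i.e.\ $\mathbb{R}^{3,0}$ (giving $\mathcal{G}^+(\mathbb{R}^{3,0}) \cong \mathcal{G}(\mathbb{R}^{0,2}) \cong \mathbb{H}$, hence $\mathcal{G}(\mathbb{C}^2) \cong \mathbb{H}\otimes_\mathbb{R}\mathbb{C} \cong \mathbb{C}^{2\times 2}$) or $\mathbb{R}^{1,2}$ (giving $\mathcal{G}^+(\mathbb{R}^{1,2}) \cong \mathcal{G}(\mathbb{R}^{1,1}) \cong \mathbb{R}^{2\times 2}$). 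Alternatively, skip Proposition \ref{prop_iso_real_complex} entirely and use the plain complexification $\mathcal{G}(\mathbb{C}^2) \cong \mathcal{G}(\mathbb{R}^2)\otimes_\mathbb{R}\mathbb{C}$, as the paper does. Your explicit Pauli-matrix construction is unaffected by this and is a perfectly valid fallback, so the overall proof goes through once the reference is corrected.
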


	\noindent
	In other words,
		\begin{displaymath}
		\setlength\arraycolsep{2pt}
		\begin{array}{rcl}
			\mathcal{G}(\mathbb{C}^{2k}) &\cong& \mathbb{C}^{2^k \times 2^k}, \\[5pt]
			\mathcal{G}(\mathbb{C}^{2k+1}) &\cong& \mathbb{C}^{2^k \times 2^k} \oplus \mathbb{C}^{2^k \times 2^k},
		\end{array}
		\end{displaymath}
	for $k=0,1,2,\ldots$

	\begin{proof}
		The isomorphism $\mathcal{G}(\mathbb{C}^n) \cong \mathcal{G}(\mathbb{R}^n) \otimes \mathbb{C}$ gives
		\begin{displaymath}
		\setlength\arraycolsep{1pt}
		\begin{array}{l}
			\mathcal{G}(\mathbb{C}^0) \cong \mathbb{C} \\[3pt]
			\mathcal{G}(\mathbb{C}^1) \cong (\mathbb{R} \oplus \mathbb{R}) \otimes \mathbb{C} \cong \mathbb{C} \oplus \mathbb{C} \\[3pt]
			\mathcal{G}(\mathbb{C}^2) \cong \mathbb{R}^{2 \times 2} \otimes \mathbb{C} \cong \mathbb{C}^{2 \times 2}.
		\end{array}
		\end{displaymath}
		Then use Proposition \ref{prop_iso_periodicity} for periodicity.
	\end{proof}

	\begin{exmp} \label{exmp_pauli_algebra}
		A concrete representation of the space algebra $\mathcal{G}(\mathbb{R}^3)$
		as a matrix algebra is obtained by considering the so-called
		\emph{Pauli matrices}
		\begin{equation} \label{pauli_matrices}
			\sigma_1 = \left[ \begin{array}{cc}
			0 & 1  \\[3pt] 1 & 0
			\end{array} \right], \quad
			\sigma_2 = \left[ \begin{array}{cc}
			0 & -i  \\[3pt] i & 0
			\end{array} \right], \quad
			\sigma_3 = \left[ \begin{array}{cc}
			1 & 0  \\[3pt] 0 & -1
			\end{array} \right].
		\end{equation}
		These satisfy 
		$\sigma_1^2 = \sigma_2^2 = \sigma_3^2 = 1$,
		$\sigma_j\sigma_k = -\sigma_k\sigma_j$, $j \neq k$,
		and $\sigma_1\sigma_2\sigma_3 = i$,
		and hence identifies the space algebra with the
		\emph{Pauli algebra} $\mathbb{C}^{2 \times 2}$ through the isomorphism
		$\rho: \mathcal{G}(\mathbb{R}^3) \to \mathbb{C}^{2 \times 2}$,
		defined by $\rho(e_k) := \sigma_k$.
		An arbitrary element 
		$x = \alpha + \boldsymbol{a} + \boldsymbol{b}I + \beta I \in \mathcal{G}$ 
		is then represented as the matrix
		\begin{equation} \label{pauli_element}
			\rho(x) = \left[ \begin{array}{cc}
			\alpha + a_3 + (\beta + b_3)i	& a_1 + b_2 + (b_1 - a_2)i  \\[3pt]
			a_1 - b_2 + (b_1 + a_2)i		& \alpha - a_3 + (\beta - b_3)i
			\end{array} \right].
		\end{equation}
	\end{exmp}
	
	The periodicity of geometric algebras actually has a number of
	far-reaching consequences. One example is \emph{Bott periodicity}, which simply
	put gives a periodicity in the homotopy groups $\pi_k$ of the unitary, orthogonal and symplectic groups. 
	An example is the following

	\begin{thm} \label{thm_bott_so}
		For $n \ge k+2$ and all $k \ge 1$ we have
		$$
			\pi_k(\Ogrp(n)) \cong \pi_k(\SO(n)) \cong \left\{
				\begin{array}{ll}
					\{0\} 			& \textrm{if}\ k \equiv 2,4,5,6 \pmod{8} \\
					\mathbb{Z}_2 	& \textrm{if}\ k \equiv 0,1 \pmod{8} \\
					\mathbb{Z} 		& \textrm{if}\ k \equiv 3,7 \pmod{8} \\
				\end{array}
			\right.
		$$
	\end{thm}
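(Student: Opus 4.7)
The plan is to derive this from the $8$-fold periodicity of real Clifford algebras established in Theorem \ref{thm_classification_real}, via the Atiyah--Bott--Shapiro construction relating graded Clifford modules to the stable orthogonal group.

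First I would reduce to the stable orthogonal group. Since $\SO(n)$ is the connected component of the identity in $\Ogrp(n)$, the inclusion induces an isomorphism $\pi_k(\Ogrp(n)) \cong \pi_k(\SO(n))$ for every $k \ge 1$. Next, the stability hypothesis $n \ge k+2$ combined with the fibration $\SO(n) \hookrightarrow \SO(n+1) \to S^n$ and its long exact homotopy sequence (using $\pi_{k}(S^n) = 0$ for $k < n$) shows that the maps $\pi_k(\SO(n)) \to \pi_k(\SO(n+1))$ are isomorphisms in this range, so all these groups coincide with $\pi_k(\SO) := \pi_k(\varinjlim_n \SO(n))$.

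Next I would invoke the Atiyah--Bott--Shapiro isomorphism. Let $\mathfrak{M}_k$ denote the Grothendieck group of real $\mathbb{Z}_2$-graded modules of $\cl(\mathbb{R}^{0,k})$ and write $i_k^*\!: \mathfrak{M}_{k+1} \to \mathfrak{M}_k$ for restriction along the natural inclusion. Then
$$
\pi_{k-1}(\SO) \ \cong \ \mathfrak{M}_k \big/ i_k^*\mathfrak{M}_{k+1}.
$$
Geometrically, a graded $\cl(\mathbb{R}^{0,k})$-module structure on a Euclidean space produces a sphere of skew-adjoint orthogonal symmetries, whose associated loop in $\SO$ represents the stable homotopy class attached to the module; modding out by the image of $i_k^*$ accounts for classes that are trivialised by extending to one more generator.

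Finally I would extract the table of values using Theorem \ref{thm_classification_real}. Reading off $\cl(\mathbb{R}^{0,k})$ for $k = 1,\ldots,8$ and writing down the simple graded modules of each (keeping track of whether they are of real, complex, or quaternionic type), one computes $\mathfrak{M}_k/i_k^*\mathfrak{M}_{k+1}$ and obtains the cyclic sequence $\mathbb{Z}_2, \mathbb{Z}_2, 0, \mathbb{Z}, 0, 0, 0, \mathbb{Z}$. The $8$-fold periodicity of the answer is immediate from Proposition \ref{prop_iso_periodicity}: the isomorphism $\cl(\mathbb{R}^{0,k+8}) \cong \cl(\mathbb{R}^{0,k}) \otimes \cl(\mathbb{R}^{0,8})$, together with $\cl(\mathbb{R}^{0,8}) \cong \mathbb{R}^{16 \times 16}$, is a Morita equivalence, hence induces an isomorphism $\mathfrak{M}_{k+8} \cong \mathfrak{M}_k$ compatible with the restriction maps. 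Shifting indices by one (to pass from $\pi_{k-1}$ to $\pi_k$) then produces exactly the stated pattern.

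The main obstacle is that the Atiyah--Bott--Shapiro identification is itself a deep theorem belonging to real $KO$-theory and lies outside the combinatorial and algebraic framework developed so far in these notes; a self-contained treatment would require building enough of the $KO$-spectrum to identify $\pi_{k-1}(\SO)$ with $KO^{-k}(\textrm{pt})$. In this exposition I would therefore cite the ABS result and concentrate the actual work on the bookkeeping of simple graded modules using Table \ref{table_classification_real}, which is where the Clifford-algebraic content of the paper does the real computation.
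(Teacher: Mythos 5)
The paper itself does not prove this theorem: it simply states it and refers to Lawson--Michelsohn \cite{lawson_michelsohn} for a K-theoretic proof, which is precisely the Atiyah--Bott--Shapiro route you describe. Your outline is a correct and honest sketch of that argument, and you rightly flag that the ABS identification $\pi_{k-1}(\SO) \cong \mathfrak{M}_k/i_k^*\mathfrak{M}_{k+1}$ is the one genuinely deep external ingredient — so in effect your proposal does exactly what the paper does, namely delegate that identification to the literature and reduce the remaining content to the classification of $\cl(\mathbb{R}^{0,k})$ and its graded modules already available from Table \ref{table_classification_real}.

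One point worth making explicit if you were to flesh this out: the ABS groups $\mathfrak{M}_k$ are Grothendieck groups of \emph{$\mathbb{Z}_2$-graded} modules, whereas Table \ref{table_representations} in the notes tabulates \emph{ungraded} irreducible modules. There is a standard dictionary — a graded $\cl(\mathbb{R}^{0,k})$-module is the same thing as an ungraded $\cl(\mathbb{R}^{0,k+1})$-module, so $\mathfrak{M}_k$ can be read off from the classification one dimension up — but without stating this translation the bookkeeping step is not directly supported by the tables as printed. Also, for the sign convention: in the paper's convention the ABS algebra $C_k$ is indeed $\cl(\mathbb{R}^{0,k})$ (generators squaring to $-1$), so your choice is consistent, but it is worth noting that much of the literature labels this $\cl_k$ or $C_k$ with the opposite sign convention, and the identification must be made with care. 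Neither issue is a gap in the reasoning, only in the level of detail.
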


	\noindent
	See \cite{lawson_michelsohn} for a proof using K-theory,
	or \cite{nakahara} and references therein for more examples.
	
	\begin{exc}
		Complete the proof of the second statement of Proposition \ref{prop_iso_even}.
	\end{exc}

	\begin{exc}
		Prove the second expression of Proposition \ref{prop_iso_tensor}.
	\end{exc}

	\begin{exc}
		Complete the proof of the second isomorphism in Proposition \ref{prop_iso_real_complex}.
	\end{exc}

	\begin{exc}
		Complete the proof of Proposition \ref{prop_iso_periodicity}.
	\end{exc}

	\begin{exc} \label{exc_iso1}
		Prove that $\mathbb{C} \oplus \mathbb{C} \cong \mathbb{C} \otimes_\mathbb{R} \mathbb{C}$.\\
		\emph{Hint:} Consider the map
		\begin{eqnarray*}
			(1,0) &\mapsto& \frac{1}{2}(1 \otimes 1 + i \otimes i), \\
			(0,1) &\mapsto& \frac{1}{2}(1 \otimes 1 - i \otimes i).
		\end{eqnarray*}
	\end{exc}

	\begin{exc} \label{exc_iso2}
		Prove that $\mathbb{C} \otimes_\mathbb{R} \mathbb{H} \cong \mathbb{C}^{2 \times 2}$.\\
		\emph{Hint:} 
		Consider $\mathbb{H}$ as a $\mathbb{C}$-module under left scalar
		multiplication, and define an $\mathbb{R}$-bilinear map
		$\Phi\!: \mathbb{C} \times \mathbb{H} \to \textup{Hom}_\mathbb{C}(\mathbb{H},\mathbb{H})$
		by setting $\Phi_{z,q}(x) := z x \overline{q}$.
		This extends (by the universal property of the tensor product; 
		see Theorem \ref{thm_tensor_universality})
		to an $\mathbb{R}$-linear map 
		$\Phi\!: \mathbb{C} \otimes_{\mathbb{R}} \mathbb{H} \to \textup{Hom}_\mathbb{C}(\mathbb{H},\mathbb{H}) \cong \mathbb{C}^{2 \times 2}$.
	\end{exc}

	\begin{exc} \label{exc_iso3}
		Prove that $\mathbb{H} \otimes_\mathbb{R} \mathbb{H} \cong \mathbb{R}^{4 \times 4}$.\\
		\emph{Hint:}
		Consider the $\mathbb{R}$-bilinear map 
		$\Psi\!: \mathbb{H} \times \mathbb{H} \to \textup{Hom}_\mathbb{R}(\mathbb{H},\mathbb{H})$
		given by setting $\Psi_{q_1,q_2}(x) := q_1 x \overline{q}_2$.
	\end{exc}

\subsection{Graded tensor products and the mother algebra}

	Let us also consider a different characterization of geometric algebras.
	
	Let $\mathcal{G}(V_1,q_1)$ and $\mathcal{G}(V_2,q_2)$ be two
	geometric algebras, and form their tensor product
	(considered not as an algebra, but as an $\mathbb{F}$-module)
	$$
		T = \mathcal{G}(V_1,q_1) \otimes \mathcal{G}(V_2,q_2).
	$$
	We shall now introduce a new multiplication on $T$ through
	$$
		\begin{array}{ccl}
		T \times T &\to& T \\
		(x \otimes y\ ,\ x' \otimes y') &\mapsto& (-1)^{\delta(y)\delta(x')} (xx') \otimes (yy')
		\end{array}
	$$
	for elements such that
	\begin{eqnarray*}
		x,x' &\in& \mathcal{G}^+(V_1,q_1) \cup \mathcal{G}^-(V_1,q_1), \\
		y,y' &\in& \mathcal{G}^+(V_2,q_2) \cup \mathcal{G}^-(V_1,q_1),
	\end{eqnarray*}
	and where $\delta(z) = 0$ if $z \in \mathcal{G}^+$ 
	and $\delta(z) = 1$ if $z \in \mathcal{G}^-$.
	This multiplication map is then extended to 
	the other elements through bilinearity.

	We immediately note that
	\begin{eqnarray*}
		(v_1 \otimes 1 + 1 \otimes v_2)^2 
		&=& (v_1 \otimes 1)(v_1 \otimes 1) + (v_1 \otimes 1)(1 \otimes v_2) \\
		&& +\ (1 \otimes v_2)(v_1 \otimes 1) + (1 \otimes v_2)(1 \otimes v_2) \\
		&=& v_1^2 \otimes 1 + v_1 \otimes v_2 - v_1 \otimes v_2 + 1 \otimes v_2^2 \\
		&=& (v_1^2 + v_2^2) 1 \otimes 1,
	\end{eqnarray*}
	so that, if we as usual identify $1 \otimes 1$ with $1 \in \mathbb{F}$,
	we obtain
	$$
		(v_1 \otimes 1 + 1 \otimes v_2)^2 = v_1^2 + v_2^2 = q(v_1) + q(v_2).
	$$
	Hence, if we introduce the vector space
	$$
		V = \{ v_1 \otimes 1 + 1 \otimes v_2 : v_1 \in V_1, v_2 \in V_2 \} \cong V_1 \oplus V_2,
	$$
	and the quadratic form $q: V \to \mathbb{F}$,
	$$
		q(v_1 \otimes 1 + 1 \otimes v_2) := q_1(v_1) + q_2(v_2),
	$$
	then we find that $\mathcal{G}(V,q)$ becomes a Clifford
	algebra which is isomorphic to $T = \mathcal{G}(V_1,q_1) \hat{\otimes} \mathcal{G}(V_2,q_2)$,
	where the symbol $\hat{\otimes}$ signals that we have
	defined a special so-called $\mathbb{Z}_2$-graded multiplication
	on the usual tensor product space.
	
	As a consequence, we have the following
	
	\begin{prop} \label{prop_iso_graded_tensor}
		For all $s,t,p,q \geq 0$, there is a graded tensor algebra isomorphism
		\begin{displaymath}
			\mathcal{G}(\mathbb{R}^{s+p,t+q}) \cong \mathcal{G}(\mathbb{R}^{s,t})\ \hat{\otimes}\ \mathcal{G}(\mathbb{R}^{p,q}).
		\end{displaymath}
	\end{prop}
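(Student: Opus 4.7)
The plan is to apply the universality property (Proposition \ref{prop_universality}) to lift the natural inclusion of $\mathbb{R}^{s+p,t+q}$ into the graded tensor product, and then to conclude by a dimension count.

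First I would fix orthonormal bases $\{e_1,\ldots,e_{s+t}\}$ of $\mathbb{R}^{s,t}$ and $\{f_1,\ldots,f_{p+q}\}$ of $\mathbb{R}^{p,q}$, and identify $\mathbb{R}^{s+p,t+q}$ with $V_1 \oplus V_2$ carrying the quadratic form $q(v_1,v_2) = q_1(v_1) + q_2(v_2)$; one may, for instance, place the $s+p$ positive basis vectors first and the $t+q$ negative ones last, and reshuffle later. Then I would define
$$
f\!: V_1 \oplus V_2 \longrightarrow \mathcal{G}(V_1,q_1)\,\hat{\otimes}\,\mathcal{G}(V_2,q_2), \qquad f(v_1,v_2) := v_1 \otimes 1 + 1 \otimes v_2,
$$
which is $\mathbb{R}$-linear by construction. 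The key computation, already carried out in the discussion preceding the proposition, is that the graded sign rule gives $(v_1 \otimes 1)(1 \otimes v_2) + (1 \otimes v_2)(v_1 \otimes 1) = v_1 \otimes v_2 - v_1 \otimes v_2 = 0$ (because both factors are odd), so that $f(v_1,v_2)^2 = q_1(v_1) + q_2(v_2) = q(v_1,v_2)$ in the graded tensor product.

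By Proposition \ref{prop_universality}, $f$ extends uniquely to an $\mathbb{R}$-algebra homomorphism
$$
F\!: \mathcal{G}(\mathbb{R}^{s+p,t+q}) \longrightarrow \mathcal{G}(\mathbb{R}^{s,t})\,\hat{\otimes}\,\mathcal{G}(\mathbb{R}^{p,q}).
$$
The image of $F$ contains all the generators $e_i \otimes 1 = F(e_i)$ and $1 \otimes f_j = F(f_j)$ of the graded tensor product, hence $F$ is surjective. Since both algebras have real dimension $2^{s+t+p+q}$, surjectivity forces $F$ to be an isomorphism.

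The main obstacle I expect is a bookkeeping one rather than a conceptual one: one must verify that the graded multiplication on the tensor product is well-defined, associative, and bilinear after extending from pure tensors of homogeneous parity to arbitrary elements (every $x \in \mathcal{G}$ decomposes uniquely as $x = x^+ + x^-$ with $x^\pm \in \mathcal{G}^\pm$, so the sign rule $(-1)^{\delta(y)\delta(x')}$ extends by bilinearity). Once this is in place, the verification that $F$ intertwines the two products reduces to checking that $F(e_i)$ commutes with $F(f_j)$ up to the correct sign, which is automatic from the graded rule. To handle the reshuffling of the basis of $\mathbb{R}^{s+p,t+q}$ (so that signature is presented as $(s+p,t+q)$ rather than the interlaced signature inherited from the direct sum), I would invoke the fact that any two orthonormal bases of the same signature are related by an element of $\Ogrp(s+p,t+q)$, whose induced outermorphism is an algebra automorphism of $\mathcal{G}(\mathbb{R}^{s+p,t+q})$.
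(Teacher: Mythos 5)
Your argument is correct and follows the same route as the paper: the paper's proposition is stated as an immediate consequence of the preceding discussion, which defines the graded multiplication on the tensor product, verifies $(v_1 \otimes 1 + 1 \otimes v_2)^2 = q_1(v_1) + q_2(v_2)$, and identifies $V_1 \oplus V_2$ (with the orthogonal sum quadratic form) sitting inside $T$, exactly as you do. Your write-up simply makes explicit the appeal to Proposition \ref{prop_universality}, the surjectivity on generators, and the dimension count; the only prerequisite you correctly flag (associativity and well-definedness of $\hat{\otimes}$) is left as an exercise in the paper as well.
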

	
	\begin{cor}
		It follows immediately that
		\begin{displaymath}
			\mathcal{G}(\mathbb{R}^{s,t}) \cong 
				\underbrace{\mathcal{G}(\mathbb{R}^{1,0})\ \hat{\otimes}\ \ldots\ \hat{\otimes}\ \mathcal{G}(\mathbb{R}^{1,0})}_{s\ \textrm{factors}}\ \hat{\otimes}\ 
				\underbrace{\mathcal{G}(\mathbb{R}^{0,1})\ \hat{\otimes}\ \ldots\ \hat{\otimes}\ \mathcal{G}(\mathbb{R}^{0,1})}_{t\ \textrm{factors}}.
		\end{displaymath}
	\end{cor}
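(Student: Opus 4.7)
The plan is to iterate Proposition \ref{prop_iso_graded_tensor} by peeling off one-dimensional factors one at a time, induction on $n = s+t$. For the base case $n=0$, both sides equal $\mathbb{R} = \mathcal{G}(\mathbb{R}^{0,0})$ by the convention stated earlier (empty graded tensor product is the ground field).

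For the inductive step with $n \ge 1$, I split into two cases. If $s \ge 1$, apply Proposition \ref{prop_iso_graded_tensor} with $(s,t,p,q)$ replaced by $(s-1,t,1,0)$ to obtain
\[
    \mathcal{G}(\mathbb{R}^{s,t}) \;\cong\; \mathcal{G}(\mathbb{R}^{s-1,t})\ \hat{\otimes}\ \mathcal{G}(\mathbb{R}^{1,0}),
\]
and then invoke the inductive hypothesis on the first factor. If $s = 0$ and $t \ge 1$, apply the proposition with $(p,q) = (0,1)$ in the analogous way to peel off a $\mathcal{G}(\mathbb{R}^{0,1})$ factor on the right. Combining these nested isomorphisms yields the claimed product of $s$ copies of $\mathcal{G}(\mathbb{R}^{1,0})$ followed by $t$ copies of $\mathcal{G}(\mathbb{R}^{0,1})$.

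The main obstacle is to ensure that the graded tensor product $\hat{\otimes}$ is associative, so that the iterated expression on the right-hand side is unambiguous and the nested isomorphisms actually compose. To check this, I would verify on a basis of pure $\mathbb{Z}_2$-homogeneous simple tensors $x \otimes y \otimes z$ that
\[
    (x \hat{\otimes} y) \hat{\otimes} z \;=\; (-1)^{\delta(y)\delta(x') + \delta(z)\delta((xy)')} \cdots
\]
reduces to the same sign as $x \hat{\otimes} (y \hat{\otimes} z)$; this follows because $\delta$ is additive modulo $2$ on the $\mathbb{Z}_2$-grading and so the accumulated sign depends only on the total parities of the factors being swapped, which is symmetric in the bracketing. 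A cleaner way to see the same thing is to note that both iterated graded tensor products are realized, via the isomorphism of Proposition \ref{prop_iso_graded_tensor} extended inductively, as the single geometric algebra of the orthogonal direct sum of the underlying quadratic spaces, which is manifestly associative.

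Finally, a small bookkeeping point: the grouping ``all $\mathcal{G}(\mathbb{R}^{1,0})$ factors first, then all $\mathcal{G}(\mathbb{R}^{0,1})$ factors'' in the stated corollary is just one convenient order; associativity of $\hat{\otimes}$ plus the identification with $\mathcal{G}(\mathbb{R}^{s,t})$ coming from Proposition \ref{prop_iso_graded_tensor} shows that any order of the $s+t$ one-dimensional factors gives an isomorphic algebra, so no further choice is needed beyond fixing the order to match the statement.
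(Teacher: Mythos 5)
Your proof is correct and is essentially the route the paper intends by ``it follows immediately'': iterate Proposition~\ref{prop_iso_graded_tensor} by peeling off one-dimensional factors. Two small remarks on your bookkeeping, though. First, the way you peel (taking $(s,t,p,q) \to (s-1,t,1,0)$ and analogously $(0,t-1,0,1)$) puts the stripped-off factor on the \emph{right}; unwinding the induction you then arrive at $t$ copies of $\mathcal{G}(\mathbb{R}^{0,1})$ followed by $s$ copies of $\mathcal{G}(\mathbb{R}^{1,0})$, which is the reverse of the order stated in the corollary. You notice this and repair it with the observation that any ordering of the $s+t$ factors is, by the proposition itself, isomorphic to $\mathcal{G}(\mathbb{R}^{s,t})$ and hence to any other ordering --- which is perfectly valid. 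But you could avoid the detour entirely by peeling from the \emph{left}: apply the proposition with $(s,t,p,q) = (1,0,s-1,t)$ when $s \ge 1$ to get $\mathcal{G}(\mathbb{R}^{s,t}) \cong \mathcal{G}(\mathbb{R}^{1,0})\ \hat{\otimes}\ \mathcal{G}(\mathbb{R}^{s-1,t})$, and with $(0,1,0,t-1)$ when $s = 0,\ t \ge 1$; the induction then produces the factors in exactly the stated order. Second, your associativity discussion, while a legitimate concern, is rendered redundant by the ``cleaner way'' you yourself point out: once one knows that every bracketing of the iterated $\hat{\otimes}$-product is identified with the single algebra $\mathcal{G}(\mathbb{R}^{s,t})$ via iterated applications of the proposition, the well-definedness (up to isomorphism) of the unbracketed expression, and the independence of ordering, both fall out at once. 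That one-sentence argument is all the corollary really needs.
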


	When working with a geometric algebra over a mixed-signature space,
	where the quadratic form is neither positive nor negative definite,
	it can often be a good idea to embed the algebra in a larger one.

	Consider $\mathcal{G}(\mathbb{R}^{s,t,u})$ generated by an orthonormal basis 
	$$
		\{ e_1^+,\ldots,e_s^+, e_1^-,\ldots,e_t^-, e_1^0,\ldots,e_u^0 \}
	$$
	with $(e_i^+)^2 = 1$, $(e_j^-)^2 = -1$, and $(e_k^0)^2 = 0$.
	Introduce the \emph{mother algebra} 
	$\mathcal{G}(\mathbb{R}^{n,n})$, with $n = s+t+u$,
	and an orthonormal basis
	$$
		\{ f_1^+,\ldots,f_n^+, f_1^-,\ldots,f_n^- \}
	$$
	with $(f_i^+)^2 = 1$ and $(f_j^-)^2 = -1$.
	
	We now define 
	$\Phi: \mathcal{G}(\mathbb{R}^{s,t,u}) \to \mathcal{G}(\mathbb{R}^{n,n})$
	on $\mathbb{R}^{s,t,u}$ by setting
	$$
		\Phi(e_i^+) := f_i^+, \quad \Phi(e_j^-) := f_j^-, \quad \Phi(e_k^0) := f_{s+t+k}^+ - f_{s+t+k}^-, 
	$$
	and extending linearly. Note that
	$$
		\Phi(e_i^+)^2 = (e_i^+)^2, \quad \Phi(e_j^-)^2 = (e_j^-)^2, \quad \Phi(e_k^0)^2 = (e_k^0)^2.
	$$
	By universality, $\Phi$ extends to all of $\mathcal{G}(\mathbb{R}^{s,t,u})$
	and since $\Phi$ is injective $\mathbb{R}^{s,t,u} \to \mathbb{R}^{n,n}$,
	it follows that $\Phi$ is an injective homomorphism of geometric algebras.

	Similarly, it also follows that every finite-dimensional geometric
	algebra $\mathcal{G}(\mathbb{R}^{s,t,u})$ is embedded in the
	\emph{infinite-dimensional mother algebra} $\cl_{(\mathcal{F})}(\mathbb{Z},\mathbb{R},r)$,
	where $r(k) := (k \ge 0) - (k < 0)$, $k \in \mathbb{Z}$.

	\begin{exc}
		Verify that the graded multiplication on 
		$T$ introduced above is associative.
	\end{exc}

	\newpage

\section{Groups} \label{sec_groups}

	One of the reasons that geometric algebras appear naturally
	in many areas of mathematics and physics is the fact that they contain
	a number of important groups. These are groups under the geometric product
	and thus lie embedded within the multiplicative group of invertible elements in $\mathcal{G}$.
	In this section we will discuss the properties of various embedded groups and their relation
	to other familiar transformation groups such as the orthogonal and Lorentz groups.
	The introduced notion of a rotor will be seen to be essential for the
	description of the orthogonal groups, and will later play a fundamental
	role for understanding the concept of spinors.

	Throughout this section we will always assume that our scalars are real numbers
	unless otherwise stated. This is reasonable both from a geometric
	viewpoint and from the fact that e.g. many common complex groups can be represented
	by groups embedded in real geometric algebras.
	Furthermore, we assume that $\mathcal{G}$ is nondegenerate so that we are
	working with a vector space of type $\mathbb{R}^{s,t}$. The corresponding
	groups associated to this space will be denoted $\SO(s,t)$ etc.

\subsection{Groups in $\mathcal{G}$ and their actions on $\mathcal{G}$}

	An obvious group contained in $\mathcal{G}$ is of course
	the group of all invertible elements of $\mathcal{G}$, 
	which we denote by $\mathcal{G}^\times$.
	All other groups we will discuss in this section are actually
	subgroups of $\mathcal{G}^\times$, or directly related to such subgroups.
	Let us introduce the subgroups we will be discussing 
	in the following definition.

	\begin{defn} \label{def_groups}
		We identify the following groups embedded in $\mathcal{G}$:
		\begin{displaymath}
		\setlength\arraycolsep{1.5pt}
		\begin{array}{lcll}
			\mathcal{G}^\times &:=& \{ x \in \mathcal{G} : \exists y \in \mathcal{G} : xy = yx = 1 \} \ \ 
				& \textrm{\emph{the group of all invertible elements}} \\[5pt]
			\tilde{\Gamma} &:=& \{ x \in \mathcal{G}^\times : x^\star V x^{-1} \subseteq V \}
				& \textrm{\emph{the Lipschitz group}} \\[5pt]
			\Gamma &:=& \{ v_1 v_2 \ldots v_k \in \mathcal{G} : v_i \in V^\times \}
				& \textrm{\emph{the versor group}} \\[5pt]
			\textrm{Pin} &:=& \{ x \in \Gamma : x x^\dagger = \pm 1 \}
				& \textrm{\emph{the group of unit versors}} \\[5pt]
			\textrm{Spin} &:=& \textrm{Pin} \cap \mathcal{G}^+
				& \textrm{\emph{the group of even unit versors}} \\[5pt]
			\textrm{Spin}^+ &:=& \{ x \in \textrm{Spin} : x x^\dagger = 1 \}
				& \textrm{\emph{the rotor group}}
		\end{array}
		\end{displaymath}
		where $V^\times := \{ v \in V : v^2 \neq 0 \}$ is the set of invertible vectors.
	\end{defn}
	
	\noindent
	One of the central, and highly non-trivial, results of this section 
	is that the versor group $\Gamma$ and Lipschitz group $\tilde{\Gamma}$ actually are equal.
	Therefore, $\Gamma$ is also called the Lipschitz group in honor of its creator.
	Sometimes it is also given the name \emph{Clifford group}, but we will, in accordance
	with other conventions, use that name to denote the finite group generated
	by an orthonormal basis under Clifford multiplication.
	
	The versor group $\Gamma$ is the smallest group which contains $V^\times$. 
	Its elements are finite products of invertible vectors and are called \emph{versors}.
	As seen from Definition \ref{def_groups}, 
	the Pin, Spin, and rotor groups are all subgroups of this group.
	These subgroups are generated by unit vectors, and in the case
	of Spin and Spin$^+$, only an \emph{even} number of such vector factors can be present.
	The elements of Spin$^+$ are called \emph{rotors} and, as we will see, these
	groups are intimately connected to orthogonal groups and rotations.

	\begin{exmp} \label{exmp_plane_rotation}
		Consider a rotation by an angle $\varphi$ in a plane. The relevant algebra is 
		the plane algebra $\mathcal{G}(\mathbb{R}^2)$, and we have seen that
		the rotation can be
		represented as the map $\mathbb{R}^2 \to \mathbb{R}^2$,
		\begin{equation} \label{plane_rotation_revisited}
			\boldsymbol{v} \mapsto \boldsymbol{v} e^{\varphi I} = e^{-\frac{\varphi}{2}I} \boldsymbol{v} e^{\frac{\varphi}{2} I}.
		\end{equation}
		The invertible element 
		$R_\varphi := e^{\frac{\varphi}{2}I} = \cos \frac{\varphi}{2} + \sin \frac{\varphi}{2} I$ 
		is clearly in the even subalgebra, so $R_\varphi^* = R_\varphi$ 
		and we also see from \eqref{plane_rotation_revisited} that
		$R_\varphi \mathbb{R}^2 R_\varphi^{-1} \subseteq \mathbb{R}^2$, hence
		$R_\varphi \in \tilde{\Gamma}$. 
		The set of elements of this form,
		i.e. the even unit multivectors
		$\{ e^{\varphi I} : \varphi \in \mathbb{R} \} \cong \textup{U}(1)$,
		is obviously the group of unit complex numbers.
		Furthermore, we can always write e.g.
		$$
			R_\varphi = e_1\left( \cos \frac{\varphi}{2}\thinspace e_1 + \sin \frac{\varphi}{2}\thinspace e_2 \right),
		$$
		i.e. as a product of two unit vectors, and
		we find that this is the group of rotors in two
		dimensions; $\Spin(2) = \Spin^+(2) \cong \textup{U}(1)$.
		However, note that because of the factor $\frac{1}{2}$ in $R_\varphi$,
		the map from the rotors to the actual
		rotations is a two-to-one map $\textup{U}(1) \to \SO(2) \cong \textup{U}(1)$.
	\end{exmp}
	
	In order to understand how groups embedded in a geometric algebra are related to
	more familiar groups of linear transformations, it is necessary 
	to study how groups in $\mathcal{G}$ can act on the 
	vector space $\mathcal{G}$ itself and on the embedded underlying vector space $V$.
	The following are natural candidates for such actions.
	
	\begin{defn} \label{def_actions}
		Derived from the geometric product, we have the following canonical actions:
		\begin{displaymath}
		\setlength\arraycolsep{2pt}
		\begin{array}{llcll}
			L\!: & \mathcal{G} &\to& \textrm{End}\ \mathcal{G} 
				& \quad\textrm{\emph{left action}} \\[2pt]
				& x &\mapsto& L_x\!: y \mapsto xy \\[5pt]
			R\!: & \mathcal{G} &\to& \textrm{End}\ \mathcal{G} 
				& \quad\textrm{\emph{right action}} \\[2pt]
				& x &\mapsto& R_x\!: y \mapsto yx \\[5pt]
			\Ad\!: & \mathcal{G}^\times &\to& \textrm{End}\ \mathcal{G} 
				& \quad\textrm{\emph{adjoint action}} \\[2pt]
				& x &\mapsto& \Ad_x\!: y \mapsto xyx^{-1} \\[5pt]
			\tAd\!: & \mathcal{G}^\times &\to& \textrm{End}\ \mathcal{G} 
				& \quad\textrm{\emph{twisted adjoint action}} \\[2pt]
				& x &\mapsto& \tAd_x\!: y \mapsto x^\star yx^{-1}
		\end{array}
		\end{displaymath}
		where $\textrm{End}\ \mathcal{G}$ are the (vector space) 
		endomorphisms of $\mathcal{G}$.
	\end{defn}
	
	\noindent
	Note that the left and right actions $L$ and $R$ are algebra 
	homomorphisms while $\Ad$ and $\tAd$ are group homomorphisms (Exercise \ref{exc_actions}).
	These actions give rise to canonical representations of the groups embedded in $\mathcal{G}$.
	Although, using the expansion \eqref{blade_expansion} one can verify that $\textrm{Ad}_x$ is always
	an outermorphism while in general $\tAd_x$ is not,
	the twisted adjoint action takes the graded structure of $\mathcal{G}$ into account and will be
	seen to play a more important role than the normal adjoint action in geometric algebra.
	Note, however, that these actions agree on the subgroup of even invertible elements 
	$\mathcal{G}^\times \cap \mathcal{G}^+$.

	\begin{exmp}	
		As an application, let us recall that, 
		because the algebra $\mathcal{G}$ is assumed to
		be finite-dimen\-sion\-al, left inverses are always right inverses and \emph{vice versa}.
		This can be seen as follows. First note that the left and right actions are injective.
		Namely, assume that $L_x = 0$. Then $L_x(y) = 0\ \forall y$ and in particular $L_x(1) = x = 0$.
		Suppose now that $xy = 1$ for some $x,y \in \mathcal{G}$.
		But then $L_x L_y = \id$, so that $L_y$ is a right inverse to $L_x$. 
		Now, using the dimension theorem 
		\begin{displaymath}
			\dim \ker L_y + \dim \im L_y = \dim \mathcal{G}
		\end{displaymath}
		with $\ker L_y = 0$, we can conclude that $L_y$ is also a left inverse to $L_x$.
		Hence, $L_y L_x = \id$, so that $L_{yx-1} = 0$, and $yx = 1$.
	\end{exmp}

	\begin{exc} \label{exc_groups}
		Verify that the groups defined in Definition \ref{def_groups} really
		are groups under the geometric product.
	\end{exc}

	\begin{exc} \label{exc_actions}
		Verify that the actions $L$ and $R$ are $\mathbb{R}$-algebra homomorphisms, while
		$\Ad$ and $\tAd : \mathcal{G}^\times \to \textup{GL}(\mathcal{G})$ are group homomorphisms.
		Also, show that $\Ad_x$ is an outermorphism for fixed $x \in \mathcal{G}^\times$,
		while 
		$$
			\left( \tAd_x\big|_V \right)_\wedge(y) = \Ad_x(y^\star)
		$$
		(i.e. $\tAd_x\big|_\mathcal{G}$ is \emph{not} an outermorphism)
		for $x \in \mathcal{G}^\times \cap \mathcal{G}^-$ and $y \in \mathcal{G}$.
	\end{exc}

\subsection{The Cartan-Dieudonn\'e Theorem}
	
	Let us begin with studying the properties of the twisted adjoint action.
	For $v \in V^\times$ we obtain
	\begin{equation}
		\tAd_v (v) = v^\star v v^{-1} = -v,
	\end{equation}
	and if $w \in V$ is orthogonal to $v$,
	\begin{equation}
		\tAd_v (w) = v^\star w v^{-1} = -vwv^{-1} = w vv^{-1} = w.
	\end{equation}
	Hence, $\tAd_v$ acts on $V$ as a reflection along $v$,
	in other words in the hyperplane orthogonal to $v$.
	Such a transformation is a linear isometry, i.e. an element of
	the group of \emph{orthogonal transformations} of $(V,q)$,
	\begin{displaymath}
		\textrm{O}(V,q) := \{f\!: V \to V : f\ \textrm{linear bijection s.t.}\ q \circ f = q \}.
	\end{displaymath}
	Obviously, 
	$q(\tAd_v(u)) = \tAd_v(u)^2 = vuv^{-1}vuv^{-1} = u^2vv^{-1} = q(u)$ 
	for all $u \in V$.

	For a general versor $x = u_1 u_2 \ldots u_k \in \Gamma$ we have
	\begin{equation} \label{versor_action}
	\setlength\arraycolsep{2pt}
	\begin{array}{rcl}
		\tAd_x (v) 
			&=& (u_1 \ldots u_k)^\star v (u_1 \ldots u_k)^{-1} 
			= u_1^\star \ldots u_k^\star v u_k^{-1} \ldots u_1^{-1} \\[5pt]
			&=& \tAd_{u_1} \circ \ldots \circ \tAd_{u_k} (v),
	\end{array}
	\end{equation}
	i.e. $\tAd_x$ is a product of reflections in hyperplanes.
	It follows that the twisted adjoint action (restricted to act only on $V$
	which is clearly invariant) gives a homomorphism 
	from the versor group into the orthogonal group $\Ogrp(V,q)$.

	In Example \ref{exmp_plane_rotation} above we saw that any 
	rotation in two dimensions can be written in terms of a rotor $R_\varphi$
	which is a product of two unit vectors, i.e. as a product of two reflections.
	In general, we have the following fundamental theorem regarding the orthogonal group.
	
	\begin{thm}[Cartan-Dieudonn\'e] \label{thm_cartan_dieudonne}
		Every orthogonal transformation on a non-degen\-erate space 
		$(V,q)$ is a product of reflections in hyperplanes.
		The number of reflections required is at most equal to the dimension of $V$.
	\end{thm}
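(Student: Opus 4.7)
The plan is to prove the theorem by induction on $n := \dim V$, the main engine being the observation from \eqref{versor_action} that reflections in hyperplanes are precisely the maps $v \mapsto -v'vv'^{-1}$ for invertible $v' \in V$, so the inductive mechanism is to peel off one reflection at a time so that the remaining orthogonal map fixes a non-isotropic vector and thus restricts to the orthogonal complement.

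The base case $n=1$ is immediate: $\Ogrp(V,q) = \{\pm\id\}$, with $\id$ a product of zero reflections and $-\id$ a single reflection. For the inductive step, let $f \in \Ogrp(V,q)$ with $\dim V = n \geq 2$ and assume the result for all smaller non-degenerate spaces. Pick any $v \in V$ with $q(v) \neq 0$ (such $v$ exists because $q$ is non-degenerate, cf. Theorem \ref{thm_o_basis}). The key reduction is:
\emph{if we can produce a single reflection $r$ such that $r \circ f$ fixes some non-isotropic $e \in V$}, then $(r \circ f)(e) = e$ forces $r \circ f$ to preserve $e^\perp$, which is non-degenerate of dimension $n-1$, so by the induction hypothesis $r \circ f|_{e^\perp}$ is a product of at most $n-1$ reflections in hyperplanes of $e^\perp$; each such reflection extends to a hyperplane reflection of $V$ fixing $e$, and then $f = r \circ (r \circ f)$ is a product of at most $n$ reflections.

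To realize this reduction, set $w := f(v) - v$. If $f(v) = v$ we are done with zero extra reflections. Otherwise, the crucial computation (using $q \circ f = q$) gives
\begin{equation*}
\beta_q(f(v),w) = q(v) - \beta_q(f(v),v) = \tfrac{1}{2} q(w),
\end{equation*}
and if $q(w) \neq 0$ the reflection $r_w$ along $w$ sends $f(v) \mapsto f(v) - w = v$, so $r_w \circ f$ fixes the non-isotropic vector $v$ and the inductive reduction yields $f$ as a product of at most $n$ reflections. The polarization identity also gives $q(f(v)-v) + q(f(v)+v) = 4q(v)$, so the remaining ``bad'' subcase is that $q(w) = 0$ but $q(w') = 4q(v) \neq 0$ for $w' := f(v)+v$; here $r_{w'}$ sends $f(v) \mapsto -v$.

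The hard part will be this degenerate subcase, because the direct approach (compose with $r_{w'}$ to get $v \mapsto -v$, then with $r_v$ to get $v \mapsto v$, then invoke the inductive step) spends two reflections to reach a fixed non-isotropic vector and naively produces a bound of $n+1$. The way to regain the factor is to note that if this subcase occurs \emph{for every} non-isotropic $v$, then polarizing $\beta_q(f(v),v) = q(v)$ forces $f + f^* = 2\id$, i.e.\ $(f - \id)^2 = 0$, so $f = \id + N$ with $N$ antisymmetric, $N^2 = 0$, and $U := \im N$ totally isotropic of rank $r \leq n/2$. I would then handle this ``shear'' case separately by induction on $r$, exploiting a Witt decomposition $V = (U \oplus U') \oplus (U \oplus U')^\perp$ with $U'$ a dual totally isotropic complement: each pair of generators $(u_i, u_i')$ contributes at most two reflections to $f$, giving a bound of $2r \leq n$. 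Conversely, if the degenerate subcase fails for \emph{some} non-isotropic $v$, the main argument applies and the bound is at most $n$. This case split---separating orthogonal transformations that admit a non-isotropic vector $v$ with $q(f(v)-v) \neq 0$ from the pure-shear orthogonal maps living only in indefinite signatures---is the real content of the proof, and is where the reasoning departs from the clean euclidean case in which the degenerate subcase cannot occur at all.
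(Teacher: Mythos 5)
Your proposal follows the classical Artin-style dichotomy (does $f$ fix some anisotropic vector, or move some anisotropic vector by a non-isotropic amount, or is it a degenerate ``shear''?), which is a genuinely different route from the one the paper takes. The paper instead builds directly on two Clifford-flavored lemmas: first that $x^2 = y^2 \neq 0$ forces one of $x\pm y$ to be invertible, and second that the reflection along $x \pm y$ sends $x \mapsto \mp y$; these are then used in Lemma \ref{lem_cartan_sequences} to construct, one coordinate at a time, at most $n$ reflections carrying an orthogonal basis $\{e_j\}$ to $\{\tau_j f(e_j)\}$ up to signs $\tau_j \in \{\pm 1\}$, after which up to $n$ further reflections fix the signs. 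The paper is explicit that this only yields the bound $2n$ for mixed signatures and deliberately defers the optimal bound $n$ to Artin and Grove, so your proposal is actually targeting a sharper statement than the paper's own proof establishes. The price you pay for sharpness is precisely the ``shear'' case analysis, which the paper's construction sidesteps entirely.

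There is, however, a real gap in that shear case. Your reduction to $(f-\id)^2 = 0$ with $N := f - \id$ antisymmetric is correct (via Zariski density of the anisotropic vectors and polarization), as is the totally isotropic nature of $U := \im N$. But you overlook the additional constraint that, in the shear case, $\ker N = U^\perp$ must contain \emph{no} anisotropic vectors (otherwise you would have been in the ``$f(v)=v$'' case already). Since $U^\perp$ totally isotropic forces $U^\perp \subseteq U$, and always $U \subseteq U^\perp$, one gets $U = U^\perp$ and hence $\dim U = n/2$ with $V$ hyperbolic; your anisotropic complement $W = (U\oplus U')^\perp$ is automatically zero. This makes the Witt decomposition bookkeeping in your sketch simpler than you allow for, but it also means your claim that ``each pair of generators $(u_i,u_i')$ contributes at most two reflections'' is doing all the work and is asserted rather than proved: one still has to exhibit, for a unipotent $f = \id + N$ on a hyperbolic space $U \oplus U'$ with $N(U')=U$ and $N(U)=0$, an explicit factorization into $2\cdot\dim U = n$ reflections. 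This requires an honest inner induction (and one must also note that antisymmetry of $N$ with $N^2=0$ forces $\dim U$ to be even). As it stands, this is a plan for the hard case rather than a proof of it.
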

	
	\noindent We shall give a constructive proof below, 
	where the bound on the number of reflections is $n = \dim V$ for definite signatures,
	and $2n$ for mixed signature spaces.
	For the optimal case we refer to e.g. \cite{artin}, or \cite{grove}.
	
	\begin{cor}
		The homomorphism
		$\widetilde{\textrm{\emph{Ad}}}\!: \Gamma \to \textrm{\emph{O}}(V,q)$ 
		is surjective.
	\end{cor}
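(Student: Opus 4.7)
The plan is to show that the image of $\widetilde{\text{Ad}}$ restricted to $V$ exhausts all of $\text{O}(V,q)$, using the Cartan-Dieudonn\'e theorem as the one heavy ingredient. The argument is essentially a translation of the geometric statement (every orthogonal map is a product of hyperplane reflections) into the algebraic language of versors.

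First I would recall the two observations already made in the text immediately before the theorem. For any invertible vector $v \in V^\times$ the twisted adjoint action $\widetilde{\text{Ad}}_v\colon V \to V$ sends $v \mapsto -v$ and fixes every $w \in V$ with $\beta_q(v,w) = 0$; hence $\widetilde{\text{Ad}}_v|_V$ is precisely the orthogonal reflection in the hyperplane $\bar{v}^\perp$. In particular, $\widetilde{\text{Ad}}_v|_V \in \text{O}(V,q)$. Second, by equation \eqref{versor_action}, for a versor $x = v_1 v_2 \cdots v_k \in \Gamma$ we have
$$
\widetilde{\text{Ad}}_x|_V \;=\; \widetilde{\text{Ad}}_{v_1}|_V \,\circ\, \widetilde{\text{Ad}}_{v_2}|_V \,\circ\, \cdots \,\circ\, \widetilde{\text{Ad}}_{v_k}|_V,
$$
so $\widetilde{\text{Ad}}_x|_V$ is a composition of $k$ hyperplane reflections, which in particular lies in $\text{O}(V,q)$. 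This confirms that $\widetilde{\text{Ad}}$ is a well-defined homomorphism $\Gamma \to \text{O}(V,q)$.

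For surjectivity, take an arbitrary $f \in \text{O}(V,q)$. By Theorem~\ref{thm_cartan_dieudonne}, there exist invertible vectors $v_1,\ldots,v_k \in V^\times$ (with $k \leq \dim V$ in the definite case and $k \leq 2\dim V$ in general) such that $f = r_{v_1} \circ \cdots \circ r_{v_k}$, where $r_{v_i}$ denotes reflection in the hyperplane $\bar{v}_i^\perp$. Setting $x := v_1 v_2 \cdots v_k \in \Gamma$, the second observation above gives $\widetilde{\text{Ad}}_x|_V = r_{v_1} \circ \cdots \circ r_{v_k} = f$, which establishes the surjectivity.

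The only real obstacle is the Cartan-Dieudonn\'e theorem itself, which is asserted as Theorem~\ref{thm_cartan_dieudonne} and may be invoked here. Given that, the corollary is a direct corollary with no further work: the reflection $r_v$ appearing in the classical statement is literally realized as $\widetilde{\text{Ad}}_v|_V$, and the multiplicativity $\widetilde{\text{Ad}}_{xy} = \widetilde{\text{Ad}}_x \circ \widetilde{\text{Ad}}_y$ (which is the homomorphism property already noted in Exercise~\ref{exc_actions}) packages a product of reflections into the image of a single versor.
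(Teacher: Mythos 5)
Your proof is correct and takes essentially the same route as the paper: invoke Cartan--Dieudonn\'e to write $f \in \textrm{O}(V,q)$ as a product of hyperplane reflections, realize each reflection as $\tAd_{v_i}|_V$ for an invertible vector $v_i$, and package the composition as $\tAd_{v_1\cdots v_k}$ using the homomorphism property of $\tAd$. The paper's proof is just a terser version of exactly this argument.
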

	\begin{proof}
		We know that any $f \in \textrm{O}(V,q)$ can be written
		$f = \tAd_{v_1} \circ \ldots \circ \tAd_{v_k}$
		for some invertible vectors $v_1,\ldots,v_k$, $k \leq 2n$.
		But then $f = \tAd_{v_1 \ldots v_k}$, where $v_1 v_2 \ldots v_k \in \Gamma$.
	\end{proof}

	We divide the proof of the Cartan-Dieudonn\'e Theorem into several steps,
	and we will see that the proof holds for general fields $\mathbb{F}$
	with the usual assumption that $\charop \mathbb{F} \neq 2$.
	We will also find it convenient to introduce the following
	notation for the twisted adjoint action:
	$$
		\underline{U} := \tAd_U, \quad \textrm{i.e.} \quad \underline{U}(x) = U^\star x U^{-1}.
	$$
	
	\begin{lem} \label{lem_cartan_invertibility}
		If $x,y \in V^{\times}$ and $x^2 = y^2$ then either
		$x + y$ or $x - y$ is in $V^{\times}$.
	\end{lem}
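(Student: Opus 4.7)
The plan is to use a polarization identity for the squares of vectors and observe that it forces one of the two squares in question to be nonzero.

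First I would compute, for arbitrary $x, y \in V$, the sum $(x+y)^2 + (x-y)^2$. Expanding inside $\mathcal{G}$ and using that $xy + yx = 2\beta_q(x,y)$ is a scalar (one of the characteristic relations \eqref{generator_relations}), the cross terms cancel and one obtains the parallelogram-type identity
\begin{equation*}
    (x+y)^2 + (x-y)^2 = 2x^2 + 2y^2.
\end{equation*}

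Next I would plug in the hypothesis $x^2 = y^2$. The right-hand side becomes $4x^2$, and since $x \in V^\times$ means $x^2 \neq 0$ (and we are over a field of characteristic $\neq 2$, as assumed throughout for geometric algebra), the sum $(x+y)^2 + (x-y)^2$ is a nonzero scalar. Therefore at least one of the summands $(x+y)^2$ or $(x-y)^2$ must itself be nonzero, which by definition of $V^\times$ means that at least one of $x+y$ or $x-y$ lies in $V^\times$.

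There is really no obstacle here; the whole argument is a one-line polarization computation. The only thing to be mildly careful about is invoking $\charop \mathbb{F} \neq 2$ so that $4x^2 \neq 0$ is a genuine consequence of $x^2 \neq 0$, which matches the standing assumption made right after Theorem \ref{thm_o_basis}.
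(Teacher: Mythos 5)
Your proof is correct and is essentially the same argument as the paper's, which also expands $(x+y)^2 + (x-y)^2 = 2(x^2+y^2)$ and invokes $\charop\mathbb{F} \neq 2$; the only difference is that the paper phrases it as a proof by contradiction (assume both squares vanish) while you argue directly, and these are logically identical.
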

	\begin{proof}
		Assume to the contrary that
		\begin{eqnarray*}
			0 &=& (x+y)^2 = x^2 + y^2 + xy + yx  \qquad \textrm{and} \\
			0 &=& (x-y)^2 = x^2 + y^2 - xy - yx.
		\end{eqnarray*}
		Thus $2(x^2+y^2) = 0$, but $2 \neq 0$ in $\mathbb{F}$ then
		implies $2x^2 = x^2+y^2 = 0$, which is a contradiction.
	\end{proof}
	
	\noindent
		If the quadratic form $q$ is either positive or negative
		definite then, under the same conditions, 
		either $x=y$, $x=-y$, or \emph{both} $x \pm y \in V^{\times}$ 
		(see Exercise \ref{exc_degenerate_sum}).
	
	\begin{lem} \label{lem_cartan_action}
		With the conditions in Lemma \ref{lem_cartan_invertibility} it holds that
		\begin{eqnarray*}
			x + y \in V^{\times} &\Rightarrow& \underline{(x+y)}(x) = -y \qquad \textrm{and} \\
			x - y \in V^{\times} &\Rightarrow& \underline{(x-y)}(x) =  y.
		\end{eqnarray*}
	\end{lem}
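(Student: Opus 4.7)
My plan is to compute both identities directly from the definition $\underline{U}(z) = U^\star z U^{-1}$, exploiting the fact that $U = x \pm y$ is a $1$-vector, so $U^\star = -U$, and then reducing each claim to an easy algebraic identity that uses the hypothesis $x^2 = y^2$.

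For the first identity, I want to show $\underline{(x+y)}(x) = -(x+y)\,x\,(x+y)^{-1} = -y$. Multiplying both sides on the right by $x+y$, this is equivalent to showing $(x+y)\,x = y\,(x+y)$. Expanding both sides, the left side equals $x^2 + yx$ and the right side equals $yx + y^2$; these are equal precisely because $x^2 = y^2$. Since $x+y \in V^\times$ by assumption, $(x+y)^{-1}$ exists, and the computation is valid.

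For the second identity, I want to show $\underline{(x-y)}(x) = -(x-y)\,x\,(x-y)^{-1} = y$. Again multiplying by $x-y$ on the right, this reduces to $-(x-y)\,x = y\,(x-y)$, i.e. $-x^2 + yx = yx - y^2$, which is once more just the hypothesis $x^2 = y^2$. Since $x-y \in V^\times$, the inverse exists and the identity follows.

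There is no real obstacle here: the proof is a two-line calculation in each case, turning on the simple observation that for a vector $v$, the grade involution flips the sign, so $\underline{v}(z) = -v z v^{-1}$, together with the algebraic identity $x^2 = y^2$. The conceptual content, already visible in the computation, is that reflecting $x$ in the hyperplane orthogonal to $x+y$ sends $x$ to $-y$ (this is precisely the reflection used in the Gram--Schmidt-like step of the Cartan--Dieudonn\'e proof), while reflecting $x$ in the hyperplane orthogonal to $x-y$ sends $x$ to $y$.
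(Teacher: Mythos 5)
Your proof is correct and follows essentially the same route as the paper's: both reduce $\underline{(x\pm y)}(x)$ to the expansion $(x+y)x = x^2 + yx$ and invoke $x^2 = y^2$; you present it as a "clear denominators" identity while the paper factors $-(y^2+yx) = -y(y+x)$ directly, and the paper dispatches the second case by the substitution $y \mapsto -y$ rather than by a separate computation. The differences are purely cosmetic.
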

	\begin{proof}
		If $x+y \in V^\times$ we have
		\begin{eqnarray*}
			&& (x+y)^\star x(x+y)^{-1} = -(x^2+yx)(x+y)^{-1} = -(y^2+yx)(x+y)^{-1} \\
			&& \qquad = -y(y+x)(x+y)^{-1} = -y.
		\end{eqnarray*}
		The other implication is obtained by replacing $y$ with $-y$.
	\end{proof}
	
	\noindent
		Note that
		when $x^2 = y^2 \neq 0$ and $x+y \in V^\times$ then we also see that
		$\underline{y(x+y)}(x) = \underline{y}(-y) = y$.
	
	\begin{lem} \label{lem_cartan_sequences}
		Let $e_1,\ldots,e_n$ and $f_1,\ldots,f_n$ be two orthogonal
		sequences in $V$ such that $e_i^2 = f_i^2$, $i=1,\ldots,n$. 
		Then there exist $v_1,\ldots,v_r \in V^{\times}$, where $r \le n$,
		such that
		$$
			\underline{v_1 \ldots v_r}(e_j) = \tau_j f_j, \quad j=1,\ldots,n,
		$$
		for suitable $\tau_j \in \{-1,1\}$.
	\end{lem}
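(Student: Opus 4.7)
I would proceed by induction on $n$. The base case $n=1$ is essentially repackaged from Lemmas~\ref{lem_cartan_invertibility} and~\ref{lem_cartan_action}: if $e_1 = \tau_1 f_1$ for some $\tau_1 \in \{\pm 1\}$ take the empty product ($r=0$); otherwise, under the assumption $e_1 \in V^\times$, Lemma~\ref{lem_cartan_invertibility} forces one of $e_1 \pm f_1$ to be invertible, and Lemma~\ref{lem_cartan_action} then delivers a single vector $v_1$ with $\underline{v_1}(e_1) = \pm f_1$.

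For the inductive step I would first apply the base case to the pair $(e_1,f_1)$ to obtain a $v_1 \in V^\times$ (or empty) with $\underline{v_1}(e_1) = \tau_1 f_1$, and then set $e_j' := \underline{v_1}(e_j)$ for $j \ge 2$. Since $\underline{v_1}$ is a linear isometry of $V$ (as established in the paragraph preceding Theorem~\ref{thm_cartan_dieudonne}), the sequence $(e_2',\ldots,e_n')$ is orthogonal with $(e_j')^2 = f_j^2$, and moreover each $e_j'$ is orthogonal to $\underline{v_1}(e_1) = \tau_1 f_1$, hence to $f_1$. Thus both $(e_2',\ldots,e_n')$ and $(f_2,\ldots,f_n)$ live inside the nondegenerate hyperplane $W := f_1^\perp$, and I can invoke the induction hypothesis there to obtain $v_2,\ldots,v_r \in W^\times$ with $r-1 \le n-1$ such that $\underline{v_2 \cdots v_r}(e_j') = \tau_j f_j$ for $j = 2,\ldots,n$.

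The key assembly step is to form $V := v_2 v_3 \cdots v_r v_1$; by~\eqref{versor_action} this satisfies $\underline{V}(e_j) = \underline{v_2 \cdots v_r}\bigl(\underline{v_1}(e_j)\bigr)$. For $j \ge 2$ this is immediately $\tau_j f_j$. For $j = 1$ it equals $\underline{v_2 \cdots v_r}(\tau_1 f_1)$, and since each $v_i$ ($i \ge 2$) is orthogonal to $f_1$ and hence anticommutes with it, setting $W' := v_2 \cdots v_r$ I have $(W')^\star = (-1)^{r-1} W'$ and $W' f_1 = (-1)^{r-1} f_1 W'$, so $(W')^\star f_1 (W')^{-1} = f_1$; hence $\underline{V}(e_1) = \tau_1 f_1$, and the bound $r \le n$ follows at once.

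The main obstacle I foresee is the handling of isotropic directions: if $e_1, f_1$ were both null and mutually orthogonal, then both $e_1 \pm f_1$ could fail to be invertible and Lemma~\ref{lem_cartan_invertibility} would not apply as stated. For the downstream Cartan--Dieudonn\'e application one can, and should, work with an orthogonal basis of non-null vectors, so I would run the argument above under the running assumption $e_i, f_i \in V^\times$, which is exactly what is required for the proof of the theorem.
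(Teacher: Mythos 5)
Your proof is correct and rests on the same mechanism as the paper's own argument -- iteratively aligning $e_j$ with $\pm f_j$ by reflections chosen via Lemmas~\ref{lem_cartan_invertibility} and~\ref{lem_cartan_action} -- but you organize it differently. The paper runs a forward loop constructing $u_1,\ldots,u_n$ while maintaining, as an explicit invariant, that $u_{m+1}$ fixes the already-aligned $f_1,\ldots,f_m$ (verified by computing $f_k * u_{m+1} = 0$); you instead induct on $n$, passing after the first reflection to the nondegenerate hyperplane $W := f_1^\perp$ and invoking the induction hypothesis there. Your version makes the preservation of the already-aligned direction structural rather than something to re-check at each step, and your anticommutation computation showing that the versor $v_2 \cdots v_r \in \cl(W)$ fixes $f_1$ is a tidy one-shot replacement for the paper's pointwise verification. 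Your flag about the implicit hypothesis $e_i, f_i \in V^\times$ is also apt: the paper's own proof invokes Lemma~\ref{lem_cartan_invertibility} (which requires $x,y \in V^\times$) in exactly the same place, so that hypothesis is needed there too, although in the downstream Cartan--Dieudonn\'e application the $e_i$ and $f_i$ come from orthogonal bases of a nondegenerate space and are hence automatically non-null. One cosmetic point: you overload the symbol $V$, using it both for the ambient vector space and for the assembled versor $v_2 v_3 \cdots v_r v_1$; rename the latter to avoid a clash.
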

	\begin{proof}
		We shall construct $u_1,\ldots,u_n$ such that
		$u_i = 1$ or $u_i \in V^\times$ and such that for
		each $m \le n$ the conditions
		$$
		\begin{array}{rrcll}
			i)  & \underline{u_m \ldots u_1}(e_j) &=& \tau_j f_j, & \quad 1 \le j \le m \\[5pt]
			ii) & \underline{u_m}(f_k) &=& f_k, & \quad 1 \le k < m
		\end{array}
		$$
		are satisfied.
		
		We start with $m=1$:
		If $e_1 = f_1$ we choose $u_1 = 1$. 
		If $e_1 \neq f_1$ and $e_1 - f_1 \in V^\times$ 
		we choose $u_1 := e_1 - f_1$ and obtain $\underline{u_1}(e_1) = f_1$.
		If $e_1 \neq f_1$ and $e_1 - f_1 \notin V^\times$ 
		then we must have $e_1 + f_1 \in V^\times$,
		so we choose $u_1 := e_1 + f_1$ and obtain $\underline{u_1}(e_1) = -f_1$.
		
		Now, assume we have constructed $u_1,\ldots,u_m$ 
		such that $m \le n-1$ and such that 
		the conditions $(i)$ and $(ii)$ are satisfied.
		We would like to construct $u_{m+1}$:
		
		If $\underline{u_m \ldots u_1}(e_{m+1}) = f_{m+1}$ 
		then we choose $u_{m+1} := 1$ and hence also satisfy 
		the corresponding condition $(ii)$.
		
		If $\underline{u_m \ldots u_1}(e_{m+1}) - f_{m+1} \in V^\times$,
		we choose $u_{m+1} := \underline{u_m \ldots u_1}(e_{m+1}) - f_{m+1}$ 
		and obtain $\underline{u_{m+1} \ldots u_1}(e_{m+1}) = f_{m+1}$,
		which proves $(i)$ for the case $j=m+1$.
		When $k < m+1$ we have
		\begin{eqnarray*}
			f_k * u_{m+1} &=& f_k * \left( \underline{u_m \ldots u_1}(e_{m+1}) - f_{m+1} \right) \\
			&=& \left( \tau_k \underline{u_m \ldots u_1}(e_k) \right) * \underline{u_m \ldots u_1}(e_{m+1}) = \tau_k e_k * e_{m+1} = 0.
		\end{eqnarray*}
		But then we must have $\underline{u_{m+1}}(f_k) = f_k$ which gives $(ii)$.
		Now let $j \le m$. Then, by $(i)$ and $(ii)$, we see that
		$$
			\underline{u_{m+1} \ldots u_1}(e_j) 
			= \underline{u_{m+1}}\left( \underline{u_m \ldots u_1}(e_j) \right)
			= \underline{u_{m+1}}( \tau_j f_j ) = \tau_j f_j,
		$$
		which proves $(i)$ also for this case.
		
		It remains to consider the case when
		$\underline{u_m \ldots u_1}(e_{m+1}) + f_{m+1} \in V^\times$
		but none of the earlier cases hold.
		This case is investigated in a similar way, or is immediately
		realized by replacing $e_{m+1}$ with $-e_{m+1}$ (exercise).
		
		Proceeding by induction, this proves the lemma.
	\end{proof}
	
	\noindent
	Note that	
	when we have constructed $u_1,\ldots,u_{n-1}$
	it then also follows that $\underline{u_{n-1} \ldots u_1}(e_n) = \tau_n f_n$
	since $\underline{u_{n-1} \ldots u_1} \in \Ogrp(V)$ 
	and $e_1,\ldots,e_n$ resp. $f_1,\ldots,f_n$ are orthogonal bases.

	We are now ready to prove Theorem \ref{thm_cartan_dieudonne}.
	Let $f \in \Ogrp(V,q)$ and put $f_j := f(e_j)$ for $j=1,\ldots,n$,
	where $\{e_1,\ldots,e_n\}$ is some orthogonal basis for $V$.
	Then $e_i^2 = f_i^2\ \forall i$ 
	and $\{f_1,\ldots,f_n\}$ is also an orthogonal basis.
	Let $u_1,\ldots,u_n$ be as constructed in Lemma \ref{lem_cartan_sequences}
	(i.e. $u_i = 1$ or $u_i \in V^\times$)
	and let $u_{n+k} := (\tau_k = 1) + (\tau_k = -1)f_k$, $1 \le k \le n$.
	Then we find $\underline{u_{2n} \ldots u_{n+1}}(f_k) = \tau_k f_k$ for all $k=1,\ldots,n$.
	This shows that $\underline{u_{2n} \ldots u_1} = f$, which proves the theorem.
	
	\begin{exc}
		Show that for $f \in \Ogrp(V,q)$ (as defined above) 
		$f_\wedge(x)*f_\wedge(y) = x*y\ \forall x,y \in \mathcal{G}$,
		$f^* = f^{-1}$, and $\det f \in \{1,-1\}$.
	\end{exc}
	
	\begin{exc}
		Show that $\tAd_v$ is a reflection along $v$,
		in an alternative way, by using
		the formula for a projection in Section \ref{subsec_proj_rej}.
		Also, verify that $\det \tAd_v = -1$.
	\end{exc}
	
	\begin{exc} \label{exc_degenerate_sum}
		Show that $(x+y)^2(x-y)^2 = -4(x \wedge y)^2$ for all $x,y \in V$ s.t. $x^2 = y^2$, and
		construct an example of two linearly independent vectors $x,y$
		in a nondegenerate space
		such that $x^2 = y^2 \neq 0$ but $x+y \notin V^{\times}$. \\
	\end{exc}

	\begin{exc}
		Find $v_1,v_2,v_3 \in (\mathbb{R}^3)^\times$ such that
		$$
			\tAd_{v_3v_2v_1} = f
		$$
		where $f \in \Ogrp(3)$ is defined by 
		$f(e_1) = e_3$, $f(e_2) = \frac{4}{5}e_1 + \frac{3}{5}e_2$
		and $f(e_3) = \frac{3}{5}e_1 - \frac{4}{5}e_2$.
	\end{exc}

\subsection{The Lipschitz group}
\label{sec_lipschitz}
	
	We saw above that the twisted adjoint action maps the versor
	group onto the group of orthogonal transformations of $V$.
	The largest group in $\mathcal{G}$ for which $\tAd$
	forms a representation on $V$, i.e. leaves $V$ invariant,
	is by Definition \ref{def_groups} the Lipschitz group $\tilde{\Gamma}$.
	Hence, (or explicitly from \eqref{versor_action}), 
	we see that $\Gamma \subseteq \tilde{\Gamma}$.
	
	We will now introduce an important function on $\mathcal{G}$, conventionally
	called the \emph{norm function},
	\begin{equation} \label{norm_function}
	\setlength\arraycolsep{2pt}
	\begin{array}{c}
		N\!: \mathcal{G} \to \mathcal{G}, \\[5pt]
		N(x) := x^\cliffconj x.
	\end{array}
	\end{equation}
	The name is a bit misleading since $N$ is not even
	guaranteed to take values in $\mathbb{R}$. 
	However, for some special cases of algebras
	it does act as a natural norm (squared) and we will see that it can be extended
	in many lower-dimensional algebras where it will act as a kind of determinant.
	Our first main result for this function is that it acts 
	similarly to a determinant on $\tilde{\Gamma}$.
	This will help us prove that $\Gamma = \tilde{\Gamma}$.
	
	\begin{lem} \label{lem_grade_commute}
		Assume that $\mathcal{G}$ is nondegenerate.
		If $x \in \mathcal{G}$ and $x^\star v = vx$ for all $v \in V$ then
		$x$ must be a scalar.
	\end{lem}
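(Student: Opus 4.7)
The plan is to reduce the hypothesis to the statement that $v \liprod x = 0$ for every $v \in V$, and then to extract, grade by grade, that $x$ can have no nonzero homogeneous part of positive grade.

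First, I would observe that by the identity $v \liprod x = \tfrac{1}{2}(vx - x^\star v)$ from Proposition \ref{prop_trix2}, the hypothesis $x^\star v = vx$ is exactly the same as
\begin{equation*}
    v \liprod x = 0 \quad \forall\, v \in V.
\end{equation*}
Decompose $x = \sum_{k \ge 0} \langle x \rangle_k$ into homogeneous components. Since $v \liprod \mathcal{G}^k \subseteq \mathcal{G}^{k-1}$ by Proposition \ref{prop_grade_def}, each grade must vanish separately, so it suffices to prove that $v \liprod \langle x \rangle_k = 0$ for all $v$ implies $\langle x \rangle_k = 0$ whenever $k \ge 1$.

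Next, since $\mathcal{G}$ is nondegenerate, Theorem \ref{thm_o_basis} provides an orthogonal basis $\{e_1,\ldots,e_n\}$ with $q(e_i) \neq 0$ for every $i$. Expand
\begin{equation*}
    \langle x \rangle_k = \sum_{|A|=k} x_A\, e_A
\end{equation*}
in the induced orthogonal basis blades. For a fixed generator $e_i$, the expansion formula \eqref{vec_blade_expansion} (or equivalently the combinatorial definition of $\liprod$ in $\cl$) gives
\begin{equation*}
    e_i \liprod e_A \;=\; \begin{cases} \pm\, q(e_i)\, e_{A \smallsetminus \{i\}}, & i \in A,\\ 0, & i \notin A,\end{cases}
\end{equation*}
so that
\begin{equation*}
    e_i \liprod \langle x \rangle_k \;=\; q(e_i) \sum_{A \ni i} \pm\, x_A\, e_{A \smallsetminus \{i\}}.
\end{equation*}
The blades $\{e_{A \smallsetminus \{i\}} : A \ni i,\ |A|=k\}$ are distinct basis elements of $\mathcal{G}^{k-1}$ and therefore linearly independent; combined with $q(e_i) \neq 0$, the vanishing of $e_i \liprod \langle x \rangle_k$ forces $x_A = 0$ for every $A$ containing $i$. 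Letting $i$ range over $\{1,\ldots,n\}$ kills $x_A$ for every nonempty $A$, i.e.\ $\langle x \rangle_k = 0$ for $k \ge 1$, so $x \in \mathcal{G}^0 = \mathbb{F}$.

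The only mild subtlety is the first reduction step, where one must be careful to separate homogeneous components before taking the inner product, and to use nondegeneracy precisely at the step $q(e_i) \neq 0$; everything else is bookkeeping in an orthogonal basis. There is no serious obstacle.
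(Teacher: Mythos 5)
Your proof is correct and takes essentially the same approach as the paper: both begin by reducing $x^\star v = vx$ to $v \liprod x = 0$ via Proposition \ref{prop_trix2}, and both then invoke nondegeneracy to kill the nonscalar part of $x$. The only cosmetic difference is that the paper finishes abstractly, noting that $(v_1 \wedge \cdots \wedge v_k) * x = (v_1 \wedge \cdots \wedge v_{k-1}) * (v_k \liprod x) = 0$ for all $k$-blades with $k \ge 1$ and then appealing to nondegeneracy of the scalar product on $\mathcal{G}$, whereas you make the same conclusion explicit by expanding $\langle x\rangle_k$ in an orthogonal basis and reading off that every coefficient $x_A$ with $A \neq \varnothing$ must vanish.
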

	\begin{proof}
		From Proposition \ref{prop_trix2} we have that 
		$v \liprod x = 0$ for all $v \in V$.
		This means that, for a $k$-blade, 
		$(v_1 \wedge \cdots \wedge v_{k-1} \wedge v_k) * x = (v_1 \wedge \cdots \wedge v_{k-1}) * (v_k \liprod x) = 0$
		whenever $k \geq 1$. The nondegeneracy of the scalar product implies that $x$ must have grade 0.
	\end{proof}
	
	\begin{thm} \label{thm_norm}
		The norm function is a group homomorphism $N\!: \tilde{\Gamma} \to \mathbb{R}^\times$.
	\end{thm}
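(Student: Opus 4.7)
The plan is to reduce everything to the grade-commutation lemma (Lemma \ref{lem_grade_commute}). The core observation is that the defining property of the Lipschitz group, namely $x^\star v x^{-1} \in V$ for all $v \in V$, rewritten as $x^\star v = \tAd_x(v)\, x$, becomes a statement about scalars after applying reversion and multiplying in the right way.

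Concretely, first I would fix $x \in \tilde{\Gamma}$ and set $v' := \tAd_x(v) \in V$. Taking reversion of $x^\star v = v' x$ and using $v^\dagger = v$, $v'^\dagger = v'$, $(x^\star)^\dagger = x^\cliffconj$, I get $v\, x^\cliffconj = x^\dagger v'$. Multiplying on the right by $x$ and using $v' x = x^\star v$ again yields $v (x^\cliffconj x) = x^\dagger x^\star v$. Since $\star$ and $\dagger$ commute, $x^{\cliffconj \star} = x^\dagger$, so the right-hand side is $(x^\cliffconj x)^\star v$. Hence
\[
  (x^\cliffconj x)^\star v = v\,(x^\cliffconj x) \qquad \forall v \in V,
\]
and Lemma \ref{lem_grade_commute} (which requires $\mathcal{G}$ nondegenerate, as assumed) forces $N(x) = x^\cliffconj x$ to be a scalar.

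Next I would argue $N(x) \neq 0$: since $x \in \tilde{\Gamma} \subseteq \mathcal{G}^\times$, the element $x^{-1}$ exists, and applying the anti-automorphism $\cliffconj$ (which sends products to reversed products with signs) to $x\, x^{-1} = 1$ shows that $x^\cliffconj$ is also invertible, with $(x^\cliffconj)^{-1} = (x^{-1})^\cliffconj$. Thus $N(x)$ is an invertible scalar, i.e.\ lies in $\mathbb{R}^\times$. The homomorphism property then follows from a direct computation using \eqref{cliffconj_property}: for $x,y \in \tilde{\Gamma}$,
\[
  N(xy) = (xy)^\cliffconj (xy) = y^\cliffconj x^\cliffconj x y = y^\cliffconj N(x) y = N(x)\, y^\cliffconj y = N(x) N(y),
\]
where in the second-to-last step we use that $N(x)$ is a scalar (already proved) and therefore commutes through.

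The only real obstacle is the first step: spotting that reversion converts the $V$-preserving property of $\tAd_x$ into a commutation relation between $(x^\cliffconj x)^\star$ and vectors, which is exactly the hypothesis of Lemma \ref{lem_grade_commute}. After that, everything is algebraic manipulation using properties already established in Table \ref{table_involutions} and \eqref{cliffconj_property}.
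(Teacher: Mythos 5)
Your proof is correct and follows essentially the same route as the paper: take reversion of the Lipschitz relation, use that vectors are self-reversing to obtain $N(x)^\star v = vN(x)$, and invoke Lemma \ref{lem_grade_commute}, then verify multiplicativity directly. The only cosmetic difference is that you establish $N(x) \neq 0$ by showing $x^\cliffconj$ and $x$ are each invertible, whereas the paper derives it from $1 = N(1) = N(x)N(x^{-1})$; both are immediate.
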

	\begin{proof}
		First note that if $xx^{-1}=1$ then also $x^\star (x^{-1})^\star = 1$ and $(x^{-1})^\dagger x^\dagger = 1$, 
		hence $(x^\star)^{-1}=(x^{-1})^\star$ and $(x^\dagger)^{-1}=(x^{-1})^\dagger$.
		
		Now take $x \in \tilde{\Gamma}$. Then $x^\star v x^{-1} \in V$ for all $v \in V$
		and therefore 
		\begin{equation}
			x^\star v x^{-1} = (x^\star v x^{-1})^\dagger = (x^{-1})^\dagger v x^\cliffconj.
		\end{equation}
		This means that $x^\dagger x^\star v = v x^\cliffconj x$, or $N(x)^\star v = v N(x)$.
		By Lemma \ref{lem_grade_commute} we find that $N(x) \in \mathbb{R}$.
		The homomorphism property now follows easily, since for $x,y \in \tilde{\Gamma}$,
		\begin{equation}
			N(xy) = (xy)^\cliffconj xy = y^\cliffconj x^\cliffconj x y = y^\cliffconj N(x) y = N(x)N(y).
		\end{equation}
		Finally, because $1 = N(1) = N(xx^{-1}) = N(x)N(x^{-1})$, we must have that $N(x)$ is nonzero.
	\end{proof}
	
	\begin{lem} \label{lem_lipschitz_surjective}
		We have a homomorphism $\widetilde{\textrm{\emph{Ad}}}\!: \tilde{\Gamma} \to \textrm{\emph{O}}(V,q)$
		with kernel $\mathbb{R}^\times$.
	\end{lem}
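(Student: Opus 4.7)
The plan is to verify three things in sequence: that $\widetilde{\Ad}$ does land in $\Ogrp(V,q)$ (not just in $\End V$), that it is a group homomorphism, and that its kernel is exactly $\mathbb{R}^\times$. The homomorphism property is essentially automatic from Exercise \ref{exc_actions}, since $\widetilde{\Ad}: \mathcal{G}^\times \to \GL(\mathcal{G})$ is already a group homomorphism, and by the definition of $\tilde\Gamma$ the restriction of $\tAd_x$ to $V$ takes values in $V$ for $x\in\tilde\Gamma$. Similarly, linearity of $\tAd_x|_V$ is trivial, and bijectivity will follow from the isometry property together with nondegeneracy of $q$ on the finite-dimensional $V$. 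So the real content is isometry and the kernel computation.

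For the isometry property, I would exploit the norm function machinery just set up in Theorem \ref{thm_norm}. Take $x\in\tilde\Gamma$, so that $N(x)=x^\cliffconj x\in\mathbb{R}^\times$. Since $N(x)$ is central, multiplying $x^\cliffconj x=N(x)$ on the left by $x$ and on the right by $x^{-1}$ gives the companion identity $xx^\cliffconj=N(x)$ as well. For $v\in V$, set $w:=\tAd_x(v)=x^\star v x^{-1}\in V$. I would then compute $N(w)=w^\cliffconj w$ in two ways. On one hand, $w\in V$ means $N(w)=-w^2=-q(w)$. On the other hand, using $(abc)^\cliffconj=c^\cliffconj b^\cliffconj a^\cliffconj$, $v^\cliffconj=-v$, $(x^\star)^\cliffconj=x^\dagger$, and $(x^{-1})^\cliffconj=(x^\cliffconj)^{-1}$, one gets
$$
w^\cliffconj w \;=\; -(x^\cliffconj)^{-1} v\,(x^\dagger x^\star)\,v\,x^{-1}.
$$
The middle factor $x^\dagger x^\star$ equals $N(x)$ (a direct check using the commutativity of $\star$ and $\dagger$), hence is a central scalar, and pulling it and the scalar $v^2$ outside gives $w^\cliffconj w=-N(x)v^2(xx^\cliffconj)^{-1}=-v^2$. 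Comparing with the first calculation yields $q(w)=q(v)$.

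For the kernel, I would use Lemma \ref{lem_grade_commute}. An element $x\in\tilde\Gamma$ lies in $\ker\widetilde{\Ad}$ iff $x^\star v x^{-1}=v$ for every $v\in V$, i.e.\ iff $x^\star v=vx$ for every $v\in V$. Lemma \ref{lem_grade_commute} forces $x$ to be a scalar, and since $x$ must also be invertible, $x\in\mathbb{R}^\times$. The reverse inclusion is immediate: real scalars are grade-$0$, so they satisfy $x^\star=x$ and commute with everything, hence act trivially under $\tAd$.

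The only real obstacle is the middle computation: keeping track of the interaction of $\star$, $\dagger$, and $\cliffconj$ cleanly enough that $x^\dagger x^\star$ is recognized as the scalar $N(x)$ and then can be pulled past $v$. Everything else is bookkeeping or a direct invocation of a previous result.
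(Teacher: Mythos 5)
Your proposal is correct and follows essentially the same route as the paper: compute $N(\tAd_x(v))$ by expanding the Clifford conjugate, recognize $x^\dagger x^\star = N(x)^\star = N(x)$ as a central scalar via Theorem \ref{thm_norm}, collapse the product, and deduce $q(\tAd_x(v)) = q(v)$; the kernel then follows from Lemma \ref{lem_grade_commute} exactly as you describe. The only cosmetic difference is that you flag the companion identity $xx^\cliffconj = N(x)$ explicitly and phrase the conclusion as $q(w)=q(v)$ rather than $\tAd_x(v)^2 = v^2$; the substance is identical.
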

	\begin{proof}
		We first prove that $\tAd_x$ is orthogonal for $x \in \tilde{\Gamma}$.
		Note that, for $v \in V$,
		\begin{equation}
		\setlength\arraycolsep{2pt}
		\begin{array}{rcl}
			N(\tAd_x(v)) &=& N(x^\star v x^{-1}) 
				= (x^\star v x^{-1})^\cliffconj x^\star v x^{-1} \\[5pt]
				&=& (x^{-1})^\cliffconj v^\cliffconj x^\dagger x^\star v x^{-1}
				= (x^{-1})^\cliffconj v^\cliffconj N(x)^\star v x^{-1} \\[5pt]
				&=& (x^{-1})^\cliffconj v^\cliffconj v x^{-1} N(x)^\star
				= N(v) N(x^{-1}) N(x) = N(v).
		\end{array}
		\end{equation}
		Then, since $N(v) = v^\cliffconj v = -v^2$, we have that $\tAd_x(v)^2 = v^2$.
		
		Now, if $\tAd_x = \id$ then $x^\star v = v x$ for all $v \in V$
		and by Lemma \ref{lem_grade_commute}
		we must have $x \in \mathbb{R} \cap \tilde{\Gamma} = \mathbb{R}^\times$.
	\end{proof}
	
	\noindent We finally obtain 
	
	\begin{thm} \label{thm_lipschitz}
		It holds that $\Gamma = \tilde{\Gamma}$.
	\end{thm}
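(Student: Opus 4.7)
The plan is to establish the reverse inclusion $\tilde{\Gamma} \subseteq \Gamma$, since $\Gamma \subseteq \tilde{\Gamma}$ is already immediate from \eqref{versor_action}. The strategy is to exploit the Cartan--Dieudonn\'e theorem together with the kernel computation in Lemma \ref{lem_lipschitz_surjective}.

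First, I would take an arbitrary $x \in \tilde{\Gamma}$. By Lemma \ref{lem_lipschitz_surjective}, the twisted adjoint action $\tAd_x$ lies in $\Ogrp(V,q)$. By the Cartan--Dieudonn\'e theorem (Theorem \ref{thm_cartan_dieudonne}), there exist invertible vectors $v_1, v_2, \ldots, v_k \in V^\times$ such that
\begin{displaymath}
	\tAd_x = \tAd_{v_1} \circ \tAd_{v_2} \circ \cdots \circ \tAd_{v_k}.
\end{displaymath}
By \eqref{versor_action}, the right-hand side equals $\tAd_{v_1 v_2 \cdots v_k}$, where $y := v_1 v_2 \cdots v_k$ lies in $\Gamma \subseteq \tilde{\Gamma}$.

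Next, I would consider the element $xy^{-1} \in \tilde{\Gamma}$ (since $\tilde{\Gamma}$ is a group). Because $\tAd$ is a group homomorphism on $\tilde{\Gamma}$ (Exercise \ref{exc_actions}), we get $\tAd_{xy^{-1}} = \tAd_x \circ \tAd_y^{-1} = \id$. By the kernel statement in Lemma \ref{lem_lipschitz_surjective}, it follows that $xy^{-1} = \lambda$ for some scalar $\lambda \in \mathbb{R}^\times$. Hence
\begin{displaymath}
	x = \lambda y = \lambda v_1 v_2 \cdots v_k = (\lambda v_1) v_2 \cdots v_k,
\end{displaymath}
where $\lambda v_1 \in V^\times$ since $\lambda \neq 0$ and $v_1 \in V^\times$. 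Therefore $x$ is a product of invertible vectors, i.e.\ $x \in \Gamma$.

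There is no real obstacle here: the two ingredients (Cartan--Dieudonn\'e surjectivity and the kernel characterization) have already been worked out, and the proof essentially amounts to combining them. The only minor subtlety is absorbing the scalar $\lambda$ into a vector factor so that the resulting expression is literally a product of elements of $V^\times$ as demanded by the definition of $\Gamma$.
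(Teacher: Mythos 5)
Your proof is correct and follows essentially the same route as the paper's: use Lemma \ref{lem_lipschitz_surjective} to place $\tAd_x$ in $\Ogrp(V,q)$, invoke Cartan--Dieudonn\'e to find a versor $y$ with $\tAd_x = \tAd_y$, and then use the kernel computation to conclude $x = \lambda y$. The only difference is that you spell out explicitly how the scalar $\lambda$ is absorbed into one of the vector factors, a small detail the paper leaves implicit.
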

	\begin{proof}
		We saw earlier that $\Gamma \subseteq \tilde{\Gamma}$.
		Take $x \in \tilde{\Gamma}$. By the above lemma we have $\tAd_x \in \textrm{O}(V,q)$.
		Using the corollary to Theorem \ref{thm_cartan_dieudonne} we then find that
		$\tAd_x = \tAd_y$ for some $y \in \Gamma$.
		Then $\tAd_{xy^{-1}} = \id$, and $xy^{-1} = \lambda \in \mathbb{R}^\times$.
		Hence, $x = \lambda y \in \Gamma$.
	\end{proof}

	\begin{exc}
		Let $\mathcal{G}^{\times \pm} := \mathcal{G}^\times \cap \mathcal{G}^\pm$ and 
		show that the Lipschitz group $\Gamma$ also can be
		defined through
		$$
			\Gamma = \{ x \in \mathcal{G}^{\times +} \cup \mathcal{G}^{\times -} \ : \ \Ad_x(V) \subseteq V \}.
		$$
	\end{exc}
	
\subsection{Properties of Pin and Spin groups}
	
	From the discussion in the previous subsections followed that $\tAd$ gives a
	surjective homomorphism from the versor/Lipschitz group $\Gamma$ to the orthogonal
	group. The kernel of this homomorphism is the set of invertible scalars.
	Because the Pin and Spin groups consist of normalized versors (i.e. $N(x) = \pm 1$)
	we find the following
	
	\begin{thm} \label{thm_double_cover}
		The homomorphisms
		\begin{displaymath}
		\setlength\arraycolsep{2pt}
		\begin{array}{llcl}
			\tAd\!: & \textrm{\emph{Pin}} (s,t) &\to& \textup{O} (s,t) \\[5pt]
			\tAd\!: & \textrm{\emph{Spin}}(s,t) &\to& \textup{SO}(s,t) \\[5pt]
			\tAd\!: & \textrm{\emph{Spin}}^+(s,t) &\to& \textup{SO}^+(s,t)
		\end{array}
		\end{displaymath}
		are surjective with kernel $\{\pm 1\}$.
	\end{thm}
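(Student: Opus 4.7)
The plan is to combine the Cartan--Dieudonn\'e theorem (through its corollary: $\widetilde{\mathrm{Ad}}\!:\Gamma \to \mathrm{O}(V,q)$ is surjective) with the kernel calculation of Lemma \ref{lem_lipschitz_surjective} ($\ker \widetilde{\mathrm{Ad}}\big|_\Gamma = \mathbb{R}^\times$), then cut down to the Pin/Spin/Spin$^+$ subgroups by a rescaling argument. The key observation is that every invertible vector $v \in V^\times$ is a positive scalar multiple of a unit vector $u := v/\sqrt{|v^2|}$ with $u^2 \in \{-1,+1\}$, and scalars lie in the kernel of $\widetilde{\mathrm{Ad}}$, so rescaling factors in a versor decomposition changes nothing at the level of orthogonal transformations.

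For surjectivity onto $\mathrm{O}(s,t)$: given $f \in \mathrm{O}(s,t)$, write $f = \widetilde{\mathrm{Ad}}_{v_1 \cdots v_k}$ by Cartan--Dieudonn\'e, normalize each $v_i \mapsto u_i$ as above, and set $x := u_1 \cdots u_k$. A telescoping calculation from the middle outward gives $x x^\dagger = u_1 \cdots u_k u_k \cdots u_1 = \prod_i u_i^2 \in \{-1,+1\}$, so $x \in \mathrm{Pin}(s,t)$ and $\widetilde{\mathrm{Ad}}_x = f$. For the Spin case, I would use that each reflection $\widetilde{\mathrm{Ad}}_{u_i}$ has determinant $-1$, so $\det f = (-1)^k$; hence $f \in \mathrm{SO}$ forces $k$ to be even, placing $x$ in $\mathcal{G}^+ \cap \mathrm{Pin} = \mathrm{Spin}$. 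For the Spin$^+$ case, assuming $\mathrm{SO}^+(s,t)$ is characterised either as the image of $\widetilde{\mathrm{Ad}}\big|_{\mathrm{Spin}^+}$ or equivalently as the kernel of the spinor norm $\mathrm{sn}(\widetilde{\mathrm{Ad}}_{v_1 \cdots v_k}) := \mathrm{sgn}\!\prod_i v_i^2$ on $\mathrm{SO}$, the condition $f \in \mathrm{SO}^+$ forces $\prod_i u_i^2 = +1$, i.e.\ $x x^\dagger = +1$, so $x \in \mathrm{Spin}^+$.

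For the kernels: Lemma \ref{lem_lipschitz_surjective} gives $\ker(\widetilde{\mathrm{Ad}}\big|_\Gamma) = \mathbb{R}^\times$, and $\mathbb{R}^\times \subseteq \mathcal{G}^+$ so the same kernel applies inside Spin and Spin$^+$. Intersecting with the normalisation conditions, any $x \in \mathbb{R}^\times$ satisfies $x^\dagger = x$ and hence $x x^\dagger = x^2 \ge 0$; the Pin condition $xx^\dagger = \pm 1$ therefore collapses to $x^2 = 1$, giving $x = \pm 1$, and the same conclusion holds verbatim for Spin and Spin$^+$.

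The main obstacle is the Spin$^+$ clause, because it is sensitive to how $\mathrm{SO}^+(s,t)$ is defined in mixed signature (the paper's earlier ``etc.'' is silent on this). In a definite signature this is entirely trivial since $\mathrm{SO} = \mathrm{SO}^+$ and every $u_i^2$ has the same sign, but in indefinite signature the argument requires either taking the tautological route (defining $\mathrm{SO}^+$ as the image) or verifying that the sign $\prod_i u_i^2$ is well-defined on $\mathrm{O}(s,t)$ independent of the decomposition, which is precisely the nontriviality of the spinor norm homomorphism. Once that point is settled, the rest of the theorem is assembled from the three steps above with no further work.
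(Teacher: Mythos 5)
Your argument is correct and follows essentially the same route as the paper: surjectivity from the Cartan--Dieudonn\'e corollary after rescaling versor factors to unit vectors, the kernel computed from Lemma \ref{lem_lipschitz_surjective} intersected with the normalization condition $xx^\dagger = \pm 1$, and parity of reflections for the Spin case. The paper resolves your flagged $\mathrm{Spin}^+$ concern exactly as you surmise, by taking $\mathrm{SO}^+$ to be the image of $\mathrm{Spin}^+$ under $\tAd$ and deferring the identification with the connected component of the identity to Theorem \ref{thm_spin_connected} and the discussion that follows it.
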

	
	\noindent
	The homomorphism onto the \emph{special orthogonal group},
	\begin{displaymath}
		\textrm{SO}(V,q) := \{f \in \textrm{O}(V,q) : \det f = 1 \},
	\end{displaymath}
	follows since it
	is generated by an \emph{even} number of reflections. $\textrm{SO}^+$ denotes
	the connected component of SO containing the identity, which will be explained shortly.
	
	In other words, Theorem \ref{thm_double_cover} shows that
	the Pin and Spin groups are two-sheeted coverings of the orthogonal groups.
	Furthermore, we have the following relations between these groups.
	
	Take a unit versor $\psi = u_1 u_2 \ldots u_k \in \textrm{Pin}(s,t)$.
	If $\psi$ is odd then we can always multiply by a unit vector $e$ so that
	$\psi = \pm \psi ee$ and $\pm \psi e \in \textrm{Spin}(s,t)$.
	Furthermore, when the signature is euclidean we have 
	$\psi \psi^\dagger = u_1^2 u_2^2 \ldots u_k^2 = 1$ for all unit versors.
	The same holds for \emph{even} unit versors in anti-euclidean spaces since the signs cancel out.
	Hence, $\textrm{Spin} = \textrm{Spin}^+$ unless there is mixed signature. But in that case we can
	find two orthogonal unit vectors $e_+,e_-$ such that $e_+^2 = 1$ and $e_-^2 = -1$.
	Since $e_+e_-(e_+e_-)^\dagger = -1$ we then have that $\psi = \psi(e_+e_-)^2$, where
	$\psi e_+e_- (\psi e_+e_-)^\dagger = 1$ if $\psi \psi^\dagger = -1$.
	
	Summing up, we have that, for mixed signature $s,t \geq 1$ 
	and any pair of orthogonal vectors $e_+,e_-$ such that
	$e_+^2 = 1$, $e_-^2 = -1$,
	\begin{displaymath}
	\setlength\arraycolsep{2pt}
	\begin{array}{rcl}
		\textrm{Pin}(s,t) &=& \textrm{Spin}^+(s,t) \cdot \{ 1, e_+, e_-, e_+ e_- \}, \\[5pt]
		\textrm{Spin}(s,t) &=& \textrm{Spin}^+(s,t) \cdot \{ 1, e_+ e_- \},
	\end{array}
	\end{displaymath}
	while for euclidean and anti-euclidean signatures,
	\begin{displaymath}
		\textrm{Pin}(s,t) = \textrm{Spin}^{(+)}(s,t) \cdot \{ 1, e \},
	\end{displaymath}
	for any $e \in V$ such that $e^2 = \pm 1$.
	From the isomorphism $\mathcal{G}^+(\mathbb{R}^{s,t}) \cong \mathcal{G}^+(\mathbb{R}^{t,s})$ we also have
	the signature symmetry 
	$$
		\textrm{Spin}^{(+)}(s,t) \cong \textrm{Spin}^{(+)}(t,s).
	$$
	In all cases,
	\begin{displaymath}
		\Gamma(s,t) = \mathbb{R}^\times \cdot \textrm{Pin}(s,t).
	\end{displaymath}
	
	From these considerations it is sufficient to study the properties of the
	rotor groups in order to understand the Pin, Spin and orthogonal groups.
	Fortunately, it turns out that the rotor groups have very convenient topological
	features.
	
	\begin{thm} \label{thm_spin_connected}
		The groups $\textrm{\emph{Spin}}^+(s,t)$ are pathwise connected for $s \geq 2$ or $t \geq 2$.
	\end{thm}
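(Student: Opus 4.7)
The plan is to prove pathwise connectedness in two steps: first, construct an explicit continuous path in $\textrm{Spin}^+(s,t)$ from $+1$ to $-1$; second, show that any rotor can be continuously deformed to $\pm 1$ within $\textrm{Spin}^+(s,t)$. The hypothesis $s \geq 2$ or $t \geq 2$ enters both steps in an essential way.

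For the first step, I would use the hypothesis to select two orthogonal unit vectors $e_1, e_2 \in V$ with a common signature $e_1^2 = e_2^2 = \varepsilon \in \{\pm 1\}$, and set $B := e_1 e_2$. Then $B^2 = -e_1^2 e_2^2 = -1$, so the path $\theta \mapsto \cos\theta + \sin\theta\, B$ interpolates from $1$ to $-1$ as $\theta$ runs from $0$ to $\pi$. I would verify this path lies in $\textrm{Spin}^+(s,t)$ by the explicit factorisation $\cos\theta + \sin\theta\, B = e_1(\varepsilon\cos\theta\, e_1 + \sin\theta\, e_2)$ as a product of two unit vectors whose squares both equal $\varepsilon$, and by checking directly that the product times its reverse equals $\cos^2\theta - \sin^2\theta \cdot B^2 = 1$.

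For the second step, given any $\psi = u_1 u_2 \cdots u_{2k} \in \textrm{Spin}^+(s,t)$ with unit vectors $u_i^2 = \pm 1$ (and an even number of negative-signature factors, ensuring $\psi\psi^\dagger = 1$), the idea is to continuously deform each $u_i$ through the signature class $S_{\varepsilon_i} := \{v \in V : v^2 = \varepsilon_i\}$ to a common reference vector, producing a path $\psi(t) = u_1(t) u_2(t) \cdots u_{2k}(t)$. At each time $t$ this remains in $\textrm{Spin}^+$ since signatures (and their parity) are preserved along the way. This reduces everything to the topology of the spheres $S_{\pm 1}$, which are path-connected precisely when $s \geq 2$ (for $S_{+1}$) respectively $t \geq 2$ (for $S_{-1}$).

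The main obstacle occurs in mixed signature with $s \geq 2$ but $t = 1$ (or symmetrically), where $S_{-1}$ is a two-sheeted hyperboloid. Since negative-signature factors appear in even number, one exploits that a pair $\epsilon_1 \epsilon_2$ is invariant under the simultaneous flip $(\epsilon_1, \epsilon_2) \mapsto (-\epsilon_1, -\epsilon_2)$; by negating both vectors of any mismatched pair one arranges all negative-signature factors in a common sheet, at the cost of an overall $\pm 1$ which is absorbed via the path $\gamma_0$ from Step~1. This combinatorial bookkeeping is the delicate part. A cleaner but less self-contained alternative is to invoke path-lifting for the two-sheeted homomorphism $\widetilde{\textup{Ad}} : \textrm{Spin}^+(s,t) \to \textup{SO}^+(s,t)$ of Theorem \ref{thm_double_cover}: since $\textup{SO}^+(s,t)$ is by definition the identity component of $\textup{O}(s,t)$, lifting a path from $\widetilde{\textup{Ad}}_\psi$ to the identity reduces the problem to Step~1.
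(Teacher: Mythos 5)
Your strategy is genuinely different from the paper's. The paper reshuffles the factors (using $ab=b'a$ with $b'^2=b^2$) so that positive-square vectors come first, groups them into same-signature pairs $R_i = a_ib_i$ satisfying $R_iR_i^\dagger=1$, writes each such simple rotor as $\pm e^{\phi_i a_i\wedge b_i}$, and observes that $e^{\mathbb{R}a_i\wedge b_i}\subseteq\Spin^+$ is a circle, a line or a hyperbola through $1$; connectedness of the quadrics $S_{\pm 1}$ is never invoked, so the two-sheeted hyperboloid causes no trouble. You instead deform each vector factor individually inside its quadric, which is more concrete but forces you to confront the two-sheeted case $t=1$ (or $s=1$) head on. Your Step~1 is correct and is essentially the paper's final remark.

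The bookkeeping you propose for the two-sheeted case is a no-op, not a fix. If $u$ and $w$ lie in opposite sheets of $S_{-1}$, then so do $-u$ and $-w$: negating \emph{both} vectors of a mismatched pair leaves it mismatched, and since $(-u)(-w)=uw$ there is also no ``overall $\pm 1$'' to absorb. The repair is simpler than what you wrote. Drop the demand for a single common target: deform each negative-square factor to $+f$ or to $-f$ depending on which sheet it occupies (each sheet is the graph of $x\mapsto\pm\sqrt{1+|x|^2}$ over $\mathbb{R}^s$, hence path-connected for every $s$), deform all positive-square factors to $e_1$ (this is where $s\ge 2$ enters), and anticommute: the deformed product equals $\pm e_1^{2p}f^{2q}=\pm(-1)^q=\pm 1$, and Step~1 finishes. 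Equivalently, negate exactly \emph{one} vector per mismatched pair; that really does change the product by $-1$, and Step~1 again absorbs the sign.

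Your path-lifting alternative is circular in the logical order of this text. The surjection $\tAd\colon\Spin^+\to\SO^+$ of Theorem~\ref{thm_double_cover}, together with the identification of $\SO^+$ as the identity component of $\SO$, is only completed \emph{after} Theorem~\ref{thm_spin_connected}: the paragraph immediately following its proof notes that connectedness of $\Spin^+$, via continuity of $\tAd$, is exactly what shows the image of the rotor group to be a connected subgroup containing the identity. Using connectedness of $\SO^+$ (together with a path-lifting property of $\tAd$ that has not been established at this point) to deduce connectedness of $\Spin^+$ therefore assumes what is to be proved.
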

	\begin{proof}
		Pick a rotor $R \in \textrm{Spin}^+(s,t)$, where $s$ or $t$ is greater than one.
		Then $R = v_1 v_2 \ldots v_{2k}$ with an even number of $v_i \in V$
		such that $v_i^2=1$ and an even number such that $v_i^2=-1$.
		Note that for any two invertible vectors $a,b$ we have
		$ab = aba^{-1}a = b'a$, where $b'^2 = b^2$.
		Hence, we can rearrange the vectors so that those with positive square come first, i.e.
		\begin{equation}
			R = a_1 b_1 \ldots a_p b_p a_1' b_1' \ldots a_q' b_q' = R_1 \ldots R_p R_1' \ldots R_q',
		\end{equation}
		where $a_i^2 = b_i^2 = 1$ and $R_i = a_i b_i = a_i * b_i + a_i \wedge b_i$
		(similarly for $a_i'^2 = b_i'^2 = -1$)
		are so called \emph{simple rotors} which are connected to either 1 or -1.
		This holds because $1 = R_i R_i^\dagger = (a_i * b_i)^2 - (a_i \wedge b_i)^2$,
		so we can (see Exercise \ref{exc_simple_rotor}) write 
		$R_i = \pm e^{\phi_i a_i \wedge b_i}$ for some $\phi_i \in \mathbb{R}$.
		Depending on the signature of the plane associated to $a_i \wedge b_i$, 
		i.e. on the sign of $(a_i \wedge b_i)^2 \in \mathbb{R}$,
		the set $e^{\mathbb{R} a_i \wedge b_i} \subseteq \textrm{Spin}^+$
		forms either a circle, a line or a hyperbola. In any case, it goes through the unit element.
		Finally, since $s>1$ or $t>1$ we can connect -1 to 1 with for example
		the circle $e^{\mathbb{R}e_1e_2}$, where $e_1, e_2$ are two orthonormal basis elements with
		the same signature.
	\end{proof}
	
	\noindent
	Continuity of the map $\tAd$ (which follows from the continuous (smooth)
	operation of taking products and inverses in $\mathcal{G}^\times$)
	now implies that the set of rotations
	represented by rotors, i.e. $\textrm{SO}^+$, 
	forms a connected subgroup containing the identity.
	For euclidean and lorentzian signatures, we have an even simpler situation:
	
	\begin{thm} \label{thm_spin_simply_conn}
		The groups $\textrm{\emph{Spin}}^+(s,t)$ are simply connected 
		for $(s,t) = (n,0)$, $(0,n)$, $(1,n)$ or $(n,1)$, where $n \geq 3$.
		Hence, these are the \emph{universal covering groups} of $\textrm{\emph{SO}}^+(s,t)$.
	\end{thm}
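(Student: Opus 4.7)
The plan is to leverage the double covering structure $\tAd\!: \mathrm{Spin}^+(s,t) \to \mathrm{SO}^+(s,t)$ established in Theorem \ref{thm_double_cover}. Since the kernel $\{\pm 1\}$ is a finite (hence discrete) central subgroup of the Lie group $\mathrm{Spin}^+$, the map $\tAd$ is a smooth covering map of topological groups. Combined with Theorem \ref{thm_spin_connected} (which gives that $\mathrm{Spin}^+(s,t)$ is path-connected under the dimension hypotheses here), this reduces everything to showing that $\pi_1(\mathrm{Spin}^+(s,t))$ is trivial.

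Next I would feed the covering $\{\pm 1\} \hookrightarrow \mathrm{Spin}^+ \twoheadrightarrow \mathrm{SO}^+$ into the long exact sequence of homotopy groups:
\begin{displaymath}
\pi_1(\{\pm 1\}) \to \pi_1(\mathrm{Spin}^+) \to \pi_1(\mathrm{SO}^+) \xrightarrow{\partial} \pi_0(\{\pm 1\}) \to \pi_0(\mathrm{Spin}^+).
\end{displaymath}
The outer terms vanish ($\{\pm 1\}$ is discrete so $\pi_1 = 0$, and $\mathrm{Spin}^+$ is connected), and the proof of Theorem \ref{thm_spin_connected} constructs an explicit path in $\mathrm{Spin}^+$ from $1$ to $-1$ via, say, $t \mapsto e^{t\,e_1 e_2 / 2}$ on $[0,2\pi]$; this path is a lift of a loop in $\mathrm{SO}^+$ and shows that $\partial$ is surjective. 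Hence one obtains the short exact sequence $0 \to \pi_1(\mathrm{Spin}^+) \to \pi_1(\mathrm{SO}^+) \to \mathbb{Z}_2 \to 0$, so simple connectedness of $\mathrm{Spin}^+$ is equivalent to $\pi_1(\mathrm{SO}^+(s,t)) \cong \mathbb{Z}_2$.

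To finish, I would compute $\pi_1(\mathrm{SO}^+(s,t))$ signature by signature. For $(n,0)$ and $(0,n)$ with $n \geq 3$ this is the classical fact $\pi_1(\mathrm{SO}(n)) \cong \mathbb{Z}_2$, proved inductively from the fiber bundle $\mathrm{SO}(n-1) \hookrightarrow \mathrm{SO}(n) \to S^{n-1}$: for $n \geq 3$ the base $S^{n-1}$ is simply connected, so the inclusion $\mathrm{SO}(n-1) \hookrightarrow \mathrm{SO}(n)$ induces a surjection on $\pi_1$; the base case $\pi_1(\mathrm{SO}(3)) = \mathbb{Z}_2$ can be extracted from the identification $\mathrm{SO}(3) \cong \mathbb{RP}^3$. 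For the (anti-)lorentzian cases $(1,n)$ and $(n,1)$ one uses a Cartan/polar decomposition: every element of $\mathrm{SO}^+(1,n)$ factors uniquely as a rotation in $\mathrm{SO}(n)$ times a boost, and the boost factor parametrises a contractible space, producing a deformation retract $\mathrm{SO}^+(1,n) \simeq \mathrm{SO}(n)$ and hence the same fundamental group.

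The main obstacle is obtaining $\pi_1(\mathrm{SO}(n)) \cong \mathbb{Z}_2$, since this is the one step that lies outside the purely Clifford-algebraic framework developed so far and requires genuine input from algebraic topology (fibration long exact sequences, or at minimum the identification $\mathrm{SO}(3) \cong \mathbb{RP}^3$ plus induction). An in-house alternative would be to give a direct null-homotopy argument: decompose an arbitrary rotor-loop $\gamma\!: S^1 \to \mathrm{Spin}^+$ into a finite product of simple-rotor factors $e^{\phi(t) B(t)}$ as in the proof of Theorem \ref{thm_spin_connected}, and contract each factor through its one-parameter subgroup; however, making the factorisation continuous in $t$ and independent of choices is delicate, so routing through the standard topological input is cleaner. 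Finally, once $\pi_1(\mathrm{Spin}^+(s,t)) = 0$ is established, the statement that $\mathrm{Spin}^+$ is the universal cover of $\mathrm{SO}^+$ follows from the uniqueness of universal covers of connected Lie groups.
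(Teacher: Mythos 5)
Your proposal is correct and takes essentially the same route as the paper: both reduce the claim to the computation $\pi_1(\SO^+(s,t)) \cong \mathbb{Z}_2 = \ker\tAd|_{\Spin^+}$, combine this with connectedness of $\Spin^+$ (Theorem~\ref{thm_spin_connected}), and invoke standard covering-space theory to conclude that the two-sheeted cover $\tAd$ is universal. The paper states this more tersely, quoting the general identity $\pi_1(\SO^+(s,t)) = \pi_1(\SO(s)) \times \pi_1(\SO(t))$ and pointing at Theorem~\ref{thm_bott_so} for the euclidean values; you instead unwind the long exact sequence of the covering explicitly and prove $\pi_1(\SO(n)) \cong \mathbb{Z}_2$ by the sphere fibration induction from $\SO(3) \cong \mathbb{RP}^3$, and for lorentzian signature you invoke the polar/Cartan retraction of $\SO^+(1,n)$ onto $\SO(n)$. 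These are two sourcings of the same classical facts — the paper's cited identity $\pi_1(\SO^+(s,t)) = \pi_1(\SO(s)) \times \pi_1(\SO(t))$ is itself a consequence of the deformation retraction onto the maximal compact $\SO(s)\times\SO(t)$, which is exactly the Cartan decomposition you use — so the arguments coincide in substance. Your version spells out the topological input that the paper delegates to references, which is a legitimate and perhaps preferable presentation, but it is not a genuinely different approach.
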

	\begin{proof}
		This follows from the fact that, for these signatures,
		$$
			\pi_1\big(\textrm{SO}^+(s,t)\big) = \mathbb{Z}_2 = \ker \tAd|_{\Spin^+(s,t)}
		$$
		(in general $\pi_1\big(\textrm{SO}^+(s,t)\big) 
		= \pi_1\big(\textrm{SO}(s)\big) \times \pi_1\big(\textrm{SO}(t)\big)$;
		cp. e.g. Theorem \ref{thm_bott_so} for the euclidean case),
		together with the fact that the rotor groups are connected,
		and the properties of the universal covering groups
		(see e.g. Theorem VII.6.4 in \cite{simon} and references therein).
	\end{proof}
	
	This sums up the the situation nicely for higher-dimensional euclidean
	and lorentzian spaces: The Pin group, which is a double-cover
	of the orthogonal group, consists of 
	two (euclidean case) or four (lorentzian case) simply connected components.
	These components are copies of the rotor group.

	\begin{exc} \label{exc_simple_rotor}
		Verify that if $R = ab = a*b + a \wedge b$, with $a,b \in \mathcal{G}^1$,
		and $RR^\dagger = 1$, then $R = \sigma e^{\phi a \wedge b}$ for some 
		$\phi \in \mathbb{R}$ and $\sigma \in \{\pm 1\}$.
	\end{exc}

\subsection{The bivector Lie algebra $\mathcal{G}^2$}

	We have reduced the study of the orthogonal groups to
	a study of the rotor groups, which in all higher dimensions
	are connected, smooth Lie groups.
	An understanding of such groups is provided by their local
	structure, i.e. their corresponding tangent space Lie algebra, which
	in this case is simply the space $\mathcal{G}^2$ of bivectors
	with a Lie product given by the commutator bracket $[\cdot,\cdot]$
	(recall Exercise \ref{exc_bivectors_closed}).
	To see this, take any smooth curve $R: \mathbb{R} \to \Spin^+$, $t \to R(t)$ 
	through the unit element; $R(0) = 1$. 
	We claim that the tangent $R'(0)$ is a bivector. Namely,
	differentiating the identity $R(t)R(t)^\dagger = 1$ at $t=0$ gives 
	\begin{equation} \label{rotor_prime_reverse}
		R'(0)^\dagger = -R'(0).
	\end{equation}
	Furthermore, since $R(t) \in \mathcal{G}^+$ for all $t$, we must have 
	$R'(0) \in \bigoplus_{k=0,1,\ldots} \mathcal{G}^{2+4k}$.
	We write $R'(0) = B + Z$, where $B \in \mathcal{G}^2$ and $Z$ contains
	no grades lower than 6. 
	Also, because $R(t) \in \tilde{\Gamma}$ along the curve, we have for any $v \in V$
	$$
		w(t) := R(t)vR(t)^\dagger \in V  \qquad \forall t.
	$$
	Again, differentiating this expression and using 
	\eqref{rotor_prime_reverse} yields
	$$
		V \ni w'(0) = R'(0)v - vR'(0) = -2 v \liprod R'(0)
		= -2 v \liprod B - 2 v \liprod Z.
	$$
	Inspecting
	the grades involved in this equation, it follows
	that $v \liprod Z = 0$ for all $v \in V$, and by Lemma \ref{lem_grade_commute},
	that $Z = 0$. 

	The commutator product of $\mathcal{G}^2$ is 
	inherited from the geometric product of $\Spin^+$,
	which through the twisted adjoint action corresponds to taking products of orthogonal transformations.
	In fact, there is a canonical Lie algebra isomorphism between
	the bivector space $\mathcal{G}^2$ 
	and the algebra $\so(V,q)$ of antisymmetric transformations of $V$
	(which is the Lie algebra of $\SO(V,q)$), given by
	(cp. Exercise \ref{exc_func_bivec_corresp})
	\begin{eqnarray*}
		\so & \cong & \mathfrak{spin} = \mathcal{G}^2 \\
		f = \ad_B & \leftrightarrow & B = \frac{1}{2} \sum_{i,j} e^i*f(e^j)\ e_i \wedge e_j,
	\end{eqnarray*}
	where $\{e_i\}_i$ is any general basis of $V$.
	One verifies, by expanding in the geometric product, that
	(Exercise \ref{exc_spin_commutators})
	\begin{equation} \label{spin_commutators}
		\frac{1}{2} [e_i \wedge e_j, e_k \wedge e_l] 
		= (e_j*e_k) e_i \wedge e_l - (e_j*e_l) e_i \wedge e_k + (e_i*e_l) e_j \wedge e_k - (e_i*e_k) e_j \wedge e_l.
	\end{equation}
	
	\begin{rem}
		Actually, this bivector Lie algebra is more general than it might first
		seem. Namely, one can show (see e.g. \cite{doran_et_al} or \cite{doran_lasenby})
		that the Lie algebra $\mathfrak{gl}$ of the general linear group can be represented as
		a bivector algebra. From the fact that any finite-dimensional Lie algebra
		has a faithful finite-dimensional representation
		(Ado's Theorem for characteristic zero, Iwasawa's Theorem for nonzero characteristic,
		see e.g. \cite{jacobson})
		it then follows that \emph{any} finite-dimensional real or complex Lie algebra can be represented 
		as a bivector algebra.
	\end{rem}
	
	Recall that (see Exercise \ref{exc_exponential}), for any choice of norm on $\mathcal{G}$, 
	the exponential defined by
	$e^x := \sum_{k=0}^\infty \frac{x^k}{k!}$
	converges and satisfies
	$$
			e^x e^{-x} = e^{-x} e^x = 1
	$$
	and $xe^x = e^xx$ for any $x \in \mathcal{G}$.
	The following holds for any signature.
	
	\begin{thm} \label{thm_exp_in_spin}
		For any bivector $B \in \mathcal{G}^2$ we have that $\pm e^B \in \textrm{\emph{Spin}}^+$.
	\end{thm}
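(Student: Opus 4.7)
The plan is to establish the three membership conditions for $\mathrm{Spin}^+$ in turn: evenness, unit norm under $xx^\dagger$, and membership in the versor (equivalently Lipschitz) group. Once $e^B \in \mathrm{Spin}^+$ is shown, the sign ambiguity is handled by noting that $-1 \in \mathrm{Spin}^+$ (e.g., for any $v \in V^\times$, $(-v)(v^{-1})^\dagger v v^\dagger = -1$ up to normalization, or more directly $v \cdot (-v/v^2) = -1$ expresses $-1$ as a product of two vectors, with $(-1)(-1)^\dagger = 1$ and $-1 \in \mathcal{G}^+$).

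First I would observe that since $B \in \mathcal{G}^2 \subseteq \mathcal{G}^+$ and $\mathcal{G}^+$ is closed under the geometric product, every power $B^k$ lies in $\mathcal{G}^+$, so $e^B \in \mathcal{G}^+$. Next, since reversion is linear and continuous (on the finite-dimensional space $\mathcal{G}$), and $B^\dagger = -B$ by Definition \ref{def_operations}, applying $\dagger$ termwise to the series gives $(e^B)^\dagger = e^{B^\dagger} = e^{-B}$. Because $B$ commutes with itself, $e^B e^{-B} = 1$, so $(e^B)(e^B)^\dagger = 1$.

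The crucial step is showing $e^B \in \tilde{\Gamma} = \Gamma$ (Theorem \ref{thm_lipschitz}). Since $e^B$ is even, the twisted and ordinary adjoint actions coincide on it, so it suffices to verify $e^B v e^{-B} \in V$ for all $v \in V$. Here I would use Exercise \ref{exc_bivec_adj}: the derivation $\mathrm{ad}_B(x) = [B,x]$ is grade-preserving, so $\mathrm{ad}_B(V) \subseteq V$. Differentiating $\gamma(t) := e^{tB} v e^{-tB}$ yields $\gamma'(t) = [B,\gamma(t)] = \mathrm{ad}_B \gamma(t)$ with $\gamma(0) = v$, so $\gamma(t) = e^{t\,\mathrm{ad}_B}(v)$. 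Since $\mathrm{ad}_B$ restricts to a linear endomorphism of the finite-dimensional space $V$, its exponential is again a linear endomorphism of $V$, and thus $e^B v e^{-B} = e^{\mathrm{ad}_B}(v) \in V$. This places $e^B$ in $\tilde{\Gamma}$, hence in $\Gamma$ by Theorem \ref{thm_lipschitz}.

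Combining these: $e^B$ is an even versor with $(e^B)(e^B)^\dagger = 1$, so $e^B \in \mathrm{Pin} \cap \mathcal{G}^+ = \mathrm{Spin}$ and moreover $e^B \in \mathrm{Spin}^+$. Since $-1 \in \mathrm{Spin}^+$ and $\mathrm{Spin}^+$ is a group, $-e^B \in \mathrm{Spin}^+$ as well. The main obstacle is really just the identification $e^B v e^{-B} = e^{\mathrm{ad}_B}(v)$ together with invariance of $V$ under $\mathrm{ad}_B$; everything else is a direct manipulation of the power series using $B^\dagger = -B$ and the gradedness of $B$.
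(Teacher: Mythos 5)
Your proof is correct and takes essentially the same approach as the paper's: both reduce the claim to verifying evenness, $e^B(e^B)^\dagger = 1$ via $B^\dagger = -B$, and $e^B V e^{-B} \subseteq V$ using Theorem \ref{thm_lipschitz}, and both derive the last inclusion from the grade-preservation of $\ad_B$ (Exercise \ref{exc_bivec_adj}). The only cosmetic difference is that you close the argument via the ODE identity $e^{tB}ve^{-tB} = e^{t\,\ad_B}(v)$, whereas the paper pairs $e^{tB}ve^{-tB}$ against the orthogonal complement of $V$ and invokes real-analyticity to conclude the pairing vanishes identically.
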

	\begin{proof}
		It is obvious that $\pm e^B$ is an even multivector and that $e^B (e^B)^\dagger = e^B e^{-B} = 1$.
		Hence, it is sufficient to prove that $\pm e^B \in \Gamma$, or by Theorem \ref{thm_lipschitz},
		that $e^B V e^{-B} \subseteq V$. 
		
		Let $W := \{ y \in \mathcal{G} : v*y=0\ \forall v \in V \} = \oplus_{k \neq 1} \mathcal{G}^k$
		and choose $x \in V$ and $y \in W$.
		Define a map $f: \mathbb{R} \to \mathbb{R}$ by
		$$
			f(t) := (e^{tB} x e^{-tB}) * y.
		$$
		$f$ is obviously a real analytic function with derivatives
		$$
			f'(t) = (Be^{tB}xe^{-tB} - e^{tB}xe^{-tB}B) * y = [B, e^{tB}xe^{-tB}] * y,
		$$
		$$
			f''(t) = [B, Be^{tB}xe^{-tB} - e^{tB}xe^{-tB}B] * y = \left[ B, [B, e^{tB}xe^{-tB}] \right] * y,
		$$
		etc.
		It follows that $f^{(k)}(0) = \ad_B^{k}(x) * y = 0$ 
		for all $k \ge 0$, since $\ad_B: V \to V$ (see Exercise \ref{exc_bivec_adj}).
		Hence, $f(t) = 0\ \forall t \in \mathbb{R}$ and $e^{tB}xe^{-tB} \in V$.
	\end{proof}
	
	\noindent
	In other words, $\pm e^{\mathcal{G}^2} \subseteq \Spin^+$.
	Actually, the converse inclusion holds for (anti-) euclidean and lorentzian spaces.
	
	\begin{thm} \label{thm_spin_is_exp}
		For $(s,t) = (n,0)$, $(0,n)$, $(1,n)$ or $(n,1)$, we have 
		\begin{displaymath}
			\textrm{\emph{Spin}}^+(s,t) = \pm e^{\mathcal{G}^2(\mathbb{R}^{s,t})},
		\end{displaymath}
		i.e. any rotor can be written as (minus) the exponential of a bivector.
		The minus sign is only required in the lower-dimensional cases 
		$(0,0)$, $(1,0)$, $(0,1)$, $(1,1)$, $(1,2)$, $(2,1)$, $(1,3)$ and $(3,1)$.
	\end{thm}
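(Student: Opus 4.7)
The inclusion $\pm e^{\mathcal{G}^2} \subseteq \Spin^+$ being already settled by Theorem \ref{thm_exp_in_spin}, the task is the reverse inclusion. My plan is to lift the classical block-diagonal normal form for orthogonal transformations through the two-to-one covering homomorphism $\tAd\!: \Spin^+(s,t) \to \SO^+(s,t)$ of Theorem \ref{thm_double_cover}: given $R \in \Spin^+(s,t)$, I will build an explicit bivector $B$ with $\tAd_{e^B} = \tAd_R$, after which $R = \pm e^B$ follows from $\ker \tAd|_{\Spin^+} = \{\pm 1\}$.

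For the euclidean case $(n,0)$, the classical normal form for rotations provides an orthonormal basis $\{e_1,\ldots,e_n\}$ in which $f := \tAd_R$ is block-diagonal, with $2\times 2$ rotation blocks of angles $\theta_k$ acting on mutually orthogonal coordinate planes $\Span\{e_{2k-1},e_{2k}\}$, plus possibly some fixed axes. Setting $B := \sum_k \frac{\theta_k}{2}\, e_{2k-1} \wedge e_{2k}$, the summands are $2$-blades supported on disjoint orthonormal pairs and therefore commute under the Clifford product, so $e^B = \prod_k e^{(\theta_k/2)\, e_{2k-1}\wedge e_{2k}}$, and a direct computation along the lines of Example \ref{exmp_plane_rotation} verifies $\tAd_{e^B} = f$. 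The anti-euclidean case $(0,n)$ then follows from $\Spin^+(0,n) \cong \Spin^+(n,0)$. In the lorentzian signatures $(1,n)$ with $n \geq 4$, a \emph{semisimple} $f \in \SO^+(1,n)$ admits the same kind of block decomposition provided one of the $2\times 2$ blocks is allowed to be a Lorentz boost of rapidity $\phi$ on a mixed-signature plane $\Span\{e_0,e_i\}$; this contributes a 2-blade $\frac{\phi}{2}\, e_0 \wedge e_i$ of positive square, which still commutes with the remaining spatial rotation 2-blades, and the argument carries over verbatim.

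The main obstacle is the \emph{parabolic} (non-semi\-simple) case, which is forced to occur in the low-dimensional lorentzian signatures $(1,2)$, $(2,1)$, $(1,3)$, $(3,1)$ where the null cone leaves insufficient spatial room. Such an $f$ has $f-I$ nilpotent but nonzero and admits no block-diagonal form, but is covered in $\Spin^+$ by null rotors of the form $1+N$ with $N$ a nonzero null bivector satisfying $N^2=0$; since then $e^N = 1+N$, one of the two lifts lies in $e^{\mathcal{G}^2}$. The other lift $-(1+N)$ is generically \emph{not} of the form $e^C$: inspecting the three possibilities $C^2<0$, $C^2=0$, $C^2>0$ for the scalar part of $e^C$, one checks that the sign $-1$ can be absorbed into a single exponential only by adding a commuting spatial 2-blade $B_s$ with $B_s^2 = -1$, so that $e^{\pi B_s}=-1$ combines with $e^N$ into $e^{N+\pi B_s}$ without altering the action on $V$. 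This requires at least two spatial dimensions orthogonal to the plane carrying the parabolic action, and precisely this room is absent in the exceptional signatures listed. The remaining tiny cases $(0,0), (1,0), (0,1), (1,1)$ where $\Spin^+$ is finite or one-parameter are disposed of by direct inspection.
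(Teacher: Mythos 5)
The paper itself does not prove this theorem: it cites Riesz and gives only the one-line summary that every isometry of a euclidean or lorentzian space is generated by a single infinitesimal transformation (i.e.\ that $\exp\colon\so(s,t)\to\SO^+(s,t)$ is onto for these signatures). Your strategy of lifting an orthogonal normal form through the double cover $\tAd$ and then invoking $\ker \tAd|_{\Spin^+}=\{\pm1\}$ is the natural way to make this precise, and your euclidean argument is essentially correct: the block-diagonal form of $f\in\SO(n)$ gives a bivector $B$ that is a sum of commuting $2$-blades on orthogonal coordinate planes, and $\tAd_{e^B}=f$ forces $R=\pm e^B$ (with the sign absorbable for $n\ge 2$ by shifting one angle by $2\pi$).

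The lorentzian part, however, has genuine gaps. First, parabolic (non-semisimple) elements occur in $\SO^+(1,n)$ for \emph{every} $n\ge 2$; they are not ``forced to occur in the low-dimensional signatures.'' What is dimension-dependent is only whether the $-1$ can be absorbed, not whether parabolics appear. Second, the claim that a parabolic $f$ has $f-I$ nilpotent is false once $n\ge 3$: a parabolic in $\SO^+(1,n)$ is in general a product of a pure null rotation on a $3$-dimensional degenerate invariant subspace and an ordinary rotation on its spatial orthogonal complement, so $f-I$ has nonzero semisimple part. Correspondingly, the covering rotor is not just $\pm(1+N)$ with $N$ a null $2$-blade but $\pm(1+N)e^{B_s}$ with a commuting spatial bivector $B_s$. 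To make the proof rigorous you need the Jordan decomposition $f=su$ into commuting semisimple and unipotent factors, a normal form putting $u$ on a degenerate $3$-space and $s$ on the complement with at most one boost block, and a verification that the resulting logarithms $\log s$ and $\log u$ commute in $\so(1,n)$ --- this is precisely the content of Riesz's theorem that the paper declines to reproduce. Your concluding observation, that absorbing $-1$ requires a spatial $2$-blade commuting with $N$ that squares to $-1$ and that this needs two further spatial dimensions, is the right explanation for the exceptional signatures $(1,2),(2,1),(1,3),(3,1)$, but it should be attached to the null-rotation component of the Jordan form rather than to $f$ itself.
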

	
	\noindent
	The proof can be found in \cite{riesz}. Essentially, it relies on the fact
	that any isometry of an euclidean or lorentzian space can be generated
	by a single infinitesimal transformation. This holds for these spaces \emph{only},
	so that for example $\textrm{Spin}^+(2,2) \neq \pm e^{\mathcal{G}^2(\mathbb{R}^{2,2})}$,
	where for instance the rotor
	$$
		\pm e_1e_2e_3e_4 e^{\beta(e_1e_2 + 2e_1e_4 + e_3e_4)}, \qquad \beta>0, 
	$$
	cannot be reduced to a single exponential; 
	see \cite{lounesto} and \cite{riesz} pp. 150-152.

	Similarly, we see below that even though one can always decompose a bivector into a
	sum of at most $\frac{1}{2} \cdot \dim V$ basis blades, 
	the decomposition is particularly nice only in the 
	euclidean and lorentzian cases.

	\begin{thm}	\label{thm_bivec_decomp}
		Given any bivector $B \in \mathcal{G}^2(V)$,
		there exists a basis $\{f_1,\ldots,f_n\}$ of $V$ such that
		\begin{equation} \label{bivec_decomp}
			B = \beta_{12} f_1 \wedge f_2 + \beta_{34} f_3 \wedge f_4 + \ldots + \beta_{2k-1,2k} f_{2k-1} \wedge f_{2k},
		\end{equation}
		where $2k \le n$.
		If $V$ is (anti-)euclidean then the basis can be chosen orthonormal,
		while if $V$ is (anti-)lorentzian then 
		either the $\{f_i\}_i$ can be chosen to form an orthonormal basis, or we need to
		choose $f_1 = e_0+e_1$, where $\{e_0,e_1,f_2,\ldots,f_n\}$ 
		is an orthonormal basis of $\mathbb{R}^{1,n}$.
		For other signatures,
		such an \emph{orthogonal} decomposition is not always possible.
	\end{thm}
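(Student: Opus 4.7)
The plan is to translate the decomposition into a statement about the antisymmetric endomorphism $\ad_B\!:V\to V$ via the canonical correspondence $B \leftrightarrow \ad_B$ from Exercise \ref{exc_func_bivec_corresp}. If $\{f_1,\ldots,f_n\}$ is a basis adapted to the decomposition \eqref{bivec_decomp}, with the spans $P_j = \Span\{f_{2j-1}, f_{2j}\}$ pairwise $q$-orthogonal and $\Span\{f_{2k+1},\ldots,f_n\}$ lying in $\ker\ad_B$, then each $P_j$ is a 2-dimensional $\ad_B$-invariant subspace. Conversely, given such an orthogonal decomposition of $V$ into $\ad_B$-invariant planes plus a kernel, one can read off the corresponding bivector. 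Thus the theorem reduces to finding, for each signature, a $q$-orthogonal decomposition of $V$ into $\ad_B$-invariant subspaces of dimension $\le 2$.

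\textbf{Euclidean/anti-euclidean case.} First I would invoke the real spectral theorem for antisymmetric operators on a (positive- or negative-) definite inner product space: $\ad_B$ has purely imaginary eigenvalues appearing in conjugate pairs, and $V$ decomposes orthogonally as $V = \ker\ad_B \oplus P_1 \oplus \cdots \oplus P_k$, where each $P_j$ is a 2-dimensional $\ad_B$-invariant plane on which $\ad_B$ acts as a rotation. Picking an orthonormal basis compatible with this decomposition and rescaling gives \eqref{bivec_decomp} with $\beta_{2j-1,2j}$ the rotation magnitudes.

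\textbf{Lorentzian case.} The idea is an induction on $\dim V$: pick a minimal non-zero $\ad_B$-invariant subspace $P$ (which has dimension $1$ or $2$ over $\mathbb{R}$, arising from a real or complex-conjugate eigenvalue pair of $\ad_B$ acting on $V_{\mathbb{C}}$). If the restriction of $q$ to $P$ is non-degenerate, then $V = P \oplus P^\perp$ as a $q$-orthogonal sum of $\ad_B$-invariant subspaces (since $\ad_B$ is $q$-antisymmetric, $P^\perp$ is also $\ad_B$-invariant), and $P^\perp$ is again euclidean or lorentzian, so we recurse. The only obstruction is when $q|_P$ is degenerate, which (in lorentzian signature) forces $P$ to contain a null line $\mathbb{R}u$ with $u = e_0 + e_1$ in some orthonormal basis. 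The hard part will be showing that in this exceptional case one can choose $f_1 := u$ and a vector $f_2$ such that $\ad_B(f_1) \in \Span\{f_1\}$ or $\Span\{f_2\}$, and that the $q$-orthogonal complement of $\Span\{f_1,f_2\}$ inside a suitably chosen hyperplane is euclidean, so that the remaining bivector can be decomposed orthonormally by the euclidean case. I expect this to require carefully analyzing the Jordan structure of $\ad_B$ on the invariant null plane.

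\textbf{Other signatures.} Finally I would exhibit the counterexample already mentioned after Theorem \ref{thm_spin_is_exp}: in $\mathcal{G}^2(\mathbb{R}^{2,2})$ the bivector $B = e_1 \wedge e_2 + 2 e_1 \wedge e_4 + e_3 \wedge e_4$ cannot be put in the form \eqref{bivec_decomp} with \emph{orthogonal} $\{f_i\}$. The verification would proceed by showing that the characteristic polynomial of $\ad_B$ has a repeated root but $\ad_B$ is non-diagonalizable, preventing the required $q$-orthogonal block decomposition. The main obstacle in the whole proof is really the lorentzian degenerate case, where the nilpotent part of $\ad_B$ on the null plane forces the appearance of the null basis vector $e_0 + e_1$.
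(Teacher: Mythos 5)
Your translation to the antisymmetric operator $\ad_B$ is the same underlying idea the paper uses, since the paper's key map $W \ni x \mapsto x \liprod B$ is just $-\tfrac{1}{2}\ad_B$ restricted to $W$. For the definite case you invoke the real spectral theorem for skew-symmetric operators as a black box; the paper instead reconstructs the essential ingredient (an $\ad_B$-invariant subspace of dimension $\le 2$ always exists, by complexifying and finding an eigenvalue) and then runs a self-contained induction on the intrinsic rank of $B$, the minimal number of basis vectors needed to write it. Both routes are correct; the paper's is more elementary and bivector-centric, while yours is shorter if you are willing to cite the spectral theorem.

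The genuine gap is the first, signature-independent, claim of the theorem: \emph{for arbitrary} $(V,q)$ there exists a basis --- not necessarily orthogonal --- in which \eqref{bivec_decomp} holds. You do not address this at all. The paper gets it almost for free from the euclidean case by observing that \eqref{bivec_decomp} involves only the outer product and hence is insensitive to the signature: temporarily replace $q$ by the euclidean form, produce the orthonormal basis $\{f_j\}$ there, then switch back and note that $\{f_j\}$ is still a basis (though generally no longer orthogonal) and the expression \eqref{bivec_decomp} is unchanged. Without this step your argument covers only the definite and lorentzian cases. For the lorentzian refinement and the negative statement for other signatures you are in the same position as the paper, which defers both to \cite{riesz}; your sketch of the lorentzian recursion and your proposed $\mathbb{R}^{2,2}$ counterexample are reasonable but, as you acknowledge yourself, remain to be verified, and in particular the counterexample requires an actual argument (e.g.\ via the Jordan form of $\ad_B$) that no orthogonal basis works.
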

	\begin{proof}
		We will first prove the orthogonal decomposition for the
		euclidean case, and then observe that this also gives the
		general (weaker) statement.
		For the lorentzian case and the last statement of the
		theorem we refer to \cite{riesz}.
		
		We start with a general observation:
		If $T: \mathbb{R}^n \to \mathbb{R}^n$ is linear then there
		exists a subspace $M \subseteq \mathbb{R}^n$ such that
		$T(M) \subseteq M$ and $1 \le \dim M \le 2$.
		To see this we start by extending $T: \mathbb{C}^n \to \mathbb{C}^n$
		and therefore realize that there exists $\lambda \in \mathbb{C}$
		and $z \in \mathbb{C}^n \smallsetminus \{0\}$ such that
		$Tz = \lambda z$ (this follows from the fact that the polynomial
		$\det (\lambda \id - T)$ has a zero in $\mathbb{C}$, by the
		fundamental theorem of algebra).
		Now, write $\lambda = \sigma + i\tau$ and $z = x + iy$,
		where $\sigma,\tau \in \mathbb{R}$ and $x,y \in \mathbb{R}^n$.
		Let $M = \Span_\mathbb{R} \{x,y\}$.
		We now have that
		$$
			Tx + iTy = T(x+iy) = (\sigma + i\tau)(x + iy)
			= (\sigma x - \tau y) + i(\tau x + \sigma y),
		$$
		which gives $Tx = \sigma x - \tau y$ and
		$Ty = \tau x + \sigma y$.
		Hence, $T(M) \subseteq M$ and $\dim M \le 2$.
		
		Let us now assume that the signature is euclidean, i.e. $V = \mathbb{R}^n$,
		and prove that, given any bivector $B \in \mathcal{G}^2$,
		there exists an orthonormal basis $\{f_j\}_{j=1}^n$ such that
		the decomposition \eqref{bivec_decomp} holds.
		Introduce the \emph{rank} of $B$, $\textup{rank}(B)$, 
		as the least integer $m$ such that there exists a
		basis $\{e_1,\ldots,e_n\}$ in which $B$ can be written
		\begin{equation} \label{bivec_rank}
			B = \sum_{1 \le i<j \le m} \beta_{ij} e_i \wedge e_j.
		\end{equation}
		We prove the theorem by induction on $\textup{rank}(B)$,
		i.e. for each fixed $m \ge 2$ we show that for every 
		$B \in \mathcal{G}^2$ with $\textup{rank}(B) = m$
		there exists an orthonormal basis $\{f_1,\ldots,f_n\}$
		such that 
		\eqref{bivec_decomp} holds.
		
		For $m = 2$ the statement is trivial 
		(choose an orhonormal basis for the plane $\overline{e_1 \wedge e_2}$), so assume that $m > 2$
		and put $W = \mathbb{R}e_1 + \ldots + \mathbb{R}e_m = \overline{e_1 \wedge \ldots \wedge e_m}$
		(where $\{e_i\}_i$ is the basis in \eqref{bivec_rank}).
		But then the map $W \ni x \mapsto x \liprod B \in W$
		has a nontrivial invariant subspace $M$ of dimension $\le 2$.
		
		First assume that $\dim M = 1$.
		Then there exists an orthonormal basis $\{f_1,\ldots,f_m,e_{m+1},\ldots,e_n\}$ of $V$
		such that $\overline{f_1 \wedge \ldots \wedge f_m} = \overline{e_1 \wedge \ldots \wedge e_m}$,
		$M = \mathbb{R}f_1$, and $B = \sum\limits_{1 \le i<j \le m} \beta_{ij}' f_if_j$.
		But then
		$$
			\lambda f_1 = f_1 \liprod B = \beta_{12}'f_2 + \beta_{13}'f_3 + \ldots + \beta_{1m}'f_m,
		$$
		implying $\beta_{1j}' = 0 = \lambda$ for $1 < j \le m$.
		This means that at most $m-1$ basis vectors are required in
		the expression for $B$, contradicting the assumption that $\textup{rank}(B)=m$.
		
		Hence, $\dim M = 2$ and we can assume that
		$M = \mathbb{R}f_1 + \mathbb{R}f_2$, where once again
		$\{f_1,\ldots,f_m,e_{m+1},\ldots,e_n\}$ denotes 
		an orthonormal basis of $V$ such that
		$B = \sum\limits_{1 \le i<j \le m} \beta_{ij}' f_if_j$. We can write
		$$
			\alpha f_1 + \beta f_2 = f_1 \liprod B = \sum_{j=2}^m \beta_{1j}'f_j,
		$$
		so that $\alpha = 0$ and $\beta = \beta_{12}'$.
		Furthermore, if $f_2 \liprod B = \gamma f_1 + \delta f_2$
		then $\delta = f_2 \liprod (f_2 \liprod B) = (f_2 \wedge f_2) \liprod B = 0$
		and
		$$
			\gamma = f_1 \liprod (f_2 \liprod B) = (f_1 \wedge f_2) \liprod B
			= -(f_2 \wedge f_1) \liprod B = -f_2 \liprod (f_1 \liprod B) = -\beta_{12}'.
		$$
		Thus, $f_1 \liprod B = \beta_{12}'f_2$ and $f_2 \liprod B = -\beta_{12}'f_1$.
		If we let $B' := B - \beta_{12}'f_1f_2$ we therefore have
		$f_1 \liprod B' = f_2 \liprod B' = 0$ and, writing
		$B' = \sum\limits_{1 \le i<j \le m} \beta_{ij}'' f_if_j$,
		we find
		$$
			0 = f_1 \liprod B' = \sum_{1<j} \beta_{1j}'' f_j \quad \Rightarrow \quad \beta_{1j}'' = 0 \quad \forall\ 1 < j \le m,
		$$
		$$
			0 = f_2 \liprod B' = \sum_{2<j} \beta_{2j}'' f_j \quad \Rightarrow \quad \beta_{2j}'' = 0 \quad \forall\ 2 < j \le m.
		$$
		Hence, $B'$ can be expressed solely in terms of the $m-2$ 
		basis vectors $f_3,\ldots,f_m$, i.e. $\textup{rank}(B') \le m-2$.
		The induction assumption now implies existence of an
		orthonormal basis on the form $\{f_1,f_2,g_3,\ldots,g_m,e_{m+1},\ldots,e_n\}$
		such that 
		$B' = \sum\limits_{j=3,5,7,\ldots,k<m} \gamma_{j,j+1} g_jg_{j+1}$,
		and hence that
		$$
			B = \beta_{12}'f_1f_2 + \gamma_{34}g_3g_4 + \gamma_{56}g_5g_6 + \ldots + \gamma_{k,k+1}g_kg_{k+1},
		$$
		which proves the theorem in the euclidean case.
		
		For the weaker but signature-independent statement, 
		note that a decomposition of the form \eqref{bivec_decomp} 
		in terms of a general basis $\{f_j\}_{j=1}^n$ only concerns
		the exterior algebra associated to the outer product.
		Hence, given a bivector $B = \sum_{1\le i<j \le n} \beta_{ij} e_i \wedge e_j$, 
		there is no loss in generality in 
		temporarily switching to euclidean signature (where e.g. $e_i * e_j = \delta_{ij}$), 
		with respect to which
		one can find an orthonormal basis $\{f_j = \sum_k \alpha_{jk}e_k \}_{j=1}^n$ 
		such that \eqref{bivec_decomp} holds. 
		Once we have found such a basis we can switch back to the original
		signature, and realize that $\{f_j\}_{j=1}^n$ is of course still a basis 
		(not necessarily orthogonal)
		and that the expression \eqref{bivec_decomp} reduces to the
		original expression for $B$ upon expressing $f_j = \sum_k \alpha_{jk}e_k$.
	\end{proof}

	\begin{rem}
		Let us see why the additional requirement is necessary
		in the lorentzian case.
		Namely, assume that $\{e_0,e_1,\ldots,e_n\}$ is an orthonormal
		basis of $\mathbb{R}^{1,n}$ with $e_0^2 = 1$.
		Consider $B := (e_0 + e_1) \wedge e_2$, and assume that $B$ can also be written
		$$
			B = \beta_1 f_1f_2 + \beta_3 f_3f_4 + \ldots + \beta_{2p-1} f_{2p-1}f_{2p}, \qquad 2p \le n+1
		$$
		for some other orthonormal basis $\{f_1,f_2,\ldots,f_{n+1}\}$.
		Since
		\begin{eqnarray*}
			0 = B \wedge B &=& 2\beta_1\beta_3 f_1f_2f_3f_4 + 2\beta_1\beta_5 f_1f_2f_5f_6 + \ldots \\
			&&	\ldots + 2\beta_{2p-3}\beta_{2p-1} f_{2p-3}f_{2p-2}f_{2p-1}f_{2p},
		\end{eqnarray*}
		we must have $\beta_{2s+1}\beta_{2t+1} = 0$ for all $s,t$.
		But then only one term in the sum can survive, and we can assume, e.g.
		$$
			(e_0 + e_1) \wedge e_2 = \beta f_1f_2, \qquad \beta \neq 0.
		$$
		Squaring both sides, we find
		$$
			0 = -(e_0+e_1)^2 e_2^2 = -\beta^2 f_1^2 f_2^2,
		$$
		so that either $f_1^2 = 0$ or $f_2^2 = 0$, 
		but this contradicts the signature of the space.
		\qed
	\end{rem}
	
	Another way to state Theorem \ref{thm_bivec_decomp} 
	in the euclidean and lorentzian case
	is that every bivector $B$ can be written as a sum of commuting 2-blades:
	\begin{equation} \label{bivector_decomposition}
		B = B_1 + B_2 + \ldots + B_k,
	\end{equation}
	such that 
	$B_i \in \mathcal{B}_2$, $B_iB_j = B_jB_i\ \forall i,j$, $\dim \bar{B}_i = 2$,
	$\bar{B}_i \perp \bar{B}_j\ \forall i \neq j$, and $k \le n/2$.
	It then follows that every rotor $e^B$ can be written
	$e^{B_1}e^{B_2} \ldots e^{B_k}$, where each factor is 
	a simple rotor
	(as discussed in the proof of Theorem \ref{thm_spin_connected}).
	The decomposition \eqref{bivector_decomposition} is unique
	unless $B_i^2 = B_j^2$ for some $i \neq j$.

	\begin{exc} \label{exc_spin_commutators}
		Verify \eqref{spin_commutators} and observe that, 
		with $L_{ij} := \frac{1}{2}e_i \wedge e_j$ 
		and the metric tensor $g_{ij} := e_i * e_j$
		(recall that $\{e_i\}_i$ is any general basis),
		this leads to the conventional commutation relations
		$$
			[L_{ij},L_{kl}] = g_{jk}L_{il} - g_{jl}L_{ik} + g_{il}L_{jk} - g_{ik}L_{jl}
		$$
		for the (anti-hermitian) generators of $\mathfrak{so}(s,t)$.
	\end{exc}
	
	\begin{exc} \label{exc_bivector_blade}
		Use Theorem \ref{thm_bivec_decomp} to prove that a bivector
		$B \in \mathcal{G}^2(V)$ is a blade if and only if $B^2$ is
		a scalar (regardless of the signature of $V$).
	\end{exc}

	\begin{exc} \label{exc_blade_counterexample}
		Construct an example of a homogeneous multivector which
		squares to a nonzero scalar even though it is not a blade.
	\end{exc}

	\begin{exc} \label{exc_killing_form}
		Show that the \emph{Killing form},
		$$
			K(A,B) := \tr_{\mathcal{G}^2} (\ad_A \ad_B), \qquad A,B \in \mathcal{G}^2,
		$$
		of the Lie algebra $\mathfrak{spin} = \mathcal{G}^2$ simplifies to
		$$
			K(A,B) = c \thinspace A*B,
		$$
		and determine the constant $c$.
		(Note that for (anti-)euclidean signatures, $K \leq 0$,
		reflecting the fact that the corresponding Spin groups are compact.)
		\\
		\emph{Hint:} Use reciprocal bases and Exercise \ref{exc_basis_identities}.
	\end{exc}

\subsection{Examples in lower dimensions}

	We finish this section by considering various examples in lower
	dimensions, and work out the respective $\Pin$, $\Spin$ and rotor
	groups in each case.
	We also see that we can use various involutions and extended versions 
	of the norm function \eqref{norm_function} to determine the 
	corresponding group of invertible elements.

	It will be convenient to use the following characterization of the
	rotor groups in lower dimensions as simply being the groups of \emph{even unit multivectors}.
	
	\begin{prop} \label{prop_rotors_as_even_units}
		If $\dim V \le 5$ then
		$$
			\Spin^+ = \{ R \in \mathcal{G}^+ : RR^\dagger = 1 \}.
		$$
	\end{prop}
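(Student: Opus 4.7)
One direction, $\Spin^+ \subseteq \{R \in \mathcal{G}^+ : RR^\dagger = 1\}$, is immediate from the definition of $\Spin^+$, so the work lies in the reverse inclusion. My plan is to reduce the problem to the Lipschitz criterion and invoke Theorem \ref{thm_lipschitz}. Given $R \in \mathcal{G}^+$ with $RR^\dagger = 1$, we have $R^\star = R$ (evenness) and $R^{-1} = R^\dagger$ (since in finite dimensions $RR^\dagger = 1$ forces $R^\dagger R = 1$). Hence the Lipschitz condition $R^\star V R^{-1} \subseteq V$ reduces to the single assertion $RvR^\dagger \in V$ for every $v \in V$. Once that is in hand, $R \in \Gamma$ is even and satisfies $RR^\dagger = 1$, which is precisely the defining data for $\Spin^+$.

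For a fixed $v \in V$, set $w := RvR^\dagger$ and read off its grade structure from the two basic involutions. Evenness of $R$ gives $w^\star = R v^\star R^\dagger = -w$, confining $w$ to the odd subspace; while $w^\dagger = Rv^\dagger R^\dagger = w$, so $w$ is self-reverse. According to Table \ref{table_involutions}, reversion acts as $-1$ on grade~$3$, so the self-reverse condition automatically annihilates the grade-$3$ component. For $\dim V \le 4$, the only other odd grade is $1$, so $w \in V$ and nothing further is needed; this disposes of essentially everything by purely formal manipulation of the involutions.

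The genuine obstacle is $\dim V = 5$, where a grade-$5$ component can in principle survive. Here I would exploit two features peculiar to odd dimension: the pseudoscalar $I$ is central in $\mathcal{G}$, and $\mathcal{G}^5 = \mathbb{R}I$. Writing the surviving piece as $w = w_1 + \alpha I$ with $w_1 \in V$, I plan to pin down $\alpha$ by computing $\langle wI \rangle_0$ two ways. On the one hand $w_1 I = w_1 \liprod I \in \mathcal{G}^4$ contributes nothing to the scalar part, so $\langle wI \rangle_0 = \alpha I^2$. On the other, cyclicity of $\langle \cdot \rangle_0$ together with centrality of $I$ and $R^\dagger R = 1$ collapses
\begin{equation*}
	\langle RvR^\dagger I \rangle_0 = \langle I R^\dagger R v \rangle_0 = \langle I v \rangle_0 = 0,
\end{equation*}
the last equality because $Iv \in \mathcal{G}^4$. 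Since $\mathcal{G}$ is nondegenerate, $I^2 \neq 0$, so $\alpha = 0$ and $w \in V$, completing the reduction. The crux is thus the centrality of $I$ in odd dimension; this is exactly the structural feature that fails in dimensions $\ge 6$, explaining why the proposition is stated only up to $\dim V \le 5$.
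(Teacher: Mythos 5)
Your proof is correct and follows essentially the same route as the paper's: use evenness and self-reversion to confine $RvR^\dagger$ to $\mathcal{G}^1 \oplus \mathcal{G}^5$, then in dimension $5$ use centrality of $I$ together with cyclicity of $\langle\cdot\rangle_0$ and $R^\dagger R = 1$ to kill the grade-$5$ part. The only cosmetic difference is that the paper phrases the grade-$5$ step as a proof by contradiction and pairs against $I^{-1}$ rather than $I$, but the computation is identical.
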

	\begin{proof}
		By definition, the rotor group is 
		a subset of the right hand side
		in all dimensions, so let us consider any even grade
		multivector $R$ such that $RR^\dagger = 1$.
		Then $R$ is invertible and $R^{-1} = R^\dagger \in \mathcal{G}^+$.
		We claim that $RVR^\dagger \subseteq V$, so 
		$R \in \tilde{\Gamma}$ and hence also a rotor.
		The claim follows because for any $v \in V$ 
		the expression $RvR^\dagger$ is both odd and self-reversing,
		and hence in $\mathcal{G}^1 \oplus \mathcal{G}^5$.
		This proves the claim and hence the theorem for $\dim V < 5$.
		In the case $\dim V = 5$, assume to the contrary that
		$$
			RvR^\dagger = w + \alpha I,
		$$
		with $w \in V$ and $\alpha \neq 0$. 
		Then, since the pseudoscalar commutes with everything,
		$$
			\alpha = \langle RvR^\dagger I^{-1} \rangle_0 
			= \langle RvI^{-1}R^\dagger \rangle_0 = \langle v I^{-1} R^\dagger R \rangle_0 
			= \langle v I^{-1} \rangle_0 = 0,
		$$
		which is a contradiction.
	\end{proof}
	
	We will also find it useful to extend our set of grade-based
	involutions with the following generalized ones.
	
	\begin{defn}
		We define
		\begin{displaymath}
		\setlength\arraycolsep{2pt}
		\begin{array}{rcl}
			[A]							&:=& (-1)^{(A \neq \varnothing)} A, \\[5pt]
			[A]_{i_1,i_2,\ldots,i_k} 	&:=& (-1)^{(|A| \in \{i_1,i_2,\ldots,i_k\})} A.
		\end{array}
		\end{displaymath}
		for $A \in \mathscr{P}(X)$ and extend linearly to $\cl(X,R,r)$.
	\end{defn}
	
	\begin{exc}
		Show that Proposition \ref{prop_rotors_as_even_units}
		cannot be extended to $\dim V = 6$.
	\end{exc}

\subsubsection{The euclidean line}
	
	Let us first consider the algebra of the euclidean line,
	$\mathcal{G}(\mathbb{R}^{1}) = \textrm{Span}_\mathbb{R} \{1,e\}$,
	where $e$ is the basis element, $e^2 = 1$.
	This is a commutative algebra with pseudoscalar $e$. 
	Because the only unit vectors are $\{\pm e\}$,
	the unit versors resp. rotors form the discrete groups
	\begin{displaymath}
	\setlength\arraycolsep{2pt}
	\begin{array}{rcl}
		\textrm{Pin}(1,0) &=& \{ 1, -1, e, -e \} \cong \mathbb{Z}_2 \times \mathbb{Z}_2, \\[5pt]
		\textrm{Spin}^{(+)}(1,0) &=& \{ 1, -1 \} \cong \mathbb{Z}_2.
	\end{array}
	\end{displaymath}
	Note that $\underline{\pm 1}$ 
	is the identity map, while $\underline{\pm e}$ 
	is the unique reflection of the line.
	
	One easily finds the multiplicative group 
	$\mathcal{G}^\times(\mathbb{R}^1)$ by considering the norm function,
	which with a one-dimensional vector space is given by
	$$
		N_1(x) := x^\cliffconj x = x^\star x.
	$$
	For an arbitrary element $x = \alpha + \beta e$ then
	$$
		N_1(x) = (\alpha - \beta e)(\alpha + \beta e) = \alpha^2 - \beta^2 \in \mathbb{R}.
	$$
	When $N_1(x) \neq 0$ we find that $x$ has an inverse $x^{-1} = \frac{1}{N_1(x)} x^\star = \frac{1}{\alpha^2-\beta^2}(\alpha - \beta e)$.
	Hence,
	$$
		\mathcal{G}^\times(\mathbb{R}^{1}) = \{ x \in \mathcal{G} : N_1(x) \neq 0 \} 
			= \{ \alpha + \beta e \in \mathcal{G} \ : \ \alpha^2 \neq \beta^2 \}.
	$$
	Note also that $N_1(xy) = x^\star y^\star x y = N_1(x)N_1(y)$ for all $x,y \in \mathcal{G}$
	since the algebra is commutative.
	
\subsubsection{The anti-euclidean line}
	
	The algebra of the anti-euclidean line,
	$\mathcal{G}(\mathbb{R}^{0,1}) = \textrm{Span}_\mathbb{R} \{1,i\} \cong \mathbb{C}$,
	is of course
	also a commutative algebra, but, unlike the previous example, this is even a field
	since every nonzero element is invertible. The norm function is an
	actual norm (squared) in this case,
	$$
		N_1(\alpha + \beta i) = (\alpha - \beta i)(\alpha + \beta i) = \alpha^2 + \beta^2 \in \mathbb{R}^+,
	$$
	namely the modulus of the complex number. 
	We have already noted that the grade involution
	represents the complex conjugate and, as above, $x^{-1} = \frac{1}{N_1(x)} x^\star$.
	The relevant groups are
	\begin{displaymath}
	\setlength\arraycolsep{2pt}
	\begin{array}{rcl}
		\mathcal{G}^\times(\mathbb{R}^{0,1}) &=& \mathcal{G} \!\smallsetminus\! \{0\} \cong \mathbb{C}^\times, \\[5pt]
		\textrm{Pin}(0,1) &=& \{ 1, -1, i, -i \} \cong \mathbb{Z}_4, \\[5pt]
		\textrm{Spin}^{(+)}(0,1) &=& \{ 1, -1 \} \cong \mathbb{Z}_2.
	\end{array}
	\end{displaymath}
	
\subsubsection{The degenerate line}

	We also include the simplest example of a degenerate algebra, 
	$\mathcal{G}(\mathbb{R}^{0,0,1})$,
	just to see what happens in such a situation.
	Let the vector $n$ span a one-dimensional space with quadratic form $q=0$.
	Then
	$$
		\mathcal{G}(\mathbb{R}^{0,0,1}) = \textrm{Span}_\mathbb{R} \{1,n\} \cong \bigwedge\nolimits^* \mathbb{R}^1,
	$$
	and $n^2=0$. The norm function depends only on the scalar part in this case,
	$$
		N_1(\alpha + \beta n) = (\alpha - \beta n)(\alpha + \beta n) = \alpha^2 \in \mathbb{R}^+.
	$$
	Hence, an element is invertible if and only if the scalar part is nonzero;
	$$
		\mathcal{G}^\times(\mathbb{R}^{0,0,1}) = \{ \alpha + \beta n \in \mathcal{G} \ : \ \alpha \neq 0 \}.
	$$
	Since no vectors are invertible, we are left with only the empty product
	in the versor group, i.e.
	$
			\Gamma = \{1\}.
	$
	Note, however, that for $\alpha \neq 0$
	$$
		(\alpha + \beta n)^\star n (\alpha + \beta n)^{-1} = (\alpha - \beta n) n {\textstyle \frac{1}{\alpha^2}} (\alpha - \beta n) = n,
	$$
	so the Lipschitz group in this case is
	$$
		\tilde{\Gamma} = \mathcal{G}^\times \neq \Gamma.
	$$
	This shows that the assumption on nondegeneracy was necessary in the
	discussion about the Lipschitz group in Section \ref{sec_lipschitz}.
	
\subsubsection{The euclidean plane}
	
	Let $\{e_1, e_2\}$ be an orthonormal basis of $\mathbb{R}^2$ and consider
	the plane algebra
	$$
		\mathcal{G}(\mathbb{R}^2) = \textrm{Span}_\mathbb{R} \{1,\ e_1,\ e_2,\ I=e_1e_2\}.
	$$
	We recall that the even subalgebra is
	$\mathcal{G}^+(\mathbb{R}^{2}) \cong \mathcal{G}(\mathbb{R}^{0,1}) \cong \mathbb{C}$
	and that the rotor group 
	corresponds to the group of unit complex numbers,
	$$
		\textrm{Spin}^{(+)}(2,0) = e^{\mathbb{R}I} \cong \textrm{U}(1).
	$$
	Furthermore,
	$$
		\underline{e^{\varphi I}}(e_1)
		= e^{\varphi I} e_1 e^{-\varphi I} = e_1 e^{-2\varphi I} = e_1(\cos 2\varphi - I \sin 2\varphi)
		= \cos 2\varphi \thinspace e_1 - \sin 2\varphi \thinspace e_2,
	$$
	so a rotor on the form $\pm e^{-\frac{\varphi}{2} I}$ 
	represents a counter-clockwise\footnote{Assuming, 
	of course, that $e_1$ points at 3 o'clock and $e_2$ at 12 o'clock.}
	rotation in the plane by an angle $\varphi$.
	The Pin group is found by picking any unit vector $\boldsymbol{e} \in S^1$:
	$$
		\textrm{Pin}(2,0) = e^{\mathbb{R}I}\ \bigsqcup\ e^{\mathbb{R}I} \boldsymbol{e} \ \ \cong \ \ U(1)\ \bigsqcup\ S^1,
	$$
	i.e. two copies of a unit circle, 
	where we write $\bigsqcup$ to emphasize a \emph{disjoint} union.
	An element $\pm e^{-\frac{\varphi}{2} I}\boldsymbol{e} \in S^1$ of the second circle corresponds 
	to a reflection along $\boldsymbol{e}$ (which here serves as a reference axis)
	followed by a rotation by $\varphi$ in the plane.
	
	Let us next 
	determine the full multiplicative group
	of the plane algebra.
	For algebras over two-dimensional spaces we use the original norm function
	$$
		N_2(x) := x^\cliffconj x
	$$
	since it satisfies $N_2(x)^\cliffconj = N_2(x)$ for all $x \in \mathcal{G}$.
	The properties of the involutions in Table \ref{table_involutions} then require
	this to be a scalar, so we have a map
	$$
		N_2 \!: \mathcal{G} \to \mathcal{G}^0 = \mathbb{R}.
	$$
	For an arbitrary element $x = \alpha + a_1e_1 + a_2e_2 + \beta I \in \mathcal{G}$, 
	we explicitly find
	$$
	\setlength\arraycolsep{2pt}
	\begin{array}{rcl}
		N_2(x) &=& (\alpha - a_1e_1 - a_2e_2 - \beta I)(\alpha + a_1e_1 + a_2e_2 + \beta I) \\[5pt]
			&=& \alpha^2 - a_1^2 - a_2^2 + \beta^2.
	\end{array}
	$$
	Furthermore, $N_2(x^\cliffconj) = N_2(x)$ and 
	$N_2(xy) = y^{\cliffconj}N_2(x)y = N_2(x)N_2(y)$ for all $x,y \in \mathcal{G}$.
	Proceeding as in the one-dimensional case, we find that $x$ has
	an inverse $x^{-1} = \frac{1}{N_2(x)} x^\cliffconj$ if and only if $N_2(x) \neq 0$, i.e.
	$$
	\setlength\arraycolsep{2pt}
	\begin{array}{rcl}
		\mathcal{G}^\times(\mathbb{R}^{2}) &=& \{ x \in \mathcal{G} : N_2(x) \neq 0 \} \\[5pt]
			&=& \{ \alpha + a_1e_1 + a_2e_2 + \beta I \in \mathcal{G} \ : \ \alpha^2 + \beta^2 \neq a_1^2 + a_2^2 \}.
	\end{array}
	$$
		
\subsubsection{The anti-euclidean plane}

	In the case of the anti-euclidean plane, $\mathcal{G}(\mathbb{R}^{0,2}) \cong \mathbb{H}$,
	the norm function $N_2$ has 
	similar properties as in the euclidean algebra, except that it now
	once again represents the square of an actual norm,
	namely the \emph{quaternion norm},
	$$
		N_2(x) = \alpha^2 + a_1^2 + a_2^2 + \beta^2.
	$$
	Just as for the complex numbers then, all nonzero elements are invertible,
	$$
		\mathcal{G}^\times(\mathbb{R}^{0,2}) = \{ x \in \mathcal{G} : N_2(x) \neq 0 \} = \mathcal{G} \smallsetminus \{0\}.
	$$
	The even subalgebra is also in this case isomorphic to the complex
	numbers, and the 
	Pin, Spin and rotor groups are analogous to
	the euclidean case.

\subsubsection{The lorentzian plane}

	With the standard basis $\{e_0,e_1\}$ of $\mathbb{R}^{1,1}$,
	the two-dimensional lorentzian algebra is given by
	$$
		\mathcal{G}(\mathbb{R}^{1,1}) = \textrm{Span}_\mathbb{R} \{1,\ e_0,\ e_1,\ I=e_1e_0\},
	$$
	where $e_0$ is interpreted as a time direction in physical applications,
	while $e_1$ defines a spatial direction.
	In general, a vector (or blade) $v$ is called \emph{timelike} if $v^2>0$, 
	\emph{spacelike} if $v^2<0$, and \emph{lightlike} or \emph{null} if $v^2=0$.

	The multiplicative group 
	is like previously given by
	$$
	\setlength\arraycolsep{2pt}
	\begin{array}{rcl}
		\mathcal{G}^\times(\mathbb{R}^{2}) &=& \{ x \in \mathcal{G} : N_2(x) \neq 0 \} \\[5pt]
			&=& \{ \alpha + a_0e_0 + a_1e_1 + \beta I \in \mathcal{G} \ : \ \alpha^2 - a_0^2 + a_1^2 - \beta^2 \neq 0 \}.
	\end{array}
	$$
	However, the pseudoscalar\footnote{The choice of orientation of $I$ here
	corresponds to the fact that space-time diagrams are conventionally
	drawn with the time axis vertical.}
	$I = e_1e_0$ squares to the identity in this case and
	the even subalgebra is therefore $\mathcal{G}^+(\mathbb{R}^{1,1}) \cong \mathcal{G}(\mathbb{R}^1)$.
	This has as an important consequence that the rotor group is fundamentally
	different from the (anti-) euclidean case,
	$$
		\textrm{Spin}^+(1,1) = \{ R = \alpha + \beta I \in \mathcal{G}^+ \ : \ \alpha^2 - \beta^2 = 1 \} 
			= \pm e^{\mathbb{R}I},
	$$
	(recall Exercise \ref{exc_simple_rotor}, or Theorem \ref{thm_spin_is_exp}).
	Thus, the set of rotors can be understood as a pair of 
	disjoint hyperbolas passing through the points 1 and -1, respectively.
	The rotations that are represented by rotors of this form are called \emph{Lorentz boosts}.
	Note the hyperbolic nature of these rotations,
	\begin{eqnarray*}
		e^{\alpha I} e_0 e^{-\alpha I} &=& e_0 e^{-2\alpha I} = e_0(\cosh 2\alpha - I \sinh 2\alpha) \\
			&=& \cosh 2\alpha \thinspace e_0 + \sinh 2\alpha \thinspace e_1, \\
		e^{\alpha I} e_1 e^{-\alpha I} &=& \cosh 2\alpha \thinspace e_1 + \sinh 2\alpha \thinspace e_0.
	\end{eqnarray*}
	Hence, a rotor $\pm e^{\frac{\alpha}{2}I}$ transforms, or \emph{boosts}, timelike vectors 
	by a hyperbolic angle $\alpha$ in the positive spacelike direction (see Figure \ref{fig_boost}).

	\begin{figure}[t]
		\centering
		\psfrag{T_e0}{$e_0$}
		\psfrag{T_e1}{$e_1$}
		\psfrag{T_e}{$\underline{R}(e_0)$}
		\psfrag{T_cone}{The light-cone}
		\includegraphics{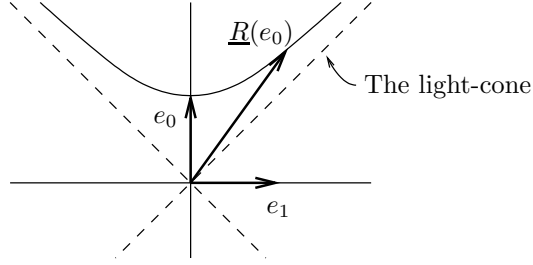}
		\caption{A Lorentz-boosted vector $\underline{R}(e_0)$,
		where $R = \pm e^{\alpha I/2}$ and $\alpha > 0$.}
		\label{fig_boost}
	\end{figure}

	The Spin group consists of four hyperbolas and the Pin group of eight,
	\begin{displaymath}
	\setlength\arraycolsep{2pt}
	\begin{array}{rcl}
		\textrm{Spin}(1,1) &=& \pm e^{\mathbb{R}I}\ \bigsqcup\ \pm e^{\mathbb{R}I}I, \\[5pt]
		\textrm{Pin}(1,1) &=& \pm e^{\mathbb{R}I}\ \bigsqcup\ \pm e^{\mathbb{R}I}e_0\ \bigsqcup\ \pm e^{\mathbb{R}I}e_1\ \bigsqcup\ \pm e^{\mathbb{R}I}I,
	\end{array}
	\end{displaymath}
	Note that the action of the rotor $I = e_1e_0$ 
	in the second pair of components of the Spin group
	is a reflection both in the time and the space direction;
	$$
		\underline{I}(x) = IxI^{-1} = x^\star = (-\id)_\wedge(x), \quad x \in \mathcal{G}.
	$$
	Hence, it preserves the total orientation of the space, but it cannot
	be smoothly transformed into to an identity transformation because of the hyperbolic nature of the space,
	or intuitively, because spacelike and timelike vectors cannot be boosted across the light-cone
	(reflecting the physical interpretation that nothing can travel faster than light).
	The two additional pairs of components of the Pin group correspond
	to a reflection in the time resp. space direction, followed by a boost.

\subsubsection{The space algebra}
	
	We write an arbitrary element $x \in \mathcal{G}(\mathbb{R}^3)$ 
	of the space algebra as 
	\begin{equation} \label{arbitrary_space_element}
		x = \alpha + \boldsymbol{a} + \boldsymbol{b}I + \beta I,
	\end{equation}
	where $\alpha,\beta \in \mathbb{R}$ and $\boldsymbol{a},\boldsymbol{b} \in \mathbb{R}^3$.
	The norm function $N_2$ does not take values in $\mathbb{R}$ in this algebra,
	but due to the properties of the Clifford conjugate we have 
	$N_2(x) = N_2(x)^\cliffconj \in \mathcal{G}^0 \oplus \mathcal{G}^3$.
	This subalgebra is isomorphic to 
	$\mathbb{C}$, and its corresponding complex
	conjugate is given by $[x]_3$ (or $[x]$). 
	Using these properties, we can construct a real-valued map 
	$N_3 \!: \mathcal{G} \to \mathbb{R}^+$ by taking the complex modulus,
	\begin{equation} \label{N3_definition}
		N_3(x) := [N_2(x)]_3 N_2(x) = [x^\cliffconj x] x^\cliffconj x.
	\end{equation}
	Plugging in \eqref{arbitrary_space_element} we obtain
	$$
		N_2(x) = \alpha^2 - \boldsymbol{a}^2 + \boldsymbol{b}^2 - \beta^2 + 2(\alpha\beta - \boldsymbol{a}*\boldsymbol{b})I,
	$$
	and
	\begin{equation} \label{N3_expression_space}
		N_3(x) = (\alpha^2 - \boldsymbol{a}^2 + \boldsymbol{b}^2 - \beta^2)^2 + 4(\alpha\beta - \boldsymbol{a}*\boldsymbol{b})^2.
	\end{equation}
	Although $N_3$ takes values in $\mathbb{R}^+$, it is not a norm\footnote{Actually, 
	it is not possible to find a norm $|\cdot|$ on $\mathcal{G}(\mathbb{R}^{s,t})$, $s+t \ge 3$,
	satisfying $|xy|=|x||y|$, 
	as will be clear from Theorem \ref{thm_hurwitz} and the remark following it.}
	on $\mathcal{G}$ since there are nonzero elements with $N_3(x)=0$.
	It does however have the multiplicative property
	\begin{eqnarray}
		N_3(xy) &=& [N_2(xy)]_3 N_2(xy) = [N_2(x)]_3 [N_2(y)]_3 N_2(x) N_2(y) \nonumber \\
				&=& N_3(x)N_3(y), \label{N3_factorization}
	\end{eqnarray}
	for all $x,y \in \mathcal{G}$, since
	$N_2(x)$ commutes with all of $\mathcal{G}$, and
	$$
		N_2(xy) = (xy)^\cliffconj xy = y^\cliffconj N_2(x) y = N_2(x)N_2(y).
	$$
	We also observe from \eqref{N3_expression_space} that $N_3(x^\cliffconj) = N_3(x)$.
	The expression \eqref{N3_definition} singles out the invertible elements
	as those elements \eqref{arbitrary_space_element} for which $N_3(x) \neq 0$, i.e.
	$$
		\mathcal{G}^\times(\mathbb{R}^{3})
			= \{ x \in \mathcal{G} \ : \ (\alpha^2 - \boldsymbol{a}^2 + \boldsymbol{b}^2 - \beta^2)^2 + 4(\alpha\beta - \boldsymbol{a}*\boldsymbol{b})^2 \neq 0 \},
	$$
	and
	$
		x^{-1} = \frac{1}{N_3(x)} [x^\cliffconj x] x^\cliffconj.
	$
	
	The rotor group of the space algebra is, according to the isomorphism 
	$\mathcal{G}^+(\mathbb{R}^{3,0}) \cong \mathbb{H}$ and
	Proposition \ref{prop_rotors_as_even_units},
	the group of unit quaternions 
	(note that the reverse here acts as the quaternion conjugate),
	\begin{equation} \label{space_rotor_group}
		\textrm{Spin}^{(+)}(3,0) = \{ \alpha + \boldsymbol{b}I \in \mathcal{G}^+ : \alpha^2 + \boldsymbol{b}^2 = 1 \} 
			= e^{\mathcal{G}^2(\mathbb{R}^3)}. 
	\end{equation}
	The exponentiation of the bivector algebra in the second identity
	followed from Theorem \ref{thm_spin_is_exp}.
	An arbitrary rotor $R$ can according to \eqref{space_rotor_group} be written
	in the polar form $R = e^{\varphi\boldsymbol{n}I}$, 
	where $\boldsymbol{n} \in S^2$ is a unit vector,
	and represents (cp. e.g. \eqref{plane_rotation}-\eqref{plane_rotation_rotor})
	a rotation by an angle $-2\varphi$ in the plane 
	$\boldsymbol{n}I = -\boldsymbol{n}\dual$,
	i.e. $2\varphi$ clockwise around the axis $\boldsymbol{n}$.
	
	The Pin group consists of two copies of the rotor group,
	$$
		\textrm{Pin}(3,0) = e^{\mathcal{G}^2(\mathbb{R}^3)}\ \bigsqcup\ e^{\mathcal{G}^2(\mathbb{R}^3)} \boldsymbol{n},
	$$
	for any unit vector $\boldsymbol{n} \in S^2$. 
	The Pin group can be understood topologically as a pair of
	unit 3-spheres $S^3$ lying in the even and odd subspaces, respectively.
	The odd one includes a reflection and corresponds to 
	the non-orientation-preserving part of O$(3)$.

	By the isomorphism $\rho$ in
	Example \ref{exmp_pauli_algebra} we can also represent the
	spatial rotor group in terms of the complex matrices
	$$
		\rho(R) = \left[ \begin{array}{cc}
		\alpha + b_3i	& b_2 + b_1i  \\[3pt]
		-b_2 + b_1i		& \alpha - b_3i
		\end{array} \right].
	$$
	We find that the hermitian conjugate of such a matrix is
	$\rho(R)^\dagger = \rho(R)^{-1}$ and $\det_\mathbb{C} \rho(R) = 1$,
	and that these form the \emph{special unitary group} of $\mathbb{C}^2$,
	i.e.
	$$
		\Spin^{(+)}(3,0) \cong \SU(2).
	$$
	This is the representation of the spatial rotor group which is most
	often encountered in physics.

	\begin{exc} \label{exc_noninvertible_vec_bivec}
		Show that if $x \in \mathcal{G}^1(\mathbb{R}^3) \oplus \mathcal{G}^2(\mathbb{R}^3)$
		is noninvertible, then it is on the form
		$$
			x = \alpha \boldsymbol{e}(1+\boldsymbol{f}),
		$$
		where $\alpha \in \mathbb{R}$ and 
		$\boldsymbol{e},\boldsymbol{f} \in S^2$ are orthogonal unit vectors.
	\end{exc}
	
\subsubsection{The spacetime algebra}
	
	As a four-dimensional example we consider the \emph{spacetime algebra} (STA),
	which is the geometric algebra of \emph{Minkowski spacetime}, $\mathbb{R}^{1,3}$.
	By convention\footnote{This 
	choice of notation is motivated by the Dirac representation of the STA
	in terms of so called gamma matrices which 
	is conventionally used by physicists.},
	we denote an orthonormal basis of the Minkowski space
	by $\{\gamma_0,\gamma_1,\gamma_2,\gamma_3\}$, where $\gamma_0$ is timelike
	and the other $\gamma_i$ are spacelike.
	The STA expressed in this basis is
	\begin{displaymath} \label{spacetime_algebra}
	\setlength\arraycolsep{2pt}
	\begin{array}{l}
		\mathcal{G}(\mathbb{R}^{1,3}) = \\[3pt]
		\quad \textrm{Span}_\mathbb{R} \{1,\ 
			\gamma_0,\gamma_1,\gamma_2,\gamma_3,\ 
			\boldsymbol{e}_1,\boldsymbol{e}_2,\boldsymbol{e}_3, 
			\boldsymbol{e}_1I,\boldsymbol{e}_2I,\boldsymbol{e}_3I,\ 
			\gamma_0I,\gamma_1I,\gamma_2I,\gamma_3I,\ I\},
	\end{array}
	\end{displaymath}
	where the pseudoscalar is $I := \gamma_0\gamma_1\gamma_2\gamma_3$ 
	and we set $\boldsymbol{e}_i := \gamma_i\gamma_0$, $i=1,2,3$. The form of the STA basis
	chosen above emphasizes the duality which exists between the graded subspaces.
	It also hints that the even subalgebra of the STA is the space algebra. 
	This is actually the case since there is an
	isomorphism $\mathcal{G}^+(\mathbb{R}^{1,3}) \cong \mathcal{G}(\mathbb{R}^{3,0})$,
	but we can also verify this explicitly by noting that 
	$\boldsymbol{e}_i^2 = 1$ and 
	$\boldsymbol{e}_i \boldsymbol{e}_j = -\boldsymbol{e}_j \boldsymbol{e}_i$, $i \neq j$, as well as
	$\boldsymbol{e}_1\boldsymbol{e}_2\boldsymbol{e}_3 = I$.
	Hence, the timelike (positive square) blades $\{\boldsymbol{e}_i\}$ form a basis of
	a 3-dimensional euclidean space called the \emph{relative space} to $\gamma_0$.
	Given any timelike vector $a$ we can find a similar relative space spanned by the bivectors
	$\{b \wedge a\}$ for $b \in \mathbb{R}^{1,3}$. These spaces all generate the
	\emph{relative space algebra} $\mathcal{G}^+$,
	and only the 
	precise split between the vector and bivector part
	of this relative algebra depends on the chosen reference vector $a$.
	
	Using boldface to denote relative space elements, 
	an arbitrary multivector $x \in \mathcal{G}$ can be written
	\begin{equation} \label{arbitrary_STA_element}
		x = \alpha + a + \boldsymbol{a} + \boldsymbol{b}I + bI + \beta I,
	\end{equation}
	where $\alpha,\beta \in \mathbb{R}$, $a,b \in \mathbb{R}^{1,3}$ and $\boldsymbol{a},\boldsymbol{b}$ 
	in relative space $\mathbb{R}^3$.
	As in previous examples, we would like to find the invertible elements.
	Looking at the norm function $N_2 \!: \mathcal{G} \to \mathcal{G}^0 \oplus \mathcal{G}^3 \oplus \mathcal{G}^4$,
	it is not obvious that we can extend this to a real-valued function on $\mathcal{G}$.
	Fortunately, we have for $X = \alpha + bI + \beta I \in \mathcal{G}^0 \oplus \mathcal{G}^3 \oplus \mathcal{G}^4$
	that
	\begin{equation} \label{G_0_3_4_norm}
		X[X]_{3,4} = [X]_{3,4}X = (\alpha - bI - \beta I)(\alpha + bI + \beta I) = \alpha^2 - b^2 + \beta^2 \in \mathbb{R}.
	\end{equation}
	Hence, we can define a map $N_4 \!: \mathcal{G} \to \mathbb{R}$ by
	\begin{equation} \label{N4_definition}
		N_4(x) := [N_2(x)]_{3,4} N_2(x) = [x^\cliffconj x] x^\cliffconj x.
	\end{equation}
	Plugging in \eqref{arbitrary_STA_element} into $N_2$, we obtain after some simplifications 
	\begin{equation} \label{N2_STA_expression}
	\setlength\arraycolsep{2pt}
	\begin{array}{rcl}
		N_2(x) &=& \alpha^2 - a^2 - \boldsymbol{a}^2 + \boldsymbol{b}^2 + b^2 - \beta^2 \\[5pt]
		&&	+\ 2(\alpha b - \beta a - a \liprod \boldsymbol{b} + b \liprod \boldsymbol{a} - a \liprod \boldsymbol{a}\dual - b \liprod \boldsymbol{b}\dual)I \\[5pt]
		&&	+\ 2(\alpha \beta - a*b - \boldsymbol{a}*\boldsymbol{b})I
	\end{array}
	\end{equation}
	and hence, by \eqref{G_0_3_4_norm},
	\begin{equation} \label{N4_expression}
	\setlength\arraycolsep{2pt}
	\begin{array}{rcl}
		N_4(x) &=& (\alpha^2 - a^2 - \boldsymbol{a}^2 + \boldsymbol{b}^2 + b^2 - \beta^2)^2 \\[5pt]
		&&	-\ 4(\alpha b - \beta a - a \liprod \boldsymbol{b} + b \liprod \boldsymbol{a} - a \liprod \boldsymbol{a}\dual - b \liprod \boldsymbol{b}\dual)^2 \\[5pt]
		&&	+\ 4(\alpha \beta - a*b - \boldsymbol{a}*\boldsymbol{b})^2.
	\end{array}
	\end{equation}
	We will prove some rather non-trivial statements about this norm function
	where we need that $[xy]x=x[yx]$ for all $x,y \in \mathcal{G}$.
	This is a quite general property of this involution.
	
	\begin{lem} \label{lem_nonscalar_conj}
		In any Clifford algebra $\cl(X,R,r)$ (even when $X$ is infinite), we have
		\begin{displaymath}
			[xy]x = x[yx] \quad \forall x,y \in \cl.
		\end{displaymath}
	\end{lem}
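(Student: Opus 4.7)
My plan is to reduce the identity to the cyclicity of the scalar part of a product, namely $\langle xy\rangle_0 = \langle yx\rangle_0$ for all $x,y\in\cl$, where $\langle\cdot\rangle_0$ denotes projection onto the coefficient of $\varnothing$.

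The first step is to rewrite the involution $[\,\cdot\,]$ in a useful form. By its definition, $[z]$ changes the sign of every non-empty basis set and leaves the empty set alone, so on a general element $z\in\cl$ we have
\begin{equation*}
   [z] \;=\; 2\langle z\rangle_0 - z.
\end{equation*}
Applying this to both sides of the proposed identity, everything involving $xyx$ cancels:
\begin{equation*}
   [xy]x - x[yx] \;=\; \bigl(2\langle xy\rangle_0 - xy\bigr)x - x\bigl(2\langle yx\rangle_0 - yx\bigr)
   \;=\; 2\bigl(\langle xy\rangle_0 - \langle yx\rangle_0\bigr)x.
\end{equation*}
Thus the claim is equivalent to the scalar identity $\langle xy\rangle_0 = \langle yx\rangle_0$ for all $x,y\in\cl$.

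To establish this cyclicity, I would proceed by bilinearity and reduce to basis elements $A,B\in\mathscr P(X)$. For such basis subsets,
\begin{equation*}
   AB = \tau(A,B)\, A\symdiff B, \qquad BA = \tau(B,A)\, A\symdiff B,
\end{equation*}
so both $AB$ and $BA$ have nonzero scalar part only when $A\symdiff B = \varnothing$, i.e.\ when $A=B$. In that case the two products coincide, giving $\langle AB\rangle_0 = \langle BA\rangle_0 = \tau(A,A)$. In all other cases, both scalar parts vanish. Extending by bilinearity gives $\langle xy\rangle_0 = \langle yx\rangle_0$ for arbitrary $x,y\in\cl$, including the infinite case since only finitely many basis terms contribute to any given product summand.

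The argument presents no real obstacle: the only subtlety is recognizing the slick rewriting $[z] = 2\langle z\rangle_0 - z$, after which the identity collapses to the elementary cyclic property of the scalar part. In particular, no hypothesis on $X$ being finite is needed, which matches the statement of the lemma.
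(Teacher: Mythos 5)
Your proof is correct, and it takes a genuinely different route from the paper's. The paper proves the lemma by expanding $x = \sum_B x_B B$ with $y = A$ a fixed basis set, working out both $[xA]x$ and $x[Ax]$ term by term, and showing that the cross-terms cancel using the fact that $AA$ is a scalar (so $BAA = AAB$). Your argument instead packages the involution as $[z] = 2\langle z\rangle_0 - z$, which collapses the claimed identity to the cyclicity $\langle xy\rangle_0 = \langle yx\rangle_0$ of the scalar part, i.e.\ the symmetry $x*y = y*x$ already noted in the paper (for finite $X$). You then re-verify that symmetry on basis elements using only $A\symdiff B = \varnothing \Leftrightarrow A = B$, which works equally well for infinite $X$ and avoids appealing to Proposition \ref{prop_reverse}. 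The benefit of your version is that it isolates the single algebraically meaningful fact being used (the trace-like property of $\langle\,\cdot\,\rangle_0$) and then lets the computation trivialize; the paper's version is more pedestrian but self-contained within the combinatorial formalism. Both are valid; your reduction is cleaner and arguably more illuminating.
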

	\begin{proof}
		By linearity, we can take $y=A \in \mathscr{P}(X)$ and expand $x$ in 
		coordinates $x_B \in R$ as $x = \sum_{B \in \mathscr{P}(X)} x_B B$.
		We obtain
		\begin{displaymath}
		\setlength\arraycolsep{2pt}
		\begin{array}{rcl}
			x[Ax]
				&=& \sum_{B,C} x_B x_C\ B[AC] \\[5pt]
				&=& \sum_{B,C} x_B x_C\ \big((A \symdiff C = \varnothing) - (A \symdiff C \neq \varnothing)\big)\ BAC \\[5pt]
				&=& \sum_{B,C} x_B x_C\ \big((A=C) - (A \neq C)\big)\ BAC \\[5pt]
				&=& \sum_B x_B x_A\ BAA - \sum_{C \neq A} \sum_B x_B x_C\ BAC
		\end{array}
		\end{displaymath}
		and
		\begin{displaymath}
		\setlength\arraycolsep{2pt}
		\begin{array}{rcl}
			[xA]x
				&=& \sum_{B,C} x_B x_C\ [BA]C \\[5pt]
				&=& \sum_{B,C} x_B x_C\ \big((B=A) - (B \neq A)\big)\ BAC \\[5pt]
				&=& \sum_C x_A x_C\ AAC - \sum_{B \neq A} \sum_C x_B x_C\ BAC \\[5pt]
				&=& x_A^2\ AAA + \sum_{C \neq A} x_A x_C\ AAC - \sum_{B \neq A} x_B x_A \underbrace{BAA}_{AAB} \\[5pt]
				&& \quad - \sum_{B \neq A} \sum_{C \neq A} x_B x_C\ BAC \\[5pt]
				&=& x_A^2\ AAA - \sum_{B \neq A} \sum_{C \neq A} x_B x_C\ BAC \\[5pt]
				&=& x[Ax].
		\end{array}
		\end{displaymath}
	\end{proof}
	
	\noindent
	We now have the following
	
	\begin{lem} \label{lem_norm_conj_4}
		$N_4(x^\cliffconj) = N_4(x)$ for all $x \in \mathcal{G}(\mathbb{R}^{1,3})$.
	\end{lem}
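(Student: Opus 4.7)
The plan is to exploit two facts: first, $N_4$ takes values in $\mathbb{R}$ by construction (via the identity \eqref{G_0_3_4_norm} applied to $N_2(x) \in \mathcal{G}^0 \oplus \mathcal{G}^3 \oplus \mathcal{G}^4$), so that $N_4(y) = \langle N_4(y)\rangle_0$ for any $y$; and second, the identity $[xy]\,x = x\,[yx]$ just established in Lemma \ref{lem_nonscalar_conj}. Since $(x^\cliffconj)^\cliffconj = x$, we have $N_4(x^\cliffconj) = [x x^\cliffconj]\,x x^\cliffconj$, so the task reduces to showing
\[
\langle [xx^\cliffconj]\,x x^\cliffconj\rangle_0 = \langle [x^\cliffconj x]\,x^\cliffconj x\rangle_0.
\]

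The key trick I would use is the cyclic property of the scalar part, $\langle ABC\rangle_0 = \langle CAB\rangle_0$, which follows from $\langle AB\rangle_0 = A * B = B * A$. Combining this with Lemma \ref{lem_nonscalar_conj} applied at $y = x^\cliffconj$ gives the chain
\[
\langle [xx^\cliffconj]\,x x^\cliffconj\rangle_0
= \langle x^\cliffconj\,[xx^\cliffconj]\,x\rangle_0
= \langle x^\cliffconj x\,[x^\cliffconj x]\rangle_0
= \langle [x^\cliffconj x]\,x^\cliffconj x\rangle_0,
\]
where the first and third equalities are cyclic permutations and the middle one substitutes $[xx^\cliffconj]\,x = x\,[x^\cliffconj x]$ from the lemma. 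The right-hand side is $N_4(x)$ by definition, completing the argument.

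I do not anticipate a genuine obstacle here: once one recognizes that $N_4$ is \emph{a priori} scalar and therefore invariant under cyclic permutation inside $\langle\,\cdot\,\rangle_0$, the identity in Lemma \ref{lem_nonscalar_conj} supplies exactly the piece needed to swap the order of $x$ and $x^\cliffconj$. The only real temptation to avoid is trying to verify the equality by brute expansion in coordinates using \eqref{N4_expression}, which would be long, unenlightening, and completely beside the point --- the whole purpose of Lemma \ref{lem_nonscalar_conj} is precisely to make such manipulations unnecessary.
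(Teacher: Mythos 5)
Your proof is correct and follows essentially the same route as the paper's: both rely on Lemma~\ref{lem_nonscalar_conj} applied at $y=x^\cliffconj$ together with the cyclic symmetry of the scalar part $\langle\,\cdot\,\rangle_0$, justified by the fact that $N_4$ is a priori real-valued. The paper applies the lemma before passing inside $\langle\,\cdot\,\rangle_0$ and then performs a single cyclic shift, whereas you do a cyclic shift, then the lemma, then another cyclic shift — a cosmetic reordering of the same steps.
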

	
	\noindent
		(Note that this identity is not at all obvious from the expression \eqref{N4_expression}.)
	
	\begin{proof}
		Using Lemma \ref{lem_nonscalar_conj} we have that
		$$
			N_4(x^\cliffconj) = [xx^\cliffconj]xx^\cliffconj = x[x^\cliffconj x]x^\cliffconj.
		$$
		Since $N_4$ takes values in $\mathbb{R}$, this must be a scalar, so that
		$$
			N_4(x^\cliffconj) = \langle x[x^\cliffconj x]x^\cliffconj \rangle_0 
				= \langle [x^\cliffconj x]x^\cliffconj x \rangle_0 = \langle N_4(x) \rangle_0 = N_4(x),
		$$
		where we used the symmetry of the scalar product.
	\end{proof}
	
	\begin{lem} \label{lem_conj_034}
		For all $X,Y \in \mathcal{G}^0 \oplus \mathcal{G}^3 \oplus \mathcal{G}^4$ we have
		\begin{displaymath}
			[XY] = [Y][X].
		\end{displaymath}
	\end{lem}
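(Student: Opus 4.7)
The plan is to decompose elements of $W := \mathcal{G}^0 \oplus \mathcal{G}^3 \oplus \mathcal{G}^4$ in a way that makes the computation a one-liner, and then to identify $[\cdot]$ with a known antiautomorphism on the relevant subspace. Every grade-$3$ element of the STA has the form $vI$ for some $v \in V = \mathbb{R}^{1,3}$ (by duality) and every grade-$4$ element is a real multiple of $I$, so I can write
\begin{equation*}
	X = \alpha + \xi I, \qquad Y = \gamma + \eta I,
\end{equation*}
with $\alpha,\gamma \in \mathbb{R}$ and $\xi,\eta \in \mathcal{G}^{\le 1} := \mathcal{G}^0 \oplus \mathcal{G}^1$.

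The computational engine is the pair of identities $I^2 = -1$ (from \eqref{pseudoscalar_squared} for signature $(1,3)$) together with $Iy = y^\star I$ for every $y \in \mathcal{G}$, the latter holding because $\dim V = 4$ is even so $I$ anticommutes with vectors and commutes with scalars. Using these I expand
\begin{equation*}
	XY = \alpha\gamma + \alpha\eta I + \gamma\xi I + \xi I \eta I = (\alpha\gamma - \xi\eta^\star) + (\alpha\eta + \gamma\xi)I,
\end{equation*}
where the first summand lies in $\mathcal{G}^{\le 2}$ and the second in $\mathcal{G}^3 \oplus \mathcal{G}^4$. Applying the same manipulation to $[Y] = \gamma - \eta I$ and $[X] = \alpha - \xi I$ gives
\begin{equation*}
	[Y][X] = (\gamma\alpha - \eta\xi^\star) - (\gamma\xi + \alpha\eta)I.
\end{equation*}

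It then suffices to compute $[XY]$ from the first decomposition. By linearity of $[\cdot]$ I split off the two summands: the part $(\alpha\eta + \gamma\xi)I$ sits in grades $3$ and $4$, both flipped by $[\cdot]$, so it contributes $-(\alpha\eta + \gamma\xi)I$, which already matches the corresponding term in $[Y][X]$. For the part $\alpha\gamma - \xi\eta^\star \in \mathcal{G}^{\le 2}$ I use the key observation that, on the subspace $\mathcal{G}^{\le 2}$, the involution $[\cdot]$ agrees with the Clifford conjugate $\cliffconj$ --- both fix grade $0$ and negate grades $1$ and $2$, as can be read off Table~\ref{table_involutions}. Since $\cliffconj$ is an antiautomorphism by \eqref{cliffconj_property}, and since an easy check on $\mathcal{G}^{\le 1}$ gives $\xi^\cliffconj = \xi^\star$ and $(\eta^\star)^\cliffconj = \eta$, I obtain
\begin{equation*}
	[\alpha\gamma - \xi\eta^\star] = \alpha\gamma - (\xi\eta^\star)^\cliffconj = \alpha\gamma - (\eta^\star)^\cliffconj \xi^\cliffconj = \alpha\gamma - \eta\xi^\star,
\end{equation*}
matching the remaining term in $[Y][X]$.

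I do not expect any serious obstacle; the whole argument is grade bookkeeping. The only mild subtlety worth flagging is that $XY$ need not lie in $W$ --- the product $\xi\eta^\star$ typically picks up grade-$1$ and grade-$2$ parts --- so $[\cdot]$ in the statement must be read as the globally defined involution, not merely its restriction to $W$. Once that is acknowledged, the identification $[\cdot]|_{\mathcal{G}^{\le 2}} = \cliffconj|_{\mathcal{G}^{\le 2}}$ delivers the antiautomorphism property on the piece where it is needed, and the rest is bilinearity.
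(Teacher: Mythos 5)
Your argument is correct and follows the same basic strategy as the paper's: parametrize $X,Y$ by their scalar, trivector, and pseudoscalar content, expand both $[XY]$ and $[Y][X]$ using $I^2 = -1$ and the (anti)commutation of $I$ with vectors, and compare. The only genuine difference is organizational: by packaging the grade-$3$ and grade-$4$ parts as $\xi I$ with $\xi \in \mathcal{G}^0\oplus\mathcal{G}^1$ and invoking $[\cdot]|_{\mathcal{G}^{\le 2}} = \cliffconj|_{\mathcal{G}^{\le 2}}$ together with the antiautomorphism property \eqref{cliffconj_property}, you avoid having to split $bb'$ into $b*b'$ and $b\wedge b'$ and track their signs separately as the paper does, which is a mild but legitimate streamlining of the same computation.
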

	\begin{proof}
		Take arbitrary elements $X = \alpha + bI + \beta I$ and $Y = \alpha' + b'I + \beta' I$. Then
		\begin{displaymath}
		\setlength\arraycolsep{2pt}
		\begin{array}{rcl}
			[XY]
				&=& [(\alpha + bI + \beta I)(\alpha' + b'I + \beta' I)]\\[5pt]
				&=& \alpha \alpha' - \alpha b'I - \alpha \beta' I - bI\alpha' + b*b' - b \wedge b' + b \beta' - \beta I \alpha' - \beta b' - \beta \beta'
		\end{array}
		\end{displaymath}
		and
		\begin{displaymath}
		\setlength\arraycolsep{2pt}
		\begin{array}{rcl}
			[Y][X]
				&=& (\alpha' - b'I - \beta' I)(\alpha - bI - \beta I)\\[5pt]
				&=& \alpha' \alpha - \alpha' bI - \alpha' \beta I - b'I\alpha + b'*b + b' \wedge b - b' \beta - \beta' I \alpha + \beta' b - \beta' \beta.
		\end{array}
		\end{displaymath}
		Comparing these expressions we find that they are equal.
	\end{proof}
	
	\noindent
	We can now prove that $N_4$ actually acts as a determinant on the STA.
	
	\begin{thm} \label{thm_norm_prod_4}
		The norm function $N_4$ satisfies the product property
		\begin{displaymath}
			N_4(xy) = N_4(x) N_4(y) \quad \forall x,y \in \mathcal{G}(\mathbb{R}^{1,3}).
		\end{displaymath}
	\end{thm}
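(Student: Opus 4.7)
The plan is to reduce everything to the two technical identities already established: Lemma \ref{lem_nonscalar_conj} ($[xy]x = x[yx]$, valid in any Clifford algebra) and Lemma \ref{lem_conj_034} ($[XY] = [Y][X]$ for $X,Y \in \mathcal{G}^0 \oplus \mathcal{G}^3 \oplus \mathcal{G}^4$). No expansion in a basis should be needed.

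First I would rewrite
$N_4(xy) = [(xy)^\cliffconj xy]\,(xy)^\cliffconj xy = [y^\cliffconj N_2(x) y]\, y^\cliffconj N_2(x) y$,
using $(xy)^\cliffconj xy = y^\cliffconj N_2(x) y$ and the fact that, since $N_2(xy)$ lies in $\mathcal{G}^0\oplus\mathcal{G}^3\oplus\mathcal{G}^4$, the involutions $[\cdot]$ and $[\cdot]_{3,4}$ agree on it. Next I apply Lemma \ref{lem_nonscalar_conj} with $A := y^\cliffconj N_2(x)$ and $B := y$, which yields $[AB]A = A[BA]$, i.e.\
$[y^\cliffconj N_2(x) y]\, y^\cliffconj N_2(x) = y^\cliffconj N_2(x)\,[\,yy^\cliffconj\, N_2(x)\,]$. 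Multiplying by $y$ on the right gives
$N_4(xy) = y^\cliffconj N_2(x)\,[\,M(y) N_2(x)\,]\,y$, where $M(y):=yy^\cliffconj$. A quick check shows $M(y)^\cliffconj = M(y)$ and $N_2(x)^\cliffconj = N_2(x)$, so both factors inside the bracket lie in $\mathcal{G}^0\oplus\mathcal{G}^3\oplus\mathcal{G}^4$, and Lemma \ref{lem_conj_034} applies: $[M(y)N_2(x)] = [N_2(x)]\,[M(y)]$.

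Substituting this back gives
$N_4(xy) = y^\cliffconj\, N_2(x)[N_2(x)]\,[M(y)]\,y = y^\cliffconj N_4(x)\,[M(y)]\,y$,
and since $N_4(x)\in\mathbb{R}$ it commutes out to the front:
$N_4(xy) = N_4(x)\,\bigl(y^\cliffconj [M(y)] y\bigr)$.
Finally, a second application of Lemma \ref{lem_nonscalar_conj}, this time with $A=y^\cliffconj$ and $B=y$, gives $[y^\cliffconj y]\,y^\cliffconj = y^\cliffconj [yy^\cliffconj]$, so right-multiplication by $y$ yields $N_4(y) = [N_2(y)]N_2(y) = y^\cliffconj [M(y)] y$. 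Combining gives $N_4(xy) = N_4(x)N_4(y)$.

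The only place where one must be careful, and which I expect is the subtlest point, is justifying that Lemma \ref{lem_conj_034} is genuinely applicable to $M(y) N_2(x)$: one must verify that $[\cdot]$ and $[\cdot]_{3,4}$ agree on $W := \mathcal{G}^0\oplus\mathcal{G}^3\oplus\mathcal{G}^4$ (so that the statement of the lemma matches the $[\cdot]_{3,4}$ in the definition of $N_4$), and that $M(y)$ and $N_2(x)$ both lie in $W$ so that $\cliffconj$ fixes them. Everything else is just chaining the two lemmas; the non-commutativity of $\mathcal{G}(\mathbb{R}^{1,3})$ and the fact that $[\cdot]$ is neither an automorphism nor an anti-automorphism are handled precisely by this combination.
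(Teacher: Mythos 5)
Your proof is correct. It chains the same two lemmas as the paper --- Lemma \ref{lem_nonscalar_conj} and Lemma \ref{lem_conj_034} --- but closes the argument differently. The paper passes to scalar parts $\langle\cdot\rangle_0$ early, uses cyclicity of the trace to rearrange factors, lands on $N_4(x)\,N_4(y^\cliffconj)$, and invokes Lemma \ref{lem_norm_conj_4} (that $N_4(y^\cliffconj) = N_4(y)$) as its final step. You instead manipulate the full expressions throughout and finish with a second inline application of Lemma \ref{lem_nonscalar_conj} (to the factors $y^\cliffconj$ and $y$), which directly identifies the residual $y^\cliffconj[yy^\cliffconj]y$ with $[N_2(y)]N_2(y) = N_4(y)$. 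Since the paper's proof of Lemma \ref{lem_norm_conj_4} itself amounts to exactly such an application of Lemma \ref{lem_nonscalar_conj} plus cyclicity of the scalar part, the two arguments are really unfoldings of each other; yours is a little more self-contained, at the modest cost of inlining the content of Lemma \ref{lem_norm_conj_4}. The subtlety you flag --- that $[\cdot]$ and $[\cdot]_{3,4}$ coincide on $\mathcal{G}^0 \oplus \mathcal{G}^3 \oplus \mathcal{G}^4$, which is exactly where $N_2(x)$ and $M(y) = yy^\cliffconj$ land --- is the right one to check, and your criterion (fixedness under $\cliffconj$) handles it cleanly.
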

	
	\begin{proof}
		Using that $N_4(xy)$ is a scalar, and that $N_2$ takes values in 
		$\mathcal{G}^0 \oplus \mathcal{G}^3 \oplus \mathcal{G}^4$, we obtain
		\begin{displaymath}
		\setlength\arraycolsep{2pt}
		\begin{array}{rcl}
			N_4(xy) 
				&=& \langle [(xy)^\cliffconj xy](xy)^\cliffconj xy \rangle_0
				= \langle [y^\cliffconj x^\cliffconj x y] y^\cliffconj x^\cliffconj x y \rangle_0 \\[5pt]
				&=& 
				\langle x^\cliffconj x y [y^\cliffconj x^\cliffconj x y] y^\cliffconj \rangle_0 
				= \langle x^\cliffconj x [y y^\cliffconj x^\cliffconj x] y y^\cliffconj \rangle_0 \\[5pt]
				&=& \langle N_2(x) [N_2(y^\cliffconj) N_2(x)] N_2(y^\cliffconj) \rangle_0 \\[5pt]
				&=& \langle N_2(x) [N_2(x)] [N_2(y^\cliffconj)] N_2(y^\cliffconj) \rangle_0 \\[5pt]
				&=& \langle N_4(x) N_4(y^\cliffconj) \rangle_0 = N_4(x) N_4(y^\cliffconj),
		\end{array}
		\end{displaymath}
		where we applied Lemma \ref{lem_nonscalar_conj} and then Lemma \ref{lem_conj_034}.
		Finally, Lemma \ref{lem_norm_conj_4} gives the claimed identity.
	\end{proof}

	From \eqref{N4_definition} we find that the 
	multiplicative group of the STA is given by
	$$
		\mathcal{G}^\times(\mathbb{R}^{1,3}) = \{ x \in \mathcal{G} : N_4(x) \neq 0 \}
	$$
	and the inverse of $x \in \mathcal{G}^\times$ is
	$$
		x^{-1} = \frac{1}{N_4(x)} [x^\cliffconj x] x^\cliffconj.
	$$
	Note that the above theorems regarding $N_4$ only rely on the commutation properties
	of the different graded subspaces and not on the actual signature
	and field of the vector space. 
	
	Let us now turn our attention to the rotor group of the STA.
	The reverse equals the Clifford conjugate on the even subalgebra (it also corresponds
	to the Clifford conjugate defined on the relative space),
	so we find from \eqref{N2_STA_expression} that the rotor group is
	$$
	\setlength\arraycolsep{2pt}
	\begin{array}{l}
		\textrm{Spin}^{+}(1,3) 
				= \{ x \in \mathcal{G}^+ : N_2(x) = x^\cliffconj x = 1  \} \\[5pt]
		\quad	= \{ \alpha + \boldsymbol{a} + \boldsymbol{b}I + \beta I \in \mathcal{G}^+ \ : \ \alpha^2 - \boldsymbol{a}^2 + \boldsymbol{b}^2 - \beta^2 = 1 \ \textrm{and} \ \alpha\beta = \boldsymbol{a}*\boldsymbol{b} \} \\[5pt]
		\quad	= \pm e^{\mathcal{G}^2(\mathbb{R}^{1,3})} \cong \textrm{SL}(2,\mathbb{C}).
	\end{array}
	$$
	The last isomorphism is related to the Dirac representation of the STA, while
	the exponentiation identity was obtained from Theorem \ref{thm_spin_is_exp} and gives a better picture
	of what the rotor group looks like. Namely, any rotor $R$ can be written 
	$R = \pm e^{\boldsymbol{a}+\boldsymbol{b}I}$
	for some relative vectors $\boldsymbol{a},\boldsymbol{b}$,
	or by Theorem \ref{thm_bivec_decomp}, either
	$$
		R = e^{\alpha\boldsymbol{e}} e^{\varphi\boldsymbol{e}I},
	$$
	where $\alpha,\varphi \in \mathbb{R}$ and 
	$\boldsymbol{e} = f_0 \wedge f_1$ is a timelike unit blade, or 
	$$
		R = \pm e^{n \wedge f_2} = \pm e^{\alpha\boldsymbol{e}(1+\boldsymbol{f})}
		= \pm 1 \pm \alpha\boldsymbol{e}(1+\boldsymbol{f}),
	$$
	with a null vector $n$ and anticommuting timelike unit blades $\boldsymbol{e},\boldsymbol{f}$
	(cp. Exercise \ref{exc_noninvertible_vec_bivec}).
	A simple rotor of the form $e^{\boldsymbol{b}I}$ corresponds to a rotation in the spacelike plane
	$\boldsymbol{b}\dual$ with angle $2|\boldsymbol{b}|$ 
	(which is a rotation also in relative space),
	while $e^{\boldsymbol{a}}$ corresponds to a hyperbolic rotation in the timelike plane $\boldsymbol{a}$,
	i.e. a boost in the relative space direction $\boldsymbol{a}$ with 
	hyperbolic angle $2|\boldsymbol{a}|$.
	
	Picking a timelike and a spacelike unit reference vector, e.g. $\gamma_0$ and $\gamma_1$,
	we obtain the Spin and Pin groups,
	\begin{displaymath}
	\setlength\arraycolsep{2pt}
	\begin{array}{rcl}
		\textrm{Spin}(1,3) &=& \pm e^{\mathcal{G}^2(\mathbb{R}^{1,3})}\ \bigsqcup\ \pm e^{\mathcal{G}^2(\mathbb{R}^{1,3})} \gamma_0\gamma_1, \\[5pt]
		\textrm{Pin}(1,3) &=& \pm e^{\mathcal{G}^2(\mathbb{R}^{1,3})}\ \bigsqcup\ \pm e^{\mathcal{G}^2(\mathbb{R}^{1,3})} \gamma_0\ \bigsqcup\ \pm e^{\mathcal{G}^2(\mathbb{R}^{1,3})} \gamma_1\ \bigsqcup\ \pm e^{\mathcal{G}^2(\mathbb{R}^{1,3})} \gamma_0\gamma_1,
	\end{array}
	\end{displaymath}
	The Pin group forms a double-cover of the so called \emph{Lorentz group} O(1,3).
	Since the STA rotor group is connected, we find that O(1,3) has four connected components.
	Two of these are not in Spin and correspond to a single inversion of time resp. space.
	The Spin group covers the subgroup of \emph{proper} Lorentz transformations 
	preserving the total orientation,
	while the rotor group covers the connected, \emph{proper orthochronous} Lorentz group
	$\SO^+(1,3)$, which also preserves the direction of time.
	The physical interpretations of the spacetime algebra will be further discussed in
	Section \ref{sec_STA}.

\subsubsection{*The Dirac algebra}
	
	Due to historic reasons, the \emph{Dirac algebra} 
	$\mathcal{G}(\mathbb{C}^{4}) \cong \mathcal{G}(\mathbb{R}^{4,1})$
	is actually the representation of the STA which is most
	commonly used in physical applications. 
	The relation between these algebras is observed by noting that the
	pseudoscalar in $\mathcal{G}(\mathbb{R}^{4,1})$ commutes with all
	elements and squares to minus the identity. By Proposition \ref{prop_iso_real_complex}
	we have that the Dirac algebra is the complexification of the STA,
	$$
		\mathcal{G}(\mathbb{R}^{4,1}) \cong \mathcal{G}(\mathbb{R}^{1,3}) \otimes \mathbb{C} \cong \mathcal{G}(\mathbb{C}^4) \cong \mathbb{C}^{4 \times 4}.
	$$
	We construct this isomorphism explicitly by taking bases
	$\{\gamma_0,\gamma_1,\gamma_2,\gamma_3\}$ of $\mathbb{R}^{1,3}$ as above,
	and $\{e_0,\ldots,e_4\}$ of $\mathbb{R}^{4,1}$ such that $e_0^2 = -1$ and the other $e_j^2 = 1$.
	We write $\mathcal{G}_5 := \mathcal{G}(\mathbb{R}^{4,1})$ and 
	$\mathcal{G}_4^\mathbb{C} := \mathcal{G}(\mathbb{R}^{1,3}) \otimes \mathbb{C}$, 
	and use the convention that Greek indices run from 0 to 3.
	The isomorphism $F \!: \mathcal{G}_5 \to \mathcal{G}_4^\mathbb{C}$
	is given by the following 1-to-1 correspondence of basis elements:
	\begin{displaymath}
	\begin{array}{rccccccc}
		\mathcal{G}_4^\mathbb{C}:
		& 1 \otimes 1
		& \gamma_\mu \otimes 1
		& \gamma_\mu \wedge \gamma_\nu \otimes 1
		& \gamma_\mu \wedge \gamma_\nu \wedge \gamma_\lambda \otimes 1
		& I_4 \otimes 1
		& 1 \otimes i \\[5pt]
		\mathcal{G}_5:
		& 1
		& e_\mu e_4
		& -e_\mu \wedge e_\nu
		& -e_\mu \wedge e_\nu \wedge e_\lambda \thinspace e_4
		& e_0 e_1 e_2 e_3
		& I_5 \\[5pt]
		x^\cliffconj\ \textrm{in}\ \mathcal{G}_4^\mathbb{C}:
		& + & - & - & + & + & + \\[5pt]
		[x]\ \textrm{in}\ \mathcal{G}_4^\mathbb{C}:
		& + & - & - & - & - & + \\[5pt]
		\overline{x}\ \textrm{in}\ \mathcal{G}_4^\mathbb{C}:
		& + & + & + & + & + & - \\[5pt]
	\end{array}
	\end{displaymath}
	The respective pseudoscalars are $I_4 := \gamma_0\gamma_1\gamma_2\gamma_3$
	and $I_5 := e_0e_1e_2e_3e_4$.
	We have also noted the correspondence between involutions in
	the different algebras. Clifford conjugate in $\mathcal{G}_4^\mathbb{C}$
	corresponds to reversion in $\mathcal{G}_5$, 
	the $[\thinspace\cdot\thinspace]$-involution becomes the $[\thinspace\cdot\thinspace]_{1,2,3,4}$-involution,
	while complex conjugation in $\mathcal{G}_4^\mathbb{C}$ corresponds 
	to grade involution in $\mathcal{G}_5$. In other words,
	$$
		F(x^\dagger) = F(x)^\cliffconj,
		\quad F([x]_{1,2,3,4}) = [F(x)],
		\quad F(x^\star) = \overline{F(x)}.
	$$
	
	We can use the correspondence above to find a norm function on $\mathcal{G}_5$.
	Since $N_4 \!: \mathcal{G}(\mathbb{R}^{1,3}) \to \mathbb{R}$ 
	is actually independent of the choice of field,
	we have that the complexification of $N_4$ satisfies
	\begin{displaymath}
	\setlength\arraycolsep{2pt}
	\begin{array}{rcl}
		N_4^\mathbb{C} \!: \mathcal{G}(\mathbb{C}^{4}) &\to& \mathbb{C}, \\[5pt]
			x &\mapsto& [x^\cliffconj x] x^\cliffconj x.
	\end{array}
	\end{displaymath}
	Taking the modulus of this complex number, we arrive at a real-valued map
	$N_5 \!: \mathcal{G}(\mathbb{R}^{4,1}) \to \mathbb{R}$ with
	\begin{displaymath}
	\setlength\arraycolsep{2pt}
	\begin{array}{rrl}
		N_5(x) &:=& \overline{N^{\mathbb{C}}_4\big(F(x)\big)} N^{\mathbb{C}}_4\big(F(x)\big) \\[5pt]
			&=& \overline{ [F(x)^\cliffconj F(x)] F(x)^\cliffconj F(x) }\ [F(x)^\cliffconj F(x)] F(x)^\cliffconj F(x) \\[5pt]
			&=& \big[ [x^\dagger x]_{1,2,3,4} x^\dagger x \big]_5 [x^\dagger x]_{1,2,3,4} x^\dagger x \\[5pt]
			&=& \big[ [x^\dagger x]_{1,4} x^\dagger x \big] [x^\dagger x]_{1,4} x^\dagger x.
	\end{array}
	\end{displaymath}
	In the final steps we noted that $x^\dagger x \in \mathcal{G}^0 \oplus \mathcal{G}^1 \oplus \mathcal{G}^4 \oplus \mathcal{G}^5$
	and that $\mathbb{C} \subseteq \mathcal{G}_4^\mathbb{C}$ corresponds
	to $\mathcal{G}^0 \oplus \mathcal{G}^5 \subseteq \mathcal{G}_5$.
	Furthermore, since $N^{\mathbb{C}}_4(xy) = N^{\mathbb{C}}_4(x) N^{\mathbb{C}}_4(y)$, we have
	\begin{displaymath}
	\setlength\arraycolsep{2pt}
	\begin{array}{rrl}
		N_5(xy) 
			&=& \overline{N^{\mathbb{C}}_4\big(F(x)F(y)\big)}\ N^{\mathbb{C}}_4\big(F(x)F(y)\big) \\[5pt]
			&=& \overline{N^{\mathbb{C}}_4\big(F(x)\big)}\ \overline{N^{\mathbb{C}}_4\big(F(y)\big)}\ N^{\mathbb{C}}_4\big(F(x)\big)\ N^{\mathbb{C}}_4\big(F(y)\big) \\[5pt]
			&=& N_5(x) N_5(y)
	\end{array}
	\end{displaymath}
	for all $x,y \in \mathcal{G}$.
	The invertible elements of the Dirac algebra are then as usual
	$$
		\mathcal{G}^\times(\mathbb{R}^{4,1}) = \{ x \in \mathcal{G} : N_5(x) \neq 0 \}
	$$
	and the inverse of $x \in \mathcal{G}^\times$ is
	$$
		x^{-1} = \frac{1}{N_5(x)} \big[ [x^\dagger x]_{1,4} x^\dagger x \big] [x^\dagger x]_{1,4} x^\dagger.
	$$
	The above strategy could also have been used to obtain the expected 
	result for $N_3$ on $\mathcal{G}(\mathbb{R}^{3,0}) \cong \mathcal{G}(\mathbb{C}^2)$
	(with a corresponding isomorphism $F$):
	$$
		N_3(x) := \overline{N^{\mathbb{C}}_2\big(F(x)\big)} N^{\mathbb{C}}_2\big(F(x)\big) = [x^\cliffconj x] x^\cliffconj x.
	$$

\subsection{*Norm functions and factorization identities}

	The norm functions 
	\begin{displaymath}
	\setlength\arraycolsep{2pt}
	\begin{array}{rcl}
		N_0(x) &:=& x, \\[5pt]
		N_1(x) &=& x^\cliffconj x, \\[5pt]
		N_2(x) &=& x^\cliffconj x, \\[5pt]
		N_3(x) &=& [x^\cliffconj x] x^\cliffconj x, \\[5pt]
		N_4(x) &=& [x^\cliffconj x] x^\cliffconj x, \\[5pt]
		N_5(x) &=& \big[ [x^\dagger x]_{1,4} x^\dagger x \big] [x^\dagger x]_{1,4} x^\dagger x
	\end{array}
	\end{displaymath}
	constructed above (where we added $N_0$ for completeness)
	all satisfy
	$$
		N_k\!: \mathcal{G}(V) \to \mathbb{F},
	$$
	where $\dim V = k$,
	and the product property
	\begin{equation} \label{norm_function_property}
		N_k(xy) = N_k(x) N_k(y)
	\end{equation}
	for all $x,y \in \mathcal{G}(V)$. 
	Furthermore,
	because these functions only involve products and involutions,
	and the proofs of the above identities only rely on commutation properties in the
	respective algebras, they even hold for arbitrary Clifford algebras $\cl(X,R,r)$
	with $|X|=k=0,1,\ldots,5$, respectively, 
	and the corresponding groups of invertible elements are
	$$
		\cl^\times(X,R,r) = \{ x \in \cl : \textrm{$N_k(x) \in R$ invertible} \}.
	$$
	
	For matrix algebras, a similar product property is satisfied by the determinant.
	On the other hand, we have the following theorem for matrices.
	
	\begin{thm} \label{thm_determinant}
		Assume that $d\!: \mathbb{R}^{n \times n} \to \mathbb{R}$ is continuous
		and satisfies
		\begin{equation} \label{determinant_property}
			d(AB) = d(A)d(B)
		\end{equation}
		for all $A,B \in \mathbb{R}^{n \times n}$.
		Then $d$ must be either $0$, $1$, $|\det|^\alpha$ or $(\textrm{\emph{sign}} \circ \det) |\det|^\alpha$
		for some $\alpha > 0$.
	\end{thm}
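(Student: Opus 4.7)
\bigskip

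\noindent\textbf{Proof proposal.} The plan is to classify $d$ first on $\GL_n(\mathbb{R})$ and then extend by continuity to singular matrices. First I would record the normalizations: from $d(I) = d(I\cdot I) = d(I)^2$ one gets $d(I) \in \{0,1\}$, and if $d(I) = 0$ then $d(A) = d(A)d(I) = 0$ for all $A$, giving the trivial solution $d \equiv 0$. Assume therefore $d(I) = 1$. Since every $A \in \GL_n$ satisfies $d(A)d(A^{-1}) = d(I) = 1$, we have $d(A) \neq 0$ on $\GL_n$. Furthermore, $d$ is invariant under conjugation: $d(PAP^{-1}) = d(A)$ for $P \in \GL_n$.

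The central step is to show that $d|_{\GL_n}$ factors through the determinant. For this I would use the elementary transvections $E_{ij}(\lambda) := I + \lambda e_{ij}$ ($i \neq j$), which generate $\SL_n(\mathbb{R})$. Conjugating $E_{ij}(\lambda)$ by a suitable diagonal matrix shows that $E_{ij}(\lambda)$ is $\GL_n$-conjugate to $E_{ij}(1)$ for every $\lambda \neq 0$, so $d(E_{ij}(\lambda))$ is independent of $\lambda \neq 0$; taking $\lambda \to 0$ and using continuity together with $d(I) = 1$ forces $d(E_{ij}(\lambda)) = 1$ for all $\lambda$. Multiplicativity then gives $d \equiv 1$ on $\SL_n(\mathbb{R})$. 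Since every $A \in \GL_n$ factors as $A = \diag(\det A, 1, \dots, 1) \cdot B$ with $B \in \SL_n$, setting $h(t) := d(\diag(t,1,\dots,1))$ yields a continuous multiplicative map $h\!: \mathbb{R}^\times \to \mathbb{R}^\times$ with $d(A) = h(\det A)$ on $\GL_n$.

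Next I would classify $h$. For $t > 0$, $h(t) = h(\sqrt{t})^2 \geq 0$ and $h(t) \neq 0$, so $h > 0$ on $\mathbb{R}_+$; pulling back by the exponential reduces $\log h \circ \exp$ to a continuous additive self-map of $\mathbb{R}$, hence linear, so $h(t) = t^\alpha$ on $\mathbb{R}_+$ for some $\alpha \in \mathbb{R}$. The relation $h(-1)^2 = h(1) = 1$ gives $h(-1) = \pm 1$, so on $\GL_n$ we have $d(A) = \epsilon(\det A)\,|\det A|^\alpha$ with $\epsilon \in \{1,\sgn\}$ and $\alpha \in \mathbb{R}$.

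Finally, I would extend to singular matrices using continuity and the density of $\GL_n$ in $\mathbb{R}^{n\times n}$. If $\alpha < 0$, $|\det A|^\alpha$ blows up as $A$ approaches a singular matrix, so $\alpha \geq 0$. If $\alpha = 0$ and $\epsilon = 1$, then $d \equiv 1$ on a dense set, hence $d \equiv 1$ everywhere. If $\alpha = 0$ and $\epsilon = \sgn$, then $d$ takes the values $\pm 1$ on the two connected components of $\GL_n$, which cannot be joined continuously through the (connected) singular locus, a contradiction. If $\alpha > 0$, then $|\det A|^\alpha \to 0$ along any approach to the singular locus, so $d$ must vanish there, and we obtain exactly $d = |\det|^\alpha$ or $d = (\sgn \circ \det)|\det|^\alpha$. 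The main obstacle is the factorization step through $\det$ on $\GL_n$; the rest is continuous character theory on $\mathbb{R}^\times$ plus a connectedness argument to rule out the bad case $\alpha = 0$ with $\epsilon = \sgn$.
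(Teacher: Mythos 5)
Your proof is correct and reaches the same conclusion, but it takes a genuinely different route through two of the main steps. Where you generate $\SL_n(\mathbb{R})$ by transvections $E_{ij}(\lambda)$ and kill them with the conjugation-plus-continuity argument (conjugating $E_{ij}(\lambda)$ to $E_{ij}(1)$ by a diagonal matrix to see $d(E_{ij}(\lambda))$ is constant for $\lambda\neq 0$, then letting $\lambda\to 0$), the paper instead runs a Gaussian-elimination decomposition $A = E_1 \ldots E_k R$ into reduced row-echelon form and disposes of the shear $E_{ij}(c)$ by first deriving $d(E_{ij}(c)) = e^{\alpha_{ij} c}$ from the additive-to-multiplicative reduction and then forcing $\alpha_{ij} = 0$ via the explicit identity $E_{ji}(-1) = E_i(-1) R_{ij} E_{ji}(1) E_{ij}(-1)$. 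Your conjugation trick is cleaner and more conceptual, and it sidesteps the need for that somewhat magic identity; the paper's version is more hands-on and does not require citing that transvections generate $\SL_n$. The second divergence is in the treatment of singular matrices: you extend $d$ from $\GL_n$ by density, using the blow-up of $|\det|^\alpha$ to rule out $\alpha<0$ and a component-crossing argument to rule out the $\alpha=0$, $\epsilon = \sgn$ case, while the paper obtains $d(A)=0$ for singular $A$ directly from the row-echelon decomposition (a singular $R$ has a zero row, so $R = E_i(0)R$ forces $d(R)=0$), and then invokes continuity and connectedness of $\GL^{\pm}$, $\det^{-1}(0)$ to fix signs. The only place your argument is slightly underspecified is the claim that the two components of $\GL_n$ ``cannot be joined continuously through the singular locus''; to make this airtight you should note that any singular matrix lies in the closure of both $\GL^+$ and $\GL^-$ (e.g.\ write $A = P\Sigma Q$ with $P,Q$ invertible and $\Sigma$ a rank projection, and perturb the zero diagonal entries of $\Sigma$ with either sign), so continuity of $d$ at such a point would force $1 = -1$. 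Both proofs rest on the same continuous-character computation on $\mathbb{R}^{++}$ via $\log\circ\,d\circ E_1\circ\exp$, so the core analytic ingredient is shared.
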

	
	\noindent
	In other words, we must have that $d = d_1 \circ \det$, where $d_1\!: \mathbb{R} \to \mathbb{R}$
	is continuous and satisfies $d_1(\lambda \mu) = d_1(\lambda) d_1(\mu)$.
	This $d_1$ is uniquely determined e.g. by whether $d$ takes negative values, together with
	the value of $d(\lambda\id)$ for any $\lambda > 1$.
	This means that the determinant is the $unique$ real-valued function on real matrices with
	the product property \eqref{determinant_property}.
	A proof of this theorem can be found in the appendix (Theorem \ref{thm_uniqueness_of_det}).
	
	Now, looking at Table \ref{table_classification_real}, we see that 
	$\mathcal{G}(\mathbb{R}^{k,k}) \cong \mathbb{R}^{2^k \times 2^k}$ for $k=0,1,2,\ldots$
	From the above theorem we then know that there are \emph{unique}\footnote{Actually,
	the functions are either $\det$ or $|\det|$. $N_2$ and $N_4$ constructed previously
	are smooth, however, so they must be equal to $\det$.}
	continuous functions $N_{2k}\!: \mathcal{G}(\mathbb{R}^{k,k}) \to \mathbb{R}$
	such that $N_{2k}(xy) = N_{2k}(x) N_{2k}(y)$ and $N_{2k}(\lambda) = \lambda^{2^k}$.
	These are given by the determinant on the corresponding matrix algebra,
	and the corresponding multiplicative group $\mathcal{G}^\times$
	is the general linear group $\GL(2^k,\mathbb{R})$.

	\begin{exmp}
		The product property \eqref{norm_function_property} of
		the norm functions leads to interesting factorization
		identities on rings. An example is $N_2$ for anti-euclidean signatures (quaternions),
		\begin{equation} \label{lagrange_identity}
		\setlength\arraycolsep{2pt}
		\begin{array}{l}
			(x_1^2 + x_2^2 + x_3^2 + x_4^2)(y_1^2 + y_2^2 + y_3^2 + y_4^2) \\[5pt]
			\quad	= (x_1y_1 - x_2y_2 - x_3y_3 - x_4y_4)^2
					+ (x_1y_2 + x_2y_1 + x_3y_4 - x_4y_3)^2 \\[5pt]
			\qquad +\ (x_1y_3 - x_2y_4 + x_3y_1 + x_4y_2)^2
					+ (x_1y_4 + x_2y_3 - x_3y_2 + x_4y_1)^2,
		\end{array}
		\end{equation}
		which holds for all $x_j,y_k \in R$ and
		is called the \emph{Lagrange identity}.
		These types of identities can be used to prove theorems in number
		theory. Using \eqref{lagrange_identity}, one can for example prove that every integer
		can be written as a sum of four squares of integers. Or, in other
		words, that every integer is the norm (squared) of an integral quaternion.
		See e.g. \cite{herstein} for a proof.
	\end{exmp}

	\newpage

\section{Euclidean and conformal geometry} \label{sec_conformal}

\subsection{Linearization of the Euclidean group}

The \emph{euclidean group} $\mathbb{E}_n^+$ on $\mathbb{R}^n$ consists of
all orientation-preserving isometries, i.e. maps 
$f: \mathbb{R}^n \to \mathbb{R}^n$ such that $|f(x)-f(y)| = |x-y|$
for all $x,y \in \mathbb{R}^n$.
An element of $\mathbb{E}_n^+$ can be shown to be 
a composition of rotations
(orthogonal maps with determinant 1) and translations.
As we have seen, rotations $\underline{R} \in \SO(n)$ 
can be represented by rotors $R \in \Spin(n)$ through
$$
	\mathbb{R}^n \ni x \mapsto \underline{R}(x) = R^\star x R^{-1} = RxR^\dagger \in \mathbb{R}^n,
$$
while translations are of the form
$$
	\mathbb{R}^n \ni x \mapsto T_a(x) = x + a \in \mathbb{R}^n, \quad a \in \mathbb{R}^n.
$$
The euclidean group is sometimes slightly cumbersome to 
work with due to the fact that it not composed exclusively of linear transformations.
We will therefore embed $\mathbb{R}^n$ in a different geometric
algebra where the euclidean isometries are transformed into linear maps.

Let $\{e_1\ldots,e_n\}$ be an orthonormal basis of $\mathbb{R}^n$
and introduce a new symbol $e$ such that $\{e_1\ldots,e_n,e\}$
is an orthonormal basis of $\mathbb{R}^{n,0,1}$, where $e^2 = 0$.
Define a map
$$
	\begin{array}{ccc}
		\mathbb{R}^n & \xrightarrow{\rho} & \mathcal{G}(\mathbb{R}^{n,0,1}) \\
		x & \mapsto & 1 + ex
	\end{array}
$$
and let $W$ denote the image of the map $\rho$.
We extend a rotation $\underline{R}$ to $W$ as
$$
	\underline{R}(1 + ex) = R(1 + ex)R^\dagger = RR^\dagger + RexR^\dagger 
	= 1 + eRxR^\dagger = 1 + e\underline{R}(x),
$$
and hence have $\underline{R}: W \to W$ and 
$\underline{R} \circ \rho = \rho \circ \underline{R}$ on $\mathbb{R}^n$.

Let us now consider the translation map $T_a(x) = x + a$.
For this we first introduce
$$
	A = e^{-\frac{1}{2}ea} = 1 - \frac{1}{2}ea
$$
and a slightly modified version of the grade involution $x \to x^\star$
by demanding that $e^\star = e$, 
while as usual $e_i^\star = -e_i$, for $i=1,\ldots,n$.
We then obtain
$$
	A^\star = 1 - \frac{1}{2}ea
	\quad \textrm{and} \quad
	A^\dagger = 1 - \frac{1}{2}ea = A^{-1}.
$$
Hence,
\begin{eqnarray*}
	\underline{A}(\rho(x)) &=& A^\star (1 + ex) A^{-1} = (1 + \frac{1}{2}ea)(1 + ex)(1 + \frac{1}{2}ea) \\
	&=& (1 + ex + \frac{1}{2}ea)(1 + \frac{1}{2}ea) = 1 + \frac{1}{2}ea + ex + \frac{1}{2}ea \\
	&=& 1 + e(x + a) = \rho(x+a) = \rho(T_a(x)),
\end{eqnarray*}
i.e. $\underline{A} \circ \rho = \rho \circ T_a$ on $\mathbb{R}^n$.

This means that an isometry $S = T_a \circ \underline{R}: \mathbb{R}^n \to \mathbb{R}^n$
has the rotor representation
$\underline{AR} = \underline{A} \circ \underline{R}: W \to W$,
with $AR = e^{-\frac{1}{2}ea} e^{-\frac{1}{2}B} \in \mathcal{G}^\times(\mathbb{R}^{n,0,1})$.
Also note that an arbitrary point $x = T_x(0) \in \mathbb{R}^n$ has the 
representation 
$$
	\rho(x) = \underline{e^{-\frac{1}{2}ex}}(\rho(0)) 
	= (e^{-\frac{1}{2}ex})^\star 1(e^{-\frac{1}{2}ex})^\dagger = e^{ex}
$$
in $W \subseteq \mathcal{G}^\times(\mathbb{R}^{n,0,1})$.

\begin{exc}
	Repeat the above construction, but replacing the null vector
	$e$ by a null 2-blade in $\mathbb{R}^{2,1}, \mathbb{R}^{1,2}$, or $\mathbb{R}^{0,2}$.
	Explain what simplifications can be made and show that the
	construction extends to the full euclidean group $\mathbb{E}_n$
	(not necessarily orientation-preserving).
\end{exc}

\subsection{Conformal algebra}

For now, we refer to Chapter 10 in \cite{doran_lasenby}.
For an application of the conformal algebra to so-called \emph{twistors}, 
see \cite{arcaute_lasenby_doran}.

	\newpage
	
\section{Representation theory} \label{sec_reps}

	In this section we will use the classification of geometric algebras as
	matrix algebras, which was developed in Section \ref{sec_isomorphisms},
	to work out the representation theory of these algebras.
	Since one can find representations of geometric algebras in many
	areas of mathematics and physics, this leads to a number of interesting
	applications. In this section
	we will consider two main examples in detail, namely normed division algebras
	and vector fields on higher-dimensional spheres.
	Another important application of the representation theory for geometric
	algebras is to \emph{spinor spaces}, which will be treated in Section \ref{sec_spinors}.
	
	\begin{defn} \label{def_representation}
		For $\mathbb{K} = \mathbb{R}$, $\mathbb{C}$ or $\mathbb{H}$, we define a 
		\emph{$\mathbb{K}$-representation} of $\mathcal{G}(V,q)$
		as an $\mathbb{R}$-algebra homomorphism
		\begin{displaymath}
			\rho\!: \mathcal{G}(V,q) \to \textrm{End}_\mathbb{K}(W),
		\end{displaymath}
		where $W$ is a finite-dimensional vector space over $\mathbb{K}$.
		$W$ is called a \emph{$\mathcal{G}(V,q)$-module} over $\mathbb{K}$.
	\end{defn}
	
	\noindent
	Note that a vector space over $\mathbb{C}$ or $\mathbb{H}$ can be considered as a real vector space together
	with operators $J$ or $I,J,K$ in $\textrm{End}_\mathbb{R}(W)$ that anticommute and 
	square to minus the identity.
	In the definition above we assume that these operators commute with $\rho(x)$
	for all $x \in \mathcal{G}$, so that $\rho$ can be said to respect the $\mathbb{K}$-structure
	of the space $W$. When talking about the dimension of the module $W$
	we will always refer to its dimension as a real vector space.
	
	The standard strategy when studying representation theory is to look
	for irreducible representations.
	
	\begin{defn} \label{def_reducible_rep}
		A representation $\rho$ is called \emph{reducible} if $W$ can be written
		as a direct sum of proper (not equal to $0$ or $W$) invariant subspaces, i.e.
		\begin{displaymath}
			W = W_1 \oplus W_2 \quad \textrm{and} \quad \rho(x)(W_j) \subseteq W_j \quad \forall\ x \in \mathcal{G}.
		\end{displaymath}
		In this case we can write $\rho = \rho_1 \oplus \rho_2$, where $\rho_j(x) := \rho(x)|_{W_j}$.
		A representation is called \emph{irreducible} if it is not reducible.
	\end{defn}
	
	\begin{rem}
		The traditional definition of an irreducible representation is that it
		does not have any proper invariant subspaces 
		(the above notion is then called \emph{indecomposable}).
		However, because $\mathcal{G}$ is
		generated by a finite group (the Clifford group) one can verify that these 
		two definitions are equivalent in this case (see Exercise \ref{exc_irrep}).
	\end{rem}
	
	\begin{prop} \label{prop_irreps}
		Every $\mathbb{K}$-representation $\rho$ of a geometric algebra $\mathcal{G}(V,q)$ 
		can be split up into a direct sum $\rho = \rho_1 \oplus \ldots \oplus \rho_m$
		of irreducible representations.
	\end{prop}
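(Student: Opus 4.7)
The plan is to proceed by induction on $\dim_\mathbb{R} W$. If $\rho$ is already irreducible we are done; otherwise by Definition \ref{def_reducible_rep} we may write $W = W_1 \oplus W_2$ as a direct sum of proper $\mathbb{K}$-linear invariant subspaces, and then $\rho|_{W_i}$ are $\mathbb{K}$-representations on strictly smaller modules to which the induction hypothesis applies. The base case $\dim_\mathbb{R} W = 1$ is automatically irreducible. The genuine content of the proposition is therefore a complete-reducibility statement: \emph{every} invariant $\mathbb{K}$-subspace of $W$ admits an invariant $\mathbb{K}$-linear complement.

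To establish this, I would pass from $\mathcal{G}$ to its finite Clifford group. Assuming $q$ nondegenerate and fixing an orthonormal basis $\{e_1,\ldots,e_n\}$ of $V$ with $e_i^2 = \pm 1$, the set
$$G := \{\, \pm e_{i_1} e_{i_2} \cdots e_{i_k} \, : \, 1 \le i_1 < i_2 < \cdots < i_k \le n \,\} \ \subset \ \mathcal{G}^\times$$
is closed under the geometric product and inversion (by the anticommutation relations \eqref{generator_relations}), hence a finite subgroup of $\mathcal{G}^\times$ of order $2^{n+1}$. Since $G$ spans $\mathcal{G}$ as an $\mathbb{R}$-vector space, a $\mathbb{K}$-subspace $W_1 \subseteq W$ is $\rho(\mathcal{G})$-invariant if and only if it is $\rho(G)$-invariant, which reduces the problem to ordinary finite-group representation theory on $W$ viewed as a $\mathbb{K}$-vector space.

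At this point I would run the standard Maschke averaging argument in the $\mathbb{K}$-Hermitian setting. Starting from any positive-definite $\mathbb{K}$-Hermitian form $\langle \cdot, \cdot \rangle_0$ on $W$, define
$$\langle v, w \rangle := \frac{1}{|G|} \sum_{g \in G} \langle \rho(g) v, \rho(g) w \rangle_0.$$
Each $\rho(g)$ is $\mathbb{K}$-linear by Definition \ref{def_representation}, so the averaged form is again positive-definite and $\mathbb{K}$-Hermitian; the substitution $g \mapsto gh$ shows it is $G$-invariant, so every $\rho(h)$ with $h \in G$ is unitary with respect to $\langle \cdot, \cdot \rangle$. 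Given a $\rho(G)$-invariant $W_1 \subseteq W$, the $\mathbb{K}$-orthogonal complement $W_2 := W_1^\perp$ is then automatically a $\mathbb{K}$-submodule which is $\rho(G)$-invariant and satisfies $W = W_1 \oplus W_2$, closing the induction.

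The main obstacle I anticipate is the bookkeeping for $\mathbb{K} = \mathbb{H}$: one has to fix a side convention (say, $\mathbb{H}$-linear in one slot, $\mathbb{H}$-antilinear in the other) and verify that the averaging preserves that sesquilinearity, which uses precisely that $\rho(g)$ commutes with the right $\mathbb{H}$-action. A secondary subtlety is the degenerate case, in which some $e_i$ fails to be invertible and $G$ is not a group; in that situation I would either quotient by the radical of $q$ first, or replace $G$ by the finite set of signed basis blades built only from the invertible $e_i$'s together with the nilpotent generators adjoined separately. With these points addressed, the argument is entirely Maschke-style.
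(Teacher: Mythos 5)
Your first paragraph \emph{is} the whole proof, and it matches the paper's argument exactly: with Definition~\ref{def_reducible_rep}'s notion of irreducible (i.e.\ indecomposable), the statement follows by induction on $\dim_\mathbb{R} W$, since either $\rho$ is irreducible or $W$ splits into two proper invariant $\mathbb{K}$-submodules of strictly smaller dimension. No further input is needed; the paper's proof is precisely this one-line observation.

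Everything after ``The genuine content of the proposition is therefore a complete-reducibility statement'' is a misreading of what is being claimed. The proposition does \emph{not} assert that every invariant subspace has an invariant complement; it only asserts termination of the splitting process, which finite-dimensionality guarantees for free. What you then go on to prove by passing to the finite Clifford group and running Maschke averaging is the \emph{stronger} statement that indecomposable implies irreducible-in-the-classical-sense, i.e.\ that the two notions of irreducibility in the Remark following Definition~\ref{def_reducible_rep} coincide. That is the content of Exercise~\ref{exc_irrep}, not of this proposition. Your Maschke argument for that exercise is essentially correct (the reduction to the finite group generated by an orthonormal basis, the averaged $\mathbb{K}$-Hermitian form, the side-convention issue over $\mathbb{H}$, and the care needed in the degenerate case are all the right points), but none of it is required here, and including it obscures the fact that the proposition itself is purely a pigeonhole-by-dimension statement.
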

	\begin{proof}
		This follows directly from the definitions and the fact that $W$ is finite-dimensional.
	\end{proof}
	
	\begin{defn} \label{def_equivalent_reps}
		Two $\mathbb{K}$-representations $\rho_j\!: \mathcal{G}(V,q) \to \textrm{End}_\mathbb{K}(W_j)$,
		$j=1,2$, are said to be \emph{equivalent} if there exists a $\mathbb{K}$-linear 
		isomorphism $F\!: W_1 \to W_2$ such that
		\begin{displaymath}
			F \circ \rho_1(x) \circ F^{-1} = \rho_2(x) \quad \forall\ x \in \mathcal{G}.
		\end{displaymath}
	\end{defn}
	
	\begin{thm} \label{thm_matrix_reps}
		Up to equivalence, the only irreducible representations of the matrix
		algebras $\mathbb{K}^{n \times n}$ and $\mathbb{K}^{n \times n} \oplus \mathbb{K}^{n \times n}$ are
		\begin{displaymath}
			\rho\!: \mathbb{K}^{n \times n} \to \textrm{\emph{End}}_\mathbb{K}(\mathbb{K}^n)
		\end{displaymath}
		and
		\begin{displaymath}
			\rho_{1,2}\!: \mathbb{K}^{n \times n} \oplus \mathbb{K}^{n \times n} \to \textrm{\emph{End}}_\mathbb{K}(\mathbb{K}^n)
		\end{displaymath}
		respectively, where $\rho$ is the defining representation 
		(ordinary matrix multiplication) and
		\begin{displaymath}
		\setlength\arraycolsep{2pt}
		\begin{array}{c}
			\rho_1(x,y):=\rho(x), \\
			\rho_2(x,y):=\rho(y).
		\end{array}
		\end{displaymath}
	\end{thm}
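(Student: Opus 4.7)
The plan is to handle the two cases separately, reducing the direct-sum case to the simple matrix algebra case and then handling the latter with a standard matrix-units argument. In both cases irreducibility of the representations listed is immediate (any nonzero element of $\mathbb{K}^n$ is cyclic under the defining action, since matrix units $E_{ji}$ carry $e_i$ to $e_j$), and $\rho_1, \rho_2$ are inequivalent since they have different kernels; the content of the theorem is uniqueness.

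For the direct-sum case, I would use the central orthogonal idempotents $p_1 = (1,0)$ and $p_2 = (0,1)$ with $p_1 + p_2 = 1$. Since they lie in the center, their images $\rho(p_j)$ commute with all of $\rho(\mathbb{K}^{n\times n} \oplus \mathbb{K}^{n\times n})$, so $W = \rho(p_1)W \oplus \rho(p_2)W$ is a decomposition into $\rho$-invariant $\mathbb{K}$-submodules. Irreducibility forces one summand to vanish, say $\rho(p_2) W = 0$, which means $\rho$ factors through the surjection onto the first component and is determined by an irreducible representation of $\mathbb{K}^{n \times n}$. Combined with the simple case below, this gives exactly $\rho_1$ (and $\rho_2$ if $p_1$ is the vanishing one).

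For the simple case $\sigma: \mathbb{K}^{n \times n} \to \mathrm{End}_\mathbb{K}(W)$, I would exploit the matrix units $E_{ij}$ satisfying $E_{ij}E_{kl} = \delta_{jk} E_{il}$ and $\sum_i E_{ii} = 1$. First, $\sigma(E_{11}) \neq 0$: otherwise $\sigma(E_{ii}) = \sigma(E_{i1}) \sigma(E_{11}) \sigma(E_{1i}) = 0$ for all $i$, giving $\sigma(1) = 0$. Pick a nonzero $v \in \sigma(E_{11}) W$. By irreducibility, $\sigma(\mathbb{K}^{n \times n}) v = W$. Applying $\sigma(E_{11})$ to this equation and using $E_{11} M E_{11} \in \mathbb{K} E_{11}$ shows $\sigma(E_{11}) W = v \mathbb{K}$ is one-dimensional over $\mathbb{K}$. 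Then $W = \bigoplus_{i=1}^n \sigma(E_{i1}) \sigma(E_{11}) W = \bigoplus_{i=1}^n \sigma(E_{i1}) (v \mathbb{K})$, so $W$ is $n$-dimensional over $\mathbb{K}$, and the map $F: W \to \mathbb{K}^n$ sending $\sigma(E_{i1})(v q) \mapsto e_i q$ is the candidate equivalence. Checking $F \circ \sigma(M) = \rho(M) \circ F$ on the spanning vectors $\sigma(E_{i1})(vq)$ reduces to the identities $M E_{i1} = \sum_k m_{ki} E_{k1}$, which are built into the matrix unit calculus.

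The main obstacle I anticipate is handling $\mathbb{K} = \mathbb{H}$ carefully: since $\mathbb{H}$ is noncommutative, the $\mathbb{K}$-module structure on $W$ must be treated as a right action (to be compatible with $\sigma(M)$ acting $\mathbb{K}$-linearly from the left), and one must verify that the identification $\sigma(E_{11}) W \cong \mathbb{K}$ is an isomorphism of \emph{right} $\mathbb{K}$-modules with $\mathbb{K}$ acting via the embedding $\mathbb{K} \hookrightarrow E_{11} \mathbb{K}^{n \times n} E_{11}$. A Schur-type argument then forces this $\mathbb{K}$-module to be one-dimensional (any nonzero vector generates it under the cyclic action), and once the right-module structures match up, the verification that $F$ intertwines $\sigma$ with the defining action becomes a direct computation from the relations $E_{ij} E_{kl} = \delta_{jk} E_{il}$.
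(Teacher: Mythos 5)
The paper does not actually prove this theorem: it states that the result ``follows from the classical fact that the algebras $\mathbb{K}^{n\times n}$ are simple ... and that simple algebras have only one irreducible representation up to equivalence,'' and cites Lang. Your proposal, by contrast, supplies the argument itself, and it is essentially the standard proof of that classical fact. The reduction of the $\mathbb{K}^{n\times n}\oplus\mathbb{K}^{n\times n}$ case via the central orthogonal idempotents $p_1=(1,0)$, $p_2=(0,1)$ is exactly right: $\rho(p_1)$, $\rho(p_2)$ are commuting $\mathbb{K}$-linear idempotents summing to the identity, so $W=\rho(p_1)W\oplus\rho(p_2)W$ is a decomposition into invariant $\mathbb{K}$-submodules, and irreducibility kills one summand. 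The matrix-units argument in the simple case ($\sigma(E_{11})\neq 0$, picking $v\in\sigma(E_{11})W$, decomposing $W=\bigoplus_i\sigma(E_{i1})\sigma(E_{11})W$, and building the explicit intertwiner $F$) is the textbook Wedderburn computation, so in substance you are filling in precisely what the paper delegates to the reference.

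Two points deserve a bit more care than the write-up gives them. First, when you invoke irreducibility to get $\sigma(\mathbb{K}^{n\times n})v=W$, note that $\sigma(A)v$ is a priori only an $A$-invariant $\mathbb{R}$-subspace; to apply Definition \ref{def_reducible_rep} you need a $\mathbb{K}$-submodule, so one should work with $\sigma(A)(v\mathbb{K})$ (which equals $W$ by irreducibility since the $A$- and $\mathbb{K}$-actions commute) and then push $\sigma(E_{11})$ through. Second, the step $\sigma(E_{11})W=v\mathbb{K}$ does not fall out immediately from $E_{11}ME_{11}\in\mathbb{K}E_{11}$: what one gets directly is $\sigma(E_{11})W=\sigma(\mathbb{K}E_{11})v$, a \emph{left} $\mathbb{K}$-orbit, and one must then observe that the $\mathbb{R}$-linear map $\lambda\mapsto\sigma(\lambda E_{11})v$ has kernel a proper left ideal of the division ring $\mathbb{K}$, hence is injective, so $\dim_\mathbb{R}\sigma(E_{11})W=\dim_\mathbb{R}\mathbb{K}$; since $\sigma(E_{11})W$ is also a right $\mathbb{K}$-submodule containing $v\mathbb{K}$, the dimension count forces $\sigma(E_{11})W=v\mathbb{K}$. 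This is exactly the ``Schur-type argument'' you flag for $\mathbb{K}=\mathbb{H}$, but it is needed for $\mathbb{K}=\mathbb{C}$ as well, where the left action $\sigma(\lambda E_{11})$ and the right $\mathbb{K}$-module action on $W$ are likewise logically independent until this count is made. With these two remarks spelled out, your proof is complete and genuinely more informative than the paper's citation.
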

		\noindent
		This follows from the classical fact that the algebras 
		$\mathbb{K}^{n \times n}$ are \emph{simple} 
		(i.e. have no proper two-sided ideals) and that simple algebras have only
		one irreducible representation up to equivalence. See e.g. \cite{lang} for details.
	
	\begin{thm} \label{thm_representations}
		From the above, together with the classification of real geometric algebras,
		follows the table of representations in Table \ref{table_representations},
		where $\nu_{s,t}$ is the number of inequivalent irreducible representations
		and $d_{s,t}$ is the (real) dimension of an irreducible representation 
		for $\mathcal{G}(\mathbb{R}^{s,t})$.
		The cases for $n>8$ are obtained using the periodicity
		\begin{equation} \label{rep_periodicity}
		\setlength\arraycolsep{2pt}
		\begin{array}{lcl}
			\nu_{m+8k} &=& \nu_m, \\[5pt]
			d_{m+8k} &=& 16^k d_m.
		\end{array}
		\end{equation}
	\end{thm}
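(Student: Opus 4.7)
The plan is to reduce everything to Theorem \ref{thm_matrix_reps} via the classification in Theorem \ref{thm_classification_real}. By that classification, every $\mathcal{G}(\mathbb{R}^{s,t})$ is isomorphic, as an $\mathbb{R}$-algebra, to exactly one of the matrix algebras $\mathbb{K}^{N \times N}$ or $\mathbb{K}^{N \times N} \oplus \mathbb{K}^{N \times N}$ with $\mathbb{K} \in \{\mathbb{R},\mathbb{C},\mathbb{H}\}$ and $N$ a power of $2$. Since equivalence classes of $\mathbb{K}$-representations are preserved under algebra isomorphism (one pulls $\rho$ back along the iso), it suffices to read the data off the matrix algebra. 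Concretely, for a given $(s,t)$ I identify the entry in Table \ref{table_classification_real} as some $\mathbb{K}[N]$ or $\mathbb{K}[N] \oplus \mathbb{K}[N]$, and then invoke Theorem \ref{thm_matrix_reps}.

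With the identification in hand the counting is immediate: in the simple case $\mathbb{K}^{N \times N}$ there is, up to equivalence, exactly one irreducible $\mathbb{K}$-representation, namely the defining one on $\mathbb{K}^N$, so $\nu_{s,t}=1$ and $d_{s,t} = N \cdot \dim_\mathbb{R} \mathbb{K}$ (that is, $N$, $2N$, or $4N$ for $\mathbb{K}=\mathbb{R},\mathbb{C},\mathbb{H}$, respectively). In the split case $\mathbb{K}^{N \times N} \oplus \mathbb{K}^{N \times N}$ Theorem \ref{thm_matrix_reps} gives two inequivalent irreducibles $\rho_1, \rho_2$, both on $\mathbb{K}^N$, hence $\nu_{s,t}=2$ and $d_{s,t} = N \cdot \dim_\mathbb{R} \mathbb{K}$. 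Filling in the entries of Table \ref{table_classification_real} cell by cell produces the corresponding table of $(\nu_{s,t},d_{s,t})$ values.

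For the periodicity \eqref{rep_periodicity}, I use the isomorphisms
\[
  \mathcal{G}(\mathbb{R}^{s+8,t}) \;\cong\; \mathcal{G}(\mathbb{R}^{s,t+8}) \;\cong\; \mathcal{G}(\mathbb{R}^{s,t}) \otimes_\mathbb{R} \mathbb{R}^{16 \times 16}
\]
from Theorem \ref{thm_classification_real}. Tensoring $\mathbb{K}^{N \times N}$ with $\mathbb{R}^{16 \times 16}$ (over $\mathbb{R}$) gives $\mathbb{K}^{16N \times 16N}$, and likewise for a split algebra the tensor distributes over $\oplus$. Thus the number of simple summands is preserved, giving $\nu_{m+8k}=\nu_m$, whereas the matrix size is multiplied by $16$, giving $d_{m+8k}=16^k d_m$. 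Equivalently, one can argue directly: if $W$ is an irreducible $\mathcal{G}(\mathbb{R}^{s,t})$-module and $\mathbb{R}^{16}$ is the defining module of $\mathbb{R}^{16 \times 16}$, then $W \otimes_\mathbb{R} \mathbb{R}^{16}$ is an irreducible module for the tensor product algebra (this is the standard fact that the simple modules of $A \otimes M_n(\mathbb{R})$ are $V \otimes \mathbb{R}^n$ for $V$ a simple $A$-module), and this accounts for all of them.

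The main subtlety — not a serious obstacle, but worth being explicit about — is that Definition \ref{def_representation} requires $\rho$ to land in $\mathrm{End}_\mathbb{K}(W)$, so the choice of ground field $\mathbb{K}$ must be compatible with the matrix algebra appearing in the classification. For $\mathcal{G} \cong \mathbb{K}^{N \times N}$ one takes $\mathbb{K}$ to be exactly that field; then the centralizer of $\rho(\mathcal{G})$ in $\mathrm{End}_\mathbb{R}(W)$ is $\mathbb{K}$ itself (Schur's lemma applied to the simple $\mathbb{R}$-algebra $\mathbb{K}^{N \times N}$), which is what pins down the $\mathbb{K}$-structure uniquely up to equivalence. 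This justifies the entries in the table as canonically labelled by $(\nu,d,\mathbb{K})$ and rules out spurious decompositions coming from mismatched field choices.
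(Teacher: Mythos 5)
Your proposal is correct and follows exactly the route the paper intends: the theorem is stated as an immediate consequence of Theorem \ref{thm_matrix_reps} applied to the matrix-algebra classification of Theorem \ref{thm_classification_real}, and you simply spell out that reduction cell-by-cell together with the tensor-product argument for periodicity. The remark about the centralizer pinning down the $\mathbb{K}$-structure is a nice point of care that the paper leaves implicit.
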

	
	\begin{table}[ht]
		\begin{displaymath}
		\begin{array}{c|l|c|c|l|c|c}
				n & \mathcal{G}(\mathbb{R}^{n,0}) & \nu_{n,0} & d_{n,0} & \mathcal{G}(\mathbb{R}^{0,n}) & \nu_{0,n} & d_{0,n} \\
			\hline
			&&&&&&\\[-1.8ex]
			0 & \mathbb{R}												& 1 & 1 	& \mathbb{R}												& 1 & 1 \\
			1 & \mathbb{R} \oplus \mathbb{R}							& 2 & 1 	& \mathbb{C}												& 1 & 2 \\
			2 & \mathbb{R}^{2 \times 2}									& 1 & 2 	& \mathbb{H}												& 1 & 4 \\
			3 & \mathbb{C}^{2 \times 2}									& 1 & 4 	& \mathbb{H} \oplus \mathbb{H}								& 2 & 4 \\
			4 & \mathbb{H}^{2 \times 2}									& 1 & 8 	& \mathbb{H}^{2 \times 2}									& 1 & 8 \\
			5 & \mathbb{H}^{2 \times 2} \oplus \mathbb{H}^{2 \times 2}	& 2 & 8 	& \mathbb{C}^{4 \times 4}									& 1 & 8 \\
			6 & \mathbb{H}^{4 \times 4}									& 1 & 16 	& \mathbb{R}^{8 \times 8}									& 1 & 8 \\
			7 & \mathbb{C}^{8 \times 8}									& 1 & 16 	& \mathbb{R}^{8 \times 8} \oplus \mathbb{R}^{8 \times 8}	& 2 & 8 \\
			8 & \mathbb{R}^{16 \times 16}								& 1 & 16 	& \mathbb{R}^{16 \times 16}									& 1 & 16
		\end{array}
		\end{displaymath}
		\caption{Number and dimension of irreducible representations
		of euclidean and anti-euclidean geometric algebras. \label{table_representations}}
	\end{table}
	
	\noindent
	Note that the cases when there are two inequivalent irreducible
	representations of $\mathcal{G}(\mathbb{R}^{s,t})$ correspond
	exactly to the cases when the pseudoscalar $I$ is central and 
	squares to one.
	Furthermore, the two representations $\rho_{\pm}$ are characterized by
	their value on the pseudoscalar,
	$$
		\rho_{\pm}(I) = \pm \id.
	$$
	It is clear that these two representations are inequivalent, 
	since if $F\!: W_1 \to W_2$ is an isomorphism, and $\rho(I) = \sigma\id_{W_1}$,
	then also $F \circ \rho(I) \circ F^{-1} = \sigma\id_{W_2}$.

	As one could expect, the corresponding representation theory 
	over the complex field is simpler:
	
	\begin{thm} \label{thm_complex_representations}
	Let $\nu_n^{\mathbb{C}}$ denote the number of inequivalent 
	irreducible complex representations of $\mathcal{G}(\mathbb{C}^n)$,
	and let $d_n^{\mathbb{C}}$ denote the complex dimension of such a representation.
	Then, if $n=2k$ is even,
	$$
		\nu_n^{\mathbb{C}} = 1 \quad \textrm{and} \quad d_n^{\mathbb{C}} = 2^k,
	$$
	while if $n=2k+1$ is odd,
	$$
		\nu_n^{\mathbb{C}} = 2 \quad \textrm{and} \quad d_n^{\mathbb{C}} = 2^k.
	$$
	\end{thm}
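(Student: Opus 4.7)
The plan is to mirror exactly the strategy used for the real case in Theorem \ref{thm_representations}, exploiting the complex matrix algebra classification of Theorem \ref{thm_classification_complex} together with the complex analogue of Theorem \ref{thm_matrix_reps}. By that classification we have, as $\mathbb{C}$-algebras,
$$
    \mathcal{G}(\mathbb{C}^{2k}) \cong \mathbb{C}^{2^k \times 2^k},
    \qquad
    \mathcal{G}(\mathbb{C}^{2k+1}) \cong \mathbb{C}^{2^k \times 2^k} \oplus \mathbb{C}^{2^k \times 2^k},
$$
so the problem reduces to classifying the finite-dimensional irreducible complex representations of these matrix algebras.

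First I would establish the complex version of Theorem \ref{thm_matrix_reps}: up to equivalence, the only irreducible $\mathbb{C}$-representation of $\mathbb{C}^{m \times m}$ is the defining one on $\mathbb{C}^m$, and the only irreducible $\mathbb{C}$-representations of $\mathbb{C}^{m \times m} \oplus \mathbb{C}^{m \times m}$ are the two projections $\rho_j(x_1,x_2) := x_j$ followed by the defining representation. This rests on the fact that $\mathbb{C}^{m \times m}$ is a simple $\mathbb{C}$-algebra (no nontrivial two-sided ideals) and on Schur's lemma over the algebraically closed field $\mathbb{C}$; the argument is identical to the real case and can either be quoted from the standard references (e.g.\ \cite{lang}) or repeated verbatim.

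Next I would transport these results back through the isomorphisms above. For $n = 2k$ we obtain a single irreducible complex representation, realized on $\mathbb{C}^{2^k}$, so $\nu_n^\mathbb{C} = 1$ and $d_n^\mathbb{C} = 2^k$. For $n = 2k+1$ the direct-sum structure yields exactly two inequivalent irreducibles, distinguished (as in the real case) by the value of $\rho(I)$ on the pseudoscalar: since $I$ is central and $I^2 = \pm 1$ may be rescaled to $+1$ in the complex setting, the two summands act as $\rho_\pm(I) = \pm \id$, and any equivalence $F$ would force $F \rho_+(I) F^{-1} = \rho_-(I)$, contradicting $\pm\id \neq \mp\id$. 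Hence $\nu_n^\mathbb{C} = 2$ and $d_n^\mathbb{C} = 2^k$.

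There is essentially no hard step here; the only point that warrants care is the simplicity claim for $\mathbb{C}^{m \times m}$ as a complex algebra (so that Schur's lemma delivers uniqueness of the irreducible representation up to equivalence) and the observation that, because $\mathbb{C}$ is algebraically closed, one does not need the periodicity-by-eight bookkeeping that complicated the real case; the two-periodicity $\mathcal{G}(\mathbb{C}^{n+2}) \cong \mathcal{G}(\mathbb{C}^n) \otimes_\mathbb{C} \mathbb{C}^{2 \times 2}$ already gives the full picture, and the dimension count $d_n^\mathbb{C} = 2^{\lfloor n/2 \rfloor}$ is immediate from the matrix size in the classification.
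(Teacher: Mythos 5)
Your proof is correct and takes essentially the same route the paper intends: combine the matrix-algebra classification $\mathcal{G}(\mathbb{C}^{2k}) \cong \mathbb{C}^{2^k\times 2^k}$, $\mathcal{G}(\mathbb{C}^{2k+1}) \cong \mathbb{C}^{2^k\times 2^k} \oplus \mathbb{C}^{2^k\times 2^k}$ from Theorem \ref{thm_classification_complex} with Theorem \ref{thm_matrix_reps} (which, note, already covers $\mathbb{K}=\mathbb{C}$, so no separate re-derivation is needed), and distinguish the two odd-dimensional irreducibles by their action on the normalized central volume element $I_\mathbb{C}=i^kI$, exactly as in the remark following the theorem.
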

	
	\noindent
	The inequivalent representations in the odd case are characterized
	by their value on the \emph{complex volume element} $I_\mathbb{C} := i^kI$,
	which in that case is central, and always squares to one.
	
	We will now consider the situation when the representation space $W$
	is endowed with an inner product.
	Note that if $W$ is a vector space over $\mathbb{K}$ with an inner product,
	then we can always find a \emph{$\mathbb{K}$-invariant} inner product on $W$, i.e.
	such that the operators $J$ or $I,J,K$ are orthogonal.
	Namely, let $\langle\cdot,\cdot\rangle_\mathbb{R}$ be an inner product on $W$
	and put
	\begin{equation} \label{C_H_inv_inner_prod}
		\langle x,y \rangle_\mathbb{C} := \sum_{\Gamma \in \{\id,J\}} \langle \Gamma x, \Gamma y \rangle_\mathbb{R}, \quad
		\langle x,y \rangle_\mathbb{H} := \sum_{\Gamma \in \{\id,I,J,K\}} \langle \Gamma x, \Gamma y \rangle_\mathbb{R}.
	\end{equation}
	Then $\langle Jx,Jy \rangle_\mathbb{K} = \langle x,y \rangle_\mathbb{K}$ 
	and $\langle Jx,y \rangle_\mathbb{K} = -\langle x,Jy \rangle_\mathbb{K}$, etc.
	
	In the same way, when $V$ is euclidean or anti-euclidean, we can 
	for a given representation $\rho\!: \mathcal{G}(V) \to \textrm{End}_\mathbb{K}(W)$
	find an inner product such that $\rho$ acts orthogonally with unit vectors, 
	i.e. such that $\langle \rho(e)x,\rho(e)y \rangle = \langle x,y \rangle$ for
	all $x,y \in W$ and $e \in V$ with $e^2 = \pm 1$.
	We construct such an inner product by averaging a, possibly $\mathbb{K}$-invariant,
	inner product $\langle \cdot,\cdot \rangle_\mathbb{K}$ over the Clifford group.
	Also for general signatures,
	take an orthonormal basis $E$ of $V$ and put
	\begin{equation} \label{rho_inv_inner_prod}
		\langle x,y \rangle := \sum_{\Gamma \in \mathcal{B}_E} \langle \rho(\Gamma) x, \rho(\Gamma) y \rangle_\mathbb{K}.
	\end{equation}
	We then have that
	\begin{equation}
		\langle \rho(e_i)x,\rho(e_i)y \rangle = \langle x,y \rangle
	\end{equation}
	for all $e_i \in E$, and
	if $e_i \neq e_j$ in $E$ have the same signature,
	$$
	\setlength\arraycolsep{2pt}
	\begin{array}{lcl}
		\langle \rho(e_i)x,\rho(e_j)y \rangle 
			= \langle \rho(e_i)\rho(e_i)x,\rho(e_i)\rho(e_j)y \rangle
			= \pm \langle x,\rho(e_i)\rho(e_j)y \rangle \\[3pt]
			\quad = \mp \langle x,\rho(e_j)\rho(e_i)y \rangle
			= \mp \langle \rho(e_j)x,\rho(e_j)\rho(e_j)\rho(e_i)y \rangle \\[3pt]
			\quad = - \langle \rho(e_j)x,\rho(e_i)y \rangle.
	\end{array}
	$$
	Thus, 
	in the (anti-)euclidean case,
	if $e = \sum_i a_i e_i$ and $\sum_i a_i^2 = 1$, we obtain
	$$
		\langle \rho(e)x,\rho(e)y \rangle = \sum_{i,j} a_i a_j \langle \rho(e_i)x,\rho(e_i)y \rangle = \langle x,y \rangle.
	$$
	Hence, this inner product has the desired property.
	Also note that, for $v \in V = \mathbb{R}^{n,0}$, we have
	\begin{equation} \label{rho_acts_symmetric}
		\langle \rho(v)x,y \rangle = \langle x,\rho(v)y \rangle,
	\end{equation}
	while for $V = \mathbb{R}^{0,n}$,
	\begin{equation} \label{rho_acts_antisymmetric}
		\langle \rho(v)x,y \rangle = - \langle x,\rho(v)y \rangle,
	\end{equation}
	i.e. $\rho(v)$ is symmetric for euclidean spaces and antisymmetric for
	anti-euclidean spaces.

	\begin{exc} \label{exc_irrep}
		Prove that a representation $\rho$ of $\mathcal{G}$ is irreducible
		(according to Definition \ref{def_reducible_rep})
		if and only if it does not have any proper invariant subspaces. \\
		\emph{Hint:} You may assume that the representation space has an invariant inner product.
	\end{exc}
	
\subsection{Examples}
	
	We are now ready for some examples which illustrate how representations
	of geometric algebras can appear in various contexts and how their
	representation theory can be used to prove important theorems.

\subsubsection{Normed division algebras}
	
	Our first example concerns the possible dimensions of normed division algebras. 
	A \emph{normed division algebra} is an algebra $\mathcal{A}$ over $\mathbb{R}$ 
	(not necessarily associative) with unit and a norm $|\cdot|$ such that 
	\begin{equation} \label{norm_div_algebra}
		|xy|=|x||y|
	\end{equation}
	for all $x,y \in \mathcal{A}$
	and such that every nonzero element is invertible. We will prove the following
	
	\begin{thm}[Hurwitz' Theorem] \label{thm_hurwitz}
		If $\mathcal{A}$ is a finite-dimensional normed division algebra 
		over $\mathbb{R}$, then its dimension is either 1, 2, 4 or 8.
	\end{thm}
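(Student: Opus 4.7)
The plan is to endow $\mathcal{A}$ with a Clifford module structure over $\mathcal{G}(\mathbb{R}^{0,n-1})$, where $n=\dim \mathcal{A}$, and then read off the allowed values of $n$ from the classification (Table~\ref{table_representations}).

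First I would turn the norm into an inner product. Either the norm already comes from one, or we invoke Jordan--von Neumann (Example~\ref{exmp_JvN}) after checking the parallelogram identity; in either case $|\cdot|^2$ is a positive-definite quadratic form with polarization $\langle\cdot,\cdot\rangle$. Polarizing the multiplicativity identity $|xy|^2=|x|^2|y|^2$ twice in $x$ gives $\langle ux,uy\rangle=|u|^2\langle x,y\rangle$ for all $u,x,y\in\mathcal{A}$. Introducing left multiplication $L_u\in\End_{\mathbb{R}}(\mathcal{A})$, $L_u x:=ux$, and writing $L_u^{\top}$ for its adjoint with respect to $\langle\cdot,\cdot\rangle$, this reads $L_u^{\top} L_u=|u|^2\id$. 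Polarizing this identity in $u$ yields the key relation
$$L_u^{\top} L_v + L_v^{\top} L_u = 2\langle u,v\rangle\,\id.$$

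Next I specialize to $u=1$. Since $L_1=\id$, the identity becomes $L_v+L_v^{\top}=2\langle 1,v\rangle\id$, so $L_v^{\top}=-L_v$ for every $v\in 1^\perp$. Combining antisymmetry with orthogonality gives $L_v^2=-L_v^{\top}L_v=-|v|^2\id$, and inserting the antisymmetry into the polarized relation yields $L_uL_v+L_vL_u=-2\langle u,v\rangle\id$ on $1^\perp$. Choosing an orthonormal basis $\{1,e_1,\ldots,e_{n-1}\}$ of $\mathcal{A}$, the operators $L_{e_i}$ therefore satisfy $L_{e_i}^2=-\id$ and $L_{e_i}L_{e_j}+L_{e_j}L_{e_i}=0$ for $i\neq j$. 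By the universal property (Proposition~\ref{prop_universality}), the assignment $e_i\mapsto L_{e_i}$ extends uniquely to an $\mathbb{R}$-algebra homomorphism
$$\rho\colon\mathcal{G}(\mathbb{R}^{0,n-1})\longrightarrow\End_{\mathbb{R}}(\mathcal{A}),$$
i.e.\ an $\mathbb{R}$-representation of $\mathcal{G}(\mathbb{R}^{0,n-1})$ on the $n$-dimensional space $\mathcal{A}$.

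Finally I decompose $\rho$ into irreducibles (Proposition~\ref{prop_irreps}); since every irreducible $\mathbb{R}$-representation of $\mathcal{G}(\mathbb{R}^{0,n-1})$ has the same real dimension $d_{0,n-1}$, this forces $d_{0,n-1}\mid n$. Reading off Table~\ref{table_representations}, for $n=1,\ldots,8$ the values of $d_{0,n-1}$ are $1,2,4,4,8,8,8,8$, and the divisibility holds \emph{only} for $n\in\{1,2,4,8\}$. For $n\geq 9$ the periodicity \eqref{rep_periodicity} gives $d_{0,n-1}\geq 16^{\lfloor(n-1)/8\rfloor}=2^{(n-1)/2}$, which exceeds $n$, so the required representation cannot fit inside $\mathcal{A}$. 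This rules out all other dimensions.

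The main obstacle I foresee is the very first step: ensuring that the given norm actually polarizes to an inner product (equivalently, that $|\cdot|^2$ is a quadratic form), since a general ``normed'' algebra need not satisfy the parallelogram identity. Once this is secured, the remainder is essentially a mechanical translation of multiplicativity into Clifford anticommutation relations followed by a dimension count, and the mysterious list $\{1,2,4,8\}$ emerges as a direct consequence of the mod-$8$ periodicity of real Clifford algebras.
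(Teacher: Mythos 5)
Your overall strategy matches the paper's exactly: realize $\mathcal{A}$ as a module over $\mathcal{G}(\mathbb{R}^{0,n-1})$ via left multiplication by imaginary elements, then use the representation theory (Table~\ref{table_representations} plus periodicity \eqref{rep_periodicity}) to constrain $n$ by the divisibility requirement $d_{0,n-1}\mid n$. In fact, once the inner product is in hand, your purely algebraic route to the anticommutation relations — polarizing $|uz|^2=|u|^2|z|^2$ in $z$ and then in $u$ to get $L_u^{\top}L_v+L_v^{\top}L_u=2\langle u,v\rangle\id$, and specializing $u=1$ to read off $L_v^{\top}=-L_v$ on $1^\perp$ — is cleaner than the paper's derivation, which differentiates $\langle L_{\gamma(t)}x,L_{\gamma(t)}y\rangle$ along a curve $\gamma$ through $1_\mathcal{A}$ on the unit sphere. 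Both give the same conclusion $L_a^2=-|a|^2\id$ on $\mathrm{Im}\,\mathcal{A}$.

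The gap you flag at the end, however, is genuine and is precisely where the paper does real work. You cannot assume the norm polarizes to an inner product: the parallelogram identity is not an evident consequence of $|xy|=|x||y|$, and Jordan--von Neumann only applies after you have it. The paper closes this gap with Lemma~\ref{lem_operator_symmetry}: if the unit sphere $S_V$ of a finite-dimensional normed space is mapped into itself transitively by linear operators, then the norm comes from an inner product. The hypothesis is supplied here by the division-algebra structure: for $x,y\in S$ there is (by invertibility of nonzero elements) an $a$ with $ax=y$, and multiplicativity forces $|a|=1$, so $L_a$ is a norm-preserving linear map carrying $x$ to $y$. The lemma's proof then uses compactness of the norm-isometry group inside $\GL(n)$, averages an arbitrary inner product over that group via Haar measure to produce a $G$-invariant one, and compares the two unit spheres (both homeomorphic to $S^{n-1}$, one contained in the other, hence equal). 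Without this argument, or something equivalent, your proposal builds the entire Clifford module on an unproved hypothesis, so the step must be supplied before the rest of your proof goes through.
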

	
	\begin{rem}
		This corresponds uniquely to $\mathbb{R}$, $\mathbb{C}$, $\mathbb{H}$, and the octonions $\mathbb{O}$, respectively.
		The proof of uniqueness requires some additional steps, see e.g. \cite{baez}.
	\end{rem}
	
	Let us first consider what restrictions that the 
	requirement \eqref{norm_div_algebra} puts on the norm.
	Assume that $\mathcal{A}$ has dimension $n$.
	For every $a \in \mathcal{A}$ we have a linear transformation
	\begin{displaymath}
	\setlength\arraycolsep{2pt}
	\begin{array}{lccl}
		L_a\!:	& \mathcal{A}	& \to 		& \mathcal{A}, \\
				& x 			& \mapsto 	& ax
	\end{array}
	\end{displaymath}
	given by left multiplication by $a$. When $|a|=1$ we then have
	\begin{equation}
		|L_a x|=|ax|=|a||x|=|x|,
	\end{equation}
	i.e. $L_a$ preserves the norm. Hence, it maps the unit sphere $S:=\{x \in \mathcal{A}: |x|=1\}$
	in $\mathcal{A}$ into itself. Furthermore, since every element in $\mathcal{A}$ is invertible,
	we can for each pair $x,y \in S$ find an $a \in S$ such that $L_a x = ax = y$.
	Now, these facts imply a large amount of symmetry of $S$. In fact, we have the following
	
	\begin{lem} \label{lem_operator_symmetry}
		Assume that $V$ is a finite-dimensional normed vector space.
		Let $S_V$ denote the unit sphere in $V$. If, for every $x,y \in S_V$,
		there exists an operator $L \in \textrm{\emph{End}}(V)$ such that
		$L(S_V) \subseteq S_V$ and $L(x)=y$, then $V$ must be an inner product space.
	\end{lem}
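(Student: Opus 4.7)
My plan is to exploit the hypothesis to show that the norm on $V$ actually comes from an inner product, and then invoke the Jordan--von Neumann theorem (Example \ref{exmp_JvN}) to conclude. The main idea is that the hypothesis endows $V$ with enough symmetry to produce an inner product by averaging.

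First I would clean up the hypothesis. Any $L \in \textrm{End}(V)$ with $L(S_V) \subseteq S_V$ is automatically a linear isometry of $(V,|\cdot|)$: if $v \neq 0$ then $|L(v)| = |v|\cdot |L(v/|v|)| = |v|$, and the trivial kernel makes $L$ bijective in finite dimension. So if $G := \{L \in \textup{GL}(V) : |Lv|=|v|\ \forall v \in V\}$ denotes the group of linear isometries, the hypothesis is equivalent to saying that $G$ acts transitively on $S_V$.

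Next I would observe that $G$ is a compact subgroup of $\textup{GL}(V)$: it is bounded in operator norm (every $L \in G$ has operator norm $1$), and it is closed, being defined by the closed conditions $|Lv|=|v|$ for all $v$. Fix any auxiliary inner product $\langle\cdot,\cdot\rangle_0$ on $V$ (which exists, since $V$ is a finite-dimensional real vector space) and define
$$
	\langle x,y\rangle := \int_G \langle Lx, Ly\rangle_0 \, d\mu(L),
$$
where $\mu$ is a normalized Haar measure on the compact group $G$. This is again an inner product, and by construction it is $G$-invariant: $\langle Mx,My\rangle = \langle x,y\rangle$ for all $M\in G$. Denote the associated norm by $\|\cdot\|$.

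Now I would use transitivity to compare $\|\cdot\|$ with $|\cdot|$. Pick any $x_0 \in S_V$ and set $r := \|x_0\|$. For any $y \in S_V$, by hypothesis there is some $L \in G$ with $L(x_0) = y$, and then $\|y\| = \|L x_0\| = \|x_0\| = r$ by $G$-invariance of $\langle\cdot,\cdot\rangle$. Hence $\|v\| = r$ for every $v \in S_V$, and by homogeneity $\|v\| = r|v|$ for all $v \in V$. Thus $|\cdot|$ equals the norm associated to the inner product $r^{-2}\langle\cdot,\cdot\rangle$, proving that $V$ is an inner product space. (Equivalently, one can verify the parallelogram identity for $|\cdot| = r^{-1}\|\cdot\|$ and appeal to the Jordan--von Neumann theorem cited in Example \ref{exmp_JvN}.)

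The only delicate point is the existence of Haar measure on the compact group $G$; everything else is essentially bookkeeping. If one wishes to avoid invoking Haar measure, an alternative is to show directly that the set $K := \{v \in V : |v|\le 1\}$ is a $G$-invariant compact convex symmetric body, that its John ellipsoid (the unique ellipsoid of maximal volume contained in $K$) is therefore also $G$-invariant, and then use transitivity of $G$ on $S_V$ to deduce that $K$ coincides with this ellipsoid up to scale, which again gives the inner product structure. Either route works, but the Haar-averaging argument is the shortest and is the one I would present.
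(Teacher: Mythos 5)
Your proof is correct and follows the same strategy as the paper's: you show that the norm-isometry group $G$ is a compact subgroup of $\GL(V)$, produce a $G$-invariant inner product by Haar averaging, and then use transitivity of $G$ on $S_V$ to identify the two norms. Your concluding step (deducing $\|v\| = r|v|$ directly from homogeneity once every point of $S_V$ has the same $\|\cdot\|$-norm) is marginally tidier than the paper's, which instead establishes the inclusion $S \subseteq S_\circ$ of unit spheres and then invokes a topological argument that two embedded $(n{-}1)$-spheres so related must coincide; your route avoids that appeal to invariance of domain.
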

	\begin{proof}[*Proof]
		We will need the following fact:
		Every compact subgroup $G$ of $\textrm{GL}(n)$ preserves some inner product on $\mathbb{R}^n$.
		This can be shown by picking a Haar-measure $\mu$ on $G$ and averaging any
		inner product $\langle \cdot,\cdot \rangle$ on $\mathbb{R}^n$ over $G$ using this measure,
		(analogous to \eqref{C_H_inv_inner_prod},\eqref{rho_inv_inner_prod} where the group is finite)
		\begin{equation}
			\langle x,y \rangle_G := \int_G \langle gx,gy \rangle\ d\mu(g).
		\end{equation}
		
		Now, let $G$ be the group of linear transformations on $V \cong \mathbb{R}^n$ 
		which preserve its norm $|\cdot|$.
		$G$ is compact in the finite-dimensional operator norm topology, since $G = \bigcap_{x \in V} \{L \in \textrm{End}(\mathbb{R}^n) : |Lx|=|x|\}$
		is closed and bounded by 1. Furthermore, $L \in G$ is injective and therefore an isomorphism.
		The group structure is obvious. Hence, $G$ is a compact subgroup of $\textrm{GL}(n)$.
		
		From the above we know that there exists an inner product $\langle \cdot,\cdot \rangle$ on $\mathbb{R}^n$
		which is preserved by $G$. Let $|\cdot|_\circ$ denote the norm associated to this inner
		product, i.e. $|x|_\circ^2 = \langle x,x \rangle$.
		Take a point $x \in \mathbb{R}^n$ with $|x|=1$ and rescale the inner 
		product so that also $|x|_\circ = 1$.
		Let $S$ and $S_\circ$ denote the unit spheres associated to $|\cdot|$ and $|\cdot|_\circ$,
		respectively. By the conditions in the lemma, there is for every $y \in S$
		an $L \in G$ such that $L(x)=y$.
		But $G$ also preserves the norm $|\cdot|_\circ$, so $y$ must also lie in $S_\circ$.
		Hence, $S$ is a subset of $S_\circ$. However, being unit spheres associated to norms,
		$S$ and $S_\circ$ are both homeomorphic
		to the standard sphere $S^{n-1}$, so we must have that they are equal.
		Therefore, the norms must be equal.
	\end{proof}
	
	We now know that our normed division algebra $\mathcal{A}$ has some
	inner product $\langle \cdot,\cdot \rangle$ such that $\langle x,x \rangle = |x|^2$.
	We call an element $a \in \mathcal{A}$ \emph{imaginary} if $a$ is orthogonal to the
	unit element, i.e. if $\langle a,1_\mathcal{A} \rangle = 0$.
	Let $\textrm{Im}\ \mathcal{A}$ denote the $(n-1)$-dimensional subspace of imaginary elements.
	We will observe that $\textrm{Im}\ \mathcal{A}$ acts on $\mathcal{A}$ in a special way.
	
	Take a curve $\gamma\!: (-\epsilon,\epsilon) \to S$ on the unit sphere such that
	$\gamma(0)=1_\mathcal{A}$ and $\gamma'(0)=a \in \textrm{Im}\ \mathcal{A}$.
	(Note that $\textrm{Im}\ \mathcal{A}$ is the tangent space to $S$ at the unit element.)
	Then, because the product in $\mathcal{A}$ is continuous, $\forall x,y \in \mathcal{A}$
	$$
	\setlength\arraycolsep{2pt}
	\begin{array}{rcl}
		\frac{d}{dt}\big|_{t=0} L_{\gamma(t)} x 
			&=& {\displaystyle \lim_{h \to 0}}\ \frac{1}{h}\big(L_{\gamma(h)} x - L_{\gamma(0)} x\big) \\[8pt]
			&=& {\displaystyle \lim_{h \to 0}}\ \frac{1}{h}\big(\gamma(h) - \gamma(0)\big)x
			= \gamma'(0)x = ax = L_a x
	\end{array}
	$$
	and
	$$
	\setlength\arraycolsep{2pt}
	\begin{array}{rcl}
		0 &=& \frac{d}{dt}\big|_{t=0} \langle x,y \rangle 
			= \frac{d}{dt}\big|_{t=0} \langle L_{\gamma(t)}x,L_{\gamma(t)}y \rangle \\[8pt]
			&=& \langle \frac{d}{dt}\big|_{t=0} L_{\gamma(t)}x,L_{\gamma(0)}y \rangle + \langle L_{\gamma(0)}x,\frac{d}{dt}\big|_{t=0} L_{\gamma(t)}y \rangle \\[8pt]
			&=& \langle L_a x,y \rangle + \langle x,L_a y \rangle.
	\end{array}
	$$
	Hence, $L_a^* = -L_a$ for $a \in \textrm{Im}\ \mathcal{A}$. 
	If, in addition, $|a|=1$ we have that $L_a \in \textrm{O}(\mathcal{A},|\cdot|^2)$, so
	$L_a^2 = - L_a L_a^* = -\id$. For an arbitrary imaginary element $a$ we obtain by rescaling
	\begin{equation} \label{normed_div_square}
		L_a^2 = -|a|^2\id.
	\end{equation}
	This motivates us to consider the geometric algebra $\mathcal{G}(\textrm{Im}\ \mathcal{A},q)$
	with quadratic form $q(a) := -|a|^2$.
	By \eqref{normed_div_square} and the universal property of geometric algebras (Proposition \ref{prop_universality}) 
	we find that $L$ extends to a representation of $\mathcal{G}(\textrm{Im}\ \mathcal{A},q)$ on $\mathcal{A}$,
	\begin{equation} \label{normed_div_rep}
		\hat{L}\!: \mathcal{G}(\textrm{Im}\ \mathcal{A},q) \to \textrm{End}(\mathcal{A}),
	\end{equation}
	i.e. a representation of $\mathcal{G}(\mathbb{R}^{0,n-1})$ on $\mathbb{R}^n$.
	The representation theory then demands that $n$ is 
	a multiple of $d_{0,n-1}$. By studying Table \ref{table_representations}
	and taking periodicity \eqref{rep_periodicity} into account
	we find that this is only possible for $n=1,2,4,8$.
	\qed

	\begin{rem}
		Not only $\mathbb{R}$, $\mathbb{C}$, and $\mathbb{H}$, but
		also the octonion algebra $\mathbb{O}$ can be constructed explicitly
		in terms of Clifford algebras, however, one cannot use simply
		the geometric product in this case since the octonion product is
		non-associative.
		With a choice of \emph{octonionic structure}, e.g.
		$$
			C := e_1e_2e_4 + e_2e_3e_5 + e_3e_4e_6 + e_4e_5e_7 + e_5e_6e_1 + e_6e_7e_2 + e_7e_1e_3 \in \mathcal{G}^3,
		$$
		we can identify $\mathbb{O} := \mathcal{G}^0 \oplus \mathcal{G}^1
		\subseteq \mathcal{G}(\mathbb{R}^{0,7})$, with a multiplication given by
		$$
			a \diamond b := \langle ab(1-C) \rangle_{0,1},
		$$
		or alternatively, $\mathbb{O} := \mathcal{G}^1(\mathbb{R}^8)$ with
		$$
			a \diamond b := \langle a e_8 b (1-CI_7)(1-I) \rangle_1,
		$$
		where $I_7 := e_1 \ldots e_7$, 
		and $e_8$ in this case represents the unit of $\mathbb{O}$.
		We refer to \cite{lounesto} for more on this construction,
		and to \cite{baez} for more on octonions in general.
	\end{rem}

\subsubsection{Vector fields on spheres}
	
	In our next example we consider the $N$-dimensional unit spheres $S^N$
	and use representations of geometric algebras to construct vector fields on them.
	The number of such vector fields that can be found gives us information
	about the topological features of these spheres.
	
	\begin{thm}[Radon-Hurwitz] \label{thm_vec_field}
		On $S^N$ there exist $n_N$ pointwise linearly independent vector fields, where, if we write $N$ uniquely as
		$$
			N+1 = (2t+1)2^{4a+b}, \quad t,a \in \mathbb{N},\ b \in \{0,1,2,3\},
		$$
		then
		$$
			n_N = 8a + 2^b - 1.
		$$
		For example,
		\begin{displaymath}
		\begin{array}{c|ccccccccccccccccc}
			N	& 0 & 1 & 2 & 3 & 4 & 5 & 6 & 7 & 8 & 9 &10 &11 &12 &13 &14 &15 &16 \\
			\hline
			n_N	& 0 & 1 & 0 & 3 & 0 & 1 & 0 & 7 & 0 & 1 & 0 & 3 & 0 & 1 & 0 & 8 & 0
		\end{array}
		\end{displaymath}
	\end{thm}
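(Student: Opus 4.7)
The plan is to construct the $n_N$ vector fields from a representation of an anti-euclidean Clifford algebra. Set $k := n_N = 8a + 2^b - 1$, write $N+1 = (2t+1)2^{4a+b}$, and seek a representation $\rho\!: \mathcal{G}(\mathbb{R}^{0,k}) \to \textup{End}_{\mathbb{R}}(\mathbb{R}^{N+1})$. From Theorem \ref{thm_representations} and the periodicity $d_{0,m+8} = 16\, d_{0,m}$, a case check over $b \in \{0,1,2,3\}$ gives $d_{0,k} = 2^{4a+b}$ in all four cases. Since $d_{0,k}$ divides $N+1$, we may take $(2t+1)$ copies of an irreducible $\mathcal{G}(\mathbb{R}^{0,k})$-module to obtain $\rho$.

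Having $\rho$, I would next equip $\mathbb{R}^{N+1}$ with an inner product of the form \eqref{rho_inv_inner_prod}, averaging over the finite group generated by $\rho(e_1),\dots,\rho(e_k)$. With respect to this product each $\rho(e_i)$ is orthogonal, and because the signature is anti-euclidean, \eqref{rho_acts_antisymmetric} gives $\rho(e_i)^* = -\rho(e_i)$. Define vector fields on $S^N \subset \mathbb{R}^{N+1}$ by
$$
X_i(v) := \rho(e_i)\, v, \qquad i = 1,\dots,k.
$$
Antisymmetry yields $\langle X_i(v), v\rangle = -\langle v, X_i(v)\rangle = 0$, so each $X_i$ is tangent to $S^N$. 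Smoothness is immediate since $X_i$ is the restriction of a linear map.

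To verify pointwise linear independence, suppose $\sum_{i=1}^k c_i X_i(v) = 0$ at some $v \in S^N$. Then $\rho\bigl(\sum_i c_i e_i\bigr) v = 0$. In $\mathcal{G}(\mathbb{R}^{0,k})$ the vector $w := \sum_i c_i e_i$ satisfies $w^2 = -\sum_i c_i^2$, so if the $c_i$ are not all zero then $w$ is invertible (with $w^{-1} = -\bigl(\sum c_i^2\bigr)^{-1} w$). Applying $\rho$, the operator $\rho(w)$ is invertible on $\mathbb{R}^{N+1}$, forcing $v = 0$, which contradicts $v \in S^N$. Hence the $c_i$ all vanish, and $X_1(v),\dots,X_k(v)$ are linearly independent at every $v$.

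The main labor is the arithmetic in Step 1: one must verify, in each of the four residue classes $b \in \{0,1,2,3\}$, that $d_{0,\,8a+2^b-1} = 2^{4a+b}$ exactly, using the starting values $d_{0,0},\dots,d_{0,7} = 1,2,4,4,8,8,8,8$ from the table together with the 8-fold periodicity. This bookkeeping is the only real obstacle; the rest of the argument is a direct application of the representation theory already developed, and in particular uses only the existence half of the theorem. (The sharpness assertion -- that no more than $n_N$ such fields exist -- is the deep theorem of Adams and lies outside the Clifford-algebraic framework.)
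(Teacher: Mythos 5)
Your proposal is correct and follows essentially the same route as the paper's proof: build a representation of $\mathcal{G}(\mathbb{R}^{0,n_N})$ on $\mathbb{R}^{N+1}$ using the divisibility $d_{0,n_N} = 2^{4a+b} \mid N+1$, average an inner product over the Clifford group to make the generators act antisymmetrically, define $X_i(v) = \rho(e_i)v$, and deduce tangency from antisymmetry and pointwise independence from the invertibility of nonzero vectors in an anti-euclidean Clifford algebra. The only cosmetic difference is that you invoke invertibility of $\rho(w)$ directly, while the paper writes $\rho(v)^2 x = v^2 x = -|v|^2 x = 0$; the two are the same observation, and your explicit case-check of $d_{0,8a+2^b-1} = 2^{4a+b}$ is a useful spelling-out of what the paper leaves implicit.
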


	\begin{cor}
		$S^1$, $S^3$ and $S^7$ are parallelizable.
	\end{cor}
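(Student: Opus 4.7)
The plan is to exploit representations of anti-euclidean Clifford algebras: given a representation $\rho\!: \mathcal{G}(\mathbb{R}^{0,n}) \to \textrm{End}_{\mathbb{R}}(\mathbb{R}^{N+1})$, I would use it to manufacture $n$ pointwise linearly independent tangent vector fields on $S^N \subseteq \mathbb{R}^{N+1}$. First, equip $\mathbb{R}^{N+1}$ with an invariant inner product as constructed in \eqref{rho_inv_inner_prod}, so that $\rho(v)$ is antisymmetric for every $v \in \mathbb{R}^{0,n}$ in view of \eqref{rho_acts_antisymmetric}. Let $\{e_1,\ldots,e_n\}$ be an orthonormal basis of $\mathbb{R}^{0,n}$ and, for each $x \in S^N$, set
\begin{equation*}
    X_i(x) := \rho(e_i)(x), \qquad i=1,\ldots,n.
\end{equation*}
These are smooth (in fact linear in $x$) $\mathbb{R}^{N+1}$-valued functions, and antisymmetry gives $\langle X_i(x), x \rangle = -\langle x, X_i(x)\rangle = 0$, so they are genuine tangent fields on $S^N$.

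For pointwise linear independence at $x$, suppose $\sum_i a_i X_i(x) = 0$. Writing $v := \sum_i a_i e_i$ this becomes $\rho(v)(x) = 0$, and squaring yields $\rho(v^2)(x) = 0$; but $v^2 = -\sum_i a_i^2 \in \mathbb{R}$ in $\mathcal{G}(\mathbb{R}^{0,n})$, so $\sum_i a_i^2 \cdot x = 0$ and hence all $a_i$ vanish. The construction therefore produces $n$ pointwise linearly independent vector fields on $S^N$ whenever $\mathcal{G}(\mathbb{R}^{0,n})$ admits a representation on $\mathbb{R}^{N+1}$, which by Theorem \ref{thm_representations} happens precisely when the irreducible dimension $d_{0,n}$ divides $N+1$ (take a direct sum of irreducibles).

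It remains to match the largest such $n$ with the combinatorial formula. Reading off Table \ref{table_representations} and applying the periodicity $d_{0,n+8} = 16\, d_{0,n}$, the values of $d_{0,n}$ for $n = 8k,8k+1,\ldots,8k+7$ are $2^{4k},\, 2^{4k+1},\, 2^{4k+2},\, 2^{4k+2},\, 2^{4k+3},\, 2^{4k+3},\, 2^{4k+3},\, 2^{4k+3}$. Writing $N+1 = (2t+1)2^{4a+b}$ with $b \in \{0,1,2,3\}$, the condition $d_{0,n} \mid N+1$ (equivalently $d_{0,n} \le 2^{4a+b}$, as $d_{0,n}$ is a power of two) is satisfied by the maximal $n$ equal to $8a$, $8a+1$, $8a+3$, $8a+7$ in the respective cases $b=0,1,2,3$. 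A direct check shows this is exactly $n_N = 8a + 2^b - 1$, which is the quantity of vector fields the construction yields. The main subtlety here is purely bookkeeping: one must be careful with the jumps in $d_{0,n}$ at $n \equiv 2,4 \pmod 8$ in order to see that the four cases assemble into the stated closed form.

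Finally, the corollary is immediate: for $N=1,3,7$ one has $N+1 = 2,4,8$, so $(a,b) = (0,1),(0,2),(0,3)$ and $n_N = 1,3,7$ respectively, each equal to $\dim S^N$. Thus the $n_N$ fields $X_1,\ldots,X_{n_N}$ constructed above form a global tangent frame, making $S^1, S^3, S^7$ parallelizable. I expect no substantial obstacle beyond the arithmetic reconciliation above; the construction itself is essentially forced once one has the invariant inner product, the antisymmetry \eqref{rho_acts_antisymmetric}, and the irreducibility count from Theorem \ref{thm_representations}.
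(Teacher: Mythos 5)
Your argument is the paper's own, just unfolded: you re-derive Theorem \ref{thm_vec_field} (construct tangent fields $\rho(e_i)x$ from a $\mathcal{G}(\mathbb{R}^{0,n})$-representation, check antisymmetry and pointwise independence, maximize $n$ via $d_{0,n}\mid N+1$) and then specialize to $N=1,3,7$ to get $n_N=\dim S^N$, exactly as the corollary follows from that theorem in the text. Correct, and no genuinely new route is taken.
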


	\begin{rem}
		The number of vector fields constructed in this way 
		is actually the maximum number of possible such fields on $S^N$.
		This is a much deeper result proven by Adams \cite{adams} using algebraic topology.
	\end{rem}
	
	Our main observation is that if $\mathbb{R}^{N+1}$ is a $\mathcal{G}(\mathbb{R}^{0,n})$-module
	then we can construct $n$ pointwise linearly independent 
	vector fields on the unit sphere
	$$
		S^N = \{x \in \mathbb{R}^{N+1} : \langle x,x \rangle=1\}. 
	$$
	Namely, suppose we have a representation $\rho$ of $\mathcal{G}(\mathbb{R}^{0,n})$ on $\mathbb{R}^{N+1}$.
	Take an inner product $\langle\cdot,\cdot\rangle$ on $\mathbb{R}^{N+1}$ such that
	the action of $\rho$ is orthogonal and pick any basis $\{e_1,\ldots,e_n\}$ of $\mathbb{R}^{0,n}$.
	We can now define a collection of smooth vector fields $\{V_1,\ldots,V_n\}$ on $\mathbb{R}^{N+1}$ by
	$$
		V_i(x) := \rho(e_i)x, \quad i=1,\ldots,n.
	$$
	According to the observation \eqref{rho_acts_antisymmetric} this action is
	antisymmetric, so that
	$$
		\langle V_i(x),x \rangle = \langle \rho(e_i)x,x \rangle = -\langle x,\rho(e_i)x \rangle = 0.
	$$
	Hence, $V_i(x) \in T_x S^N$ for $x \in S^N$. By restricting to $S^N$ we therefore
	have $n$ tangent vector fields. It remains to show that these are pointwise
	linearly independent.
	Take $x \in S^N$ and consider the linear map
	$$
	\setlength\arraycolsep{2pt}
	\begin{array}{rccl}
		i_x\!: 	& \mathbb{R}^{0,n} &\to& T_x S^N \\[3pt]
				& v &\mapsto& i_x(v) := \rho(v)x
	\end{array}
	$$
	Since the image of $i_x$ is $\textrm{Span}_\mathbb{R} \{V_i(x)\}$ it is sufficient
	to prove that $i_x$ is injective. But if $i_x(v)=\rho(v)x=0$ 
	for some $v \in \mathbb{R}^{0,n}$
	then also $v^2 x = \rho(v)^2 x = 0$, so we must have $v=0$.
	
	Now, for a fixed $N$ we want to find as many vector fields as possible, so we seek the
	highest $n$ such that $\mathbb{R}^{N+1}$ is a $\mathcal{G}(\mathbb{R}^{0,n})$-module.
	From the representation theory we know that this requires that $N+1$ is a multiple
	of $d_{0,n}$. Furthermore, since $d_{0,n}$ is a power of 2 we obtain the maximal
	such $n$ when $N+1=p 2^m$, where $p$ is odd and $d_{0,n} = 2^m$.
	Using Table \ref{table_representations} and the periodicity \eqref{rep_periodicity}
	we find that if we write $N+1 = p2^{4a+b}$, with $0 \leq b \leq 3$, then $n = 8a + 2^b - 1$.
	This proves the theorem.
	\qed

	\newpage

\section{Spinors} \label{sec_spinors}

	This section is currently incomplete.
	We refer to the references below
	for more complete treatments.

	In general, a \emph{spinor} $\Psi$ is an object upon which a rotor $R$ acts
	by single-sided action like left multiplication $\Psi \mapsto R\Psi$,
	instead of two-sided action like e.g. 
	$\Psi \mapsto \underline{R}(\Psi) = R\Psi R^\dagger$.
	More precisely,
	a \emph{spinor space} is a representation space for
	the rotor group, but such that the action does not factor
	through the twisted adjoint action $\tAd\!: \Spin^+ \to \SO^+$.
	The most commonly considered types of spinor spaces
	are spinor modules and irreducible Spin representations,
	as well as ideal spinors, spinor operators,
	and more generally, mixed-action spinors.

\subsection{Spin representations}

	Recall the representation theory of geometric algebras of Section \ref{sec_reps}.
	Given $\mathcal{G}(\mathbb{R}^{s,t})$, there is either one or two
	inequivalent irreducible $\mathbb{K}$-representations of real dimension $d_{s,t}$,
	where $\mathbb{K}$ is either $\mathbb{R}$, $\mathbb{C}$, or $\mathbb{H}$.
	This can be read off from Table \ref{table_classification_real} together with periodicity.
	We call such an irreducible representation space,
	$$
		\mathcal{S}_{s,t} := \mathbb{K}^{d_{s,t} / \dim_\mathbb{R} \mathbb{K}} \cong \mathbb{R}^{d_{s,t}},
	$$
	the corresponding \emph{spinor module}\footnote{Sometimes called a \emph{space of pinors}.
	Beware that the terminology regarding spinor spaces varies a lot in the
	literature, and can unfortunately be quite confusing.}
	of $\mathcal{G}(\mathbb{R}^{s,t})$.
	Furthermore,
	because the Pin, Spin and rotor groups are contained in $\mathcal{G}$,
	this will also provide us with a representation space for those groups.

	\begin{prop} \label{prop_pinor_reps}
		The representation of the group $\Pin(s,t)$, 
		obtained by restricting an irreducible real representation
		$\rho: \mathcal{G}(\mathbb{R}^{s,t}) \to \End_\mathbb{R}(\mathcal{S}_{s,t})$
		to $\Pin \subseteq \mathcal{G}$, is irreducible.
		Furthermore, whenever there are two inequivalent such representations $\rho$,
		their corresponding restrictions to $\Pin$ are also
		inequivalent.
	\end{prop}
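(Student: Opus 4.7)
The plan is to exploit a single structural fact: the group $\Pin(s,t)$ contains an $\mathbb{R}$-linear basis of the algebra $\mathcal{G}(\mathbb{R}^{s,t})$. Indeed, picking an orthonormal basis $\{e_1,\ldots,e_n\}$ of $V = \mathbb{R}^{s,t}$, each $e_i$ is a unit vector (i.e.\ $e_i e_i^\dagger = e_i^2 = \pm 1$), and each ordered product $e_{i_1} e_{i_2} \cdots e_{i_k}$ is a versor with $(e_{i_1}\cdots e_{i_k})(e_{i_1}\cdots e_{i_k})^\dagger = \pm 1$, hence lies in $\Pin(s,t)$. But these products are precisely the orthonormal basis blades \eqref{basis_of_g} that span $\mathcal{G}(\mathbb{R}^{s,t})$ as a real vector space. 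So the $\mathbb{R}$-linear span of $\Pin(s,t)$ equals all of $\mathcal{G}$.

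With that in hand, the first statement is essentially immediate. Let $W \subseteq \mathcal{S}_{s,t}$ be a $\mathbb{K}$-linear subspace invariant under the restricted action $\rho|_{\Pin}$. Any $x \in \mathcal{G}$ can be written as a finite real linear combination $x = \sum_i c_i g_i$ with $g_i \in \Pin$ and $c_i \in \mathbb{R}$, so $\rho(x)W = \sum_i c_i \rho(g_i)W \subseteq W$. Thus $W$ is invariant under the full algebra representation $\rho$, which by hypothesis is irreducible; so $W \in \{0,\mathcal{S}_{s,t}\}$, and irreducibility of $\rho|_{\Pin}$ follows (invoking the equivalence of ``no proper invariant subspace'' and ``indecomposable'' established for geometric algebras in Exercise \ref{exc_irrep}).

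For the second statement I will argue by contradiction. Suppose $\rho_1,\rho_2$ are two inequivalent irreducible algebra representations on $\mathcal{S}_1,\mathcal{S}_2$, but that $\rho_1|_{\Pin}$ and $\rho_2|_{\Pin}$ are equivalent via a $\mathbb{K}$-linear isomorphism $F\!: \mathcal{S}_1 \to \mathcal{S}_2$ satisfying $F \rho_1(g) F^{-1} = \rho_2(g)$ for every $g \in \Pin$. Expanding an arbitrary $x \in \mathcal{G}$ as $x = \sum_i c_i g_i$ with $g_i \in \Pin$, $c_i \in \mathbb{R}$, and using $\mathbb{R}$-linearity of $F$ and $F^{-1}$, one obtains
\[
F \rho_1(x) F^{-1} = \sum_i c_i F \rho_1(g_i) F^{-1} = \sum_i c_i \rho_2(g_i) = \rho_2(x),
\]
so $F$ intertwines $\rho_1$ and $\rho_2$ as algebra representations, contradicting their inequivalence.

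The only genuine content of the proof is the spanning observation in the first paragraph; everything else is formal. The mild subtlety to verify is that the identification of basis blades with Pin elements really does work in the indefinite signature setting (where some $e_i^2 = -1$), but this is handled by the $\pm 1$ in the defining condition $xx^\dagger = \pm 1$ of $\Pin$. No work is needed on $\mathbb{K}$-linearity because the $\mathbb{K}$-module structure on $\mathcal{S}_{s,t}$ commutes with the full algebra action by assumption, hence \emph{a fortiori} with the Pin action, so an intertwiner on either side is automatically $\mathbb{K}$-linear on the other.
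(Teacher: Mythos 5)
Your proof is correct, and for the irreducibility statement it takes exactly the paper's route: the orthonormal basis blades of $\mathcal{G}(\mathbb{R}^{s,t})$ lie in $\Pin(s,t)$, hence $\Pin$ spans $\mathcal{G}$ over $\mathbb{R}$, so a $\Pin$-invariant subspace is automatically $\mathcal{G}$-invariant. For the inequivalence statement you diverge: the paper recalls that the pseudoscalar $I$ lies in $\Pin$ and distinguishes the two irreducible algebra representations via $\rho_{\pm}(I) = \pm\id$, so any would-be intertwiner of the restrictions would have to match these eigenvalues and fails at once; you instead reuse the spanning observation to show that any $\Pin$-intertwiner $F$ is automatically a $\mathcal{G}$-intertwiner. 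Both arguments are valid. Yours is more uniform — the single spanning fact drives both halves of the proposition and requires no knowledge of how the inequivalent pair is characterized — while the paper's is marginally shorter given that the pseudoscalar characterization has already been recorded just above. One minor remark: the appeal to Exercise \ref{exc_irrep} is unnecessary. You have shown directly that there are no proper $\Pin$-invariant subspaces at all, which trivially implies indecomposability in the sense of Definition \ref{def_reducible_rep}; the exercise is only needed for the converse implication, which you do not use.
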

	
	\begin{proof}
		The first statement follows from the fact that  
		the standard basis of $\mathcal{G}$
		is contained in the Pin group.
		The second follows by recalling the role of the pseudoscalar
		$I \in \Pin$ in singling out inequivalent irreducible representations.
	\end{proof}

	\begin{defn}
		The \emph{real spinor representation} of the group $\Spin(s,t)$
		is the homomorphism
		$$
		\begin{array}{rccc}
			\Delta_{s,t}\!: & \Spin(s,t) & \to & \GL(\mathcal{S}_{s,t}) \\
							& R & \mapsto & \rho(R)
		\end{array}
		$$
		given by restricting an irreducible real representation
		$\rho: \mathcal{G}(\mathbb{R}^{s,t}) \to \End_\mathbb{R}(\mathcal{S}_{s,t})$
		to $\Spin \subseteq \mathcal{G}^+ \subseteq \mathcal{G}$.
		Analogously, a real spinor representation of the rotor
		group is obtained by restricting to $\Spin^+(s,t)$.
	\end{defn}

	\begin{prop} \label{prop_spinor_reps}
		Depending on $s$ and $t$, the representation $\Delta_{s,t}$ decomposes as
		$$
		\setlength\arraycolsep{2pt}
		\begin{array}{lcl}
			\Delta_{s,t} &=& \Delta^+ \oplus \Delta^-, \\[5pt]
			\Delta_{s,t} &=& \Delta^0 \oplus \Delta^0, \ \textrm{or} \\[5pt]
			\Delta_{s,t} &=& \Delta^0,
		\end{array}
		$$
		where $\Delta^{+}$, $\Delta^{-}$, $\Delta^{0}$ 
		in the respective
		cases denote inequivalent irreducible representations of $\Spin(s,t)$. 
		This decomposition of $\Delta_{s,t}$ is independent of which irreducible representation
		$\rho$ is chosen in the definition
		(in the case that there are two inequivalent such representations).
		We write $\mathcal{S}_{s,t} = \mathcal{S}_\mathbb{K}^+ \oplus \mathcal{S}_\mathbb{K}^-$,
		$\mathcal{S}_{s,t} = \mathcal{S}_\mathbb{K}^0 \oplus \mathcal{S}_\mathbb{K}^0$, and 
		$\mathcal{S}_{s,t} = \mathcal{S}_\mathbb{K}^0$,
		for the respective representation spaces,
		where $\mathbb{K}$ tells whether the representation is real, complex or quaternionic.
		The different cases of $(s,t)$ modulo 8 are listed in Table \ref{table_real_spinor_reps}.
	\end{prop}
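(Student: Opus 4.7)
The plan is to reduce the $\Spin$-representation $\Delta_{s,t}$ to a module over the even Clifford subalgebra and read off the decomposition from the matrix-algebra classification. I would first prove that $\Span_\mathbb{R} \Spin(s,t) = \mathcal{G}^+(\mathbb{R}^{s,t})$: fixing an orthonormal basis of $V$, every even basis blade $x = e_{i_1} \cdots e_{i_{2k}}$ satisfies $xx^\dagger = \prod_j e_{i_j}^2 = \pm 1$ and has even grade, hence $x \in \Pin \cap \mathcal{G}^+ = \Spin$; since these blades already span $\mathcal{G}^+$ over $\mathbb{R}$, any $\Spin$-invariant subspace of $\mathcal{S}_{s,t}$ is $\mathcal{G}^+$-invariant and conversely. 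Consequently, the decomposition of $\Delta_{s,t}$ as a $\Spin$-representation coincides with its decomposition as a $\mathcal{G}^+$-module.

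Next, using Proposition~\ref{prop_iso_even} to identify $\mathcal{G}^+(\mathbb{R}^{s,t}) \cong \mathcal{G}(\mathbb{R}^{s,t-1})$, I would invoke Theorems~\ref{thm_classification_real} and~\ref{thm_matrix_reps} to describe both $\mathcal{G}$ and $\mathcal{G}^+$ as matrix algebras $\mathbb{K}[N]$ or direct sums of two such, with simple real modules of dimensions $d_{s,t}$ and $d_{s,t-1}$. A direct inspection of the classification shows that the ratio $d_{s,t}/d_{s,t-1}$ is always $1$ or $2$, producing three mutually exclusive cases: if $\mathcal{G}^+$ is simple with $d_{s,t-1} = d_{s,t}$, the restriction stays irreducible and $\Delta_{s,t} = \Delta^0$; if $\mathcal{G}^+$ is simple but $d_{s,t-1} = d_{s,t}/2$, then $\mathcal{S}_{s,t}$ is a direct sum of two copies of the unique simple $\mathcal{G}^+$-module and $\Delta_{s,t} = \Delta^0 \oplus \Delta^0$; and if $\mathcal{G}^+ \cong \mathbb{K}'[N'] \oplus \mathbb{K}'[N']$ has two simple factors, the associated central idempotents split $\mathcal{S}_{s,t}$ into two simple $\mathcal{G}^+$-summands distinguished by the action of a central $J \in \mathcal{G}^+$ with $J^2 = 1$, giving $\Delta_{s,t} = \Delta^+ \oplus \Delta^-$ with inequivalent summands. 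The table of cases modulo $8$ is then read off from the periodicity in Theorem~\ref{thm_classification_real}.

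For independence of the decomposition from the choice of $\rho$, I would observe that inequivalent irreducibles $\rho_\pm$ of $\mathcal{G}$ exist only when the pseudoscalar $I$ is central and $I^2 = 1$, which forces $n = s+t$ to be odd. In that case $I$ has odd grade, so $I \notin \mathcal{G}^+$, and $\rho_+|_{\mathcal{G}^+}$ and $\rho_-|_{\mathcal{G}^+}$ are determined by identical data on a generating set for $\mathcal{G}^+$, and are therefore equivalent as $\mathcal{G}^+$-modules.

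The hardest part will be the third case of the trichotomy, where one must verify that the two summands are inequivalent already as $\Spin$-representations and not merely as $\mathcal{G}^+$-modules. This amounts to exhibiting the central element $J$ above inside $\Span_\mathbb{R} \Spin$. A natural candidate is the pseudoscalar of $\mathcal{G}(\mathbb{R}^{s,t-1}) \cong \mathcal{G}^+(\mathbb{R}^{s,t})$: viewed inside $\mathcal{G}^+$, it is a product of an even number of orthogonal unit vectors, hence lies in $\Spin$ by the first step, is central in $\mathcal{G}^+$, and squares to $+1$ precisely in the signatures where the two-factor case of the classification occurs. Since $J \in \Spin$ acts as $\pm 1$ on the two summands, they are inequivalent $\Spin$-modules, and the enumeration modulo $8$ closes the argument.
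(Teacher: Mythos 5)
Your proof is correct and follows the same underlying strategy as the paper: reduce the $\Spin(s,t)$-decomposition of $\mathcal{S}_{s,t}$ to the module theory of the even subalgebra $\mathcal{G}^+(\mathbb{R}^{s,t}) \cong \mathcal{G}(\mathbb{R}^{s,t-1})$ and read off the cases from the matrix-algebra classification. The paper reaches the same point by identifying $\Spin(s,t+1)$ with $\Pin(s,t)$ inside $\mathcal{G}^+(\mathbb{R}^{s,t+1}) \cong \mathcal{G}(\mathbb{R}^{s,t})$ and then citing Proposition~\ref{prop_pinor_reps}; you descend to $\Spin(s,t)$ directly by showing $\Span_\mathbb{R}\Spin(s,t) = \mathcal{G}^+(\mathbb{R}^{s,t})$, which is precisely the argument (``the standard basis lies in $\Pin$'') that the paper itself uses to prove Proposition~\ref{prop_pinor_reps}, shifted one dimension down. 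You also supply details the paper's one-line proof leaves implicit, in particular the independence of the decomposition from the choice of $\rho$.

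Two steps deserve tightening. In the independence-from-$\rho$ step, the two inequivalent irreducibles $\rho_\pm$ of $\mathcal{G}$ are in general \emph{not} equal on $\mathcal{G}^+$, only equivalent, so ``determined by identical data on a generating set'' is not quite the right justification; the argument you want is that $n=s+t$ odd forces $\mathcal{G}^+ \cong \mathcal{G}(\mathbb{R}^{s,t-1})$ to be a \emph{simple} matrix algebra (its own pseudoscalar has even degree $n-1$, so the two-factor case cannot occur), hence with a unique irreducible module up to equivalence, so the two restrictions, having equal real dimension, are automatically equivalent. In the $\Delta^+\oplus\Delta^-$ case you should also justify that both central idempotents $\frac{1}{2}(1\pm J)$ act nontrivially on $\mathcal{S}_{s,t}$: there $\mathcal{G}$ is simple, so $\rho$ is injective and $\rho(J)\neq\pm\id$, and the dimension count $d_{s,t}=2d_{s,t-1}$ then pins both multiplicities to one. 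Finally, the ``hardest part'' you flag is actually automatic from your first step: a $\Spin$-intertwiner is, by linearity, an intertwiner for $\Span_\mathbb{R}\Spin=\mathcal{G}^+$, so the two summands are inequivalent as $\Spin$-modules exactly when they are inequivalent as $\mathcal{G}^+$-modules; your explicit construction of $J$ inside $\Spin$ is a pleasant concretization but not logically required.
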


	\begin{sidewaystable}\centering
		\begin{displaymath}
		\setlength\arraycolsep{2pt}
		\begin{array}{|c||c|c|c|c|c|c|c|c|}
			\hline &&&&&&&& \\[-10pt]
			7	& \mathcal{S}^{0}_\mathbb{R}									& \mathcal{S}^{0}_\mathbb{C} 									& \mathcal{S}^{0}_\mathbb{H}									& \mathcal{S}^{+}_\mathbb{H} \oplus \mathcal{S}^{-}_\mathbb{H}	& \mathcal{S}^{0}_\mathbb{H}									& \mathcal{S}^{0}_\mathbb{C} \oplus \mathcal{S}^{0}_\mathbb{C}	& \mathcal{S}^{0}_\mathbb{R} \oplus \mathcal{S}^{0}_\mathbb{R}	& \mathcal{S}^{+}_\mathbb{R} \oplus \mathcal{S}^{-}_\mathbb{R}	\\
			\hline &&&&&&&& \\[-10pt]
			6	& \mathcal{S}^{0}_\mathbb{C}									& \mathcal{S}^{0}_\mathbb{H}									& \mathcal{S}^{+}_\mathbb{H} \oplus \mathcal{S}^{-}_\mathbb{H}	& \mathcal{S}^{0}_\mathbb{H}									& \mathcal{S}^{0}_\mathbb{C} \oplus \mathcal{S}^{0}_\mathbb{C}	& \mathcal{S}^{0}_\mathbb{R} \oplus \mathcal{S}^{0}_\mathbb{R}	& \mathcal{S}^{+}_\mathbb{R} \oplus \mathcal{S}^{-}_\mathbb{R}	& \mathcal{S}^{0}_\mathbb{R}									\\
			\hline &&&&&&&& \\[-10pt]
			5	& \mathcal{S}^{0}_\mathbb{H}									& \mathcal{S}^{+}_\mathbb{H} \oplus \mathcal{S}^{-}_\mathbb{H}	& \mathcal{S}^{0}_\mathbb{H}									& \mathcal{S}^{0}_\mathbb{C} \oplus \mathcal{S}^{0}_\mathbb{C}	& \mathcal{S}^{0}_\mathbb{R} \oplus \mathcal{S}^{0}_\mathbb{R}	& \mathcal{S}^{+}_\mathbb{R} \oplus \mathcal{S}^{-}_\mathbb{R}	& \mathcal{S}^{0}_\mathbb{R}									& \mathcal{S}^{0}_\mathbb{C}									\\
			\hline &&&&&&&& \\[-10pt]
			4	& \mathcal{S}^{+}_\mathbb{H} \oplus \mathcal{S}^{-}_\mathbb{H}	& \mathcal{S}^{0}_\mathbb{H}									& \mathcal{S}^{0}_\mathbb{C} \oplus \mathcal{S}^{0}_\mathbb{C}	& \mathcal{S}^{0}_\mathbb{R} \oplus \mathcal{S}^{0}_\mathbb{R}	& \mathcal{S}^{+}_\mathbb{R} \oplus \mathcal{S}^{-}_\mathbb{R}	& \mathcal{S}^{0}_\mathbb{R}									& \mathcal{S}^{0}_\mathbb{C}									& \mathcal{S}^{0}_\mathbb{H}									\\
			\hline &&&&&&&& \\[-10pt]
			3	& \mathcal{S}^{0}_\mathbb{H}									& \mathcal{S}^{0}_\mathbb{C} \oplus \mathcal{S}^{0}_\mathbb{C}	& \mathcal{S}^{0}_\mathbb{R} \oplus \mathcal{S}^{0}_\mathbb{R}	& \mathcal{S}^{+}_\mathbb{R} \oplus \mathcal{S}^{-}_\mathbb{R}	& \mathcal{S}^{0}_\mathbb{R}									& \mathcal{S}^{0}_\mathbb{C}									& \mathcal{S}^{0}_\mathbb{H}									& \mathcal{S}^{+}_\mathbb{H} \oplus \mathcal{S}^{-}_\mathbb{H}	\\
			\hline &&&&&&&& \\[-10pt]
			2	& \mathcal{S}^{0}_\mathbb{C} \oplus \mathcal{S}^{0}_\mathbb{C}	& \mathcal{S}^{0}_\mathbb{R} \oplus \mathcal{S}^{0}_\mathbb{R}	& \mathcal{S}^{+}_\mathbb{R} \oplus \mathcal{S}^{-}_\mathbb{R}	& \mathcal{S}^{0}_\mathbb{R}									& \mathcal{S}^{0}_\mathbb{C}									& \mathcal{S}^{0}_\mathbb{H}									& \mathcal{S}^{+}_\mathbb{H} \oplus \mathcal{S}^{-}_\mathbb{H}	& \mathcal{S}^{0}_\mathbb{H}									\\
			\hline &&&&&&&& \\[-10pt]
			1	& \mathcal{S}^{0}_\mathbb{R} \oplus \mathcal{S}^{0}_\mathbb{R}	& \mathcal{S}^{+}_\mathbb{R} \oplus \mathcal{S}^{-}_\mathbb{R}	& \mathcal{S}^{0}_\mathbb{R}									& \mathcal{S}^{0}_\mathbb{C}									& \mathcal{S}^{0}_\mathbb{H}									& \mathcal{S}^{+}_\mathbb{H} \oplus \mathcal{S}^{-}_\mathbb{H}	& \mathcal{S}^{0}_\mathbb{H}									& \mathcal{S}^{0}_\mathbb{C} \oplus \mathcal{S}^{0}_\mathbb{C}	\\
			\hline &&&&&&&& \\[-10pt]
			0	& \mathcal{S}^{+}_\mathbb{R} \oplus \mathcal{S}^{-}_\mathbb{R}	& \mathcal{S}^{0}_\mathbb{R}									& \mathcal{S}^{0}_\mathbb{C}									& \mathcal{S}^{0}_\mathbb{H}									& \mathcal{S}^{+}_\mathbb{H} \oplus \mathcal{S}^{-}_\mathbb{H}	& \mathcal{S}^{0}_\mathbb{H}									& \mathcal{S}^{0}_\mathbb{C} \oplus \mathcal{S}^{0}_\mathbb{C}	& \mathcal{S}^{0}_\mathbb{R} \oplus \mathcal{S}^{0}_\mathbb{R}	\\
			\hline
			\hline
				& 0 & 1 & 2 & 3 & 4 & 5 & 6 & 7 \\
			\hline
		\end{array}
		\end{displaymath}
		\caption{The decomposition of the spinor module $\mathcal{S}_{s,t}$ in the box (s,t) modulo 8. \label{table_real_spinor_reps}}
	\end{sidewaystable}

	\begin{proof}
		This follows from Proposition \ref{prop_pinor_reps} and the identification
		$$
		\begin{array}{cccccl}
			\Spin(s,t+1) &\subseteq& \mathcal{G}^+(\mathbb{R}^{s,t+1}) &\subseteq& \mathcal{G}(\mathbb{R}^{s,t+1}) &\to \End \mathcal{S}_{s,t+1} \\
			\rotatebox{90}{$\hookrightarrow$} & & \rotatebox{90}{$\cong$} \\
			\Pin(s,t) &\subseteq& \mathcal{G}(\mathbb{R}^{s,t}) & & &\to \End \mathcal{S}_{s,t}
		\end{array}
		$$
		induced by the isomorphism $F$ in the proof of Proposition \ref{prop_iso_even}.
	\end{proof}

	\begin{exmp}
		The spinor module of the space algebra $\mathcal{G}(\mathbb{R}^3)$
		is irreducible w.r.t. the spatial rotor group $\Spin(3,0)$,
		$$
			\mathcal{S}_{3,0} = \mathbb{C}^2 = \mathcal{S}^0_\mathbb{H},
		$$
		while the spinor module of the spacetime algebra $\mathcal{G}(\mathbb{R}^{1,3})$
		decomposes into a pair of equivalent four-dimensional 
		irreducible representations of $\Spin(1,3)$,
		$$
			\mathcal{S}_{1,3} = \mathbb{H}^2 = \mathcal{S}^0_\mathbb{C} \oplus \mathcal{S}^0_\mathbb{C}.
		$$
		There is also an interesting decomposition in eight dimensions,
		$$
			\mathcal{S}_{8,0} = \mathcal{S}_{0,8} = \mathbb{R}^{16} 
			= \mathcal{S}^+_\mathbb{R} \oplus \mathcal{S}^-_\mathbb{R},
		$$
		which is related to the notion of \emph{triality}
		(see e.g. \cite{lounesto} and references therein).
	\end{exmp}
	
	\begin{defn}
		The \emph{complex spinor representation} of $\Spin(s,t)$
		is the homomorphism
		$$
		\begin{array}{rccc}
			\Delta_{s,t}^\mathbb{C}\!: & \Spin(s,t) & \to & \GL_\mathbb{C}(\mathcal{S}) \\
							& R & \mapsto & \rho(R)
		\end{array}
		$$
		obtained by restricting an irreducible complex representation
		$\rho: \mathcal{G}(\mathbb{C}^{s+t}) \to \End_\mathbb{C}(\mathcal{S})$
		to $\Spin(s,t) \subseteq \mathcal{G}^+(\mathbb{R}^{s,t}) \subseteq \mathcal{G}(\mathbb{R}^{s,t}) \otimes \mathbb{C} \cong \mathcal{G}(\mathbb{C}^{s+t})$.
	\end{defn}
	
	\begin{prop}
		When $n$ is odd, the representation $\Delta_{n}^\mathbb{C}$
		is irreducible and independent of which irreducible representation $\rho$
		is chosen in the definition. When $n$ is even, there is a decomposition
		$$
			\Delta_{n}^\mathbb{C} = \Delta^+_\mathbb{C} \oplus \Delta^-_\mathbb{C}
		$$
		into a direct sum of two inequivalent irreducible complex representations
		of $\Spin(n)$.
	\end{prop}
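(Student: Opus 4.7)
The strategy is to relate $\Delta_n^\mathbb{C}$ to the complex representation theory of the even subalgebra $\mathcal{G}^+(\mathbb{C}^n)$, where $n=s+t$, via the following bridge: for any orthonormal basis $\{e_1,\ldots,e_n\}$ of $\mathbb{R}^{s,t}$, every standard even blade $e_{i_1}\cdots e_{i_{2m}}$ with $i_1<\cdots<i_{2m}$ belongs to $\Spin(s,t)$ (being an even product of invertible unit vectors with $(e_{i_1}\cdots e_{i_{2m}})(e_{i_1}\cdots e_{i_{2m}})^\dagger=\pm 1$). These blades form a $\mathbb{C}$-basis of $\mathcal{G}^+(\mathbb{C}^n)=\mathcal{G}^+(\mathbb{R}^{s,t})\otimes\mathbb{C}$, so $\Spin(s,t)$ spans $\mathcal{G}^+(\mathbb{C}^n)$ over $\mathbb{C}$. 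Consequently, for any complex representation of $\mathcal{G}^+(\mathbb{C}^n)$, irreducibility and equivalence are preserved in both directions when one restricts to $\Spin(s,t)$. The proposition therefore reduces to understanding $\rho|_{\mathcal{G}^+(\mathbb{C}^n)}$, for which the complexified Proposition~\ref{prop_iso_even} supplies the isomorphism $\mathcal{G}^+(\mathbb{C}^n)\cong\mathcal{G}(\mathbb{C}^{n-1})$ and Theorem~\ref{thm_complex_representations} then describes the irreducibles.

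If $n=2k+1$, then $\mathcal{G}^+(\mathbb{C}^n)\cong\mathcal{G}(\mathbb{C}^{2k})\cong\mathbb{C}^{2^k\times 2^k}$ is simple with a unique irreducible complex representation of dimension $2^k$. Since each of the two inequivalent irreducibles $\rho$ of $\mathcal{G}(\mathbb{C}^n)$ already has complex dimension $2^k$, the restriction $\rho|_{\mathcal{G}^+(\mathbb{C}^n)}$ must equal this unique irreducible, giving both the irreducibility of $\Delta_n^\mathbb{C}$ and its independence of the choice of $\rho$ via the bridge.

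If $n=2k$, then $\mathcal{G}(\mathbb{C}^n)\cong\mathbb{C}^{2^k\times 2^k}$ has a unique irreducible $\rho$ of dimension $2^k$, while $\mathcal{G}^+(\mathbb{C}^n)\cong\mathcal{G}(\mathbb{C}^{2k-1})$ has two inequivalent irreducibles of dimension $2^{k-1}$, distinguished by the $\pm\id$ eigenvalue of the central element $\omega:=i^k I$ (a complex rescaling of the pseudoscalar, chosen so that $\omega^2=1$). The key point is that $\omega$, although central in $\mathcal{G}^+(\mathbb{C}^n)$, anticommutes with $V$ inside $\mathcal{G}(\mathbb{C}^n)$ because $n$ is even: if one had $\rho(\omega)=c\,\id$, then $c\rho(v)=\rho(\omega)\rho(v)=-\rho(v)\rho(\omega)=-c\rho(v)$ for all $v\in V$ would force $c=0$, contradicting invertibility of $\omega$. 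Hence $\rho(\omega)$ has both eigenvalues $\pm 1$, yielding a nontrivial $\mathcal{G}^+(\mathbb{C}^n)$-invariant splitting $\mathcal{S}=\mathcal{S}^+\oplus\mathcal{S}^-$. Both summands are nonzero and mutually inequivalent (they assign different values to $\omega$), and a dimension count $\dim\mathcal{S}^++\dim\mathcal{S}^-=2^k$ with every irreducible $\mathcal{G}^+(\mathbb{C}^n)$-summand of dimension $2^{k-1}$ forces each of $\mathcal{S}^\pm$ to be a single irreducible. Restricting to $\Spin(s,t)$ via the bridge preserves both irreducibility and inequivalence, producing $\Delta_n^\mathbb{C}=\Delta_\mathbb{C}^+\oplus\Delta_\mathbb{C}^-$.

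The main obstacle is the even-$n$ step: one must carefully exploit the fact that $\omega$ plays two roles simultaneously — central in $\mathcal{G}^+(\mathbb{C}^n)$, which produces the splitting, but anticommuting with $V$ in $\mathcal{G}(\mathbb{C}^n)$, which prevents $\rho(\omega)$ from being scalar — so one has to keep track of which algebra the commutation calculation is being performed in. A subsidiary check is that the bridge lemma genuinely preserves inequivalence over $\mathbb{C}$, which reduces to the observation that any $\Spin(s,t)$-intertwiner between complex $\mathcal{G}^+(\mathbb{C}^n)$-modules extends uniquely by $\mathbb{C}$-linearity to an intertwiner of the full algebra action.
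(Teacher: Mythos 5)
Your proof is correct. The paper itself does not supply a proof of this proposition (it refers the reader to Lawson--Michelsohn and Goodman--Wallach), and the argument you give is essentially the standard one found in those sources: you use that the even standard basis blades lie in $\Spin(s,t)$ and span $\mathcal{G}^+(\mathbb{C}^n)$ over $\mathbb{C}$, so $\Spin$-irreducibility and $\Spin$-equivalence reduce to the corresponding notions for the algebra $\mathcal{G}^+(\mathbb{C}^n) \cong \mathcal{G}(\mathbb{C}^{n-1})$; a dimension count handles the odd case, and the complex volume element $\omega$, central in $\mathcal{G}^+(\mathbb{C}^n)$ but anticommuting with $V$ when $n$ is even, forces $\rho(\omega)$ to be nonscalar with $\omega^2=1$ and thereby splits $\mathcal{S}$ into the two inequivalent half-spin representations.
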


	In physics, spinors of the complex representation $\Delta_{n}^\mathbb{C}$
	are usually called \emph{Dirac spinors}, and $\Delta^+_\mathbb{C}$, $\Delta^-_\mathbb{C}$
	are \emph{left-} and \emph{right-handed Weyl (or chiral) spinor} representations.
	Elements of the real representations $\mathcal{S}_{s,t}$ are called 
	\emph{Majorana spinors}, while those of $\mathcal{S}_\mathbb{K}^+$, $\mathcal{S}_\mathbb{K}^-$
	are \emph{left-} and \emph{right-handed Majorana-Weyl spinors}.

	Note that the spinor representations defined above do not
	descend to representations of $\SO(s,t)$ since
	$\Delta_{s,t}^{(\mathbb{C})}(-1) = -\id$.
	Actually, together with the canonical tensor representations
	\begin{eqnarray*}
		\ad|_{\mathcal{G}^1}: & \ad_{\frac{1}{2} e_i \wedge e_j}(v) &= \frac{1}{2}[e_i \wedge e_j, v] = (e_i \wedge e_j) \riprod v = v_je_i - v_ie_j, \\
		\ad|_{\mathcal{G}^k}: & \ad_{\frac{1}{2} e_i \wedge e_j}(x) &= 
		\sum_{r=1}^k \sum_{i_1,\ldots,i_k} v_{i_1} \wedge \cdots \wedge \ad_{\frac{1}{2} e_i \wedge e_j}(v_{i_r}) \wedge \cdots \wedge v_{i_k},
	\end{eqnarray*}
	where $v= \sum_i v_i e_i$, $x = \sum_{i_1,\ldots,i_k} v_{i_1} \wedge \cdots \wedge v_{i_k} \in \mathcal{G}^k$,
	and $\{e_i\}_{i=1}^n$ is an orthonormal basis of $\mathbb{R}^n$,
	these additional representations provide us with the full set of
	so called \emph{fundamental representations} of the Lie algebras $\mathfrak{so}(n)$
	and the corresponding Lie groups $\Spin(n)$.
	
	We refer to \cite{lawson_michelsohn,goodman_wallach} for more on
	this approach to spinor spaces, and proofs of the above propositions.

\subsection{Ideal spinors}

	The above spaces of spinors can actually be considered as subspaces
	of the geometric algebra $\mathcal{G}$ itself, 
	namely as minimal left ideals of $\mathcal{G}$,
	with the action given by left multiplication on the algebra. 
	For now, we refer to \cite{lounesto} for this approach.

\subsection{Mixed-action spinors}

	A subspace of spinors $\mathcal{S} \subseteq \mathcal{G}$
	can of course be extended from a minimal left ideal of $\mathcal{G}$ 
	to some larger invariant subspace, 
	like a sum of such ideals, or the whole algebra $\mathcal{G}$.
	However, it is possible to consider other, intermediate, subspaces $\mathcal{S}$
	by, instead of only left action, also take advantage of right action.
	For now, we refer to \cite{doran_lasenby,lundholm_geosusy} 
	for examples of this approach.

	\newpage

\section{Some Clifford analysis on $\mathbb{R}^n$} \label{sec_analysis}

		This section is currently incomplete. We refer to e.g. 
	\cite{doran_lasenby,gilbert_murray,hestenes_sobczyk}.

	\newpage

\section{The algebra of Minkowski spacetime} \label{sec_STA}

		For now, we refer to chapters 5,7 and 8 in \cite{doran_lasenby}.
	See also e.g. \cite{ablamowicz_sobczyk,hestenes_mechanics}.

	\newpage

\appendix
\section{Appendix}

\subsection{Notions in algebra} \label{app_algebra}

In order to make the presentation of this course easier to follow
also without a wide mathematical knowledge, we give here a short
summary of basic notions in algebra, most of which will be used in the course.

\subsubsection{Basic notions}

A \emph{binary composition} $*$ on a set $M$ is a map 
$$
	\begin{array}{ccl}
		M \times M & \to & M \\
		(x,y) & \mapsto & x*y
	\end{array}
$$
The composition is called
\begin{description}
\item \emph{associative} if $(x*y)*z = x*(y*z)$ $\forall x,y,z \in M$,
\item \emph{commutative} if $x*y = y*x$ $\forall x,y \in M$.
\end{description}

If there exists an element $e \in M$ such that $e*x = x*e = x$ $\forall x \in M$
then $e$ is called a \emph{unit}.
Units are unique because, if $e$ and $e'$ are units, then $e = e*e' = e'$.

An element $x \in M$ has a \emph{left inverse} $y$ if $y*x=e$,
\emph{right inverse} $z$ if $x*z=e$, and \emph{inverse} $y$ if $y*x=x*y=e$.
If $*$ is associative and if $y$ and $z$ are inverses to $x$ then
$$
	z = z*e = z*(x*y) = (z*x)*y = e*y = y.
$$
In other words, inverses are unique whenever $*$ is associative.

Let $*$ and $\diamond$ be two binary compositions on $M$. 
We say that $*$ is \emph{distributive} over $\diamond$ if
$$
	x*(y \diamond z) = (x*y) \diamond (x*z) \quad \forall x,y,z \in M
$$
and
$$
	(y \diamond z)*x = (y*x) \diamond (z*x) \quad \forall x,y,z \in M.
$$

From these notions one can define a variety of common mathematical
structures as follows.
\begin{description}
	\item[Monoid:] A set with an associative binary composition.

	\item[Group:] Monoid with unit, where every element has an inverse.

	\item[Abelian group:] Commutative group.

	\item[Ring:] A set $R$ with two binary compositions, called 
	\emph{addition} $(+)$ and \emph{multiplication} $(\cdot)$, such that $(R,+)$ is
	an abelian group and $(R,\cdot)$ is a monoid,
	and where multiplication is distributive over addition.
	Furthermore, it should hold that $0 \cdot x = x \cdot 0 = 0$
	for all $x \in R$, where $0$ denotes the additive unit and is called \emph{zero}.

	\item[Ring with unit:] Ring where multiplication has a unit,
	often denoted $1$ and called \emph{one} or \emph{the identity}.

	\item[Characteristic:] The characteristic, $\charop R$, 
	of a ring $R$ with unit $1$ is defined
	to be the smallest integer $n>0$ such that 
	$\underbrace{1 + \ldots + 1}_{\textrm{$n$ terms}} = 0$.
	If no such $n$ exists, then we define $\charop R = 0$.

	\item[Commutative ring:] Ring with commutative multiplication.

	\item[(Skew) Field:] (Non-)Commutative ring with unit, where every nonzero element
	has a multiplicative inverse.

	\item[Module:] An abelian group $(M,\oplus)$ is called
	a \emph{module} over the ring $(R,+,\cdot)$ if we are given a map
	(often called \emph{scalar multiplication})
	$$
	\begin{array}{ccl}
		R \times M & \to & M \\
		(r,m) & \mapsto & rm
	\end{array}
	$$
	such that for all $r,r' \in R$ and $m,m' \in M$
	\begin{description}
	\item{i)} $0_R m = 0_M$
	\item{ii)} $1 m = m$, if $R$ has the multiplicative unit $1$
	\item{iii)} $(r + r')m = (rm) \oplus (r'm)$
	\item{iv)} $r(m \oplus m') = (rm) \oplus (rm')$
	\item{v)} $(r \cdot r')m = r(r'm)$
	\end{description}
\end{description}

	\begin{rem}
		Usually, both the zero $0_R$ in $R$ and the additive unit $0_M$ in $M$ are denoted $0$.
		Also, $\oplus$ is denoted $+$ and is called addition as well.
	\end{rem}

\begin{description}
	\item[Submodule:] If $A$ is a non-empty subset of an $R$-module $B$ then
	$A$ is called a \emph{submodule} of $B$ if
	$$
		x,y \in A \ \Rightarrow \ x+y \in A \qquad \forall x,y
	$$
	and
	$$
		r \in R, x \in A  \Rightarrow rx \in A \qquad \forall r,x,
	$$
	i.e. $A$ is closed under addition and multiplication by scalars.
	(In general, a substructure is a non-empty subset of a structure that is
	closed under the operations of the structure.)
	
	\item[Ideal:] A non-empty subset $J$ of a ring $R$ is called a \emph{(two-sided) ideal} if
	$$
		x,y \in J \Rightarrow x+y \in J \qquad \forall x,y
	$$
	and
	$$
		x \in J, r,q \in R \Rightarrow rxq \in J \qquad \forall x,r,q.
	$$
	$J$ is called a \emph{left-sided} resp. \emph{right-sided ideal} if 
	the latter condition is replaced with
	$$
		x \in J, r \in R \Rightarrow rx \in J \qquad \forall x,r
	$$
	resp.
	$$
		x \in J, r \in R \Rightarrow xr \in J \qquad \forall x,r.
	$$
	
	\item[Vector space:] Module over a field.
	
	\item[Hilbert space:] Vector space over $\mathbb{R}$ or $\mathbb{C}$
	with a hermitian inner product (linear or sesquilinear) such that
	the induced topology of the corresponding norm is complete
	(Cauchy sequences converge).
	
	\item[R-algebra:] 
	An $R$-module $A$ is called an \emph{$R$-algebra} if there is defined an 
	$R$-bilinear map $A \times A \xrightarrow{*} A$, usually called
	multiplication.

	\item[Associative R-algebra:] $R$-algebra with associative multiplication (hence also a ring).

	\item[Lie algebra:] (Typically non-associative) $R$-algebra $A$ with
	multiplication commonly denoted $[\cdot,\cdot]$ s.t.
	\begin{eqnarray*}
		& [x,x] = 0, & \textrm{(antisymmetry)} \\
		& {[x,[y,z]] + [y,[z,x]] + [z,[x,y]] = 0}, & \textrm{(Jacobi identity)}
	\end{eqnarray*}
	for all $x,y,z \in A$.

	\item[Lie group:] A group which is also a differentiable manifold
	(a topological space which is essentially composed of patches of $\mathbb{R}^n$
	that are smoothly glued together) and such that the group operations are smooth.

\end{description}
\begin{exmp}
	\item[$\cdot$] $\mathbb{N}$ with addition is a monoid,
	\item[$\cdot$] $\textup{U}(1)$ (unit complex numbers) with multiplication is an abelian group,
	\item[$\cdot$] $(\mathbb{Z},+,\cdot)$ is a commutative ring, 
	\item[$\cdot$] $\mathbb{Q}$, $\mathbb{R}$, and $\mathbb{C}$ are fields,
	\item[$\cdot$] $\mathbb{H}$ is a skew field,
	\item[$\cdot$] $\mathbb{R}$ is a module over $\mathbb{Z}$ 
	and a (infinite-dimensional) vector space over $\mathbb{Q}$,
	\item[$\cdot$] $\mathbb{C}^n$ is a vector space over $\mathbb{C}$,
	and a Hilbert space with the standard sesquilinear inner product 
	$\bar{x}^{\transp} y = \sum_{j=1}^n \bar{x_j} y_j$,
	\item[$\cdot$] the set $\mathbb{R}^{n \times n}$ of real $n \times n$ matrices 
	with matrix multiplication is a non-commutative associative $\mathbb{R}$-algebra with unit,
	\item[$\cdot$] $\mathbb{R}^3$ with the cross product is a Lie algebra, 
	\item[$\cdot$] all the following classical groups are Lie groups:
	\begin{eqnarray*}
		\textup{GL}(n,\mathbb{K})&=& \{ g \in \mathbb{K}^{n \times n} : \det\nolimits_\mathbb{K} g \neq 0 \}, \\
		\textup{O}(n) &=& \{ g \in \mathbb{R}^{n \times n} : \langle gx,gy \rangle_\mathbb{R} = \langle x,y \rangle_\mathbb{R} \ \forall x,y \in \mathbb{R}^n \}, \\
		\textup{SO}(n)&=& \{ g \in \textup{O}(n) : \det\nolimits_\mathbb{R} g = 1 \}, \\
		\textup{U}(n) &=& \{ g \in \mathbb{C}^{n \times n} : \langle gx,gy \rangle_\mathbb{C} = \langle x,y \rangle_\mathbb{C} \ \forall x,y \in \mathbb{C}^n \}, \\
		\textup{SU}(n)&=& \{ g \in \textup{U}(n) : \det\nolimits_\mathbb{C} g = 1 \}, \\
		\textup{Sp}(n)&=& \{ g \in \mathbb{H}^{n \times n} : \langle gx,gy \rangle_\mathbb{H} = \langle x,y \rangle_\mathbb{H} \ \forall x,y \in \mathbb{H}^n \},
	\end{eqnarray*}
	where $\langle x,y \rangle_\mathbb{K} = \bar{x}^{\transp} y$ denotes the standard hermitian
	inner product on the corresponding space.
\end{exmp}
\begin{description}
	
	\item[Group homomorphism:] A map $\varphi: G \to G'$, where $(G,*)$ and $(G',*')$
	are groups, such that 
	$$
		\varphi(x*y) = \varphi(x)*'\varphi(y) \qquad \forall x,y \in G
	$$
	and $\varphi(e) = e'$.
	
	\item[Module homomorphism:] A map $\varphi: M \to M'$, where $M$ and $M'$
	are $R$-modules, such that 
	$$
		\varphi(rx+y) = r\varphi(x)+\varphi(y) \qquad \forall r \in R, x,y \in M.
	$$
	A map with this property is called $R$-\emph{linear},
	and the set of $R$-module homomorphisms $\varphi: M \to M'$
	(which is itself naturally an $R$-module) is denoted
	$\Hom_R(M,M')$.
	
	\item[R-algebra homomorphism:] An $R$-linear map $\varphi: A \to A'$, 
	where $A$ and $A'$ are $R$-algebras, such that 
	$\varphi(x *_A y) = \varphi(x) *_{A'} \varphi(y)$ $\forall x,y \in A$.
	If $A$ and $A'$ have units $1_A$ resp. $1_{A'}$ then we also require that
	$\varphi(1_A) = 1_{A'}$.

	\item[Isomorphism:] 
	A bijective homomorphism $\varphi: A \to B$ 
	(it follows that also $\varphi^{-1}$ is a homomorphism).
	We say that $A$ and $B$ are isomorphic and write $A \cong B$.

	\item[Endomorphism:] A homomorphism from an object to itself.
	For an $R$-module $M$, we denote $\End_R M := \Hom_R(M,M)$.

	\item[Automorphism:] An endomorphism which is also an isomorphism.

	\item[Relation:] A \emph{relation} on a set $X$ is a subset $R \subseteq X \times X$.
	If $(x,y) \in R$ then we say that $x$ is related to $y$ and write $xRy$.

	A relation $R$ on $X$ is called
	\begin{description}
	\item \emph{reflexive} if $x R x$ $\forall x \in X$,
	\item \emph{symmetric} if $x R y \Rightarrow y R x$ $\forall x,y \in X$,
	\item \emph{antisymmetric} if $x R y$ \& $y R x \Rightarrow x=y$ $\forall x,y \in X$,
	\item \emph{transitive} if $xRy$ \& $yRz \Rightarrow xRz$ $\forall x,y,z \in X$.
	\end{description}

	\item[Partial order:] A reflexive, antisymmetric, and transitive relation.

	\item[Total order:] A partial order such that either $xRy$ or $yRx$ holds $\forall x,y \in X$.

	\item[Equivalence relation:] A reflexive, symmetric, and transitive relation.
\end{description}

\subsubsection{Direct products and sums}

Suppose we are given an $R$-module $M_i$ for each $i$ in some
index set $I$. A function $f: I \to \bigcup_{i \in I} M_i$
is called a \emph{section} if $f(i) \in M_i$ for all $i \in I$.
The \emph{support} of a section $f$, $\supp f$, is the set
$\{i \in I : f(i) \neq 0\}$.

\begin{defn}
	\emph{The (direct) product} $\prod\limits_{i \in I} M_i$ 
	of the modules $M_i$ is the set	of all sections. 
	\emph{The (direct) sum} $\bigsqcup\limits_{i \in I} M_i = \bigoplus\limits_{i \in I} M_i$
	is the set of all sections with finite support.
\end{defn}
\noindent
If $I$ is finite, $I = \{1,2,\ldots,n\}$, then the product and sum
coincide and is also denoted $M_1 \oplus M_2 \oplus \ldots \oplus M_n$.

The $R$-module structure on the direct sum and product is given
by ``pointwise'' summation and $R$-multiplication,
$$
	(f+g)(i) := f(i)+g(i), \quad \textrm{and} \quad (rf)(i) := r(f(i)),
$$
for 
sections $f,g$ and $r \in R$.
If, moreover, the $R$-modules $M_i$ also are $R$-algebras, 
then we can promote $\prod_{i \in I} M_i$ and $\bigoplus_{i \in I} M_i$
into $R$-algebras as well by defining ``pointwise'' multiplication
$$
	(fg)(i) := f(i)g(i).
$$
Also note that we have canonical surjective $R$-module homomorphisms
$$
	\pi_j: \prod_{i \in I} M_i \ni f \mapsto f(j) \in M_j,
$$
and canonical injective $R$-module homomorphisms
$$
	\sigma_j: M_j \ni m \mapsto \sigma_j(m) \in \bigoplus_{i \in I} M_i,
$$
defined by $\sigma_j(m)(i) := (i=j)m$.

Let us consider the case $M_i = R$, for all $i \in I$, 
in a little more detail.
Denote by $X$ an arbitrary set and $R$ a ring with unit. 
If $f: X \to R$ then the support of $f$ is given by
$$
	\supp f = \{ x \in X : f(x) \neq 0 \}.
$$

\begin{defn}
	\emph{The direct sum of $R$ over $X$}, 
	or \emph{the free $R$-module generated by $X$},
	is the set of all functions
	from $X$ to $R$ with finite support, and is denoted 
	$\bigoplus\limits_X R$ or $\bigsqcup\limits_X R$.
\end{defn}

For $x \in X$, we conveniently define $\delta_x: X \to R$ through
$\delta_x(y) = (x=y)$, i.e.
$\delta_x(y) = 0$ if $y \neq x$, and $\delta_x(x) = 1$.
Then, for any $f: X \to R$ with finite support, we have
$$
	f(y) = \sum_{x \in X} f(x) \delta_x(y) \qquad \forall y \in X,
$$
or, with the $R$-module structure taken into account,
$$
	f = \sum_{x \in X} f(x) \delta_x.
$$
This sum is usually written simply 
$f = \sum_{x \in X}f(x)x$ or $\sum_{x \in X}f_x x$,
and is interpreted as a formal sum of the elements in $X$ with coefficients in $R$.

\begin{exc}
	Prove the following universality properties of the product
	and the sum:
	\item{a)} Let $N$ be an $R$-module and let 
	$\rho_j \in \Hom_R(N,M_j)$
	for all $j \in I$.
	Show that there exist unique
	$\tau_j \in \Hom_R(N, \prod_{i \in I} M_i)$, 
	such that the diagram below commutes:
	$$
		\setlength\arraycolsep{0pt}
		\begin{array}{rccl}
		&N \\
		&& \searrow^{\tau_j} \\
		\rho_j & \downarrow & & \prod\nolimits_{i \in I} M_i \\
		&& \swarrow_{\pi_j} \\
		&M_j
		\end{array}
	$$
	\item{b)} Let $\rho_j: M_j \to N$ be in $\Hom_R(M_j,N)$.
	Prove that there exist unique 
	$\tau_j \in \Hom_R(\bigoplus_{i \in I} M_i, N)$, 
	making the diagram below commutative:
	$$
		\setlength\arraycolsep{0pt}
		\begin{array}{rccl}
		&M_j \\
		&& \searrow^{\sigma_j} \\
		\rho_j & \downarrow & & \bigoplus\nolimits_{i \in I} M_i \\
		&& \swarrow_{\tau_j} \\
		&N
		\end{array}
	$$
\end{exc}

\subsubsection{Lists}

Let $X$ be an arbitrary set and define the cartesian products of $X$
inductively through
\begin{eqnarray*}
	X^0 &=& \{ \varnothing \} \quad \textrm{(\emph{the empty list})} \\
	X^1 &=& X \\
	X^{n+1} &=& X \times X^n, \ n>0.
\end{eqnarray*}
Note that $X$ is isomorphic to $X \times X^0$.

\begin{defn}
	$\List X := \bigcup_{n \ge 0} X^n$
	is the set of all finite lists of elements in $X$.
\end{defn}

We have a natural monoid structure on $\List X$ through concatenation
of lists:
$$
	\begin{array}{ccl}
	\List X \times \List X &\to& \List X \\
	(x_1x_2\ldots x_m\ ,\ y_1y_2\ldots y_n) &\mapsto& x_1\ldots x_m y_1\ldots y_n
	\end{array}
$$
This binary composition, which we shall call multiplication of lists,
is obviously associative, and the empty list, which we can call $1$, is the unit.

\subsubsection{The free associative $R$-algebra generated by $X$}

Let $M = \List X$ and define \emph{the free associative $R$-algebra generated by $X$},
or \emph{the noncommutative polynomial ring over the ring $R$ in the variables $X$},
first as an $R$-module by
$$
	R\{X\} := \bigoplus_M R.
$$
We shall now define a multiplication on $R\{X\}$ which extends
our multiplication on $M$, such that $R\{X\}$ becomes a ring.
We let
	$$
		\begin{array}{ccl}
		R\{X\} \times R\{X\} &\to& R\{X\} \\
		(f,g) &\mapsto& fg,
		\end{array}
	$$
	where
	$$(fg)(m) := \sum_{m'm''=m} f(m')g(m'').$$
	That is, we sum over all (finitely many) pairs $(m',m'')$ such that 
	$m' \in \supp f$, $m'' \in \supp g$ and $m=m'm''$.

If we interpret $f$ and $g$ as formal sums over $M$, i.e.
$$
	f = \sum_{m \in M} f_m m, \qquad g = \sum_{m \in M} g_m m,
$$
then we obtain
$$
	fg = \sum_{m',m'' \in M} f_{m'} g_{m''} m'm''.
$$

\begin{rem}
	We could of course also promote $\bigoplus_M R$ to a ring in this way
	if we only required that $M$ is a monoid and $R$ a ring with unit.
\end{rem}

\begin{exc}
	Verify that $R\{X\}$ is a ring with unit.
\end{exc}

\subsubsection{Quotients}

Let $A$ be a submodule of the the $R$-module $B$ and define,
for $x \in B$,
$$
	x + A := \{x+y : y \in A \}.
$$
Sets of the form $x+A$ are called \emph{residue classes} and the
set of these is denoted $B/A$, i.e.
$$
	B/A = \{x+A: x \in B\}.
$$
The residue class $x+A$ with representative $x \in B$ is often denoted $[x]$.

We will now promote $B/A$ to an $R$-module (called a \emph{quotient module}).
Addition is defined by
$$
	\begin{array}{ccl}
	B/A \times B/A &\stackrel{+}{\to}& B/A \\
	(x+A,y+A) &\mapsto& x+y+A.
	\end{array}
$$
We have to verify that addition is well defined, so assume that
$x+A = x'+A$ and $y+A = y'+A$.
Then we must have $x-x' \in A$ and $y-y' \in A$ so that
$x'+y'+A = x+y+(x'-x)+(y'-y)+A$. 
Hence, $x'+y'+A \subseteq x+y+A$ and similarly $x+y+A \subseteq x'+y'+A$.
This shows that $x'+y'+A = x+y+A$.

Multiplication with ring elements is defined similarly by
$$
	\begin{array}{ccl}
	R \times B/A &\to& B/A \\
	(r,x+A) &\mapsto& rx+A.
	\end{array}
$$

\begin{exc}
	Verify that the $R$-multiplication above is well defined and that
	$B/A$ becomes a module over $R$.
\end{exc}

If instead $J$ is an ideal in the ring $R$, we let
$$
	R/J = \{x + J : x \in R\}
$$
and define addition through
$$
	\begin{array}{ccl}
	R/J \times R/J &\stackrel{+}{\to}& R/J \\
	(x+J,y+J) &\mapsto& x+y+J,
	\end{array}
$$
and multiplication through
$$
	\begin{array}{ccl}
	R/J \times R/J &\to& R/J \\
	(x+J,y+J) &\mapsto& xy+J.
	\end{array}
$$
We leave to the reader to verify that the addition and multiplication
defined above are well defined and that they promote $R/J$ to a ring.

\begin{exmp}
	$n\mathbb{Z}$ with $n$ a positive integer is an ideal in $\mathbb{Z}$,
	and $\mathbb{Z}_n := \mathbb{Z} \big/ n\mathbb{Z}$ is a ring
	with characteristic $n$
	(and a field when $n$ is prime).
\end{exmp}

\subsubsection{Tensor products of modules}

Let $A$ and $B$ be $R$-modules and let $D$ be the smallest submodule
of $\bigoplus_{A \times B} R$ containing the set
\begin{eqnarray*}
	&& \Big\{ (a+a',b)-(a,b)-(a',b), \ (a,b+b')-(a,b)-(a,b'), \ (ra,b)-r(a,b), \\
	&& \qquad (a,rb)-r(a,b) \ : \ a,a' \in A, \ b,b' \in B, \ r \in R \Big\}.
\end{eqnarray*}
Then the tensor product $A \otimes_R B$
(or simply $A \otimes B$) of $A$ and $B$ is defined by
$$
	A \otimes_R B := \bigoplus_{A \times B} R \Big/ D.
$$
The point with the above construction is the following

\begin{thm}[Universality] \label{thm_tensor_universality}
	To each bilinear $A \times B \stackrel{\beta}{\to} C$, where
	$A,B$ and $C$ are $R$-modules, there exists a unique linear
	$\lambda: A \otimes B \to C$ such that the diagram below commutes
	$$
		\begin{array}{ccl}
		A \times B & \xrightarrow{\beta} & C \\
		\pi \downarrow & \nearrow_\lambda \\
		A \otimes B
		\end{array}
	$$
	Here, $\pi: A \times B \to A \otimes B$ is the canonical projection
	that satisfies $\pi(a,b) = a \otimes b := (a,b) + D$,
	where $(a,b)$ on the right hand side is an element in $\bigoplus_{A \times B} R$.
\end{thm}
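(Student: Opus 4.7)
The plan is to exploit the construction of $A \otimes_R B$ as a quotient of a free module by the submodule $D$ encoding bilinearity. The universal property then reduces to a two-step lifting: first extend $\beta$ to the free module, then quotient by $D$.

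First I would use the universal property of the free module $F := \bigoplus_{A \times B} R$ (which the text established earlier, and which amounts to ``an $R$-linear map out of a free module is determined by its values on generators''). Since $\beta : A \times B \to C$ is an arbitrary function on the generating set, it extends uniquely to an $R$-linear map $\tilde{\beta} : F \to C$ with $\tilde{\beta}((a,b)) = \beta(a,b)$ on generators.

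Next I would check that $\tilde{\beta}$ vanishes on the submodule $D$. By definition $D$ is the smallest submodule containing the four families of generators
$(a+a',b) - (a,b) - (a',b)$, $(a,b+b') - (a,b) - (a,b')$, $(ra,b) - r(a,b)$, $(a,rb) - r(a,b)$.
Applying $\tilde{\beta}$ to each such generator and using bilinearity of $\beta$ gives zero (for instance $\tilde{\beta}((ra,b) - r(a,b)) = \beta(ra,b) - r\beta(a,b) = 0$). Since $\tilde{\beta}$ is $R$-linear and kills a generating set of $D$, it kills all of $D$, so $D \subseteq \ker \tilde{\beta}$. Hence $\tilde{\beta}$ factors through the quotient, yielding a well-defined $R$-linear map $\lambda : F/D = A \otimes_R B \to C$ with $\lambda(a \otimes b) = \lambda(\pi(a,b)) = \tilde{\beta}((a,b)) = \beta(a,b)$, which is exactly the commutativity $\lambda \circ \pi = \beta$.

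Finally, for uniqueness I would observe that the elements $a \otimes b$ with $a \in A$, $b \in B$ generate $A \otimes_R B$ as an $R$-module (they are the images of the generators of $F$). Any $R$-linear $\lambda'$ satisfying $\lambda' \circ \pi = \beta$ must agree with $\lambda$ on every $a \otimes b$, hence on all of $A \otimes_R B$ by linearity. No step here is a genuine obstacle; the mild subtlety is just the clean bookkeeping between the free module $F$, its submodule $D$, and the quotient projection $\pi$, together with verifying that $\tilde{\beta}$ is indeed well-defined on $F$ before checking it annihilates $D$.
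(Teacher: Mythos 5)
The paper leaves this theorem as an exercise and gives no proof, but your argument is the standard and correct one: extend $\beta$ to the free module $\bigoplus_{A\times B}R$ by the universal property of free modules, observe that bilinearity of $\beta$ makes the extension vanish on the generators of $D$ (hence on all of $D$, since the kernel is a submodule), factor through the quotient to get $\lambda$, and conclude uniqueness because the simple tensors $a\otimes b$ generate $A\otimes_R B$. Nothing is missing.
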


\begin{exc}
	Verify the above theorem.
\end{exc}

\subsubsection{Sums and products of algebras}

Let $A$ and $B$ be $R$-algebras with multiplication maps $*_A$ resp. $*_B$.
We have defined the direct sum $A \oplus B$ as an $R$-algebra
through pointwise multiplication,
$$
	\begin{array}{ccl}
	(A \oplus B) \times (A \oplus B) &\to& A \oplus B \\
	\big( (a,b)\ ,\ (a',b') \big) &\mapsto& (a *_A a' , b *_B b').
	\end{array}
$$
We can also promote the tensor product $A \otimes_R B$ 
into an $R$-algebra by introducing the product
$$
	\begin{array}{ccl}
	(A \otimes_R B) \times (A \otimes_R B) &\to& A \otimes_R B \\
	(a \otimes b\ ,\ a' \otimes b') &\mapsto& (a *_A a') \otimes (b *_B b')
	\end{array}
$$
and extending linearly.
If $1_A$ and $1_B$ are units on $A$ resp. $B$, 
then $(1_A,1_B)$ becomes a unit on $A \oplus B$ and
$1_A \otimes 1_B$ becomes a unit on $A \otimes_R B$.
Note that the above algebras may look similar, but they are very different
since e.g. $1_A \otimes 0 = 0 \otimes 1_B = 0$, 
while $(1_A,1_B) = (1_A,0) + (0,1_B) \neq 0$.

It is easy to see that if $A$ and $B$ are associative $R$-algebras, 
then also $A \oplus B$ and $A \otimes_R B$ are associative.
Furthermore, it holds that
\begin{equation} \label{tensor_algebra_isomorphism}
	R^{n \times n} \otimes_R A^{m \times m} \cong A^{nm \times nm},
\end{equation}
as $R$-algebras,
where 
$n,m$ are positive integers, and $R$ is a ring with unit.

\begin{exc}
	Prove the isomorphism \eqref{tensor_algebra_isomorphism} by
	considering the map
	$$
		\begin{array}{ccl}
		R^{n \times n} \times A^{m \times m} &\xrightarrow{F}& A^{nm \times nm} \\
		(r\ ,\ a) &\mapsto& F(r,a),
		\end{array}
	$$
	with $F(r,a)_{(i,j),(k,l)} := r_{i,k}a_{j,l}$, 
	and $i,k \in \{1,\ldots,n\}$, $j,l \in \{1,\ldots,m\}$.
\end{exc}

\subsubsection{The tensor algebra $\mathcal{T}(V)$} \label{app_tensor_algebra}

The tensor algebra $\mathcal{T}(V)$ of 
a module $V$ over a ring $R$ can
be defined as a direct sum of tensor products
of $V$ with itself,
$$
	\mathcal{T}(V) := \bigoplus_{k=0}^{\infty} \left( \bigotimes\nolimits^k V \right),
$$
where $\bigotimes^0 V := R$, $\bigotimes^1 V := V$, 
$\bigotimes^{k+1} V := (\bigotimes^k V) \otimes_R V$,
and with the obvious associative multiplication 
(concatenation of tensors) extended linearly,
$$
	(v_1 \otimes \ldots \otimes v_j) \otimes (w_1 \otimes \ldots \otimes w_k) 
	:= v_1 \otimes \ldots \otimes v_j \otimes w_1 \otimes \ldots \otimes w_k.
$$
Alternatively, $\mathcal{T}(V)$ is obtained as 
the free associative $R$-algebra generated by either the set $V$ itself,
$$
	\mathcal{T}(V) \cong R\{V\} / \hat{D},
$$
where $\hat{D}$ is the ideal generating linearity in each factor
(similarly to $D$ in the definition of the tensor product above),
or (when $R=\mathbb{F}$ is a field) by simply choosing a basis $E$ of $V$,
$$
	\mathcal{T}(V) \cong \mathbb{F}\{E\}.
$$

\begin{exc}
	Verify the equivalence of these three definitions of $\mathcal{T}(V)$.
\end{exc}

\subsection{Expansion of the inner product} \label{app_inner_prod_expansion}

	Let $\Lambda(n,m)$ denote the set of ordered $m$-subsets
	$\lambda = (\lambda_1 < \lambda_2 < \ldots < \lambda_m)$
	of the set $\{1,2,\ldots,n\}$, with $m \le n$.
	The ordered complement of $\lambda$ is denoted
	$\lambda^c = (\lambda^c_1 < \lambda^c_2 < \ldots < \lambda^c_{n-m})$.
	Furthermore, we denote by $\sgn \lambda$ the sign of the permutation 
	$(\lambda_1, \lambda_2, \ldots, \lambda_m, \lambda^c_1, \lambda^c_2, \ldots, \lambda^c_{n-m})$.
	For an $n$-blade $A = a_1 \wedge \cdots \wedge a_n$ we use the
	corresponding notation
	$A_\lambda = a_{\lambda_1} \wedge a_{\lambda_2} \wedge \cdots \wedge a_{\lambda_m}$,
	etc.
	
	\begin{thm}
		For any $m$-vector $x \in \mathcal{G}^m$ and $n$-blade $A \in \mathcal{B}_n$,
		$m \le n$, we have
		\begin{equation} \label{inner_prod_expansion}
			x \liprod A = \sum_{\lambda \in \Lambda(n,m)} (\sgn \lambda) 
			(x * A_\lambda) A_{\lambda^c}.
		\end{equation}
	\end{thm}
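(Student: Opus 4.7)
My plan is to prove \eqref{inner_prod_expansion} by induction on the grade $m$. The case $m=1$ is exactly formula \eqref{vec_blade_expansion} of Exercise \ref{exc_anti_derivation}, since $\Lambda(n,1)$ consists of the singletons $(k)$ with $\sgn(k) = (-1)^{k-1}$, $A_{(k)} = a_k$, and $A_{(k)^c} = a_1 \wedge \cdots \wedge \check{a}_k \wedge \cdots \wedge a_n$.

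For the inductive step I would use linearity of both sides in $x$ to reduce to the case when $x$ is an $m$-blade, which can in turn be written $x = v \wedge y$ with $v \in V$ and $y$ an $(m-1)$-blade. Applying the associativity identity $(v \wedge y) \liprod A = v \liprod (y \liprod A)$ from Proposition \ref{prop_trix}, the inductive hypothesis gives
$$
y \liprod A = \sum_{\mu \in \Lambda(n,m-1)} (\sgn \mu)\,(y * A_\mu)\,A_{\mu^c},
$$
and the base case applied to each $v \liprod A_{\mu^c}$ produces a double sum over $\mu \in \Lambda(n,m-1)$ and positions $j \in \{1,\ldots,n-m+1\}$ within $\mu^c$. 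I would then reindex by setting $k := \mu^c_j$ and $\lambda := \mu \cup \{k\}$, using the evident bijection between pairs $(\mu,j)$ and pairs $(\lambda,k)$ with $\lambda \in \Lambda(n,m)$ and $k \in \lambda$, so that the result takes the form
$$
x \liprod A = \sum_{\lambda \in \Lambda(n,m)} \Big( \sum_{k \in \lambda} \sigma(\lambda,k)\,(v*a_k)\,(y*A_{\lambda \setminus \{k\}}) \Big) A_{\lambda^c}
$$
for certain combinatorial signs $\sigma(\lambda,k)$.

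To close the argument I would expand $(v \wedge y) * A_\lambda$ on the right-hand side of \eqref{inner_prod_expansion} by combining the identity $(v \wedge y)*A_\lambda = v * (y \liprod A_\lambda)$ from Proposition \ref{prop_trix} with the inductive hypothesis applied to the $m$-blade $A_\lambda$ itself; this produces exactly a sum indexed by $k \in \lambda$ of terms of the form (sign)$(v*a_k)(y*A_{\lambda \setminus \{k\}})$. Matching the two sides then reduces to a purely combinatorial sign identity.

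The main obstacle is precisely this sign bookkeeping. With $k = \mu^c_j$ and $\lambda = \mu \cup \{k\}$, the identity to be verified is
$$
(\sgn \mu)\cdot(-1)^{j-1} \;=\; (\sgn \lambda)\cdot(-1)^{p_\lambda(k)-1},
$$
where $p_\lambda(k)$ denotes the position of $k$ in the ordered tuple $\lambda$. I would prove this by comparing the two permutations $(\mu_1,\ldots,\mu_{m-1},\mu^c_1,\ldots,\mu^c_{n-m+1})$ and $(\lambda_1,\ldots,\lambda_m,\lambda^c_1,\ldots,\lambda^c_{n-m})$ of $(1,\ldots,n)$: they differ by the transpositions needed to move $k$ from position $m-1+j$ in the first list to position $p_\lambda(k)$ in the second, a count that equals $(j-1) + (p_\lambda(k)-1) \pmod 2$ since the elements of $\mu$ less than $k$ are precisely $\lambda_1,\ldots,\lambda_{p_\lambda(k)-1}$. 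Once this identity is in hand, the induction closes mechanically.
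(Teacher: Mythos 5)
Your inductive strategy is a genuinely different route from the paper's own proof, which instead argues that both sides of \eqref{inner_prod_expansion} are multilinear and alternating in $a_1,\ldots,a_n$ and then verifies the identity directly on a single orthogonal basis configuration $x = e_1\ldots e_m$, $A = e_1 \ldots e_n$. Your reduction via $(v\wedge y)\liprod A = v \liprod (y \liprod A)$, the inductive hypothesis, the base case, and the reindexing $(\mu,j)\leftrightarrow(\lambda,k)$ is structurally sound and does avoid having to check that the right-hand side is alternating.

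However, the sign bookkeeping as you have written it does not close. Applying the inductive hypothesis to $y \liprod A_\lambda$ with $\nu\in\Lambda(m,m-1)$ and $\nu^c=(p)$, the sign is $\sgn\nu = (-1)^{m-p}$ (the permutation $(1,\ldots,\check p,\ldots,m,p)$ has $m-p$ inversions), \emph{not} $(-1)^{p-1}$. Hence the identity that actually needs verifying is
\[
(\sgn\mu)\,(-1)^{j-1} \;=\; (\sgn\lambda)\,(-1)^{\,m-p_\lambda(k)},
\]
and this one does hold: since $j-1=k-p_\lambda(k)$, the transposition count $(m-1+j)-p_\lambda(k)$ equals $m+k-2p_\lambda(k)\equiv m-k\pmod 2$, which is precisely $(m-p_\lambda(k))-(j-1)\pmod 2$. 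Your version, with $(-1)^{p_\lambda(k)-1}$ on the right, differs from this by a factor $(-1)^{m-1}$ and fails whenever $m$ is even --- for example with $m=n=2$, $\lambda=(1,2)$, $k=1$ it reads $-1=1$. The same error propagates into your transposition-count justification: the asserted congruence $(m-1+j)-p_\lambda(k)\equiv(j-1)+(p_\lambda(k)-1)\pmod 2$ likewise holds only for $m$ odd. Once the correct sign $(-1)^{m-p_\lambda(k)}$ is used in the expansion of $(v\wedge y)*A_\lambda$, your induction closes as intended.
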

	\begin{proof}
		First note that both sides of \eqref{inner_prod_expansion} are linear in $x$
		and that the l.h.s. is linear and
		alternating in the $a_k$'s. 
		To prove that a multilinear mapping $F: V \times V \times \ldots \times V = V^n \to W$
		is alternating it is enough to show that $F(x_1,\ldots,x_n) = 0$
		whenever $x_s = x_{s+1}$ for $s=1,\ldots,n-1$.
		So for the right hand side, suppose that $a_s = a_{s+1}$
		and consider a fixed $\lambda \in \Lambda(n,m)$.
		If $\{s,s+1\} \subseteq \lambda$ or $\{s,s+1\} \subseteq \lambda^c$,
		then $A_\lambda$ or $A_{\lambda^c}$ vanishes.
		So let us assume e.g. that $s \in \lambda$ and $s+1 \in \lambda^c$,
		say $s = \lambda_i$, $s+1 = \lambda^c_j$.
		We define $\mu \in \Lambda(n,m)$ such that
		$$
		\setlength\arraycolsep{1.9pt}
		\begin{array}{rcccccccccccc}
			(\lambda,\lambda^c) =& (\lambda_1, &\lambda_2, &\ldots, &\lambda_i=s, &\ldots, &\lambda_m, &\lambda^c_1, &\lambda^c_2, &\ldots, &\lambda^c_j=s\!+\!1, &\ldots, &\lambda^c_{n-m}) \\
			&\rotatebox{90}{=}&\rotatebox{90}{=}&&&&\rotatebox{90}{=}&\rotatebox{90}{=}&\rotatebox{90}{=}&&&&\rotatebox{90}{=} \\[-5pt]
			(\mu,\mu^c) =& (\mu_1, &\mu_2, &\ldots, &\mu_i=s\!+\!1, &\ldots, &\mu_m, &\mu^c_1, &\mu^c_2, &\ldots, &\mu^c_j=s, &\ldots, &\mu^c_{n-m}) \\
		\end{array}
		$$
		Then $\mu \in \Lambda(n,m)$, $\sgn \mu = -\sgn \lambda$, and since
		$a_s = a_{s+1}$, we have $A_\lambda = A_\mu$ and $A_{\lambda^c} = A_{\mu^c}$.
		Thus, $(\sgn \lambda) (x*A_\lambda) A_{\lambda^c} + (\sgn \mu) (x*A_\mu) A_{\mu^c} = 0$.
		
		We conclude that both sides of \eqref{inner_prod_expansion} are
		alternating and multilinear in $a_1,\ldots,a_n$.
		Hence, we may without loss of generality assume that
		$x = e_1 \ldots e_m$ and
		$A = e_1 \ldots e_m \ldots e_n$ (otherwise both sides are zero),
		where $\{e_1,\ldots,e_d\}$, $m \le n \le d$,
		denotes an orthogonal basis for $V = \mathcal{G}^1$.
		But then the theorem is obvious, since
		\begin{eqnarray*}
			\sum_{\lambda \in \Lambda(n,m)} \sgn \lambda \thinspace ((e_1 \ldots e_m) * (e_{\lambda_1} \ldots e_{\lambda_m})) e_{\lambda^c_1} \ldots e_{\lambda^c_{n-m}} \\
			\quad = ((e_1 \ldots e_m)*(e_1 \ldots e_m)) e_{m+1} \ldots e_d. 
		\end{eqnarray*}
	\end{proof}
	
\subsection{Extension of morphisms} \label{app_extension}
	
	Let $A$ be an abelian group such that $a+a=0$ for all $a \in A$
	and let $M$ be a monoid with unit $0$.
	A map $\varphi: A \to M$ is called a homomorphism if 
	$\varphi(0) = 0$ and $\varphi(a+b) = \varphi(a) + \varphi(b)$ for all $a,b \in A$.
	We have the following
	
	\begin{thm}	
		If $H$ is a subgroup of $A$ and $\varphi: H \to M$ is a
		homomorphism then there exists an extension of $\varphi$
		to a homomorphism $\psi: A \to M$ such that $\psi|_H = \varphi$.
	\end{thm}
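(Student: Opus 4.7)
The plan is to apply the Hausdorff maximality theorem to the collection of pairs $(H',\psi')$, where $H'$ is a subgroup of $A$ with $H \subseteq H'$ and $\psi'\!: H' \to M$ is a homomorphism extending $\varphi$, partially ordered by $(H_1,\psi_1) \leq (H_2,\psi_2)$ iff $H_1 \subseteq H_2$ and $\psi_2|_{H_1} = \psi_1$. First I would verify that chains have upper bounds: given a totally ordered chain, let $H^*$ be the union of the $H'$'s in the chain and define $\psi^*$ on $H^*$ by $\psi^*(a) := \psi'(a)$ for any $(H',\psi')$ in the chain with $a \in H'$; total ordering makes this well-defined, and $H^*$ is a subgroup since any two elements lie in a common $H'$ of the chain.

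The crux of the argument is an \emph{extension lemma}: if $(H',\psi')$ is such a pair with $H' \neq A$, then it can be properly extended. Given $z \in A \smallsetminus H'$, set $\tilde H := H' \cup (H' + z)$, which is a subgroup precisely because $z+z = 0$. The union is disjoint (else $z \in H'$), so one can unambiguously define
\begin{displaymath}
\tilde\psi(h) := \psi'(h), \qquad \tilde\psi(h+z) := \psi'(h), \quad h \in H',
\end{displaymath}
i.e.\ one chooses $\tilde\psi(z) = 0_M$, the monoid unit. This choice is the natural one since any homomorphism must send $z$ to an element squaring (additively) to $0_M$, and $0_M$ is the safe universal such element. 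To verify that $\tilde\psi$ is a homomorphism one does a four-case check on sums $a+b$ with $a,b \in \tilde H$ split according to membership in $H'$ or $H'+z$; in each case the identity $\tilde\psi(a+b) = \tilde\psi(a)+\tilde\psi(b)$ reduces to the homomorphism property of $\psi'$, using $z+z=0$ in the case $a,b \in H'+z$ and using that $0_M$ is the unit in the mixed cases.

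By Hausdorff's maximality theorem, the poset contains a maximal chain; by the chain-bound step its union $(H^*,\psi^*)$ lies in the poset, and by the extension lemma $H^* = A$, which gives the required $\psi$. The main obstacle is the homomorphism check in the extension lemma: one has to be careful that pushing $z$ to $0_M$ really is compatible with the addition on $M$. Since $\varphi(H)$ (and inductively $\psi'(H')$) is automatically commutative as the image of an abelian group, and $0_M$ is central, the four-case verification goes through without requiring global commutativity of $M$. The only structural fact used beyond this is $2z=0$, which is exactly the hypothesis $a+a=0$ on $A$, so the argument is tight.
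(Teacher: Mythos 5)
Your proposal is correct and follows essentially the same route as the paper's proof: extend by adjoining a single coset $H' + z$ and sending $z$ to the unit $0_M$, then apply the Hausdorff maximality theorem to a poset of partial extensions, with unions furnishing upper bounds for chains. You supply somewhat more detail (the explicit four-case homomorphism check, the disjointness observation, the remark on the commutativity of the image) where the paper simply asserts the extension lemma as ``easy to see,'' but the structure and the key ideas are identical.
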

	
	\begin{proof}
		Let $a \in A \smallsetminus H$ and form $H_a := H \cup (a+H)$.
		Define $\varphi_a: H_a \to M$ by
		$\varphi_a(h) := \varphi(h)$ if $h \in H$ and $\varphi_a(a+h) := \varphi(h)$.
		It is easy to see that $H_a$ is a subgroup of $A$ and
		that $\varphi_a$ is a homomorphism that extends $\varphi$.
		
		Let us introduce a partial order $\preceq$ on pairs $(H',\varphi')$
		where $H'$ is a subgroup of $A$ and $\varphi'$ is a 
		homomorphism $H' \to M$ which extends $\varphi$.
		We define $(H',\varphi') \preceq (H'',\varphi'')$ if
		$H' \subseteq H''$ and $\varphi''|_{H'} = \varphi'$.
		By the Hausdorff maximality theorem there exists a maximal chain
		$$
			\mathcal{K} = \{(H^i,\varphi^i) : i \in I \},
		$$
		where $I$ is a total ordered index set w.r.t $\le$, and
		$$
			i \le j \quad \Leftrightarrow \quad (H^i,\varphi^i) \preceq (H^j,\varphi^j).
		$$
		Now, let $H^* := \bigcup_{i \in I} H^i$.
		Then $H^*$ is a subgroup of $A$ such that $H^i \subseteq H^* \ \forall i \in I$,
		since if $a,b \in H^*$ then there exists some $i \in I$
		such that both $a,b \in H^i$, i.e. $a+b \in H^i \subseteq H^*$.
		We also define $\varphi^*: H^* \to M$ by
		$\varphi^*(a) = \varphi^i(a)$ if $a \in H^i$.
		This is well-defined.
		
		Now, if $H^* \neq A$ then we could find $a \in A \smallsetminus H^*$
		and extend $\varphi^*$ to a homomorphism $\varphi^*_a: H^*_a \to M$
		as above. But then $\mathcal{K} \cup \{(H^*_a,\varphi^*_a)\}$
		would be an even greater chain than $\mathcal{K}$, which is a contradiction.
		This proves the theorem.
	\end{proof}

	Now, consider the case when
	$A = (\mathscr{P}(X), \symdiff)$,
	$H = \mathscr{F}(X)$, 
	$M = \{\pm 1\} = \mathbb{Z}_2$ with multiplication,
	and $\varphi(A) = (-1)^{|A|}$ for $A \in \mathscr{F}(X)$.
	We then obtain
	
	\begin{thm}
		There exists a homomorphism 
		$$
			\varphi: \big( \mathscr{P}(X), \symdiff \big) \to \big( \{-1,1\},\cdot \big)
		$$
		such that
		$\varphi(A) = (-1)^{|A|}$ for finite subsets $A \subseteq X$,
		and $\varphi(A \symdiff B) = \varphi(A)\varphi(B)$ for all $A,B \subseteq X$.
	\end{thm}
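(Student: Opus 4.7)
The plan is to obtain this statement as a direct corollary of the extension theorem proved just above. First I would identify the four ingredients demanded by that theorem: take $A := (\mathscr{P}(X), \symdiff)$, verify it is an abelian group satisfying $a + a = 0$ for every $a \in A$ (immediate from $B \symdiff B = \varnothing$), take $H := \mathscr{F}(X)$, take $M := (\{-1,1\},\cdot)$ with unit $1$, and define $\varphi\!: H \to M$ by $\varphi(A) := (-1)^{|A|}$.

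Next I would verify the two remaining hypotheses. That $\mathscr{F}(X)$ is a subgroup of $\mathscr{P}(X)$ is clear since $\varnothing \in \mathscr{F}(X)$ and $A \symdiff B$ is finite whenever $A, B$ are. To see that $\varphi$ is a homomorphism, I need $\varphi(\varnothing) = 1$ (clear) and $\varphi(A \symdiff B) = \varphi(A)\varphi(B)$ for finite $A,B$; this reduces to the parity identity $|A \symdiff B| \equiv |A| + |B| \pmod{2}$, which is precisely the content of an earlier exercise in the text and follows at once from $|A \symdiff B| = |A| + |B| - 2|A \cap B|$.

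Having checked the hypotheses, I would invoke the extension theorem to produce a homomorphism $\psi\!: \mathscr{P}(X) \to \{-1,1\}$ with $\psi|_{\mathscr{F}(X)} = \varphi$. Setting $\varphi := \psi$ in the statement to be proved, both conditions are satisfied: the restriction formula $\varphi(A) = (-1)^{|A|}$ for finite $A$ is the agreement $\psi|_{\mathscr{F}(X)} = \varphi_{\text{old}}$, and the multiplicative property $\varphi(A \symdiff B) = \varphi(A)\varphi(B)$ on all of $\mathscr{P}(X)$ is the homomorphism property of $\psi$.

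There is no real obstacle here, since the nontrivial content (the Zorn/Hausdorff-style transfinite extension argument) has already been carried out in the preceding theorem; the only thing one must confirm is that the data $(A,H,M,\varphi)$ fit the template. If anything, the subtlest point worth spelling out is the parity identity, so that the reader sees why $\varphi$ on the subgroup $\mathscr{F}(X)$ really is a homomorphism rather than merely a function.
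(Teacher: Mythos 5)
Your proposal is correct and follows exactly the paper's route: instantiate the preceding extension theorem with $A = (\mathscr{P}(X),\symdiff)$, $H = \mathscr{F}(X)$, $M = (\{\pm 1\},\cdot)$, and $\varphi(A) = (-1)^{|A|}$. You have merely spelled out the routine verifications (that $a \symdiff a = \varnothing$, that $\mathscr{F}(X)$ is a subgroup, and the parity identity $|A \symdiff B| \equiv |A| + |B| \pmod 2$) which the paper leaves implicit.
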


	\noindent
	In particular, if $A,B \subseteq X$ are disjoint then
	$\varphi(A \cup B) = \varphi(A \symdiff B) = \varphi(A)\varphi(B)$.

\subsection{Matrix theorems} \label{app_matrix_theorems}

	In the following theorem we assume that $R$ is an arbitrary commutative ring and
	\begin{displaymath}
		A = \left[
		\begin{array}{ccc}
			a_{11} & \cdots & a_{1m} \\
			\vdots &		& \vdots \\
			a_{n1} & \cdots & a_{nm} \\
		\end{array}
		\right] \in R^{n \times m}, \qquad
		a_j = \left[
		\begin{array}{c}
			a_{1j} \\
			\vdots \\
			a_{nj} \\
		\end{array}
		\right],
	\end{displaymath}
	i.e. $a_j$ denotes the $j$:th column in $A$.
	If $I \subseteq \{1,\ldots,n\}$ and $J \subseteq \{1,\ldots,m\}$ we
	let $A_{I,J}$ denote the $|I| \times |J|$-matrix minor obtained from $A$
	by deleting the rows and columns not in $I$ and $J$.
	Further, let $k$ denote the rank of $A$, i.e. the highest integer $k$
	such that there exists $I,J$ as above with $|I|=|J|=k$ and $\det A_{I,J} \neq 0$.
	By renumbering the $a_{ij}$:s we can without loss of generality assume 
	that $I=J=\{1,2,\ldots,k\}$.
	
	\begin{thm}[Basis minor] \label{thm_basis_minor}
		If the rank of $A$ is $k$, and
		\begin{displaymath}
			d := \det \left[
			\begin{array}{ccc}
				a_{11} & \cdots & a_{1k} \\
				\vdots &		& \vdots \\
				a_{k1} & \cdots & a_{kk} \\
			\end{array}
			\right] \neq 0,
		\end{displaymath}
		then every $d \cdot a_j$ is a linear combination of $a_1,\ldots,a_k$.
	\end{thm}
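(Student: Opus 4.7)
The plan is to prove this via the classical "bordering minors" argument, which works over an arbitrary commutative ring because we never need to divide by $d$; the factor $d$ appears naturally as a cofactor.

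First I would dispose of the easy case: when $j \in \{1,\ldots,k\}$, writing $d \cdot a_j = d \cdot a_j$ expresses it trivially as a linear combination of $a_1,\ldots,a_k$. So fix $j > k$ and focus on this case.

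Next I would, for each row index $i \in \{1,\ldots,n\}$, construct the $(k+1) \times (k+1)$ bordered matrix
$$
M_i := \begin{bmatrix} a_{11} & \cdots & a_{1k} & a_{1j} \\ \vdots & & \vdots & \vdots \\ a_{k1} & \cdots & a_{kk} & a_{kj} \\ a_{i1} & \cdots & a_{ik} & a_{ij} \end{bmatrix}.
$$
The crucial observation is that $\det M_i = 0$ for every $i$: if $i \le k$ then two rows of $M_i$ coincide, and if $i > k$ then $M_i$ is a $(k+1)\times(k+1)$ minor of $A$, which vanishes by the rank-$k$ hypothesis (maximality of $k$). Expanding $\det M_i$ along the last row gives
$$
0 = \det M_i = \sum_{l=1}^{k} (-1)^{k+1+l}\, a_{il}\, C_l \; + \; a_{ij} \cdot d,
$$
where the cofactors $C_l$ and $d$ depend only on the first $k$ rows of the bordered matrix and on $j$, not on $i$. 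Setting $\gamma_l := (-1)^{k+l}\, C_l \in R$, this rearranges to the scalar identity $d \cdot a_{ij} = \sum_{l=1}^{k} \gamma_l\, a_{il}$, valid for every $i \in \{1,\ldots,n\}$.

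Since $i$ ranges over all rows, the above identity assembles into the column vector equation $d \cdot a_j = \sum_{l=1}^{k} \gamma_l\, a_l$, which is exactly the claim. The main subtlety — and the only place one might stumble — is ensuring that the cofactors $C_l$ are genuinely independent of $i$, so that the relation assembled row-by-row really does produce a single $R$-linear combination of the columns $a_1,\ldots,a_k$; this is immediate from the construction of $M_i$, whose first $k$ rows are fixed. The argument never inverts $d$ and uses only multilinearity/alternation of the determinant, so it applies verbatim over any commutative ring.
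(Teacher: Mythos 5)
Your proof is correct and takes essentially the same route as the paper: both construct the bordered matrix $B_{i,j}$, observe that its determinant vanishes, Laplace-expand along the last row, and exploit the fact that the cofactors depend only on the first $k$ rows (hence not on $i$) to assemble the row-by-row scalar identities into a single column relation. The only difference is cosmetic: you make explicit the case analysis justifying $\det M_i = 0$ (repeated rows when $i\le k$, rank hypothesis when $i>k$), a step the paper leaves implicit.
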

	
	\begin{proof}
		Pick $i \in \{1,\ldots,n\}$ and $j \in \{1,\ldots,m\}$ and consider
		the $(k+1)\times(k+1)$-matrix
		\begin{displaymath}
			B_{i,j} := \left[
			\begin{array}{cccc}
				a_{11} & \cdots & a_{1k} & a_{1j} \\
				\vdots &		& \vdots & \vdots \\
				a_{k1} & \cdots & a_{kk} & a_{kj} \\
				a_{i1} & \cdots & a_{ik} & a_{ij} \\
			\end{array}
			\right].
		\end{displaymath}
		Then $\det B_{i,j} = 0$. Expanding $\det B_{i,j}$ along the bottom row
		for fixed $i$ we obtain
		\begin{equation}
			a_{i1} C_1 + \ldots + a_{ik} C_k + a_{ij} d = 0,
		\end{equation}
		where the $C_l$ are independent of the choice of $i$ (but of course dependent on $j$).
		Hence,
		\begin{equation}
			C_1 a_1 + \ldots + C_k a_k + d a_j = 0,
		\end{equation}
		and similarly for all $j$.
	\end{proof}
	
	The following shows that the factorization
	$\det(AB) = \det(A) \det(B)$
	is a unique property of the determinant.
	
	\begin{thm}[Uniqueness of determinant] \label{thm_uniqueness_of_det}
		Assume that $d\!: \mathbb{R}^{n \times n} \to \mathbb{R}$ is continuous
		and satisfies
		\begin{equation}
			d(AB) = d(A)d(B)
		\end{equation}
		for all $A,B \in \mathbb{R}^{n \times n}$.
		Then $d$ must be either $0$, $1$, $|\det|^\alpha$ or $(\textrm{\emph{sign}} \circ \det) |\det|^\alpha$
		for some $\alpha > 0$.
	\end{thm}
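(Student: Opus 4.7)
The strategy is to restrict $d$ to the dense open subset $\GL(n,\mathbb{R}) \subseteq \mathbb{R}^{n \times n}$, show it must be trivial on $\mathrm{SL}(n,\mathbb{R})$, and then invoke the classification of continuous homomorphisms $\mathbb{R}^\times \to \mathbb{R}^\times$. I first dispose of the degenerate cases: $d(I) = d(I)^2$ forces $d(I) \in \{0,1\}$, and if $d(I) = 0$ then $d(A) = d(A)d(I) = 0$ for all $A$, giving $d \equiv 0$. Likewise $d(0) = d(0)^2 \in \{0,1\}$, and the relation $d(0) = d(A \cdot 0) = d(A) d(0)$ shows that $d(0) = 1$ forces $d \equiv 1$. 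Hence I may assume $d(I) = 1$ and $d(0) = 0$; then $d(A) d(A^{-1}) = 1$ on $\GL(n,\mathbb{R})$, so $d$ restricts there to a continuous group homomorphism into $\mathbb{R}^\times$.

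The main step is to show $d \equiv 1$ on $\mathrm{SL}(n,\mathbb{R})$ (for $n \geq 2$; the case $n = 1$ reduces directly to classifying continuous multiplicative maps on $\mathbb{R}$). Since $\mathbb{R}^\times$ is abelian, $d(BAB^{-1}) = d(A)$ for every conjugation, and I would exploit this on elementary shears. Writing $E_{ij}(t) := I + tE_{ij}$ with $i \neq j$, conjugation by the diagonal matrix $D_\lambda$ (equal to $I$ except with entry $\lambda \neq 0$ in slot $i$) gives $D_\lambda E_{ij}(t) D_\lambda^{-1} = E_{ij}(\lambda t)$. Hence $d(E_{ij}(t))$ is independent of $t \neq 0$, and by continuity equals $d(E_{ij}(0)) = d(I) = 1$. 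Since the elementary shears generate $\mathrm{SL}(n,\mathbb{R})$ (a standard fact from elimination), this yields $d|_{\mathrm{SL}} \equiv 1$. A backup, should the conjugation trick prove awkward to formalize, is to use that $\mathrm{SL}(n,\mathbb{R})$ is perfect for $n \geq 2$, so that any homomorphism into an abelian group is automatically trivial on it.

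Once this is in place, any $A \in \GL(n,\mathbb{R})$ factors as $A = B \cdot \diag(\det A, 1, \ldots, 1)$ with $B \in \mathrm{SL}(n,\mathbb{R})$, so $d(A) = f(\det A)$ where $f(t) := d(\diag(t,1,\ldots,1))$ is a continuous multiplicative map $\mathbb{R}^\times \to \mathbb{R}^\times$. Pulling back along the exponential, $\log|f \circ \exp|$ is a continuous additive function $\mathbb{R} \to \mathbb{R}$, hence $\mathbb{R}$-linear, and the two possible signs on $(-\infty,0)$ produce $f(t) = |t|^\alpha$ or $f(t) = \sgn(t) |t|^\alpha$ for some $\alpha \in \mathbb{R}$. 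Finally, density of $\GL(n,\mathbb{R})$ in $\mathbb{R}^{n \times n}$ together with continuity extends the formula to singular matrices; combined with $d(0) = 0$, this rules out $\alpha \leq 0$ (since $\alpha = 0$ would give $d \equiv 1$ globally, while $\alpha < 0$ blows up as one approaches a singular matrix), leaving $\alpha > 0$. The step I expect to be most delicate is the conjugation argument establishing $d|_{\mathrm{SL}} \equiv 1$.
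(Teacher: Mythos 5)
Your proof is correct, and it follows a genuinely different route from the paper's. The paper proves the result ``from the ground up'' by writing an arbitrary matrix as a product of elementary row operations and the echelon form, then pinning down $d$ on each type of elementary matrix separately; in particular it disposes of the shears $E_{ij}(c)$ via the explicit identity $E_{ji}(-1) = E_i(-1) R_{ij} E_{ji}(1) E_{ij}(-1)$, which forces the exponent $\alpha_{ij}$ in $d(E_{ij}(c)) = e^{\alpha_{ij}c}$ to vanish. You instead observe at the outset that $d\big|_{\GL(n,\mathbb{R})}$ is a continuous homomorphism into the abelian group $\mathbb{R}^\times$, hence conjugation-invariant, and use the scaling conjugation $D_\lambda E_{ij}(t)D_\lambda^{-1} = E_{ij}(\lambda t)$ together with continuity at $t = 0$ to kill $d$ on all shears at once, then quote that shears generate $\textup{SL}(n,\mathbb{R})$. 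This cleanly reduces the whole problem to the one-dimensional classification of continuous multiplicative maps $\mathbb{R}^\times \to \mathbb{R}^\times$ via the decomposition $A = B \cdot \diag(\det A, 1, \ldots, 1)$, $B \in \textup{SL}(n,\mathbb{R})$. Your approach is more structural and shorter, at the cost of invoking group-theoretic facts (generation of $\textup{SL}$ by transvections, or perfection of $\textup{SL}$ as a backup) rather than elementary matrix manipulations; the paper's approach is longer but essentially computation-free of such background, using only Gaussian elimination and one matrix identity. Both correctly handle the degenerate cases and obtain the constraint $\alpha > 0$ by continuity near the singular locus. The one place your write-up is slightly elliptical is at the end: you should note explicitly that once $\alpha > 0$ is established, density of $\GL(n,\mathbb{R})$ and continuity give $d(A) = 0$ for \emph{every} singular $A$ (not just $A = 0$), so the stated formula holds on all of $\mathbb{R}^{n \times n}$; but this is a one-line observation and does not affect the validity of the argument.
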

	
	\begin{proof}
		First, we have that ($I$ denotes the unit matrix)
		$$
		\setlength\arraycolsep{2pt}
		\begin{array}{rcl}
			d(0) = d(0^2) = d(0)^2, \\[5pt]
			d(I) = d(I^2) = d(I)^2,
		\end{array}
		$$
		so $d(0)$ and $d(I)$ must be either 0 or 1.
		Furthermore,
		$$
		\setlength\arraycolsep{2pt}
		\begin{array}{rcl}
			d(0) = d(0A) = d(0)d(A), \\[5pt]
			d(A) = d(IA) = d(I)d(A),
		\end{array}
		$$
		for all $A \in \mathbb{R}^{n \times n}$,
		which implies that $d=1$ if $d(0)=1$ and $d=0$ if $d(I)=0$.
		We can therefore assume that $d(0)=0$ and $d(I)=1$.
		
		Now, an arbitrary matrix $A$ can be written as
		$$
			A = E_1 E_2 \ldots E_k R,
		$$
		where $R$ is on reduced row-echelon form 
		(reduced as much as possible by Gaussian elimination)
		and $E_i$ are elementary row operations of the form
		$$
		\setlength\arraycolsep{2pt}
		\begin{array}{lcl}
			R_{ij} &:=& (\textrm{swap rows $i$ and $j$}), \\[5pt]
			E_i(\lambda) &:=& (\textrm{scale row $i$ by $\lambda$}),\ \textrm{or} \\[5pt]
			E_{ij}(c) &:=& (\textrm{add $c$ times row $j$ to row $i$}).
		\end{array}
		$$
		Because $R_{ij}^2 = I$, we must have $d(R_{ij}) = \pm 1$. This gives, since 
		$$
			E_i(\lambda) = R_{1i} E_1(\lambda) R_{1i}, 
		$$
		that $d\big(E_i(\lambda)\big) = d\big(E_1(\lambda)\big)$ and
		$$
			d(\lambda I) = d\big(E_1(\lambda) \ldots E_n(\lambda)\big) = d\big(E_1(\lambda)\big) \ldots d\big(E_n(\lambda)\big) = d\big(E_1(\lambda)\big)^n.
		$$
		In particular, we have $d\big(E_1(0)\big) = 0$ and of course $d\big(E_1(1)\big) = d(I) = 1$.
		
		If $A$ is invertible, then $R=I$. Otherwise, $R$ must contain a row of zeros
		so that $R = E_i(0)R$ for some $i$. But then $d(R)=0$ and $d(A)=0$.
		When $A$ is invertible we have $I=AA^{-1}$ and $1=d(I)=d(A)d(A^{-1})$,
		i.e. $d(A) \neq 0$ and $d(A^{-1}) = d(A)^{-1}$. Hence,
		$$
			A \in \textrm{GL}(n) \quad \Leftrightarrow \quad d(A) \neq 0.
		$$
		We thus have that $d$ is completely determined by its values on $R_{ij}$,
		$E_1(\lambda)$ and $E_{ij}(c)$. Note that we have not yet used the
		continuity of $d$, but let us do so now. 
		We can split $\mathbb{R}^{n \times n}$ into three connected components,
		namely $\textrm{GL}^-(n)$, $\det^{-1}(0)$ and $\textrm{GL}^+(n)$,
		where the determinant is less than, equal to, and greater than zero, respectively.
		Since $E_1(1), E_{ij}(c) \in \textrm{GL}^+(n)$ and 
		$E_1(-1), R_{ij} \in \textrm{GL}^-(n)$, we have by continuity of $d$ that
		\begin{equation} \label{sign_of_d}
		\setlength\arraycolsep{2pt}
		\begin{array}{lcl}
			d(R_{ij}) =+1 &\quad \Rightarrow \quad& \textrm{$d$ is $>0$, $=0$, resp. $>0$} \\[5pt]
			d(R_{ij}) =-1 &\quad \Rightarrow \quad& \textrm{$d$ is $<0$, $=0$, resp. $>0$} \\[5pt]
		\end{array}
		\end{equation}
		on these parts.
		Using that $d\big(E_1(-1)\big)^2 = d\big(E_1(-1)^2\big) = d(I) = 1$, we have
		$d\big(E_1(-1)\big) = \pm 1$ and $d\big(E_1(-\lambda)\big) = d\big(E_1(-1)\big) d\big(E_1(\lambda)\big) = \pm d\big(E_1(\lambda)\big)$
		where the sign depends on \eqref{sign_of_d}.
		On $\mathbb{R}^{++} := \{ \lambda \in \mathbb{R} : \lambda > 0 \}$ we have a continuous map
		$d \circ E_1\!: \mathbb{R}^{++} \to \mathbb{R}^{++}$ such that
		$$
			d \circ E_1(\lambda \mu) = d \circ E_1(\lambda) \cdot d \circ E_1(\mu) \quad \forall\ \lambda,\mu \in \mathbb{R}^{++}.
		$$
		Forming $f := \ln \circ\ d \circ E_1 \circ \exp$, we then have a continuous map
		$f\!: \mathbb{R} \to \mathbb{R}$ such that
		$$
			f(\lambda + \mu) = f(\lambda) + f(\mu).
		$$
		By extending linearity from $\mathbb{Z}$ to $\mathbb{Q}$ and $\mathbb{R}$ by continuity,
		we must have that $f(\lambda) = \alpha \lambda$ for some $\alpha \in \mathbb{R}$.
		Hence, $d \circ E_1(\lambda) = \lambda^\alpha$. Continuity also demands that $\alpha > 0$.
		
		It only remains to consider $d \circ E_{ij} \!: \mathbb{R} \to \mathbb{R}^{++}$.
		We have $d \circ E_{ij}(0) = d(I) = 1$ and $E_{ij}(c) E_{ij}(\gamma) = E_{ij}(c + \gamma)$,
		i.e.
		\begin{equation} \label{d_add_to_mult}
			d \circ E_{ij}(c + \gamma) = d \circ E_{ij}(c) \cdot d \circ E_{ij}(\gamma) \quad \forall\ c,\gamma \in \mathbb{R}.
		\end{equation}
		Proceeding as above, $g := \ln \circ\ d \circ E_{ij} \!: \mathbb{R} \to \mathbb{R}$
		is linear, so that $g(c) = \alpha_{ij} c$ for some $\alpha_{ij} \in \mathbb{R}$,
		hence $d \circ E_{ij}(c) = e^{\alpha_{ij} c}$.
		One can verify that the following identity holds for all $i,j$:
		$$
			E_{ji}(-1) = E_i(-1) R_{ij} E_{ji}(1) E_{ij}(-1).
		$$
		This gives $d\big(E_{ji}(-1)\big) = (\pm 1) (\pm 1) d\big(E_{ji}(1)\big) d\big(E_{ij}(-1)\big)$
		and, using \eqref{d_add_to_mult},
		$$
		\setlength\arraycolsep{2pt}
		\begin{array}{rcl}
			d\big(E_{ij}(1)\big) 
				&=& d\big(E_{ji}(2)\big) = d\big(E_{ji}(1+1)\big) = d\big(E_{ji}(1)\big)d\big(E_{ji}(1)\big) \\[5pt]
				&=& d\big(E_{ij}(2)\big)d\big(E_{ij}(2)\big) = d\big(E_{ij}(4)\big),
		\end{array}
		$$
		which requires $\alpha_{ij}=0$.
		
		We conclude that $d$ is completely determined by $\alpha > 0$,
		where $d \circ E_1(\lambda) = \lambda^\alpha$ and $\lambda \geq 0$,
		plus whether $d$ takes negative values or not. This proves the theorem.
	\end{proof}

	\newpage

\addcontentsline{toc}{section}{References}

\end{document}